\def\thm@space@setup{%
  \thm@preskip=\parskip \thm@postskip=0pt
}
\Crefname{algocf}{Algorithm}{Algorithms}
\crefname{algocfline}{line}{lines}
\newtheorem{theorem}{Theorem}
\newtheorem{corollary}[theorem]{Corollary}
\newtheorem{lemma}[theorem]{Lemma}
\newtheorem{claim}[theorem]{Claim}
\newtheorem{fact}[theorem]{Fact}
\newtheorem{observation}[theorem]{Observation}
\newtheorem{proposition}[theorem]{Proposition}
\theoremstyle{definition}
\newtheorem{definition}{Definition}
\crefname{invariant}{Invariant}{Invariants}
\newcommand{\opt}{\mathsf{opt}}
\newcommand{\Reduce}{\mathsf{RED}}
\newcommand{\reduce}{\mathsf{red}}
\newcommand{\OPT}{\mathsf{OPT}}
\newcommand{\ALG}{\mathsf{ALG}}
\newcommand{\FEW}{\textsf{FEW}\xspace}
\newcommand{\MANY}{\textsf{MANY}\xspace}
\newcommand{\hG}{\ensuremath{\bm\hat{G}}\xspace}
\newcommand{\cF}{\mathcal{F}}
\newcommand{\cT}{\ensuremath{\mathcal{T}}\xspace}
\newcommand{\cFour}{\ensuremath{\mathcal{C}_4}\xspace}
\newcommand{\cFive}{\ensuremath{\mathcal{C}_5}\xspace}
\newcommand{\cSix}{\ensuremath{\mathcal{C}_6}\xspace}
\newcommand{\cSeven}{\ensuremath{\mathcal{C}_7}\xspace}
\newcommand{\cEight}{\ensuremath{\mathcal{C}_8}\xspace}
\newcommand{\typA}{\ensuremath{\mathsf{A}}\xspace}
\newcommand{\typB}{\ensuremath{\mathsf{B}}\xspace}
\newcommand{\typBi}{\ensuremath{\mathsf{B1}}\xspace}
\newcommand{\typBii}{\ensuremath{\mathsf{B2}}\xspace}
\newcommand{\typC}{\ensuremath{\mathsf{C}}\xspace}
\newcommand{\typCi}{\ensuremath{\mathsf{C1}}\xspace}
\newcommand{\typCii}{\ensuremath{\mathsf{C2}}\xspace}
\newcommand{\typCiisubba}{\ensuremath{\mathsf{C2(i)}}\xspace}
\newcommand{\typCiisubb}{\ensuremath{\mathsf{C2(ii)}}\xspace}
\newcommand{\typCiisubc}{\ensuremath{\mathsf{C2(iii)}}\xspace}
\newcommand{\typCiii}{\ensuremath{\mathsf{C3}}\xspace}
\newcommand{\typesThreeVC}{\cT_3}
\newcommand{\typesTwoVC}{\cT_2}
\DeclarePairedDelimiter\ceil{\lceil}{\rceil}
\DeclarePairedDelimiter{\abs}{\lvert}{\rvert}
\DeclareMathOperator*{\argmin}{arg\,min}
\DeclareMathOperator*{\argmax}{arg\,max}
\newcommand{\cost}{\mathrm{cost}\xspace}
\newcommand{\credit}{\mathrm{cr}\xspace}
\newcommand{\col}{\mathrm{col}\xspace}
\newcommand{\loan}{\mathrm{loan}\xspace}
\newcommand{\eps}{\varepsilon}
\newcommand{\cre}{(\frac{1}{4}-\delta)}
\title{A Better-Than-$5/4$-Approximation for Two-Edge Connectivity}
\date{}
\author{Felix Hommelsheim\thanks{Universidad de Chile, Santiago, Chile.}
\and Alexander Lindermayr\thanks{Simons Institute for the Theory of Computing, UC Berkeley, USA. This work was done while at University of Bremen and supported by the ``Humans on Mars Initiative'', funded by the Federal State of Bremen and the University of Bremen.}
\and Zhenwei Liu\thanks{Zhejiang University, Hangzhou, China.}}
\begin{document}

\pagenumbering{gobble}
\maketitle

\begin{abstract}
The 2-Edge-Connected Spanning Subgraph Problem (2ECSS) is a fundamental problem in
survivable network design. Given an undirected $2$-edge-connected graph, the goal is
to find a $2$-edge-connected spanning subgraph with the minimum number of edges; a
graph is 2-edge-connected if it is connected after the removal of any single edge.
2ECSS is \APX-hard and has been extensively studied in the context of approximation
algorithms. Very recently, Bosch-Calvo, Garg, Grandoni, Hommelsheim, Jabal Ameli,
and Lindermayr showed the currently best-known approximation ratio of
$\nicefrac{5}{4}$ [STOC 2025]. This factor is tight for many of their techniques and
arguments, and it was not clear whether $\nicefrac{5}{4}$ can be improved.

We break this natural barrier and present a $(\nicefrac{5}{4} - \eta)$-approximation
algorithm, for some constant $\eta \geq 10^{-6}$. On a high level, we follow the
approach of previous works: take a triangle-free $2$-edge cover and transform it
into a 2-edge-connected spanning subgraph by adding only a few additional edges. For
$\geq \nicefrac{5}{4}$-approximations, one can heavily exploit that a $4$-cycle in
the 2-edge cover can ``buy'' one additional edge. This enables simple and nice techniques,
but immediately fails for our improved approximation ratio. To overcome this, we design two
complementary algorithms that perform well for different scenarios: one for few $4$-cycles
and one for many $4$-cycles. Besides this, there appear more obstructions when
breaching $\nicefrac54$, which we surpass via new techniques such as colorful bridge
covering, rich vertices, and branching gluing paths.
\end{abstract}

\newpage

\tableofcontents

\newpage

\pagenumbering{arabic}

\section{Introduction}\label{sec:Introductin}

Survivable network design studies how to compute cost-effective networks with connectivity properties
that are resilient to edge or node failures. One of the most important and fundamental problems
in this area is the 2-Edge-Connected Spanning Subgraph Problem (2ECSS). Here, we are given an
undirected 2-edge-connected (2EC) graph $G$, and the goal is to compute a 2-edge-connected spanning
subgraph (2ECSS) $H$ of $G$ with a minimum number of edges.

2ECSS is \APX-hard~\cite{F98,CL99}, and admits a simple $2$-approximation by augmenting a DFS tree. Substantial research has been devoted to prove smaller approximation ratios using non-trivial techniques.
After two early better-than-2 approximations~\cite{KV94,CSS01}, until recently, the best known approximation ratio remained $\nicefrac{4}{3}$, which was achieved independently with different techniques by \textcite{SV14} and \textcite{HVV19}.
In a breakthrough result, \textcite{GargGA23improved} improved the
ratio to $\nicefrac{118}{89} + \eps < 1.326$, which was quickly tuned to $1.3+\eps$ by
\textcite{kobayashi2023approximation} by replacing a 2-edge cover, which is used as a lower
bound in~\cite{GargGA23improved}, with a triangle-free 2-edge cover.
While this already reduced the technical complexity of \cite{GargGA23improved}, very recently,
\textcite{BGGHJL25} made a substantial leap forward   %
to a $\nicefrac{5}{4}$-approximation via an even simpler proof. %
Their key idea is to always glue smaller 2EC components to a large 2EC component, which eventually becomes a 2ECSS. %
\cite{BGGHJL25} is the result of merging the best parts of two independent works, by \textcite{BGJ24ecss} and \textcite{GHL24}, who achieved this approximation ratio with slightly different techniques.\footnote{\cite{GHL24} showed an approximation ratio of $\nicefrac{5}{4} + \eps$.}

This state of affairs strongly indicates that $\nicefrac{5}{4}$ is a natural barrier for \emph{many} known techniques used for 2ECSS.
Despite this, we breach the factor of $\nicefrac{5}{4}$ and show the following result.
We did not optimize the constant $\eta$ too much and aimed for a simple approach whenever possible.

\begin{theorem}\label{thm:main}
	There is a deterministic $(\nicefrac{5}{4} - \eta)$-approximation algorithm for 2ECSS that runs in polynomial time for some $\eta \geq 10^{-6}$.
\end{theorem}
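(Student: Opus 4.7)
The plan is to follow the high-level framework established in \cite{BGGHJL25}: compute in polynomial time a minimum triangle-free 2-edge cover $F$ (e.g.\ via the matching-based construction of \cite{kobayashi2023approximation}), use $|F|\le|\OPT|$ as the lower bound, and augment $F$ into a 2ECSS by adding a small number of ``gluing'' edges that merge the components of $F$ into a single 2-edge-connected graph. The added edges are paid for by a credit scheme that assigns to each component $C$ of $F$ a credit proportional to $|C|$, and the goal is to push this proportionality constant strictly below $\tfrac14$.

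The central idea is a dichotomy on the quantity $|\cFour|$, the number of $4$-cycle components of $F$. I would fix a threshold $\tau$ and a savings parameter $\delta>0$, and design two algorithms to be run in parallel, returning the cheaper output:

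\textbf{Algorithm \FEW{} (few 4-cycles, $|\cFour|\le\tau\cdot|F|$).} Here I would essentially rerun the $\tfrac54$-gluing procedure of \cite{BGGHJL25}, but charge every component with credit $\cre\cdot|C|$ instead of $\tfrac14|C|$. Since the $\tfrac54$ analysis is tight only when the cost of the final 2ECSS is dominated by $4$-cycle components paying exactly one gluing edge each, a small multiplicative slack $\delta$ is absorbed by the non-$4$-cycle mass of $F$, which is at least $(1-\tau)|F|$. A routine re-bookkeeping gives a ratio of $\tfrac54-\Omega(\delta(1-\tau))$ in this regime.

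\textbf{Algorithm \MANY{} (many 4-cycles, $|\cFour|>\tau\cdot|F|$).} This is the hard case, and where the three new techniques advertised in the abstract must do the work. The plan is to extract, on top of the baseline ``one gluing edge per $4$-cycle'' budget, an additional constant saving from a positive fraction of the $4$-cycles. Concretely: (i) identify \emph{rich} vertices, i.e.\ vertices incident to many $4$-cycles of $F$, and use them as natural hubs so that a single gluing operation simultaneously merges several $4$-cycles; (ii) replace linear gluing paths by \emph{branching gluing paths}, i.e.\ trees of gluing edges rooted at rich vertices, whose amortized cost per absorbed $4$-cycle is strictly less than one; (iii) reduce the remaining obstructions — bridges that no branching path has yet covered — to a \emph{colorful bridge covering} instance, where each bridge must be covered by an edge of a prescribed color (the color encoding which $4$-cycle's saving is being redeemed), and solve this structured covering problem optimally. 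A careful credit analysis then yields ratio $\tfrac54-\Omega(\delta\tau)$ provided $\tau$ is a positive constant.

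Balancing the two regimes by choosing $\tau$ and $\delta$ of the appropriate order then yields the concrete bound $\eta\ge 10^{-6}$. The hard part, and the main technical obstacle, is entirely inside \MANY{}: one must argue that rich vertices, branching gluing paths, and colorful bridge covering compose without double-counting the savings, and that the resulting structural invariants (in particular, that after each merge step the current ``large'' component retains enough credit to finance the next branching path) are preserved throughout the execution. I expect most of the paper to be devoted to the invariant-based analysis of \MANY{}, while \FEW{} is a largely mechanical refinement of the $\tfrac54$-algorithm of \cite{BGGHJL25}.
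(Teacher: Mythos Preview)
Your two-regime split on the $\cFour$-density is the paper's strategy, and the \FEW{} target $\tfrac54-\Omega(\delta(1-\tau))$ matches. But you have the allocation of the three techniques inverted: \emph{colorful bridge covering} and \emph{rich vertices} live in \FEW{}, not \MANY{}. In \FEW{} the paper keeps credit $\tfrac14$ on $\cFour$-edges (so each $\cFour$ still has credit $1$) and reduces only the remaining edges to $\tfrac14-\delta$; the price is that some complex components then lack credit for all their blocks and must take \emph{loans}, tracked by one color per component, and the colorful-bridge-covering machinery is the invariant system guaranteeing these loans are repaid during bridge covering. Rich vertices are isolated vertices created by dismantling certain two-block complex components that cannot service their loan---not hubs incident to many $\cFour$'s. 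So \FEW{} is far from a mechanical rerun of \cite{BGGHJL25}.

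The genuine gap is in \MANY{}: no analysis against the lower bound $|F|$ alone can beat $\tfrac54$ there. Take an instance whose canonical cover is one constant-size 2EC component $L$ plus $k$ disjoint $\cFour$'s, each adjacent only to $L$. Then $|F|=4k+O(1)$, but any 2ECSS needs at least $5$ edges per $\cFour$ (contracting $L$ yields $k$ disjoint $5$-vertex 2EC subproblems), so $\opt\ge 5k\approx\tfrac54|F|$. Your claimed ratio $\tfrac54-\Omega(\delta\tau)$ against $|F|$ is therefore unattainable on such instances, regardless of how clever the gluing is. The paper handles \MANY{} by first driving the solution (via single/double merges, short heavy cycles, and the \emph{branching gluing paths} you mention) into a \emph{core-square configuration}---precisely the obstruction pattern above---and then invoking a second, structural lower bound $\opt\ge 5\ell^*$ on the $\ell^*$ residual $\cFour$'s. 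The final bound reads $|H|\le(\tfrac54-\tfrac1{28})\beta|F|+5(1-\beta)|F|+\tfrac1{35}\opt$, and the extra $\tfrac1{35}\opt$ term is essential. Without this second lower bound, your \MANY{} argument cannot close.
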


While the quantitative improvement over \cite{BGGHJL25}
is only small, our result gives qualitative insights to approximating 2ECSS.
Many techniques and arguments in \cite{BGGHJL25} are tight for the approximation ratio $\nicefrac{5}{4}$,
and it was not clear whether this factor can be improved by
building up on the framework used
in \cite{GargGA23improved,BGJ24ecss,GHL24,kobayashi2023approximation,BGGHJL25}. This approach broadly consists of the following four steps:
\begin{enumerate}[nosep]
	\item \textbf{Preprocessing:} Reduce the input graph to a structured graph with additional properties.
	\item \textbf{Canonical 2-edge cover:} Compute a subgraph $H$ with $|H| \leq \opt$ (a minimum triangle-free 2-edge cover) and impose additional structural properties on it.
	\item \textbf{Bridge covering:} Cover the bridges of $H$ so that $H$ is composed of 2EC components.
	\item \textbf{Gluing:} Glue the components of $H$ together to a 2ECSS.
\end{enumerate}
Bridge covering and gluing may also happen simultaneously.
As our main technical contributions, we introduce new techniques to overcome the following challenges that appear in this framework for approximation ratios $\alpha < \nicefrac{5}{4}$:
\begin{itemize}
	\item For bridge covering and gluing, we keep track of the cost of our solution via a credit scheme where each edge of the initial 2-edge cover receives a credit of $\alpha - 1$. For $\alpha \geq \nicefrac{5}{4}$,  even $4$-cycles, which are the smallest 2EC components that might appear, have a total credit of at least $1$ and thus can ``buy'' one edge to glue themselves to another component---a crucial argument that was extensively used in previous work. This immediately fails for $\alpha < \nicefrac{5}{4}$. We address this by a careful case distinction on the number of $4$-cycles in the initial 2-edge cover, and then combine two very different approximation algorithms.
	\item In the preprocessing, we contract constant-sized subgraphs for which any solution must contain many edges. Intuitively, these \emph{contractible} subgraphs can be trivially included in an $\alpha$-approximation. For $\alpha \geq \nicefrac{5}{4}$, $5$-cycles are contractible if any solution must include at least $4$ edges, which streamlines many arguments. If $\alpha < \nicefrac{5}{4}$, $5$-cycles are only contractible if every 2ECSS includes all $5$ edges. This is much less likely, and makes this technique much weaker.
	\item As a consequence of the previous point, we cannot enforce as much structure on complex components (connected component with bridges) that appear in intermediate steps compared to \cite{BGGHJL25,GargGA23improved}. This leads to several complications when covering bridges. To fix this, we allow complex components initially to charge loans to satisfy invariants, which we carefully repay during bridge covering using a \emph{color encoding}. Moreover, we allow uncovered (isolated) vertices in our 2-edge cover during bridge covering, which can glue themselves ``on their own'' to a 2EC component later (\emph{rich vertices}).
\end{itemize}

We give a walk-through of our approach in \cref{sec:overview}, and describe how we address and solve these challenges in more detail.

As a byproduct of our approach, we also give the following conditional result.

\begin{theorem}
\label{thm:main2}
	If there exists a polynomial-time $\mu$-approximation algorithm for computing a minimum $\{3,4\}$-cycle-free 2-edge cover of a 2-edge-connected graph, then there exists a polynomial-time $(1.24 \mu + \varepsilon)$-approximation %
	algorithm for 2ECSS for every $\varepsilon > 0$.
\end{theorem}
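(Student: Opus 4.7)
The strategy is to isolate the sub-algorithm from the proof of \cref{thm:main} that handles the ``few 4-cycles'' case and invoke it on the output of the hypothesized $\mu$-approximation. Recall that \cref{thm:main} branches on the number of 4-cycles in the canonical triangle-free 2-edge cover: one branch, call it $\mathcal{A}_{\text{few}}$, achieves a ratio around $1.24$ when 4-cycles are scarce, while the other handles many 4-cycles and only achieves the weaker bound $5/4 - \eta$; the combined ratio is limited by the latter branch. If we instead begin from a $\{3,4\}$-cycle-free cover (i.e., zero 4-cycles), we can permanently commit to $\mathcal{A}_{\text{few}}$.

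\textbf{Construction.} First, apply the same preprocessing as in \cref{thm:main} to obtain a well-structured 2EC instance while losing only a $(1+\varepsilon)$ factor. Then perform additional preprocessing to deal with small subgraphs on which a $\{3,4\}$-cycle-free 2-edge cover either does not exist or exceeds $\opt$; for instance, on $K_4$ every 2-edge cover equals $C_4$ and thus contains a 4-cycle. Contract all such ``forced'' small 2EC pieces so that in the residual instance $G'$ there exists a $\{3,4\}$-cycle-free 2-edge cover of size at most $\opt(G')$. Next, invoke the hypothesized $\mu$-approximation on $G'$ to obtain such a cover $H$ with $|H| \leq \mu \cdot \opt_{34}(G') \leq \mu \cdot \opt(G')$. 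Since $H$ contains no 3- or 4-cycles, $\mathcal{A}_{\text{few}}$ applies directly and produces a 2ECSS $S$ of $G'$ with $|S| \leq 1.24 \cdot |H| \leq 1.24 \mu \cdot \opt(G')$. Lifting $S$ back to $G$ through the preprocessing absorbs an additive $\varepsilon \cdot \opt(G)$ slack, yielding the claimed $(1.24\mu + \varepsilon)$-approximation.

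\textbf{Main obstacle.} The most delicate step is the additional preprocessing: one must identify precisely which short structures to contract in order to guarantee $\opt_{34}(G') \leq \opt(G')$, and argue that each contraction is locally a $1$-approximation (so that the error does not accumulate beyond the $\varepsilon \cdot \opt(G)$ budget). A secondary technicality is verifying that $\mathcal{A}_{\text{few}}$ actually delivers a bound of the form $|S| \leq 1.24 \cdot |H|$ against the cover itself rather than only against $\opt$. This should be extractable from the credit scheme employed in \cref{thm:main}, where each edge of $H$ is charged $\alpha - 1$ credit and the final solution is paid for entirely by $H$ plus the accumulated credit; but since the main theorem states its guarantee against $\opt$, it requires an explicit check to re-state it against $|H|$ when $H$ is given externally rather than produced by the internal canonical-cover computation.
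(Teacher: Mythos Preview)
Your overall strategy---preprocess to a structured instance, feed it to the hypothesized $\mu$-approximation, then run the \FEW branch with $\beta=0$---is exactly what the paper does. But the step you flag as the ``main obstacle'' is a non-issue, and the proposal is muddled there.

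The point you are missing is that ``$\{3,4\}$-cycle-free'' means the 2-edge cover has no $C_3$ or $C_4$ \emph{as a connected component} (just as ``triangle-free 2-edge cover'' is defined in the paper to mean no triangle component). After the reduction to structured graphs, $G'$ is 2-vertex-connected with at least $8/\varepsilon$ vertices. Any 2ECSS of $G'$ is therefore a single 2EC component on $\geq 8/\varepsilon \geq 5$ vertices, hence automatically a $\{3,4\}$-cycle-free 2-edge cover. This gives $\opt_{34}(G') \leq \opt(G')$ immediately---no additional preprocessing, no contracting of $K_4$-like pieces, is required. The paper states this in one sentence (``a minimum $\{3,4\}$-cycle-free 2-edge cover of a 2-edge-connected graph $G$ is a lower bound on a feasible 2ECSS of $G$'') and moves on.

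Your ``secondary technicality'' is correctly identified and is handled by \cref{lem:main:FEW}: its guarantee is explicitly stated against $|H_0|$, namely $|H| \leq (\tfrac54 - (1-\beta)\tfrac{1}{100})|H_0|$, so with $\beta=0$ one gets $|H| \leq 1.24\,|H_0| \leq 1.24\,\mu\cdot\opt(G')$ directly. The paper does exactly this, after first making $H_0$ canonical (the canonicalization procedure only applies local swaps that never increase the edge count and never create new $\cFour$ components from a cover that had none).
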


It is known that a minimum triangle-free 2-edge cover can be computed in polynomial time~\cite{kobayashi2023approximation}, as it is equivalent to computing a triangle-free 2-matching~\cite{hartvigsen2024finding,P23}. For both, a PTAS is also available~\cite{BGJ24ptas,KN25ptas}. On the other side, the problem of computing a minimum $\{3,4,5\}$-cycle-free 2-matching is known to be NP-hard~\cite{CornuejolsP80}. The complexity status of a minimum $\{3,4\}$-cycle-free 2-matching and its cover variant are still open.

\subsection{Related Work}

In network augmentation problems, we are given a graph with a specific structure
(e.g., spanning tree, matching, forest, cactus, or disjoint paths) along with a set
of additional edges (called links). The task is to add a minimum-sized set of links to
the structure so that the resulting graph becomes spanning and 2-edge connected;
in 2ECSS we start with an empty graph.
For the (spanning) Tree Augmentation Problem, several better-than-$2$-approximations are
known~\cite{A19,CTZ21,CG18,CG18a,CN13,EFKN09,FGKS18,GKZ18,KN16simple,KN18lp,N03,N21,TZ21,TZ22}, where the currently best-known approximation ratio is 1.393~\cite{CTZ21}.
For the Forest Augmentation Problem, the first non-trivial approximation was obtained by \textcite{GJT22} with a ratio of 1.9973, which recently has been slightly improved~\cite{Hommelsheim25}.
For the Matching Augmentation Problem, \textcite{GHM23} gave a $\nicefrac{13}8$-approximation after several other non-trivial results~\cite{BDS22,CDGKN20,CCDZ23}.
In the $k$-Connectivity Augmentation Problem (kCAP), we are given a $k$-edge-connected graph and a set of links with the goal of adding as few links as possible to make the graph $(k+1)$-edge connected.
The k-Edge-Connected Spanning Subgraph Problem for $k\geq 2$ is another natural generalization of 2ECSS (see, e.g.,~\cite{CHNSS22, CT00,GG12, GGTW09,HershkowitzKZ24}).

All the above problems have a natural weighted version.
A general result by Jain \cite{J01} yields a $2$-approximation. %
For the Weighted Tree Augmentation Problem, the current best approximation ratio is $1.5+\varepsilon$ by \textcite{TZ22}.
Recently, \textcite{TZ23} also obtained a $(1.5+\eps)$-approximation algorithm for the weighted version of kCAP.
For Weighted 2ECSS, obtaining a better-than-2-approximation is a major open problem.

The $2$-Vertex-Connected Spanning Subgraph Problem (2VCSS) is the vertex-connectivity version of 2ECSS.
Here, we are given a 2EC graph $G$, and the goal is to find the 2-vertex-connected\footnote{A graph is $k$-vertex-connected (kVC) if it remains connected after the removal of an arbitrary subset of $k-1$ vertices.}
spanning subgraph $H$ with minimum number of edges. There are various trivial 2-approximations. \Textcite{KV94} gave the first non-trivial $\nicefrac{5}{3}$-approximation, which was improved to $\nicefrac{3}{2}$~\cite{GVS93} and $\nicefrac{10}{7}$~\cite{HV17}. Recently, \textcite{BGJ23} gave a $\nicefrac{4}{3}$-approximation.

Closely related is also the Traveling Salesperson Problem (TSP).
The best-known approximations for metric TSP~\cite{christofides2022worst,karlin2021slightly} make use of the Help-Karp LP-relaxation of the problem, which has an integrality gap of $\nicefrac{4}{3}$. %
The example of this integrality gap is actually an instance of graphic TSP:
given a graph $G$, the task is to find an Eulerian multigraph with a minimum the number of edges. 2ECSS is a relaxation of graphic TSP.
Seb\"{o} and Vygen~\cite{SV14} proved the currently best approximation guarantee of $\nicefrac{7}{5}$ for graphic TSP. Interestingly, they use very similar techniques %
to also obtain a $\nicefrac{4}{3}$-approximation for 2ECSS, which highlights the close connection between both problems.
Thus, advancements for 2ECSS may lead to better bounds for graphic TSP.

\subsection{Preliminaries}\label{sec:preliminaries}
We use standard graph theory notation. For a graph $G$, $V(G)$ and $E(G)$ refers to its vertex and edge sets, respectively. %
For a graph $G$, and a vertex set $S\subseteq V(G)$, $G[S]$ refers to the subgraph of $G$ induced on the vertex set $S$. We use components of a graph to refer to its maximal connected subgraphs.
A $k$-vertex cut in a graph for $k\geq 1$ refers to a set of $k$ vertices such that deleting them results in increasing the number of components of the graph.
We call the single vertex of a 1-vertex cut a cut vertex.
A graph with no cut vertices is called 2-vertex connected (2VC).
A bridge in a graph refers to an edge in the graph whose removal results in increasing the number of components in the graph. Note that a connected graph without bridges is 2-edge connected (2EC). %
A cycle refers to a simple cycle.
We use $\mathcal{C}_i$ to represent a cycle on $i$ vertices.
Given a graph $G$ and a vertex set $S\subseteq V(G)$, $G|S$ denotes the graph obtained from contracting the vertex set $S$ into a single vertex. Graph contraction may give rise to parallel edges and self-loops. The edges of $G$ and $G|S$ are in one-to-one correspondence. For a graph $G$ and a subgraph $H$, we use $G|H$ to denote $G|V(H)$.
For a graph $G$, we use $|G|$ to denote $|E(G)|$.
For a 2EC graph $G$, we use $\OPT(G)$ to denote a 2ECSS of $G$ with the minimum number of edges. We define $\opt(G)\coloneq|\OPT(G)|$ (when $G$ is clear from the context we may just say $\opt$ and $\OPT$ instead).

\section{Overview of Our Approach}\label{sec:overview}

In this section, we present our algorithm and give an outline of our analysis. As an input to the problem, we are given a 2EC graph $G$.

\subsection{Reduction to Structured Graphs}
\label{sec:overview:preprocessing}
The first step of our approach is to reduce the input graph
to a graph with more structural properties while only sacrificing little in the approximation ratio. We will later use these structural properties extensively to prove a good approximation for 2ECSS on structured graphs.

In \cite{GargGA23improved,GHL24,BGGHJL25},
$(\alpha,\eps)$-structured graphs have sufficiently many vertices and
do not contain $\alpha$-contractible subgraphs, irrelevant edges, non-isolating 2-vertex cuts,  or large 3-vertex cuts.
We defer definitions of these previously considered structures to \Cref{app:prep}.
In this work, we impose additional connectivity properties of structured graphs by preprocessing $\mathcal C_k$ cuts and large $4$-vertex cuts. A \emph{$\mathcal C_k$ cut} is a set of vertices $\{v_1,\ldots,v_k\}$ that forms a cycle in $G$ whose removal increases the number of connected components in $G$.
We define structured graphs as follows.

\begin{definition}
	[($\alpha, \varepsilon)$-structured graph]\label{def:structured}
	Given $\alpha \geq 1$ and $\varepsilon > 0$, a graph $G$ is $(\alpha, \varepsilon)$-\emph{structured} if it is simple, 2VC, it contains at least $\nicefrac 8 \varepsilon$ vertices, and it does not contain irrelevant edges,
	\begin{enumerate}[nosep,]
		\item $\alpha$-contractible subgraphs of size at most $\nicefrac 8 \varepsilon$,
		\item non-isolating 2-vertex cuts,
		\item large 3-vertex cuts (each side has $\geq \ceil{\frac{2}{\alpha - 1}} - 1$ vertices; in this paper, this is equal to $8$),
		\item large $\mathcal C_k$ cuts for $k \in \{4,5,6,7,8\}$ (each side has $\geq \ceil{\frac{k}{\alpha - 1}} - 1$ vertices; in this paper, this is equal to $4k$), and
		\item large $4$-vertex cuts (each side has $\geq \ceil{\frac{6}{\alpha - 1}} - 1$ vertices; in this paper, this is equal to $24$).
	\end{enumerate}
\end{definition}

\cite{GargGA23improved} gave an approximation-preserving reduction for approximation ratios $\alpha \geq \nicefrac 6 5$ to instances without parallel edges, self loops, irrelevant edges, cut-vertices, small $\alpha$-contractible subgraphs, and non-isolating 2-vertex cuts. This reduction was also used in \cite{BGGHJL25}.
Afterwards, \cite{GHL24} significantly extended this reduction to instances without
large 3-vertex cuts for approximation ratios $\alpha \geq \nicefrac 5 4$, which loses a factor of $1+\eps$ in the final
approximation ratio.

Our contribution for this step is twofold. First, we show that large 3-vertex cuts can also be avoided for approximation ratios %
$\alpha \geq \nicefrac 6 5$
by adapting the requirement on the number of vertices on each side of the cut. Second, we show that we additionally reduce to instances without large $\mathcal C_k$ cuts for $k \in \{4,5,6,7,8\}$ and large $4$-vertex cuts for approximation ratios $\alpha \geq \nicefrac 6 5$.
The proof of the following reduction is given in \cref{app:prep}.

\begin{restatable}{lemma}{lemmaReduction}\label{lem:reduction-to-structured}
	For all $\alpha \in [\nicefrac{6}{5},\nicefrac{5}{4}]$ and $\varepsilon \in (0, \nicefrac{1}{100}]$, if there exists a deterministic polynomial-time $\alpha$-approximation algorithm for 2ECSS on ($\alpha, \varepsilon)$-structured graphs, then there exists a deterministic polynomial-time $(\alpha + 4 \varepsilon)$-approximation algorithm for 2ECSS.
\end{restatable}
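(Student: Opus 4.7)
The plan is to express the whole preprocessing as a chain of approximation-preserving sub-reductions, one for each structural property required in \cref{def:structured}, each of which introduces at most an $\varepsilon$ additive loss in the final ratio, for a total of at most $4\varepsilon$.

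First, I would invoke the existing reduction of \cite{GargGA23improved}, later streamlined in \cite{BGGHJL25}, to pass to an instance $G'$ that is simple, 2VC, contains at least $8/\varepsilon$ vertices, and has no irrelevant edges, no $\alpha$-contractible subgraphs of size at most $8/\varepsilon$, and no non-isolating 2-vertex cuts. This part already works for any $\alpha \ge \nicefrac{6}{5}$ and costs a single multiplicative $1+O(\varepsilon)$ loss.

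For each of the three remaining forbidden structures---large 3-vertex cuts, large $\mathcal{C}_k$ cuts for $k \in \{4,5,6,7,8\}$, and large 4-vertex cuts---I would use a common split-and-recurse template. Given a forbidden cut $C$ separating the graph into sides $A,B$ each of size at least $s_C := \lceil r/(\alpha-1) \rceil - 1$ (with $r = 2$, $r=|C|$, and $r=6$, respectively), form two smaller 2EC instances $G_A$ on $V(A) \cup C$ and $G_B$ on $V(B) \cup C$ by adding a constant number $O(|C|)$ of auxiliary edges inside $C$ to restore 2-edge-connectivity. Then recursively apply the hypothetical $\alpha$-approximation on structured graphs on each subinstance (re-canonicalizing via the first step if needed), and glue the two returned solutions using $O(|C|)$ additional edges. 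The analysis rests on two inequalities: $\opt(G_A) + \opt(G_B) \le \opt(G) + O(|C|)$, obtained by restricting an optimal 2ECSS of $G$ to each side and completing it with the auxiliary edges; and the auxiliary plus gluing edges contribute only $O(|C|)$ to the combined solution. Together these yield an output of size at most $\alpha \cdot \opt(G) + O(|C|)$, and the side-size threshold $s_C$ is tuned precisely so that the lower bound $\opt(G) \ge |V(G)| \ge 2 s_C$ (valid for any 2EC graph) absorbs the $O(|C|)$ overhead into an $\varepsilon \cdot \opt(G)$ term.

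The hard part will be twofold. First, the tightened side-size threshold for 3-vertex cuts must extend the argument of \cite{GHL24} (stated there for $\alpha \ge \nicefrac{5}{4}$) to the broader range $\alpha \ge \nicefrac{6}{5}$; this amounts to a careful re-examination of which partial 2ECSSs on the cut side are compatible with the recursion, so that the constant $r=2$ suffices. Second, splitting along one cut type can reintroduce forbidden structures handled in earlier steps---for instance, a 4-vertex split might create parallel edges, a small $\alpha$-contractible subgraph, or a short $\mathcal{C}_k$ cut in a subinstance. I would address this by re-applying the first step after every split and by ordering the cut reductions carefully (handling smaller cuts before larger ones), so that each successive reduction only operates on already-canonicalized graphs. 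Since every reduction strictly decreases the vertex count, the entire procedure terminates in polynomial time, and the total additive loss across the four new cut-type reductions is at most $4\varepsilon$, proving the claim.
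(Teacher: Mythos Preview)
Your high-level template---split at the cut, recurse, glue with $O(|C|)$ edges---matches the paper only for the $\mathcal{C}_k$ cuts, where indeed one contracts the cycle, recurses on both sides, and re-adds the $k$ cycle edges. For the other two cut types, and for the global accounting, there are real gaps.

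\textbf{The 3-vertex cut reduction is not a generic split.} The paper does \emph{not} form $G_A$ and $G_B$ symmetrically with $O(1)$ auxiliary edges. Instead, when the smaller side $V_1$ has constant size, it enumerates minimum-size partial solutions $\OPT_1^t$ of six distinct \emph{types} $t\in\{\typA,\typBi,\typBii,\typCi,\typCii,\typCiii\}$ (classified by how $u,v,w$ are connected in $\OPT\cap G_1$), picks the cheapest compatible type $t^{\min}$, and then constructs a \emph{type-specific} instance $G_2^{t^{\min}}$ for the large side by attaching carefully chosen dummy vertices and edges (different gadgets for each type, with further subcases for $\typCii$). The approximation bound is then verified by a case analysis over all pairs of types $(\OPT_1,\OPT_2)$ allowed by feasibility, using type-specific inequalities such as $\opt_1^{\typBi}\ge\opt_1^{\min}+2$ or $\opt_1^{\typA}\ge\opt_1^{\min}+3$ (the latter from non-contractibility). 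Your ``$\opt(G_A)+\opt(G_B)\le\opt(G)+O(|C|)$ plus $O(|C|)$ gluing'' is too coarse to recover the constant $r=2$; with a generic template you would need $r$ of order $|C|^2$, which would weaken the structural guarantees the rest of the paper relies on.

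\textbf{The accounting is not ``$\varepsilon$ per property''.} The $4\varepsilon$ in the final ratio does not arise from four independent sub-reductions each losing $\varepsilon$. The paper runs a single recursive procedure $\Reduce$ and maintains the invariant
\[
\reduce(G)\ \le\ \alpha\cdot\opt(G)+f(G),\qquad f(G)=4\varepsilon\,|V(G)|-16,
\]
at every node of the recursion tree. Each split produces subinstances whose vertex counts sum to at most $|V(G)|$, so the two copies of $-16$ pay for the $O(1)$ gluing edges; only at the very end does one use $f(G)\le 4\varepsilon\,\opt(G)$. Your scheme of re-canonicalizing after each split and charging $\varepsilon$ per cut type would compound losses multiplicatively along the recursion and does not terminate with a clean $4\varepsilon$. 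Also note that the paper's definition of a large $4$-vertex cut has a second, non-threshold clause (a promised edge set $Z$ with $|Z|\le\alpha\cdot\opt(G_1')$), which is what actually gets used for the configurations arising in \MANY; your generic template does not account for it.
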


For the proof of \Cref{thm:main}, we will use $\alpha \coloneq \nicefrac{5}{4} - 10^{-5}$ and $\eps \coloneq 10^{-6}$.
We will show in the following the existence of a deterministic polynomial-time $\alpha$-approximation algorithm for 2ECSS on structured graphs. %
Thus, from now on we simply write contractible and structured instead of $\alpha$-contractible and $(\alpha, \varepsilon)$-structured, respectively.

A crucial benefit of structured graphs is the existence of large matchings between vertex bipartitions. In the analysis of our algorithm for structured graphs, we will make generous use of the following matching lemmas, similar to \cite{GargGA23improved,GHL24,BGGHJL25}.

\begin{lemma}[$3$-Matching Lemma~\cite{GargGA23improved}]
	\label{lem:3-matching}
	Let $G$ be a structured graph.
	Consider any partition $(V_1, V_2)$ of $V$ such that for each $i \in \{1, 2\}$, $|V_i| \geq 3$, and if $|V_i| = 3$, then $G[V_i]$ is a triangle.
	Then, there exists a matching of size $3$ between $V_1$ and $V_2$.
\end{lemma}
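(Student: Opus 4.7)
My plan is to argue by contradiction via K\"onig's theorem. Suppose the maximum matching in the bipartite subgraph $B$ of $G$ on vertex parts $(V_1,V_2)$ (with edges the cross-edges of $G$) has size at most $2$. Then K\"onig's theorem yields a vertex cover $S$ of $B$ with $|S|\le 2$. I would first observe that $S$ must be a vertex cut of $G$: since $|V_i|\ge 3>|S|$, both $V_1\setminus S$ and $V_2\setminus S$ are non-empty, and since $S$ covers every cross-edge, these sets lie in distinct components of $G-S$. If $|S|\le 1$, this directly contradicts the 2VC property of structured graphs.

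The main case is $|S|=2$, where $\{s_1,s_2\}$ is a $2$-vertex cut of $G$. By the structured hypothesis (no non-isolating $2$-vertex cuts), some component of $G-\{s_1,s_2\}$ is a single vertex $v$, so $\deg_G(v)=2$ with $N_G(v)=\{s_1,s_2\}$. Without loss of generality, $v\in V_1$.

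The contradiction then follows by exploiting the triangle hypothesis together with the absence of irrelevant edges. If $|V_1|=3$, then $G[V_1]$ is a triangle, and $v$'s two neighbors in $V_1$ are forced by $N_G(v)=\{s_1,s_2\}$ to be exactly $s_1$ and $s_2$. Hence $V_1=\{v,s_1,s_2\}$ with $s_1s_2\in E(G)$. But then $s_1s_2$ is an irrelevant edge, since the degree-$2$ vertex $v$ forms a triangle with $s_1s_2$, contradicting the structured hypothesis. The case $v\in V_2$ with $|V_2|=3$ is symmetric.

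The main obstacle is the subcase $|V_1|,|V_2|\ge 4$, where the triangle hypothesis is vacuous and the contradiction must come entirely from the remaining structural conditions, in particular the absence of small $\alpha$-contractible subgraphs and any further restriction on isolating $2$-vertex cuts imposed by \cref{def:structured}. Here I would show that the constant-sized configuration formed by $v$, its neighbors $s_1,s_2$, and the small attached components of $G-\{s_1,s_2\}$ must constitute a forbidden $\alpha$-contractible substructure. Pinning down precisely which substructure is contractible and matching it against the formal definition from \cref{app:prep} is the delicate part of the proof.
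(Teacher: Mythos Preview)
The paper does not prove this lemma; it is quoted from \cite{GargGA23improved}. So there is no ``paper's proof'' to compare against, but your approach via K\"onig's theorem is indeed the standard one and is essentially complete---you just miss one easy observation that dissolves your ``main obstacle.''

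Your worry about the subcase $|V_1|,|V_2|\ge 4$ is unfounded, and no $\alpha$-contractibility argument is needed. Once you know the $2$-vertex cut $S=\{s_1,s_2\}$ is isolating, $G-S$ has \emph{exactly two} components, one of which is $\{v\}$. Say $v\in V_1$. Since $S$ covers all cross-edges, there are no edges between $V_1\setminus S$ and $V_2\setminus S$; hence every component of $G-S$ lies entirely inside one of these two sets. Now $V_2\setminus S\neq\emptyset$ (as $|V_2|\ge 3>|S|$), so the second component of $G-S$ contains a vertex of $V_2\setminus S$. If $V_1\setminus(S\cup\{v\})$ were also non-empty, it would contribute a \emph{third} component of $G-S$ (it cannot merge with the $V_2$-side, and it excludes $v$), contradicting that the cut is isolating. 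Therefore $V_1\subseteq S\cup\{v\}$, forcing $|V_1|=3$ and $V_1=\{v,s_1,s_2\}$. Your triangle/irrelevant-edge argument then finishes the proof immediately. The speculative $\alpha$-contractible route is a dead end and should be dropped.
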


\begin{lemma}[4-Matching Lemma~\cite{GHL24}]
	\label{lem:4-matching}
	Let $G$ be a structured graph.
	Consider any partition $(V_1, V_2)$ of $V(G)$ such that for each $i \in \{1, 2\}$, $\abs{V_i} \geq \ceil{\frac{2}{\alpha - 1}} + 2 = 11$.
	Then, there exists a matching of size $4$ between $V_1$ and $V_2$ in $G$.
\end{lemma}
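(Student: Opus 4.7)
The plan is to prove the contrapositive via König's theorem, reducing the absence of a $4$-matching to the existence of a small vertex cut, and then contradicting the structural properties of $G$.

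First, I would assume for contradiction that there is no matching of size $4$ between $V_1$ and $V_2$ in $G$. Let $B$ denote the bipartite subgraph of $G$ consisting of all edges with one endpoint in $V_1$ and the other in $V_2$. By König's theorem, since the maximum matching in $B$ has size at most $3$, there exists a vertex cover $C \subseteq V_1 \cup V_2 = V(G)$ of $B$ with $|C| \leq 3$. Every edge of $G$ crossing the partition $(V_1, V_2)$ has at least one endpoint in $C$, so after removing $C$ the vertex sets $V_1 \setminus C$ and $V_2 \setminus C$ lie in distinct components of $G - C$. Moreover, since $|V_i| \geq 11$, we have $|V_i \setminus C| \geq 11 - 3 = 8 > 0$ for both $i \in \{1,2\}$, so $C$ is indeed a vertex cut of $G$ with both sides of size at least $8$.

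Next, I would rule out each possible size of $C$ using \Cref{def:structured}. If $|C| \leq 1$, then $G$ has a vertex cut of size at most $1$, contradicting the fact that structured graphs are $2$VC. If $|C| = 2$, then $C$ is a $2$-vertex cut both of whose sides contain at least $8 \geq 2$ vertices; in particular $C$ is not an isolating $2$-vertex cut, contradicting item~(2) of \Cref{def:structured}. If $|C| = 3$, then $C$ is a $3$-vertex cut with both sides having at least $8$ vertices; this is precisely a large $3$-vertex cut in the sense of item~(3) (since the threshold in this paper is exactly $8$), contradicting structuredness.

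Since every case leads to a contradiction, the assumption fails and a matching of size $4$ between $V_1$ and $V_2$ must exist. The only nontrivial point in this argument is verifying the threshold bookkeeping: we need $|V_i| - |C| \geq 8$ in the $|C| = 3$ case, which forces the hypothesis $|V_i| \geq 11$, matching exactly the statement of the lemma. Everything else is a direct application of König's theorem combined with the forbidden-cut properties of $(\alpha, \varepsilon)$-structured graphs, so I expect no real obstacle beyond this numerical check.
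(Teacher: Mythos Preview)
Your proposal is correct and is exactly the standard argument for this kind of statement. Note that the paper does not actually give a proof of this lemma: it is imported verbatim from \cite{GHL24} (as the citation in the lemma heading indicates), so there is no ``paper's own proof'' to compare against. Your König-plus-cut-size argument is precisely the one used in \cite{GHL24}, and your threshold arithmetic---$|V_i|\geq 11$ forcing $|V_i\setminus C|\geq 8=\lceil 2/(\alpha-1)\rceil-1$ in the $|C|=3$ case---matches the definition of a large $3$-vertex cut in \Cref{def:structured} exactly.
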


\subsection{Canonical 2-Edge Cover}

From now on, we assume that we are given a structured input graph $G$.
Our next step is to compute a lower bound on the size of a minimum 2ECSS of $G$.
For this, we follow the approach of \cite{kobayashi2023approximation,BGGHJL25}, and compute a minimum triangle-free 2-edge cover $H$ of $G$, which can be done in polynomial time~\cite{kobayashi2023approximation,hartvigsen2024finding}.
Formally, a 2-edge cover  is a set of edges of a graph $G$ such that every vertex of $G$ is incident to at least two edges of the cover.
A 2-edge cover is triangle-free if it does not contain a triangle as a component. We have $|H| \leq \opt(G)$, because every feasible solution to 2ECSS is a 2-edge cover and triangle-free, because it is connected and contains at least $4$ vertices.

Given an arbitrary minimum triangle-free 2-edge cover $H$, we next transform it in polynomial time into a \emph{canonical} minimum triangle-free 2-edge cover $H_0$ of $G$ to impose the following properties.

\begin{definition}[complex]
	\label{def:complex}
	Let $H$ be a $2$-edge cover of some graph.
	We call $H$ \emph{bridgeless} if it contains no bridges, i.e., all the components of $H$ are $2$-edge connected.
	A component of $H$ is \emph{complex} if it contains a bridge.
	Inclusion-wise maximal $2$-edge-connected subgraphs of $H$ are \emph{blocks}.
	A block $B$ of some complex component $C$ of $H$ is called \emph{pendant} if $C \setminus V(B)$ is connected.
	Otherwise it is non-pendant.
	Note that by contracting each block of a complex component to a single vertex, we obtain a tree whose leaves correspond to pendant blocks.
\end{definition}

\begin{definition}[canonical]
	\label{def:canonicalD2}
	A $2$-edge cover $H$ is \emph{canonical} if
	\begin{itemize}[nosep]
		\item each non-complex component of $H$ is either a $\mathcal C_i$, for $4 \leq i \leq 8$, or contains at least $9$ edges,
		\item for every complex component, each of its pendant blocks $B$ either
		      \begin{itemize}[nosep]
			      \item consists of a $\cFive$, labeled clockwise as ${b_1,b_2,b_3,b_4,b_5}$, with $b_1$ incident to the bridge, and $b_2,b_5$ have degree $2$ in $G$. We call such a $\cFive$ a \emph{nice} $\cFive$,
			      \item or has at least $6$ vertices,
		      \end{itemize}
		\item for every complex component, each of its non-pendant blocks $B$ has at least $5$ edges,
		\item there are no two non-complex components of $H$ that are both $\cFour$'s and can be merged into a single $\cEight$ by removing two edges of $H$ and adding two edges, and
		\item $H$ contains a component that has at least $32$ vertices.
	\end{itemize}
\end{definition}

\begin{figure}
	\centering
	\begin{subfigure}[t]{0.49\textwidth}
		\centering
		\includegraphics[width=\textwidth]{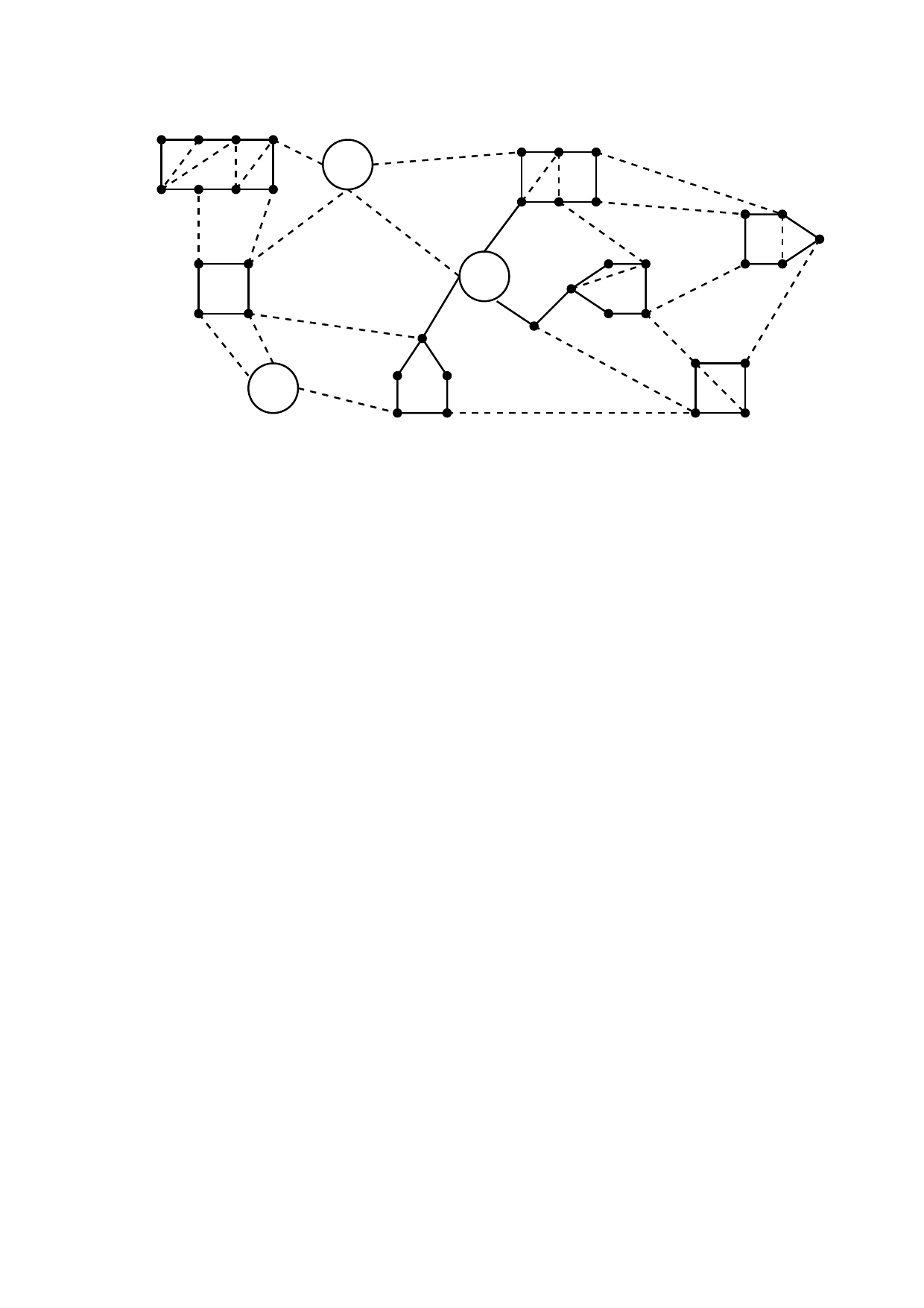}
		\caption{A canonical 2-edge cover $H$ (solid edges) of a graph $G$ (solid and dashed edges). Vertices of $G$ are black-filled small circles, 2EC components of $H$ with at least $9$ vertices and blocks of complex components are white-filled big circles.}
		\label{fig:canonical}
	\end{subfigure}%
	\hfill
	\begin{subfigure}[t]{0.49\textwidth}
		\centering
		\includegraphics[width=0.7\textwidth]{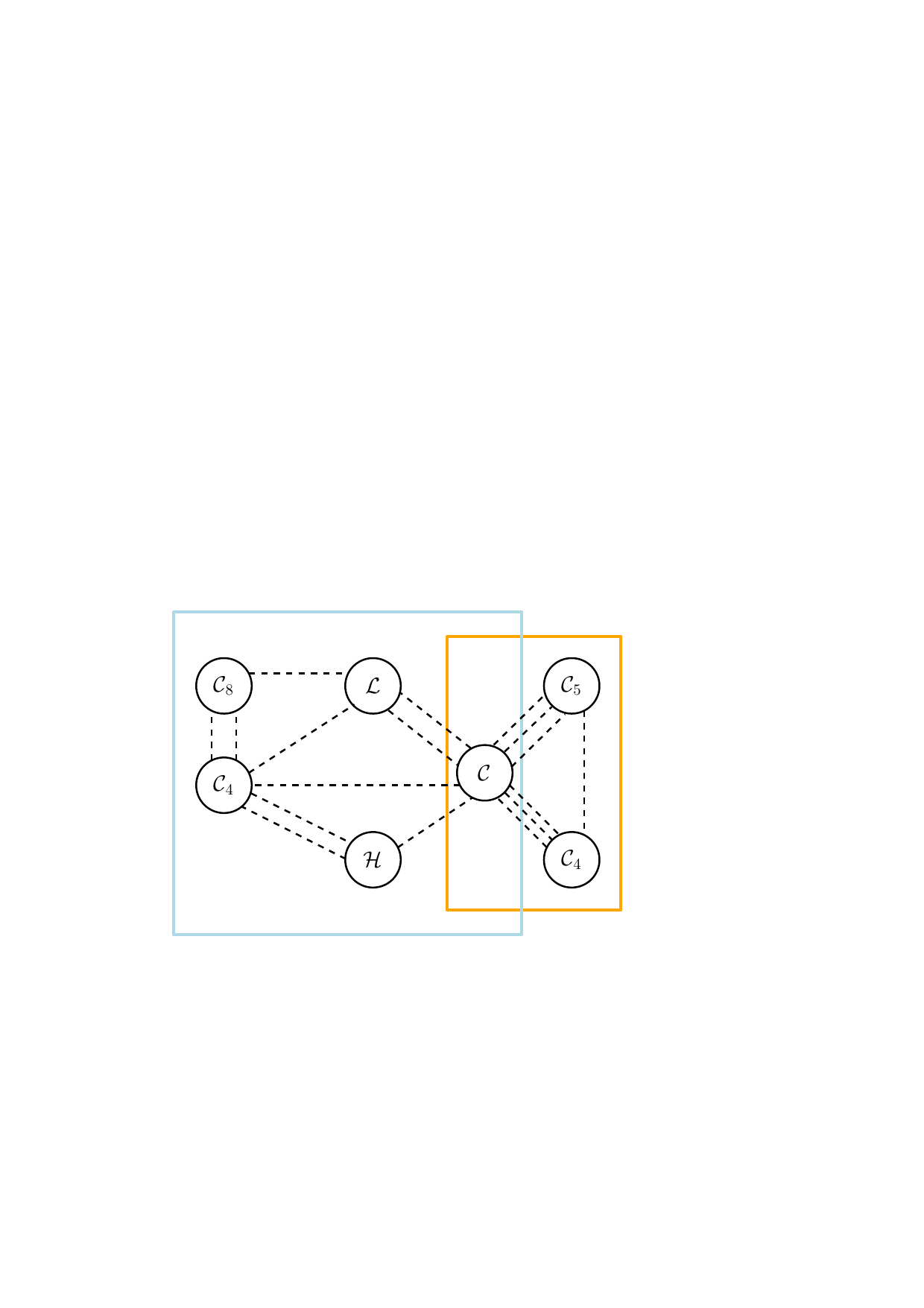}
	\caption{The component graph $\hG_H$ of the canonical 2-edge cover $H$ shown in \Cref{fig:canonical}. Here, $\mathcal C$ denotes the node of a complex component, $\mathcal L$ of a large component, and $\mathcal H$ of a huge component. The left $5$ nodes and the $3$ right nodes each form a segment.}
	\label{fig:component-graph}
	\end{subfigure}%
	\caption{An example of a 2-edge cover and its component graph.}
\end{figure}

Our definition of canonical is similar to~\cite{GargGA23improved,BGGHJL25}, except that in this paper components with $8$ edges have to be a $\cEight$, each pendant block of a complex component that is a $\cFive$ has to be nice, and we cannot further merge two $\cFour$'s into a single $\cEight$.
Merging two $\cFour$'s whenever possible is new and crucial to later steps of our pipeline.
In \Cref{sec:canonical}, we show that we can compute a canonical 2-edge cover from a minimum triangle-free 2-edge cover in polynomial time.

\begin{restatable}{lemma}{lemmaCanonicalMain}
	\label{lem:canonical-cover:main}
	Given a structured graph $G$, we can compute a canonical $2$-edge cover $H_0$ of $G$ in polynomial time, where $H_0$ is also a minimum triangle-free $2$-edge cover of $G$.
\end{restatable}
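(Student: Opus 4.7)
The plan is to compute a minimum triangle-free $2$-edge cover $H$ in polynomial time using \cite{kobayashi2023approximation,hartvigsen2024finding}, and then transform $H$ into a canonical one by iteratively applying local, \emph{size-preserving swaps}---each removing a constant number of edges from $H$ and adding the same number of edges from $G \setminus H$---until all five bullets of \Cref{def:canonicalD2} are satisfied. Each swap preserves minimality, triangle-freeness, and the $2$-edge cover property, and strictly decreases a lexicographic potential whose coordinates count violations of bullets 3, 2, 1, 4, and 5 in that priority order. Since the potential is polynomially bounded, the overall procedure terminates in polynomial time.

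I would handle bullets 1--4 first with dedicated local rules. For bullet 1, if a non-complex component $C$ has $|E(C)| \in \{5,6,7,8\}$ but is not a $\mathcal C_{|E(C)|}$, then $C$ contains a chord, and minimality together with triangle-freeness forces a local rewiring that either shrinks $H$ (contradicting minimality) or transforms $C$ into a simple cycle of the same length. For bullet 3, a triangle-free non-pendant block of at most four edges must be a $\mathcal C_4$; since it has two incident bridges, one bridge can be rerouted through a non-$H$ edge (which exists because $G$ has no small non-isolating $2$-vertex cuts) to absorb the $\mathcal C_4$ into an adjacent block whose edge count rises to at least $5$. For bullet 2, a pendant $\mathcal C_5 = b_1 b_2 b_3 b_4 b_5$ with bridge at $b_1$ fails niceness exactly when $b_2$ or $b_5$ has degree $\geq 3$ in $G$; the absence of contractible subgraphs and large $\mathcal C_5$ cuts then yields an edge of $G$ that either reroutes the bridge to restore niceness or enlarges the block to $\geq 6$ vertices. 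Bullet 4 is already an explicit swap rule: the two witnessing edges directly replace two cover edges to fuse the two $\mathcal C_4$'s into a single $\mathcal C_8$.

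The main obstacle is bullet 5, the existence of a component with at least $32$ vertices. If every component has $\leq 31$ vertices, then since $|V(G)| \geq 8/\varepsilon = 8 \cdot 10^6 \gg 32$, any component $C$ with $|V(C)| \leq 31$ satisfies $|V \setminus V(C)| \geq 11$, so the $4$-Matching Lemma (\Cref{lem:4-matching}) provides four disjoint $G$-edges between $V(C)$ and its complement. I would use two of these edges to splice $C$ with an adjacent component $C'$ via a size-preserving swap that removes one edge of $C$ and one edge of $C'$ and inserts the two matching edges, producing a strictly larger component. Iterating---after bullets 1--4 have already been restored---grows a designated component by at least one vertex per step without changing $|H|$, and terminates once a $32$-vertex component exists. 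The subtle technical point across all rules is to verify that no swap reintroduces a triangle component or an earlier-fixed violation; this follows from the structural invariants of \Cref{def:structured} (no small contractible subgraphs, no irrelevant edges, no small non-isolating $2$-vertex cuts, no large $\mathcal C_k$ cuts), which together guarantee both the existence of the replacement edges and that the replacements cannot locally create triangle components. The remaining case analyses for each rule are deferred to \Cref{sec:canonical}.
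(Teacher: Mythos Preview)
Your plan for bullets 1--4 is broadly in line with the paper's proof: compute a minimum triangle-free $2$-edge cover and then apply size-preserving local swaps governed by a potential. The paper uses a slightly different potential---lexicographic in number of edges, then components, then bridges, then cut vertices in 2EC components---and a single loop with $|F_A|\le |F_R|\le 2$, followed by a case analysis showing that when no such swap exists, bullets 1--4 hold. Your per-bullet treatment has minor imprecisions (for instance, a non-pendant block with at most four edges need not be a $\mathcal C_4$: it could be a triangle, since ``triangle-free'' forbids only triangle \emph{components}; and bullet~2 must also dispose of pendant blocks with three or four edges), but these look fixable along the lines of the paper's Cases~1--3.

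Bullet~5 is where your approach has a genuine gap. You propose to grow a designated component via repeated size-preserving splices: take two of the four edges supplied by \Cref{lem:4-matching}, remove one edge of $C$ and one of an adjacent $C'$, and insert the two matching edges. But such a $2$-for-$2$ swap preserves the $2$-edge-cover property only when the two inserted edges hit \emph{adjacent} vertices in $C$ and \emph{adjacent} vertices in $C'$ (otherwise some degree-$2$ vertex drops to degree~$1$). The $4$-Matching Lemma guarantees neither that two of the four edges go to the \emph{same} neighbouring component, nor that the endpoints are suitably adjacent. Worse, once you have enforced bullet~4 you have explicitly ruled out this exact merge between any two $\mathcal C_4$'s; so in the regime where the minimum triangle-free $2$-edge cover is (almost) all $\mathcal C_4$'s---precisely the \textsf{MANY} case the paper isolates---your splice step is blocked by your own invariant and bullet~5 is unreachable.

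The paper does not try to grow a huge component by swaps at all. It forces one \emph{before} computing $H$: guess $31$ edges of an optimal 2ECSS that induce a connected subgraph on $32$ vertices (a subtree of a spanning tree of $\OPT$ suffices), subdivide each of these edges, and compute the minimum triangle-free $2$-edge cover on the subdivided graph. The $31$ new degree-$2$ vertices force all subdivided edges into any $2$-edge cover, so the resulting $H$ already contains a component on at least $32$ original vertices; the swap loop never destroys this because its potential never lets the number of components increase while $|H|$ stays fixed. This guess-and-subdivide trick, taken from \cite{BGGHJL25}, is the missing ingredient in your argument.
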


\subsection{Credit Schemes, Bridge Covering, and Gluing}

There are two reasons why a canonical 2-edge cover $H_0$ of a structured graph $G$ is not already a 2ECSS of $G$: First, $H_0$ might have multiple components. Second, some of these components might have bridges.
In the following two subsections, we will present two algorithms that transform $H_0$ into a 2ECSS of $G$ by covering bridges and gluing components together.
Throughout these transformations, we will keep the property that the intermediate solution is a canonical 2-edge cover.

To keep track of the size of our solution throughout these modifications, we use a \emph{credit-based argument}.
Initially, we assign each edge in $H_0$ a certain credit, and then redistribute the credits to the
(sub)-components according to a \emph{credit scheme}, which we maintain throughout the algorithms. We use different credit schemes for each algorithm. For a 2-edge cover $H$, we define $\credit(H)$ as the sum of all credits assigned to $H$, and the \emph{cost} of $H$ as $\cost(H) = |H| + \credit(H)$.
If we can modify $H_0$ to a 2ECSS $H$ in polynomial time such that (i) $\cost(H_0) \leq \alpha \cdot |H_0|$ and (ii) $\cost(H) \leq \cost(H_0)$,
we have that $H$ is a $\alpha$-approximation on structured graphs, since the credits are non-negative.

To modify $H_0$ to a 2ECSS, we mainly work on the component graph $\hG_H$ of $G$.

\begin{definition}[component graph and segments]
	The \emph{component graph} $\hG_H$
	w.r.t.\ $G$ and $H$ is the multigraph obtained from  $G$ by contracting the vertices of each 2EC component of $H$ into a single node.
	We call the vertices of $\hG_H$ \emph{nodes} to distinguish them from the vertices in $G$. There is a one-to-one correspondence between the edges of $\hG_H$ and $G$. Furthermore, there is a one-to-one correspondence between the nodes of $\hG_H$ and the components of $H$. We will denote the component corresponding to a node $A$ by $C_A$. We say a \emph{segment} of $\hG_H$ is an inclusion-wise maximal 2-vertex-connected subgraph of $\hG_H$.
\end{definition}

For ease of exposition, we remove all self-loops from the component graph $\hG_H$.
Observe that the component graph $\hG_H$ is 2EC, because $G$ is 2EC and contracting a set of vertices or deleting self-loops in a 2EC graph results in a 2EC graph. %
We can break $\hG_H$ into segments in polynomial time such that two segments only overlap at a cut node of $\hG_H$ \cite{West2001,KorteV2002}.

\begin{definition}[large and huge components]\label{def:large-huge}
	We say that a component $C$ of a canonical 2-edge cover is \emph{huge} if $|V(C)| \geq 32$, and \emph{large} if it is 2EC and $|V(C)| \geq 9$.
\end{definition}

\subsection{Case Distinction on the Number of $\cFour$s in $H_0$ (Proof of \Cref{thm:main})}

Next, we will present two different algorithms that convert a canonical 2-edge cover $H_0$ into a 2ECSS of $G$. Each algorithm handles a different case depending on the number of $\cFour$'s in $H_0$. To this end, let $\beta \in [0,1]$ such that the number of edges of $\cFour$'s in $H_0$ is $\beta |H_0|$.
We distinguish two cases:
\begin{itemize}[nosep]
    \item \textbf{Case} \textsf{FEW} (\Cref{lem:main:FEW}, see \Cref{sec:few}): If $\beta \leq 1 - \frac{1}{1000}$, there is an algorithm that computes in polynomial time a 2ECSS $H$ of $G$ such that $|H| \leq  \left( \frac{5}{4} - (1-\beta) \frac{1}{100} \right) \cdot |H_0| \leq 1.24999 \cdot \opt$.
    \item \textbf{Case} \textsf{MANY} (\Cref{lem:main:MANY}, see \Cref{sec:many}): If $\beta > 1 - \frac{1}{1000}$, there is an algorithm that computes in polynomial time a 2ECSS $H$ of $G$ such that $|H| \leq \left( \frac{5}{4} - \frac{1}{28} \right) \cdot \beta |H_0| + 5 (1 - \beta) \cdot |H_0| + \frac{1}{35} \cdot \opt \leq  1.24999 \cdot \opt$.
\end{itemize}
Here we use that $|H_0| \leq \opt$. Given these two algorithm, we can easily prove \Cref{thm:main,thm:main2}.

\begin{proof}[Proof of \Cref{thm:main}]
Combining \Cref{lem:main:FEW,lem:main:MANY} yields a polynomial-time $(\nicefrac{5}{4} - 10^{-5})$-approximation algorithm for 2ECSS on structured graphs.
We now apply \Cref{lem:reduction-to-structured} with $\alpha = \nicefrac{5}{4} - 10^{-5}$ and $\varepsilon = 10^{-6}$ to conclude \Cref{thm:main}.
\end{proof}

\begin{proof}[Proof of \Cref{thm:main2}]
	We first preprocess the input graph $G$ using \Cref{lem:reduction-to-structured} using $\alpha = 1.24$. In particular, we  do not exclude large $4$-vertex cuts and large $\mathcal{C}_k$ cuts for $4 \leq k \leq 8$, because we do not need them for \FEW. Then, we use the given algorithm to compute a $\mu$-approximation $H_0$ of a minimum $\{3,4\}$-cycle-free 2-edge cover of $G$.
Note that a minimum $\{3,4\}$-cycle-free 2-edge cover of a 2-edge-connected graph $G$ is a lower bound on a feasible 2ECSS of $G$.
Hence, $|H_0| \leq \mu \cdot \opt$.
Now, we make $H_0$ canonical (\Cref{lem:canonical-cover:main}) and apply the algorithm for \FEW with the starting solution $H_0$ and $\beta = 0$. Hence, we obtain a 2ECSS $H$ with $|H| \leq \left( \frac{5}{4} - \frac{1}{100} \right) |H_0| = 1.24 \mu \cdot \opt$ (cf.~\Cref{lem:main:FEW}).
\end{proof}

The next two subsections give an outline on the proofs of \Cref{lem:main:FEW} and \Cref{lem:main:MANY}, respectively.

\subsection{Algorithm for \FEW}\label{sec:few}

First, we present an approach that gives a good approximation ratio for \FEW, i.e., $\beta \leq 1 - \frac{1}{1000}$, and prove the following lemma. We defer proofs and further details to \Cref{sec:few:main}.

\begin{restatable}{lemma}{lemmamainFEW}
\label{lem:main:FEW}
Given a structured graph $G$ and a canonical $2$-edge cover $H_0$, in polynomial time we can compute a 2EC spanning subgraph $H$ of $G$ such that
$$
|H| \leq \frac{5}{4} \beta |H_0| + \left( \frac{5}{4} - \frac{1}{100} \right) (1- \beta) |H_0| = \left( \frac{5}{4} - (1-\beta) \frac{1}{100} \right) |H_0|                     \ ,
$$
where  $\beta \in [0,1]$ such that the number of edges of $\cFour$'s in $H_0$ is $\beta |H_0|$.
\end{restatable}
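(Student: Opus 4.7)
The plan is to emulate the credit-based framework of \cite{BGGHJL25} but with a refined credit assignment that exploits the presence of non-$\cFour$ edges in $H_0$. Concretely, I would assign credit $\nicefrac{1}{4}$ to each edge of $H_0$ that lies in a $\cFour$ component and credit $\nicefrac{1}{4} - \nicefrac{1}{100}$ to each other edge of $H_0$. Summing credits and edges gives
\[
\cost(H_0) \;=\; |H_0| + \credit(H_0) \;=\; \tfrac{5}{4}\beta|H_0| + \left(\tfrac{5}{4} - \tfrac{1}{100}\right)(1-\beta)|H_0|,
\]
which is exactly the target bound of the lemma.

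Next, I would run a bridge-covering and gluing pipeline essentially identical to the one in \cite{BGGHJL25}, maintaining the invariant that $\cost(H)$ never increases as the current 2-edge cover $H$ evolves from $H_0$ toward a 2ECSS. Upon termination $H$ is a 2ECSS with all credits consumed, so $|H| = \cost(H) \leq \cost(H_0)$, giving the claimed bound. Polynomial running time is inherited from \cite{BGGHJL25}.

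The substantive task is to verify that each credit-spending step of \cite{BGGHJL25}, whose accounting is phrased as $\nicefrac{1}{4}$ credit per edge, continues to hold after subtracting $\nicefrac{1}{100}$ from every non-$\cFour$ edge. For a single-edge gluing of a bridgeless component to another, the standard analysis is tight only for $\cFour$'s (total credit $1$); every non-$\cFour$ component in a canonical cover has at least $5$ edges, hence total credit at least $5(\nicefrac{1}{4} - \nicefrac{1}{100}) = \nicefrac{5}{4} - \nicefrac{1}{20} > 1$, which still suffices to pay for one new edge. The analogous slack argument at the block level applies within complex components, since each pendant block (nice $\cFive$ or block on $\geq 6$ vertices) and each non-pendant block contains at least $5$ edges by canonicity.

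The main obstacle I expect is a careful block-by-block, bridge-by-bridge re-audit of the \cite{BGGHJL25} pipeline for complex components, to ensure that the accumulated credit discount (up to $\nicefrac{1}{100}$ per non-$\cFour$ edge) never exceeds the slack present in any single credit inequality of that paper. Particular attention is needed for complex components containing many small pendant blocks, where the discount can accumulate across blocks. Since $\nicefrac{1}{100}$ is chosen to be far smaller than the slack margins of \cite{BGGHJL25}, the verification is conceptually straightforward but must be carried out systematically; no new algorithmic ideas are needed beyond \cite{BGGHJL25}, and the improvement comes purely from the refined credit assignment together with the hypothesis $\beta \leq 1 - \nicefrac{1}{1000}$.
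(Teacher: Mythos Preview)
Your credit assignment and the resulting bound $\cost(H_0)=(\tfrac{5}{4}-(1-\beta)\delta)\,|H_0|$ with $\delta=\tfrac{1}{100}$ match the paper exactly. The gap is the claim that the \cite{BGGHJL25} pipeline can simply be re-audited with slack and that ``no new algorithmic ideas are needed.'' The paper states explicitly that many of the \cite{BGGHJL25} inequalities are \emph{tight} at $\tfrac{5}{4}$, and develops genuinely new machinery to cope. Two concrete places where a plain re-audit fails: (i)~In the credit scheme a complex component needs $1$ credit per block plus $1$ component credit. With two pendant $\cFive$ blocks the non-bridge edges supply only $10(\tfrac{1}{4}-\delta)=2.5-10\delta$, short of the required $3$; your per-block slack argument ($5(\tfrac14-\delta)>1$) does not address the extra component credit. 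The paper handles this via a new \emph{loan} mechanism with a color encoding to track repayment across bridge-covering steps, and when even the bridge credits are insufficient (two $\cFive$ blocks joined by exactly two bridges) it dissolves the component and introduces \emph{rich vertices}, relaxing to a ``weakly canonical'' $2$-edge cover. None of this is in \cite{BGGHJL25}. (ii)~The graph here is $(\alpha,\eps)$-structured for $\alpha<\tfrac{5}{4}$, so $\alpha$-contractibility is strictly weaker than in \cite{BGGHJL25}: a $\cFive$ using $4$ edges in every optimum is $\tfrac{5}{4}$-contractible but not $\alpha$-contractible here. Arguments in \cite{BGGHJL25} that dismiss small pendant blocks ``by contractibility'' therefore no longer apply, which is precisely why nice-$\cFive$ pendant blocks must be permitted in the canonical cover---and that in turn forces the shortfall in~(i).

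Consequently, both the bridge-covering phase (loans, colors, rich vertices) and the gluing phase (new case analyses for $\cFive$--$\cEight$ components with credit $<1$ each) require substantial new arguments, not a routine slack check; the paper devotes Sections~\ref{sec:bridge-covering} and~\ref{sec:gluing} to them. A minor point: the lemma is stated and proved for every $\beta\in[0,1]$; the assumption $\beta\le 1-\tfrac{1}{1000}$ plays no role in its proof and is only used downstream to derive a numerical bound below $\tfrac{5}{4}$.
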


On a high-level, we use a similar framework as in \cite{BGGHJL25}:
We start from a canonical $2$-edge cover $H_0$ and first modify it to a bridgeless one.
This is particularly challenging due to insufficient credits on some complex components and $\cFour$'s.
Finally, we glue the 2EC components together.

A crucial innovation over \cite{BGGHJL25} is that
we do not distribute credits uniformly among the edges of $H_0$; otherwise we could not give a credit of $1$ to each $\cFour$, which is important for later arguments.
Instead, in \FEW we can afford to give each $\cFour$ a credit of 1
initially: we give each edge $e$ in $H_0$ a credit of $\nicefrac{1}{4} - \delta$ if $e$ is not in a $\cFour$ of $H_0$, and a credit of $\nicefrac{1}{4}$ if $e$ is in a $\cFour$ of $H_0$, where $\delta = \nicefrac{1}{100}$. Hence, the total credit that we initially distribute to $H_0$ is $
	\left( \nicefrac 1 4 - \delta \right) |H_0| + \delta \cdot \beta |H_0| = \left(\nicefrac{1}{4} - (1-\beta)\delta\right) |H_0|$, as desired.

Moreover, we slightly relax the condition that the current solution $H$ is a canonical $2$-edge cover by allowing some isolated vertices, which we will later give enough credits to buy edges. We call them \emph{rich vertices}.
We call such a $2$-edge cover a \emph{weakly canonical $2$-edge cover.}

Our goal is to redistribute the initial credits on the edges according to the following credit scheme:

\begin{restatable}[credit scheme for \FEW]{definition}{definitionFEWcredit}
	\label{def:initial_credit}
	Let $H$ be a weakly canonical $2$-edge cover of a structured graph $G$. We keep the following credit invariants:
	\begin{itemize}[nosep]
		\item Let $C$ be a 2EC component of $H$ that is a $\mathcal C_4$. Then $ \credit(C) = 1$.
		\item Let $C$ be a 2EC component of $H$ that is a $\mathcal C_i$, for $5 \leq i \leq 8$. Then $\credit(C) = (\nicefrac{1}{4} - \delta) i$.
		\item Let $C$ be a 2EC component of $H$ with at least $9$ edges. Then $\credit(C) = 2$.
		\item Let $C$ be a complex component of $H$. Then $\credit(C)=1$.
		\item Let $B$ be a block of a complex component $C$ of $H$. Then $\credit(B) =1$.
		\item Let $e$ be a bridge of a complex component $C$ of $H$. Then $\credit(e) = \nicefrac{1}{4} - \delta$.
		\item Let $v$ be a rich vertex (component) of $H$. Then $\credit(v) = 2(\nicefrac{5}{4}-\delta)$.
	\end{itemize}
\end{restatable}

Ideally, we want each bridge to keep its credits, and each block and each component to get a credit of $1$.
However, for complex components, this may not be possible by only using the credits of the edges.
Hence, we take some \emph{loans} that we will pay back later.
The key for this is that we may not need all the credits of the bridges to perform the bridge covering: we will buy edges using the credits of some bridges, while the unused credits of the remaining bridges will be used to pay back the loans. We next give more details on the bridge covering and how we repay the loans.

\subsubsection{Bridge Covering}

To make the loan repayments transparent, we maintain information about which bridges need to pay back which amount of loans via a color encoding:
We initially create a distinct color $\col(C)$ for each complex component $C$, and give each bridge $e$ of $C$ the same color $\col(e) = \col(C)$, which is fixed and does not change. Let $\cF$ be the set of all colors.
For each color $f \in \cF$, and for each 2-edge cover $H$ that we consider in the process of bridge covering, we will maintain $\loan_f(H)$. We write $\loan(H) = \sum_{f \in \cF} \loan_f(H)$.
Now $\credit(H)$ has $2$ parts: $\loan(H)$ credits that are borrowed and $\credit(H)-\loan(H)$ credits that are self-owned.
We call a color $f$ \emph{deficient} for $H$ if $\loan_f(H) > 0$.

Throughout the bridge covering process, we maintain the following invariants.

\begin{restatable}[color and loan invariant]{invariant}{definitionFEWcolorloan}
\label{def:loan_invariant}
	Let $H$ be a weakly canonical 2-edge cover of a structured graph $G$.
	We always maintain the following invariants:
	\begin{itemize}[nosep]
		\item[(i)] For each color $f \in \cF$, the bridges of color $f$ are in the same complex component $C$ of $H$. \label[Definition(i)]{def:loan_invariant:i}
		\item[(ii)] Let $f$ be a deficient color and $F$ be the set of bridges in color $f$. \label{def:loan_invariant:ii} %
		      \begin{itemize}[nosep]
			      \item $F$ induces a connected subtree $T_C(F)$ of $T_C$, where $T_C$ is obtained from $C$ by contracting its blocks.
			      \item $T_C(F)$ has at most $4$ leaves and its leaves are block nodes.
				\item If $T_C(F)$ has $4$ leaves, then $\loan_f(H)\leq 20 \delta$.
				\item If $T_C(F)$ has $3$ leaves, then
				$\loan_f(H) \leq \nicefrac{1}{4}+11 \delta$.
				\item If $T_C(F)$ has $2$ leaves, then $\loan_f(H) \leq \nicefrac{1}{2}+10\delta$. Additionally, if $\loan_f(H) > 2(\nicefrac{1}{4} - \delta)$, i.e., $\loan_f(H)$ is larger than the credits of $2$ bridges, then the 2 leaf blocks are nice $\cFive$'s.
		      \end{itemize}
		\item[(iii)] For each color $f$, $\loan_f(H)$ is at most the total credit of bridges of $H$ of color $f$. \label{def:loan_invariant:iii}
	\end{itemize}
\end{restatable}

One can check that $H_0$ does not satisfy \cref{def:loan_invariant}(iii) if there is a complex component $C$ consisting of exactly two blocks (which are nice $\cFive$'s) and exactly two bridges.
For this case, to satisfy \cref{def:loan_invariant}(iii), we slightly adapt $H_0$ as follows.
Whenever there is such a complex component, we remove the two bridges and treat their intersection vertex as a component; we will call it a \emph{rich} vertex.
Since we remove two edges from $H_0$ and each bridge had a credit of $\nicefrac14 - \delta$, we give each rich vertex a credit of $2(1 + \nicefrac 14 - \delta)$.
For ease of presentation, we still call the resulting solution $H_0$, which is now a weakly canonical $2$-edge cover without the specific type of components as mentioned above.
All isolated vertices of $H_0$ are obtained by this operation.

\begin{restatable}{lemma}{fewBridgeCoveringInitial}\label{lem:initial_loan}
	There are valid colors $\cF$ and loans for $H_0$ such that \cref{def:loan_invariant} is satisfied and
	$\cost(H_0)-\loan(H_0) \leq (\nicefrac{5}{4}-(1-\beta)\delta) |H_0|$.
\end{restatable}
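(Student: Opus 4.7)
My plan is to assign initial edge credits, apply the rich vertex modification in the one configuration where it is needed, and distribute the remaining credits to match \cref{def:initial_credit} with loans only on the remaining complex components; the invariants then follow from a short case analysis in the number of blocks.

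First, I give each edge of $H_0$ an initial credit of $\nicefrac{1}{4}$ if it lies in a $\mathcal{C}_4$ component and $\nicefrac{1}{4}-\delta$ otherwise, summing to $(\nicefrac{1}{4}-(1-\beta)\delta)|H_0|$. I then pick a fresh color $\col(C)\in\cF$ per complex component $C$ and paint its bridges with $\col(C)$, so \cref{def:loan_invariant}(i) is immediate. Applying the rich vertex modification of the text---removing the two bridges of any complex component consisting of two nice $\mathcal{C}_5$ pendant blocks joined through a cut vertex, and promoting the cut vertex to a rich vertex with credit $2(\nicefrac{5}{4}-\delta)$---is consistent with the cost bound on a per-edge basis, because the two removed edges had bound contribution $2(\nicefrac{5}{4}-\delta)$ that is reassigned to the rich vertex. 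Non-complex components meet the scheme trivially: $\mathcal{C}_4$ holds $1$, $\mathcal{C}_i$ for $5\le i\le 8$ holds $i(\nicefrac{1}{4}-\delta)$, and a large 2EC component with $m\ge 9$ edges holds $m(\nicefrac{1}{4}-\delta)\ge 2$ at $\delta=\nicefrac{1}{100}$.

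The critical structural fact for complex components is that $H_0$ is a \emph{minimum} triangle-free 2-edge cover, so every bridge must be non-droppable. If a bridge had both endpoints in blocks, those endpoints would have degree $\ge 3$ in $H_0$ (at least $2$ in their blocks plus the bridge), and removing the bridge would still give a valid 2-edge cover and preserve triangle-freeness, contradicting minimality. Hence every bridge has at least one endpoint that is a cut vertex of degree exactly $2$ in $H_0$, with both of its edges being bridges. In particular, each complex component is a tree of blocks joined via 2-valent cut vertices, the only $b=2$ configuration routes through a cut vertex (two bridges), and the rich vertex modification removes exactly those $b=2$ components with both blocks nice $\mathcal{C}_5$'s; any surviving $b=2$ component has at least one block with $\ge 6$ edges. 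For a remaining complex component with $b$ blocks and $k$ cut vertices (so $b+k-1$ bridges), I set $\loan_f(H_0)$ to the scheme-minus-initial deficit; in the worst case that all blocks are $\mathcal{C}_5$'s this is $k$-independent and equals $\max\{0,\, 1 - b/4 + 5b\delta\}$, positive only for $b\in\{2,3,4\}$ at $\delta=\nicefrac{1}{100}$, taking values $\nicefrac{1}{4}+11\delta$ (for $b=2$ after modification, thanks to the extra edge), $\nicefrac{1}{4}+15\delta$, and $20\delta$ respectively.

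Since $T_C(F)=T_C$ initially and its leaves are exactly the pendant blocks, the structural part of \cref{def:loan_invariant}(ii) is immediate. The numeric bounds are verified case by case: for $\ell=2$, $T_C$ is a path with $b\ge 2$, loan $\le \nicefrac{1}{4}+15\delta \le \nicefrac{1}{2}+10\delta$, and the loan never exceeds $\nicefrac{1}{2}-2\delta = 2(\nicefrac{1}{4}-\delta)$, so the nice-$\mathcal{C}_5$ side-condition is vacuous; for $\ell=3$ the branching node must be a block (cut vertices are 2-valent), hence $b\ge 4$ and loan $\le 20\delta \le \nicefrac{1}{4}+11\delta$; for $\ell\ge 4$, $b\ge 5$ and the deficit is non-positive. \cref{def:loan_invariant}(iii) follows by $\loan_f(H_0)\le (b+k-1)(\nicefrac{1}{4}-\delta)$, the tightest check being $b=2,\, k=1$ where $\nicefrac{1}{4}+11\delta \le \nicefrac{1}{2}-2\delta$ at $\delta=\nicefrac{1}{100}$. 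Summing contributions and the rich vertex bookkeeping yields $\cost(H_0)-\loan(H_0) \le (\nicefrac{5}{4}-(1-\beta)\delta)|H_0|$ as required. The main obstacle is the droppability argument: without minimality, the $b=2$ direct-bridge case (or any bridge between block vertices) would violate \cref{def:loan_invariant}(iii), and identifying the rich vertex modification as the unique additional fix needed is what makes the case analysis clean.
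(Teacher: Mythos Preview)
Your overall shape (per-edge credit assignment, case analysis on the number $b$ of blocks, taking loans for the deficit) matches the paper's proof, but two structural steps do not go through.

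First, the droppability argument only shows that each bridge has at least one endpoint of degree exactly $2$; it does \emph{not} show that every lonely vertex is $2$-valent. A lonely vertex $l$ of degree $3$ whose three neighbours $x_1,x_2,x_3$ are themselves degree-$2$ lonely vertices is perfectly consistent with minimality: each bridge $l x_i$ already has the degree-$2$ endpoint $x_i$, and dropping it would leave $x_i$ with degree $1$. Hence your inference ``for $\ell=3$ the branching node must be a block (cut vertices are $2$-valent), hence $b\ge 4$'' fails: one can have $b=3$ with all three blocks pendant meeting through a degree-$3$ lonely node, giving loan $\nicefrac14+15\delta$. The paper never links leaf count to block count this way; it bounds loans purely by $b$, and the later bridge-covering section for three leaves explicitly works with the bound $\nicefrac14+15\delta$.

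Second, the rich-vertex modification is applied only to components with two nice $\mathcal C_5$ blocks \emph{and exactly two bridges}. A $b=2$ component with two nice $\mathcal C_5$'s joined through $k\ge 2$ cut vertices (hence $\ge 3$ bridges) survives the modification and carries loan $\nicefrac12+10\delta$. So your claim ``any surviving $b=2$ component has at least one block with $\ge 6$ edges'' is false, and the nice-$\mathcal C_5$ side condition in \cref{def:loan_invariant}(ii) for two leaves is not vacuous --- it is precisely what handles this surviving case (both leaf blocks are indeed nice $\mathcal C_5$'s). Your check of (iii) must likewise include this case, where one needs $\nicefrac12+10\delta \le 3(\nicefrac14-\delta)$; this holds, but it is the binding inequality, not the $k=1$ case you single out.

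The headline inequality $\cost(H_0)-\loan(H_0)\le(\nicefrac54-(1-\beta)\delta)|H_0|$ is unaffected; it is only the verification of \cref{def:loan_invariant}(ii) that needs repair.
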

The above invariants imply that for each color $f$, the credits of at most $3$ bridges are enough to pay back the loans.
Hence, we can perform the bridge-covering process similarly as in~\cite{BGGHJL25} when there are enough bridges in color $f$.
When the number of bridges in color $f$ is small, we carefully cover the remaining bridges without too much cost so that we can pay back the loans using the credits of some bridges before all bridges in color $f$ are covered.

We will apply the following lemma to iteratively adapt $H_0$ into a canonical 2-edge cover without bridges while maintaining the above invariants.
In each iteration, we add several paths to reduce the number of bridges, while not increasing the real cost of the current solution, i.e., $\cost - \loan$.

\begin{restatable}{lemma}{lemmaFEWbridge}
 \label{lem: bridgeCover-main}
	Given a weakly canonical 2-edge cover $H$ of a structured graph $G$ and a complex component, there is a polynomial-time algorithm that outputs a weakly canonical 2-edge cover $H'$ and a valid loan $\loan(H')$ of $G$ with fewer bridges such that
	$\cost(H')-\loan(H') \leq \cost(H)-\loan(H)$.
\end{restatable}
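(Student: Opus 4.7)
The plan is to prove \Cref{lem: bridgeCover-main} by constructing a single ``bridge-covering move'' on the given complex component $C$ of $H$: the move outputs a weakly canonical 2-edge cover $H'$ with strictly fewer bridges in $C$, an updated loan assignment, and $\cost(H') - \loan(H') \le \cost(H) - \loan(H)$. The accounting is driven by \Cref{def:initial_credit}: each block of a complex component has credit $1$ and each bridge has credit $\nicefrac14 - \delta$, so if the move covers $k$ bridges and merges $m$ blocks of $C$ into a single 2EC block, it releases $(m-1) + k(\nicefrac14 - \delta)$ units of self-owned credit that can be spent on the new edges, provided the loan invariant \Cref{def:loan_invariant} is re-established for the surviving bridges of color $f = \col(C)$.

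To construct the move, I pick a pendant block $B$ of $C$; by canonicality $B$ is either a nice $\cFive$ or has at least $6$ vertices. Let $e$ denote the bridge incident to $B$. Applying \Cref{lem:3-matching} (and \Cref{lem:4-matching} when $|V(B)| \ge 11$) to the bipartition $(V(B), V(G) \setminus V(B))$, which is valid since $|V(B)| \ge 5$ and $G$ has at least $8/\varepsilon$ vertices, I obtain edges of $G \setminus H$ leaving $B$. From these I extract an augmenting path $P \subseteq G \setminus H$ that reaches either another pendant block $B'$ of $C$ (``internal'' move) or a block of some other component $C'$ of $H$ (``external'' move). Adding $P$ together with the unique $C$-path from $B$ to the target merges all traversed blocks into one new 2EC subgraph of $H'$, covering $e$ and every traversed bridge. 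In the internal case, the combined block and covered-bridge credits pay for the $|P|$ new edges; in the external case, the credit of $B$ plus the credit of the covered $e$ pays for the at most two gluing edges, and any residual credit is applied directly to reduce $\loan_f$.

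The main obstacle is reconciling the move with \Cref{def:loan_invariant}(ii)--(iii) on $H'$: covering a leaf bridge shrinks the subtree $T_C(F)$ by one leaf, the allowed loan cap changes accordingly, and when the last bridge of color $f$ is covered the entire residual $\loan_f$ must be absorbed by the credit freed in this final move. I will therefore choose the target of the augmenting path so that either the new loan automatically fits the new cap, or the bridge credit freed after paying for new edges suffices to pull $\loan_f$ down to the cap. The most delicate subcase is when the surviving $T_C(F)$ has only two leaves and $\loan_f$ exceeds $\nicefrac12 - 2\delta$: \Cref{def:loan_invariant}(ii) then forces both remaining leaves to be nice $\cFive$'s, and I will exploit the two degree-$2$ vertices of a nice $\cFive$, which constrain where $G$-edges may leave the block, to produce an augmenting path whose bridge-credit surplus absorbs the residual loan. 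A finite case split on the leaf count of $T_C(F)$ before and after the move, together with the cap values stated in \Cref{def:loan_invariant}(ii), then completes the verification.
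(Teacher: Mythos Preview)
Your plan has two genuine gaps. First, the ``external'' move does not cover any bridge: if $P \subseteq G \setminus H$ runs from $B$ to a block of a different component $C'$ and stops there, then $e$ is still the unique $H'$-edge leaving $B$ toward the rest of $C$, and in fact every edge of $P$ becomes a \emph{new} bridge, so $H'$ has more bridges than $H$. A bridge-covering path must have both endpoints in $T_C$; this is why the paper works in the contracted graph $G_C$. Second, even for a path that does return to $C$, the correct accounting is $\cost(H)-\cost(H') \ge (\nicefrac{1}{4}-\delta)\,br + bl - 2$, so a move through a single block ($bl=1$) needs $br \ge 5$ covered bridges. Your external-case claim that one block credit plus one bridge credit pays for two gluing edges corresponds to $bl=br=1$ and actually \emph{increases} cost by $\nicefrac{3}{4}+\delta$. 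And the $3$-Matching Lemma on $(V(B), V\setminus V(B))$ does not force a path to a second block: the three matching edges may hit distinct lonely nodes $u_1,u_2,u_3$ along $T_C$, yielding at best $bl=1$, $br=3$, still expensive.

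The paper's proof replaces the matching step by a reachability argument from $2$-edge-connectivity of $G$ (\Cref{lem:bridge-covering-helper1}): from either side of any tree edge of $T_C$, one can reach a block node or at least two further lonely nodes via bridge-covering paths. When no single cheap path exists, it combines two bridge-covering paths from different pendant blocks, sometimes deleting a shared tree edge (\Cref{lem:bridge-covering-helper2,lem:bridge-covering-helper3,lem:bridge-covering-helper4}). Repaying the loans then requires a substantial case analysis on the number of leaves of $T_C(F)$ and the number of surviving bridges of color $f$ (\Cref{corr:five_plus_loan_edges}, \Cref{lem: bridgeCoverPath}, and the two subsequent lemmas for $3$ and $4$ leaves), several subcases invoking the absence of large $3$-vertex cuts and the degree-$2$ vertices of a nice $\cFive$. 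That case analysis is the real content of \Cref{lem: bridgeCover-main}; it is not a bookkeeping afterthought to a single augmenting-path move.
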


By exhaustively applying the above lemma, we obtain a bridgeless weakly canonical 2-edge cover $H'$ of $G$ %
such that
$\cost(H')-\loan(H') \leq \cost(H_0)-\loan(H_0) \leq  (\nicefrac{5}{4}-(1-\beta)\delta) |H_0|$.
Since $H'$ is bridgeless, we can conclude that $\loan(H') = 0$ by \Cref{def:loan_invariant}(iii).
Thus, $|H'| \leq \cost(H') = \cost(H')-\loan(H') \leq  (\nicefrac{5}{4}-(1-\beta)\delta) |H_0|$ as desired.

To get an intuition for \Cref{lem: bridgeCover-main}, consider a complex %
component $C$ of $H$. %
We denote by $G_C$ the multi-graph obtained from $G$ by contracting each block of $C$ and each connected component $C' \neq C$ of $H$ into a single node.
Let $T_C$ be the tree in $G_C$ induced by the bridges of $C$. We call the vertices of $T_C$ corresponding
to blocks \emph{block nodes}, and the remaining vertices of $T_C$ \emph{lonely nodes}.
Observe that the leaves of $T_C$ are block nodes. %
As for the component graph, we call vertices of $G_C$ that arise from components as \emph{nodes} and refer to vertices of $G_C$ that correspond to vertices in $G$ as vertices.

At a high level, we will transform $H$ into a new solution $H'$ containing a component $C'$ that contains the vertices of $C$ %
such that no new bridge is created and at least one bridge $e$ of $C$ is not a bridge of $C'$; we say $e$ is \emph{covered}. %
During this process, %
we maintain that $H'$ is weakly canonical.

A crucial tool for proving \Cref{lem: bridgeCover-main} are \emph{bridge-covering paths}.
A bridge-covering path $P_C$ is a path in $G_C \setminus E(T_C)$ with its distinct endpoints $u$ and $v$ in $T_C$, and the remaining internal vertices outside $T_C$.
Notice that $P_C$ might consist of a single edge, possibly parallel to some edge in $E(T_C)$.
Augmenting $H$ along $P_C$ means adding the edges of $P_C$ to $H$ to obtain~$H'$.
Notice that $H'$ is again a weakly canonical $2$-edge cover and has fewer bridges than $H$: all the bridges of $H$ along the $u$-$v$ path in $T_C$ are no longer bridges in $H'$ as they were covered, and the bridges of $H'$ are a subset of the bridges of $H$.
Note that a bridge-covering path, if one exists, can be computed in polynomial time; specifically, one can use breadth-first search after truncating $T_C$.

If we are able to show that we can find %
a bridge-covering path such that the resulting solution satisfies $\cost(H')-\loan(H') \leq \cost(H)-\loan(H)$, we are done (cf.\ \cref{lem: bridgeCover-main}).
In particular, the inequality is satisfied if $P_C$ involves at least $2$ block nodes or $1$ block node and at least $5$
bridges.
During bridge-covering, we ensure that bridges only pay back the loan of their own color, even if we are currently covering bridges of a different complex component.
In this way, we can carefully argue that we satisfy %
\Cref{def:loan_invariant}.

\subsubsection{Gluing}\label{sec:overview:few:gluing}

After bridge covering, we have a bridgeless weakly canonical 2-edge cover $H_1$ of a structured graph $G$ with $\cost(H_1) \leq \left( \nicefrac{5}{4} - (1-\beta)\delta \right) |H_0|$ such that $H_1$ is composed of
    \begin{itemize}[nosep]
        \item large components, each with a credit of $2$,
        \item rich vertices, each with a credit of $2(\nicefrac{5}{4}-\delta)$,
        \item $\cFour$'s with a credit of $1$, and
        \item $\mathcal C_i$'s with a credit of $(\nicefrac14 - \delta) i$ for $5\leq i\leq 8$.
    \end{itemize}
Note that every component receives a credit of at least 1.
Our main contribution in this section is to prove the following lemma.

\begin{restatable}{lemma}{lemmaMANYgluing}
	\label{lem:gluing:main}
	Given a structured graph $G$ and a bridgeless weakly canonical 2-edge cover $H_1$ of $G$, we can transform $H_1$ in polynomial time into a 2ECSS $H_2$ of $G$ such that $\cost(H_2) \leq \cost(H_1)$.
\end{restatable}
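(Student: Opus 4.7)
The plan is to incrementally merge the components of $H_1$ into a single 2-edge-connected spanning subgraph, using credits to pay for the added edges so that cost never increases. I will work on the component graph $\hG_{H_1}$, which is 2-edge-connected because $G$ is and contraction preserves 2-edge-connectivity.

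My approach adapts the gluing procedure of \cite{BGGHJL25} to our credit scheme, which differs from prior work in two ways: rich vertices carry extra credit $2(\nicefrac{5}{4}-\delta) > 2$, and $\mathcal{C}_i$ components for $5 \leq i \leq 8$ have slightly reduced credit $(\nicefrac{1}{4}-\delta)i$ instead of $i/4$. First, I decompose $\hG_{H_1}$ into segments (inclusion-wise maximal 2VC subgraphs), which overlap only at cut nodes. A key observation is that once each segment is merged into a single 2EC block, the union along the shared cut vertices is automatically 2-edge-connected, so no additional edges are needed for cross-segment gluing; it therefore suffices to handle each segment separately.

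Within a segment I plan to iteratively absorb components into a chosen \emph{backbone}, initialized to the huge component of $H_1$ guaranteed by canonicity. The cleanest absorption is a path $L - C_1 - C_2 - \cdots - C_k - L'$ through $k$ $\cFour$'s between two distinct credit-$\geq 2$ components: it adds $k+1$ edges and releases $\credit(L)+\credit(L')+k-2 \geq k+2$ credit, for a saving of at least $1$. More generally, any absorbing path whose endpoints have credit $\geq 2$ and whose interior nodes have credit $\geq 1$ is cost-neutral or saving. A closed absorbing walk that starts and ends at the same backbone has a $1$-credit shortfall, which I plan to cover using one auxiliary $\mathcal{C}_i$ ($i \geq 5$), whose credit $\nicefrac{5}{4}-5\delta > 1$ (for our $\delta = \nicefrac{1}{100}$) suffices. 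Rich vertices behave like large components but leave $\nicefrac{1}{2}-2\delta$ slack each, which I will use as buffer. The existence of the needed short paths follows from the $3$- and $4$-Matching Lemmas (\Cref{lem:3-matching,lem:4-matching}) applied to appropriate bipartitions of the segment.

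The main obstacle will be handling segments dominated by $\cFour$'s, where the credit budget is tightest: every $\cFour$ supplies only $1$ credit while $2$ edges are needed to 2-edge-connectedly attach it on its own. To overcome this, I rely on two features of the canonical form: no two $\cFour$'s can be merged into a $\cEight$ by swapping two edges, so $\cFour$'s are sufficiently ``spread out'' in $G$; and every canonical cover contains a huge component that serves as a global anchor. Together with the structured-graph matching lemmas, these let me always find an absorbing path or closed walk that starts at the huge backbone, threads through consecutive $\cFour$'s (and one $\mathcal{C}_i$ when closing), and returns to a credit-$\geq 2$ endpoint with cost-neutral or saving balance. Summing the nonpositive cost changes of all merges yields $\cost(H_2) \leq \cost(H_1)$.
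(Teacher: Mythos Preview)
There is a genuine gap in your plan.  Two issues, one structural and one accounting, both concern the $\cFour$-only situation you yourself flag as the bottleneck.

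First, ``handle each segment separately'' cannot work as stated.  A segment of $\hG_{H_1}$ that does not contain the huge component may consist entirely of $\cFour$'s (each with credit $1$).  Your absorbing-path argument needs an endpoint with credit $\geq 2$, and your closed-walk patch needs a $\mathcal{C}_i$ with $i\geq 5$ inside the segment; neither is available.  Concretely, merging $k$ such $\cFour$'s by a cycle costs $k$ edges but releases only $k-2$ credits, a deficit of~$2$.  The paper avoids this by working \emph{iteratively}: it always picks a segment that contains the current huge component $L$, glues something into $L$, and repeats on the new component graph; $L$ grows and eventually absorbs everything.  Your per-segment decomposition throws away this anchor.

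Second, even inside the segment containing $L$, you do not say how to close a walk when only $\cFour$'s are around.  The paper's key device here (their ``good cycle'' / \Cref{lem:cycle-hamiltonian} and the $\cFour$ cycle Lemma of \cite{BGGHJL25}) is to find a cycle $F$ through $L$ that meets some $\cFour$ $C_A$ at two vertices $u,v$ admitting a Hamiltonian $u$--$v$ path in $G[V(C_A)]$; then one sets $H'=(H\setminus E(C_A))\cup F\cup P$, which \emph{removes one edge} of $C_A$.  This single saved edge is exactly what turns a $1$-credit shortfall into a cost-neutral merge when there is no $\mathcal{C}_{\geq 5}$ to borrow from.  Your proposal never invokes edge removal inside a component; the canonical ``no two $\cFour$'s form a $\cEight$'' property and the matching lemmas you cite guarantee the existence of such Hamiltonian-path cycles, but by themselves they do not fix the credit deficit.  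Without this trick, the $\cFour$-only case remains open.
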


With the above lemma, we can prove \Cref{lem:main:FEW}.
We first compute a canonical $2$-edge cover $H_0$ using \Cref{lem:canonical-cover:main}.
We then transform this into a bridgeless weakly canonical 2-edge cover $H_1$ such that $\cost(H_1) \leq \cost(H_0)-\loan(H_0)$ by exhaustively using \Cref{lem: bridgeCover-main}.
Then, by exhaustively applying \Cref{lem:gluing:main}, we can transform $H_1$ into a 2ECSS $H_2$ such that $\cost(H_2) \leq \cost(H_1)$.
Since by \Cref{lem:initial_loan}, $\cost(H_1) \leq  \left(\nicefrac{5}{4} - (1-\beta)\delta\right) |H_0|$,
we can conclude~\Cref{lem:main:FEW}.

Hence, the remaining section is dedicated to proving \Cref{lem:gluing:main}.
Recall that an inclusion-wise maximal 2-vertex-connected subgraph of $\hG_H$ is a segment, and we can break
$\hG_H$ into segments such that two segments can potentially overlap only at a cut node. %
Hence, in a segment $S$ with $|S| = 2$, each node is either a cut node or a pendant node of $\hG_H$.
Moreover, a cut node of $\hG_H$ cannot correspond to a rich vertex, because otherwise there is a cut vertex in $G$, contradicting that it is structured. %

We roughly follow the approach of~\cite{BGGHJL25}.
Consider a segment $S$ that contains a node $L$ corresponding to a huge component. By the definition of a canonical 2-edge cover (\Cref{def:canonicalD2}), such a segment must exist. The benefit of having a huge component is that there exists a matching of size $4$ between $L$ and the other vertices if those are at least 11 (\Cref{lem:4-matching}).

Our approach is the following:
If we can compute a cycle $F$ in $S$ through $L$ and another node $A$, such that $F$ is incident to two distinct vertices $u$ and $v$ in $C_A$, and there is a Hamiltonian path $P$ between $u$ and $v$ in $G[V(C_A)]$,
then we set $H' \coloneq (H \setminus E(C_A)) \cup F \cup P$. Note that $H'$ is a canonical bridgeless 2-edge cover of $G$ with fewer components than $H$. %
Furthermore, $\cost(H') - \cost(H) = (|H'| - |H|) + (\credit(H') - \credit(H))  = (|F| - 1) + (2-|F|-1) \leq 0$, since we added $|F|$ edges, effectively removed $1$ edge as $|C_A| - |P|=1$, $\credit(C_L)=2$, and $\credit(C_A)\geq 1$ in $H$, whereas in $H'$ those form a single component with credit 2.
We do not need the existence of such a Hamiltonian path in  $C_A$ if $A$ is large, as this component already has a credit of $2$. Hence, such a cycle also meets our goal.
We call such cycles \emph{good} cycles.

In most cases, we can simply find such a good cycle as above. However, this is not always possible.
We sometimes need to find two particular cycles in $\hG_H$ that intersect at a single node.
In the following, we distinguish whether $|S| = 2 $ or $|S| \geq 3$.

\begin{restatable}{lemma}{lemmaMANYgluingtrivial}
	\label{lem:gluing:trivial-segment}
	If $|V(S)|=2$, then in polynomial time we can compute a weakly canonical bridgeless 2-edge cover $H'$ of $G$ such that $H'$ has fewer components than $H$, $H'$ contains a huge component, and $\cost(H') \leq \cost(H)$.
\end{restatable}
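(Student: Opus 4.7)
The plan is a case analysis on the second node $A$ of the segment $S = \{L, A\}$. Because $\hG_H$ is 2-edge-connected and $S$ is 2-vertex-connected on two nodes, the segment carries at least two parallel edges between $L$ and $A$, corresponding to two distinct edges of $G$ between $V(C_L)$ and $V(C_A)$.

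If $C_A$ is large or huge, I take any two such parallel edges to form a good cycle through $L$ and $A$; adding them to $H$ merges $C_L$ and $C_A$ into one huge 2EC component of credit $2$, and the released credits $2 + 2$ of $C_L$ and $C_A$ cover the two new edges, yielding $\cost(H') = \cost(H)$. If $A$ is a rich vertex $v$, the two parallel edges give two edges $xv, vy$ with $x, y \in V(C_L)$; adding them merges $\{v\}$ into $C_L$ to form a huge 2EC component, with cost change $2 - 2(\tfrac54 - \delta) = -\tfrac12 + 2\delta < 0$.

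The main case is $C_A = \mathcal C_k$ for $4 \leq k \leq 8$. Here I seek a good cycle $F$, i.e., two edges from $V(C_L)$ landing on two cycle-adjacent vertices $u, v$ of $V(C_A)$; together with the Hamiltonian path $P$ from $u$ to $v$ along $C_A$ (the cycle minus the edge $uv$) I set $H' = (H \setminus E(C_A)) \cup F \cup P$. The cost balance is $|H'| - |H| = 1$ and $\credit(H') - \credit(H) = -\credit(C_A)$, giving $\cost(H') - \cost(H) = 1 - \credit(C_A) \leq 0$, since $\credit(\mathcal C_4) = 1$ and $\credit(\mathcal C_k) = (\tfrac14 - \delta) k > 1$ for $5 \leq k \leq 8$ by \Cref{def:initial_credit}. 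To locate the adjacent incident pair, I observe that when $A$ is pendant in $\hG_H$, every edge leaving $V(C_A)$ goes to $V(C_L)$; then the 3-matching lemma (\Cref{lem:3-matching}) applied to $(V(C_A), V(G) \setminus V(C_A))$ yields three external edges at three distinct vertices of $V(C_A)$, and for $k \in \{4, 5\}$ any such triple contains a cycle-adjacent pair.

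The hardest part is $k \in \{6, 7, 8\}$, where the matched triple might be cycle-independent (e.g., $v_1, v_3, v_5$ on $\mathcal C_6$). My plan is to rule this out via a contractibility argument: if no vertex of $V(C_A)$ outside such a triple had any external edge, the remaining $k - 3 \geq 3$ vertices would have degree $2$ in $G$, forcing every 2ECSS to contain all $k$ edges of $C_A$ along with two inter-component edges; this would make $V(C_A)$ an $\alpha$-contractible subgraph of constant size, contradicting \Cref{def:structured}(1). Hence a fourth edge exists at a cycle-neighbor of a matched vertex, providing the required adjacent pair. The parallel sub-case in which $A$ is a cut node of $\hG_H$ (rather than pendant) is addressed by using the structured-graph restrictions on large $\mathcal C_k$- and $4$-vertex cuts (\Cref{def:structured}(4,5)) to constrain the small non-$L$ sides of the cut, after which the same adjacent-pair argument applies.
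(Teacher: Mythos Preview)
Your plan tracks the paper's proof closely for rich vertices, large components, and $\mathcal C_4,\mathcal C_5$ (where any three vertices on the cycle contain an adjacent pair). The gaps are in the remaining cases.

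\textbf{Gap 1 (chords inside $C_A$).} Your contractibility argument for $k\in\{6,7,8\}$ asserts that if no vertex outside the matched triple has an external edge, those vertices ``would have degree~$2$ in $G$'', and hence every 2ECSS contains all $k$ cycle edges. This is false: such a vertex may have chords to other vertices of $C_A$, raising its degree and invalidating the ``all $k$ edges'' conclusion. The paper handles this explicitly. If a \emph{nice diagonal} (a chord between two unmatched vertices, e.g.\ $v_2v_4$ on $\mathcal C_6$ with matched $\{v_1,v_3,v_5\}$) exists, it is used to build a Hamiltonian path between two \emph{matched} vertices, giving the good cycle directly. Only when no nice diagonal exists does contractibility follow, and then by a disjoint-edge-set count (``the edge sets incident to $v_2,v_4,v_6$ are pairwise disjoint, so any 2ECSS has $\geq 6$ edges in $G[V(C_A)]$''), not by a degree-$2$ claim. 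For $\mathcal C_8$ the count is even more delicate (Cases~2.2.1--2.2.3): with matched $\{v_1,v_3,v_5\}$ one gets only $\geq 7$ edges (the possible chord $v_6v_8$ prevents $8$), and a separate argument is needed for matched $\{v_1,v_3,v_6\}$.

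\textbf{Gap 2 (the cut-node case).} When $A$ is a cut node and $C_A=\mathcal C_8$, the paper does \emph{not} reduce to an adjacent-pair-in-$S$ argument via structured-graph cut restrictions as you suggest. Instead it uses the $4$-Matching Lemma to force the configuration where $L$ hits exactly $v_1,v_3,v_5,v_7$, argues (as above) that some $v_j$ with $j$ even has an edge into another segment $S'$ containing $A$, and then finds a cycle $F$ inside $S'$ through $A$ (via \cref{lem:niceCycle} or \cref{lem:cycle-through-cfour} depending on whether $S'$ contains a $\cFour$). The final $H'$ combines $F$ with two $L$--$C_A$ edges and drops a cycle edge of $C_A$; this is a genuinely different construction from a single good cycle in $S$, and the cost accounting uses that all non-$L$ components on $F$ contribute credit $\geq 1$ (or $\geq 5(\tfrac14-\delta)$ when no $\cFour$ is present). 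Your one-sentence sketch does not contain this mechanism.
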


In this case, $S$ contains two components: a huge component $L$ and another component $C$.
We can find a good cycle between $L$ and $C$ if $C$ is either large or if $C$ is a $\mathcal{C}_i$ for $4 \leq i \leq 7$ and there is a $3$-matching or $4$-matching between $L$ and $C$, which already deals with most cases.
In the remaining cases and in particular if $C$ is a $\cEight$, we show that we can find two cycles in $\hG_H$ that intersect at $C$ such that adding this cycle to $H$ and removing a specific set of edges from $H$ results in a canonical $2$-edge cover $H'$ such that $\cost(H') \leq \cost(H)$ and $H'$ has fewer components than $H$.

In the remaining case, we have $|S| \geq 3$.

\begin{restatable}{lemma}{lemmaMANYgluingnontrivial}
	\label{lem:gluing:non-trivial-segment}
	If $|V(S)| \geq 3$, then in polynomial time we can compute a weakly canonical bridgeless 2-edge cover $H'$ of $H$ such that $H'$ has fewer components than $H$, $H'$ contains a huge component, and $\cost(H') \leq \cost(H)$.
\end{restatable}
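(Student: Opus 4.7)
The plan is to find a cycle $F$ in $S$ through the huge node $L$ that visits at least one other node, and use it to merge the visited components into a single huge component. This follows the same high-level pattern as \Cref{lem:gluing:trivial-segment} but exploits the extra flexibility from $S$ being 2-vertex-connected with $|V(S)|\geq 3$, which guarantees the existence of cycles in $S$ through any prescribed pair of nodes. For each visited cycle-type component $C_{A_j}$ (a $\cFour,\ldots,\cEight$) where $F$ enters and exits at two distinct vertices, I would replace $C_{A_j}$ by the Hamiltonian path between those two vertices in $C_{A_j}$, which exists since any $\mathcal C_i$ admits such a path between any two distinct vertices. For large components or rich vertices visited by $F$, no such replacement is needed: a large component already carries credit $2$, and a rich vertex attains the required degree $2$ from the two $F$-edges and pays for them with its credit $2(\nicefrac{5}{4}-\delta)$. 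The resulting $H'$ is bridgeless (only cycles are added to a bridgeless $H$), weakly canonical (the merged components become one huge component containing $L$ while the remaining components are untouched), and has strictly fewer components than $H$.

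The cost accounting is as follows. Writing $F=(L,A_{i_1},\ldots,A_{i_k},L)$, we add $|F|=k+1$ edges and save exactly one edge for each of the $j$ cycle-type components where the Hamiltonian replacement succeeds, giving $|H'|-|H|\leq (k+1)-j$. The credit change from merging $L$ with $A_{i_1},\ldots,A_{i_k}$ into one large component is $-\sum_{\ell=1}^{k}\credit(A_{i_\ell})$, since the merged component is huge and therefore receives credit $2$, which cancels the original credit of $L$. Plugging in the credit values from \Cref{def:initial_credit} --- namely $\credit(\cFour)=1$, $\credit(\mathcal C_i)=(\nicefrac14-\delta)i$ for $5\leq i\leq 8$, $\credit(\text{large})=2$, and $\credit(\text{rich})=2(\nicefrac{5}{4}-\delta)$ --- the total cost change is at most $(k+1)-j-\sum_{\ell}\credit(A_{i_\ell})\leq 1-k\leq 0$ whenever $k\geq 1$ and the Hamiltonian trick succeeds for every visited cycle-type component, with strict inequality as soon as any visited component carries credit strictly greater than $1$.

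The central difficulty is therefore to choose $F$ so that, for every visited cycle-type component $C_{A_j}$, the two external $F$-edges hit distinct vertices of $C_{A_j}$. I plan to handle this by a case analysis analogous to the one in \Cref{lem:gluing:trivial-segment}: first prefer a cycle through $L$ that visits a large or huge non-$L$ node of $S$, which bypasses the distinctness requirement at that node; otherwise all non-$L$ nodes of $S$ are $\cFour$--$\cEight$'s or rich vertices, and I would apply \Cref{lem:3-matching} (or \Cref{lem:4-matching} when the two sides are large enough) between a would-be offending component and the remainder of $V(G)$ to ensure that at least three independent external edges are available, so that $F$ can be rerouted within $S$ to use two distinct entry and exit vertices at that component. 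When such a local reroute is obstructed by the structure of $\hG_H[S]$ itself, I would fall back on a two-cycle construction as in \Cref{lem:gluing:trivial-segment}, finding two cycles sharing a common node whose joint augmentation, together with removing a controlled set of edges from the visited components, still yields a weakly canonical bridgeless $H'$ with $\cost(H')\leq\cost(H)$. The hardest subcases are those where the cycle is forced through $\cFour$'s (tightest credit, exactly $1$) or rich vertices (whose degree in $H'$ must be exactly $2$ and whose credit pays for exactly two edges), and this is precisely where the combination of 2-vertex-connectivity of $S$ with the matching lemmas is indispensable.
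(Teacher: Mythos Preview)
Your plan contains a fundamental error that undermines the entire accounting. You claim that ``any $\mathcal C_i$ admits [a Hamiltonian] path between any two distinct vertices.'' This is false: in a bare cycle $v_1-v_2-\cdots-v_i-v_1$, a Hamiltonian path exists between $u$ and $v$ only when $u$ and $v$ are \emph{adjacent} on the cycle. For instance, in a $\cFour$ there is no Hamiltonian path between opposite vertices $v_1$ and $v_3$. Even working in $G[V(C_A)]$ rather than $C_A$ itself, there is no general guarantee of such a path; the paper's \Cref{lem:cycle-hamiltonian} explicitly \emph{requires} the existence of a Hamiltonian path as a hypothesis, it does not assert it.

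Because of this, your ``central difficulty'' is misidentified. Ensuring that $F$ hits two \emph{distinct} vertices of each visited $\mathcal C_i$ is not enough; you need the entry and exit vertices to be adjacent (or, more generally, to admit a Hamiltonian path in $G[V(C_A)]$). Without the edge-saving trick, your inequality $(k+1)-j-\sum_\ell \credit(A_{i_\ell})\le 0$ fails badly: for example, if every visited component is a $\cFour$ with non-adjacent entry/exit, then $j=0$ and $\sum_\ell \credit = k$, giving cost change $+1$ regardless of $k$. The paper addresses exactly this obstacle: it first dispatches rich vertices and $\cFour$'s in $S$ via \Cref{lem:cycle-through-cfour} (which \emph{guarantees} a cycle hitting adjacent vertices of a chosen $\cFour$), then shows that if no $\cFour$ is present and some cycle through $L$ has length $\ge 6$, the credits of $\mathcal C_{\ge 5}$'s (each at least $5(\tfrac14-\delta)$) suffice without any shortcutting. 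The remaining and hardest part is when the longest cycle through $L$ has length $3$, $4$, or $5$ and all other components are $\mathcal C_{\ge 5}$'s, which the paper resolves by a substantial case analysis using \Cref{lem:L-C5}, \Cref{lem:L-C6}, \Cref{corr:Lp-C6}, and the local matching lemmas. Your fallback sketch (``reroute via the $3$-matching lemma,'' ``two-cycle construction as in \Cref{lem:gluing:trivial-segment}'') does not engage with any of this structure and, since it is aimed at achieving distinctness rather than adjacency, would not close the gap even if carried out.
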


Here, we first use a lemma from~\cite{BGGHJL25}, which essentially says that if $S$ contains the huge component $L$ and another component $C$ that is a $\cFour$, then we can find the desired $H'$.
If there is no $\cFour$, any component in $S$ has credit of at least $5 (\nicefrac{1}{4} - \delta) \geq \nicefrac{6}{5}$.
Now, if there is a cycle $K$ in~$S$ through~$L$ of length at least 6, we %
have that $H' = H \cup K$ is a canonical $2$-edge cover $H'$, $H'$ has fewer components than $H$, and
$|H'| \leq |H| + |K|$ and $\credit(H') \leq \credit(H) + 2 -  2 - (|K|-1) \cdot \nicefrac{6}{5}  \leq \credit(H) + |K|$ for $|K| \geq 6$.
Otherwise, we make a case distinction on the longest length of a cycle in~$S$ through~$L$, which can only be of length 3, 4, or 5. We show that in either case we can find a cycle through $L$ such that we can find the desired $H'$.
This finishes the gluing part and the algorithm for case \FEW.

\subsection{Algorithm for \MANY}\label{sec:many}

Next, we present our algorithm for \MANY, i.e., $\beta > 1 - \frac{1}{1000}$.
We show the following lemma and defer proofs and further details to \Cref{sec:manyC4}.

\begin{restatable}{lemma}{lemmamainMANY}
\label{lem:main:MANY}
Given a structured graph $G$ and a canonical $2$-edge cover $H_0$, in polynomial time we can compute a 2EC spanning subgraph $H$ of $G$ such that
$$|H| \leq \left( \frac{5}{4} - \frac{1}{28} \right) \cdot \beta |H_0| + 5 (1 - \beta)|H_0| + \frac{1}{35} \opt       \ ,
$$
where  $\beta \in [0,1]$ such that the number of edges of $\cFour$'s in $H_0$ is $\beta |H_0|$.
\end{restatable}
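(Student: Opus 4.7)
The plan is to bypass the credit-scheme approach entirely for the \MANY regime, in which $H_0$ consists almost exclusively of $\cFour$'s and therefore offers no slack in a \FEW-style analysis (each $\cFour$ has credit exactly $1$, forcing the ratio to be at least $\nicefrac{5}{4}$). Instead, I would design a structural algorithm that reconfigures a constant fraction of the $\cFour$'s of $H_0$ so as to shave off one edge per reconfigured $\cFour$, thereby breaching the $\nicefrac{5}{4}$ barrier on the $\cFour$-part.

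The key observation is the following. Consider the final 2ECSS $H$, and suppose that for some $\cFour$ $A = abcda$ of $H_0$, the two edges of $H$ connecting $A$ to the rest of the graph land on a pair of \emph{adjacent} vertices $a, b$ of $A$. Then the internal edge $ab$ is redundant: the internal path $a$--$d$--$c$--$b$ together with the two external edges already yields two edge-disjoint paths between any pair of vertices of $A$ and between $A$ and the rest of $H$. Removing $ab$ saves one edge while preserving $2$-edge-connectivity; I call such a $\cFour$ \emph{broken}. Since the targeted contribution from the $\cFour$-part is $\left(\nicefrac{5}{4} - \nicefrac{1}{28}\right)\beta|H_0|$ and the default contribution is $\nicefrac{5}{4} \cdot \beta|H_0|$, I must break at least a $\nicefrac{1}{7}$ fraction of the $k = \beta|H_0|/4$ many $\cFour$'s, i.e., at least $k/7$ of them.

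The main algorithmic step is therefore to extract a set $\mathcal D$ of $\geq k/7$ simultaneously breakable $\cFour$'s in polynomial time. The algorithm then: (i) starts from $H_0$; (ii) adds a set of external edges (edges of $G \setminus H_0$) that, together with $H_0$, forms a $2$-edge-connected spanning subgraph, routed so that, for every $A \in \mathcal D$, the two external edges incident to $A$ attach to an adjacent pair of vertices of $A$; (iii) deletes the corresponding internal edge $ab$ from each $A \in \mathcal D$; and (iv) absorbs the small non-$\cFour$ portion of $H_0$ using a crude bound of $5(1-\beta)|H_0|$ edges, which is affordable since $1-\beta \leq 10^{-3}$. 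To select $\mathcal D$, I would construct an auxiliary graph whose vertices are the $\cFour$'s of $H_0$ and whose edges encode compatible pairs of external connections; combining the matching lemmas for structured graphs (\Cref{lem:3-matching,lem:4-matching}) with the freshly introduced absence of small $\mathcal C_k$-cuts and $4$-vertex cuts (\Cref{def:structured}(4)--(5)) would yield a matching-like substructure of the required size. The additive $\nicefrac{1}{35} \cdot \opt$ term accounts for boundary effects at the $\cFour$/non-$\cFour$ interface and for locally unavoidable conflicts between candidate breaks that cost at most one extra edge per affected $\cFour$.

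The main obstacle is precisely the combinatorial existence argument for the set $\mathcal D$: one must show that every structured graph $G$ admits external edges that form a $2$-edge-connected augmentation of $H_0$ in which at least a $\nicefrac{1}{7}$ fraction of the $\cFour$'s can be broken simultaneously. The subtlety is coupling---one $\cFour$'s break uses external edges that may also be needed by neighbouring $\cFour$'s, and naive maximization may violate $2$-edge-connectivity. The newly excluded $\mathcal C_k$-cuts (for $k \in \{4, \ldots, 8\}$) and $4$-vertex cuts---introduced in this paper precisely to break the $\nicefrac{5}{4}$ barrier---are crucial here: without them, there could be clusters of $\cFour$'s sharing only a small set of external edges, yielding pockets with too few breaks. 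Given those preprocessing guarantees, I would expect the existence of $\mathcal D$ to follow from a flow- or matroid-intersection argument on the auxiliary graph; this is where I expect the bulk of the technical work to be concentrated.
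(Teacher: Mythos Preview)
Your treatment of the $\frac{1}{35}\opt$ term is where the proposal breaks down. Take the instance the paper itself flags: $H_0$ consists of one constant-size huge component $L$ together with $k$ pendant $\cFour$'s, each adjacent in $G$ only to $L$. Here $\beta|H_0| \approx 4k$, and any 2ECSS must spend at least $3$ internal edges per $\cFour$ (a Hamiltonian path on its four vertices) plus $2$ external edges to $L$, so $\opt \geq 5k$. Consequently $|H| \geq \opt \geq 5k > \bigl(\tfrac{5}{4} - \tfrac{1}{28}\bigr) \cdot 4k$ for \emph{every} algorithm; the target $\bigl(\tfrac{5}{4} - \tfrac{1}{28}\bigr)\beta|H_0|$ is information-theoretically unreachable here without the $\tfrac{1}{35}\opt$ term, which in this instance supplies exactly the missing $k/7$. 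So that term is not a boundary correction---on this instance it is the \emph{entire} improvement. Relatedly, your ``default'' of $\tfrac{5}{4}\beta|H_0|$ already presupposes shortcutting essentially every $\cFour$ (this is precisely how the known $\tfrac{5}{4}$-approximation operates); in the pendant instance every $\cFour$ is already broken just to reach $5k$, and there are no additional $k/7$ breaks to harvest on top.

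The paper proceeds differently and directly exploits this obstruction. Using a credit scheme with $\delta = \tfrac{1}{28}$ (each $\cFour$ receives only $1-4\delta$ credit, non-$\cFour$ edges receive a very generous credit of~$4$) together with gluing paths, the new branching technique, and the freshly excluded large $\mathcal C_k$- and $4$-vertex cuts, it first reduces $H$ at cost $\leq \cost(H_0)+2$ to a \emph{bridgeless core-square configuration}: one huge 2EC core $L$ with small disjoint clusters $S_i$ of at most four $\cFour$'s hanging off it. Each $S_i$ is then glued to $L$ optimally (brute force, since $|V(S_i)| \leq 16$). If the optimal gluing of $S_i$ needs at most $5|S_i|-1$ edges, the credit scheme absorbs it outright. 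If it needs at least $5|S_i|$ edges, the default gluing overshoots the credits by $4\delta|S_i|$, and the paper charges this against the alternative lower bound $\opt \geq \sum_i 5|S_i|$; summing gives the additive $\tfrac{4\delta}{5}\opt = \tfrac{1}{35}\opt$. Your proposal has no analogue of this second lower bound on $\opt$ and no mechanism for the regime where $\bigl(\tfrac{5}{4}-\tfrac{1}{28}\bigr)\beta|H_0|$ lies strictly below $\opt$.
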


To prove this lemma, we use the following credit scheme.

\begin{restatable}[credit assignment for the case of many $\cFour$'s]{definition}{definitionMANYcredit}
Let $H$ be a $2$-edge cover of $G$ and let $\delta \coloneq \nicefrac{1}{28}$. We keep the following credit invariant for blocks, components, and bridges:
	\label{def:initial_credit_2}
	\begin{itemize}[nosep]
		\item Each $\mathcal C_4$  receives a credit of $ \credit(C) = 4 \cdot (\nicefrac 1 4 - \delta)$.
		\item Each 2EC component $C$ of $H$ that is a $\mathcal C_i$, for $5 \leq i \leq 8$, receives credit $ \credit(C) = 4i$.
		\item Each 2EC component $C$ of $H$ that contains 9 or more edges receives credit $ \credit(C) = 2$.
		\item Each block $B$ of a complex component $C$ of $H$ receives a credit $\credit(B) =1$.
		\item Each bridge $e$ of a complex component $C$ of $H$ receives credit $\credit(e) = 4$.
		\item Each complex component $C$ of $H$ receives a component credit $\credit(C)=1$.
	\end{itemize}
\end{restatable}

In a nutshell, the credit for edges not contained in $\cFour$'s is very large so that any \emph{short} cycle covering a bridge or involving a $\mathcal{C}_i$ for some $5 \leq i \leq 8$ can be used to transform $H$ into a canonical $2$-edge cover $H'$ with fewer components or bridges such that $\cost(H') \leq \cost(H)$. This enables us to mainly focus on components that are $\cFour$, large, or complex.

Using this credit scheme, it is easy to show that the initial canonical $2$-edge cover satisfies the invariant of the cost of the partial solution that we maintain throughout our algorithm.

\begin{restatable}{lemma}{lemmaMANYinitial}
	\label{lem:manyC4_starting-solution-cost}
	Given a canonical $2$-edge cover $H$, we have $\cost(H) \leq (\nicefrac 54 - \delta) \cdot m_4 + 5 \cdot m_r$, where $m_4$ and $m_r$ are the number of edges of $H$ contained in a $\cFour$ or not contained in a $\cFour$, respectively.
\end{restatable}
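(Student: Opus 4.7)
The plan is to reduce the claim to a per-component credit inequality. Since $|H| = m_4 + m_r$ and $\cost(H) = |H| + \credit(H)$, the lemma is equivalent to
\[
\credit(H) \ \leq\ (\nicefrac{1}{4} - \delta)\, m_4 + 4\, m_r.
\]
I would prove this by summing over components: for each component $C$ of $H$, let $m_4(C)$ and $m_r(C)$ count its edges inside and outside a $\cFour$; then it suffices to verify $\credit(C) \leq (\nicefrac{1}{4} - \delta)\, m_4(C) + 4\, m_r(C)$ locally and sum.

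For 2-edge-connected components, the check is immediate from \cref{def:initial_credit_2}. A $\cFour$ carries credit $4(\nicefrac{1}{4}-\delta)$ against $4$ edges in $m_4$, so equality holds. A $\mathcal{C}_i$ for $5 \leq i \leq 8$ carries credit $4i$ against $i$ edges in $m_r$, again matching exactly. A 2EC component with $\geq 9$ edges (the only other 2EC type allowed by \cref{def:canonicalD2}) carries credit $2 \leq 4 \cdot 9$ against at least $9$ edges in $m_r$, so the bound holds with lots of slack.

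The only case requiring structural input is a complex component $C$ with $b$ blocks and $\ell$ bridges, whose total credit is $1 + b + 4\ell$ (block credits, bridge credits, and the component credit). I would invoke \cref{def:canonicalD2} to argue that every block has at least $5$ edges: non-pendant blocks have $\geq 5$ edges by fiat; nice $\cFive$ pendant blocks have exactly $5$; and pendant blocks with $\geq 6$ vertices have $\geq 6$ edges since any 2EC graph on $n \geq 2$ vertices has $\geq n$ edges. Since no edge of a complex component lies in a $\cFour$, this yields $m_r(C) \geq 5b + \ell$. Moreover $b \geq 1$: every vertex of the 2-edge cover $H$ has degree $\geq 2$, so a complex component cannot be a tree, hence contains a cycle and therefore a block. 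The remaining inequality $1 + b + 4\ell \leq 4(5b + \ell)$ reduces to $1 \leq 19b$, which is true. There is no genuine obstacle; the proof is a bookkeeping calculation, and the only input beyond \cref{def:initial_credit_2} is the canonical structure enforced on the blocks of complex components.
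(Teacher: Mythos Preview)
Your proof is correct and takes essentially the same approach as the paper: both reduce the inequality to $\credit(H) \leq (\nicefrac14 - \delta)\,m_4 + 4\,m_r$ and verify it component by component using the credit scheme of \cref{def:initial_credit_2} together with the lower bound of $5$ edges per block from \cref{def:canonicalD2}. The paper phrases the argument as assigning a per-edge budget and then redistributing to match the credit scheme, whereas you directly bound each component's credit against its edge budget; these are the same calculation presented in opposite directions, and your treatment of the complex-component case is in fact more explicit than the paper's.
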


Our approach for \MANY is different from the previous approach for \FEW. Instead of first covering all bridges and afterwards gluing the different 2EC components together, we do both at the same time.
Therefore, throughout most of the algorithm, we also have complex components.

Unfortunately, we cannot turn the initial 2-edge cover $H_0$ into a 2ECSS $H$ such that  $|H| \leq \left( \nicefrac{5}{4} - \nicefrac{1}{28} \right) \cdot \beta |H_0| + 5 (1 - \beta)|H_0|$, i.e., we never increase the cost of the solution and only bound $H$ in terms of $|H_0|$.
The reason for this is the following:
Consider the case that $H_0$ has one large (but constant size) component $L$ and a large number $k$ of $\cFour$'s, such that each $\cFour$ is only adjacent to $L$.
Then any feasible solution must include at least $5k \approx \nicefrac 5 4 |H_0|$ edges to be connected; hence $\opt \geq 5k$ and therefore we cannot find a solution $H$ such that $|H| \leq \nicefrac{5}{4} \cdot c  \cdot |H_0|$, for some $c < 1$.
However, we can find the following \emph{core-square configuration}.
Then, we use a generalized version of the above improved bound on $\opt$ to obtain our final approximation guarantee of $|H| \leq \left( \nicefrac{5}{4} - \nicefrac{1}{28} \right) \cdot \beta |H_0| + 5 (1 - \beta)|H_0| + \nicefrac{1}{35} \cdot \opt      $.

\begin{restatable}{definition}{definitionMANYcore}
	\label{def:core-square-configuration}
	A canonical $2$-edge cover $H$ is in a \emph{core-square configuration} if
	\begin{itemize}[nosep]
		\item[(i)] there is exactly one component $L$ in $H$ with $|V(L)| \geq 32$ that is either complex or $2$EC,
		\item[(ii)] every component of $H$ except $L$ is a $\cFour$,
		\item[(iii)] $L$ is only contained in segments of $\hG_H$ of size at most $2$, and
		\item[(iv)] if $C \neq L$ is a cut component of $H$, then it is in exactly $2$ segments of $\hG_H$ and one of the segments contains $L$.
				Let $S$ and $S'$ be the $2$ segments of $\hG_H$ containing $C$
				and $L \in S$.
				Then $S'$ contains at most $4$ components and $C$ is the only cut component in $S'$.
	\end{itemize}
	Finally, we say that $H$ is a \emph{bridgeless} core-square configuration if $L$ is $2$-edge connected.
\end{restatable}

\begin{figure}
	\centering
	\begin{subfigure}[t]{0.49\textwidth}
		\centering
        \includegraphics[width=0.9\textwidth]{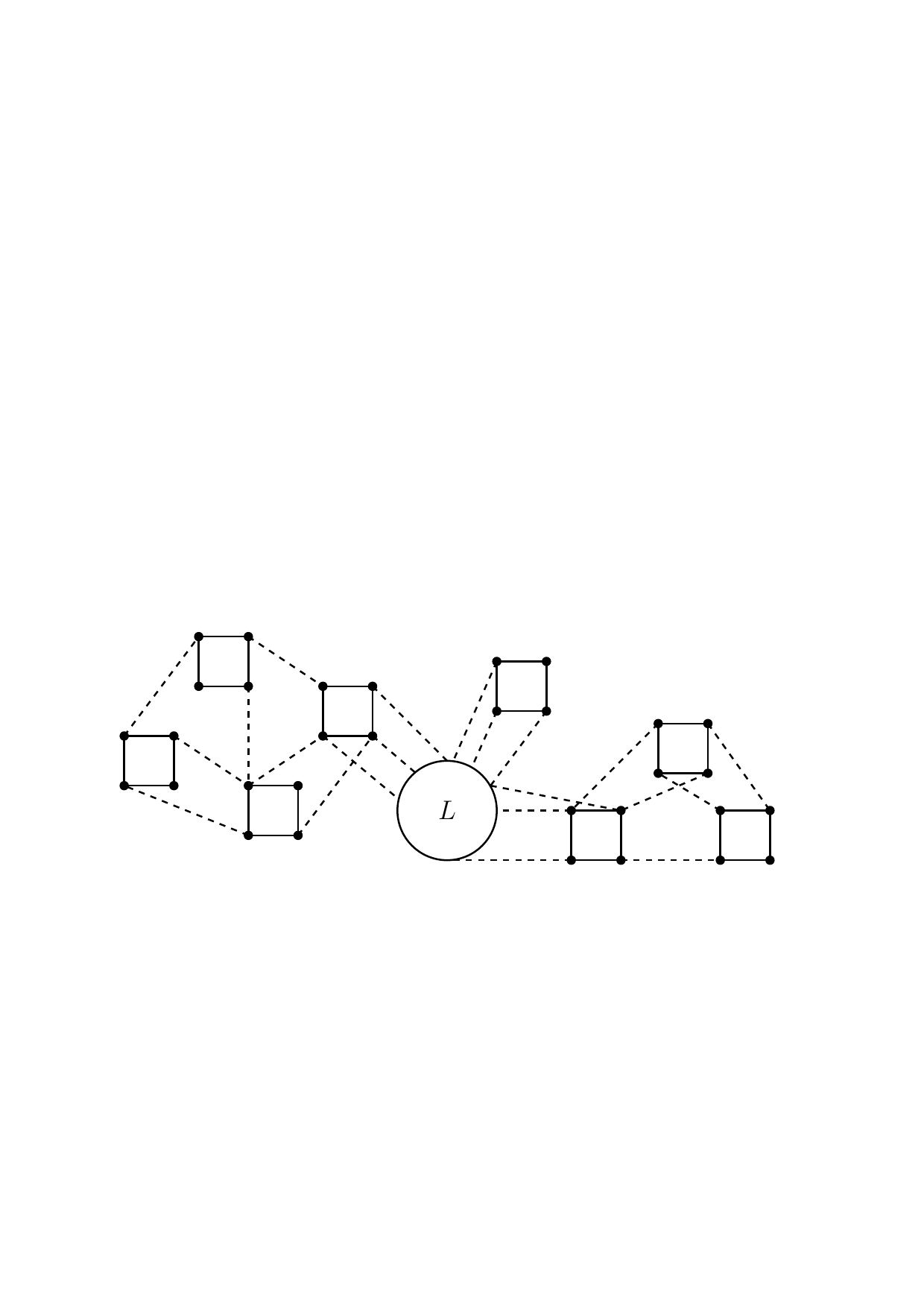}
        \caption{}
        \label{fig:core-square-configuration}
    \end{subfigure}
    \hfill
    \begin{subfigure}[t]{0.49\textwidth}
        \centering
        \includegraphics[width=0.9\textwidth]{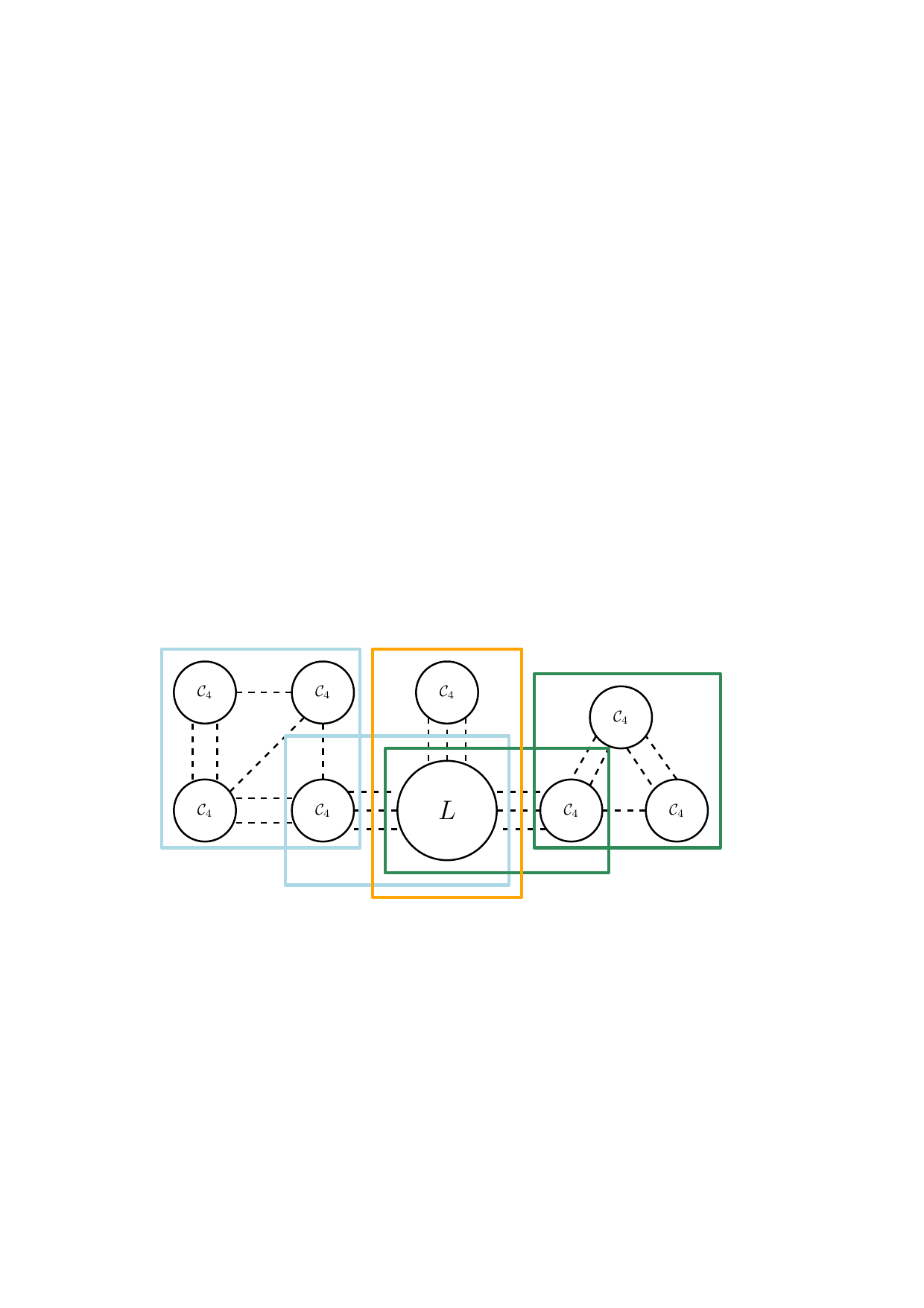}
        \caption{}
        \label{fig:core-square-configuration-component-graph}
    \end{subfigure}
    \caption{An example of a core-square configuration (\Cref{def:core-square-configuration}) in \Cref{fig:core-square-configuration}, and its component graph and segments in \Cref{fig:core-square-configuration-component-graph}.}
\end{figure}

A similar notion of core configuration together with the use of the different lower bound on $\opt$ from above has been used in~\cite{GargGA23improved}.
The main goal of this section is to compute a bridgeless core-square configuration, summarized in the following lemma.

\begin{restatable}{lemma}{lemmaMANYcore}
	\label{lem:manyC4_core-configuration_main}
	Given a canonical $2$-edge cover $H$ of $G$, in polynomial time we can compute a bridgeless core-square configuration $H'$ of $G$ such that $\cost(H') \leq \cost(H) +2$.
\end{restatable}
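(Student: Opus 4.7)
The plan is to iteratively transform $H$ via local polynomial-time surgeries into a bridgeless core-square configuration. The generous credit scheme of \Cref{def:initial_credit_2}---notably each $\mathcal{C}_i$ with $5 \leq i \leq 8$ carries credit $4i \geq 20$, each bridge $4$, each block $1$, each complex component $1$, and each large 2EC component $2$---leaves abundant credit to pay for essentially any constant-size modification involving a non-$\cFour$ component or a bridge. The $+2$ slack need only absorb one bootstrap step to enforce the existence and placement of the core $L$.

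\textbf{Step 1 (Designate and fatten the core $L$).} Since $H$ is canonical, some component has at least $32$ vertices; call it $L$. Using \Cref{lem:3-matching,lem:4-matching}, iteratively merge any additional huge or large component into $L$ via a good cycle in $\hG_H$; each absorbed large 2EC component releases credit $2$, covering the at most two new edges needed per merge. After this step, $L$ is the unique component on $\geq 32$ vertices, and every other non-$\cFour$ component is either some $\mathcal{C}_i$ with $5 \leq i \leq 8$ or a small complex component.

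\textbf{Step 2 (Absorb non-$\cFour$ cycles and stray complex components).} For every remaining $\mathcal{C}_i$ with $5 \leq i \leq 8$, locate a short cycle in $\hG_H$ through its node and either $L$ or an adjacent $\cFour$; the huge credit $4i$ easily pays for all new edges and any Hamiltonian-path rerouting inside the $\mathcal{C}_i$. Likewise, small complex components other than $L$ are glued into $L$ using their bridge (credit $4$), block (credit $1$), and complex-component (credit $1$) reserves.

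\textbf{Step 3 (Enforce (iii) and (iv)).} Compute the segments of $\hG_H$. For (iii), while $L$ lies in a segment $S$ with $|V(S)| \geq 3$, the $2$-vertex-connectivity of $S$ yields a cycle through $L$ and two further nodes; absorb those nodes into $L$ using their credits, arguing---via a case analysis analogous to \Cref{sec:overview:few:gluing}---that no triangle and no forbidden $\mathcal{C}_i$ for $5 \leq i \leq 8$ is created. For (iv), for each cut component $C \neq L$ lying in a segment $S' \not\ni L$, if $|V(S')| > 4$ or $S'$ contains another cut component, perform analogous credit-paid gluings inside $S'$ that strictly reduce its number of nodes. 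Iterating decreases a monovariant (segment count plus cut-component defect), so termination is polynomial.

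\textbf{Step 4 (Make $L$ bridgeless).} If $L$ is still complex, cover each remaining bridge of $L$ using bridge-covering paths: each bridge has credit $4$, so a path of up to four new edges is affordable, and interior blocks contribute their credit $1$ when absorbed. Repeat until $L$ is $2$EC; this step operates entirely inside $L$ and preserves the previously enforced core-square conditions.

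The main obstacle is Step 3: ensuring that enforcing (iii) and (iv) simultaneously does not manufacture a new forbidden component (a triangle or some $\mathcal{C}_i$ with $5 \leq i \leq 8$) and does not break canonicity of the resulting 2-edge cover. This requires a careful case analysis on the length of the cycle selected through $L$, mirroring \Cref{sec:overview:few:gluing} but considerably easier here because of the large credit reserves on all non-$\cFour$ participants. The $+2$ slack is consumed by the single bootstrap case in which promoting an initially complex huge component to the final 2EC core $L$ leaves a credit deficit of at most $2$ that cannot be reimbursed locally, but is paid off once at the outset.
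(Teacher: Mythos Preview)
Your proposal has a genuine gap in Step~3, and it is precisely the crux of the lemma. After your Steps~1 and~2, every component other than $L$ is a $\cFour$, and each $\cFour$ carries credit only $4(\tfrac14 - \delta) = 1 - 4\delta < 1$. So when you try to enforce~(iii)---shrinking a segment $S \ni L$ with $|V(S)|\ge 3$---the ``two further nodes'' on your cycle through $L$ are $\cFour$'s. Adding a $3$-cycle through $L$ and two $\cFour$'s costs $3$ edges but releases only $2(1-4\delta)$ credit (the credit of $L$ is recycled into the new large component), so $\cost$ strictly increases. Your justification that Step~3 is ``considerably easier here because of the large credit reserves on all non-$\cFour$ participants'' is exactly backwards: after Step~2 there \emph{are} no non-$\cFour$ participants besides $L$, and the all-$\cFour$ case is the hard one.

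The paper's proof is built around this obstacle. Its main technical ingredient is \Cref{lem:core:many-components}, handling a segment with $\ge 5$ nodes containing $L$ via the \emph{gluing path} machinery (\Cref{def:nice-path}) together with the new \emph{branching} arguments; the point is to find single or double $(k,j)$-merges that \emph{shortcut} interior $\cFour$'s (removing one edge from each), which is what makes the credit accounting close. For segments of size $3$ or $4$, the paper uses \Cref{lem:31-merge} and \Cref{lem:cycle-useful-for-double-merge} to locate suitable merges; the $+2$ slack is not spent on bootstrapping a complex $L$ but on the single residual case where $L$ lies in exactly one segment of size $3$ or $4$ whose other nodes are non-cut $\cFour$'s, and one simply adds the cycle through all of them. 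Your outline contains none of this shortcutting mechanism, and without it the argument does not go through.
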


In the remainder of this section, we outline the proof of \Cref{lem:manyC4_core-configuration_main}.
Similar to before, we know that our canonical $2$-edge cover $H$ contains a huge component $L$.
We consider different cases on the sizes of segments $S$ of the component graph $\hG_H$ and the position of $L$ within $\hG_H$.
In a nutshell, in most cases we can argue that the size of the current segment $S$ is either at most $4$, there is a segment containing either a short cycle involving a bridge of a complex component or a $\mathcal{C}_i$ for some $5 \leq i \leq 8$, or we can conclude that there is a large $\mathcal{C}_i$ cut, which is a contradiction to $G$ being structured.
If we only have such cases, it is not too difficult to show that we can find $H'$ as required in \Cref{lem:manyC4_core-configuration_main}.

The difficult case is if $L$ is contained in a segment of size at least $5$, which is our main technical contribution and summarized in the following lemma.

\begin{restatable}{lemma}{lemmaMANYgluingpath}
	\label{lem:core:many-components}
	Given a canonical $2$-edge cover $H$ of a structured graph $G$ such that the component graph $\hG_H$ contains a segment $S$ with at least $5$ components where at least one is huge, then we can compute in polynomial time a canonical $2$-edge cover $H'$ with fewer components than $H$ and $\cost(H') \leq \cost(H)$.
\end{restatable}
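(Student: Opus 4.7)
Let $L$ denote the huge component in $S$ (so $|V(L)|\ge 32$). The plan is to augment $H$ along a suitable cycle $K$ in the component graph $\hG_H[S]$ through $L$, and, inside every absorbed $\cFour$ whose two cycle-incident edges land on distinct adjacent vertices, swap the single $\cFour$-edge connecting those two vertices for the Hamiltonian path through the other two vertices of the $\cFour$. This merges at least two components into a single large/complex component, yields a canonical $2$-edge cover $H'$, and strictly decreases the number of components.

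\emph{Cost accounting.} Suppose $K$ has length $\ell \ge 2$ and traverses $L$ together with $\ell-1$ other components, of which $r$ are $\cFour$s satisfying the adjacency property above. The Hamiltonian swap removes one $\cFour$-edge per such $\cFour$, so $|H'|-|H| = \ell - r$. The (new) merged component keeps the credit of $L$, while each absorbed component releases its credit: $1 - 4\delta = \nicefrac{6}{7}$ per $\cFour$, at least $2$ for complex or large $2$EC components, and $4i \in [20, 32]$ for a $\mathcal C_i$ with $5 \le i \le 8$. In the extremal all-$\cFour$ case with $r = \ell - 1$, the net cost change equals $1 - (\ell-1)(1-4\delta) \le -1 + 8\delta < 0$ for $\ell \ge 3$ and $\delta = \nicefrac{1}{28}$; and whenever $K$ absorbs any non-$\cFour$ component, its $\ge 2$-credit (up to $32$) already dominates the $O(|S|)$ new cycle edges. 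Hence it is enough to exhibit a cycle $K$ satisfying one of these two situations.

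\emph{Finding $K$.} In the \emph{easy subcase} where $S\setminus\{L\}$ contains some complex, large $2$EC, or $\mathcal C_i$ component with $5\le i\le 8$, a short cycle through $L$ and that component exists by the $2$-vertex-connectivity of $S$, in analogy with the \FEW-style gluing. In the \emph{main subcase}, every component of $S\setminus\{L\}$ is a $\cFour$, and $|S|\ge 5$ forces at least four such $\cFour$s. Here I plan to combine the $2$-vertex-connectivity of $S$ with the $3$- and $4$-matching lemmas (\Cref{lem:3-matching,lem:4-matching}) between $V(L)$ and the $\cFour$s to generate a rich family of candidate cycles through $L$. A pigeonhole/structural argument on this family should deliver one of the following two alternatives: either some cycle of length $\ge 3$ through $L$ has the adjacency property at every absorbed $\cFour$, or there exists a $\cFour$ $C=\{a,b,c,d\}$ such that all edges of $G$ leaving $V(C)$ into $V(S)\setminus V(C)$ are concentrated either on a single vertex or on two opposite (non-adjacent) vertices of $C$. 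In the latter case, $V(C)$ together with a constant number of additional separators forms a $3$-vertex cut, a $4$-vertex cut, or a $\mathcal C_4$ cut of $G$; because $|V(L)|\ge 32$ on one side and $|S|\ge 5$ ensures enough $\cFour$-vertices on the other, this violates the corresponding thresholds ($\ge 8$, $\ge 24$, or $\ge 16$) of \Cref{def:structured}, a contradiction.

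\emph{Main obstacle.} The delicate step is precisely the main-subcase dichotomy: carrying out the pigeonhole so that it returns either the good cycle or a cut whose two sides are large enough to trip the exact structured-graph thresholds. This is a natural place where the \emph{branching gluing paths} technique mentioned in the introduction likely enters: if no simple cycle through $L$ can avoid problematic $\cFour$s while meeting the adjacency property at all absorbed ones, a branched gluing structure through several $\cFour$s---enabled by the matching-based edge abundance between $V(L)$ and the $\cFour$-vertices---should either succeed in augmenting $H$ or concentrate enough $\cFour$-incidences to expose the forbidden cut. Once $K$ is found, the resulting $H'$ is canonical (the absorbed $\cFour$s become subpaths of a single large block containing $V(L)$, and $L$'s bridges, if any, are untouched), has strictly fewer components than $H$, and the cost accounting above closes the argument.
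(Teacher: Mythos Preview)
Your cost accounting is correct and matches the paper's single/double-merge calculus, and you correctly identify that the hard case is ``all non-$L$ components of $S$ are $\cFour$''. However, the plan has a real gap at the step you yourself flag as the main obstacle, and the dichotomy you propose is not the one that actually works.

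First, the easy subcase is not as easy as you suggest. Two-vertex-connectivity of $S$ guarantees \emph{some} cycle through $L$ and a fixed non-$\cFour$ component, but not a short one; with $|S|\ge 5$ that cycle may be long, and a long cycle through one $\mathcal C_i$ and many un-shortcut $\cFour$'s need not satisfy the merge inequalities. The paper does not handle this case by finding a direct cycle through $L$ at all; the $\mathcal C_i$'s and complex components are absorbed locally along the gluing path (at positions $z_1,z_2,z_3$), where a short heavy cycle of length $\le 3$ is automatic.

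Second, your main-subcase dichotomy---``either some cycle through $L$ shortcuts every absorbed $\cFour$, or some $\cFour$ has all outgoing edges concentrated on one vertex or two opposite vertices''---is neither stated precisely nor justified, and in fact the paper's argument shows it is not the right dichotomy. Even when every $\cFour$ individually has well-spread outgoing edges (so none is ``bad'' in your sense), there need not be a single good cycle; the obstruction is more global. The paper's proof therefore does not search for a cycle directly. Instead it grows a \emph{gluing path} $P=z_k\cdots z_1$ and proves a cascade of local structural lemmas about its last few nodes: a forced back-edge from $v_2^1$ to $\{v_2^3,v_4^3\}$, then from $\{v_1^1,v_3^1\}$ to $\{v_2^4,v_4^4\}$, then that $z_1,z_2,z_3$ have no edges leaving $P$. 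Whenever one of these fails, it produces a concrete single $(k,j)$- or double $(k,j)$-merge. The crux---and the place where the large $4$-vertex cut really enters---is the branch lemma where two gluing paths agree on $z_k,\ldots,z_2$ and split at $z_{1a},z_{1b}$: here one shows that $\{v_2^2,v_4^2,v_2^3,v_4^3\}$ is a $4$-vertex cut separating the branch tips from the rest, and uses the \emph{second} clause of the large-$4$-vertex-cut definition (the one with the edge set $Z$), not just the numerical threshold $\ge 24$. Your plan invokes only the threshold side of the cut definitions and never reaches this configuration, so as written it would not close.
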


To prove \Cref{lem:core:many-components}, we use the definition of a \emph{gluing path}. A similar concept in a weaker form has been used in~\cite{GargGA23improved}. In particular, unlike in \cite{GargGA23improved}, we will later consider so-called \emph{branching} gluing paths.

\begin{restatable}{definition}{definitionMANYgluing}
	\label{def:nice-path}
	Given a canonical $2$-edge cover $H$ of a structured graph $G$ and a segment $S$, a \emph{gluing path} is a simple path $P$ in the component graph $\hG_H$ satisfying the following conditions:
	\begin{itemize}[nosep]
	\item[(i)] Every node of $P$ is a node of $S$.
		\item[(ii)] If $e_1$ and $e_2$ are two edges of $P$ incident to a component $C$ corresponding to a $\cFour$, then the endpoints of these edges in $V(C)$ are adjacent in $C$ (w.r.t.\ the edges in $H$).
		\item[(iii)] If $e_1$ and $e_2$ are two edges of $P$ incident to a complex component $C$, and both endpoints $u_1$ of $e_1$ and $u_2$ of $e_2$ in $V(C)$ are not contained in some block of $C$, then $u_1 \neq u_2$.
	\end{itemize}
\end{restatable}

To illustrate the benefit of a gluing path, consider one of length $k \geq 4$.
If there is some edge $e$ from a node $v$ of $P$ to some node $u$ of $P$ such that their distance on $P$ is at least $4$, then consider the cycle $C$ formed by the subpath of $P$ from $u$ to $v$ and the edge $e$.
Clearly, adding the %
edges of $C$ to $H$ turns the $|C|$ components of $C$ into a single 2EC component $C'$. Let $H'$ be the new $2$-edge cover.
Furthermore, due to the definition of a gluing path, each component of $C$ that is a $\cFour$ except $C_u$ and $C_v$ can be shortcut.  That is, we can remove one edge for each such component such that $C'$ remains $2$-edge connected.
These edges pay for most of the cost of adding $|C|$.
Therefore, it can be easily checked that $H'$ is also a canonical $2$-edge cover such that $\cost(H') \leq \cost(H)$.

Of course, such edges may not always exist.
Hence, our goal is the following:
Given some canonical $2$-edge cover and some gluing path $P$ of length $k$, either we want to make progress by turning $H$ into some canonical $2$-edge cover $H'$ such that $H'$ contains fewer components than $H$ and $\cost(H') \leq \cost(H)$, or we can extend $P$ to a longer gluing path $P'$. %
Since the length of a gluing path can be easily bounded by $|V(G)|$, after polynomially many extension we must find the desired $H'$.
We then show that we can always make progress, which proves \Cref{lem:manyC4_core-configuration_main}. %

An important innovation over \cite{GargGA23improved} are branches. Informally, a gluing path $P$ has a branch $P'$ if $P$ and $P'$ are identical for a large prefix and then ``branch'' out. By distinguishing structural properties of branches, we can prove the existence of more involved cycles $C$ with which we can make progress; sometimes we even add two intersecting cycles. An example is illustrated in \Cref{fig:gluing-path-branch-intro}.

\begin{figure}
	\centering
    \includegraphics[width=0.5\textwidth]{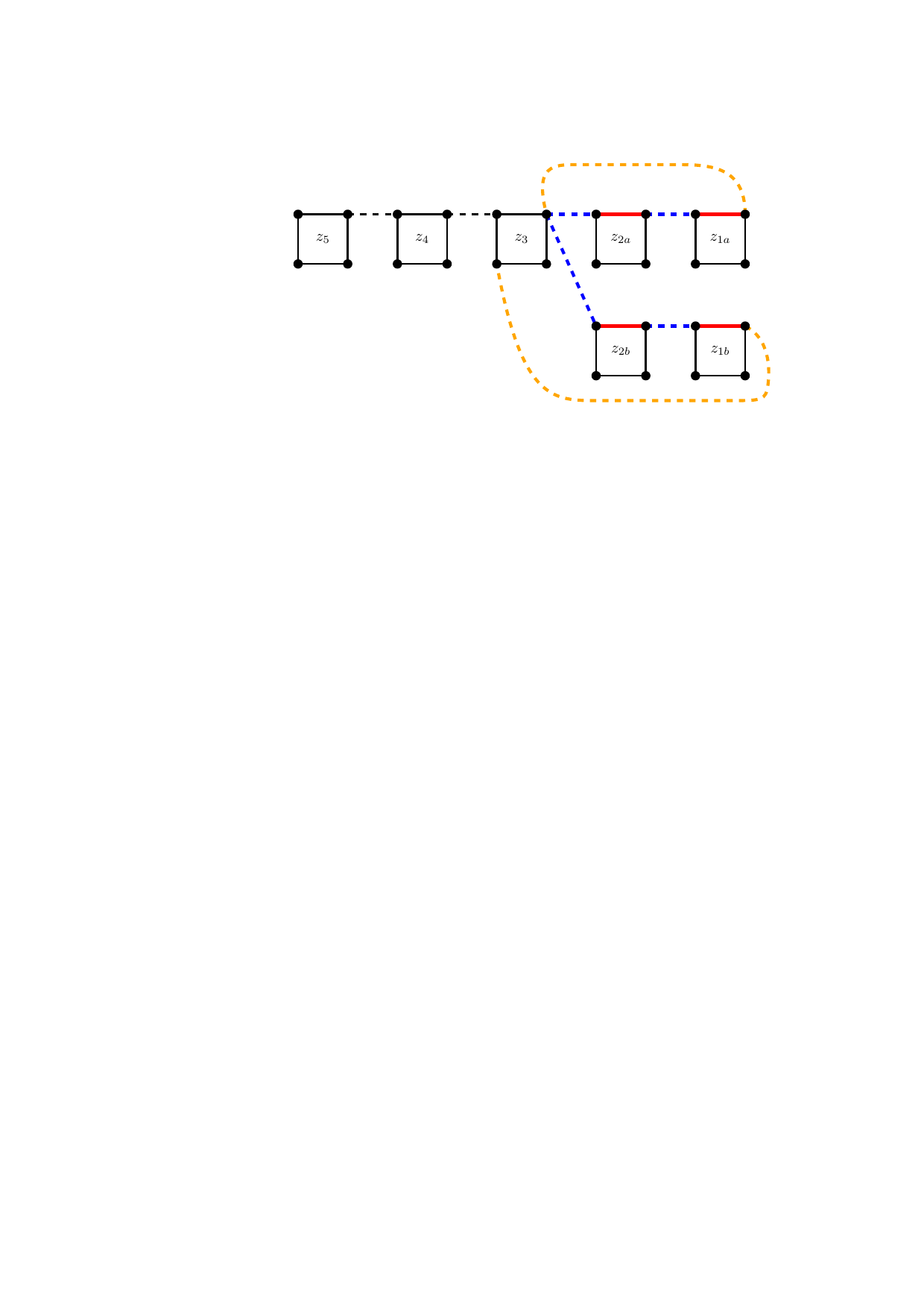}
    \caption{An example of a gluing path $P = z_5 - \ldots - z_{1a}$ and a branch $P' = z_5 - \ldots - z_{1b}$. We show the existence of the orange edges. Then we can add the two orange edges and the four blue path edges, and remove the four red edges to make progress.}
    \label{fig:gluing-path-branch-intro}
\end{figure}

\subsection{Conclusion}
\label{sec:conclusion}

We give a polynomial-time $(\nicefrac54 - \eta)$-approximation algorithm for 2ECSS
for some $\eta \geq 10^{-6}$, breaching the natural barrier of $\nicefrac54$ that has been achieved very recently~\cite{BGGHJL25,BGJ24ecss,GHL24}.
Our main contribution is the introduction and novel combination of techniques that allow
approximation factors below $\nicefrac54$.
We focused on a clean and qualitative exposition, and did not optimize the approximation ratio much; we
believe that $\eta \geq 10^{-4}$ should be easily possible by using more fine-grained constants.
Furthermore, we show a more substantial improvement to $1.24+\eps$ for every $\eps > 0$ conditioned on the existence of a polynomial-time algorithm
that computes a minimum $\{3,4\}$-cycle-free $2$-edge cover on structured graphs.
The complexity status of this problem is open. In the light of our result, finding an optimal polynomial-time algorithm (or a good approximation algorithm) for this problem seems to be an interesting open question with strong implications.
Other future directions that build up on our work are better approximation algorithms for (special cases of) Forest Augmentation and graphic TSP.
Since current lower bounds on the approximation ratio are close to 1, one can only achieve tight or almost tight approximation ratios for 2ECSS if we improve upon these lower bounds. This is another promising direction for future work.

\section{Algorithm for \FEW}\label{sec:few:main}

This section is dedicated to the proof of \Cref{lem:main:FEW}, which we restate for convenience.

\lemmamainFEW*

To prove the lemma, we first transform the initial canonical $2$-edge cover $H_0$ into a bridgeless canonical $2$-edge cover. This step is done in \Cref{sec:bridge-covering}.
Afterwards, we glue the different 2EC components together into a single 2EC component. This step is done in \Cref{sec:gluing}.

\subsection{Bridge Covering}
\label{sec:bridge-covering}

The key ingredient for covering the bridges is finding good bridge-covering paths, which we have already introduced in \Cref{sec:few}. We restate the credit scheme and the loan invariant and give some more details here.

\definitionFEWcredit*

\definitionFEWcolorloan*

One can check that $H_0$ does not satisfy \cref{def:loan_invariant}(iii) if there is a complex component $C$ consisting of exactly two blocks (which are nice $\cFive$'s) and exactly two bridges.
For this case, to satisfy \cref{def:loan_invariant}(iii), we slightly adapt $H_0$ as follows.
Whenever there is such a complex component, we remove the two bridges and treat their intersection vertex as a component; we will call it a \emph{rich} vertex.
Since we remove two edges from $H_0$ and each bridge had a credit of $\nicefrac14 - \delta$, we give each rich vertex a credit of $2(1 + \nicefrac 14 - \delta)$.
For ease of presentation, we still call the resulting solution $H_0$, which is now a weakly canonical $2$-edge cover without the specific type of components as mentioned above.

We remark that each component of the initial (canonical) $2$-edge cover $H_0$ that is not 2EC contains at least $12$ vertices, and we only possibly merge together components in further stages of the construction.
As a consequence, the new solution is also canonical.

Recall the definition of a \emph{bridge-covering path}.
A bridge-covering path $P_C$ is any path in $G_C \setminus E(T_C)$ with its distinct endpoints $u$ and $v$ in $T_C$, and the remaining internal vertices outside $T_C$.
Notice that $P_C$ might consist of a single edge, possibly parallel to some edge in $E(T_C)$.
Augmenting $H$ along $P_C$ means adding the edges of $P_C$ to $H$ to obtain a $2$-edge cover
$H'$.
Notice that $H'$ has fewer bridges than $H$: all the bridges of $H$ along the $u$-$v$ path in $T_C$ are no bridges in $H'$ (they are covered), and the bridges of $H'$ are a subset of the bridges of $H$.
If we are able to show that we can find %
a bridge-covering path such that the resulting solution satisfies $\cost(H')-\loan(H') \leq \cost(H)-\loan(H)$, we are done (cf.\ \cref{lem: bridgeCover-main}).

Hence, it remains to analyze $\cost(H')-\loan(H')$.
Let $P_C$ be a bridge-covering path of
$T_C$ from $u$ to $v$. We consider the following quantities.
\begin{itemize}
	\item $br$ is the distance between $u$ and $v$ in $T_C$ (i.e., the path contains $br$ many bridges).
	\item $bl$ is the number of blocks on a path in $T_C$ between $u$ and $v$ (including $u$ and $v$)
\end{itemize}

Then the number of edges w.r.t.\ $H$ grows by $|E(P_C )|$.
The number of credits w.r.t.\ $H$ decreases by at least $\cre br + bl + |E(P_C)| - 1$ since we remove $br$ bridges, $bl$ blocks, and $|E(P_C)| - 1$ components (each one having at least one credit).
However, the number of credits also grows by 1 since we create a new block $B'$,  which needs $1$ credit, or a new 2EC component $C'$, which needs $1$ additional credit w.r.t.\ the credit of $C$.
Altogether $\cost(H) - \cost(H') \geq \cre br + bl - 2$.
That is, we can afford to buy the extra edges on $P_C$ using the credits if $\cre br + bl - 2 \geq 0$.

\begin{definition}[Cheap and Expensive Bridge-Covering Paths]
	We say that a bridge-covering path $P_C$ is \emph{cheap} if $\cre br + bl - 2 \geq 0$, and \emph{expensive} otherwise.
\end{definition}

In particular, $P_C$ is cheap if it involves at least $2$ block nodes or $1$ block node and at least $5$
bridges.
Notice that a bridge-covering path, if at least one such path exists, can be computed in polynomial time; specifically, one can use breadth-first search after truncating $T_C$.
Observe that in the resulting solution $H'$, \cref{def:loan_invariant}(i) is maintained.
However, to maintain \cref{def:loan_invariant}(iii), we have to carefully use a bridge-covering path that involves only $1$ block.
For example, if there are only $6$ bridges in color $f$ and $\loan_f(H)$ is larger than the credits of one bridge, then adding such a bridge-covering path would violate \cref{def:loan_invariant}(iii).

Now we show if such a path exists that the resulting solution $H'$ satisfies $\cost(H') \leq \cost(H)$,
we can adapt the loans so that $\cost(H')-\loan(H') \leq \cost(H)-\loan(H)$.
A bride-covering path may cover some bridge $e$ in another color.
If $\col(e)$ has positive loans, we use the credits to pay (partially) for the loans since the credits are not used for buying extra edges on the path.
In this way, both $\credit(H')$ and $\loan(H')$ decrease by $\cre$.
So the value of $\cost(H')-\loan(H')$ is unchanged,
and \cref{def:loan_invariant}(iii) is maintained.
Another case is that  there exist $br'<br$ that $\cre br' + bl -2 \geq 0$, i.e., we do not need all the credits from the covered bridges to afford the extra bought edges.
In this case, we let $\loan(H') \coloneq \max\{\loan(H)-(br-br')\cre, 0\}$, and we have
\begin{align*}
	(\cost(H)-\cost(H'))-(\loan(H)-\loan(H')) &= \left( \frac14 - \delta \right) br + bl - 2  - \left( \frac14 - \delta \right) (br-br') \\
	&= \left( \frac14 - \delta \right) br' +bl -2 \geq 0 \ ,
\end{align*}

which implies $\cost(H')-\loan(H') \leq \cost(H)-\loan(H)$.
Further, if $bl=2$, then $br'=0$ and \cref{def:loan_invariant}(iii) is maintained by the following arguments since the credits and the loans decrease by the same amount.
If $bl=1$ and $br \geq 5$ ($br'=5$), our algorithm will ensure to add such a bridge-covering path only if $(k-5)\cre \geq \loan_f(H)$, where $k$ is the number of bridges in color $f$ in $H$.
In the resulting solution $H'$, we have $k-br$ bridges in color $f$.
So $\loan_f(H') = \loan_f(H)-(br-5)\cre \leq (k-5) \cre - (br-5)\cre \leq (k-br)\cre$, which satisfies \cref{def:loan_invariant}(iii).

To argue \cref{def:loan_invariant}(ii), we fix any deficient color $f$ in a component $C$ of $H$ and let $F$ (resp. $F'$) be the set of edges in color $f$ in $H$ (resp. $H'$).
Notice that $T_C'(F')$ can be obtained from $T_C(F)$ by contracting a connected subtree $T' \subseteq T_C(F)$.
Hence, $T_C'(F')$ has fewer or equal number of leaves than $T_C(F)$.
Since $\loan_f(H') \leq \loan_f(H)$, the conditions in \cref{def:loan_invariant}(ii) are satisfied.

\paragraph*{Main lemmas.}

Let $H_0$ be a canonical 2-edge cover of $G$ (\cref{def:canonicalD2}).
We show that we can assign credits to elements of $H_0$ as in \cref{def:initial_credit}.

\fewBridgeCoveringInitial*

\begin{proof}%
Consider the following process.
Initially, we give $\nicefrac14 - \delta$ credits to each edge that is not in a $\cFour$ of $H_0$, and give $\nicefrac{1}{4}$ credits to each edge in $\cFour$s.
Hence we initially give $\left(\frac{1}{4} - (1-\beta)\delta\right) |H_0|$ credits in total.
We then redistribute the credits over other elements (components, blocks, bridges) of $H_0$ to satisfy the credit scheme in \Cref{def:initial_credit}.
Each edge that is in a 2EC component gives its credits to the component it is in, which satisfies the credit scheme for non-complex components.
Consider a complex component $C$ of $H_0$.
Each bridge keeps its credits.
We perform a case-distinction on the number of blocks in $C$ and show how to redistribute the credits of non-bridge edges to components and blocks, and how to take loans.

\begin{description}
	\item[$C$ has at least $5$ blocks.]
	      Each block has at least $5$ edges and hence at least $5 \cre$ credits in total.
	      We use the credits of the edges in the blocks to pay for the block credits.
	      Then for each block, we still have $(\frac{1}{4}-5\delta)$ unused credits.
	      In total, we have at least $\frac{5}{4}-25\delta \geq 1$ unused credits (using $\delta \leq \frac{1}{100}$) over all the blocks.
	      We use the unused credits to pay for the component credit.
	\item[$C$ has exactly $4$ blocks.]
	      Let the total number of edges in these blocks be $k$.
	      Hence we have $k \cre$ credits from these edges.
	      Note that $k\geq 20$ since each block has at least $5$ edges.
	      We need $4$ credits for blocks and $1$ credit for the component.
	      If $k \geq 21$, then $k\cre \geq \frac{21}{4}-21\delta \geq 5$ (using $\delta \leq \frac{1}{84}$), we are done.
	      If $k = 20$, we have $20 \cre = 5-20\delta$ credits.
	      But we need $5$ credits to pay for the $4$ blocks and $1$ credit for the component.
	      Hence we have to take a loan of $20 \delta$ credits.
	      Note that the credits from one bridge are enough to pay back the loan.
		  We will create a new color $f$, mark all the bridges in this component with color $f$, and set $\loan_f(H_0)\coloneq 20\delta$.
	\item[$C$ has exactly $3$ blocks.]
	      Let the total number of edges in these blocks be $k$.
	      Hence we have $k \cre$ credits from these edges.
	      Note that $k\geq 15$ since each block has at least $5$ edges.
	      We need $3$ credits for blocks and $1$ credit for the component.
	      If $k \geq 17$, then $k\cre \geq \frac{17}{4}-17\delta \geq 4$ (using $\delta \leq \frac{1}{68}$), we are done.
	      If $k = 16$, we have $16 \cre = 4-16\delta$ credits.
	      But we need $4$ credits to pay for the $3$ blocks and $1$ credit for the component.
	      Hence we have to take a loan of $16 \delta$ credits.
	      If $k = 15$, we have $15 \cre = \frac{15}{4}-15\delta$ credits.
	      Hence we have to take a loan of $\frac{1}{4}+15 \delta$ credits.
	      Note that the credits from $2$ bridges are enough to pay back the loan.
		  We will create a new color $f$, mark all the bridges in this component with color $f$, and set $\loan_f(H_0) = 16 \delta$ if $k=16$ and $\loan_f(H_0) = \frac{1}{4}+15 \delta$ if $k=15$.
	\item[$C$ has exactly $2$ blocks.] Let the total number of edges in these blocks be $k$.
	      Hence we have $k \cre$ credits from these edges.
	      Note that $k\geq 10$ since each block has at least $5$ edges.
	      We need $2$ credits for blocks and $1$ credit for the component.
	      If $k \geq 13$, then $k\cre \geq \frac{13}{4}-13\delta \geq 3$ (using $\delta \leq \frac{1}{52}$), we are done.
	      If $10 \leq k \leq 12$,
	      we have to take a loan of $3- k\cre = \frac{12-k}{4}+k\delta \leq \frac{1}{2}+10\delta \leq 3 \cre$ credits.
	      Note that the credits from at most $3$ bridges are enough to pay back the loan.
		  We will create a new color $f$, mark all the bridges in this component with color $f$, and set $\loan_f(H_0)\coloneq \frac{1}{2}+10\delta $.
		  Note that we need to satisfy the loan invariant that the total credits of bridges in color $f$ must be at least $\loan_f(H_0)$.
		  This is violated if and only if if $C$ has exactly $2$ blocks with both of them being $\cFive$, and has at most $2$ bridges.
	      If this is the case, recall that we treat $C$ differently.
	      Note that $C$ has exactly $2$ edges as otherwise we can remove the only bridge to obtain a triangle-free $2$-edge cover with fewer edges.
	      We remove the two bridges and treat their intersection vertex as a component.
	      We call such a vertex a \emph{rich} vertex or component.
	      Further, we give the credits of the two edges to the rich vertex so that it has $2(\frac{5}{4}-\delta)$ credits.
\end{description}
By the previous discussion, for each color $f$, $\loan_f(H_0) \leq \frac{1}{2}+10\delta$ and the credits of at most $3$ bridges are enough to pay back the loans.
Further, if the credits on $2$ bridges are not enough, it must be the case where the corresponding complex component in $H_0$ has exactly two blocks and each of them are $\cFive$.
Otherwise, we have $\loan_f(H_0) \leq \frac{1}{4}+11 \delta$.
\end{proof}

We use the notations defined in \cref{sec:few}.
We say that a vertex $v \in V(T_C) \setminus \{u \}$ is \emph{reachable} from $u \in V(T_C)$ if there exists a bridge-covering path between $v$ and $u$.
Let $R(W)$ be the vertices in $V(T_C) \setminus W$ reachable from some vertex in $W \subseteq V(T_C)$, and let us use $R(u) \coloneq R(\{u\})$ for $u \in V(T_C)$.
Notice that $v \in R(u)$ if and only if $u \in R(v)$.
We first show the following useful lemma.

\begin{lemma}[Lemma 29 in the full version of~\cite{BGGHJL25}]
	\label{lem:bridge-covering-helper1}
	Let $e = xy \in E(T_C)$ and let $X_C$ and $Y_C$ be the two sets of vertices of the two trees obtained from $T_C$ after removing the edge $e$, where $x \in X_C$ and $y \in Y_C$.
	Then $R(X_C)$ contains a block node or $R(X_C) \setminus \{y\}$ contains at least 2 lonely nodes.
\end{lemma}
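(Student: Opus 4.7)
The plan is to argue by contradiction. Suppose $R(X_C)$ contains no block node and $|R(X_C)\setminus\{y\}|\le 1$; any non-block node in $R(X_C)$ is necessarily lonely, so $R(X_C)\subseteq\{y,z\}$ where $z$ (if present) is a lonely node of $Y_C\setminus\{y\}$. The goal is to derive a contradiction by exhibiting either a cut vertex or a non-isolating $2$-vertex cut of $G$, both of which are ruled out for structured graphs.

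The first step is to pin down a subset $V_X\subseteq V(G)$ whose vertex boundary in $G$ is contained in $\{y_\ast,z\}$, where $y_\ast$ is the endpoint of the bridge $e$ lying inside $V(y)$. To this end, let $W_X$ consist of those $W$-nodes (with $W:=V(G_C)\setminus V(T_C)$) reachable from $X_C$ in $G_C$ using only non-$T_C$ edges with intermediate vertices in $W$, and set $U:=X_C\cup W_X$. Any edge of $G_C$ leaving $U$ is either $e$ itself or a non-$T_C$ edge from $U$ to some $v\in Y_C$; in the latter case, concatenation with the $U$-internal path yields a bridge-covering path from $X_C$ to $v$, so $v\in R(X_C)$. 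Letting $V_X$ be the union over $U$ of the vertex sets of the contracted nodes, this gives $N_G(V_X)\subseteq\{y_\ast,z\}$. Finally, a leaf argument on $T_C[X_C]$ (leaves of $T_C$ are block nodes, lonely nodes have $T_C$-degree $\ge 2$, and only $x$ loses one unit of degree in passing from $T_C$ to $T_C[X_C]$) forces $X_C$ to contain a pendant block of $C$, so $|V_X|\ge 5$.

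The main case distinction is on whether $R(X_C)\ni z$. If $R(X_C)\subseteq\{y\}$ and $y$ is a block, this already contradicts the assumption. If $y$ is lonely, then either $\{y\}$ is a cut vertex of $G$ (violating $2$-vertex-connectivity) or $V(G)=V_X\cup\{y\}$; the latter forces $Y_C=\{y\}$, making $y$ a leaf of $T_C$ and hence a block, a contradiction. Otherwise $R(X_C)\ni z$, and since both $y_\ast$ (via $e$) and $z$ must lie in $N_G(V_X)$, we have $N_G(V_X)=\{y_\ast,z\}$. Thus $\{y_\ast,z\}$ separates $V_X$ from $V(G)\setminus(V_X\cup\{y_\ast,z\})$ in $G$, forming a $2$-vertex cut; since structured graphs admit no non-isolating $2$-vertex cuts and $|V_X|\ge 5$, the other side must contain at most one vertex.

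I expect the hard part to be this ``almost-isolating'' case. If $y$ is a block, then $V(y)\setminus\{y_\ast\}$ alone contributes at least $3$ vertices to the other side, contradicting isolation; so $y$ must be lonely, whence $y_\ast=y$. If the other side is empty, $V(Y_C)\subseteq\{y,z\}$ forces $Y_C=\{y,z\}$, making $z$ a leaf of $T_C$, contradicting that $z$ is lonely. If the other side consists of a single vertex $u$, then either $u$ lies in a third lonely node $w\in Y_C$---in which case the subtree $T_C[\{y,z,w\}]$ has only $4$ edge-endpoints in total, yet the three lonely nodes require $\deg_{T_C[Y_C]}$ at least $1+2+2=5$ (since $y$ already spends one unit of $T_C$-degree on $e$)---or $u$ belongs to a rich-vertex component in $W\setminus W_X$, which again forces $Y_C=\{y,z\}$ and the same leaf contradiction. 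All subcases yield contradictions, finishing the proof.
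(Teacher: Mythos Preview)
The paper does not prove this lemma; it is cited verbatim as Lemma~29 of the full version of~\cite{BGGHJL25}. So there is nothing to compare against, and I will just assess your argument on its own.

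Your proof is correct in substance and follows the natural line: push the assumed deficiency of $R(X_C)$ into a small vertex boundary for $V_X$ in $G$, and then contradict either $2$-vertex-connectivity or the absence of non-isolating $2$-vertex cuts. The construction of $U=X_C\cup W_X$, the identification $N_G(V_X)\subseteq\{y_\ast,z\}$, the lower bound $|V_X|\ge 5$ via a pendant block in $X_C$, and the degree-sum contradiction on $T_C[Y_C]$ in the ``single leftover vertex'' case are all sound.

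One small gap to patch: in the branch ``$R(X_C)\subseteq\{y\}$ and $y$ is a block'', you assert this ``already contradicts the assumption'', but that only holds when $y\in R(X_C)$; the subcase $R(X_C)=\emptyset$ is not covered by that sentence. It is easy to rule out: $G_C$ is $2$-edge connected (it is obtained from the $2$EC graph $G$ by contractions), and since $y\in V(G_C)\setminus U$ the cut $(U,\,V(G_C)\setminus U)$ must be crossed by at least two edges. One of them is $e$; any other crossing edge is a non-$T_C$ edge landing in $Y_C$ and hence certifies some $v\in R(X_C)$. So $R(X_C)\neq\emptyset$, and then your case ``$y$ a block'' indeed yields an immediate contradiction. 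With this one line added, the proof is complete.
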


Let $T_C(u, v)$ denote the path in $T_C$ between vertices $u$ and $v$.
The following lemmas show that we can make progress with $2$ bridge-covering paths.

\begin{lemma}
	\label{lem:bridge-covering-helper2}
	Let $b$ and $b'$ be two pendant (block) nodes of $T_C$.
	Let $u \in R(b)$ and $u' \in R(b')$ be vertices of $V (T_C) \setminus \{b, b' \}$.
	Suppose that $T_C(b, u)$ and $T_C(b', u')$ both contain some vertex $w$ (possibly $w = u = u'$) and $|E(T_C(b, u)) \cup E(T_C(b', u'))| \geq 5$.
	Assume that the bridges in $E(T_C(b, u)) \cup E(T_C(b', u'))$ are in the same color $f$.
	Then in polynomial time one can find a 2-edge cover $H'$ satisfying the conditions of~\Cref{lem: bridgeCover-main}.
\end{lemma}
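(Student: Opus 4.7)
The plan is to construct $H'$ by augmenting $H$ simultaneously along two bridge-covering paths $P_1$ from $b$ to $u$ and $P_2$ from $b'$ to $u'$, both of which exist by the reachability hypothesis and are computable by BFS on $G_C\setminus E(T_C)$. Let $T^*\coloneq T_C(b,u)\cup T_C(b',u')$, which, because both tree paths contain $w$, is a connected subtree of $T_C$ with $br\geq 5$ bridges and $bl\geq 2$ block nodes (including $b$ and $b'$). Setting $H'\coloneq H\cup E(P_1)\cup E(P_2)$, every bridge of $T^*$ lies on the cycle closed by some $P_i$ and is thus covered in $H'$; therefore $H'$ has strictly fewer bridges than $H$ and remains a weakly canonical $2$-edge cover, since no vertex becomes uncovered and no triangle component can arise (the merged portion contains at least two pre-existing blocks with $\geq 5$ edges each).

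For the cost accounting, let $E^* = E(P_1)\cup E(P_2)$ and $V^* = V^*_1\cup V^*_2$ be the internal vertices of $P_1\cup P_2$; each vertex of $V^*$ is a component of $H$ distinct from $C$ with credit at least $1$. Writing $|E(P_i)| = |V^*_i| + 1$ and applying inclusion–exclusion yields
\[
|E^*| - |V^*| \;=\; 2 + |V^*_1\cap V^*_2| - |E(P_1)\cap E(P_2)| .
\]
By choosing $P_1, P_2$ to minimize crossings — concretely, rerouting $P_2$ through any shared internal vertex so that it follows $P_1$ locally rather than merely touching it — I can ensure $|V^*_1\cap V^*_2|\leq|E(P_1)\cap E(P_2)|$, giving $|E^*| - |V^*|\leq 2$. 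Summing the contributions $+br\cdot\cre$ (freed bridges), $+bl$ (merged blocks), $+|V^*|$ (absorbed other components), $-|E^*|$ (new edges), and $-1$ for the credit of the new merged block, one obtains
\[
\cost(H) - \cost(H') \;\geq\; br\cdot\cre + bl + |V^*| - |E^*| - 1 \;\geq\; 5\cre + 2 - 2 - 1 \;=\; \tfrac{1}{4} - 5\delta ,
\]
which is strictly positive for $\delta\leq\tfrac{1}{100}$ and leaves slack to repay part of $\loan_f$.

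To preserve the loan invariants of \cref{def:loan_invariant}, I use that all $br$ covered bridges are in color $f$ and redistribute the slack to reduce $\loan_f$, setting $\loan_f(H')\coloneq\max\{\loan_f(H) - (br\cdot\cre + bl - 3),\,0\}$ and leaving the other loans unchanged, which yields $\cost(H') - \loan(H')\leq\cost(H) - \loan(H)$. Invariant (i) is preserved trivially; invariant (ii) holds because $T_{C'}(F')$ is obtained from $T_C(F)$ by contracting the connected subtree $T^*$, so it has no more leaves than $T_C(F)$, and $\loan_f$ only decreases; invariant (iii) follows from a short case split on the number of surviving color-$f$ bridges, using the global bound $\loan_f(H)\leq\tfrac{1}{2}+10\delta$ and that each remaining bridge still carries credit $\cre$. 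I expect the main obstacle to be the degenerate case where $w = u = u'$ is a lonely vertex and the two tree paths share only $\{w\}$: then the merged subgraph can have $w$ as a cut vertex, splitting into two new blocks and requiring an extra unit of credit that $5\cre\approx 1.2$ cannot afford. I plan to resolve this by refining the path selection so that the two closed cycles share an edge through one of $w$'s neighbors — exploiting that $w$ has degree $\geq 2$ in $G_C$ and its non-tree incidences can stitch the cycles together — or, failing that, by decomposing the augmentation into two sequential single bridge-covering operations, each cheap enough to be handled by the earlier single-path accounting.
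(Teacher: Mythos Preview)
Your overall architecture is right, but two things go wrong.

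\textbf{The ``degenerate case'' is not a problem.} You worry that when the two tree paths share only a lonely vertex $w$, the merged structure has $w$ as a cut vertex and hence splits into two blocks. But blocks in this paper are maximal $2$-\emph{edge}-connected subgraphs, not $2$-vertex-connected ones. If $S_1$ and $S_2$ are each $2$-edge-connected and share a vertex $w$, then $S_1\cup S_2$ is connected and has no bridge (removing any edge of $S_i$ leaves $S_i$ connected, hence $S_1\cup S_2$ connected). So $S_1\cup S_2$ is a single block, and only one unit of new block credit is needed. Your whole final paragraph of proposed workarounds is unnecessary.

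\textbf{The rerouting step is the actual gap.} Your inclusion--exclusion bound $|E^*|-|V^*|\le 2$ requires $|V^*_1\cap V^*_2|\le |E(P_1)\cap E(P_2)|$, which you try to enforce by ``rerouting $P_2$ through any shared internal vertex so that it follows $P_1$ locally.'' This does not work in general: if $P_1$ and $P_2$ cross transversally at a single internal node $D$ using four distinct edges, no local rerouting preserves both endpoints $b',u'$ of $P_2$ while forcing a shared edge at $D$; swapping suffixes at $D$ just permutes the endpoints. The paper sidesteps this entirely with a clean dichotomy: if $P_{bu}$ and $P_{b'u'}$ share any internal node $D$, then concatenating the $b$--$D$ prefix of one with the $D$--$b'$ prefix of the other yields a single bridge-covering path from $b$ to $b'$, which is cheap because it hits two block nodes ($bl\ge 2$). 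Only in the internally-disjoint case do you add both paths, and then $|E^*|=|P_{bu}|+|P_{b'u'}|$ and $|V^*|=(|P_{bu}|-1)+(|P_{b'u'}|-1)$ exactly, so $|E^*|-|V^*|=2$ with no rerouting needed. Adopt this case split and both of your difficulties disappear; the rest of your credit and loan accounting then goes through essentially as you wrote it (and matches the paper's $\cost(H')\le\cost(H)-(k-5)\cre$ after the simplification $5\cre\ge 1$).
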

\begin{proof}
Let $P_{bu}$ (resp.\ $P_{b'u'}$) be the bridge-covering paths between $b$ and $u$ (resp.\ $b'$ and $u'$).
If the two paths intersect internally, then there exists a bridge-covering path $P$ between $b$ and $b'$, which is cheap.
Hence $H'\coloneq H \cup P$ satisfies the conditions of~\Cref{lem: bridgeCover-main}.
Otherwise, let $H'\coloneq H \cup P_{bu} \cup P_{b'u'}$.
Note that these two bridge-covering paths together cover $k \geq 5$ bridges and merge at least two blocks.
Hence, we have $|H'|= |H|+|P_{bu}|+|P_{b'u'}|$ and $\credit(H') = \credit(H)+1-(|P_{bu}|-1)-(|P_{b'u'}|-1) -2-k\cre  \leq \credit(H)-|P_{bu}|-|P_{b'u'}|-(k-5)\cre$.
So $\cost(H') \leq \cost(H)-(k-5)\cre$, and we can let $\loan_f(H') \coloneq \max \{\loan_f(H) - (k-5)\cre,0 \}$ since these $(k-5)\cre$ credits are not needed for buying the bridge-covering paths.
This proves the lemma.
\end{proof}

\begin{lemma}
	\label{lem:bridge-covering-helper3}
	Let $b$ and $b'$ be two pendant (block) nodes of $T_C$.
	Let $u \in R(b)$ and $u' \in R(b')$ be vertices of $V (T_C) \setminus \{b, b' \}$.
	Suppose that $T_C(b, u)$ and $T_C(b', u')$ intersect on only one edge $e$, that is, $E(T_C(b, u)) \cap E(T_C(b', u')) = \{e\}$.
	Then in polynomial time one can find a 2-edge cover $H'$ satisfying the conditions of \Cref{lem: bridgeCover-main}.
\end{lemma}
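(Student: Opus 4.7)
The plan is to construct $H'$ by exploiting that $P_{bu}$ and $P_{b'u'}$ can be coupled through the shared edge $e$ of $T_C$. First, I would compute concrete bridge-covering paths $P_{bu}$ and $P_{b'u'}$ via breadth-first search in $G_C\setminus E(T_C)$, and then distinguish two cases according to whether these paths share an internal vertex.

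If $P_{bu}$ and $P_{b'u'}$ share an internal vertex $z$, I concatenate the subpaths $P_{bu}[b,z]$ and $P_{b'u'}[z,b']$ and shortcut repeated vertices to obtain a simple bridge-covering path $P_{bb'}$ from $b$ to $b'$ lying entirely in $G_C\setminus E(T_C)$. Because its endpoints are the two pendant block nodes $b$ and $b'$, the path $P_{bb'}$ has $bl \geq 2$, so the cheapness condition $(\nicefrac{1}{4}-\delta)\,br + bl - 2 \geq 0$ is satisfied. Augmenting $H$ along $P_{bb'}$ then yields $H'$ satisfying the conditions of \Cref{lem: bridgeCover-main}, with the loan bookkeeping handled exactly as in the discussion preceding that lemma.

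If $P_{bu}$ and $P_{b'u'}$ are internally vertex-disjoint, I set $H' \coloneq H \cup P_{bu} \cup P_{b'u'}$. The combined subtree $T^* \coloneq T_C(b,u) \cup T_C(b',u')$ has $k \coloneq |E(T_C(b,u))| + |E(T_C(b',u'))| - 1$ edges, all of which become non-bridges in $H'$; moreover, all blocks of $C$ lying on $T^*$ (notably $b$ and $b'$) merge into a single block of the new complex component, and the internal components of both paths are absorbed. The credit bookkeeping from the proof of \Cref{lem:bridge-covering-helper2} gives $\cost(H') - \cost(H) \leq 1 - k\,(\nicefrac{1}{4} - \delta)$, which is at most $0$ whenever $k\geq 5$; in that regime, setting $\loan_f(H') \coloneq \max\{\loan_f(H) - (k-5)(\nicefrac{1}{4}-\delta),\, 0\}$ preserves \Cref{def:loan_invariant} and ensures $\cost(H')-\loan(H')\leq \cost(H)-\loan(H)$.

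The main obstacle is the tight regime $k \in \{3,4\}$, in which both paths reduce to $e$ plus at most one additional incident bridge on each side. Here I exploit the tree structure of $T_C$ and the pendant-block assumption on $b, b'$: since $b, b'$ are leaves of $T_C$, no edge of $T^*$ incident to $b$ or $b'$ equals $e$, and a direct case analysis on whether $u$ and $u'$ are block nodes or lonely nodes of $T_C$ (using that a node of $T_C$ which is a leaf of the induced subtree $T^*$ must be a block) forces $T^*$ to contain at least one additional block node beyond $\{b,b'\}$, yielding $bl \geq 3$. This extra block contributes a further $+1$ saving that makes the cost change non-positive; in borderline configurations I would further invoke the canonical-cover invariants of \Cref{def:canonicalD2} (the nice-$\cFive$ structure of pendant blocks and the minimum-edge count of non-pendant blocks) to extract the extra credit on $T^*$. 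Any remaining slack is absorbed into $\loan_f$ via the same mechanism as in \Cref{lem:bridge-covering-helper2}, so \Cref{def:loan_invariant} is maintained throughout.
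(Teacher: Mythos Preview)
Your first case (internally intersecting paths) matches the paper. The gap is entirely in the disjoint case.

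You set $H' \coloneq H \cup P_{bu} \cup P_{b'u'}$ and then try to show $\cost(H') \leq \cost(H)$, which as you correctly compute requires $k \geq 5$. But the lemma carries no lower bound on $k = |E(T_C(b,u)) \cup E(T_C(b',u'))|$; in the applications (e.g.\ Case~3a of \Cref{bridgeCover: noloan}) the tuple $(b,b',u_{i+1},u_i)$ is invoked with $u,u'$ being adjacent lonely nodes, so $k$ can be as small as $3$. Your attempted rescue for $k\in\{3,4\}$ asserts that a leaf of the induced subtree $T^*$ must be a block node; this is false. The leaves of $T^*$ other than $b,b'$ are $u$ and $u'$, which are arbitrary nodes of $T_C$ and in the relevant uses are explicitly lonely. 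So $bl=2$ is all you get, and your inequality fails.

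The idea you are missing is to \emph{delete} the shared edge $e$: set $H' \coloneq (H \setminus \{e\}) \cup P_{bu} \cup P_{b'u'}$. Because $e$ lies on both tree paths, each of the two (edge-disjoint) cycles $P_{bu} \cup T_C(b,u)$ and $P_{b'u'} \cup T_C(b',u')$ contains $e$, so after removing $e$ its endpoints are still joined by two edge-disjoint paths and no bridge is created. The $-1$ in the edge count exactly cancels the extra $+1$ in your cost bound, yielding $\cost(H') \leq \cost(H) - k(\tfrac14-\delta)$ for every $k \geq 1$, with all $k$ bridge credits available to repay $\loan_f$. No case analysis on small $k$ is needed.
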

\begin{proof}
Let $P_{bu}$ (resp.\ $P_{b'u'}$) be the bridge-covering paths between $b$ and $u$ (resp.\ $b'$ and $u'$).
If the two paths intersect internally, then there exists a bridge-covering path $P$ between $b$ and $b'$, which is cheap.
Hence $H'\coloneq H \cup P$ satisfies the conditions of~\Cref{lem: bridgeCover-main}.
Otherwise, let $H'\coloneq (H\setminus \{e\}) \cup P_{bu} \cup P_{b'u'}$.
Note that these two bridge-covering paths merge at least two blocks, and $k \geq 1$ bridges are covered or removed.
We have $|H'|= |H|+|P_{bu}|+|P_{b'u'}|-1$ and $\credit(H') = \credit(H)+1-(|P_{bu}|-1)-(|P_{b'u'}|-1) -2-k\cre  = \credit(H)-|P_{bu}|-|P_{b'u'}|+1-k\cre$.
So $\cost(H') \leq \cost(H)-k\cre$, and we can let $\loan_f(H') \coloneq \max\{\loan_f(H) - k\cre, 0 \}$ since these $k\cre$ credits are not needed for buying the bridge-covering paths.
This proves the lemma.
\end{proof}

\begin{lemma}
	\label{lem:bridge-covering-helper4}
	Let $b$ be a pendant (block) node of $T_C$ and $b'$ be any other node (not necessarily a block node) of $T_C$.
	Let $u \in R(b)$ and $u' \in R(b')$ be vertices of $V (T_C) \setminus \{b, b' \}$.
	Suppose that $T_C(b, u)$ and $T_C(b', u')$ intersect on only one edge $e$ and $|E(T_C(b, u)) \cup E(T_C(b', u'))| \geq 5$.
	Then in polynomial time one can find a 2-edge cover $H'$ satisfying the conditions of~\Cref{lem: bridgeCover-main}.
\end{lemma}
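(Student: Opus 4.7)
The plan is to follow the proof of \Cref{lem:bridge-covering-helper3} closely, with the key observation that the hypothesis $|E(T_C(b,u)) \cup E(T_C(b',u'))| \geq 5$ compensates for the possible absence of a second block: a bridge-covering path that covers only one block remains cheap as soon as it covers at least five bridges, since $5\cre + 1 - 2 = \tfrac14 - 5\delta \geq 0$ for $\delta \leq \tfrac{1}{20}$.

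First I would compute the witnessing bridge-covering paths $P_{bu}$ and $P_{b'u'}$ in polynomial time. If these paths share some internal vertex $w$ in $G_C$, I splice them at $w$ to obtain combined bridge-covering paths between $b$ and $u'$, between $b'$ and $u$, and between $b$ and $b'$. A short case analysis on which sides of the shared tree edge $e$ the four vertices $b, u, b', u'$ lie shows that at least one of these combined paths is cheap: either it carries a second block, or (since the tree paths involved collectively use $k+1$ edges with $k \geq 5$) a pigeonhole argument forces one of them to cross $e$ and to cover at least five tree bridges together with the block $b$. Augmenting $H$ along that cheap path gives the desired $H'$.

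Otherwise $P_{bu}$ and $P_{b'u'}$ are internally vertex-disjoint, and I set $H' \coloneq (H \setminus \{e\}) \cup P_{bu} \cup P_{b'u'}$, exactly as in \Cref{lem:bridge-covering-helper3}. The accounting is identical in form except that only $b$ is guaranteed to be merged as a block: writing $k = |E(T_C(b,u)) \cup E(T_C(b',u'))| \geq 5$ for the number of covered or removed tree bridges,
\[
|H'| = |H| + |P_{bu}| + |P_{b'u'}| - 1, \qquad \credit(H') \leq \credit(H) + 1 - (|P_{bu}|-1) - (|P_{b'u'}|-1) - 1 - k\cre .
\]
Summing yields $\cost(H') - \cost(H) \leq 1 - k\cre \leq 1 - 5\cre = -\tfrac{1}{4} + 5\delta \leq 0$ for $\delta \leq \tfrac{1}{20}$. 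The excess savings of at least $(k-5)\cre$ then let me set $\loan_f(H') \coloneq \max\{\loan_f(H) - (k-5)\cre,\, 0\}$, and a verification identical to the one following the calculation in \Cref{lem:bridge-covering-helper2} shows that all parts of \Cref{def:loan_invariant} are preserved (the leaf count of $T_C(F)$ can only decrease under contraction, and $\loan_f$ weakly decreases).

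The main obstacle relative to \Cref{lem:bridge-covering-helper3} is the intersecting-paths case: there the combined bridge-covering path between $b$ and $b'$ was automatically cheap because both endpoints were blocks, whereas here $b'$ may be lonely and one must exploit the $k \geq 5$ hypothesis via a side-of-$e$ pigeonhole on the four endpoints to locate a combined path that either picks up a second block along the way or accumulates five bridges across $e$. The remainder is routine cost bookkeeping matching that of \Cref{lem:bridge-covering-helper3}.
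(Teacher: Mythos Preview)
Your internally-disjoint case is correct and is exactly the paper's argument: add both paths, delete the shared tree edge $e$, and the bookkeeping with a single guaranteed block plus $k\geq 5$ covered/removed bridges gives $\cost(H')-\cost(H)\leq 1-k\cre\leq -\tfrac14+5\delta\leq 0$. (The paper even writes ``merge at least two blocks'' here, a slip carried over from the preceding lemmas; its own arithmetic uses only one block credit, matching your count.)

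The gap is in the intersecting case: your pigeonhole does not go through. Take $e=xy$ with $b$ a pendant block adjacent to $x$, $b'$ a lonely node adjacent to $x$ on a different branch, and $u,u'$ lonely nodes adjacent to $y$ on two further branches. Then $T_C(b,u)$ and $T_C(b',u')$ each have three edges, share exactly $e$, and their union has five edges, so all hypotheses of the lemma hold. Yet if $P_{bu}$ and $P_{b'u'}$ meet at an internal node $w$, every spliced path starting at $b$ (to $b'$, to $u'$, or the original $P_{bu}$) covers only two or three bridges, and none acquires a second block since $b',u,u'$ are all lonely; so no single combined path is cheap. Adding both paths and deleting $e$ also fails, because the shared internal node $w$ costs one component credit relative to the disjoint analysis, leaving $\cost(H')-\cost(H)\leq 2-k\cre>0$ at $k=5$. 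The paper's one-line treatment (``a bridge-covering path between $b$ and $b'$, which is cheap'') is visibly copied from \Cref{lem:bridge-covering-helper3}, where $b'$ was a block, and is just as broken when $b'$ is lonely. In every actual application of this lemma in the paper it happens that $T_C(b,u')$ contains the entire union (so the $b\to u'$ splice covers $\geq 5$ bridges and is cheap), but that is not implied by the stated hypotheses, and your side-of-$e$ argument does not establish it either.
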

\begin{proof}
Let $P_{bu}$ (resp.\ $P_{b'u'}$) be the bridge-covering paths between $b$ and $u$ (resp.\ $b'$ and $u'$).
If the two paths intersect internally, then there exists a bridge-covering path $P$ between $b$ and $b'$, which is cheap.
Hence $H'\coloneq H \cup P$ satisfies the conditions of~\Cref{lem: bridgeCover-main}.
Otherwise, let $H'\coloneq (H\setminus \{e\}) \cup P_{bu} \cup P_{b'u'}$.
Note that these two bridge-covering paths merge at least two blocks, and $k \geq 5$ bridges are covered or removed.
we have $|H'|= |H|+|P_{bu}|+|P_{b'u'}|-1$ and $\credit(H') = \credit(H)+1-(|P_{bu}|-1)-(|P_{b'u'}|-1) -1-k\cre  = \credit(H)-|P_{bu}|-|P_{b'u'}|+1-(k-5)\cre$.
So $\cost(H') \leq \cost(H)-(k-5)\cre$, and we can let $\loan_f(H') \coloneq \max \{ \loan_f(H) - (k-5)\cre , 0 \}$ since these $(k-5)\cre$ credits are not needed for buying the bridge-covering paths.
This proves the lemma.
\end{proof}

We will frequently use the above lemmas by passing the parameters in order of $(b,b',u,u')$.
The following lemma shows how to cover bridges of non-deficient colors. %
Note that we always maintain~\cref{def:loan_invariant}.
\begin{lemma}\label{bridgeCover: noloan}
	Given a canonical 2-edge cover $H$ of a structured graph $G$ and a complex component $C$ \emph{without edges in deficient colors}, there is a polynomial-time algorithm that outputs a canonical 2-edge cover $H'$ of $G$ satisfying the conditions of~\Cref{lem: bridgeCover-main}.
\end{lemma}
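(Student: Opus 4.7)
The plan is to reduce to the analysis already carried out in the discussion preceding this lemma by finding suitable bridge-covering paths inside and around $C$. Since $\col(C)$ carries no loans by hypothesis, I do not need to preserve any non-trivial loan invariant for bridges of $C$: any credit freed by covering them can be spent either on the path itself or on paying down loans of other colors encountered along the way. So as soon as I identify the right bridge-covering path or pair of paths, the argument before the lemma yields $\cost(H')-\loan(H') \leq \cost(H)-\loan(H)$ and \cref{def:loan_invariant} is preserved automatically.

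Since $C$ is complex, $T_C$ has at least two leaves, and each leaf is a block node; fix two such pendant blocks $b$ and $b'$. First, I would search (in polynomial time by BFS in $G_C \setminus E(T_C)$) for a single \emph{cheap} bridge-covering path, i.e., one involving at least two blocks ($bl \geq 2$) or one that starts at a block and covers at least five bridges ($bl = 1$, $br \geq 5$). If such a path exists, augmenting $H$ along it directly proves the lemma via the accounting already done in the preceding discussion.

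If no single cheap path exists, then in particular no bridge-covering path from $b$ reaches any other block node, and likewise for $b'$. Applying \Cref{lem:bridge-covering-helper1} to the bridge incident to $b$ (resp.\ $b'$) in $T_C$ then guarantees that $R(b)$ contains at least two lonely nodes, and similarly for $R(b')$. The final step is to pick one element of each reachable set so that the tree paths $T_C(b,u)$ and $T_C(b',u')$ are in position to invoke one of the two-path helper lemmas, \Cref{lem:bridge-covering-helper2}, \Cref{lem:bridge-covering-helper3}, or \Cref{lem:bridge-covering-helper4}.

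The main obstacle is the combinatorial case analysis that guarantees that at least one of these configurations is attainable. The idea is to orient the choices along $T_C$: pick $u \in R(b)$ and $u' \in R(b')$ to be the reachable lonely nodes lying closest to the opposite pendant block along $T_C(b,b')$. Then the two tree paths $T_C(b,u)$ and $T_C(b',u')$ either meet at a common vertex with combined length at least $5$ (invoking \Cref{lem:bridge-covering-helper2}), or overlap on exactly one edge (invoking \Cref{lem:bridge-covering-helper3}); the borderline case where one endpoint is pushed into the interior is covered by \Cref{lem:bridge-covering-helper4}. A final routine check confirms that the resulting modification---adding the bridge-covering paths, and where the helper lemma prescribes, deleting a single shared edge---keeps $H'$ canonical, since merging blocks or covering bridges never introduces a forbidden structural feature from \Cref{def:canonicalD2}. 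Every step (reachability computation, path construction, intersection testing) runs in polynomial time.
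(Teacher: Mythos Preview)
Your high-level plan---search for a cheap bridge-covering path, and otherwise combine two paths from pendant blocks via \Cref{lem:bridge-covering-helper2,lem:bridge-covering-helper3,lem:bridge-covering-helper4}---matches the paper's approach. But the heuristic you propose for the second step is wrong, and the paper's actual argument is substantially more delicate than you suggest.

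Concretely, take $T_C$ to be the path $b - u_1 - u_2 - u_3 - b'$ with all $u_i$ lonely and no branches. Then $|R(b)\setminus\{u_1\}|\geq 2$ forces $R(b)\supseteq\{u_2,u_3\}$, and symmetrically $R(b')\supseteq\{u_1,u_2\}$. Your rule ``pick the reachable lonely node closest to the opposite pendant block'' selects $u=u_3$ and $u'=u_1$. The tree paths $T_C(b,u_3)$ and $T_C(b',u_1)$ then overlap on \emph{two} edges ($u_1u_2$ and $u_2u_3$) and their union has only $4$ edges, so none of \Cref{lem:bridge-covering-helper2,lem:bridge-covering-helper3,lem:bridge-covering-helper4} applies. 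The paper handles this case by choosing $(u,u')=(u_3,u_2)$ instead, which gives a single-edge overlap and lets \Cref{lem:bridge-covering-helper3} fire. So the greedy choice is not the right one even in the simplest non-trivial case.

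More seriously, the paper does not argue from two arbitrary pendant blocks at all: it fixes a \emph{longest} path $P'=b,u_1,\ldots,u_\ell$ in $T_C$, partitions the remaining tree vertices into layers $V_1,V_2,V_3,\ldots$ according to where they branch off $P'$, and then runs a multi-page case analysis on where $R(b)$ can land relative to this decomposition (Cases~0,~1,~2,~3a,~3b). In particular, Case~3a---where $R(b)\setminus\{u_1\}=\{u_2,u_3\}$ with $\ell\geq 5$---cannot be resolved by reachability arguments alone: the paper invokes the absence of large $3$-vertex cuts in structured graphs and the ``nice $\cFive$'' structure of pendant blocks to rule out dead ends. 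Your sketch does not anticipate that structural properties of $G$ beyond mere $2$-edge connectivity are needed. The sentence ``a final routine check confirms \ldots'' papers over what is in fact the bulk of the proof.
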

\begin{proof}
	If there exists a cheap bridge-covering path $P_C$ (condition that we can check in
	polynomial time), we simply augment $H$ along $P_C$ hence obtaining the desired $H'$.

	Thus, we next assume that no such path exists.
	Let $P' = b, u_1, \ldots, u_\ell$ in $T_C$ be a longest path in $T_C$ (interpreted as a sequence of vertices).
	Notice that $b$ must be a leaf of $T_C$, hence a block node.
	Let us consider $R(b)$. Since by assumption there is no cheap bridge-covering path, $R(b)$ does not contain any block node.
	Hence $|R(b) \setminus \{u_1\} | \geq 2$
	by \Cref{lem:bridge-covering-helper1} (applied to $xy = bu_1$).
	This implies that $\ell \geq 3$.

	We next distinguish a few subcases depending on $R(b)$.
	Let $V_i$, $i \geq 1$, be the vertices in $V(T_C) \setminus V(P')$ such that their path to $b$ in $T_C$ passes through $u_i$ and not through $u_{i+1}$.
	Notice that $\{ V(P'), V_1, \ldots, V_\ell \}$ is a partition of $V(T_C)$.
	We observe that any vertex in $V_i$ is at distance at most $i$ from $u_i$ in $T_C$ as otherwise $P'$ would not be a longest path in $T_C$.
	We also observe that the leaves of $T_C$ in $V_i$ are block nodes.
	This implies the following observations.
	\begin{claim}
		All the vertices in $V_1$ are pendant block nodes;
		all the vertices in $V_2$ are either block nodes or are non-pendant vertices at distance $1$ from $u_2$ in $T_C$;
		all the vertices in $V_3$ are either block nodes or are non-pendant vertices at distance at most $2$ from $u_3$ in $T_C$.
	\end{claim}
	\textbf{Case 0:} $\ell=4$.
	Note that both $b$ and $u_4$ are pendant block nodes.
	Further, $V_3$ can only contain pendant block nodes as otherwise $P'$ was not a longest path.
	We apply \Cref{lem:bridge-covering-helper1} twice by setting $xy=bu_1$ and $xy=u_{4} u_{3}$ respectively.
	Since $R(b)$ and $R(u_4)$ contain no block node, $R(b)\setminus \{u_1\}$ contains at least two nodes in $\{u_2,u_3\}\cup V_2$ and $R(u_4)\setminus \{u_3\}$ contains at least two nodes in $\{u_1,u_2\}\cup V_2$.
	If $R(b)\setminus \{u_1\}$ contains some node $u$ in $V_2$,
	let $u' \in R(u_4)\setminus \{u_3\}$.
	Then the tuple $(b,u_4,u,u')$ satisfies the conditions of \Cref{lem:bridge-covering-helper2}.
	The case where $R(u_4)\setminus \{u_3\}$ contains some node in $V_2$ is symmetric.
	Hence we can assume $R(b)\setminus \{u_1\} = \{u_2,u_3\}$ and $R(u_4)\setminus \{u_3\} \{u_1,u_2\}$.
	Then the tuple $(b,u_4,u_3,u_2)$ satisfies the conditions of \Cref{lem:bridge-covering-helper3}.

	For the following cases, we can assume that $\ell \geq 5$.

	\textbf{Case 1:} There exists $u \in R(b)$ with $u \notin \{ u_1, u_2, u_3 , u_4 \} \cup V_1 \cup V_2   \cup V_3$. By definition there exists a bridge-covering path between $b$ and $u$ covering at least $5$ bridges, hence cheap.
	This is excluded by the assumption that there is no cheap bridge-covering path.

	\textbf{Case 2:} There exists $u \in R(b)$ with $u \in V_1 \cup V_2  \cup V_3$.

	First consider the case $u \in V_3$.
	There must be some pendant block $b' \in V_3$ such that $u \in T_C(b',u_3)$.
	Consider $R(b')$ and let $w$ be the unique neighbor of $b'$ in $T_C$ (possibly $w=u$).
	By the assumption that there are no cheap bridge-covering paths and \Cref{lem:bridge-covering-helper1} (applied to $xy = b'w$),  $R(b')$ contains at least one node $u' \notin \{b', b, w \}$. The tuple $(b, b', u, u')$ satisfies the conditions of \Cref{lem:bridge-covering-helper2} (specifically, both $T_C(b, u)$ and $T_C(b', u')$ contain $u$), hence we can obtain the desired $H'$.

	Now consider the case where $u \in V_1 \cup V_2$.
	Since $u$ is not a block node, $u$ must be a non-pendant vertex in $V_2$ at distance $1$ from $u_2$.
	Furthermore, $V_2$ must contain at least one pendant block node $b'$ adjacent to $u$.
	Consider $R(b')$.
	By the assumption that there are no cheap bridge-covering paths and \Cref{lem:bridge-covering-helper1} (applied to $xy = b'u$), $|R(b') \setminus \{u \}| \geq 2$.
	Let $u' \in R(b')$.
	The tuple $(b, b', u, u')$ satisfies the conditions of \Cref{lem:bridge-covering-helper2} (specifically, both $T_C(b, u)$ and $T_C(b', u')$ contain $u_2$) unless $u'=u_1$ or $u'=u_2$ ($|E(T_C(b, u)) \cup E(T_C(b', u'))| = 4 < 5$).
	Since $|R(b') \setminus \{u \}| \geq 2$, we can let $u'=u_2$.
	Then the tuple $(b,b',u,u')$ satisfies the conditions of~\Cref{lem:bridge-covering-helper3}.

	\textbf{Case 3:} $R(b) \setminus \{u_1 \} \subseteq \{u_2, u_3,u_4 \}$; so $|R(b) \setminus \{u_1 \} \cap \{u_2, u_3,u_4 \}| \geq 2$.
	Let $j$ ($3\leq j \leq 4$) be the maximum index such that $u_j \in R(b)$.

	We first consider the case that $V_1 \cup V_2 \neq \emptyset$.
	Take any pendant (block) node $b' \in V_1 \cup V_2$, say $b' \in V_i$.
	Let $\ell'$ be the vertex adjacent to $b'$.
	By the assumption that there are no cheap bridge-covering paths and \Cref{lem:bridge-covering-helper1} (applied to $xy = b'\ell'$), $R(b') \setminus \{ \ell' \}$ has cardinality at least 2 and contains only lonely nodes.
	Choose any $u' \in R(b') \setminus \{ \ell' \}$. Notice that $u' \notin V_1$ (but it could be a lonely node in $V_2$ other than $\ell'$).
	Observe that the tuple $(b, b', u_j, u')$ satisfies the conditions of \Cref{lem:bridge-covering-helper2}
	if $j=4$ or $|T_C(b',u')\setminus P'| \geq 2$.
	It remains to consider the case that $j=3$ and $b'$ is a pendant block node adjacent to $u_1$ or $u_2$.
	If $b' \in V_1$, we can let $b'\notin \{u_1,u_2\}$ since $|R(b)\setminus \{u_1\}|\geq 2$. Then the tuple $(b, b', u_2, u')$ satisfies the conditions of \Cref{lem:bridge-covering-helper3}.
	If $b' \in V_2$, then the tuple $(b, b', u_3, u')$ satisfies the conditions of \Cref{lem:bridge-covering-helper2} unless $u'\in \{u_1,u_3\}$.
	If $u'=u_1$ or $u'=u_3$, the tuple $(b, b', u_3, u')$ satisfies the conditions of \Cref{lem:bridge-covering-helper3}.
	In any case, we obtain the desired $H'$.
	In the following, we assume that $V_1 \cup V_2 = \emptyset$.

	\textbf{Case 3a:} $R(b) \setminus \{u_1 \} = \{u_2, u_3 \}$.
	This implies $u_2$ and $u_3$ are lonely nodes as otherwise there is a cheap bridge-covering path.
	By \Cref{lem:bridge-covering-helper1} (applied to $xy = u_2 u_3$) the set $R( \{b, u_1,u_2 \})$ contains a block node or $R( \{b, u_1,u_2 \}) \setminus \{ u_3 \}$ contains at least $2$ lonely nodes.
	Since $R(b) \setminus \{u_1 \}=\{u_2,u_3\}$ and $u_3$ is a lonely node, $R(\{u_1,u_2\})\setminus \{u_3\}$ either contains a block node or at least $2$ lonely nodes.
	If $R(\{u_1,u_2\})\setminus \{u_3\}$ contains some node $w$ at distance at least $2$ from $u_3$, then the tuple $(b,u_1,u_2,w)$ or $(b,u_2,u_3,w)$ satisfies the conditions of \Cref{lem:bridge-covering-helper4} depending on $w \in R(u_1)$ or $w \in R(u_2)$.
	Hence we assume that $R(\{u_1,u_2\})\setminus \{u_3\}$ only contains neighbors of $u_3$.

	If $R(\{u_1,u_2\})\setminus \{u_3\}$ contains some block node $b'$, then there is a bridge-covering path from $b'$ to $u_i$ for some $i \in \{1,2\}$.
	Hence we can apply \Cref{lem:bridge-covering-helper3} using the tuple $(b,b',u_{i+1},u_i)$ and make progress.
	The remaining case is that $R(\{u_1,u_2\})\setminus \{u_3\}$ contains at least two lonely nodes among the neighbors of $u_3$, which implies that $V_3 \neq \emptyset$ and there is a bridge-covering path from some $u_i$ with $i \in \{1,2\}$ to some $u\in V_3$.
	If $u$ is a block node, then we can apply \Cref{lem:bridge-covering-helper3} with $(b,u,u_{i+1},u_i)$.

	If $u$ is not a block node, there must be some block node $b' \in V_3$ such that $u$ lies on the path in $T_C$ from $b'$ to $u_3$.
	Then $R(b')$ must contain some lonely node $u' \notin \{b,u\}$.
	If the distance between $b'$ and $u_3$ is $2$, then $u'$ (which is not a block node) is not in $V_3$.
	Then the tuple $(b,b',u,u')$ satisfies the conditions of~\cref{lem:bridge-covering-helper3}.
	If the distance between $b'$ and $u_3$ is $3$, then either the tuple $(b,b',u,u')$ satisfies the conditions of~\cref{lem:bridge-covering-helper3} or $u \in T_C(u',u_3)$.
	In the latter case, we can instead consider the longest path $T_C(b',u_l)$ and we are in Case~2.

	\textbf{Case 3b:} $\{u_2,u_4\}\subseteq R(b)$ or $\{u_3,u_4\}\subseteq R(b)$.
	We apply \Cref{lem:bridge-covering-helper1} with $xy=u_3u_4$.
	Let $X_C$ be the subtree obtained from $T_C$ by removing $u_3u_4$ with $u_3 \in X_C$ and $u_4 \in Y_C$.
	Then either $R(X_C)$ contains a block node in $Y_C$ or two lonely nodes in $Y_C \setminus \{u_4\}$.
	If $R(X_C)$ contains a block node $b'$ in $Y_C$, it must be in $Y_C \setminus \{u_4\}$ since $u_4 \in R(b)$ and $u_4$ cannot be a block node.
	Then we can apply \Cref{lem:bridge-covering-helper2} using the tuple $(b,b',u_4,u)$, where $u\in X_C$ and $b' \in R(u)$.
	In the following, we assume that $R(X_C)$ contains at least $2$ lonely nodes in $Y_C \setminus \{u_4\}$.

	If there is a bridge-covering path from $u_3\cup V_3$ to $Y_C$, then we can apply \Cref{lem:bridge-covering-helper4} to obtain the desired $H'$.
	Hence we can assume that the bridge-covering paths from $X_C$ to $R(X_C) \setminus \{u_4\}$
	start from either $u_1$ or $u_2$.
	Further, if there is no bridge-covering path from $u_1$ (resp. $u_2$) to $R(X_C) \setminus \{u_4\}$, then $\{u_2,u_4\}$ (resp. $\{u_1,u_4\}$) would yield a non-isolating $2$-vertex-cut, which is impossible since our graph is structured.
	Hence, there must be some bridge-covering path from $u_1$ to $v_1 \in Y_C \setminus \{u_4\}$ and some bridge-covering path from $u_2$ to $v_2 \in Y_C \setminus \{u_4\}$.
	Note that we have either $u_2 \in R(b)$ or $u_3 \in R(b)$.
	If $u_2 \in R(b)$, then the tuple $(b,u_1,u_2,v_1)$ satisfies the conditions of \Cref{lem:bridge-covering-helper4}.
	If $u_3 \in R(b)$, then the tuple $(b,u_2,u_3,v_2)$ satisfies the conditions of \Cref{lem:bridge-covering-helper4}.
\end{proof}

Now consider some deficient color $f$.
Let $F$ be the set of bridges in color $f$.
By \cref{def:loan_invariant}, $F$ is in the same complex component $C$ of $H$ and $F$ induces a connected subtree of $T_C$, which we denote as $T_C(F)$.
Since we care about covering bridges in $T_C(F)$, if a bridge-covering path covers some bridge in $T_C(F)$, we can assume that both its endpoints are in $T_C(F)$.
If not, it covers some bridge $e'$ in color $f'\neq f$ in $T_C \setminus T_C(F)$, we use the credits on $e'$ to pay back $\loan_{f'}(H)$.
This does not change the value of $\cost(H)-\loan(H)$ and even reduces $\loan(H)$.
So in the worst case, we can assume this never happens.
By simulating the proof of \cref{bridgeCover: noloan}, we have the following corollary.

\begin{corollary}\label{corr:five_plus_loan_edges}
	Let $\ell$ be the smallest integer such that $\ell \cre \geq \loan_f(H)$.
	If $T_F(C)$ has at least $5+\ell$ bridges, then in polynomial time one can find a 2-edge cover $H'$ satisfying the conditions of~\Cref{lem: bridgeCover-main}.
\end{corollary}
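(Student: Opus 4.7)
The plan is to simulate the proof of \Cref{bridgeCover: noloan} directly on the connected subtree $T_C(F)$ of $T_C$ induced by the bridges of color $f$. By \Cref{def:loan_invariant}(ii), all leaves of $T_C(F)$ are block nodes, which matches the structural hypothesis used throughout the proof of \Cref{bridgeCover: noloan}. Moreover, as observed in the discussion preceding the corollary, any bridge-covering path with an endpoint outside $T_C(F)$ covers only bridges of some other color $f'$, whose credits can be used to pay back $\loan_{f'}(H)$; this keeps $\cost(H) - \loan(H)$ unchanged (and only decreases $\loan(H)$), so in the worst case we may assume all chosen paths lie entirely within $T_C(F)$.

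Running the case analysis of \Cref{bridgeCover: noloan} restricted to $T_C(F)$ then produces either a single cheap bridge-covering path or a pair of paths obtained through \Cref{lem:bridge-covering-helper2,lem:bridge-covering-helper3,lem:bridge-covering-helper4}. The cost accounting inside those helper lemmas already includes a surplus term of the form $-(k-5)\cre$ (or $-k\cre$), where $k$ is the number of bridges of color $f$ that the construction covers. When $k \geq 5 + \ell$, the surplus is at least $\ell\cre \geq \loan_f(H)$, which is exactly what is needed to set
\[
\loan_f(H') := \max\{\loan_f(H) - (k-5)\cre,\, 0\} \;=\; 0,
\]
while preserving $\cost(H') - \loan(H') \leq \cost(H) - \loan(H)$. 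The rest of \Cref{def:loan_invariant} is then straightforward: part (i) is local and is preserved by any augmentation; part (ii) continues to hold since the residual color-$f$ subtree is smaller (possibly empty) with only block-node leaves; and part (iii) becomes trivial once $\loan_f(H') = 0$.

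The technical core is to verify that in each sub-case of the proof of \Cref{bridgeCover: noloan}---and in particular in the short-diameter cases~3a and~3b---one can actually arrange the path(s) to cover at least $5 + \ell$ bridges of $T_C(F)$. Here I would use that $T_C(F)$ has at most $4$ leaves, all of which are block nodes (\Cref{def:loan_invariant}(ii)), while containing at least $5+\ell$ edges; this forces any two leaves of $T_C(F)$ to be connected by a path spanning a large fraction of the tree, and by choosing the pendant-node endpoints $b, b'$ of the helper lemmas among the leaves of $T_C(F)$ one can ensure the combined paths cover all remaining bridges in $T_C(F)$. The main obstacle will be the case where the helper lemma yields two paths intersecting in only a small sub-path (as in \Cref{lem:bridge-covering-helper3}): there, extra care is needed to verify that choosing endpoints inside $T_C(F)$ does not shrink the total number of covered bridges below $5+\ell$, which is where the hypothesis of having $5+\ell$ bridges (rather than merely $5$) is consumed.
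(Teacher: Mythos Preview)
Your overall strategy---restrict attention to $T_C(F)$ and simulate the proof of \Cref{bridgeCover: noloan}---matches the paper's. However, your accounting is more demanding than necessary, and this is what creates the ``technical core'' you are worried about.

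You aim to cover at least $5+\ell$ bridges in one application, so that the surplus $(k-5)\cre$ wipes out the entire loan and you can set $\loan_f(H')=0$. But the proof of \Cref{bridgeCover: noloan} only guarantees one or two paths covering some $k\ge5$ bridges (or two block nodes); it does \emph{not} guarantee $k\ge5+\ell$. For instance, if $T_C(F)$ is a path of $8$ edges and the cheap path found covers $5$ bridges and one block, you have $k=5$ while $5+\ell$ may be $8$. Your sketched argument (``choose endpoints among the leaves so that the combined paths cover all remaining bridges'') does not follow from anything established in \Cref{bridgeCover: noloan}, and filling it in would essentially require redoing the entire case analysis with a stronger target.

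The paper avoids this entirely: it only needs $k\ge5$. One uses the credits of $5$ of the covered bridges to pay for the bought edges, and the credits of the remaining $k-5$ covered bridges to \emph{partially} repay the loan, setting $\loan_f(H')=\max\{\loan_f(H)-(k-5)\cre,0\}$. Invariant~(iii) is then checked against the \emph{remaining} bridges of color $f$: there are at least $(5+\ell)-k$ of them left, with total credit at least $\bigl((5+\ell)-k\bigr)\cre$, while
\[
\loan_f(H')\le\loan_f(H)-(k-5)\cre\le\ell\cre-(k-5)\cre=\bigl((5+\ell)-k\bigr)\cre.
\]
So the hypothesis of $5+\ell$ bridges is consumed not by forcing a large cover in one step, but by guaranteeing that enough color-$f$ bridges \emph{remain} after a standard $k\ge5$ cover to back the residual loan. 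With this observation the corollary is immediate from \Cref{bridgeCover: noloan}, and the case-by-case verification you flagged as the main obstacle disappears.
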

\begin{proof}
	In the proof of \cref{bridgeCover: noloan}, the key is to find one or two bridge-covering paths that cover at least $5$ bridges (to gain $5\cre \geq 1$ credits).
	We simulate the same process, and whenever we cover $k \geq 5$ bridges,
	we use the credits of $5$ bridges to pay for the bought edges and the credits of the remaining $(k-5)$ bridges to pay back the loans $\loan_f(H)$.
	In this way, we cover bridges as well as maintain \cref{def:loan_invariant}.
\end{proof}

The following lemma implies that we can always make progress if $T_C(F)$ has two block nodes $b$ and $b'$ between which there are only lonely nodes of degree $2$.

\begin{lemma}\label{lem: bridgeCoverPath}
	If there are two block nodes in $T_C(F)$, $b$ and $b'$, such that the path between them contains only lonely nodes of degree $2$, then there is a polynomial-time algorithm that outputs a canonical 2-edge cover $H'$ of $G$ satisfying the conditions of \Cref{lem: bridgeCover-main}.
\end{lemma}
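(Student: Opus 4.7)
The plan is to cover all bridges on the $b$–$b'$ chain in $T_C$ with a single bridge-covering path having endpoints $b$ and $b'$ (or, failing that, with a pair of bridge-covering paths emanating from $b$ and $b'$), and then redirect the credits saved to pay down $\loan_f$. First, observe that since the path in $T_C$ between $b$ and $b'$ consists only of lonely nodes $v_1, \ldots, v_{k-1}$ of degree $2$, these $v_i$ have no branches in $T_C$, and by \cref{def:loan_invariant}(i),(ii), every bridge on this path lies in color $f$ (as $T_C(F)$ is a connected subtree of $T_C$ containing both $b$ and $b'$). Consequently, any bridge-covering path $P_C$ with both endpoints in blocks along this chain has $bl \geq 2$ and covers $br \geq k$ bridges, which by the cost accounting earlier in this section yields $\cost(H') \leq \cost(H) - \cre \cdot br$.

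The first step is to construct a bridge-covering path $P_C$ directly from $b$ to $b'$. To establish its existence, I would apply \Cref{lem:bridge-covering-helper1} to the tree edges $bv_1$ and $v_{k-1}b'$: the side containing $b$ reaches either a block node or at least two lonely nodes off $\{v_1\}$, and similarly from $b'$. Since the $v_i$ have no branches, any reachable node off the chain can be leveraged to extend the reachability set past the chain, and the $3$- and $4$-matching lemmas applied to the bipartition $V_1 \coloneq V(b) \cup V(b') \cup \{v_1, \ldots, v_{k-1}\}$ versus $V_2 \coloneq V(G) \setminus V_1$ guarantee enough non-tree edges from $V_1$ to $V_2$ (both sides being of size at least $11$ because each block has at least $5$ vertices and $G$ is structured) to route a path in $G_C \setminus E(T_C)$ between $b$ and $b'$ through external nodes. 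If a direct $P_C$ is found, we augment along it; otherwise, we obtain two bridge-covering paths $P_{bu}$ and $P_{b'u'}$ and combine them via \Cref{lem:bridge-covering-helper2} (common vertex and joint length $\geq 5$), \Cref{lem:bridge-covering-helper3} (share exactly one tree edge), or \Cref{lem:bridge-covering-helper4} (one block endpoint plus a shared tree edge).

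After augmenting, I update the loan by setting $\loan_f(H') \coloneq \max\{\loan_f(H) - \cre \cdot br, 0\}$, so that the $\cre \cdot br$ credit surplus from the two-block augmentation precisely offsets the decrease in loan. This gives $\cost(H') - \loan(H') \leq \cost(H) - \loan(H)$ as required by \Cref{lem: bridgeCover-main}. The invariants of \cref{def:loan_invariant} are preserved: Invariant (i) holds because the bridges of color $f$ remaining in $H'$ still lie in the merged complex component; Invariant (ii) holds because $T_C(F')$ is obtained from $T_C(F)$ by contracting a subpath through $b$ and $b'$, which does not increase the leaf count and can only strengthen the loan bounds; and Invariant (iii) holds because both $\loan_f$ and the credit reserve of color-$f$ bridges decrease by the same amount $\cre \cdot br$. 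The main obstacle is executing the case analysis carefully in the absence of a direct $b$–$b'$ bridge-covering path, ensuring that the combination of \Cref{lem:bridge-covering-helper1} with the matching lemmas always supplies two bridge-covering paths whose intersection pattern satisfies one of \Cref{lem:bridge-covering-helper2}, \Cref{lem:bridge-covering-helper3}, or \Cref{lem:bridge-covering-helper4}; the degree-$2$ constraint on the lonely nodes is what rules out pathological branching configurations.
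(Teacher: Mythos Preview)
Your proposal has a genuine gap: the argument for producing a bridge-covering path between $b$ and $b'$ (or an equivalent pair of paths) does not go through as stated.

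The matching-lemma step is the problem. A $3$- or $4$-matching between $V_1 = V(b)\cup V(b')\cup\{v_1,\dots,v_{k-1}\}$ and its complement only tells you that edges leave $V_1$; it does not tell you \emph{which} vertices of $V_1$ they leave from. All matching edges could be incident to the lonely vertices $v_i$ rather than to $b$ or $b'$, in which case you get bridge-covering paths whose endpoints are lonely nodes, not block nodes, and the $bl\ge 2$ accounting you rely on collapses. Nor do the matching edges, even if some land in $b$ and some in $b'$, route through a common external component to yield a single $b$--$b'$ path. Your fallback to \Cref{lem:bridge-covering-helper2}--\ref{lem:bridge-covering-helper4} assumes you can always arrange two paths meeting one of those three intersection patterns, but you give no mechanism for this; the degree-$2$ constraint on the $v_i$ does not by itself force any such pattern.

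The paper's proof addresses exactly this difficulty by a case analysis on the chain length $q$. When $q\ge 9$ there are enough bridges to invoke \Cref{corr:five_plus_loan_edges} directly. When $q\le 6$ the reachability sets $R(u_1)$ and $R(u_q)$ (each of size $\ge 2$ beyond the adjacent vertex, by \Cref{lem:bridge-covering-helper1}) must overlap on a consecutive pair $u_i,u_{i+1}$, giving the one-edge-overlap configuration of \Cref{lem:bridge-covering-helper3}. The genuinely hard cases are $q=7$ and $q=8$, where the reachability sets may fail to overlap; here the paper introduces three-path operations (adding three disjoint bridge-covering paths and deleting one or two tree edges) and, for $q=7$, exploits the absence of large $3$-vertex cuts and the nice-$\cFive$ structure of pendant blocks to force the needed edges. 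None of this machinery appears in your outline, and it is not recoverable from the matching lemmas alone.
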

\begin{proof}
	We label the nodes on $T_C(b,b')$ as $u_1=b,u_2,\dots,u_q=b'$.
	Note that $u_2, \cdots, u_{q-1}$ are nodes of degree $2$.
	We show that we can find bridge-covering paths to cover the bridges on the path $T_C(u_1,u_q)$.
	Let $T_{u_1}$ (resp.\ $T_{u_q}$) be the subtree of $T_C\setminus \{u_1u_2\}$ (resp.\ $T_{u_{q-1}u_q}$) containing $u_1$ (resp. \ $u_q$).
	Since we aim to cover the bridges on the path $T_C(u_1,u_q)$, we can assume w.l.o.g.\ that any bridge-covering path starting from $T_{u_1}$ (resp.\ $T_{u_q}$) starts from $u_1$ (resp.\ $u_q$).
	Further, we can consider $T_C'$ obtained from $T_C$ by contracting $T_{u_1}$ and $T_{u_q}$
	That is, $T_C'$ is a path whose two endpoints are block nodes and all internal nodes are lonely nodes.
	We slightly abuse the notations by denoting the path $T_C'$ as $u_1,u_2,\dots,u_q$.

	If $u_q \in R(u_1)$, we can add any bridge-covering path between $u_1$ and $u_q$ since it covers $2$ blocks and is cheap.
	In the following, we assume that $u_q \notin R(u_1)$ and symmetrically $u_1 \notin R(u_q)$.
	If $q \geq 9$, i.e., there are at least $8$ bridges.
	Then we can immediately apply \cref{corr:five_plus_loan_edges} with $l=3$.

	Now we assume that $q \leq 8$.
	We apply~\Cref{lem:bridge-covering-helper1} with $xy = u_1u_2$, and we can conclude that $|R(u_1) \cap \{u_3, u_4, \dots, u_{q-1} \}| \geq 2$.
	Symmetrically, we have $|R(u_q) \cap \{u_2, u_3, \dots, u_{q-2} \}| \geq 2$.
	If $q \leq 6$, i.e., there are at most $5$ bridges, there must be some index $i$ such that $u_i \in R(u_q)$ and $u_{i+1} \in R(u_1)$.
	Then the tuple $(u_1,u_q,u_{i+1},u_i)$ satisfies the conditions of~\Cref{lem:bridge-covering-helper3} and we can make progress.
	Note that we use the credits on the covered bridges to pay for $\loan_f(H)$.
	In the following, we assume $7 \leq q \leq 8$, i.e., there are $6$ or $7$ bridges.

	We first consider the case where $q=8$, i.e., $T_C'$ is a path of $7$ edges.
	We will cover all these edges using up to $3$ bridge-covering paths.
	These $7$ bridges have $7\cre$ credits in total and we use them to fully pay for $\loan_f(H) \leq \frac{1}{2}+10\delta$.
	After that, we have at least $7\cre - \loan_f(H) \geq \frac{5}{4}-17\delta \geq 1$ unused credits, $\loan(H')=0$ and $\cost(H')-\loan(H') = \cost(H)-\loan_f(H)$ since we decrease the credits and the loans by the same amount.
	We will make use of this $1$ unused credit to pay for the bridge-covering paths using the following operations.

	\begin{itemize}
		\item[(i)] (Only for $q=8$) Add two bridge-covering paths, $P$ from $u_1$ to some $u_i$ and $Q$ from $u_j$ to $u_q$ where $1 < j \leq i < q$.
		In this way, we obtain $H'$ with $|H'|=|H|+|P|+|Q|$, and $\credit(H') \leq \credit(H)+1 - (|P|-1+|Q|-1 +2 + 1) = \credit(H)-|P|-|Q|$.
		The last $+1$ is the unused credit.
		Hence $\cost(H')\leq \cost(H)$ and we are done.
		\item[(ii)] (Only for $q=8$) Add three pairwise disjoint bridge-covering paths $P,Q,R$ so that they cover all the bridges, and remove $1$ edge so that no bridge is created.
		In this way, we obtain $H'$ with $|H'|=|H|+|P|+|Q|+|R|-1$, and $\credit(H') \leq \credit(H)+1 - (|P|-1+|Q|-1+|R|-1 +2 + 1) = \credit(H)-|P|-|Q|-|R|+1$.
		The last $+1$ is the unused credit.
		Hence $\cost(H')\leq \cost(H)$ and we are done.
		\item[(iii)] (Do not need the unused credits from the bridges)
		Add three pairwise disjoint bridge-covering paths $P,Q,R$ so that they cover all the bridges, and remove $2$ edges so that no bridge is created.
		In this way, we obtain $H'$ with $|H'|=|H|+|P|+|Q|+|R|-2$, and $\credit(H') \leq \credit(H)+1 - (|P|-1+|Q|-1+|R|-1 +2) = \credit(H)-|P|-|Q|-|R|+2$.
		Hence $\cost(H')\leq \cost(H)$ and we are done.
	\end{itemize}

	If there are indices $i$ and $j$ with $i \leq j $ such that $u_{j} \in R(u_1)$ and $u_i \in R(u_7)$,
	let the corresponding bridge-covering paths be $P$ and $Q$.
	If $P$ and $Q$ intersect internally, then there is another bridge-covering path between $u_1$ and $u_8$, contracting $u_8 \notin R(u_1)$.
	If $P$ and $Q$ are internally disjoint, we can perform Operation (i).

	If there are not such $i$ and $j$, the only remaining case is that $R(u_1)\setminus \{u_2\} = \{u_3, u_4\}$ and $R(u_8) \setminus \{u_7\} = \{u_5,u_6\}$.
	Then we apply~\Cref{lem:bridge-covering-helper1} with $xy=u_5u_4$ and obtain that either $R(\{u_5,u_6,u_7,u_8\}) \setminus \{u_4\}$ contains $2$ lonely nodes or $R(\{u_5,u_6,u_7,u_8\})$ contains a block node (hence $u_1$).
	However, the latter case cannot happen since $R(u_1) \subseteq \{u_2,u_3,u_4\}$.
	Consider the former case where  $|R(\{u_5,u_6,u_7,u_8\}) \setminus \{u_4\}| \geq 2$, which implies $R(\{u_5,u_6,u_7,u_8\}) \setminus \{u_4\} = \{u_2,u_3\}$.
	Let $P$ be any bridge-covering path from $u_3$ to some $u_j$ with $j \geq 5$.
	We add to $H$ a bridge-covering path $Q$ from $u_1$ to $u_4$, $P$, and a bridge-covering path $R$ from $u_5$ to $u_8$.
	Further, we remove the edge $u_3u_4$.
	The resulting graph $H'$ has fewer bridges than $H$, and we argue that $\cost(H') \leq \cost (H)$.
	Note that $P$ and $Q$ are internally disjoint as otherwise there would be another bridge-covering path from $u_1$ to $u_j$.
	Similarly, $Q$ and $R$ are internally disjoint.
	Therefore, we can apply Operation~(ii).

	We finally consider the case where $q=7$, i.e., $T_C'$ is a path of $6$ edges.
	Unless explicitly written, we will immediately pay back the loans using all the credits of the bridges, and will not use the credits on the bridges for buying edges.
	By the above discussion, we have $R(u_1) \cap \{u_3, u_4, \dots, u_{6} \}| \geq 2$ and
	$|R(u_7) \cap \{u_2, u_3, \dots, u_{5} \}| \geq 2$.
	If there is some index $i$ such that $u_{i+1} \in R(u_1)$ and $u_i \in R(u_7)$, then the tuple $(u_1,u_7,u_{i+1},u_i)$ satisfies the conditions of~\Cref{lem:bridge-covering-helper3}, and we can make progress.
	Hence we can assume this does not happen.
	We make a case distinction on the maximum index $i$ such that $u_i \in R(u_1)$.

	\textbf{Case 1:} $\argmax_i \{ u_i \mid u_i \in R(u_1) \} = 6$.
	We have $R(u_7) \subseteq \{u_2,u_3,u_4\}$.
	There are two subcases.

	\textbf{Case 1.1:} $R(u_1)\setminus \{u_2\}=\{u_5,u_6\}$ and $R(u_7)\setminus \{u_6\}=\{u_2,u_3\}$.
	Consider the $3$-vertex cut $\{u_2,u_3,u_6\}$.
	Let $A$ (resp.\ $B$) be the set of vertices in $G\setminus \{u_2,u_3,u_6\}$ reachable from the vertices in the block $u_1$ (resp.\ $u_7$).
	Let $R$ be $V\setminus (A \cup B \cup C)$.

	If $B$ contains some other component, then $|B| \geq 9$ since $u_7$ contains at least $5$ vertices and any other component contains at least $4$ vertices.
	Let $S:=A \cup R$, $T:=B$.
	We claim that $\{u_2,u_3,u_6\}$ is a large $3$-vertex cut.
	Suppose it is not a large $3$-vertex cut.
	Then $|A| \leq 7$, which implies that any bridge-covering path between $u_1$ and $u_5$ (or $u_6$) must be a single edge, and the block node $u_1$ consists of a nice $\cFive$.
	We number the $5$ vertices $b_1,\dots,b_5$ of $u_1$ as in~\Cref{def:canonicalD2}.
	By~\cref{lem:3-matching}, there is a 3-matching between $V(u_7)$ and $V\setminus V(u_7)$.
	Since $b_2$ and $b_5$ are vertices of degree $2$, the $3$-matching must be incident to $b_1,b_3$ and $b_4$, i.e., the $3$-matching matches $b_1,b_3,b_4$ respectively to $u_2,u_5,u_6$ (but the correspondence between the endpoints is not clear).
	If there is an edge between $b_3$ and $u_2$,
	then there is another edge between $b_4$ and some $w \in \{u_5,u_6\}$.
	We can add $b_3u_2$ and $b_4w$, and remove $b_3b_4$ and $b_1u_1$ to obtain
	the desired $H'$ with fewer bridges.
	Symmetrically, we can assume that there is no edge between $b_4$ and $u_2$.
	Hence, we can assume $b_3u_5 \in E$ and $b_4u_6\in E$.
	Then we add $b_3u_5$ and $b_4u_6$, and remove $b_3b_4$ and $u_5u_6$ to obtain the desired $H'$ with fewer bridges.

	If $B$ does not contain any other component,
	then any bridge-covering path starting from $u_7$ is a single edge.
	If $u_7$ is not an original block in $H_0$, i.e., it is obtained after some steps of bridge-covering, then $|B| \geq 8$, which implies $\{u_2,u_3,u_6\}$ is a large $3$-vertex cut.
	Now assume that $u_7$ is an original block of $H_0$.
	If it contains at least $6$ vertices, by the loan invariants, we have $\loan_f(H) \leq \frac{1}{4}+11\delta$.
	We have $6$ bridges and $6\cre$ credits in total, and we use them to fully pay for $\loan_f(H) \leq \frac{1}{4}+11\delta$.
	After that, we have at least $6\cre - \loan_f(H) \geq \frac{5}{4}-17\delta \geq 1$ credits.
	Then we can perform the same operations as we do for the case with $q=8$.
	Finally, consider the case $u_7$ is a (nice) $\cFive$.
	Then we can perform the same operations as the previous case where $|A|=7$.

	\textbf{Case 1.2:}
	$R(u_1)\setminus \{u_2\} = \{u_4,u_6\}$ and $R(u_7)\setminus \{u_6\} = \{u_2, u_4\}$.
	We claim that $\{u_2,u_4,u_6\}$ is a large $3$-vertex cut, contradicting that $G$ is structured.
	Let $A$ (resp.\ $B$) be the set of vertices in $G\setminus \{u_2,u_4,u_6\}$ reachable from the vertices in the block $u_1$ (resp.\ $u_7$).
	Let $R$ be $V\setminus (A \cup B \cup C)$.
	We can w.l.o.g.\ $|A| \geq |B|$ by symmetry.
	We follow the notations in~\Cref{def:large_three_cut}.
	Let $V_1:=A\cup R$ and $V_2:=B \cup \{u_3,u_5\}$.
	If both $|A| \geq 8$ and $|B| \geq 8$, then $\{u_2,u_4,u_6\}$ is a large $3$-vertex cut, a contraction.
	Now we assume $|B|<8$, which implies $u_7$ must be a nice $\cFive$ and $V(u_7)$ can only be adjacent to $u_2$ and $u_4$ by the definition of $B$.
	We number the $5$ vertices $b_1,\dots,b_5$ in $u_7$ as in~\Cref{def:canonicalD2}.
	By~\cref{lem:3-matching}, there is a 3-matching between $V(u_7)$ and $V\setminus V(u_7)$.
	Since $b_2$ and $b_5$ are vertices of degree $2$, the $3$-matching must be incident to $b_1,b_3$ and $b_4$.
	As in Case 1.1, we can assume that the $3$-matching contains $b_3u_2$ and $b_4u_4$.
	Then we can perform Operation~(iii) by adding the bridge-covering path between $u_1$ and $u_6$, $b_3u_2$, and $b_4u_4$,
	and removing $b_1u_6$ and $b_3b_4$.

	\textbf{Case 2:} $\argmax_i \{ u_i \mid u_i \in R(u_1) \} = 5$.
	We can also assume $\argmin_j \{ u_j \mid u_j \in R(u_7) \} \geq 3$.
	Otherwise, we can swap the roles of $u_1, u_7$ and end up with the previous case.
	Recall that there is no such index $i$ that $u_{i+1} \in R(u_1)$ and $u_i \in R(u_7)$,
	Hence, we have $R(u_7)\setminus \{u_6\} = \{u_3,u_5\}$, which implies $R(u_1)\setminus \{u_2\} =\{u_3,u_5\}$.
	We apply~\Cref{lem:bridge-covering-helper1} with $xy=u_2u_3$ and obtain that either $R(\{u_1,u_2\}) \setminus \{u_3\}$ contains $2$ lonely nodes or $R(\{u_1,u_2\})$ contains a block node (hence $u_7$).
	The latter is impossible since $u_1,u_2 \notin R(u_7)$.
	In the former case,
	there must be some bridge-covering path $P$ from $u_2$ to $u_j$ for some $j\in \{4,6\}$.

	\textbf{Case 2.1:}  $j=4$.
	Let $Q$ be any bridge-covering path from $u_1$ to $u_3$ and $R$ be any bridge-covering path from $u_7$ to $u_3$.
	By arguments similar to previous cases, $P,Q,R$ are pairwise disjoint.
	So we can perform Operation~(iii): we add $P,Q,R$ to $H$ and remove the edges $u_2u_3,u_3u_4$.

	\textbf{Case 2.2:}  $j=6$.
	Let $Q$ be any bridge-covering path from $u_1$ to $u_3$ and $R$ be any bridge-covering path from $u_7$ to $u_5$.
	By arguments similar to previous cases, $P,Q,R$ are pairwise disjoint.
	So we can perform Operation~(iii): we add $P,Q,R$ to $H$ and remove the edges $u_2u_3,u_5u_6$.

	\textbf{Case 3:} $\argmax_i \{ u_i \mid u_i \in R(u_1) \} = 4$. We can also assume $\argmin_j \{ u_j \mid u_j \in R(u_7) \} = 4$. Otherwise, we can swap the roles of $u_1, u_7$ and end up with the previous cases.
	We apply~\Cref{lem:bridge-covering-helper1} with $xy=u_5u_4$ and obtain that either $R(\{u_5,u_6,u_7\}) \setminus \{u_4\}$ contains $2$ lonely nodes or $R(\{u_5,u_6,u_7\})$ contains a block node (hence $u_1$).
	However, the latter case cannot happen since $R(u_1) \subseteq \{u_2,u_3,u_4\}$.
	Hence $R(\{u_5,u_6,u_7\}) \setminus \{u_4\}$ contains $2$ lonely nodes, which are $u_2$ and $u_3$.
	Let $P$ be a bridge-covering path from $u_2$ to some $u_j$ with $5\leq j \leq 6$.
	Let $Q$ be any bridge-covering path from $u_1$ to $u_3$ and $R$ be any bridge-covering path from $u_7$ to $u_{j-1}$.
	By arguments similar to previous cases, $P,Q,R$ are pairwise disjoint.
	So we can perform Operation~(iii): we add $P,Q,R$ to $H$ and remove the edges $u_2u_3,u_{j-1}u_j$.
\end{proof}

By \cref{def:loan_invariant}, $T_C(F)$ has at most $4$ leaves.
By exhaustively applying~\cref{lem: bridgeCoverPath}, we are done with the case where $T_C(F)$ has $2$ leaves.
Now we consider the case where $T_C(F)$ has $3$ leaves, i.e., it consists of $3$ paths intersecting at one common node $r$.
By the arguments on the loans of each color, we have $\loan_f(H) \leq \frac{1}{4}+15\delta$, which implies the credits of two bridges are enough to pay for the loan.
Let the $3$ leaves be $u_1,u_2,u_3$ respectively.
We can assume that we cannot use~\Cref{lem: bridgeCoverPath} to make progress.
Then $u_1,u_2,u_3$ are the only block nodes in $T_C(F)$.
For $1 \leq i \leq 3$, let $T_{u_i}$ be the subtree of $T_C\setminus F$ containing $u_i$.
Since we aim to cover the bridges in $F$, we can assume w.l.o.g.\ that any bridge-covering that has one endpoint in $T_{u_i}$ starts at $u_i$.
Hence, we can consider $T_C'$ obtained from $T_C$ by contracting $T_{u_i}$.
We slightly abuse the notations by denoting the leaves of $T_C'$ as $u_1,u_2,u_3$.

\begin{lemma}
	If $T_C(F)$ is a tree with $3$ leaves and every non-leaf node is a lonely node, then there is a polynomial-time algorithm that outputs a canonical 2-edge cover $H'$ of $G$ satisfying the conditions of \Cref{lem: bridgeCover-main}.
\end{lemma}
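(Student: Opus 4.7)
The plan is to mimic the structure of the 2-leaf case (Lemma~\ref{lem: bridgeCoverPath}) but adapted to the Y-shaped tree. After contracting each subtree $T_{u_i}$, we have a tree $T_C'$ with three leaves $u_1,u_2,u_3$ (block nodes), three paths $P_1,P_2,P_3$ from a central lonely node $r$ of degree $3$ to each $u_i$, and all other internal nodes are lonely of degree $2$. We denote the lengths (number of bridges) of $P_1,P_2,P_3$ by $\ell_1,\ell_2,\ell_3$, and the total number of bridges by $k=\ell_1+\ell_2+\ell_3\geq 3$. Since $\loan_f(H)\leq \nicefrac{1}{4}+15\delta$, the credits of two bridges already suffice to repay the full loan of color $f$, so any plan that covers $\geq 2$ bridges from $F$ produces enough freed credit to pay back the loan, and our only remaining task is to fund the extra edges bought on the bridge-covering paths.

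The first step is to try to find a single bridge-covering path $P$ between two of the leaves, say $u_i,u_j$: since this path would merge two pendant blocks and cover all $\ell_i+\ell_j\geq 2$ bridges on the unique $u_i$-$u_j$ path in $T_C'$, it is cheap in the sense of the bridge-covering framework, and the extra credits from uncovered bridges of color $f$ pay the loan. As in the proof of Lemma~\ref{lem: bridgeCoverPath}, this can be done whenever a bridge-covering path from $u_i$ and one from $u_j$ exist whose $T_C'$-endpoints lie in opposite subtrees after deleting $r$, or whenever two such paths intersect internally (yielding a direct $u_i$-$u_j$ bridge-covering path).

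If no such single path exists, I would apply Lemma~\ref{lem:bridge-covering-helper1} to each edge incident to $r$ along the three branches. This will show that the reachability sets $R(u_i)$ (after restricting to $T_C'$) contain either another block node, in which case we are immediately done by Lemma~\ref{lem:bridge-covering-helper2} or~\ref{lem:bridge-covering-helper3}, or else at least two lonely nodes in other branches. A symmetric analysis from each leaf and from $r$ itself yields, by case analysis on the branch lengths $\ell_1,\ell_2,\ell_3$, either (i) a pair of bridge-covering paths whose union covers $\geq 5$ bridges and merges two blocks (handled by Lemma~\ref{lem:bridge-covering-helper2}), or (ii) a pair that shares exactly one tree edge (handled by Lemma~\ref{lem:bridge-covering-helper3} or~\ref{lem:bridge-covering-helper4}), or (iii) a configuration of three pairwise internally disjoint bridge-covering paths together with one or two edge removals that cover all $k$ bridges and merge the three blocks. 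Case (iii) is funded via an Operation (i)/(ii)/(iii) analogue to the $q\in\{7,8\}$ subcase of Lemma~\ref{lem: bridgeCoverPath}: the $k$ bridges contribute $k(\nicefrac14-\delta)$ credits, of which $\loan_f(H)\leq \nicefrac14+15\delta$ is spent on the loan and the remainder (at least $\nicefrac12-16\delta\geq\nicefrac14$) is combined with the unified-component credit savings to pay for the extra bought edges minus the removed ones.

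The main obstacle will be the short-branch configurations where all three $\ell_i$ are tiny (e.g.\ each $\ell_i\in\{1,2\}$) and the reachability sets are all concentrated near $r$, so that no long bridge-covering path exists and the loan margin is thin. Here I expect to need the structural properties of canonical pendant blocks (in particular nice $\cFive$'s with degree-$2$ inner vertices) together with a matching argument via Lemma~\ref{lem:3-matching}, mirroring Cases~1.1 and 1.2 of Lemma~\ref{lem: bridgeCoverPath}: if reachability is too restricted, the vertices $\{v_1,v_2,v_3\}$ surrounding the branching node $r$ would form a large $3$-vertex cut or $\mathcal{C}_3$-type cut, contradicting that $G$ is structured, or else we can directly add matching edges from a nice $\cFive$ leaf block and remove internal edges to realize Operation~(iii). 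Throughout, we maintain Invariant~\ref{def:loan_invariant}: since $T_C(F')$ after augmentation is obtained from $T_C(F)$ by contracting a connected subtree, its number of leaves can only decrease, and the loan only shrinks, so the invariant is preserved.
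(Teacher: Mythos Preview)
Your high-level plan is in the right spirit, but you misidentify where the difficulty lies and propose machinery that is neither needed nor available.

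First, you overlook the quick reduction: since $\loan_f(H)\leq \nicefrac14+15\delta$, the credits of two bridges repay the loan, so Corollary~\ref{corr:five_plus_loan_edges} with $\ell=2$ immediately handles any $T_C'$ with at least $7$ bridges. This leaves only finitely many configurations $(d_1,d_2,d_3)$ with $d_1+d_2+d_3\leq 6$, and the paper simply enumerates them.

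Second, and more importantly, you have the hard and easy cases backwards. You flag the tiny configurations (all $\ell_i\in\{1,2\}$) as the ``main obstacle'' requiring nice-$\cFive$ structure, a $3$-matching argument, and large $3$-vertex (or ``$\mathcal C_3$-type'') cut contradictions. None of this is correct: the nice-$\cFive$ guarantee in Invariant~\ref{def:loan_invariant}(ii) applies only to the $2$-leaf case with loan exceeding two bridges, not here; $\mathcal C_3$ cuts are not part of the structured-graph definition; and the large $3$-vertex cut threshold ($8$ vertices per side) is not verifiable from the local data you have. In the paper these tiny cases are in fact the \emph{easiest}: for $(2,1,1)$ and $(1,1,1)$, applying Lemma~\ref{lem:bridge-covering-helper1} to the edge $u_3r$ forces $R(u_3)\setminus\{r\}$ to contain two lonely nodes, which is impossible since no such nodes exist---so the standing assumption that no $u_i$-$u_j$ bridge-covering path exists is immediately contradicted. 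For $(2,2,2)$ and $(2,2,1)$ a short symmetry argument plus one application of Lemma~\ref{lem:bridge-covering-helper2} or~\ref{lem:bridge-covering-helper3} suffices; for $(3,2,1)$ and $(3,1,1)$ the longest branch has length $\geq 3$ and Lemma~\ref{lem:bridge-covering-helper1} applied at both ends of the $u_1$-$u_2$ path of length $5$ gives overlapping reachability sets and hence a tuple for Lemma~\ref{lem:bridge-covering-helper3}.

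In short, drop the nice-$\cFive$/matching/cut machinery entirely; the three-path ``Operation (iii)'' analogue you sketch is never needed in the $3$-leaf case.
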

\begin{proof}
	We can assume that w.l.o.g.\ $u_1,u_2,u_3$ are sorted in non-increasing distances to $r$.
	Hence $T_C'(u_1,u_2)$ is a longest path in $T_C'$.
	Suppose $T_C'$ has at least $7$ bridges.
	Then we can immediately apply \cref{corr:five_plus_loan_edges} with $l=2$.
	Hence, we can assume that $T_C'$ has at most $6$ edges in the following.
	Let $d_i$ be the distance between $u_i$ and $r$ on $T_C'$.
	Then the only possible configurations of $(d_1,d_2,d_3)$ are
	$(3,2,1), (3,1,1), (2,2,2), (2,2,1),(2,1,1)$ and $(1,1,1)$.
	We assume that there is no bridge-covering path between $u_i$ and $u_j$ for any $1 \leq i, j\leq 3$, since otherwise we can add these cheap bridge-covering path and make progress.

	\textbf{Case 1:} $(d_1,d_2,d_3) = (3,2,1)$ or $(d_1,d_2,d_3) = (3,1,1)$.
	We only show that case where $(d_1,d_2,d_3) = (3,2,1)$.
	The other case is similar.
	We label the nodes on the path between $u_1$ and $u_2$ as $u_1,v_1,v_2,v_3,v_4,u_2$.
	By applying~\Cref{lem:bridge-covering-helper1} with $xy=u_1v_1$, we have $|R(u_1)\setminus \{v_1\} \cap \{v_2,v_3,v_4\}| \geq 2$.
	Similarly, we have $|R(u_2)\setminus \{v_4\} \cap \{v_1,v_2,v_3\}| \geq 2$.
	Then there must be some index $i$ such that $v_{i+1} \in R(u_1)$ and $v_i \in R(u_2)$.
	Hence we can apply~\Cref{lem:bridge-covering-helper3} with $(u_1,u_2,v_{i+1},v_i)$ to make progress.

	\textbf{Case 2:} $(d_1,d_2,d_3) = (2,2,2)$ or $(d_1,d_2,d_3) = (2,2,1)$.
	We only show that case where $(d_1,d_2,d_3) = (2,2,2)$.
	The other case is similar.
	Let $v_i$ be the node between $u_i$ and $r$ for $1 \leq i \leq 3$.
	By applying~\Cref{lem:bridge-covering-helper1} with $xy=u_1v_1$, we have $|R(u_1)\setminus \{v_1\} \cap \{r,v_2,v_3\}| \geq 2$.
	We can assume w.l.o.g.\ that $v_2 \in R(u_1)$.
	By applying~\Cref{lem:bridge-covering-helper1} with $xy=u_2v_2$, we have $|R(u_1)\setminus \{v_2\} \cap \{r,v_1,v_3\}| \geq 2$.
	If $r \in R(u_2)$, we can apply~\Cref{lem:bridge-covering-helper3} with $(u_1,u_2,v_2,v_1)$.
	If $r \notin  R(u_2)$, then $v_3 \in R(u_2)$ and we can apply~\cref{lem:bridge-covering-helper2} with $(u_1,u_2,v_2,v_3)$.

	\textbf{Case 3:} $(d_1,d_2,d_3) = (2,1,1)$ or $(d_1,d_2,d_3) = (1,1,1)$.
	By applying~\Cref{lem:bridge-covering-helper1} with $xy=u_3r$, $R(u_3)\setminus \{r\}$ contains $2$ lonely nodes, which is impossible.
\end{proof}

Finally, we consider the case where $T_C(F)$ has $4$ leaves.
\begin{lemma}
	If $T_C(F)$ is a tree with $4$ leaves and every non-leaf node is a lonely node, then there is a polynomial-time algorithm that outputs a canonical 2-edge cover $H'$ of $G$ satisfying the conditions of \Cref{lem: bridgeCover-main}.
\end{lemma}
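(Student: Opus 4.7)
The plan is to exploit \Cref{def:loan_invariant}(ii)'s tight $20\delta$-bound on $\loan_f(H)$ for the 4-leaf case, combine it with \Cref{corr:five_plus_loan_edges}, and finish with a direct structural analysis for the trees that are too small for the corollary to apply.

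First, I would note that with four leaves, \Cref{def:loan_invariant}(ii) gives $\loan_f(H)\le 20\delta$, which is at most $\cre = \nicefrac{1}{4}-\delta$ since $\delta \le \nicefrac{1}{100}$. Therefore the integer $\ell$ appearing in \Cref{corr:five_plus_loan_edges} equals $1$, and that corollary already settles the lemma whenever $|E(T_C(F))|\ge 6$. Thus it suffices to consider $|E(T_C(F))|\in\{4,5\}$.

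Next, I would contract each $T_{u_i}$ into the leaf $u_i$, exactly as in the 3-leaf case, and work inside the tree $T'_C$ whose four leaves are the block nodes $u_1,\dots,u_4$ and whose interior vertices are the non-leaf lonely nodes of $T_C(F)$. A tree on four leaves with $e$ edges has $e-3$ interior nodes, each of degree at least $2$; this leaves only three isomorphism types:
\begin{enumerate}[nosep]
\item $e=4$: a star $K_{1,4}$ with interior lonely centre $r$;
\item $e=5$, \emph{double-}$Y$: two degree-$3$ lonely nodes $r_1,r_2$ joined by an edge, with two leaves attached to each;
\item $e=5$, \emph{skewed}: a degree-$4$ lonely node $r_1$ adjacent to three leaves and to a degree-$2$ lonely node $r_2$, which is in turn adjacent to the remaining leaf.
\end{enumerate}

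In each case I would pick a leaf $u_i$ whose $T'_C$-neighbour is a single interior lonely node $r$ (take $u_i=u_4$ and $r=r_2$ in the skewed case) and invoke \Cref{lem:bridge-covering-helper1} on $xy=u_i r$. Its two alternatives are that $R(u_i)$ contains a block node or that $R(u_i)\setminus\{r\}$ contains at least two lonely nodes. In each of the three configurations, the number of interior lonely nodes of $T'_C$ other than $r$ is at most $1$: zero in the star, exactly one ($r_2$ or $r_1$) in the two $e=5$ shapes. Hence the second alternative is impossible and some other leaf $u_j$ must lie in $R(u_i)$. Since both $u_i$ and $u_j$ are block nodes, the corresponding bridge-covering path $P$ satisfies $bl\ge 2$ and is cheap.

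Augmenting $H$ along $P$ yields a weakly canonical $H'$ with strictly fewer bridges; since $P$ is cheap, $\cost(H')\le\cost(H)$. Exactly as in the proof of \Cref{corr:five_plus_loan_edges}, the $\cre$-credit of one $F$-bridge covered by $P$ clears the entire $\loan_f(H)\le 20\delta$, while any $F'$-bridges of other colours that $P$ covers pay toward $\loan_{f'}$ instead of toward buying path edges, so $\cost(H')-\loan(H')\le\cost(H)-\loan(H)$. After the merge, the new subtree of $f$-bridges has at most two leaves and $\loan_f(H')=0$, which comfortably satisfies the two-leaf clause of \Cref{def:loan_invariant}(ii), so all invariants are preserved.

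The main delicacy—and the single obstacle worth highlighting—is legitimising the use of \Cref{lem:bridge-covering-helper1} inside the contracted tree $T'_C$, the same point that arises in the 3-leaf proof: a lonely witness in $T_{u_j}$ with $j\ne i$ actually forces the block $u_j$ into $R(u_i)$ (which already puts us into the first alternative), while a lonely witness in $T_{u_i}$ is absorbed by the WLOG contraction into $u_i$ and can be ignored. Once this point is in place, the three-case enumeration above finishes the lemma.
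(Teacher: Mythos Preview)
Your proof is correct and follows essentially the same approach as the paper: use the $20\delta$ loan bound to get $\ell=1$ in \Cref{corr:five_plus_loan_edges}, then for the small trees apply \Cref{lem:bridge-covering-helper1} at a leaf to force a cheap block-to-block bridge-covering path. The paper organises the small cases by longest-path length ($2$ or $3$ edges) rather than your explicit enumeration of the three isomorphism types, but the content is the same.

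Two minor remarks. First, your claim that the post-merge tree of $f$-bridges has ``at most two leaves'' is not literally true in every case (e.g., merging two same-side leaves of the double-$Y$ yields three leaves), but this is harmless: since $br\ge 2$ you get $\loan_f(H')=0$, so $f$ is no longer deficient and \Cref{def:loan_invariant}(ii) is vacuous. Second, your phrasing that a lonely witness in $T_{u_j}$ ``forces the block $u_j$ into $R(u_i)$'' is slightly loose---it is not that $u_j\in R(u_i)$ directly, but that the resulting augmentation covers the $u_i$--$u_j$ path in $T_C(F)$ and hence gives $bl\ge 2$; this is exactly the paper's ``w.l.o.g.\ any bridge-covering path with one endpoint in $T_{u_j}$ starts at $u_j$'' convention, which you correctly identify as the only real subtlety.
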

\begin{proof}
	By \cref{def:loan_invariant}, we have $\loan_f(H) \leq 20 \delta$, which implies the credits of one bridge are enough to pay for the loan.
	Let the $4$ leaf block nodes be $b_1,b_2,b_3,b_4$.
	If $T_C(F)$ has at least $6$ edges, then we can apply \cref{corr:five_plus_loan_edges} with $\ell=1$.
	In the following, we can assume that $T_C(F)$ has at most $5$ edges and the longest path in $T_C(F)$ has at most $3$ edges.

	\textbf{Case 1:} the longest path has $2$ edges.
	In this case, $T_C(F)$ must be isomorphic to a star, i.e., the $4$ blocks are incident to a common node $r$.
	We apply~\Cref{lem:bridge-covering-helper1} with $xy=b_1r$.
	Since $r$ is the only node that could be a lonely node, there must be a bridge-covering path between $b_1$ and some other $b_i$ for $2 \leq i \leq 4$.
	Hence, we can add the cheap bridge-covering path to make progress.

	\textbf{Case 2:} the longest path has $3$ edges.
	We assume that the longest path consists of $b_1,v_1,v_2,b_2$.
	Since it is a longest path, $b_3$ and $b_4$ must be incident to either $v_1$ or $v_2$.
	We apply~\Cref{lem:bridge-covering-helper1} with $xy=b_1v_1$.
	Since $R(b_1)\setminus \{v_1\}$ cannot contain $2$ lonely nodes (only $v_2$ can be a lonely node), there must be a bridge-covering path between $b_1$ and some other block node.
	Hence, we can add the bridge-covering path to make progress.
\end{proof}

\subsection{Gluing}
\label{sec:gluing}
We have a structured graph $G$ as input and a bridgeless canonical 2-edge cover $H$ of $G$ with $\cost(H)\leq \cre \opt$ such that $H$ satisfies the credit invariant for each 2EC component:
\begin{itemize}[nosep]
	\item Each large component receives a credit of $2$,
	\item each rich vertex receives a credit of $2(\frac{5}{4}-\delta)$,
	\item each $\cFour$ receives a credit of $1$, and
	\item each $\mathcal C_i$ receives a credit of $\cre i$ for $5\leq i\leq 8$.
\end{itemize}

The main goal of this section is to prove \Cref{lem:gluing:main}, which we restate for convenience.

\lemmaMANYgluing*

We first introduce some definitions. In what follows, unless otherwise mentioned, $G$ is a structured graph and $H$ is a bridgeless weakly canonical 2-edge cover of~$G$.

\begin{observation} \label{obs:rich-node}
	Let $A$ be a node of $\hG_H$ such that $C_A$ is a rich vertex.
	Then $A$ cannot be a cut node of $\hG_H$.
\end{observation}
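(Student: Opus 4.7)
The plan is straightforward and short. By the construction recalled right before the observation, a rich vertex $C_A$ consists of exactly one vertex $v \in V(G)$ (it is the intersection vertex of the two removed bridges from a complex component with two nice-$\cFive$ pendant blocks). So when we form $\hG_H$ by contracting each 2EC component of $H$, the node $A$ arising from $C_A = \{v\}$ is literally the vertex $v$ itself.

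Next I would observe that for any vertex $u \in V(G)$ that is not inside a non-trivial 2EC component of $H$, being a cut node of $\hG_H$ is equivalent to being a cut vertex of $G$. Indeed, $\hG_H$ is obtained from $G$ by contracting each 2EC component of $H$; since each such component induces a connected subgraph, contracting it preserves the connectivity relation between the remaining vertices. Therefore $\hG_H - A$ is connected if and only if $G - v$ is connected. Applied to our rich vertex, if $A$ were a cut node of $\hG_H$, then $v$ would be a cut vertex of $G$.

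Finally I would invoke the structural hypothesis: by \Cref{def:structured}, $G$ is 2-vertex-connected, and in particular contains no cut vertex. This contradicts the conclusion of the previous step, and so $A$ cannot be a cut node of $\hG_H$, as claimed.

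The argument is essentially a one-line contradiction; there is no real obstacle, only the bookkeeping of checking that the contraction does not create or destroy cut vertices at a non-contracted node, which is immediate from the fact that each contracted set induces a connected subgraph of $G$.
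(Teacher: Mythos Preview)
Your proposal is correct and follows exactly the same approach as the paper, which gives the one-line justification ``Otherwise, $C_A$ is a cut vertex of $G$, which contradicts the fact that $G$ is structured.'' You have simply spelled out more carefully why a cut node at a single-vertex component corresponds to a cut vertex of $G$, which is fine.
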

Otherwise, $C_A$ is a cut vertex of $G$, which contradicts the fact that $G$ is structured.

\begin{observation}\label{obs:cut-pendant}
	Let $S$ be a segment of $\hG_H$ with $|S|=2$ and let $A$ be a node in $S$. Then $A$ is either a cut node of $\hG_H$ or a pendant node of $\hG_H$.
\end{observation}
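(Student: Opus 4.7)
The proof is a direct application of standard block-cut tree properties, applied to the multigraph $\hG_H$. I would begin by observing that since $|V(S)|=2$ and $S$ is $2$-vertex-connected, $S$ must consist of the two nodes $\{A,B\}$ together with at least two parallel edges between them; a single edge does not form a $2$-vertex-connected subgraph. By symmetry it suffices to analyze the node $A$, so I will fix $A$ and argue that either $A$ lies in only one segment (pendant) or it must be a cut node of $\hG_H$.

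The case split is on whether $A$ belongs to some segment of $\hG_H$ other than $S$. If $A$ lies only in $S$, then $A$ is by definition a pendant node of $\hG_H$, i.e., a node belonging to exactly one segment of the block-cut decomposition, and we are done.

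Otherwise $A$ lies in some segment $S'\neq S$, and I claim $A$ must then be a cut node of $\hG_H$. Suppose for contradiction that $A$ is not a cut node. Pick any $v \in V(S')\setminus\{A\}$, which exists because $|V(S')|\geq 2$. Since $A$ is not a cut node, $\hG_H - A$ remains connected, so there is a path $P$ from $v$ to $B$ in $\hG_H - A$. Then the subgraph $S \cup S' \cup P$ is $2$-vertex-connected: between any two of its nodes one can exhibit two internally vertex-disjoint paths, using the parallel edges of $S$, the $2$-vertex-connectivity of $S'$, and $P$ to route around any candidate cut node (including $A$ itself, which is bypassed via $P$). This subgraph strictly contains $S$, contradicting the maximality of $S$ as a $2$-vertex-connected subgraph of $\hG_H$. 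Hence $A$ must be a cut node.

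The only subtle point is the multigraph convention: a $2$-vertex-connected subgraph on $2$ nodes is precisely a pair of nodes joined by at least two parallel edges, which is guaranteed here because $\hG_H$ is $2$-edge-connected and thus contains no bridges. Beyond this, the argument is the routine fact that a vertex shared by two distinct maximal $2$-vertex-connected subgraphs is a cut vertex, so I do not anticipate any real obstacle.
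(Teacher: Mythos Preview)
Your proof is correct and takes essentially the same approach as the paper. The paper states this as an observation without proof, having earlier noted (citing standard references) that the segments of $\hG_H$ form a block decomposition in which two distinct segments overlap only at a cut node; your argument just spells out this standard block--cut tree fact directly.
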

\begin{proposition}\label{pros:3-matching-in-small-segment}
	Let $A$ and $B$ be adjacent nodes in $\hG_H$ such that $A\cup B$ is a segment. If neither $C_A$ nor $C_B$ is a rich vertex, then there exists a matching of size $3$ going between $V(C_A)$ and $V(C_B)$ in $G$.
\end{proposition}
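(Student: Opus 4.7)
The plan is to apply the $3$-Matching Lemma (\Cref{lem:3-matching}) to a bipartition of $V(G)$ chosen so that every crossing $G$-edge is forced to go from $V(C_A)$ to $V(C_B)$. Then any matching of size $3$ produced by the lemma will automatically be the desired one.

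To construct the partition, I will first exploit the maximality of the segment $\{A, B\}$: since a segment is a maximal $2$-vertex-connected subgraph of $\hG_H$, no connected component $K$ of $\hG_H \setminus \{A, B\}$ can be adjacent (in $\hG_H$) to both $A$ and $B$. Indeed, if it were, picking nodes $D_1, D_2 \in K$ with $A D_1, B D_2 \in E(\hG_H)$ and a $D_1$-$D_2$ path $P$ inside $K$ would produce a cycle $A\text{-}D_1\text{-}P\text{-}D_2\text{-}B\text{-}A$, which is a $2$-vertex-connected subgraph of $\hG_H$ strictly extending $\{A, B\}$, contradicting maximality. This lets me split the components of $\hG_H \setminus \{A, B\}$ into $X$ (the ones adjacent to $A$) and $Y$ (the ones adjacent to $B$); since $\hG_H$ is connected, every such component falls in exactly one of $X, Y$. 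I then define
\[
V_1 = V(C_A) \cup \bigcup_{D \in X} V(C_D), \qquad V_2 = V(C_B) \cup \bigcup_{D \in Y} V(C_D),
\]
which is a partition of $V(G)$.

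Because nodes in $X$ have no $\hG_H$-neighbors outside $X \cup \{A\}$ and nodes in $Y$ have no $\hG_H$-neighbors outside $Y \cup \{B\}$, the only $\hG_H$-edges between $\{A\} \cup X$ and $\{B\} \cup Y$ are the (parallel) edges joining $A$ and $B$. Translating back to $G$ via the one-to-one correspondence between edges of $G$ and $\hG_H$, every $G$-edge crossing $(V_1, V_2)$ has one endpoint in $V(C_A)$ and the other in $V(C_B)$.

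It remains to verify the hypotheses of \Cref{lem:3-matching}. Since $H$ is a bridgeless triangle-free $2$-edge cover and neither $C_A$ nor $C_B$ is a rich (isolated) vertex, each of $C_A$ and $C_B$ is a $2$EC component of $H$, and the smallest such component is a $\cFour$; hence $|V_1| \geq |V(C_A)| \geq 4$ and $|V_2| \geq |V(C_B)| \geq 4$, so the size condition of the lemma is met and the triangle case is vacuous. Applying the lemma yields a matching of size $3$ between $V_1$ and $V_2$ in $G$, which by the previous paragraph lies between $V(C_A)$ and $V(C_B)$. The only nontrivial step is the maximality argument establishing the component partition into $X$ and $Y$; everything else is routine once the partition is set up.
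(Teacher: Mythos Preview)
Your proof is correct and is precisely the intended argument: the paper states this proposition without proof, treating it as an immediate consequence of the $3$-Matching Lemma once one observes that the block structure of $\hG_H$ forces every $V_1$--$V_2$ crossing edge to run between $V(C_A)$ and $V(C_B)$. The same partitioning idea is spelled out later in the paper (at the start of \Cref{sec:MANY:core}) for the analogous situation in the \MANY algorithm, so your approach matches the paper's implicit reasoning.
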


There is a node $L$ of $\hG_H$ such that $C_L$ is a huge component.
Let $S$ be a segment of $\hG_H$ that contains~$L$.

\lemmaMANYgluingtrivial*

\lemmaMANYgluingnontrivial*

As explained before,
the above two lemmas immediately imply~\cref{lem:gluing:main}. All that remains is to prove the above two lemmas, which we do in the next two subsections.
The following lemma is a useful tool to prove the above two lemmas.

\begin{lemma}\label{lem:cycle-hamiltonian}
	Let $L$ be a node of $\hG_H$ such that $C_L$ is a huge component.
	Let $S$ be a segment of $\hG_H$ that contains $L$ and let $A$ be another node in $S$.
	If there is a cycle $F$ in $S$ through $L$ and $A$, such that $F$ is incident to two distinct vertices $u$ and $v$ in $C_A$, and there is a Hamiltonian path $P$ between $u$ and $v$ in $G[V(C_A)]$, then we can compute in polynomial time a canonical bridgeless 2-edge cover $H'$ of $G$ such that $\cost(H') \leq \cost(H)$.
\end{lemma}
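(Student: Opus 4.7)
The plan is to define $H' \coloneq (H \setminus E(C_A)) \cup E(F) \cup E(P)$ and then verify, in order: (a) $H'$ is a 2-edge cover, (b) the components on $F$ together with $C_A$ merge into a single 2EC component while no new bridge is introduced elsewhere, (c) $H'$ is weakly canonical, and (d) $\cost(H') \leq \cost(H)$. The construction is polynomial since $F$ and $P$ are given as input to the lemma.

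For (a) and (b), vertices outside every component traversed by $F$ are untouched. For each $B \in V(F) \setminus \{A\}$, the vertices of $C_B$ retain all their original edges and only potentially gain extra edges from $F$. In $C_A$, the internal vertices of $P$ receive two edges from $P$, while $u$ and $v$ each receive one edge from $P$ and one edge from $F$; since $u \neq v$, every vertex of $C_A$ has degree at least $2$ in $H'$. To show the merged subgraph is 2EC, I delete an arbitrary edge and check connectivity by cases: an edge on $F$ leaves $F$ as a spanning path plus the 2EC interior of each $C_B$ ($B \neq A$) and the path $P$ intact; an edge on $P$ splits $P$ into two pieces that are still connected via $F$ and the other components; and an interior edge of some $C_B$ with $B \neq A$ leaves $C_B$ connected since $C_B$ was 2EC. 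Components of $H$ disjoint from $F$ are unchanged, so they remain 2EC, making $H'$ bridgeless.

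For (c), the merged component contains all of $V(C_L)$ and hence has at least $32$ vertices, so it is huge; this trivially satisfies both the size and the "at least $9$ edges" conditions for non-complex components of a canonical 2-edge cover. All untouched components keep their canonical status from $H$; no new $\mathcal C_4$ is produced (we only merged or removed components), so the no-mergeable-$\mathcal C_4$-pair condition is inherited; and a huge component still exists. For (d), let $k = |V(F)| = |E(F)|$. Then $|E(P)| = |V(C_A)| - 1$ and $|E(C_A)| \geq |V(C_A)|$ because $C_A$ is 2EC, giving $|H'| - |H| \leq k - 1$. For the credits, every component on $F$ has credit at least $1$: a $\mathcal C_4$ by assignment; a $\mathcal C_i$ with $5 \leq i \leq 8$ since $(\nicefrac{1}{4} - \delta) i \geq \nicefrac{5}{4} - 5\delta \geq 1$; large components with credit $2$; and rich vertices with credit $2(\nicefrac{5}{4} - \delta) \geq 1$. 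Since $C_L$ alone has credit exactly $2$, the total credit removed by the merge is at least $2 + (k - 1) = k + 1$, while the new huge merged component receives credit $2$. Hence $\credit(H') - \credit(H) \leq 1 - k$, and summing gives $\cost(H') - \cost(H) \leq (k-1) + (1-k) = 0$.

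The step requiring the most care is the 2EC verification of the merged component: one must combine the cycle $F$ in the component graph with the Hamiltonian path $P$ in $C_A$ and the 2EC interiors of the remaining $C_B$ to exhibit two edge-disjoint paths between any pair of merged vertices. Everything else reduces to direct counting against the credit scheme of \Cref{def:initial_credit}.
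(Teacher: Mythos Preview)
Your proof is correct and follows essentially the same approach as the paper: define $H' \coloneq (H \setminus E(C_A)) \cup F \cup P$ and count edges and credits. The paper's proof is much terser (it simply asserts that $H'$ is ``clearly'' a canonical bridgeless 2-edge cover and then does the one-line cost calculation $(|F|-1)+(2-|F|-1)\leq 0$), whereas you spell out the 2EC verification and the canonical conditions more carefully; your bound $|E(C_A)| \geq |V(C_A)|$ is also slightly more general than the paper's implicit assumption that $C_A$ is a cycle, though in the applications $C_A$ is always a $\mathcal C_i$.
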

\begin{proof}
	We set $H' \coloneq (H \setminus E(C_A)) \cup F \cup P$, which is clearly a canonical bridgeless 2-edge cover of $G$ with fewer components than $H$. Furthermore, $\cost(H') - \cost(H) = (|H'| - |H|) + (\credit(H') - \credit(H))  = (|F| - 1) + (2-|F|-1) \leq 0$, since we added $|F|$ edges, effectively removed $1$ edge as $|C_A| - |P|=1$, $\credit(C_L)=2$ and $\credit(C_A)\geq 1$ in $H$, whereas in $H'$ those combine to form a single component with credit 2. Thus, $\cost(H') \leq \cost(H)$, meeting our goal.
\end{proof}

\subsubsection{Proof of~\cref{lem:gluing:trivial-segment}}

\begin{lemma}[Lemma 10 in~\cite{BGGHJL25}]\label{lem:niceCycle}
	Let $S$ be a segment of $\hat{G}_H$ and let $C_1$ and $C_2$ be two
	distinct components of $S$. Given edges $u_1X_1, u_2 X_2$, where $u_1\in V(C_1), u_2\in V(C_2), X_1\in V(S)\setminus\{ C_1\}, X_2\in V(S)\setminus\{C_2\}$, one can compute in polynomial time a cycle $F$ in $S$ containing $C_1$ and $C_2$, such that $F$ is incident on $u_1$ and another vertex in $C_1$, and incident on $u_2$ and another vertex in $C_2$.
\end{lemma}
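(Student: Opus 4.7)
The plan is to exploit two distinct layers of 2-vertex-connectivity: that of the segment $S$ as a multigraph in $\hG_H$, and that of the underlying graph $G$. First, I would establish the existence of some cycle $F_0$ in $\hG_H|_S$ containing both edges $e_1 \coloneq u_1X_1$ and $e_2 \coloneq u_2X_2$. This follows from the classical fact that in a 2-vertex-connected multigraph any two edges lie on a common cycle, and it is constructive: subdivide $e_1$ and $e_2$ with two temporary nodes $w_1, w_2$, and use Menger's theorem in the still-2VC graph to find two internally node-disjoint paths between $w_1$ and $w_2$, combining them into the desired cycle. Since $e_1$ is incident to $C_1$ at $u_1$ and $e_2$ is incident to $C_2$ at $u_2$, $F_0$ automatically touches $u_1$ (as a vertex of $V(C_1)$) and $u_2$ (as a vertex of $V(C_2)$).

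The key remaining difficulty is ensuring the \emph{other} edge of the cycle at $C_1$ has its endpoint in $V(C_1)$ different from $u_1$, and likewise at $C_2$. To enforce this, I would form an auxiliary multigraph $\tilde S$ from $\hG_H|_S$ by splitting the node $C_1$ into two nodes $C_1^{u_1}$ and $C_1^{\overline{u_1}}$, reattaching each edge of $\hG_H|_S$ incident to $C_1$ to whichever of these two splits matches its endpoint in $V(C_1)$; analogously split $C_2$. A cycle with the desired vertex-level incidence properties then corresponds precisely to a cycle in $\tilde S$ containing $e_1, e_2$ that visits all four split nodes. I would add two auxiliary edges $\{C_1^{u_1}, C_1^{\overline{u_1}}\}$ and $\{C_2^{u_2}, C_2^{\overline{u_2}}\}$ to $\tilde S$, and either iterate the cycle-through-two-edges construction to get a cycle through all four marked edges, or directly use Menger to find two internally node-disjoint paths in $\tilde S$ from $X_1$ to $X_2$ such that one traverses $C_1^{\overline{u_1}}$ and the other $C_2^{\overline{u_2}}$, closing the cycle with $e_1, e_2$.

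The main obstacle I expect is verifying that $\tilde S$ has sufficient connectivity for such paths to exist. Non-triviality of $C_1^{\overline{u_1}}$ (i.e., that it has at least one incident edge in $\tilde S$) follows from the 2-vertex-connectivity of $G$: if every edge of $\hG_H|_S$ incident to $C_1$ had its $V(C_1)$-endpoint at $u_1$, then $u_1$ would be a cut vertex of $G$ separating $V(C_1)\setminus\{u_1\}$ from $V(S)\setminus V(C_1)$, contradicting that $G$ is structured. The full path-existence argument likely requires a careful case split on small segments, in particular $|V(S)|=2$ where only parallel edges between $C_1$ and $C_2$ are available; in such degenerate cases one invokes the structured-graph assumptions (e.g., no non-isolating 2-vertex cuts, no large $\mathcal{C}_k$ cuts) to rule out obstructions. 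All steps can be implemented in polynomial time using standard 2-VC decomposition and internally-disjoint-path algorithms.
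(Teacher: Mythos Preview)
This lemma is not proved in the present paper; it is imported verbatim from~\cite{BGGHJL25} and used as a black box, so there is no in-paper argument to compare your proposal against.

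That said, your plan is essentially correct, and the obstacle you flag dissolves more easily than you expect. First, your cut-vertex argument for the non-triviality of $C_1^{\overline{u_1}}$ goes through even when $C_1$ is a cut node of $\hG_H$: every edge of $\hG_H$ from $C_1$ into the component of $\hG_H\setminus\{C_1\}$ that contains $S\setminus\{C_1\}$ lies in the block $S$, so if all of them land at $u_1$ then $u_1$ truly separates that entire side of $G$ from $V(C_1)\setminus\{u_1\}$. Second, once all four split nodes carry at least one non-auxiliary edge (the halves $C_i^{u_i}$ via the given $e_i$, the halves $C_i^{\overline{u_i}}$ via the cut-vertex argument), the augmented graph $\tilde S^{+}$ is $2$-vertex-connected outright, by the following reusable observation: if $G'$ is $2$-vertex-connected and one splits a node $v$ into $v_a,v_b$ joined by an auxiliary edge, with each of $v_a,v_b$ inheriting at least one original edge, the result is again $2$-vertex-connected (removing any $D\neq v_a,v_b$ leaves a graph that contracts along the auxiliary edge onto the connected $G'\setminus D$; removing $v_a$ leaves $v_b$ attached via its inherited edge to the connected $G'\setminus v$). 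Applying this twice, first to $(S,C_1)$ and then to $(\tilde S_1^{+},C_2)$, gives $2$-vertex-connectivity of $\tilde S^{+}$ uniformly, with no case split on $|V(S)|$ and no structured-graph hypotheses beyond the $2$-vertex-connectivity of $G$. A single invocation of the two-edges-on-a-common-cycle fact on the two auxiliary edges of $\tilde S^{+}$ then yields the desired $F$; your preliminary Step~1 (a cycle through $e_1,e_2$ in $S$) is not needed.
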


\begin{lemma}[Lemma 11 in~\cite{BGGHJL25}]\label{lem:cycle-through-cfour}
	Let $S$ be a segment of $\hG_H$ and let $C_1, C_2$ be two different components of $S$ such that
	$C_1$ is a $4$-cycle. Given an edge $u_2 X$, where $u_2 \in V(C_2)$ and $X \in V (S) \setminus \{C_2\}$,
	one can compute in polynomial time a cycle $F$ in $S$ incident to distinct vertices $u_i, v_i \in V (C_i)$ for
	$i \in \{1, 2\}$ such that there is a Hamiltonian $u_1, v_1$-path in $G[V(C_1)]$.
\end{lemma}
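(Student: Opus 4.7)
I would find two adjacent vertices $u_1, v_1$ of the $4$-cycle $C_1$, each incident in $G$ to some vertex of $V(S)\setminus V(C_1)$, and then build the cycle $F$ by stitching together two internally node-disjoint paths in $S$ from $\{u_1,v_1\}$ to $\{u_2,X\}$, concatenated with the length-$3$ path of $C_1$ from $u_1$ to $v_1$ (which is a Hamiltonian $u_1,v_1$-path in $G[V(C_1)]$).

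Label $V(C_1) = \{a_1, a_2, a_3, a_4\}$ in cyclic order and let $U \subseteq V(C_1)$ denote the endpoints in $V(C_1)$ of edges of $S$ incident to the node $C_1$. First I would argue $|U|\geq 2$: using that $S$ is $2$-vertex-connected on nodes, that $G$ has no cut vertex, and that \Cref{def:structured} excludes large $\cFour$ cuts and large $4$-vertex cuts, one can rule out $|U|=1$, since otherwise the single vertex of $U$ would either be a cut vertex of $G$ (if $a_2, a_3, a_4$ have no $G$-neighbors outside $V(C_1)$) or force $V(C_1)$ to be a small $\cFour$ cut of $G$, leading to a contradiction in combination with the canonical invariants of $H$. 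Next, the key claim is that $U$ contains a pair of vertices adjacent in $C_1$. If $|U|\geq 3$ this follows immediately by pigeonhole on a $4$-cycle. The delicate case is $|U|=2$ with the two vertices diagonal, say $U=\{a_1,a_3\}$: then $\{a_1,a_3\}$ separates $\{a_2,a_4\}$ from $V(S)\setminus V(C_1)$ in $G[V(S)]$. If neither $a_2$ nor $a_4$ has $G$-neighbors outside $V(C_1)$ at all, $\{a_1,a_3\}$ is a non-isolating $2$-vertex cut of $G$, contradicting \Cref{def:structured}. Otherwise $V(C_1)$ becomes a small $4$-vertex cut and a small $\cFour$ cut of $G$; by combining the preprocessing constraints with the $3$-Matching Lemma (\Cref{lem:3-matching}), I would either derive a contradiction or force the presence of the diagonal chord $a_2 a_4$ in $G$, in which case $a_1\text{-}a_2\text{-}a_4\text{-}a_3$ is a Hamiltonian $a_1,a_3$-path in $G[V(C_1)]$ and the diagonal pair may be used in place of an adjacent one.

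Once such $u_1, v_1$ (with external edges $u_1 X_1$ and $v_1 X_1'$, possibly $X_1 = X_1'$) are identified, I would invoke a slight strengthening of \Cref{lem:niceCycle} in which \emph{both} external edges at $C_1$, in addition to the given edge $u_2 X$ at $C_2$, are prescribed. This strengthening follows from essentially the same Menger-style argument as for \Cref{lem:niceCycle}: contract $C_1$ inside $S$ to a single auxiliary edge between $u_1$ and $v_1$ (routing external edges of $C_1$ to the appropriate endpoint), extract a cycle through the three prescribed edges in the resulting $2$-vertex-connected multigraph using two internally node-disjoint paths between the relevant pairs, and finally expand the auxiliary edge back to the Hamiltonian $u_1, v_1$-path through $C_1$ to obtain the desired cycle $F$ in $S$. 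The main obstacle is the diagonal subcase in the paragraph above: combining the exclusion of non-isolating $2$-vertex cuts, small $\cFour$ cuts, and small $4$-vertex cuts from \Cref{def:structured} with the canonical structure of $H$ (\Cref{def:canonicalD2}) is essential in order to either rule out the configuration or force the existence of the diagonal chord that makes the diagonal pair usable.
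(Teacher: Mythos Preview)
The paper does not give its own proof of this statement; it is quoted as Lemma~11 of \cite{BGGHJL25} and used as a black box. So there is no proof in the present paper to compare your outline against.

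Assessing your outline on its own, there are two gaps. The first is in your handling of $|U|\le 2$. You argue that $V(C_1)$ being a \emph{small} $\cFour$ cut or a \emph{small} $4$-vertex cut yields a contradiction, but \Cref{def:structured} only excludes \emph{large} such cuts, so nothing follows, and the claimed forcing of the chord $a_2a_4$ is unsupported. This whole case analysis is moot anyway: the Local $3$-Matching Lemma (\Cref{lem:local3matching}), applied to the partition $\{C_1\}$ versus the remaining nodes of $S$, yields three matching edges incident to three distinct vertices of $V(C_1)$, hence $|U|\ge 3$ and an adjacent pair exists immediately. (You cite the global \Cref{lem:3-matching}, but that only gives external edges in $G$, not necessarily inside the segment $S$.)

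The second gap is more serious. Even once you have an adjacent pair $u_1,v_1$ with external edges, the ``slight strengthening of \Cref{lem:niceCycle}'' you invoke is the actual content of the lemma, and your sketch does not prove it. Replacing $C_1$ by an auxiliary edge $u_1v_1$ and ``routing external edges to the appropriate endpoint'' leaves the edges at $a_3,a_4$ without a canonical image: if you route them arbitrarily, the extracted cycle may use one of them and thus hit $a_3$ or $a_4$ after re-expansion; if you discard them, the auxiliary graph need not remain $2$-vertex-connected (for instance when the chosen edges at $u_1$ and $v_1$ go to the same neighbouring node). Getting the cycle to hit an \emph{adjacent} pair at $C_1$ while still passing through $C_2$ at the prescribed vertex $u_2$ genuinely requires more than a single application of \Cref{lem:niceCycle}; one has to exploit the third matching edge at $C_1$ (e.g.\ to reroute when the direct application returns the diagonal incidence, or to choose $u_1$ so that the missing fourth vertex cannot be hit), and this step is precisely what your outline does not supply.
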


\begin{proof}[Proof of~\cref{lem:gluing:trivial-segment}]
	Our goal is to construct a canonical bridgeless 2-edge cover $H'$ of $G$ with fewer components compared to $H$ such that $H'$ contains a huge component and $\cost(H')\leq\cost(H)$.
	Let $L$ be the node corresponding to the huge component in $S$, and $A$ be the other node in $S$.

	If $C_A$ is a rich vertex, then by \cref{obs:rich-node} and \cref{obs:cut-pendant}, $A$ is a pendant node of $\hG_H$.
	By the definition of a rich vertex, $C_A$ is incident to at least $2$ edges in $C_L$.
	Then, we simply add any two edges between $C_L$ and $C_A$ to $H$ to obtain $H'$, and it can be easily checked that $H'$ meets our goal as $\cost(H') - \cost(H) = (|H'| - |H|) + (\credit(H') - \credit(H))  \leq 2 + (2-2-2)\leq 0$ (we buy 2 edges, $\credit(C_L)=2, \credit(C_A)\geq 2$, and, in $H'$, $C_L$ and $C_A$ combine to form a single component with credit at least $2$).

	From now on, we assume that $C_A$ is not a rich vertex.
	By \Cref{pros:3-matching-in-small-segment}, we know that there must be a 3-matching $M_3$ between $C_L$ and $C_A$.
	First, note that if $C_A$ is a large component, we simply add any two edges of $M_3$ to $H$ to obtain $H'$, and it can be easily checked that $H'$ meets our goal as $\cost(H') - \cost(H) = (|H'| - |H|) + (\credit(H') - \credit(H))  \leq 2 + (2-2-2)\leq 0$ (we add 2 edges, $\credit(C_L)=\credit(C_A)=2$, and, in $H'$, $C_L$ and $C_A$ combine to form a single component of credit $2$). Thus, we can assume that $C_A$ is $\mathcal C_i$ for some $4 \leq i \leq 8$.

	We now show that we can always apply~\cref{lem:cycle-hamiltonian}. If $C_A$ is a $\cFour$ or a $\cFive$, then in either case, there must be two distinct edges in the matching $M_3$, say $e_1$ and $e_2$, whose endpoints in $C_A$ are adjacent via an edge $e$ in $C_A$. Thus, the conditions of~\cref{lem:cycle-hamiltonian} hold by setting $P=E(C_A)\setminus \{e\}$. Otherwise, $C_A$ is $\mathcal C_i$ for some $6 \leq i \leq 8$.

	\textbf{Case 1: $A$ is not a pendant node in $\hG_H$.}
	Consider partitioning $V(G)$ in two parts to have $V(C_L)$ in one part $V_1$ and $V(C_A)$ in another part $V_2$ such that the only edges crossing the parts are between $V(C_L)$ and $V(C_A)$.
	Since $L$ is huge, $|V_1|\geq 11$.

	\textbf{Case 1.1: $|V_2|\geq 11$.}

	Then \Cref{lem:4-matching} is applicable, implying a 4-matching $M_4$ between $C_A$ and $C_L$.
	If $C_A$ is a $\cSix$ or $\cSeven$,
	as before, there must exist edges $e_1$ and $e_2$ whose endpoints in $C_A$ are adjacent via an edge $e$ in $C_A$. Again, we can apply~\cref{lem:cycle-hamiltonian} by setting $P= E(C_A)\setminus \{e\}$.
	If $C_A$ is a $\cEight$ and the conditions of~\cref{lem:cycle-hamiltonian} do not hold, then $L$ must be adjacent to every other vertex of $C_A$.
	Let the vertices of the cycle $C_A$ be labeled as  $v_1 - v_2 - v_3 - v_4 - v_5 - v_6 - v_7-v_8-v_1$ and without loss of generality assume that $L$ is adjacent to $v_i$ for $i \in \{1,3,5,7\}$.
	Let $S'$ be any segment of $\hG_H$ containing $A$ but not containing $L$.

	We say an edge in $G[V(C_A)]$ is a nice diagonal if both its endpoints are in $\{v_2,v_4,v_6,v_8\}$.
	If there exists a nice diagonal $e$ in $G[V(C_A)]$,
	then observe that there are two distinct edges in $M_4$ that are incident to say $v_i$ and $v_j$ in $C_A$, and there is a Hamiltonian Path $P\subseteq \{e\}\cup E(C_A)$ from $v_i$ to $v_j$ in $G[V(C_A)]$.
	So we can apply~\cref{lem:cycle-hamiltonian}.
	If there is no nice diagonal, we claim that there is some edge $v_jX$ for some $j \in \{2,4,6,8\}$ and some $X \in V(S') \setminus A$.
	Otherwise, the $4$ sets of edges of $G$, namely, edges incident to $v_2$, edges incident to $v_4$, edges incident to $v_6$, and edges incident to $v_8$ are all inside $G[V(C_A)]$ and mutually disjoint.
	As any 2ECSS of $G$ must contain at least $2$ edges incident on each vertex, it has to contain at least $8$ edges from $G[V(C_A)]$, and hence $C_A$ is contractible.
	We can assume without loss of generality that $v_j=v_2$.

	Suppose $S'$ does not contain any component that is a $\cFour$.
	Then each component (which can be a rich vertex) has at least $5\cre$ credits.
	We apply~\Cref{lem:niceCycle} to $A$ and any other node in $S'$ to find a cycle $F$ in $S'$ containing $A$.
	Further, $F$ is incident to $u_2$ and some other vertex $u_k$ ($k \neq 2$).
	Let $H':= (H\setminus\{v_1v_2\}) \cup F \cup \{v_1L, v_3L\}$.
	Notice that $|H'|=|H|+|F|+1$.
	In $H'$ there is a large component $C'$ spanning the nodes of all components of $H$ incident on $F$ and the component $C_L$.
	Every such component yields at least $5\cre$ credits, while $L$ yields $2$ credits and $A$ yields $8\cre$ credits.
	Then we have $\credit(H') \leq \credit(H)-(|F|-1)\cdot 5\cdot \cre-\credit(L)-\credit(A) +\credit(C')\leq \credit(H)-(|F|-2)\cdot 5\cre-5\cre-2-8\cre+2 = \credit(H)-(|F|-2)-13\cre \leq \credit(H)-|F|-1$.
	Thus $\cost(H)-\cost(H') = |H|-|H'|+\credit(H)-\credit(H') \geq -|F|-1 +|F|+1 \geq 0$.

	If $H'$ contains another node $B$ that is a $\cFour$,
	we apply~\Cref{lem:cycle-through-cfour} to find a cycle $F$ in $S$ incident to distinct nodes $b_1, b_2 \in V (B)$ such that there is a Hamiltonian $b_1, b_2$-path $P$ in $G[V(B)]$, and incident to $v_2$ and some $v_k$ for some $k \neq 2$.
	Set $H' := (H \setminus (E(C_B)\cup \{v_1v_2\})) \cup \{v_1L,v_3L\}  \cup F \cup P$.
	Notice that $|H'|=|H|+|F|$.
	In $H'$, there is a large component $C'$ spanning the nodes of all components of $H$ incident on $F$ and the component $C_L$. Every such component yields at least $1$ credit, while $L$ yields $2$ credits.
	Then we have $\credit(H') \leq \credit(H)- (|F|-2)-\credit(L)-\credit(A)-\credit(B)+\credit(C')=\credit(H) -(|F|-2)-2-1-1+2 \leq \credit(H) -|F|$.
	Hence $\cost(H)-\cost(H')=|H|-|H'|+\credit(H)-\credit(H') \geq -|F|+|F| \geq 0$.

	\textbf{Case 1.2: $|V_2|\leq 10$.}
	Since $|A| \geq 6$, the only possible case is that $A$ has a neighbor $B \neq L$ in $\hG_H$ where $B$ is a $\cFour$.
	If the conditions of~\cref{lem:cycle-hamiltonian} do not hold, then $L$ must be adjacent to every other vertex of $C_A$.
	Let the vertices of the cycle $C_A$ be labeled as  $v_1 - v_2 - v_3 - v_4 - v_5 - v_6-v_1$ and without loss of generality assume that $L$ is adjacent to $v_i$ for $i \in \{1,3,5\}$.
	Let $S$ be the segment of $\hG_H$ containing $A$ and $B$.
	There must be some edge $v_jX$ for some $j \in \{2,4,6\}$ and some $X \in V(S) \setminus A$, as otherwise $C_A$ is contractible; the arguments are essentially the same as in Case 1.1.
	We can assume without loss of generality that $v_j=v_2$.
	We apply~\Cref{lem:cycle-through-cfour} to find a cycle $F$ in $S$ incident to distinct nodes $b_1, b_2 \in V (B)$ such that there is a Hamiltonian $b_1, b_2$-path $P$ in $G[V(B)]$, and incident to $v_2$ and some $v_k$ for some $k \neq 2$.
	Set $H' := (H \setminus (E(C_B)\cup \{v_1v_2\})) \cup \{v_1L,v_3L\}  \cup F \cup P$.
	Notice that $|H'|=|H|+|F|$.
	In $H'$ there is a large component $C'$ spanning the nodes of all components of $H$ incident on $F$ and the component $C_L$. Every such component yields at least $1$ credit, while $L$ yields $2$ credits.
	Then we have $\credit(H') \leq \credit(H)- (|F|-2)-\credit(L)-\credit(A)-\credit(B)+\credit(C')=\credit(H) -(|F|-2)-2-1-1+2 \leq \credit(H) -|F|$.
	Hence, $\cost(H)-\cost(H')=|H|-|H'|+\credit(H)-\credit(H') \geq -|F|+|F| \geq 0$.

	\textbf{Case 2: $A$ is a pendant node in $\hG_H$.}

	\textbf{Case 2.1: $C_A$ is a $\cSix$ or $\cSeven$.}

	Assuming that the condition of \cref{lem:cycle-hamiltonian} does not hold, we prove that $C_A$ is contractible, which contradicts the fact that $G$ is structured. Let the vertices of the cycle $C_A$ be labeled as  $v_1 - v_2 - v_3 - v_4 - v_5 - v_6 - v_1$ if $C_A$ is a $\cSix$ or $v_1 - v_2 - v_3 - v_4 - v_5 - v_6 - v_7 - v_1$ if $C_A$ is a $C_7$.
	Since the condition of \cref{lem:cycle-hamiltonian} does not hold and there is the 3-matching $M_3$ between $C_L$ and $C_A$, $C_L$ is adjacent to exactly three mutually non-adjacent vertices of $C_A$ via $M_3$, say w.l.o.g.\ $\{v_1, v_3, v_5\}$ (up to symmetry this is the only possibility in both the $\cSix$ and $\cSeven$ cases).
	We say an edge in $G[V(C_A)]$ is a nice diagonal if both its endpoints are in $\{v_2,v_4,v_6\}$.
	Therefore, the set of nice diagonals are $\{ v_2 v_4, v_4 v_6, v_6 v_1 \}$.
	If there exists a nice diagonal $e$ in $G[V(C_A)]$,
	then observe that there are two distinct edges in $M_3$ that are incident to say $u$ and $v$ in $C_A$, and there is a Hamiltonian Path $P\subseteq \{e\}\cup E(C_A)$ from $u$ to $v$ in $G[V(C_A)]$.
	For example, if the diagonal $v_2 v_4$ exists in $G$ and $C_A$ is $\cSix$, then the desired Hamiltonian path is $v_3 -v_2-v_4 -v_5 -v_6- v_1$ (for $\cSix$, up to symmetry, this is the only case; the $\cSeven$ case is left to the reader to verify, a simple exercise).
	Hence, the condition of \cref{lem:cycle-hamiltonian} holds, a contradiction.
	Otherwise, if none of these nice diagonals are in $G$, we can conclude that $C_A$ is contractible: The three sets of edges of $G$, namely, edges incident to $v_2$, edges incident to $v_4$, and edges incident to $v_6$ are all inside $G[V(C_A)]$ and mutually disjoint.
	As any 2ECSS of $G$ must contain at least $2$ edges incident on each vertex, it has to contain at least $6$ edges from $G[V(C_A)]$, and hence $C_A$ is contractible.

	\textbf{Case 2.2: $C_A$ is a $\cEight$.}
	Let $C_A$ be $v_1 - v_2 - v_3 - v_4 - v_5 - v_6 - v_7 -v_8 - v_1$.
	Since there is a $3$-matching between $C_A$ and $C_L$, there are at least $3$ vertices of $C_A$ adjacent to $C_L$.
	Since the condition of \cref{lem:cycle-hamiltonian} does not hold, up to symmetry,
	$C_L$ is adjacent to exactly $\{v_1,v_3,v_5,v_7\}$, or $\{v_1,v_3,v_5\}$, or $\{v_1, v_3, v_6\}$.

	\textbf{Case 2.2.1: $C_L$ is adjacent to exactly $\{v_1,v_3,v_5,v_7\}$.}
	Similar to Case 2.1, we can show that there is no edge between $v_i$ and $v_j$ for even values of $i$ and $j$, which implies that $C_A$ is contractible.

	\textbf{Case 2.2.2: $C_L$ is adjacent to exactly $\{v_1,v_3,v_5\}$.}
	Similar to Case 2.1, we can show that none of the diagonals $\{v_2v_4,v_2v_6,v_4v_6,v_2v_8,v_4v_8\}$ exists in $G$.
	Hence, the $4$ sets of edges of $G$, namely, edges incident to $v_2$, edges incident to $v_4$, edges incident to $v_6$, and edges incident to $v_8$ are all inside $G[V(C_A)]$ and mutually disjoint except $v_6v_8$.
	As any 2ECSS of $G$ must contain at least $2$ edges incident to each of $v_2,v_4,v_6,v_8$, it has to contain at least $7$ edges (not $8$ due to the possible existence of $v_6v_8$) from $G[V(C_A)]$, and hence $C_A$ is contractible.

	\textbf{Case 2.2.3: $C_L$ is adjacent to exactly $\{v_1,v_3,v_6\}$.}
	Similar to Case 2.1, we can show that none of the diagonals $\{v_2v_4,v_2v_5,v_2v_6,v_2v_8\}$ exists in $G$.
	Hence any 2ECSS of $G$ has to contain contain at least $2$ edges whose other endpoints are not in $\{v_4,v_5,v_7,v_8\}$.
	Further, any 2ECSS of $G$ has to contain contain at least $5$ edges incident to vertices in $\{v_4,v_5,v_7,v_8\}$, which implies at least $7$ edges in $G[V(C_A)]$.
	We conclude that $C_A$ is contractible.
\end{proof}

\subsubsection{Proof of~\cref{lem:gluing:non-trivial-segment}}

\begin{lemma}[Local $3$-Matching Lemma]\label{lem:local3matching}
	Let $S$ be a segment of $\hat{G}_H$ and let $(\hat V_1, \hat V_2)$ be a partition of the nodes of $S$ such that $\hat V_1 \neq \emptyset \neq \hat V_2$. Let $V_i=\cup_{\hat{C}_i\in \hat{V}_i}V(C_i)$ be the set of vertices of $G$ that correspond to $\hat V_i$, for $i\in\{1, 2\}$. Then, there is a matching of size $3$ between $V_1$ and $V_2$ in $G$.
\end{lemma}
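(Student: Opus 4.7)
The plan is to reduce the claim to the (global) $3$-Matching Lemma (\cref{lem:3-matching}) by extending the given partition $(\hat V_1,\hat V_2)$ of the nodes of $S$ to a partition $(V_1',V_2')$ of all of $V(G)$, in such a way that no new cross-edges appear.

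For each node $B$ of $S$ that is a cut node of $\hG_H$, I define $T_B$ to be the node set of the connected component of $\hG_H - (V(S)\setminus\{B\})$ that contains $B$, minus $B$ itself; informally, $T_B$ is the ``branch'' of $\hG_H$ hanging off $S$ at $B$. Using the block-cut-tree structure of the 2-edge-connected graph $\hG_H$ together with maximality of $S$ as a $2$-vertex-connected subgraph, I would verify that the family $\{T_B\}$ (ranging over cut nodes $B$ of $\hG_H$ lying in $V(S)$) partitions $V(\hG_H)\setminus V(S)$, and that in $\hG_H$ there is no edge out of $T_B$ except to $B$. In particular, there is no edge of $\hG_H$ between $T_B$ and $T_{B'}$ for distinct $B\neq B'$, and no edge from $T_B$ to a node in $V(S)\setminus\{B\}$.

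I would then set $V_i' := V_i\cup\bigcup_{B\in\hat V_i,\,B\text{ cut in }\hG_H}\,\bigcup_{A\in T_B}V(C_A)$ for $i\in\{1,2\}$, so that $(V_1',V_2')$ is a partition of $V(G)$. By the observations above, every edge of $G$ between $V_1'$ and $V_2'$ must have both endpoints inside $V_1\cup V_2$: cross-branch edges are forbidden, and any edge from a branch $T_B$ on one side to an $S$-node on the other side would contradict the branch-separation property. Consequently, any matching of size $3$ between $V_1'$ and $V_2'$ in $G$ is automatically a matching of size $3$ between $V_1$ and $V_2$, so it suffices to invoke \cref{lem:3-matching} on $(V_1',V_2')$.

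The main obstacle is verifying the size and triangle hypotheses of \cref{lem:3-matching}, namely $|V_i'|\geq 3$ and that $G[V_i']$ is a triangle when equality holds. Since every non-rich component of $H$ has at least $4$ vertices and a rich vertex is never a cut node of $\hG_H$ by \cref{obs:rich-node}, the only obstructions arise when $\hat V_i$ consists solely of at most three rich-vertex components. I would dispose of these corner cases directly, leveraging the $2$-vertex-connectivity of $S$ in $\hG_H$ together with the structured property of $G$: if at most two vertices sufficed to cover all cross-edges of the bipartite graph between $V_1$ and $V_2$ (as would be forced by a would-be counterexample via König's theorem), the branch-separation property derived above would make those two vertices into a $2$-vertex cut of $G$, contradicting the absence of non-isolating $2$-vertex cuts in structured graphs.
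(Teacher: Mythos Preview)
The paper states this lemma without proof; your reduction to the global $3$-Matching Lemma via the block--cut-tree structure of $\hG_H$ is the natural argument and is correct as far as it goes. The branch-separation property you describe holds, so every edge of $G$ between $V_1'$ and $V_2'$ already crosses $(V_1,V_2)$, and once the size/triangle hypothesis of \cref{lem:3-matching} is met you are done.

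The gap is in your corner-case treatment. If $\hat V_1=\{r\}$ for a single rich vertex, then $V_1'=\{r\}$ (rich vertices are not cut nodes, so no branch is attached) and the maximum matching across the cut has size~$1$: the lemma is literally false in this situation. Your K\"onig argument cannot manufacture a contradiction here---the minimum vertex cover is $\{r\}$ itself, which is not a cut vertex of the $2$VC graph $G$, and any size-$2$ cover taken on the other side produces only an \emph{isolating} $2$-vertex cut, which structured graphs are permitted to have. The same obstruction arises when $\hat V_1$ consists of two rich vertices. In every place the paper actually invokes \cref{lem:local3matching}, each $\hat V_i$ contains a non-rich component (hence $|V_i'|\ge 4$), which is the intended implicit hypothesis; under that reading your main argument is complete and the corner-case discussion can be dropped entirely.
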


\begin{lemma}[Local $4$-Matching Lemma]\label{lem:local4matching}
	Let $S$ be a segment of $\hat{G}_H$ and let $(\hat V_1, \hat V_2)$ be a partition of the nodes of $S$ such that $\hat V_1 \neq \emptyset \neq \hat V_2$. Let $V_i=\cup_{\hat{C}_i\in \hat{V}_i}V(C_i)$ be the set of vertices of $G$ that correspond to $\hat V_i$, for $i\in\{1, 2\}$.
	If $|V_i| \geq 11$ for $i\in\{1, 2\}$,
	then there is a matching of size $4$ between $V_1$ and $V_2$ in $G$.
\end{lemma}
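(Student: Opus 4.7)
The plan is to reduce the local statement to the global $4$-Matching Lemma (\Cref{lem:4-matching}) by extending $(\hat V_1, \hat V_2)$ to a bipartition of all of $V(\hG_H)$ in such a way that no new crossing edges are introduced in $G$.

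The key observation is that the block-cut tree of $\hG_H$ (which is connected because $\hG_H$ is $2$-edge connected) decomposes, upon removal of the block node $S$, into subtrees, one rooted at each cut vertex of $\hG_H$ lying in $V(S)$. For each such cut vertex $v$, let $N_v \subseteq V(\hG_H) \setminus V(S)$ denote the non-$S$ nodes appearing in the corresponding subtree; the sets $\{N_v\}_v$ partition $V(\hG_H) \setminus V(S)$. I would then define an extended bipartition $(\hat V_1', \hat V_2')$ of $V(\hG_H)$ by placing each $N_v$ on the same side as $v$. This induces a bipartition $(V_1', V_2')$ of $V(G)$ with $|V_i'| \geq |V_i| \geq 11$ for both $i$, so \Cref{lem:4-matching} applies and yields a matching $M$ of size $4$ in $G$ between $V_1'$ and $V_2'$.

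It then remains to show that every edge of $M$ in fact goes between $V_1$ and $V_2$. Fix $uv \in M$ with $u \in C_A$, $v \in C_B$, $A \in \hat V_1'$, $B \in \hat V_2'$; since $A \neq B$, the pair $uv$ corresponds to an edge $AB$ of $\hG_H$, which lies in a unique block of $\hG_H$. If $A \notin V(S)$, then $A$ lies in some $N_v$, and every block of $\hG_H$ containing $A$ is a block of the subtree rooted at $v$, whose only vertex in $V(S)$ is $v$. Hence $B \in \{v\} \cup N_v$, and by construction the extension rule places all of $\{v\} \cup N_v$ on the same side as $v$ in $(\hat V_1',\hat V_2')$, contradicting that $A$ and $B$ are on opposite sides. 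Therefore $A \in V(S)$, giving $A \in \hat V_1$ and $u \in V_1$; a symmetric argument shows $v \in V_2$. Thus $M$ is the desired $4$-matching between $V_1$ and $V_2$.

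The main obstacle is the block-cut claim used in the last step: that every neighbor in $\hG_H$ of a non-$S$ node $A \in N_v$ lies in $\{v\} \cup N_v$. This is a standard consequence of the block-cut decomposition but must be stated with care, since $\hG_H$ is a multigraph and since the argument should gracefully handle corner cases such as $S$ containing no cut vertex of $\hG_H$ (in which case no extension is needed and \Cref{lem:4-matching} applies directly to $(V_1, V_2)$). Presumably the analogous Local $3$-Matching Lemma would be proved by exactly the same reduction, replacing \Cref{lem:4-matching} with \Cref{lem:3-matching}.
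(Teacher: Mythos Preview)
Your proposal is correct. The paper states the Local $4$-Matching Lemma (and the analogous Local $3$-Matching Lemma) without proof, so there is no explicit argument to compare against; your reduction to the global $4$-Matching Lemma via the block--cut decomposition of $\hG_H$, extending the partition by attaching each subtree $N_v$ to the side of its cut vertex $v$, is precisely the standard argument one would expect and is sound as written.
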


\begin{lemma}[2 edges between $L$ and a $\cFive$]\label{lem:L-C5}
	Let $S$ be a segment of $\hat{G}_S$ with $|K| \geq 3$ and each component in $S$ has at least $5$ vertices.
	Let $L$ be a node in $S$ such that $C_L$ is a huge component and let $A$ be a node in $S$ such that $C_A$ is a $\cFive$.
	If there is a 2-matching $M_2$ between  $C_L$ and $C_A$, then in polynomial time we can compute a canonical bridgeless 2-edge cover $H'$ of $H$ such that $H'$ has fewer components than $H$, $H'$ contains a huge component, and $\cost(H') \leq \cost(H)$.
\end{lemma}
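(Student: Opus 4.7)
The plan is to build a cycle $F$ in $S$ through $L$ and $A$ that is incident on two adjacent vertices of $C_A$, and then invoke \cref{lem:cycle-hamiltonian}: since any two adjacent vertices of a $\cFive$ are joined by a Hamiltonian path along its four remaining edges, that lemma will immediately output the desired $H'$ with $\cost(H')\leq\cost(H)$, and the new merged component contains $V(C_L)$, so it remains huge.

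Write $M_2 = \{u_1v_1, u_2v_2\}$ with $u_1, u_2\in V(C_A)$ and $v_1, v_2\in V(C_L)$. If $u_1u_2\in E(C_A)$, I take $F$ to consist of $u_1v_1$, $u_2v_2$, and any simple $v_1$-to-$v_2$ path inside $C_L$ (which exists because $C_L$ is $2$EC); the two $V(C_A)$-endpoints of $F$ are the adjacent pair $u_1, u_2$, so \cref{lem:cycle-hamiltonian} finishes this case immediately.

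Otherwise $u_1, u_2$ lie at distance $2$ in $C_A$, with a unique common neighbor $w$; label the remaining two vertices $a,b$ so that $C_A$ has the edges $u_1w, wu_2, u_2a, ab, bu_1$. I apply \cref{lem:local3matching} to the partition $(\{A\}, V(S)\setminus\{A\})$, obtaining a $3$-matching between $V(C_A)$ and $V(S)\setminus V(C_A)$; since the three $V(C_A)$-endpoints of this matching are distinct and $|\{u_1,u_2\}|=2$, at least one of its edges $zx$ satisfies $z\in\{w,a,b\}$. On the $\cFive$ $C_A$, every vertex in $\{w,a,b\}$ is adjacent to at least one of $u_1, u_2$, so I fix $i\in\{1,2\}$ with $zu_i\in E(C_A)$. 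I then apply \cref{lem:niceCycle} to $(C_1, C_2) = (C_A, C_L)$ with the input edges $zx$ and $v_iu_i$ (so $X_1$ is the component of $S$ containing $x$, possibly equal to $C_L$, and $X_2 = C_A$). The resulting cycle $F$ is in $S$, contains the nodes $C_A$ and $C_L$, and contains both input edges; since $F$ is a simple cycle of $\hG_H$ and both $zx$ and $v_iu_i$ are incident to $C_A$, these are exactly the two edges of $F$ incident to $C_A$, and hence the two $V(C_A)$-endpoints of $F$ are precisely $z$ and $u_i$. Because $zu_i\in E(C_A)$, \cref{lem:cycle-hamiltonian} applies with Hamiltonian path $E(C_A)\setminus\{zu_i\}$ to produce $H'$.

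The main step requiring care is verifying that $zx$ and $v_iu_i$ are valid inputs to \cref{lem:niceCycle} in the sub-case $x\in V(C_L)$: here $X_1 = C_L = C_2$, but \cref{lem:niceCycle} only forbids $X_1 = C_1$ and $X_2 = C_2$, so this configuration is allowed, and distinctness of the two edges as elements of $E(G)$ follows from $z\neq u_i$. In the sub-case $x\in V(C_B)$ for a third component $C_B$ of $S$, the same reasoning carries through verbatim, completing the proof.
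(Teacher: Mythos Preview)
Your proof has a genuine gap in the non-adjacent case. You claim that the cycle $F$ returned by \cref{lem:niceCycle} ``contains both input edges'', and from this you deduce that the two $V(C_A)$-incidence points of $F$ are exactly $z$ and $u_i$. But \cref{lem:niceCycle} makes no such promise: its conclusion only says $F$ is incident on the designated vertex in each component (here $z\in V(C_A)$ and $v_i\in V(C_L)$) together with \emph{some other}, unspecified, vertex of that component. So all you actually get is that $F$ touches $C_A$ at $z$ and at an uncontrolled second vertex $z'\neq z$. In a $\cFive$, the vertex $z$ has two non-neighbours, so $z'$ need not be adjacent to $z$, and \cref{lem:cycle-hamiltonian} cannot be applied.

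The paper avoids this by constructing $F$ explicitly rather than via \cref{lem:niceCycle}. When $x\in V(C_L)$, the edges $zx$ and $u_iv_i$ themselves form a $2$-cycle in $\hG_H$ touching $C_A$ at the adjacent pair $z,u_i$, which reduces to the adjacent case. When $x\in V(C_B)$ with $B\notin\{L,A\}$, the $2$-node-connectivity of the segment $S$ yields a path $P$ from $L$ to $B$ in $S$ avoiding $A$, and then $F\coloneq P\cup\{u_iv_i,\,zx\}$ is a cycle through $L$ and $A$ whose two $C_A$-incident edges are \emph{by construction} $u_iv_i$ and $zx$, hence with adjacent $C_A$-endpoints. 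This direct control over both edges at $C_A$ is exactly the missing ingredient in your argument. (A minor side issue: in your adjacent case, the $F$ you describe, which includes a $v_1$--$v_2$ path inside $C_L$, is a cycle in $G$, not in the segment $S$ of $\hG_H$; to match the hypothesis of \cref{lem:cycle-hamiltonian} you should instead take $F=\{u_1v_1,u_2v_2\}$ as a $2$-cycle of $\hG_H$.)
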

\begin{proof}
	Let the vertices in $C_A$ be labeled as $v_1-v_2-v_3-v_4-v_5-v_1$.
	If the two matching edges are incident to adjacent vertices in $C_A$, say $v_1$ and $v_2$, then we can set $H' \coloneq (H \setminus \{v_1v_2\}) \cup M_2$.
	We have $\cost(H') - \cost(H) = (|H'| - |H|) + (\credit(H') - \credit(H))  = (2 - 1) + (2-2-\credit(C_A))\leq 0$, since we added two edges, effectively removed 1 edge, $\credit(C_L)=2$ and $\credit(C_A)\geq 1$ in $H$, whereas in $H'$ those combine to form a single component with credit 2. Thus, $\cost(H') \leq \cost(H)$, meeting our goal.

	In the following, we assume that the two matching edges are incident to non-adjacent vertices in $C_A$, say $v_1$ and $v_3$ without loss of generality.
	There must be a $3$-matching between $C_A$ and $V(S)\setminus C_A$
	by~\Cref{lem:local3matching}.
	So there exists some edge between $\{v_2,v_4,v_5\}$ and $V(S)\setminus C_A$.
	If there is some edge between $\{v_2,v_4,v_5\}$ and $L$, then it follows as in the previous case where $M_2$ is incident to adjacent vertices in $C_A$.
	Hence there is some edge between $v_i$ for $i \in \{2,4,5\}$ and $C_B$ for some $B \in S\setminus \{L,A\}$.
	Let $v_j \in \{v_1,v_3\}$ be the neighbor of $v_i$ with  $v_jL \in M_2$.
	Since $S$ is 2-node-connected, there must be a path $P$ in $S$ between $L$ and $B$ without passing through $A$.
		Let $F\coloneq P \cup \{v_jL, v_iB\}$.
	Let $H':= H \cup F \setminus \{v_iv_j\}$.
	In $H'$ there is a large component $C'$ spanning components incident to $F$. Every such component yields at least $1$ credit, while $L$ yields $2$ credits.
	Then we have $\credit(H') \leq \credit(H)- \credit(L)-(|F|-1)+\credit(C')=\credit(H) -2-(|F|-1)+2 \leq \credit(H) -(|F|-1)$.
	Hence $\cost(H)-\cost(H')=|H|-|H'|+\credit(H)-\credit(H') \geq -|F|+1 + (|F|-1) = 0$.
\end{proof}

\begin{lemma}[3 edges between $L$ and a $\cSix$ or a $\cSeven$]\label{lem:L-C6}
	Let $S$ be a segment of $\hat{G}_S$ with $|S| \geq 3$.
	Let $L$ be a node in $S$ such that $C_L$ is a huge component and let $A$ be a node in $S$ such that $C_A$ is a $\cSix$ or a $\cSeven$.
	If there is a 3-matching $M_3$ between  $C_L$ and $C_A$, then in polynomial time we can compute a canonical bridgeless 2-edge cover $H'$ of $H$ such that $H'$ has fewer components than $H$, $H'$ contains a huge component, and $\cost(H') \leq \cost(H)$.
\end{lemma}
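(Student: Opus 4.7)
The plan is to follow the same strategy as \Cref{lem:L-C5}, adapted to the larger cycle. Label the vertices of $C_A$ along the cycle as $v_1, v_2, \ldots, v_k$ with $k \in \{6, 7\}$. The first observation is that if some two edges of $M_3$ are incident to adjacent vertices $v_i, v_{i+1}$ in $C_A$, then I take $F$ to be the $2$-cycle in $\hG_H$ between $L$ and $A$ formed by those two matching edges and $P \coloneq E(C_A) \setminus \{v_i v_{i+1}\}$, which is a Hamiltonian $v_i$-$v_{i+1}$ path in $G[V(C_A)]$, and apply \Cref{lem:cycle-hamiltonian}. Otherwise the three $M_3$-endpoints in $C_A$ form an independent set of size $3$, which up to rotation/reflection is $\{v_1, v_3, v_5\}$ in both $\cSix$ and $\cSeven$; I call these vertices \emph{matched} and the other $k - 3$ vertices \emph{unmatched}, and a direct inspection shows every unmatched vertex has at least one matched neighbor in $C_A$.

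Next, applying \Cref{lem:local3matching} to the bipartition $(V(C_A), V(S) \setminus V(C_A))$ gives an additional $3$-matching from $C_A$ outwards. Suppose some unmatched $v_j$ has an edge $v_j X$ to $V(S) \setminus V(C_A)$, and let $v_l$ be a matched neighbor of $v_j$ with $v_l u_l \in M_3$. If $X \in V(C_L)$ and $X \neq u_l$, then $\{v_j X, v_l u_l\}$ is a $2$-matching from $C_A$ to $C_L$ whose $C_A$-endpoints are adjacent, reducing to the first case; if $X = u_l$, I try $v_j$'s other matched neighbor when $v_j$ has one (this resolves all of $\cSix$ and most of $\cSeven$). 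If $X \in V(C_B)$ for some $B \in S \setminus \{L, A\}$, I find a path $P$ in $S$ from $L$ to $B$ avoiding $A$ (which exists since $S$ is $2$-vertex-connected and $|S| \geq 3$), set $F \coloneq P \cup \{v_l u_l, v_j X\}$, and $H' \coloneq (H \cup F) \setminus \{v_j v_l\}$. The cost analysis mirrors \Cref{lem:L-C5}: the $|F|$ components on $F$ (namely $L$, $A$, $B$, and those on $P$) merge into a single new large component with credit $2$, so $\credit(H) - \credit(H') \geq 2 + 1 + (|F|-2) - 2 = |F| - 1$, which offsets the $|F| - 1$ net added edges and yields $\cost(H') \leq \cost(H)$.

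If no unmatched vertex has any external edge, every unmatched vertex's degree is determined entirely by $G[V(C_A)]$. When all unmatched vertices have degree $2$ (only cycle edges, no diagonals), a contractibility argument in the spirit of \Cref{lem:gluing:trivial-segment}, Case~2, forces $C_A$ to be contractible, contradicting that $G$ is structured; otherwise some diagonal $v_p v_q$ with both endpoints unmatched exists in $G[V(C_A)]$, and I verify case-by-case that any such diagonal produces a Hamiltonian path in $G[V(C_A)]$ between two matched vertices, after which \Cref{lem:cycle-hamiltonian} applies with $F$ the $2$-cycle in $\hG_H$ between $L$ and $A$ formed by the two corresponding $M_3$ edges. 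The main obstacle I anticipate is the residual $\cSeven$ collision sub-case where $v_j$ has only one matched neighbor $v_l$ and $X = u_l$: I expect to handle it by invoking \Cref{lem:local3matching} on a different bipartition of $S$ to obtain another usable edge, or by arguing that the absence of such an edge together with the absence of a diagonal would force a contractible substructure inside $C_A$, once again contradicting structuredness.
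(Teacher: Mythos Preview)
Your high-level approach tracks the paper's: reduce to the case where $M_3$ hits the independent set $\{v_1,v_3,v_5\}$, then look for either a diagonal among the unmatched vertices (yielding a Hamiltonian path between two matched vertices, hence \Cref{lem:cycle-hamiltonian}) or an external edge from an unmatched vertex (yielding a cycle through $L$, $A$, and a third node $B$).

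The substantial gap is that your case split is not exhaustive. Your Case~(A) assumes the external edge $v_jX$ lands in $V(S)\setminus V(C_A)$; your Case~(B) assumes no unmatched vertex has \emph{any} external edge. You miss the case where $A$ is a cut node of $\hG_H$ and every external edge from an unmatched vertex goes to a node $B$ lying in a different segment $S'\neq S$. In that situation your contractibility argument fails (the unmatched vertices have degree $>2$ in $G$), and your cycle-through-$S$ construction does not apply since $B\notin S$ and there is no $L$--$B$ path inside $S$ avoiding $A$. Your appeal to \Cref{lem:local3matching} on $(V(C_A),V(S)\setminus V(C_A))$ does not rescue this: the guaranteed 3-matching may simply be $M_3$ itself (all three edges land in $C_L\subseteq V(S)$), so it need not produce any edge from an unmatched vertex into $S$. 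The paper spends the bulk of its proof on precisely this missing case, splitting on whether $|V(S')|\geq 11$ (then apply the 4-Matching Lemma to force a fourth outgoing edge from $C_A$) or $|V(S')|\leq 10$ (then $S'$ consists of $A$ together with a single $\cFour$ or a rich vertex, and one assembles $H'$ by combining a cycle in $S'$ from \Cref{lem:cycle-through-cfour} with two $M_3$ edges, deleting one cycle edge of $C_A$ and one edge of the $\cFour$).

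On the positive side, your explicit treatment of the sub-case $X\in V(C_L)$ is more careful than the paper, which silently folds it into the adjacent-endpoints case. The residual $\cSeven$ collision you flag is minor compared to the segment-crossing gap above.
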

\begin{proof}
	Let the vertices of $C_A$ be labeled as $v_1-v_2-v_3-v_4-v_5-v_6-v_1$ or  $v_1-v_2-v_3-v_4-v_5-v_6-v_7-v_1$.
	If the edges of $M_3$ are incident to adjacent vertices in $C_A$, say $v_1$ and $v_2$, then we can set $H' \coloneq (H \setminus \{v_1v_2\}) \cup M_2$.

	Without loss of generality, we can assume that the edges of $M_3$ are incident to $v_1,v_3$ and $v_5$.
	Consider $\{v_2,v_4,v_6\}$.
	If $E(G[\{v_2,v_4,v_6\}]) \neq \emptyset$,
	then there is a Hamiltonian path in $G[C_A]$ and we can apply~\cref{lem:cycle-hamiltonian}.
	In the following, we assume that $E(G[\{v_2,v_4,v_6\}]) = \emptyset$.
	Then there must be some edge in $G$ between $\{v_2,v_4,v_6\}$ and $C_B$ for some $B \notin \{L, A\}$.
	Otherwise, $C_A$ is contractible.
	Without loss of generality, we can assume that $v_2$ is adjacent to $C_B$ for some $B \notin \{L, A\}$.

	If $B \in S$, then we can find a cycle $F$ in $S$ through $L$ and $A$, containing the two edges $v_1L$ and $v_2B$ as follows.
	Since $S$ is 2-node-connected, there must be a path $P$ in $S$ between $L$ and $B$ without passing through $A$.
	Then $F\coloneq P \cup \{v_1L, v_2B\}$ is a desired cycle.
	Further, we can apply~\cref{lem:cycle-hamiltonian} to find $H'$ using the $v_1$,$v_2$-Hamiltonian path in $G[V(C_A)]$ from $v_1$ to $v_2$.

	In the remaining case, $B$ is in some other segment $S' \neq S$ ($A$ is a cut node).
	If the total number of vertices in $S'$ is at least $11$, then we can apply~\cref{lem:4-matching} to find a 4-matching between $C_B$ and $S \setminus C_B$.
	Hence we are in the previous case where there is some edge in $G$ between $\{v_2,v_4,v_6\}$ and $C_B$ for some $B \notin \{L, A\}$.
	If the total number of vertices in $S'$ is at most $10$, then either $|S'|=2$ and it contains another node $B$ that $C_B$ is a $\cFour$, or $S'$ contains a rich vertex.

	In the former case, by~\cref{lem:cycle-through-cfour} we can find a cycle $F$ in $S'$ incident to two vertices $a,b$ of $B$ such that there is a Hamiltonian $a$,$b$-path $P$ in $G[V(B)]$.
	Further, $F$ is incident to $v_2$ and some $v_k$ for some $k \neq 2$ in $C_A$.
	Set $H' := (H \setminus (E(C_B)\cup \{v_1v_2\})) \cup \{v_1L,v_3L\}  \cup F \cup P$.
	Notice that $|H'|=|H|+|F|$.
	In $H'$, there is a large component $C'$ spanning the nodes of all components of $H$ incident on $F$ and the component $C_L$. Every such component yields at least $1$ credit, while $L$ yields $2$ credits.
	Then we have $\credit(H') \leq \credit(H)- (|F|-2)-\credit(L)-\credit(A)-\credit(B)+\credit(C')=\credit(H) -(|F|-2)-2-1-1+2 \leq \credit(H) -|F|$.
	Hence $\cost(H)-\cost(H')=|H|-|H'|+\credit(H)-\credit(H') \geq -|F|+|F| \geq 0$.
	In the latter case with some rich vertex in $S'$,
	it follows similarly since a rich vertex contains at least $2$ credits (we do not save one edge inside $B$, but we collect one more credit).
\end{proof}

We can generalize the above lemma by replacing one of the edge in the $3$-matching by a path.
This follows from that adding the path is equivalently to adding a single edge since each internal node on the path has at least $1$ credit and can pay for one bought edge.
\begin{corollary}[2 edges and a path between $L$ and a $\cSix$]\label{corr:Lp-C6}
	Let $S$ be a segment of $\hat{G}_S$ with $|S| \geq 3$.
	Let $L$ be a node in $S$ such that $C_L$ is a huge component and let $A$ be a node in $S$ such that $C_A$ is a $\cSix$, labelled as $v_1-v_2-v_3-v_4-v_5-v_6-v_1$.
	Suppose that $v_1L, v_3L \in E(G)$ and there is a path $P_M$ in $S$ between $L$ and $A$ that is incident to $v_5$.
	Then in polynomial time we can compute a canonical bridgeless 2-edge cover $H'$ of $H$ such that $H'$ has fewer components than $H$, $H'$ contains a huge component, and $\cost(H') \leq \cost(H)$.
\end{corollary}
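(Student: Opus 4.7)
The plan is to mirror the proof of~\cref{lem:L-C6}, substituting the path $P_M$ wherever the direct edge $v_5L$ was used. The key bookkeeping observation is that every internal node of $P_M$ in $\hG_H$ corresponds to a component of $H$ distinct from $C_L$ and $C_A$; since $H$ is bridgeless, by~\cref{def:initial_credit} every such component (whether a $\cFour$, a $\mathcal{C}_i$ with $5 \le i \le 8$, a large component, or a rich vertex) carries a credit of at least $1$. Hence replacing the single edge $v_5L$ by $P_M$ introduces $|P_M|-1$ extra edges, which are exactly paid for by the $|P_M|-1$ internal components of $P_M$ being absorbed into the newly created large component.

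I split into two cases depending on whether $E(G[\{v_2,v_4,v_6\}])$ is empty. In the first case at least one such diagonal is present, and I verify by inspection that each of the three possibilities $v_2v_4$, $v_4v_6$, $v_2v_6$ yields a Hamiltonian path in $G[V(C_A)]$ from $v_5$ to either $v_1$ or $v_3$, namely $v_5 - v_6 - v_1 - v_2 - v_4 - v_3$, $v_5 - v_6 - v_4 - v_3 - v_2 - v_1$, and $v_5 - v_4 - v_3 - v_2 - v_6 - v_1$, respectively. Concatenating $P_M$ with the appropriate edge $v_jL$ for $j\in\{1,3\}$ yields a cycle $F$ in $S$ through $L$ and $A$ that satisfies the hypotheses of~\cref{lem:cycle-hamiltonian}; the lemma directly produces the desired $H'$, and its cost analysis goes through with $|F|=|P_M|+1$ since each of the $|P_M|-1$ internal nodes of $F$ contributes a credit of at least $1$.

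In the second case $E(G[\{v_2,v_4,v_6\}])=\emptyset$. Exactly as in~\cref{lem:L-C6}, contractibility of $C_A$ forces the existence of an edge $v_iB$ with $v_i\in\{v_2,v_4,v_6\}$ and $B\notin\{L,A\}$, and I follow the construction of~\cref{lem:L-C6} verbatim, distinguishing whether $B\in S$ or $B$ lies in a different segment $S'$. The path $P_M$ itself plays no role in this case: the cycle through $L,A,B$ is built from the edge $v_jL$ (with $j\in\{1,3\}$, both available by assumption), the edge $v_iB$, and a path in the appropriate segment avoiding $A$, and~\cref{lem:cycle-hamiltonian} is applied using the Hamiltonian $v_j$-$v_i$ path in $C_A$.

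The main obstacle is ensuring in Case~1 that, regardless of which diagonal is present, we can orient the cycle so that one of its two endpoints in $C_A$ is $v_5$; the short enumeration above resolves this. The credit accounting is otherwise routine: the internal nodes of $P_M$ contribute precisely the additional credits needed by~\cref{lem:cycle-hamiltonian} when the cycle happens to be longer than two edges.
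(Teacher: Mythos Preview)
Your proof is essentially correct and follows the paper's approach: simulate \cref{lem:L-C6} with $P_M$ acting as a proxy for the missing edge $v_5L$, the extra edges being paid for by the credits of the internal nodes of $P_M$. Your Case~1 is in fact more explicit than the paper's (which buries it inside ``by simulating the proof of \cref{lem:L-C6}''), and your enumeration of Hamiltonian paths is right; note in particular that the diagonal $v_4v_6$ admits no Hamiltonian $v_1$--$v_3$ path in $G[V(C_A)]$ (both $v_2$ and $v_5$ have degree~2 there, forcing the subpaths $v_1\text{--}v_2\text{--}v_3$ and $v_4\text{--}v_5\text{--}v_6$, which cannot be concatenated into a $v_1$--$v_3$ path), so routing through $v_5$ via $P_M$ is genuinely required.

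However, your assertion that ``$P_M$ itself plays no role'' in Case~2 is inaccurate. It holds for the $B\in S$ subcase, where your cycle through $L,A,B$ indeed uses only one of $v_1L,v_3L$. But for the $B\notin S$ subcase, the construction in \cref{lem:L-C6} (e.g.\ the $\cFour$-in-$S'$ case) adds edges from \emph{both} cycle-neighbors of the vertex $v_i\in\{v_2,v_4,v_6\}$ to $L$: if $v_i=v_4$ these neighbors are $v_3,v_5$, and if $v_i=v_6$ they are $v_5,v_1$, so one of the required edges is $v_5L$ and must be replaced by $P_M$. One can check that using only $v_1L,v_3L$ cannot produce a 2EC graph after removing a single edge of $C_A$ for certain pairs, e.g.\ $v_i=v_4$ with $v_k=v_6$. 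The paper makes exactly this point: for $B\notin S$ it writes ``adding the path between $L$ and $v_5$ is equivalently to adding a single edge $v_5L$.'' Your opening bookkeeping observation already supplies the fix, so this is a localized misstatement rather than a structural gap.
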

\begin{proof}
	By simulating the proof of \cref{lem:L-C6}, we can assume that  $v_j \in \{v_2,v_4,v_6\}$ is adjacent to $C_B$ for some $B \notin \{L, A\}$.

	If $B \in S$, then we can find a cycle $F$ in $S$ through $L$ and $A$, containing the two edges $v_jB$ and $v_iL$ where $v_i$ is adjacent to $v_j$ in $C_A$ as follows.
	Since $S$ is 2-node-connected, there must be a path $P$ in $S$ between $L$ and $B$ without passing through $A$.
	By the conditions of the corollary, there must be some $v_i \neq v_5$ such that $v_iL \in E(G)$.
	Hence, we can apply~\cref{lem:cycle-hamiltonian} to find $H'$ using the $v_j$,$v_i$-Hamiltonian path in $G[V(C_A)]$ from $v_j$ to $v_i$.

	If $B \notin S$, the arguments in the proof of~\cref{lem:L-C6} follow similarly as adding the path between $L$ and $v_5$ is equivalently to adding a single edge $v_5L$.
\end{proof}

\begin{proof}[Proof of~\cref{lem:gluing:non-trivial-segment}]
	Let $C_L$ be a huge component such that $L$ is part of a segment $S$ ($|S|\geq 3$) of $\hG_H$.
	If $S$ contains a rich vertex $r$,
	then we find a cycle $F$ in $S$ through $L$ and $r$.
	Let $S':= S \cup F$.
	Notice that $|S'|=|S|+|F|$.
	In $S'$ there is a large component $C'$ spanning the nodes of all components of $S$ incident on $F$ and the component $C_L$. Every such component yields at least $1$ credit, while $L$ and $r$ yield at least $2$ credits.
	Then we have $\credit(S') \leq \credit(S)- (|F|-2)-\credit(L)-\credit(A)+\credit(C')=\credit(S) -(|F|-2)-2-2+2 \leq \credit(S) - |F|$.
	Hence, $\cost(S)-\cost(S')=|S|-|S'|+\credit(S)-\credit(S') \geq -|F|+|F| \geq 0$.

	If $S$ contains another component $A$ that is a $\cFour$, then we apply~\Cref{lem:cycle-through-cfour} to find a cycle $F$ in $S$ incident to distinct nodes $b_1, b_2 \in V (A)$ such that there is a Hamiltonian $b_1, b_2$-path $P$ in $G[V(A)]$.
	Set $H' := (H\setminus E(C_A)) \cup F \cup P$.
	Notice that $|H'|=|H|+|F|-1$.
	In $H'$ there is a large component $C'$ spanning the nodes of all components of $H$ incident on $F$ and the component $C_L$. Every such component yields at least $1$ credit, while $L$ has $2$ credits.
	Then, we have $\credit(H') \leq \credit(H)- (|F|-2)-\credit(L)-\credit(A)+\credit(C')=\credit(H) -(|F|-2)-2-1+2 \leq \credit(S) - |F|+1$.
	Hence $\cost(H)-\cost(H')=|H|-|H'|+\credit(H)-\credit(H') \geq 1-|F|+|F|-1 \geq 0$.

	Further, if there is a cycle $F$ in $\hG_H$ containing $L$ with $|F| \geq 6$, then we can add the cycle, i.e., $H' = H \cup F$.
	In $H'$ there is a large component $C'$ spanning the nodes of all components of $S$ incident on $F$ and the component $C_L$.
	Every such component yields at least $5\cre$ credit, while $L$ has $2$ credits.
	Then we have $\credit(H') \leq \credit(H)- (|F|-1)\cdot 5 \cre-\credit(L)+\credit(C')=\credit(H) -(|F|-1)\cdot 5 \cre-2+2 \leq \credit(H) - (|F|-1)\cdot (\frac{5}{4}-5\delta)$.
	Hence $\cost(H)-\cost(H')=|H|-|H'|+\credit(S)-\credit(H') \geq -|F|+(|F|-1)\cdot (\frac{5}{4}-5\delta)= (|F|-1)\cdot (\frac{1}{4}-5\delta) -1 \geq \frac{5}{4}-25\delta -1  \geq 0$.

	In the following, we assume that
	\begin{itemize}
	    \item $H$ contains only components of size at least $5$ and
		\item any cycle containing $L$ in $S$ has length at most $5$.
    \end{itemize}
	Let $F^*$ be a \emph{longest} cycle in $S$ containing~$L$. For the remainder of this proof, we distinguish different cases based on $|F^*|$.

	\textbf{Case 1: $|F^*|=3$.}
	In this case, $S$ contains only $3$ components, where one is $L$.
	Let $A$ and $B$ be the other two components in $S$. Let $F^*=\{LA,LB,AB\}$.
	If $C_A$ and $C_B$ together have at least $3$ credits, then $S':= S \cup F^*$
	satisfies the lemma since we buy $3$ edges, which can be paid by the credits
	of $A$ and $B$. In the following, we assume that $C_A$ and $C_B$ together have
	less than $3$ credits. So neither of them can be a large component and the
	total number of edges in them is at most $12$. Without loss of generality, we
	assume that $|C_A|\leq |C_B|$.

	\textbf{Case 1.1: $A$ and $B$ are both a $\cFive$.}
	There must be a $3$-matching between $C_L$ and $C_A\cup C_B$
	by~\Cref{lem:local3matching}. Since $A$ and $B$ are both $\cFive$, either $A$
	or $B$ must be incident to at least $2$ matching edges. Thus, we can
	apply~\cref{lem:L-C5} to find $H'$, and we are done.

	\textbf{Case 1.2: $A$ is a $\cFive$ and $B$ is a $\cSix$ or a $\cSeven$.}
	We apply \cref{lem:local4matching} to $L$ and $A\cup B$ so that there is a $4$-matching between $L$ and $A\cup B$, because there are at least $11$ vertices in $A \cup B$.
	If there are at least $2$ matching edges incident to $A$, then we can apply~\cref{lem:L-C5} to find $H'$.
	Hence, we can assume that there is only one matching edge incident to $A$ and $3$ matching edges incident to $B$.
	Then we can apply~\cref{lem:L-C6} to find $H'$.

	\textbf{Case 1.3: $A$ and $B$ are both a $\cSix$.}
	We apply \cref{lem:local4matching} to $L$ and $A\cup B$ so that there is a
	$4$-matching between $L$ and $A\cup B$. If there are at least $3$ matching
	edges incident to one of $A$ or $B$, then we can apply~\cref{lem:L-C6} to
	find $H'$. In the remaining case, the $4$-matching has $2$ edges incident to
	$A$ and $2$ edges incident to $B$. Let the vertices of $C_A$ be labeled as
	$v_1-v_2-v_3-v_4-v_5-v_6-v_1$. Without loss of generality, we can assume that
	the $4$-matching is incident to $v_1$ and to one of $\{v_2, v_3,v_4\}$; the cases with $v_5$ or $v_6$ follow symmetrically. If
	$v_2L \in E(G)$, then there is a Hamiltonian $v_1,v_2$ path in $G[C_A]$, and
	we can apply~\cref{lem:cycle-hamiltonian}. If $v_4L \in E(G)$, we know there
	is a $3$-matching between $C_A$ and $C_L\cup C_B$ by~\cref{lem:local3matching}.
	Hence there exists some edge in $G$ between $\{v_2,v_3,v_5,v_6\}$ and $C_B$.
	Without loss of generality, we can assume that $v_2B \in E(G)$.
	Then we can apply~\cref{lem:cycle-hamiltonian} with $F=\{v_1L,v_2B,LB\}$, because there is a Hamiltonian $v_1,v_2$ path in $G[C_A]$.

	In the last case, the $4$-matching is incident to $v_1$ and $v_3$.
	By~\cref{lem:local3matching}, there must be a 3-matching between $C_A$ and $C_L\cup C_B$.
	Hence for some $i \in \{2,4,5,6\}$, $v_iB \in E(G)$.
	We can again apply~\cref{lem:cycle-hamiltonian} if $i \in \{2,4,6\}$.
	If $i=5$, we can apply~\cref{corr:Lp-C6}.

	\textbf{Case 2: $|F^*|=4$.}
	In this case, $S$ contains at least $4$ components, where one is $L$. Let  $A_1, A_2$, and $A_3$ be some other components. Let $\bar{F}=\{LA_1,A_1A_2,A_2A_3,A_3L\}$.
	If $A_1,A_2$, and $A_3$ together have at least $4$ credits, then $S':= S \cup F^*$ satisfies the lemma since we buy $4$ edges, which can be paid by the $4$ credits.
	In the following, we assume that $A_1,A_2$, and $A_3$ together have less than $4$ credits.
	So neither of them can be a large component and the total number of edges in them is at most $16$.
	We conclude that they can be either three $\cFive$'s or two $\cFive$'s and a $\cSix$.
	Since $S$ is 2VC and the longest cycle in $S$ through $L$ is of length $4$, we can observe that there are only $3$ cases on the structure of $S$.

	\textbf{Case 2.1: $S=\{L,A_1,A_2=D_1,A_3, D_2, \ldots, D_k \}$ for $k \geq 2$; $E[\hat{G}_S] = \{D_iA_1, D_iA_3 \mid i=1,\ldots,k\} \cup \{LA_1, LA_3\}$.}
	By~\cref{lem:local4matching}, there must be a $4$-matching between $C_L$
	and the remaining vertices in $V(S)$. Therefore, there is a 2-matching between $C_L$
	and $C_{A_1}$ or $C_{A_3}$. If both $C_{A_1}$ and $C_{A_3}$ are $\cFive$'s,
	then we can apply~\cref{lem:L-C5} to find $H'$. Therefore, $C_{A_1}$ and
	$C_{A_3}$ must be a $\cSix$ and a $\cFive$, and $C_{A_2}$ must be a $\cFive$.
	Without loss of generality, we can assume that $C_{A_1}$ is a $\cFive$ and
	$C_3$ is a $\cSix$. Further, we can assume that the $4$-matching has one edge
	incident to $C_{A_1}$ and $3$ edges incident to $C_3$.
	Then we can apply~\cref{lem:L-C6} to find $H'$.

	\textbf{Case 2.2: $S=\{L,A_1,A_2,A_3\}$.}
	By~\cref{lem:local4matching}, there must be a $4$-matching between $C_L$ and
	$C_{A_1}\cup C_{A_2}\cup C_{A_3}$. If all of $C_{A_i}$ are $\cFive$, then one
	of them is incident to $2$ edges in the $4$-matching, and we can
	apply~\cref{lem:L-C5} to find $H'$. Hence, one of $C_{A_i}$ must be a $\cSix$
	and the others are $\cFive$. Further, the $4$-matching has $2$ edges incident
	to the $\cSix$ and one edge incident to each of the $\cFive$. Without loss of
	generality, we assume that $C_{A_1}$ is a $\cSix$, and we label the vertices of
	$C_{A_1}$ as $u_1-u_2-u_3-u_4-u_5-u_6-u_1$. Also, we can assume that the
	$4$-matching has $2$ edges incident to non-adjacent vertices, i.e., $u_1$ and
	$u_4$, or to $u_1$ and $u_3$.

	Consider the case the $u_1L,u_4L \in E(G)$.
	By~\cref{lem:local3matching}, there must be some edge $e'$ between $u_i \in \{u_2,u_3,u_5,u_6\}$ and $L \cup C_{A_2} \cup C_{A_3}$.
	Let $u_i$ be the adjacent vertex of $u_j$ in $C_{A_1}$ such that $u_iL \in E(G)$.
	By the previous arguments, there exists a cycle $F$ in $\hG_H$ through $L$ and $A_1$ containing the edges $u_iL$ and $e'$.
	We set $H':= (H\setminus \{u_iu_j\}) \cup F $.
	Then we have $|H'| = |H| + |F| - 1$ and $\credit(H') \leq \credit(H) - (|F|-1)-2+2 = \credit(H) - |F|+1$.
	Hence, $\cost(H') \leq \cost(H)$ and $H'$ satisfies the lemma.

	In the remaining case, $u_1L,u_3L \in E(G)$.
	By~\cref{lem:local3matching}, there must be some edge $e'$ between $u_i \in \{u_2,u_4,u_5,u_6\}$ and $L \cup C_{A_2} \cup C_{A_3}$.
	If $u_i$ is adjacent to $u_1$ or $u_3$, then we can find $H'$ as in the previous case.
	Hence, we can assume that $u_i = u_5$.

	If $e'$ is incident to $L$, we can apply~\cref{lem:L-C6} to find $H'$.
	If $e'$ is incident to $A_j$ for $j \in \{2,3\}$, then
	there is a path $u_5-A_j-L$ from $u_5$ to $L$.
	So we can apply~\cref{corr:Lp-C6}.

	\textbf{Case 2.3:} $S=\{L,A_1=D_1,A_2,A_3=D_2, D_3, \ldots, D_k \}$ for $k \geq 3$; $E[\hat{G}_S] = \{D_iL, D_iA_2 \mid i=1,\ldots,k\}$,
	or $E[\hat{G}_S] = \{D_iL, D_iA_2 \mid i=1,\ldots,k\} \cup \{LA_2\}$.
	If there are $D_i$ and $D_j$ such that $D_i$ and $D_j$ are both $\cSix$'s.
	Then we can obtain $H'$ by adding to $H$ the edges in the cycle
	$L-D_i-A_2-D_j$. The total number of edges in $D_i, A_2, D_j$ is at least $17$, so they have at least $4$ credits, which can pay for the $4$ edges
	that we add. In the following, we assume that at most one of $D_i$ is a $\cSix$ and
	the others are $\cFive$'s. For each $i$ such that $D_i$ is a $\cFive$,
	by~\cref{lem:local3matching} we can find a $3$-matching between $C_{D_i}$ and
	$C_L \cup C_{A_2}$. Further, the $3$-matching has at least $2$ edges incident
	to two adjacent vertices $u_i,v_i$ in $C_{D_i}$. We denote the $2$ edges
	incident to two adjacent vertices $u_i,v_i$ as $u_iU_i,v_iV_i$ for $U_i,V_i \in \{L, A_2\}$.

	If there is a $D_i$ that is a $\cSix$, say, $D_1$, then we set $H:= H \cup \{D_1L, D_1A_2, D_2L, D_2A_2, u_3U_3,v_3V_3\} \setminus \{u_3v_3\}$.
	We have $|H'| = |H| + 6 -1 = |H|+5$ and $\credit(H') \leq \credit(H) - 2 - (6+5+5+5)\cre + 2 \leq   \credit(H) - 5$.
	Thus, $\cost(H') \leq \cost(H)$.

	Hence, assume that there is no $D_i$ that is a $\cSix$; that is, all $D_i$ are $\cFive$.
	If $k \geq 4$, then we set $H:= H \cup \{D_1L, D_1A_2, D_2L, D_2A_2, u_3U_3,v_3V_3,u_4U_4,v_4V_4\} \setminus (\{u_3v_3\}\cup \{u_4v_4\})$.
	We have $|H'| = |H| + 8 -2 = |H|+6$ and $\credit(H') \leq \credit(H) - 2 - (5+5+5+5+5)\cre + 2 \leq   \credit(H) - 6$.
	Thus, $\cost(H') \leq \cost(H)$.

	If $k=3$, we first assume that there is no edge in $G$ between $L$ and $A_2$. Then by~\cref{lem:local4matching} there is a $4$-matching between $C_L$ and $C_{D_1} \cup C_{D_2} \cup C_{D_3}$.
	Hence, there is a $2$-matching between $C_L$ and $C_{D_i}$ for some $i \in \{1,2,3\}$ and we can apply~\cref{lem:L-C5} to find $H'$.
	Now consider the case where $k=3$ and $LA_2 \in E(G)$.
	For each $D_i$, if $U_i = V_i = L$ or $\{U_i,V_i\} = \{L, A_2\}$, then
	we can apply~\cref{lem:cycle-hamiltonian}.
	So we can assume that $U_i = V_i = A_2$, i.e., $u_iA_2, v_iA_2 \in E(G)$.
	If $|E[C_{A_2}]| \geq 6$, then $D_1,D_2,D_3,A_2$ together have at least $21$ edges and $4$ credits.
	In this case, we can set $H:= H \cup \{LD_3,D_3A_2,LA_2, u_1A_2,v_1A_2,u_2A_2,v_2A_2\} \setminus (\{u_1v_1\}\cup \{u_2v_2\})$.
	We have $|H'| = |H| + 7 -2 = |H|+5$ and $\credit(H') \leq \credit(H) - 2 - (5+5+5+6)\cre + 2 \leq   \credit(H) - 5$.
	Thus, $\cost(H') \leq \cost(H)$.
	Hence, in the remaining case, we have that $A_2$ is a $\cFive$, labeled as $q_1-q_2-q_3-q_4-q_5-q_1$.
	Suppose that for some $D_i$, the two edges $u_iA_2, v_iA_2$ are incident to adjacent vertices of $C_{A_2}$, say, $q_1$ and $q_2$.
	Then we can let $H'\coloneq H \setminus \{u_iv_i, q_1q_2\} \cup \{u_iA_2,v_iA_2\}$.
	We have $|H'|=|H|$ and $\credit(H') \leq \credit(H) - (5+5)\cre + 2 \leq \credit(H)$ since we merge two $\cFive$s into a bigger 2EC component.
	Hence we can make progress as $\cost(H') \leq \cost(H)$.
	In the remaining case we know that $u_i A_2, v_i A_2$ are not incident to adjacent vertices of $A_2$. Hence, w.l.o.g.\ we assume that $u_1q_1, v_1q_3 \in E(G)$.
	Suppose that there are two adjacent vertices in $C_{A_2}$, say, $q_1$ and $q_2$, such that $q_1 D_i, q_2 D_j \in E(G)$ for some $i \neq j$.
	Then we can apply~\cref{lem:cycle-hamiltonian} using the cycle $L-D_i-A_2-D_j$ and a $q_1,q_2$-Hamiltonian path in $C_{A_2}$.
	Therefore, in the following we can assume that for each $D_i$, the edges between $D_i$ and $A_2$ are incident to either $q_1$ or $q_3$.
	By~\cref{lem:local3matching} there must be a 3-matching between $C_{A_2}$ and $V(S) \setminus C_{A_2}$.
	Hence the edge $LA_2$ must be incident to $q_2,q_4$, or $q_5$.
	Without loss of generality, we assume the edge is incident to $q_2$.
	Then we can apply~\cref{lem:cycle-hamiltonian} with the cycle $L-D_3-A_2-L$ and a $q_1,q_2$-Hamiltonian path in $C_{A_2}$.

	\textbf{Case 3: $|F^*|=5$.}
	In this case, $S$ contains at least $5$ components, where one is $L$. Let  $A_1, A_2$, $A_3$, and $A_4$ be some other components. Let $F^*= \{LA_1,A_1A_2,A_2A_3,A_3A_4,A_4L\}$. If $A_1,A_2,A_3,A_4$ together have at least $5$
	credits, or equivalently, have at least $21$ edges, then $H':= H \cup F^*$
	satisfies the lemma since we buy $4$ edges, which can be paid by the credits
	of $A_i$.

	Hence, we can assume in the following that all $A_i$ are $\cFive$'s.

	\textbf{Case 3.1: $L$ has a neighbor $B$ in $S$ where $ B \notin \{A_1,\dots,A_4\}$.}
	In this case, there must be a path in $S$ from $B$ to some $A_i$ without passing through $L$ since $S$ is 2-node-connected.
	Hence, there is a cycle $F$ in $S$ containing $L$, $B$, and $k$ ($k \geq 3$) nodes in $\{A_1,A_2,A_3,A_4\}$.
	If $B$ is not a $\cFive$ ($|C_B| \geq 6$), then we can set $H':= H \cup F$.
	We have $\credit(H') = \credit(H) - 2 - (6+5k)\cre +2 \leq \credit(H)- k-2$ and $|H'| = |H|+ k+2$, which satisfies the lemma.
	If $B$ is a $\cFive$, let the vertices of $C_B$ be labeled as $v_1-v_2-v_3-v_4-v_5$ and $v_1L \in E(G)$.
	Since $S$ is 2-node-connected, there must be a path $P$ in $\hat{G}[S]$ from $B$ to some $A_i$ without passing through $L$.
	If $i=1$ or $i=4$, then there is a longer cycle in $\hat{G}[S]$, which contradicts the definition of $F^*$.
	Without loss of generality, we can assume that $i=3$.
	Further, $P$ consists of a single edge as otherwise there is a longer cycle, i.e., there is an edge in $G$ between $A_3$ and $B$.
	By~\cref{lem:local3matching}, there must be a $3$-matching between $C_B$ and the rest of vertices in $V(S)$, which implies that there is an edge $e'$ in $\hG[S]$ incident to some $v_j \in \{v_2,v_4,v_5\}$.
	Let $P'$ be any path in $\hG[S]$ from $v_j$ to $\{L, A_1,A_2,A_3,A_4\}$ and let $v_k$ be the neighbor of $v_j$ in $C_B$.
	We set $H':= H\setminus \{v_jv_k\} \cup F^* \cup P'$.
	We have $|H'| = |H| + |F^*| + |P'|+1 - 1=|H|+|P'|+5$ and $\credit(H') \leq \credit(H) - 25\cre-(|P'|-1)-2+2 = \credit(H) -6-|P'|+1 =\credit(H) -|P'|-5 $.
	Hence $\cost(H') \leq \cost(H)$ and $H'$ satisfies the lemma.

	\textbf{Case 3.2: $L$ has no other neighbors in $S$ than $A_1,\dots,A_4$.}
	By~\cref{lem:local4matching}, there must be a $4$-matching between $C_L$ and $C_{A_1}\cup C_{A_2}\cup C_{A_3}\cup C_{A_4}$.
	If there is a $2$-matching between $C_L$ and $C_{A_i}$ for some $i \in \{1,2,3,4\}$, then we can apply~\cref{lem:L-C5} to find $H'$.
	Hence, we can assume that the $4$-matching has one edge incident to
	each of $C_{A_1}, C_{A_2}$, $C_{A_3}$, and $C_{A_4}$.

	We first show that $S$ only contains $L,A_1,A_2,A_3$, and $A_4$. Suppose $A_1$ has some neighbor $B$ that is not in $\{L, A_2,A_3,A_4\}$. Then there must be a path $P$ in $S$ from $B$ to $A_2$, $A_3$, or $A_4$ without passing through $A_1$ since $S$ is 2VC.
	Further, the path cannot pass through $L$ since $L$ has no other neighbors in $S$ other than $A_1,\dots,A_4$.
	Since there are edges in $G$ between $L$ and each of $A_1,A_2,A_3,A_4$, no matter where $P$ ends at $A_2$, $A_3$, or $A_4$, there must be a cycle in $S$ through $L$ of length at least $6$. This contradicts the definition of $F^*$.
	By symmetry, $A_4$ has no neighbor that is not in $\{L, A_1,A_2,A_3\}$.
	If $A_2$ has some neighbor $B$ that is not in $\{L, A_1,A_3,A_4\}$, then there must be a path in $S$ from $B$ to $A_3$ without passing through $A_2$, which implies a cycle of length at least $6$ in $S$ through $L$.
	We conclude that $S$ only contains $L,A_1,A_2,A_3$, and $A_4$.

	For each $i \in \{1,2,3,4\}$, let the vertices of $C_{A_i}$ be labeled as $v_1^i-v_2^i-v_3^i-v_4^i-v_5^i$.
	By previous arguments, we can assume that $v_1^iL \in E(G)$.
	For each $i$, if $v^i_2$ or $v^i_5$ has an incident edge to other nodes $A_j$ for $j \neq i$, then we can let $H':= H \setminus \{v^i_1v^i_2\} \cup \{LA_i, LA_j, v^i_2A_j\} $.
	We have $|H'| = |H| + 3 -1 = |H|+2$, $\credit(H') \leq \credit(H) - 2 - (5+5)\cre + 2 \leq \credit(H) - 2$, and $\cost(H') \leq \cost(H)$.
	In the following, we assume that $v^i_2$ and $v^i_5$ have no incident edges to other nodes $A_j$ for $j \neq i$.
	By~\cref{lem:local3matching}, there must be a $3$-matching between $C_{A_i}$ and $V(S) \setminus V(C_{A_i})$.
	Thus, there must be a $2$-matching between $\{v^i_3,v^i_4\}$ and $V(S) \setminus \{v^i_3,v^i_4\}$.
	If the $2$-matching is incident to different nodes $A_j, A_k$ for $i \neq j\neq k$, i.e., $v^i_3A_j, v^i_4A_k \in E(G)$, then we can set $H':= H \setminus \{v^i_3v^i_4\} \cup \{v^i_3A_j, v^i_4A_k, LA_j, LA_k\}$.
	We have $|H'| = |H| + 4 -1 = |H|+3$, $\credit(H') \leq \credit(H) - 2 - (5+5+5)\cre + 2 \leq \credit(H) - 3$, and $\cost(H') \leq \cost(H)$.
	In the remaining case, the $2$-matching is incident to the same node $A_j$.
	We call an incident edge of $v^i_3$ or $v^i_4$ an outgoing edge if the other vertex is in some other component $A_j$ for $j \neq i$.
	By previous arguments, we can assume without loss of generality that the outgoing edges of $v_3^1,v_4^1$ (resp. $v_3^2,v_4^2$) can only reach vertices in $\{v_1^2, v_3^2, v_4^2\}$ (resp. $\{v_1^1, v_3^1, v_4^1\}$).
	By~\cref{lem:local3matching}, there must be a $3$-matching between $C_{A_1}\cup C_{A_2}$ and the remaining vertices in $V(S)$.
	Since there are only $2$ edges between $C_L$ and $C_{A_1}\cup C_{A_2}$, incident to $v_1^1$ and $v_1^2$, there must be an edge in $G$ between $(C_{A_1}\cup C_{A_2}) \setminus \{v_1^1, v_1^2\}$ and $C_{A_3}\cup C_{A_4}$.
	This is a contradiction to the previous cases.
\end{proof}

\section{Algorithm for \MANY}\label{sec:manyC4}

In this section, we present an algorithm for the \MANY case. Intuitively, that is the case where
\emph{almost all} components of the initial canonical $2$-edge cover $H_0$ are a $\cFour$.
That is, this section is dedicated to proving \Cref{lem:main:MANY}, which we restate for convenience.

\lemmamainMANY*

We first recall and introduce some definitions.
Recall the definition of the component graph $\hG_H$ for a 2-edge cover $H$.
We say that a component $C$ of $H$ is a \emph{cut component} if $G \setminus V(C)$ is disconnected.
That is, if $c$ is the node in $\hG_H$ corresponding to $C$, then $\hG_H \setminus \{ c \}$ is disconnected. We then also say that $c$ is a \emph{cut node}. If $C$ is a $\cFour$, we call $C$ a $\cFour$-cut component and $c$ a $\cFour$-cut node.

Recall that we say that a component is \emph{huge} if it contains at least $32$ vertices and \emph{large} if it contains at least $9$ vertices and is 2EC (cf.~\Cref{def:large-huge}).
Note that a canonical $2$-edge cover always contains a huge component.
Further, note that a complex component of a canonical $2$-edge cover contains at least $11$ vertices (cf.~\Cref{def:canonicalD2})

In this section, we use the following credit scheme.

\definitionMANYcredit*

Recall that the \emph{cost} of a $2$-edge cover $H$ of $G$ is defined as $\cost(H) = |H| + \credit(H)$ where $\credit(H)$ denotes the sum total of all credits in $H$.
We can bound the cost of a canonical $2$-edge cover in terms of the cost as follows.

\lemmaMANYinitial*

\begin{proof}
	We first assign $\frac14 - \delta$ credit to each edge in a $\cFour$ and $4$ credits to each edge not contained in a $\cFour$.
	Then, we have $\cost(H) = |H| + \credit(H) = m_4 + m_r +  m_4 \cdot \cre + 4 \cdot m_r = (\frac 54 - \delta) \cdot m_4 + 5 \cdot m_r$, which is the bound in the lemma.

	Now we redistribute the credits to obtain a credit assignment as in~\cref{def:initial_credit_2}.
	For each edge in some $\mathcal{C}_i$ for $4 \leq i \leq 8$, we give the credit of the edges to the components.
	For each 2EC component that contains at least $9$ edges, we collect $2$ credits from the edges.
	For each complex component, we keep the credits in the bridge, and each block collects $1$ credit from its edges.
	Since a block only needs $1$ credit and an edge in the block has $4$ credits, we can assign $1$ credit to the complex component.
	In this way, we can redistribute the credits without increasing the cost.
\end{proof}

Our first goal in this section is to turn $H_0$ into a \emph{core-square configuration}, which we define next. See \cref{fig:core-square-configuration} for an example of a core-square configuration.

\definitionMANYcore*

	We have the following observation on the structure of a core-square configuration.

	\begin{lemma}\label{lem:pendent-segment-iscycle}
		Let $H$ be a core-square configuration and $C \neq L$ be a cut component of $H$.
		Let $S'$ be the unique segment of $\hG_H$ such that $C \in S'$ and $L \notin S'$.
		Then $S'$ consists of either $3$ or $4$ $\cFour$s and in $S'$ there is a cycle containing all components of $S'$.
	\end{lemma}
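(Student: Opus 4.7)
The plan is to establish three claims in turn: that every component of $S'$ is a $\cFour$, that $|S'| \in \{3,4\}$, and that $S'$ admits a cycle through all its nodes. The first claim will be immediate from property (ii) of \Cref{def:core-square-configuration}, which states that every component of $H$ besides $L$ is a $\cFour$; since $L \notin S'$, every component of $S'$ is a $\cFour$. The upper bound $|S'| \leq 4$ in the second claim will follow directly from property (iv).

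For the lower bound $|S'| \geq 3$, I would argue by contradiction: assume $|S'| = 2$, so that $S' = \{C, D\}$ with $D$ a pendant $\cFour$ that is not a cut component of $\hG_H$. Consequently every edge of $G$ incident to $V(D)$ has its other endpoint inside $V(D) \cup V(C)$. Since $\hG_H$ is $2$-edge-connected (as $G$ is), the nodes $C$ and $D$ must be joined by at least two parallel edges in $\hG_H$, i.e., $G$ has at least two crossing edges between $V(D)$ and $V(C)$. I will then case-analyze the set $V_C^\star \subseteq V(C)$ of $V(C)$-endpoints of these crossing edges. If $|V_C^\star| = 1$, combining the single such vertex with a suitable vertex of $V(D)$ yields a non-isolating $2$-vertex cut of $G$, contradicting structuredness. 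If $|V_C^\star| = 2$, then $V_C^\star$ itself is a $2$-vertex cut separating the $4$ vertices of $V(D)$ (which stay connected via the $\cFour$ of $D$) from the rest of $G$, again a non-isolating $2$-vertex cut and a contradiction. The remaining subcase $|V_C^\star| \in \{3,4\}$ will be the main obstacle: in this case, I plan to exploit that any vertex of $V(D)$ not touched by crossing edges has degree $2$ in $G$, which forces all $4$ edges of the $\cFour$ $D$ into every 2ECSS of $G$, and then combine this with a careful analysis of $V(C)$ (which forms a $\mathcal C_4$-cut) to exhibit either another non-isolating $2$- or $3$-vertex cut or to show that the $8$-vertex subgraph $V(C) \cup V(D)$ is $\alpha$-contractible, either of which contradicts $G$ being structured.

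For the third claim, since $S'$ is a segment and hence a $2$-vertex-connected multigraph on $|S'| \in \{3,4\}$ nodes, it will admit a cycle through all its nodes: for $|S'| = 3$ every pair of nodes must be adjacent (else the remaining node would be a cut node), giving a triangle; for $|S'| = 4$ a routine case analysis using minimum degree $\geq 2$ and absence of cut nodes shows that a Hamiltonian cycle always exists.
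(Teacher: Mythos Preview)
Your arguments for the first claim (every node of $S'$ is a $\cFour$), the upper bound $|S'|\le 4$, and the existence of a Hamiltonian cycle in $S'$ are fine and essentially match the paper. The gap is in the lower bound $|S'|\ge 3$.

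The paper's argument here is one line and relies on a property you never invoke: $H$ is \emph{canonical}, and \Cref{def:canonicalD2} explicitly forbids two $\cFour$ components that can be merged into a single $\cEight$ by swapping two edges. If $|S'|=2$, say $S'=\{C,D\}$, then $D$ is not a cut node (property~(iv)), so every $G$-edge leaving $V(D)$ lands in $V(C)$; applying the $3$-Matching Lemma to the partition $(V(D),V\setminus V(D))$ therefore gives a $3$-matching between $V(D)$ and $V(C)$. A short pigeonhole on any $3$-matching between two $\cFour$'s shows one can always pick two matching edges whose endpoints are adjacent on both cycles, so $C$ and $D$ merge into a $\cEight$ --- contradicting canonicity.

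Your plan tries instead to derive a contradiction purely from structuredness of $G$ (non-isolating cuts, large $\cFour$-cuts, $\alpha$-contractibility). This disposes of $|V_C^\star|\le 2$, but the ``main obstacle'' $|V_C^\star|\in\{3,4\}$ is not resolved by any of the tools you list: the $\cFour$-cut $V(C)$ is \emph{not} large (one side has only $|V(D)|=4<16$ vertices), the subgraph on $V(C)\cup V(D)$ need not be $\alpha$-contractible (a 2ECSS of $G$ can thread a Hamiltonian path through these eight vertices and close it through $L$, using as few as $7$ internal edges), and your degree-$2$ claim for vertices of $V(D)$ not touched by crossing edges ignores possible chords of $D$ present in $G$. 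There is no evident way to close this case via structuredness alone; the missing ingredient is precisely the canonical ``no-$\cFour$-merge'' property of $H$.
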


	\begin{proof}
		If $S'$ consists of only $2$ $\cFour$'s, then by~\cref{lem:local3matching}, there must be a $3$-matching between $C$ and the other $\cFour$ in $S'$, and we can always merge the two $\cFour$s into a $\cEight$, which contradicts that $H$ is canonical.
		Hence, $S'$ consists of either $3$ or $4$ $\cFour$'s.
		Since $S'$ is 2-node-connected, there is a cycle in $S'$ containing all the components of $S'$.
	\end{proof}

The main goal of this section is to compute a bridgeless core-square configuration, summarized in the following lemma.

\lemmaMANYcore*

Given a bridgeless core-square configuration, we then show how to prove \Cref{lem:main:MANY}.
In the next subsection we first prove \Cref{lem:main:MANY} assuming we are given a bridgeless core-square configuration.
In \Cref{sec:MANY:core}, we prove \Cref{lem:manyC4_core-configuration_main}, which is the main contribution of this section.

\subsection{Proof of \Cref{lem:main:MANY}}

We next prove \Cref{lem:main:MANY} assuming \Cref{lem:manyC4_core-configuration_main}. That is, we assume we are given a bridgeless core-square configuration $H$ and we want to turn it into a $2$-edge connected component that has at most $\left( \frac{5}{4} - \frac{1}{28} \right) \cdot \beta |H_0| + 5 (1 - \beta)|H_0| + \frac{1}{35} \opt$ many edges, where $\beta \in [0,1]$ such that the number of edges of $\cFour$'s in $H_0$ is $\beta |H_0|$.
For convenience, we let $m_4$ and $m_r$ be the number of edges of $H_0$ contained in a $\cFour$ or not contained in a $\cFour$, respectively. That is, $m_4 = \beta |H_0|$ and $m_r = (1- \beta) |H_0|$. We then show that we can compute a 2ECSS $H$ of $G$ using at most $(\frac 54 - \frac{1}{28}) \cdot m_4 + 5 \cdot m_r  +\frac{1}{35} \opt$ many edges.

\begin{proof}[Proof of \Cref{lem:main:MANY}]

	If $H$ consists of only a huge component $L$, we are done since there is only one component $L$, which does not contain a bridge. This component has a credit of $2$, and hence $|H| \leq \cost(H)-2 = \cost(H_0) + 2 - 2  \leq (\frac 54 - \frac{1}{28}) \cdot m_4 + 5 \cdot m_r $ by \Cref{lem:manyC4_core-configuration_main} and \Cref{lem:manyC4_starting-solution-cost}.

	Otherwise, let $C_1, ..., C_k$ be the $\cFour$ (which are also segments, by the definition of a core-square configuration) that are adjacent to $L$ in $\hG_H$.
	Furthermore, let	 $S_i$ be the inclusion-wise maximal segment of $\hG_H \setminus L$ which contains $C_i$. Note that these segments $S_1, ..., S_k$ are all pairwise disjoint by the definition of a core-square configuration.
	Furthermore, recall that each $S_i$ contains either 1, 3, or 4 components that are all $\cFour$ and that each $S_i$ is connected to $L$ in $\hG_H$.
	Consider some segment $S_i$ for $1 \leq i \leq k$.
	We glue $V(S_i)$ optimally to $L$, i.e., we contract $V(G) \setminus V(S_i)$ to a single vertex and compute an optimum solution $\OPT_i$.
	That is $\OPT_i$ is an optimum solution to $G|(V(G) \setminus V(S_i))$.
	Note that $\OPT_i$ can be computed in polynomial time since $|V(S_i)| \leq 16$.
	Let $|S_i|$ be the number of components of $H$ in $S_i$ and let $F_i$ be the set of edges of $H$ in $S_i$.
	If $|\OPT_i| \leq 5 \cdot |S_i| -1$, then set $H' = (H \setminus F_i) \cup \OPT_i$ and observe that $H'$ is still a bridgeless core-square configuration.
	Furthermore, $|H'| = |H| + |\OPT_i| - 4|S_i|$ and $\credit(H') \leq \credit(H) - |S_i| \cdot (1-4\delta)$.
	Since $\OPT_i \leq 5 \cdot |S_i| -1$, we have
	\begin{align*}
	\cost(H') & = |H'| + \credit(H') \\
	& \leq  |H| + |\OPT_i| - 4|S_i| + \credit(H) - |S_i| \cdot (1-4\delta)\\
	& \leq |H| + \credit(H) + 5 \cdot |S_i| -1 - 4|S_i| - |S_i| \cdot (1-4\delta)\\
	& \leq |H| + \credit(H) +|S_i| \cdot 4 \delta -1 \\
	& \leq |H| + \credit(H) +16 \cdot \delta -1 \\
	& \leq |H| + \credit(H) \ .
	\end{align*}

	Hence, assume that we are now given a bridgeless core-square configuration $H'$ and let again $S_1, S_2, ..., S_p$ be the remaining segments of $\hG_{H'} \setminus L$, such that $\OPT_i \geq 5 \cdot |S_i|$.
	For each $i \in \{1, \dots, p\}$, let $\ell_i$ be the number of components of $H'$ in $S_i$.

	We first show a different bound on $\OPT$.
	Let $G_i$ be the graph of $G$ induced by $V(S_i)$. Let $E(G_i)$ be the edges of this graph and let $E(L,G_i)$ be the edges in $G$ between the vertices of node $L$ and the vertices of $G_i$. Note that $(E(G_i) \cup E(L,G_i)) \cap (E(G_j) \cup E(L,G_j)) = \emptyset$ for $i \neq j$, by the definition of a core-square configuration. We have
	\[
		\opt \geq \sum_{i=1}^p |\OPT(G) \cap (E(G_i) \cup E(L,G_i))| \geq  \sum_{i=1}^k 5 \ell_i \ .
	\]

	We now show that we can turn $H'$ into a single 2EC component $H''$ such that $\cost(H'') \leq \cost(H) + 4 \delta \sum_{i=1}^p \ell_i$.
	For each $i \in \{1, \dots, p\}$, let $C_i$ be the cut component separating $L$ from the remaining components of $S_i$.
	We now find edge sets $X_i \subseteq E \setminus E(H')$ and $Y_i \subseteq E(H')$ such that $H'' = \Big( H' \cup \bigcup_{1 \leq i \leq p} X_i  \Big) \setminus \Big( \bigcup_{1 \leq i \leq p} \{ Y_i \} \Big)$ is 2EC and $|X_i| - |Y_i| \leq |S_i|$, which implies that $\cost(H'') \leq \cost(H') + 4 \delta \sum_{i=1}^p \ell_i \leq \cost(H) + 4 \delta \sum_{i=1}^p \ell_i$.
	To find these edge sets, we iteratively appply \Cref{lem:cycle-through-cfour}.
	First, using \Cref{lem:cycle-through-cfour} we find 2 edges incident to $L$ and $C_i$ such that those edges are incident to distinct vertices $u_i, v_i \in V (C_i)$ such that there is a Hamiltonian $u_i, v_i$-path in $G[V(C_i)]$. The two edges between $L$ and $C_i$ are added to $X_i$, while the edge $u_i v_i$ is added to $Y_i$.
	Then, for some other $C_i'$ in $S_i$ distinct from $C_i$ we apply \Cref{lem:cycle-through-cfour} to find a cycle $K$ through $C_i'$ and $C_i$ to distinct vertices $u_i', v_i' \in V (C_i')$ such that there is a Hamiltonian $u_i', v_i'$-path in $G[V(C_i')]$. The edges of $K$ are added to $X_i$, while the edge $u_i' v_i'$ is added to $Y_i$.
	If $K$ involves all nodes of $S_i$, we are done.
	Otherwise, we repeat this process where we temporarily shrink all nodes involved with $K$ to a single node and find a cycle $K'$ through this shrunk node and a not yet visited component $C_i''$ from $S_i$, where $K'$ is incident to distinct vertices $u_i'', v_i'' \in V (C_i'')$ such that there is a Hamiltonian $u_i'', v_i''$-path in $G[V(C_i'')]$, by \Cref{lem:cycle-through-cfour}. Again, the edges of $K''$ are added to $X_i$ and the edge $u_i'' v_i''$ is added to $Y_i$.
	This process stops after at most $|S_i|$ many steps and we have that $|X_i| - |Y_i| \leq |S_i|$.
	Furthermore, observe that $H'' = \Big( H' \cup \bigcup_{1 \leq i \leq p} X_i  \Big) \setminus \Big( \bigcup_{1 \leq i \leq p} \{ Y_i \} \Big)$ is 2EC. So $\cost(H'')-\cost(H') = |H''|-|H'| + \credit(H'')-\credit(H') \leq \sum_{i=1}^p \ell_i - \sum_{i=1}^p (1-4\delta)\cdot \ell_i  \leq 4 \delta \sum_{i=1}^p \ell_i $.

	Let $\sum_{i=1}^p \ell_i = \ell^*$.
	Combining \Cref{lem:manyC4_core-configuration_main}, we have
		$\cost(H'') \leq \cost(H_0) +2 + 4 \delta \cdot \ell^*$, where $H_0$ is the initial canonical $2$-edge cover.
		Since  $H''$ consists of a single 2EC component, which has a credit of 2, we have
		$|H''| \leq \cost(H_0) + 4 \delta \cdot \ell^* \leq (\frac 54 - \delta) \cdot m_4 + 5 \cdot m_r + 4 \delta \cdot \ell^*$ by \Cref{lem:manyC4_starting-solution-cost}.

		Using $\opt \geq \sum_{i=1}^k 5 \ell_i = 5 \ell^*$ and by setting $\delta = \frac{1}{28}$, we can bound
	\begin{align*}
	|H''| & \leq \Big( \frac 54 - \delta \Big) \cdot m_4 + 5 \cdot m_r + 4 \delta \cdot \ell^* \\
	& \leq \Big( \frac 54 - \delta \Big) \cdot m_4 + 5 \cdot m_r + \frac{4}{5} \delta \opt \\
	& = \Big( \frac 54 - \frac{1}{28} \Big) \cdot m_4 + 5 \cdot m_r + \frac{4}{5} \cdot \frac{1}{28} \opt \\
	& = \Big( \frac 54 - \frac{1}{28} \Big) \cdot m_4 + 5 \cdot m_r  +\frac{1}{35} \opt  \ ,
	\end{align*}
	which proves the statement.
\end{proof}

\subsection{Computing a Bridgeless Core-Square Configuration (Proof of \Cref{lem:manyC4_core-configuration_main})}
\label{sec:MANY:core}

To prove \Cref{lem:manyC4_core-configuration_main}, we start with a canonical $2$-edge cover $H$ and step-by-step decrease the number of components while always maintaining that the new canonical $2$-edge cover $H'$ satisfies $\cost(H') \leq \cost(H)$, except for at most one step in which we might increase the cost by at most $2$.
From now on we always assume that we are given some canonical $2$-edge cover $H$.

Throughout this section we focus on components of $H$ belonging to the same segment $S$ of $\hG_H$, i.e., the same $2$-node connected part $S$ of the component graph $\hG_H$.
For such components $C$ of $S$, we will usually apply the $3$-Matching \Cref{lem:3-matching}.
If $C$ is not a cut node, then it is clear that all edges implied by the $3$-Matching \Cref{lem:3-matching} are going to $S$.
If $C$ is a cut note, this is not immediate. However, we can divide $V$ into two sets $V_1$ and $V_2$ such that $V(S) \setminus V(C) \subseteq V_1$, $V(C) \subseteq V_2$ and the only edges between $V_1$ and $V_2$ are incident to $C$. Then, applying the $3$-Matching \Cref{lem:3-matching} to $(V_1, V_2)$ implies that these edges are incident to $V(C)$ and $S$.
We will always make use of this fact implicitly without explicitly giving this partition.

\paragraph*{Useful definitions and lemmas.}

Before we proceed, we first define certain situations in which we can turn $H$ into a canonical $2$-edge cover $H'$ such that $H'$ contains fewer components than $H$ and $\cost(H') \leq \cost(H)$.

\begin{definition}[covering a bridge]
	\label{def:coering-bridge}
	Let $H$ be a canonical $2$-edge cover of a structured graph $G$ that contains a complex
	component $C$. Let $X \subseteq E(G) \setminus E(H)$ be such that $X$ is a cycle in the
	component graph $\hG_H$ incident to $C$ at vertices $u_1, u_2 \in V(C)$ such that the
	path from $u_1$ to $u_2$ in $C$ contains at least one bridge $e \in E(C)$.
	We say that $X$ \emph{covers} the bridge $e$.

	Note that $X$ can also be a cycle containing only one edge if $e$ connects different vertices of $V(C)$.
\end{definition}

\begin{definition}[shortcut]
	\label{def:shortcut}
	Let $X \subseteq E(G) \setminus E(H)$ such that $X= X_1 \cup X_2$ with $X_1 \cap X_2 = \emptyset$,
	where $X_1$ is a simple cycle in the component graph $\hG_H$. $X_2$ is either a possibly empty simple path  in the
	component graph $\hG_H$ such that the endpoints of $X_2$ intersect with $X_1$ and all interior
	nodes of $X_2$ are distinct from $X_1$, or $X_2$ is a simple cycle that intersects $X_1$ at precisely one node.
	Let $C$ be a component incident to $X$ that is a $\cFour$.
	We say that $C$ is \emph{shortcut} w.r.t.\ $X$ if there are two edges $e_1, e_2 \in X$ with $e_1 = u_1 v_1$ and $e_2 = u_2 v_2$, $u_1, u_2 \in V(C)$ and $v_1, v_2 \in V(G) \setminus V(C)$ such that $u_1 u_2 \in E(C) \cap H$, i.e., $u_1$ and $u_2$ are adjacent in $H$. We also say that $e = u_1 u_2 \in E(H)$ is \emph{shortcut}.

	For each $\cFour$ component $C$ incident to $X$ that is shortcut, we let $e(X, C)$ be the edge that is shortcut.
	There might be more than one edge of $C$ that is shortcut w.r.t.\ $X$. In this case we pick an arbitrary such edge as $e(X, C)$.
	We define $F(X) \coloneqq \{ e(X, C) \mid C \text{ is shortcut w.r.t.\ $X$} \}$.
\end{definition}

\begin{figure}
	\centering
    \includegraphics[width=0.4\textwidth]{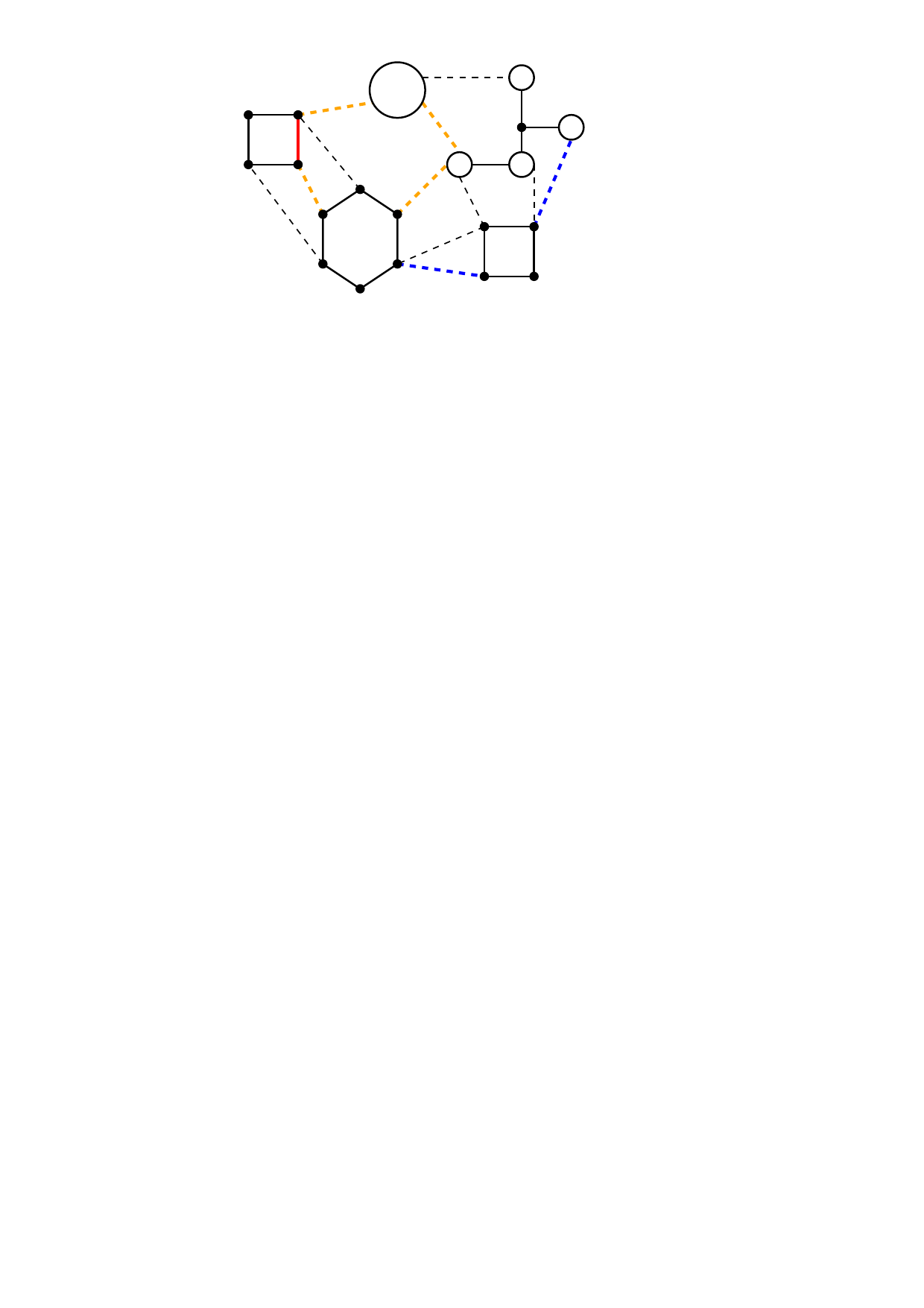}
    \caption{An example of a double $(5,1)$-merge. Edges in $X_1$ are orange and edges in $X_2$ are blue. The red edge is shortcut.}
\label{fig:double-merge}
\end{figure}

Note that in the above definition $X_2$ can also be empty, in which case $X$ is just a simple cycle in the component graph.

Using this definition of shortcut, we now define a single $(a, b)$-merge and a double $(a, b)$-merge.

\begin{definition}[single $(a, b)$-merge]
	\label{def:single-a-b-merge}
	Let $H$ be a canonical $2$-edge cover of a structured graph $G$.
	For $a, b \geq 0$, a single $(a, b)$-merge w.r.t.\ $H$ is a set of edges $X \subseteq E(G) \setminus E(H)$ such that
	\begin{itemize}[nosep]
		\item $X$ is a cycle in the component graph $\hG_H$,
		\item if $X$ is incident to a complex component $C'$, then $X$ is either incident to a vertex contained in a block of $C'$ or incident to at least 2 distinct vertices of $C'$,
		\item $a$ is the number of distinct components $C$ in $X$, and
		\item $b$ is the number of distinct $\cFour$ components $C$ in $X$ such that $X$ does not shortcut $C$.
	\end{itemize}
\end{definition}

\begin{definition}[double $(a, b)$-merge]
	\label{def:double-a-b-merge}
	Let $H$ be a canonical $2$-edge cover of a structured graph $G$.
	For $a, b \geq 0$, a double $(a, b)$-merge w.r.t.\ $H$ is a set of edges $X \subseteq E(G) \setminus E(H)$ such that
	\begin{itemize}[nosep]
		\item  $X \subseteq E(G) \setminus E(H)$ such that $X= X_1 \cup X_2$ with $X_1 \cap X_2 = \emptyset$, where $X_1$ is a simple cycle in the component graph $\hG_H$ and $X_2$ is a simple path in the component graph $\hG_H$ such that the endpoints of $X_2$ intersect with $X_1$ and all interior nodes of $X_2$ are distinct from $X_1$,
		\item if $X$ is incident to a complex component $C'$, then $X$ is either incident to a vertex contained in a block of $C'$ or incident to at least 2 distinct vertices of $C'$,
		\item $a$ is the number of distinct components $C$ in $X$, and
		\item $b$ is the number of distinct $\cFour$ components $C$ in $X$ such that $X$ does not shortcut $C$.
	\end{itemize}
\end{definition}

An example of a double $(5, 1)$-merge is given in \Cref{fig:double-merge}.
We further define \emph{short heavy cycles}.

\begin{definition}[short heavy cycle]
	\label{def:short-heavy-cycle}
	Let $H$ be a canonical $2$-edge cover of a structured graph $G$.
	A set of edges $X \subseteq E(G) \setminus E(H)$ is a short heavy cycle if
	\begin{itemize}[nosep]
		\item $X$ is a simple cycle in the component graph of length at most $6$ in $\hG_H$, and
		\item $X$ is incident to at least one component $C$ that is a $\mathcal{C}_i$, $5 \leq i \leq 8$ or complex. If $C$ is complex then $X$ covers at least one bridge of $C$.
	\end{itemize}
\end{definition}

The following lemma will be useful throughout this section.

\begin{lemma}
	\label{lem:abc-merge}
	Let $H$ be a canonical $2$-edge cover of a structured graph $G$ and let $X \subseteq E(G) \setminus E(H)$ such that one of the following conditions is satisfied:
	\begin{itemize}[nosep]
		\item $X$ is a single $(k, j)$-merge for some $k \geq 3$ and $j \leq 2$ such that $k-j \geq 3$,
		\item $X$ is a double $(k, j)$-merge for some $k \geq 5$ and $j \leq 3$ such that $k-j \geq 4$, or
		\item $X$ is a short heavy cycle.
	\end{itemize}
	Then $(H \cup X) \setminus F(X)$ is a canonical $2$-edge cover with fewer components than $H$ and $\cost((H \cup X) \setminus F(X)) \leq \cost(H)$.
\end{lemma}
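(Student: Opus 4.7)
Set $H' := (H \cup X) \setminus F(X)$; I plan to verify four properties in turn: (i) $H'$ is a valid $2$-edge cover, (ii) $H'$ has strictly fewer components than $H$, (iii) $H'$ is canonical, and (iv) $\cost(H') \leq \cost(H)$. Properties (i) and (ii) are quick: every removed edge is some $e(X,C) = u_1 u_2$ of a shortcut $\cFour$ $C$, where both $u_1$ and $u_2$ pick up an edge of $X$, so they retain degree $\geq 2$, while the other two vertices of $C$ keep their original degree $2$. Meanwhile, in each of the three sub-cases $X$ is a connected edge set in $\hG_H$ touching $\geq 3$, $\geq 5$, or $\geq 2$ distinct components, all of which merge into a single new component $C'$ in $H'$.

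For (iii), $C'$ contains at least $12$ vertices in the single-merge case, at least $20$ in the double-merge case, and at least $9$ in the short-heavy-cycle case (where one absorbed component is a $\mathcal C_i$ with $i \geq 5$ or a complex component with $\geq 11$ vertices). Hence $C'$ is either large $2$EC or complex with sufficiently large inherited blocks, and the block/pendant conditions of \Cref{def:canonicalD2} carry over because, by \Cref{def:single-a-b-merge}, \Cref{def:double-a-b-merge}, and \Cref{def:short-heavy-cycle}, $X$ meets every complex component either at a block vertex or at two distinct vertices. No new mergeable $\cFour$'s arise (we only coarsen components), and the huge-component condition is preserved by $C'$ when $X$ touches the huge component and by the untouched huge component otherwise.

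For (iv), let $j$ be the number of non-shortcut $\cFour$'s in $X$, $s := |F(X)|$ the number of shortcut $\cFour$'s, and $r := k - j - s$ the number of remaining components of $H$ touched by $X$. By \Cref{def:initial_credit_2}, each $\cFour$ contributes credit $1 - 4\delta$ and every other touched component contributes credit at least $2$ (strictly more for $\mathcal C_i$ with $5 \leq i \leq 8$, whose credit is $4i \geq 20$, and for complex components, which additionally donate $4$ per covered bridge and $1$ per absorbed block). Charging $C'$ the worst-case $2$ credits yields
\begin{equation*}
\cost(H') - \cost(H) \leq |X| - s + 2 - (j+s)(1 - 4\delta) - 2r.
\end{equation*}
In the single-merge case $|X| = k$, so this simplifies to $2 - (s+r) + 4\delta(j+s)$, which is $\leq 0$ from $s + r = k - j \geq 3$, $j \leq 2$, and $\delta = 1/28$. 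In the double-merge case $|X| = k+1$, so the analogous manipulation yields $3 - (s+r) + 4\delta(j+s) \leq 0$ from $s + r \geq 4$, $j \leq 3$, tight precisely at $\delta = 1/28$. In the short-heavy-cycle case $|X| \leq 6$ and either the $\mathcal C_i$ credit $4i \geq 20$ or the $\geq 4$ credits released per covered bridge easily dominate the added edges, forcing the cost to strictly drop.

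The main obstacle will be the complex-component bookkeeping for (iii) and (iv): when $X$ traverses a complex component $C^*$, we must correctly apportion its bridge, block, and component credits between those freed (for bridges $X$ covers and blocks it absorbs) and those still required for the surviving complex structure inside $C'$. The structural requirements on $X$ at complex components---meeting a block vertex or two distinct vertices, and covering at least one bridge in the short-heavy-cycle case---are tailored precisely so that the freed credit compensates for the added edges and the surviving complex portion of $C'$ still satisfies the canonical conditions.
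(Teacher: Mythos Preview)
Your proposal is correct and follows essentially the same approach as the paper's proof: the same decomposition into edge-count and credit change, the same observation that each non-$\cFour$ component frees at least $2$ credits (the paper phrases this as $bl_i + 4\,br_i + 1 \geq 2$ for complex components), and the same tight arithmetic at $\delta = 1/28$. Your parameters $(j,s,r)$ are just a repackaging of the paper's $(p_4,p_c,p_\ell)$; your final inequality $2 + j - k(1-4\delta)$ (after substituting $s+r = k-j$ and $j+s \leq k$) coincides with the paper's bound, and your identification of the complex-component credit bookkeeping as the delicate point matches exactly what the paper spells out.
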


\begin{proof}
	Let $H' = (H \cup X) \setminus F(X)$.
	We first prove that $H'$ is a canonical $2$-edge cover.
	This is clearly true for short heavy cycles, since we only add a cycle in the component graph and do not remove any edges, i.e., $F(X) = \emptyset$ in this case.
	If $X$ is a single or double merge, we have the same if we do not remove an edge, i.e., is $F(X) = \emptyset$.
	If we remove an edge, we only remove it from a $\cFour$ and we remove at most one edge from each component.
	Let $C$ be a component where we remove the edge $e = u v$. Note that there is still a Hamiltonian Path from $u$ to $v$ within $C$ in $H'$. Further, there is a connection from $u$ to $v$ in $H'$ using no edges within $C$.
	Since we have this for every component for which we removed an edge, this shows that the newly created block or 2EC component in $H'$ is indeed 2EC.
	Further, observe that we have not created any new bridges.

	This shows that $H'$ is a canonical $2$-edge cover.
	It remains to prove that $\cost(H') \leq \cost(H)$.
	Note that $|H'| \leq |H| + |X| - |F(X)|$.
	Let $C_1, ..., C_{p_c}$ be the set of complex components incident to $X$ and assume that $X$ contains $p_\ell$ 2EC components that are large or a $\mathcal{C}_i$ for some $5 \leq i \leq 8$ and $p_4$ components that are a $\cFour$. Further, for $1 \leq i \leq p_c$ let $bl_i$ be the number of blocks incident to $X$ and $br_i$ be the number of bridges in $C_i$ covered by $X$.
	We have that
	$$\credit(H') \leq \credit(H) + 2 - \Big( \sum_{i = 1}^{p_c} (bl_i + 4 \cdot br_i +1) + 2 p_\ell + (1 - 4 \delta) \cdot p_4   \Big) \ .$$
	The $+2$ comes from the fact that in $H'$, we either create a new large component (which receives a credit of $2$) or a new block within a new complex component (where the block and the component each receive a credit of $1$).
	The number of credits that are removed in $H'$ compared to $H$ are all credits inside components, blocks and bridges in $H$ that are now in the newly formed block or 2EC component in $H'$. These include all blocks incident to $X$ or bridges covered by $X$, and all 2EC components incident to $X$. Using the credit assignment scheme, we observe that the above inequality on $\credit(H')$ holds.

	If $X$ is a short heavy cycle, then $|X| \leq 6$, and either $X$ is incident to a component that is a $\mathcal{C}_i$ for some $5 \leq i \leq 8$ which has at least $20$ credits, or there is some complex component incident to $X$, say $C_1$, and $br_1 \geq 1$ and $bl_1 + 4 \cdot br_1 +1 \geq 5$.
	In either case, we have
	\begin{align*}
	\credit(H')
	& \leq \credit(H) + 2 - \Big( 5 + (|X|-1) \cdot (1-4\delta) \Big)\\
	& \leq \credit(H) - |X| \ ,
	\end{align*}
	since $|X| \leq 6$ and the worst-case is if there is exactly one complex component and $5$ components that are $\cFour$ and $X$ covers only one bridge of the complex component.
	Hence, we have $\cost(H') \leq \cost(H)$.

	Next, assume $X$ is a single $(k, j)$-merge for some $k \geq 3$ and $j \leq 2$ such that $k-j \geq 3$ or a double $(k, j)$-merge for some $k \geq 5$ and $j \leq 2$ such that $k-j \geq 4$.
	Then $|F(X)| \geq p_4 - j \geq 0$.
	Note that for each $i \in \{1, ..., p_c \}$ we have $bl_i + 4 \cdot br_i +1 \geq 2$ in this case since by the definition of a single or double merge, $X$ must be incident to either a block or it covers at least one bridge of each complex component.
	Further recall that $k = p_c + p_\ell + p_4$ and that $|X| \in \{k, k+1\}$.
	We have
	\begin{align*}
	\cost(H') & = |H'| + \credit(H') \\
	& \leq |H| + |X| - |F(X)| +\credit(H) + 2 - \Big( \sum_{i = 1}^{p_c} (bl_i + 4 \cdot br_i +1) + 2 p_\ell + (1 - 4 \delta) \cdot p_4   \Big)\\
	& \leq |H| + \credit(H) + |X|  + 2 - \Big( 2 \cdot (p_c + p_\ell) + (1 - 4 \delta) \cdot p_4 + |F(X)|   \Big) \\
	& \leq |H| + \credit(H) + |X|  + 2 - \Big( 2 \cdot (p_c + p_\ell) + (1 - 4 \delta) \cdot p_4 + p_4-j \Big) \\
	& \leq |H| + \credit(H) + |X|  + 2 - \Big( 2 \cdot (p_c + p_\ell) + (2 - 4 \delta) \cdot p_4 - j \Big) \\
	& \leq |H| + \credit(H) + |X|  + 2 - \Big( (2- 4\delta) \cdot (p_c + p_\ell + p_4) - j \Big) \\
	& \leq |H| + \credit(H) + |X|  + 2 + j - (2- 4\delta) \cdot k  \ ,
	\end{align*}
	where we used all inequalities from above.

	Consider first the case that $X$ is a single merge.
	Then we have $|X| = k \geq 3$, $j \leq 2$ and $k-j \geq 3$.
	Therefore, we obtain
	\begin{align*}
	\cost(H') & \leq |H| + \credit(H) + |X|  + 2 + j - (2- 4\delta) \cdot k  \\
	& = |H| + \credit(H) + k  + 2 + j - (2- 4\delta) \cdot k  \\
	& = |H| + \credit(H) + 2 + j - (1- 4\delta) \cdot k \\
	& \leq  |H| + \credit(H) + 2 + 2 - (1- 4\delta) \cdot 5 \\
	& \leq |H| + \credit(H) \\
	& = \cost(H) \ .
	\end{align*}

	If $X$ is a double merge, then we have $|X| = k + 1 \geq 6$, $j \leq 2$ and $k-j \geq 4$.
	Therefore, we obtain
	\begin{align*}
	\cost(H') & \leq |H| + \credit(H) + |X|  + 2 + j - (2- 4\delta) \cdot k  \\
	& = |H| + \credit(H) + k +1 + 2 + j - (2- 4\delta) \cdot k  \\
	& = |H| + \credit(H) + 3 + j - (1- 4\delta) \cdot k \\
	& \leq  |H| + \credit(H) + 3 + 3 - (1- 4\delta) \cdot 7 \\
	& \leq |H| + \credit(H)\\
	&  = \cost(H) \ .
	\end{align*}

	Note that all inequalities hold for $\delta \leq \frac{1}{28}$.
	This finishes the proof of the lemma.
\end{proof}

Furthermore, we will sometimes use the following two lemmas.

\begin{lemma}
	\label{lem:31-merge}
	Let $H$ be a canonical $2$-edge cover of a structured graph $G$ and let $C$ be a component of $H$ contained in a segment $S$ that consists of exactly three or four nodes that are all large or a $\cFour$.
	Then there is a single $(k, j)$-merge containing $C$ in $S$ not shortcutting $C$ such that $k-j \geq 2$ and $k \in \{3, 4 \}$.
\end{lemma}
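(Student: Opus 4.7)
The plan is to take a Hamiltonian cycle of $\hG_H[S]$ through the node $C$ and realize it by a specific set of $G$-edges such that enough $\cFour$s along the cycle are shortcut while $C$ itself is not shortcut.

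First I would observe that since $S$ is a segment of $\hG_H$, the multigraph $\hG_H[S]$ is $2$-vertex-connected on $k\in\{3,4\}$ nodes. For $k=3$ this forces $\hG_H[S]$ to be a triangle (possibly with parallel edges), and for $k=4$ every $2$-vertex-connected multigraph on four nodes contains a Hamiltonian cycle. In either case I can extract a Hamiltonian cycle $C=V_1-V_2-\cdots-V_k-V_1$ in $\hG_H[S]$, and the goal is to realize this cyclic ordering by a set $X=\{e_{12},e_{23},\ldots,e_{k1}\}$ of $G$-edges, which already yields a single $(k,j)$-merge with $k\in\{3,4\}$ containing $C$ (the complex-component clause in the definition is vacuous since every node of $S$ is large or a $\cFour$).

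Next I would use the Local $3$-Matching Lemma (\Cref{lem:local3matching}) to argue that shortcutting is possible. For every $\cFour$ component $D$ in $S$, applying the $3$-matching to the partition $(\{D\},V(S)\setminus V(D))$ yields that the external $G$-edges of $D$ hit at least three distinct vertices of $D$. Combined with the elementary observation that any three vertices of a $\cFour$ contain at least two adjacent pairs (a $\cFour$ has only two non-adjacent pairs), this shows: letting $U_L,U_R\subseteq V(D)$ be the vertex sets of $D$ hit by $G$-edges to its two cycle neighbors $L(D),R(D)$, whenever $|U_L\cup U_R|\geq 3$ there exist $u\in U_L$, $w\in U_R$ with $u\neq w$ and $uw\in E(D)$, so we may choose the two cycle edges at $D$ to shortcut it. I would verify this by a short case split on $(|U_L|,|U_R|,|U_L\cap U_R|)$, using that the $\cFour$ is $2$-edge-connected, so any non-trivial vertex cut is crossed by at least two edges.

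Using this I would realize the cycle by picking, for each $\cFour$ $D\neq C$ along the cycle, two incident cycle-edges that shortcut $D$; for $C$, if $C$ is large the non-shortcut condition is vacuous, and if $C$ is a $\cFour$ I would choose its two incident cycle-edges to hit either the same vertex or two non-adjacent vertices of $C$ (which is feasible by the same $3$-matching flexibility). A routine case distinction on the number $\ell$ of large components in $S$ then gives $k-j\geq 2$: if $\ell\geq 2$ the bound is immediate, if $\ell=1$ a single shortcut suffices, and if $\ell=0$ we need two shortcuts among the $\cFour$s distinct from $C$.

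The main obstacle is the last case, $\ell=0$, since then the cycle-edges are shared between consecutive $\cFour$s and the per-component shortcutting arguments no longer decouple. I would handle this by a more careful combinatorial analysis: starting from the three (resp.\ four) local $3$-matchings applied to each $V_i$, a pigeonhole/exchange argument on the incidence pattern of cycle-edges to $\cFour$ vertices shows that compatible simultaneous choices exist. If a local obstruction arises, I would exploit the remaining freedom---alternative Hamiltonian orderings of $\hG_H[S]$ when $k=4$, or, since the lemma only requires $k\in\{3,4\}$, dropping to a triangular $3$-merge through $C$ inside $\hG_H[S]$ whenever such a triangle is available---to resolve it and conclude the desired single $(k,j)$-merge.
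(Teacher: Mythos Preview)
Your approach differs substantively from the paper's. The paper does not realize a Hamiltonian cycle of $S$ and then adjust endpoints; instead it performs a direct edge-chasing case analysis: starting from the $3$-matching out of $C_1$, it fixes a pair $e_1$ (to $C$) and $e_2$ (to $C_2$) hitting adjacent vertices of $C_1$, then examines the possible positions of a $C$--$C_2$ edge $f$, and, in the residual subcase, of a further edge $g$ out of $C_2$. The four-node case is handled by a longer but analogous chain of cases.

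There is a genuine gap in your proposal. The whole difficulty sits in the $\ell=0$ case, and your resolution (``pigeonhole/exchange argument'', or falling back to another Hamiltonian ordering or a triangle) is not an argument. Two concrete ingredients are missing. First, you never invoke the canonical property that no two $\cFour$'s can be merged into a $\cEight$. This is the key device in the paper: whenever the case analysis reaches a configuration in which the desired $(3,1)$- or $(4,2)$-merge cannot be read off directly, the paper shows that two of the $\cFour$'s admit an $\cEight$-merge, contradicting canonicity. Without this step the local $3$-matchings alone do not force compatible simultaneous choices of endpoints across the shared cycle edges. Second, your claim that when $C$ is a $\cFour$ you can always pick its two incident cycle edges to hit the same vertex or two non-adjacent vertices ``by the same $3$-matching flexibility'' is unsupported: the $3$-matching only guarantees that three vertices of a $\cFour$ are hit by outgoing edges, not that edges to each \emph{specified} neighbor avoid every adjacent pair, and the cycle edges at $C$ are already constrained by the shortcutting requirements at $C_1$ and $C_2$. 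This is precisely where the paper's case analysis, coupled with the $\cEight$-merge exclusion, does the work that your sketch only promises.
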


\begin{proof}
We first consider the case that $S$ contains exactly $3$ nodes, i.e., $S$ contains the three components $C$, $C_1$, and $C_2$.
We assume that $C_1$ and $C_2$ are $\cFour$.
It is easy to see how to adapt to the case that one (or both) of $C_1$ or $C_2$ are large.
Let $v_1 - v_2 - v_3 - v_4 - v_1 $ be the cycle of $C_1$ and $u_1 - u_2 - u_3 - u_4 - u_1 $ be the cycle of $C_2$.
We show that there must exist a single $(3, 1)$-merge in $S$ that does not shortcut $C$.

By the $3$-Matching \Cref{lem:3-matching} and since $S$ is $2$-node connected, there must be a pair of edges $e_1$ and $e_2$ incident to adjacent vertices of $C_1$. W.l.o.g.\ assume that $e_1$ is incident to $v_1$ and $C$ and $e_2$ is incident to $v_2$ and $C_2$. Further, assume that $e_2$ is incident to $u_1$ at $C_2$.
$C_2$ must also be incident to an edge $f$ incident to $C$, as otherwise $S$ is not $2$-node connected.
If $f$ is incident to either $u_2$ or $u_4$, then $e_1, e_2$ and $f$ together form the desired single $(3, 1)$-merge that does not shortcut $C$.
Hence, assume that w.l.o.g.\ $f$ is incident to $u_3$.

Again, by the $3$-Matching \Cref{lem:3-matching}, either $u_2$ or $u_4$ is incident to an edge $g$ to either $C_1$ or $C$.
If $g$ is incident to $C$, we are clearly done.
Hence, assume $g$ is incident to $C_1$.
If $g$ is incident to either $v_1$ or $v_3$, then we can merge $C_1$ and $C_2$ into a $\cEight$, a contradiction.
If $g$ is incident to either $v_2$ or $v_4$, then $e_1, g$, and $f$ together form the desired $(3, 1)$-merge that does not shortcut $C$. This finishes the case that $S$ contains exactly $3$ nodes.

Let us now assume that $S$ contains exactly $4$ nodes, i.e., the components $C_1, C_2, C_3$, and $C$.
We show that there must exist a single $(k, 1)$-merge or a single $(4, 2)$ merge in $S$ that does not shortcut $C$, for some $k \in \{3, 4\}$.
This then implies the lemma statement.

We assume that $C_1, C_2$, and $C_3$ are $\cFour$.
It is easy to see how to adapt to the case that a subset of them is large.
We take the same labeling for $C_1$ and $C_2$ and let $C_3$ be the cycle $w_1 - w_2 - w_3 - w_4 - w_1$.
W.l.o.g.\ $C_1$ is adjacent to $C$.
By the $3$-Matching \Cref{lem:3-matching} and since $S$ is $2$-node connected, there must be a pair of edges $e_1$ and $e_2$ incident to adjacent vertices of $C_1$.
W.l.o.g.\ assume that $e_1$ is incident to $v_1$ and $C$ and $e_2$ is incident to $v_2$ and $C_2$. Further, assume that $e_2$ is incident to $u_1$ at $C_2$.

Assume first that $C_2$ is adjacent to $C$ by an edge $f$.
If $f$ is incident to either $u_2$ or $u_4$, we are done.
Hence, assume w.l.o.g.\ that $f$ is incident to $u_3$.
By the $3$-Matching \Cref{lem:3-matching}, there must be another edge $g$ incident to either $u_2$ or $u_4$.
If $g$ is incident to $C_1$, then either we can merge $C_1$ and $C_2$ into a $\cEight$ (if $g$ is incident to either $v_1$ or $v_3$) or we have a single $(3, 1)$-merge using $f, g$ and $e_1$ (if $g$ is incident to either $v_2$ or $v_4$).
If $g$ is incident to $C$, then $e_1, e_2$ and $g$ also form a single $(3, 1)$-merge.
Hence, $g$ is incident to $C_3$ at vertex $w_1$.
By the $3$-Matching \Cref{lem:3-matching}, there must be an edge $h$ incident to either $w_2$ or $w_4$.
If $h$ is incident to $C$, then $h, g$ and $f$ form a single $(3, 1)$-merge. If $h$ is incident to $C_1$, then $f, g, h$ and $e_1$ form a single $(4, 2)$-merge.
Hence, $h$ is incident to $C_2$.
If $h$ is incident to $u_1$ or $u_3$, then we can merge $C_2$ and $C_3$ to a single $\cEight$, a contradiction.
Hence, $h$ is incident to $u_2$ or $u_4$ (w.l.o.g. $u_2$).
Now there has to be an edge $i$ incident to $C_3$ and incident to some other node than $C_2$.
If $i$ is incident to $C$, then $f, i$ and either $g$ or $h$ is a single $(3, 1)$-merge.
If $i$ is incident to $C_1$, then $f, i, e_1$ and either $g$ or $h$ is a single $(4, 2)$-merge.

Next, assume that $C_2$ is not adjacent to $C$.
$C_3$ then must be adjacent to $C$ by an edge $f$, otherwise $S$ is not $2$-node connected.
Furthermore, $C_1$ must be adjacent to $C_3$ by an edge $g$.
If $g$ is incident to either $u_2$ or $u_3$, then $e_1, e_2, f$ and $g$ form a single $(4, 2)$-merge.
Hence, assume that $g$ is incident to, w.l.o.g.\ $u_1$. We assume that neither $u_2$ nor $u_4$ is adjacent to $C_3$.
However, by the $3$-Matching \Cref{lem:3-matching}, either $u_2$ or $u_4$ must have an edge, say $h$, to another component, which must be $C_1$.
If $h$ is incident to $v_1$ or $v_3$, then we can merge $C_1$ and $C_2$ into one $\cEight$, a contradiction.
Hence, $h$ is incident to, say $v_2$.
Now there is a single $(4, 2)$-merge using $e_1, h, g$ and $f$.
This finishes the case and the proof.
\end{proof}

\begin{lemma}
	\label{lem:cycle-useful-for-double-merge}
	Let $H$ be a canonical $2$-edge cover of a structured graph $G$ and let $C$ be a component of $H$ contained in two distinct segments $S, S'$ such that $S$ consists of exactly three or four nodes that are all large or a $\cFour$.
	Furthermore, there is a single $(k, j)$-merge $X'$ containing $C$ in $S'$ such that $k-j \geq 2$ and $k \in \{3, 4 \}$.
	Then there is also a double $(k, j)$-merge containing $C$ for some $k \geq 5$ and $k-j \geq 4$.
\end{lemma}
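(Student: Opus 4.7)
The plan is to apply Lemma \ref{lem:31-merge} to the segment $S$ to obtain a single $(k_S, j_S)$-merge $X_S$ containing $C$ and not shortcutting $C$, with $k_S \in \{3,4\}$ and $k_S - j_S \geq 2$, and then to glue $X_S$ with the given $X'$ at the shared cut node $C$. Since segments of $\hG_H$ meet only at cut nodes, $S \cap S' = \{C\}$, so the edge sets $X_S$ and $X'$ are disjoint. Taking $X \coloneq X_S \cup X'$ gives a configuration in $\hG_H$ consisting of two simple cycles joined at the single node $C$; this matches the shape of a double merge with $X_1 \coloneq X_S$ and $X_2 \coloneq X'$, invoking the provision in Definition \ref{def:shortcut} that allows $X_2$ to be a simple cycle meeting $X_1$ at precisely one node. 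Counting components, $k = k_S + k' - 1 \geq 3 + 3 - 1 = 5$, as required.

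For the shortcut parameter $j$, I split into two cases based on the type of $C$. In either case, every $\cFour$-component other than $C$ that is incident to $X$ is incident to exactly one of $X_S$ or $X'$, and its shortcut status is unchanged by the union (the other merge lives on the far side of $C$); hence only the treatment of $C$ can cause $j$ to differ from $j_S + j'$.

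When $C$ is a $\cFour$, the two endpoints of $X_S$ in $V(C)$ form a diagonal of the $\cFour$ (because $X_S$ does not shortcut $C$); the same holds for $X'$ if $X'$ does not shortcut $C$. If $X'$ uses the same diagonal as $X_S$, then $X$ touches only two vertices of $V(C)$ and still does not shortcut $C$; otherwise the four endpoints of $X_S$ and $X'$ together exhaust $V(C)$ and $X$ does shortcut $C$. Letting $s, s' \in \{0,1\}$ indicate whether $X, X'$ shortcut $C$, a short bookkeeping gives $k - j = (k_S - j_S) + (k' - j') + (s - s')$; since the shortcut property is monotone in the edge set, $s \geq s'$, so $k - j \geq 4$.

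The main obstacle is the case when $C$ is large: then $C$ contributes to neither $j_S$ nor $j'$ nor $j$, and the naive bound is only $k - j = (k_S - j_S) + (k' - j') - 1 \geq 3$. To reach the desired $k - j \geq 4$, one must sharpen Lemma \ref{lem:31-merge} to $k_S - j_S \geq 3$ when $C$ is large. Inspecting its proof, every triangle or $4$-cycle it constructs shortcuts all $\cFour$-components it visits except possibly $C$ itself, where the constraint ``not shortcutting $C$'' applies. When $C$ is large this constraint is vacuous and, more importantly, $C$ does not count toward $j_S$ either, so each $(3,1)$- or $(4,2)$-merge produced by that proof is in fact a $(3,0)$- or $(4,1)$-merge, giving $k_S - j_S \geq 3$. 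Combined with $k' - j' \geq 2$ this yields $k - j \geq 3 + 2 - 1 = 4$, completing the construction of the desired double merge.
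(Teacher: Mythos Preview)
Your approach is the same as the paper's: obtain a single merge $X_S$ in $S$ from Lemma~\ref{lem:31-merge} that does not shortcut $C$, and take $X=X_S\cup X'$. For the case where $C$ is a $\cFour$, your bookkeeping formula $k-j=(k_S-j_S)+(k'-j')+(s-s')$ together with the monotonicity $s\ge s'$ is correct and in fact more explicit than the paper's two-line case split.

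Where you go beyond the paper is the case $C$ large. The paper simply asserts $k-j\ge 4$ there, but as you correctly note, the naive count gives only $(k_S-j_S)+(k'-j')-1\ge 3$, since a large $C$ is ``good'' on both sides and therefore double-counted. Your fix---observing that the proof of Lemma~\ref{lem:31-merge} actually yields $k_S-j_S\ge 3$ when $C$ is large---is valid and closes this gap. One small correction to your justification: it is not literally true that the constructed merges ``shortcut all $\cFour$-components except possibly $C$''; in the $(4,2)$-merges of that proof another $\cFour$ (e.g., $C_1$) may also fail to be shortcut. The correct observation is that the stated values $j_S\in\{1,2\}$ are computed under the implicit assumption that $C$ is a non-shortcut $\cFour$, so when $C$ is large each count drops by exactly one, giving $(3,0)$ or $(4,1)$ as you claim.

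A minor definitional point: Definition~\ref{def:double-a-b-merge} literally requires $X_2$ to be a path with both endpoints on $X_1$, not a second cycle sharing a single node with $X_1$. You appeal to Definition~\ref{def:shortcut}, which does allow this shape, and indeed the proof of Lemma~\ref{lem:abc-merge} goes through unchanged for it (one still has $|X|=k+1$). The paper's own proof of the present lemma makes the same implicit move, so this is an inconsistency in the paper's definitions rather than a flaw in your argument.
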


\begin{proof}
By \Cref{lem:31-merge}, there is a single $(k, j)$-merge $X$ containing $C$ in $S$ not shortcutting $C$ such that $k-j \geq 2$ and $k \in \{3, 4 \}$.
If the single merge $X'$ in $S'$ does not shortcut $C$ (or if $C$ is large), then $C$ is not shortcut in both single merges $X$ and $X'$ and hence $X \cup X'$ is a double $(k, j)$-merge containing $C$ for some $k \geq 5$ and $k-j \geq 4$, as desired.
If $X'$ does shortcut $C$, then in $X \cup X'$ there are also at least $k-j$ components that are shortcut or large.
Hence $X \cup X'$ is a double $(k, j)$-merge containing $C$ for some $k \geq 5$ and $k-j \geq 4$, as desired.
\end{proof}

\paragraph*{Main Lemmas.}

The main technical contribution is the following lemma. We postpone its proof, which is given in \Cref{sec:nice-path}.

\lemmaMANYgluingpath*

For the remainder we assume that we have applied the above lemma exhaustively.
Hence, from now on we assume that every segment $S$ with at least $5$ nodes does not contain a huge component.

Next, we show that if there is a segment that contains a $\mathcal{C}_i$ for $5 \leq i \leq 8$ and at most $i$ components, then we can turn $H$ into a canonical $2$-edge cover $H'$ with fewer components than $H$ and $\cost(H') \leq \cost(H)$.

\begin{lemma}
	\label{lem:core:small-segment-with-Ci}
	Given a canonical $2$-edge cover $H$ of a structured graph $G$ such that the component graph $\hG_H$ has a segment that contains a $\mathcal{C}_i$ for $5 \leq i \leq 8$ and at most $i$ components.
	Then we can compute a canonical $2$-edge cover $H'$ with fewer components than $H$ and $\cost(H') \leq \cost(H)$ in polynomial time.
\end{lemma}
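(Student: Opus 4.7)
The plan is to exploit the large credit $\credit(C) = 4i \geq 20$ of the $\mathcal{C}_i$-component $C$ in $S$. Write $k := |V(S)|$, so that $2 \leq k \leq i \leq 8$ (with $k \geq 2$ since $\hG_H$ is connected and $S$ is a maximal $2$-node-connected subgraph of $\hG_H$). I first produce a short cycle through $C$ entirely within $S$: since $\hG_H[V(S)]$ is $2$-node-connected on $k \geq 2$ nodes, for any $D \in V(S) \setminus \{C\}$ there are two internally node-disjoint paths in $\hG_H[V(S)]$ between $C$ and $D$, and their union is a simple cycle $X$ in $\hG_H[V(S)]$ through $C$ of length $|X| \leq k \leq 8$. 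Such an $X$ is computable in polynomial time.

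If $k \leq 6$, then $|X| \leq 6$ and $X$ is a short heavy cycle in the sense of \Cref{def:short-heavy-cycle}, since it is incident to the $\mathcal{C}_i$-component $C$. Invoking \Cref{lem:abc-merge} (with $F(X) = \emptyset$) yields the desired canonical $2$-edge cover $H' = H \cup X$ with strictly fewer components than $H$ and $\cost(H') \leq \cost(H)$.

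If instead $k \in \{7, 8\}$, then $i \geq 7$ and $\credit(C) \geq 28$. I take $H' := H \cup X$, which strictly decreases the component count. Repeating the credit bookkeeping from the proof of \Cref{lem:abc-merge} for short heavy cycles verbatim, but substituting the sharper bound $\credit(C) \geq 28$ in place of the slack lower bound of $5$ used there for the heavy component's contribution, one obtains
\begin{align*}
\credit(H') - \credit(H) \;\leq\; 2 - \bigl(\credit(C) + (|X|-1)(1-4\delta)\bigr) \;\leq\; -26 - (|X|-1)(1-4\delta),
\end{align*}
and hence, using $\delta = \nicefrac{1}{28}$ (so $1 - 4\delta = \frac{6}{7}$) and $|X| \leq 8$,
\begin{align*}
\cost(H') - \cost(H) \;\leq\; |X| - 26 - (|X|-1)\cdot \frac{6}{7} \;\leq\; 8 - 26 - 6 \;=\; -24 \;<\; 0.
\end{align*}

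The main subtlety I expect is the credit accounting when $X$ passes through complex components in $S$, so that the merged structure in $H'$ may itself be complex rather than a single new $2$EC block. As in \Cref{lem:abc-merge}, bridges and blocks of those original complex components that are not touched by $X$ keep their local credits, no new bridges are created (edges are only added), and the $+2$ term captures the worst-case additional credit required by either a newly formed large $2$EC component or a single newly formed block of a complex component. Since $C$ supplies a buffer of at least $23$ credits beyond the short-heavy-cycle analysis, any residual canonicalization overhead is comfortably absorbed and the required inequality $\cost(H') \leq \cost(H)$ holds.
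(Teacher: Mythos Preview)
Your proof is correct. Both your argument and the paper's exploit the fact that the $\mathcal{C}_i$ component carries an enormous credit ($4i \geq 20$), but you organize the payoff differently. The paper redistributes the $4i$ credits across all (at most $i$) components of $S$ so that each one holds at least $2$ credits, and then iteratively adds an arbitrary cycle $F$ in the segment: each such step buys $|F|$ edges and frees at least $2|F| - 2$ credits, so cost never increases, and after finitely many rounds the whole segment is glued into a single component. You instead perform a single merge along one cycle $X$ through $C$: for $|V(S)| \leq 6$ this is literally a short heavy cycle and \Cref{lem:abc-merge} applies directly, while for $|V(S)| \in \{7,8\}$ you rerun the same bookkeeping with the sharper bound $\credit(C) \geq 28$ in place of the generic $5$. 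The paper's redistribution yields a uniform one-line argument that collapses the entire segment at once; your route reuses the existing merge machinery and is more explicit about the per-component contribution along $X$. Either way the conclusion follows.

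One small remark: your final paragraph conflates canonicity (a structural property) with credit slack. For $k \in \{7,8\}$ you should note, as in the proof of \Cref{lem:abc-merge}, that adding a cycle in the component graph without deleting edges already yields a canonical $2$-edge cover (the new block contains all of $V(C)$, hence has at least $i \geq 7$ edges and at least $6$ vertices), independently of the credit accounting.
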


\begin{proof}
	Since a $\mathcal{C}_i$ has $4i$ credits for $5 \leq i \leq 8$, we can redistribute the credits of the $\mathcal{C}_i$ such that each component of the segment has $2$ credits.
	Then we iteratively perform the following steps until the segment has only one component.
	We find any cycle $F$ of length at least $3$ in the segment and let $H'\coloneq H \cup F$.
	We have $|H'| = |H| + |F|$ and $\credit(H') \leq \credit(H) - 2|F| + 2$.
	Since $|F| \geq 2$, we have $\cost(H') \leq \cost(H) + 2 - |F| \leq \cost(H)$. This can be done in polynomial time.
\end{proof}

Hence, from now on we assume that we have exhaustively applied the above lemma.

Next, we show that if there is a segment $S$ with at most $4$ nodes containing two components that are large or complex, then we can turn $H$ into a canonical $2$-edge cover $H'$ with fewer components than $H$ and $\cost(H') \leq \cost(H)$.

\begin{lemma}
	\label{lem:core:2large-same-segment}
	Given a canonical $2$-edge cover $H$ of a structured graph $G$ such that the component graph $\hG_H$ contains a segment $S$ with at most $4$ components and $S$ contains two components that are large or a complex. Then we can compute a canonical $2$-edge cover $H'$ with fewer components than $H$ and $\cost(H') \leq \cost(H)$ in polynomial time.
\end{lemma}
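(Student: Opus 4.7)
My plan is to find a cycle in $S$---either a short heavy cycle, a single merge, or a double merge---that can be added to $H$ to merge at least two components while keeping $\cost(H')\le\cost(H)$; once such a cycle is exhibited, \Cref{lem:abc-merge} concludes. By the exhaustive application of \Cref{lem:core:small-segment-with-Ci} we may assume no $\mathcal{C}_i$ with $5\le i\le 8$ appears in $S$, so every component of $S$ is a $\cFour$, large, or complex.

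\emph{Case $|S|=2$.} Write $S=\{C_1,C_2\}$ with both $C_1,C_2$ large or complex. Since $\hG_H[S]$ is $2$-vertex connected there are at least two edges in $G$ between $V(C_1)$ and $V(C_2)$, and by the $3$-Matching Lemma (\Cref{lem:3-matching}) in fact at least three. If at most one of $C_1,C_2$ is complex, I will check that adding any two of these edges already suffices: a direct computation yields $|H'|=|H|+2$ and $\credit(H')\le\credit(H)-2$, whence $\cost(H')\le\cost(H)$. If both $C_1,C_2$ are complex, the naive addition leaves $\cost(H')=\cost(H)+1$, so I plan to choose two matching edges whose $V(C_1)$-endpoints lie on opposite sides of some bridge $e$ of $C_1$. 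The resulting length-$2$ cycle in $\hG_H$ covers $e$ and is a short heavy cycle, and \Cref{lem:abc-merge} applies. The existence of such a pair follows from structuredness: otherwise all edges from $V(C_1)$ to the rest of $G$ would be incident to a single block of $C_1$, and an endpoint of a bridge on the other side would witness either a $1$-vertex cut or a non-isolating $2$-vertex cut in $G$.

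\emph{Case $|S|\in\{3,4\}$.} Let $C_1,C_2$ be the two large/complex components. Because $C_1,C_2$ are never $\cFour$s, they contribute to $k$ but not to $j$ in any single $(k,j)$-merge; hence to obtain a single $(k,j)$-merge with $k-j\ge 3$ through all $k=|S|\in\{3,4\}$ components it suffices to shortcut every $\cFour$-component in $S$. I plan to combine \Cref{lem:31-merge} with the $3$-Matching Lemma to exhibit such a cycle: given any $\cFour$-component $C'$ in $S$, the three matching edges from $V(C')$ to $V(S)\setminus V(C')$ miss only one vertex of $C'$, so at least two of them are incident to adjacent vertices of $C'$, and the cycle can be routed to shortcut $C'$. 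Alternatively, if any of $C_1,C_2$ is complex, any cycle through $S$ covering one of its bridges is a short heavy cycle, handled again by \Cref{lem:abc-merge}.

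The main obstacle is the $|S|=2$ double-complex sub-case, where I must carefully translate $2$-vertex connectivity of $G$ and the absence of non-isolating $2$-vertex cuts into the existence of a bridge-covering pair of matching edges. The remaining sub-cases reduce to routine applications of \Cref{lem:abc-merge} once a suitable cycle is named.
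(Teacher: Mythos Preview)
Your high-level strategy matches the paper's---reduce away $\mathcal C_i$'s via \Cref{lem:core:small-segment-with-Ci}, handle complex components either by covering a bridge or by treating them as large, and then exhibit a suitable merge---but several steps are not yet justified.

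First, a small miscomputation in the $|S|=2$ both-complex sub-case: adding two matching edges that hit a single block $B_1$ of $C_1$ and a single block $B_2$ of $C_2$ already gives $\cost(H')=\cost(H)$, not $\cost(H')=\cost(H)+1$. You lose two component credits ($1+1$) and two block credits ($1+1$) and regain one of each, so $\credit$ drops by~$2$, balancing the two added edges. Hence no bridge-covering pair is needed here.

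The real gaps are in $|S|\in\{3,4\}$. Invoking \Cref{lem:31-merge} does not help: its hypothesis requires every node of $S$ to be large or a $\cFour$ (so it is inapplicable when a complex component is present), and its conclusion only yields $k-j\ge 2$, whereas \Cref{lem:abc-merge} needs $k-j\ge 3$. Your treatment of complex components also stops short: you note that a bridge-covering cycle would be a short heavy cycle, but you do not say what to do when no cycle in $S$ covers a bridge of the complex component. The paper's missing step is that in that event every edge of $S$ incident to the complex component lands in a single block, so for the remainder of the argument that component may be treated exactly like a large one. Finally, for $|S|=4$ with two $\cFour$'s your $3$-matching sketch does not guarantee a usable merge: knowing that two matching edges leave adjacent vertices of one $\cFour$ does not force the resulting cycle to contain both large components. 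For instance the two edges could go to the other $\cFour$ and to just one large component, producing only a $(3,1)$-merge. The paper avoids this by fixing the two adjacent edges $e_1,e_2$ from the $\cFour$ to distinct components $C_2,C_3$, and then explicitly hunting for a path in $S\setminus C_1$ from $C_2$ to $C_3$ that passes through the fourth component; a residual bad configuration (two $\cFour$'s adjacent to everything, two large components adjacent only to the $\cFour$'s) is then dispatched by hand with a five-edge double-cycle construction. Your plan needs an analogue of this case analysis.
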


\begin{proof}
	We first apply~\cref{lem:core:small-segment-with-Ci} exhaustively, so that we can assume that $S$ does not contain a $\mathcal{C}_i$ for $5 \leq i \leq 8$.
	Now we can assume that $S$ contains only large, complex, and $\cFour$ components.

	We will iteratively find a cycle $F$ in $S$ and merge the components together until only one component remains.
	We first argue that complex components can be treated equivalently as large components as follows.
	If the cycle $F$ covers some bridge of a complex component, then we gain $4$ credits from the bridge, which is much more than we gain from a large component (only $2$ credits).
	If the cycle $F$ does not cover any bridge of a complex component, then the two edges in $F$ incident to the complex component are incident to the same block of the complex component.
	Hence, if $F$ is incident to a block node $B$ of some complex component, the block has a credit of one and after  adding $F$ to $H$, the components incident to $F$ are part of $B$ and again need a credit of one.

	Therefore, we can first check if there is a cycle $F$ that covers some bridge of a complex component and add  if there is such a cycle.
	If there is no such cycle, all edges incident to a complex component are only incident to one block of it and hence we can effectively treat the complex component as a large component.
	In the following, we assume that $S$ consists only of large components and $\cFour$ components.

	If there is no $\cFour$, then each component has at least $2$ credits and we can iteratively find a cycle in $S$ and merge the components together until only one component remains, as we do in the proof of~\cref{lem:core:small-segment-with-Ci}.

	Let the components of $S$ be $C_1, C_2, C_3, C_4$.
	Let $C_1$ be a $\cFour$, with cycle $v_1 - v_2 - v_3 - v_4 - v_1$.
	By the $3$-Matching \Cref{lem:3-matching} and since $S$ is $2$-node connected, there must be two edges $e_1$ and $e_2$ such that, w.l.o.g., $e_1$ is incident to $v_1$ and $C_2$ and $e_2$ is incident to $v_2$ and $C_3$.

Assume there exists a path $P$ in $S$
 (of length 1 or 2) starting in $C_2$, ending in $C_3$, not containing $C_2$ but containing $C_4$ (if $S$ has 4 components).
	Then setting $H' = (H \cup P \cup \{ e_1, e_1 \}) \setminus \{ v_1 v_2 \}$ is a canonical $2$-edge cover where the components all components of $S$ are now in a single 2EC component of $H'$.
	Hence, $H'$ has fewer components than $H$.
	Furthermore, $\cost(H') \leq \cost(H')$ since $|H'| \leq |H| + |S| - 1$ and $\credit(H') \leq \credit(H) -  4 - (|S|-2) ( 1 - 4 \delta) +2 \leq \credit(H) - |S| +1$, since $|S| \in \{ 3, 4\}$ and $S$ contains at least 2 large components.

	Note that such a path $P$ as above exists if either $C_1$ is the only $\cFour$ in $S$ or there exist a $\cFour$ in $S$ that is only adjacent to two components in $S$. Hence, in those cases we are done.
	Therefore, assume there is no such path for any $\cFour$ in $S$.
	This implies that $S$ contains exactly two $\cFour$, say $C_1$ and $C_2$ and two large components, say $C_3$ and $C_4$. Let the cycle on $C_2$ be $u_1 - u_2 - u_3 - u_4 - u_1$.
	Furthermore, in $S$ the two $\cFour$'s $C_1$ and $C_2$ are adjacent to all components of $S$ and $C_3$ and $C_4$ are only adjacent to the components $C_1$ and $C_2$ in $S$. Otherwise, we have a path $P$ as above.
	Again, by the $3$-Matching \Cref{lem:3-matching} and since $S$ is $2$-node connected, there must be two edges $e_1$ and $e_2$ such that, w.l.o.g., $e_1$ is incident to $v_1$ and $C_2$ and $e_2$ is incident to $v_2$ and $C_3$. Note that $e_1$ can not be incident to $C_4$, as then again we can find the desired path $P$.
	Assume that $e_1$ is incident to $u_1$ on $C_2$.
	There must be an edge $f$ incident to $u_2$ or $u_4$ and some other component of $S$, by the 3-Matching \Cref{lem:3-matching}. Assume w.l.o.g.\ $f$ is incident to $u_2$. We make a case distinction on which component $f$ is incident to.
	If $f$ is incident to $C_3$, then also we can find a good cycle:
	Then $H' = (H \cup \{ e_1, e_2, f \} ) \setminus \{ v_1 v_2 , u_1 u_2 \}$ is a canonical 2-edge cover that has fewer components than $H$ and it can be checked similarly to the first case that $\cost(H') \leq \cost(H)$.
	Assume $f$ is incident to $C_4$.
	There must be an edge $g$ from $C_3$ to $C_2$ and $h$ from $C_4$ to $C_1$, since $S$ is $2$-node connected and $C_3$ and $C_4$ are not connected to each other in $S$, by assumption.
	Now set $H' = (H \cup \{ e_1, e_2, f, g, h \} ) \setminus \{ v_1 v_2 , u_1 u_2 \}$.
	Observe that $H'$ is a canonical $2$-edge cover with fewer components than $H$.
	Further, $\cost(H') \leq \cost(H')$ since $|H'| \leq |H| + 3$ and $\credit(H') \leq \credit(H) -  4 - 2 - 2 ( 1 - 4 \delta) +2 \leq \credit(H) - 3$.
	Hence, we assume that $f$ is incident to $C_1$. Note that $f$ cannot be incident to $v_2$ or
	$v_4$ as then $C_1$ and $C_2$ can be merged into a single $\cEight$, a contradiction to $H$ being
	canonical.
	Hence, assume that $f$ is incident to $v_1$ (the case that $f$ is incident to $v_3$ is the same).
	Again, there must be an edge from $C_3$ to $C_2$, say $g$.
	Now observe that either $H' = (H \cup \{ e_1, e_2, g \} ) \setminus \{ v_1 v_2 , u_1 u_2 \}$ or  $H'' = (H \cup \{ e_1, f, g \} ) \setminus \{ v_1 v_2 , u_1 u_2 \}$ is a canonical $2$-edge cover with fewer components than $H$.
	Further, $\cost(H') \leq \cost(H')$ since $|H'| \leq |H| + 1$ and $\credit(H') \leq \credit(H) -  2 - 2 - 2 ( 1 - 4 \delta) +2 \leq \credit(H) - 1$.
	This finishes the case distinction and the proof of the lemma.
\end{proof}

We assume that we have applied all previous lemmas exhaustively.
In particular, the last lemma implies that every segment with at most four components contains at most one component that is either large or complex.

Next, we show that if there is a segment with at least $5$ nodes that does not contain a huge component and does contain a $\mathcal{C}_i$ for some $4 \leq i \leq 8$,  then we can turn $H$ into a canonical $2$-edge cover $H'$ with fewer components than $H$ and $\cost(H') \leq \cost(H)$.

\begin{lemma}
	\label{lem:core:many-components-no-huge}
	Given a canonical $2$-edge cover $H$ of a structured graph $G$ such that the component graph $\hG_H$ contains a segment $S$ with at least $5$ components but $S$ does not contain a huge component.
	Then we can compute a canonical $2$-edge cover $H'$ with fewer components than $H$ and $\cost(H') \leq \cost(H)$ in polynomial time.
\end{lemma}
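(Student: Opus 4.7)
The plan is to produce in polynomial time one of three local structural witnesses -- a short heavy cycle, a single $(k,j)$-merge with $k\ge 3$, $j\le 2$ and $k-j\ge 3$, or a double $(k,j)$-merge with $k\ge 5$, $j\le 3$ and $k-j\ge 4$ -- and then apply \Cref{lem:abc-merge} to obtain the desired $H'$. Because the preceding lemmas in this subsection have already been applied exhaustively, I may additionally assume that every segment containing a $\mathcal C_i$ for some $5\le i\le 8$ has strictly more than $i$ nodes, and that no segment with at most $4$ nodes contains two large or complex components. I then split according to whether $S$ contains a ``heavy'' component (complex or $\mathcal C_i$ for $5\le i\le 8$) or $S$ consists exclusively of $\cFour$'s together with at most one large component.

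In the heavy-component case I look for a short heavy cycle. Let $C$ be the heavy component; if $C$ is complex the target is a simple cycle in $\hG_H[S]$ of length at most $6$ that covers some bridge of $C$, and if $C = \mathcal C_i$ the target is a simple cycle through $C$ of length at most $6$. Since $S$ is $2$-vertex-connected and has at least $5$ nodes, the local $3$-Matching Lemma (\Cref{lem:local3matching}) -- and, when both sides are big enough, the local $4$-Matching Lemma (\Cref{lem:local4matching}) -- produces multiple edges emanating from $V(C)$ into the rest of $S$. Chasing these edges through the $2$-VC structure of $\hG_H[S]$, in the spirit of the case analyses in \Cref{lem:gluing:trivial-segment} and \Cref{lem:gluing:non-trivial-segment}, I can assemble the required short cycle; the heavy credit carried by $C$ (or by a bridge of $C$) pays for the added edges via \Cref{lem:abc-merge}.

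In the $\cFour$-heavy case I use a gluing-path argument analogous to the one behind \Cref{lem:core:many-components}, but without the anchor provided there by a huge component. Starting from an arbitrary node of $S$, I maintain a gluing path $P$ in $\hG_H[S]$ (cf.\ \Cref{def:nice-path}) and at each step attempt to either (a)~extend $P$ by a new node of $S$, (b)~close $P$ by a chord joining two nodes at distance at least $4$ along $P$, which produces a cycle in which at most the two chord-endpoint $\cFour$'s are not shortcut, hence a single $(k,j)$-merge with $k\ge 5$, $j\le 2$ and $k-j\ge 3$, or (c)~combine $P$ with a branch, in the spirit of \Cref{fig:gluing-path-branch-intro}, to obtain a double $(k,j)$-merge with $k\ge 5$, $j\le 3$ and $k-j\ge 4$. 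Since each extension strictly increases $|P|$ and $|P|\le|V(G)|$, the process terminates in polynomial time. Existence of the extensions, chords, and branches at each step is ensured by $2$-vertex-connectivity of $S$ combined with \Cref{lem:local3matching} applied locally inside $S$.

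The hard part will be the pure $\cFour$ sub-case. There no component carries a large credit reserve, so every prospective merge cycle must shortcut all but at most two (single merge) or three (double merge) of its $\cFour$'s; this forces the gluing path, its chord, and its branch to enter and leave each traversed $\cFour$ through a pair of adjacent vertices. I expect the delicate point to be ruling out a configuration in which no extension, no closing chord of distance $\ge 4$, and no suitable branch exist simultaneously: such a configuration should force either a violation of the $2$-vertex-connectivity of $S$ or the existence of two $\cFour$'s that could be merged into a $\cEight$, contradicting the canonicality of $H$ (cf.\ the fourth bullet of \Cref{def:canonicalD2}).
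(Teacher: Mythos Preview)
Your plan diverges entirely from the paper's proof, and the route you chose is both harder and incomplete.

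The paper does \emph{not} work inside $S$ at all. Instead it exploits the fact that a canonical cover always has a huge component $L$ \emph{somewhere}; since $L\notin S$, there is a cut node $z$ of $\hG_H$ in $S$ separating $S\setminus\{z\}$ from $L$. The key step is then purely structural: if $z$ were a $\mathcal C_i$ for $4\le i\le 8$, then $S\setminus\{z\}$ contributes at least $4i$ vertices on one side and $L$ contributes at least $32$ on the other, so $z$ would be a large $\mathcal C_i$ cut, contradicting that $G$ is structured. Hence $z$ is large or complex. Now walk from $z$ toward $L$: in the next segment $S''$ containing $z$, either $L\in S''$ (so apply \Cref{lem:core:many-components} or \Cref{lem:core:2large-same-segment}), or there is a second cut node $z'$ which by the same argument is also large or complex; if $|S''|\le 4$ then $S''$ has two large/complex nodes and \Cref{lem:core:2large-same-segment} finishes, and if $|S''|\ge 5$ recurse with $S''$ in place of $S$. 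That is the entire proof.

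Your approach, by contrast, tries to reproduce the gluing-path machinery of \Cref{lem:core:many-components} inside $S$ without a huge anchor, and this is exactly where it breaks. In the paper's gluing-path analysis the huge component is not incidental: it is used repeatedly to certify that putative $\cFour$-cuts and $4$-vertex cuts are \emph{large} (see the proofs of \Cref{lem:manyC4_typesz1-3:2} and \Cref{lem:nicepath_branch-at-2-length-1}), which is what rules out the ``stuck'' configurations. Your final paragraph correctly identifies this as the hard point but offers only the hope that such configurations force a $2$-VC violation or a mergeable $\cFour$ pair; without the size witness provided by $L$, that hope is not substantiated. The heavy-component case has a similar gap: a short cycle through a complex $C$ need not cover a bridge, and there is no a priori bound of $6$ on the length of a cycle through a $\mathcal C_i$ in a large $2$-VC segment.
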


\begin{proof}
	Let $S'$ be another segment of $\hG_H$ that contains a huge component $L$.
	In $S$ there must be a cut node $z$ that separates $S \setminus \{z\}$ from $S'$ and therefore also from $L$.
	If $z$ is a $\mathcal{C}_i$ for $5 \leq i \leq 8$, then we can assume that $S\setminus \{z\}$ has at least $i$ components since we can apply~\cref{lem:core:small-segment-with-Ci} exhaustively.

	Suppose $z$ is a $\mathcal{C}_i$ for $4 \leq i \leq 8$, then $S \setminus \{z\}$ contains at least $i$ components and $4i$ vertices.
	This implies that $z$ is a large $\mathcal{C}_i$ cut since $L$ contains at least $32$ vertices, contradicting that $G$ is structured.
	Hence, $z$ corresponds to either a large or complex component.

	If $z$ and $L$ are in the same segment, then we can either apply~\cref{lem:core:many-components} if the segment has at least $5$ components or apply~\cref{lem:core:2large-same-segment} if the segment has at most $4$ components.
	Otherwise, consider the segment $S'' \neq S$ containing $z$ and another cut node $z'$ separating $L$ from $z$.
	If $S''$ has at least $5$ components, we continue with the same arguments as above, i.e., replace $S$ with $S''$.
	If $S''$ has at most $4$ components, then we can argue as above that $z'$ is also a large or complex component.
	Then we can apply~\cref{lem:core:2large-same-segment} in $S''$.
\end{proof}

We assume that we have applied the above lemmas exhaustively.
This implies the following: $H$ does not contain any $\mathcal{C}_i$ for $5 \leq i \leq 8$, and each segment has at most $4$ nodes and contains at most one large or complex component.

Next, we show that if there are two components that are large or complex, then we can turn $H$ into a canonical $2$-edge cover $H'$ with fewer components than $H$ and $\cost(H') \leq \cost(H)$.

\begin{lemma}
	\label{lem:core:2large}
	Given a canonical $2$-edge cover $H$ of a structured graph $G$ such that the component graph $\hG_H$ contains a huge component $L$ and another component $C \neq L$ that is either large or complex.
	Then we can compute a canonical $2$-edge cover $H'$ with fewer components than $H$ and $\cost(H') \leq \cost(H)$ in polynomial time.
\end{lemma}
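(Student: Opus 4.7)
Since the previous lemmas of this subsection have been applied exhaustively, we may assume that every segment of $\hG_H$ has at most $4$ nodes, contains at most one large-or-complex component (by \Cref{lem:core:2large-same-segment}), and that no component is a $\mathcal{C}_i$ for $5\leq i\leq 8$. In particular, the huge component $L$ and the large-or-complex $C$ lie in different segments; if there are multiple large-or-complex components distinct from $L$, I would pick $C$ to minimize its distance to $L$ in the block-cut-segment tree of $\hG_H$, so that every cut component on the unique segment-tree path from a segment $S_L\ni L$ to a segment $S_C\ni C$ is a $\cFour$. Denote this path of segments by $S_L=S_0,S_1,\ldots,S_q=S_C$ with cut nodes $z_i\in S_{i-1}\cap S_i$.

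The first step of the plan invokes the existing double-merge machinery. If some cut node $z$ (either one of the $z_i$ above, or $C$ itself if $C$ happens to be a cut node in such a configuration) lies in two distinct segments $S,S'$ both of size in $\{3,4\}$, I would apply \Cref{lem:31-merge} inside each of $S$ and $S'$ (with $z$, respectively $C$, playing the role of the distinguished component) to obtain two single $(k_i,j_i)$-merges with $k_i\in\{3,4\}$ and $k_i-j_i\geq 2$ both containing $z$, combine them via \Cref{lem:cycle-useful-for-double-merge} into a double $(k,j)$-merge with $k\geq 5$ and $k-j\geq 4$, and conclude via \Cref{lem:abc-merge}.

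Otherwise the $L$-to-$C$ path is \emph{thin}---every cut node on it is shared with at least one size-$2$ segment---so one can extract a subchain $L=w_0,w_1,\ldots,w_r=C$ with $r\geq 2$ of components linked exclusively by size-$2$ segments of $\hG_H$, each contributing two parallel edges of $G\setminus H$. I would then perform a direct \emph{tube construction}: add to $H$ all $2r$ of these parallel external edges and, in each intermediate $\cFour$ $w_i$, delete the two $\cFour$-edges that become redundant once the four external attachments are present (in the generic case, two opposite edges of $w_i$'s $4$-cycle). The resulting $H'$ merges $L,w_1,\ldots,w_{r-1},C$ into a single huge $2$-edge-connected component; $2$-edge-connectivity is checked by pairing the two parallel external edges of each size-$2$ segment with the surviving half of each intermediate $\cFour$ to exhibit two edge-disjoint paths between every pair of vertices. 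The accounting gives $|H'|-|H|=2r-2(r-1)=2$ and $\credit(H')-\credit(H)\leq 2-\credit(L)-\credit(C)-(r-1)(1-4\delta)\leq -\credit(C)-(r-1)(1-4\delta)$, so $\cost(H')-\cost(H)\leq 2-\credit(C)-(r-1)(1-4\delta)\leq -(r-1)(1-4\delta)<0$ using $\credit(C)\geq 2$.

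The main obstacle I anticipate is the degenerate subcase where an intermediate $\cFour$ $w_i$ has its four tube-external edges landing on only three distinct vertices of $w_i$: then one vertex $v$ of $w_i$ has no tube-external and both of its $\cFour$-edges must stay in $H'$ to preserve its degree, so at most one $\cFour$-edge of $w_i$ can be deleted and the tube loses $4\delta$ per such bad $w_i$. The key observation to contain this is that $V(w_i)$ itself forms a $\mathcal{C}_4$ cut in $G$ (a $4$-cycle separating the $L$-side from the $C$-side), so the structured-graph prohibition of large $\mathcal{C}_4$ cuts forces the $C$-side of $w_i$ to have fewer than $16$ vertices, which in turn restricts bad $w_i$'s to the last one or two positions on the chain; for $r\geq 4$ the savings $(r-1)(1-4\delta)$ absorb the at-most-$8\delta$ overspend. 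For the small remaining configurations ($r\in\{2,3\}$ with $|V(C)|<16$), I would invoke \Cref{lem:3-matching} inside $V(w_i)$ versus $V\setminus V(w_i)$ to reroute the tube through a fourth attachment vertex and restore the two-edge deletion, or---since the entire local structure near $C$ has constantly many vertices---solve the connection problem near $C$ by brute force in polynomial time.
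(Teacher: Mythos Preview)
Your observation in the final paragraph---that an intermediate $\cFour$ on the $L$-to-$C$ path is a $\mathcal C_4$ cut of $G$ and therefore cannot be large---is exactly the structural key, but you deploy it only as a late patch instead of upfront. Applied to the \emph{first} cut node $z:=z_1$ in a segment $S\ni L$, it forces the entire $C$-side of $z$ to have fewer than $16$ vertices; since $|V(C)|\geq 9$, this means $z$ lies in exactly two segments, $S$ and a segment $S'$ that already contains $C$ together with at most one further $\cFour$ $z'$ (a size-$2$ segment consisting of $z$ and another $\cFour$ is ruled out by the canonical no-merge condition). So $q=1$ always. The paper works directly in this picture: it applies \Cref{lem:cycle-through-cfour} in $S$ (valid regardless of whether $L$ is complex) to obtain a cycle $F$ through $L$ and $z$ hitting adjacent vertices $v_1,v_2$ of $z$, and uses the local $3$-matching lemma on the $S'$ side to place the remaining attachments on adjacent vertices of $z$ (if $|S'|=2$) or of $z'$ (if $|S'|=3$), giving exactly the deletable $\cFour$-edges needed.

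Against this, your proposal has two concrete gaps. First, the dichotomy is not exhaustive: with $q=1$ the possibilities are $(|S|,|S'|)\in\{(2,2),(2,3),(\geq 3,2),(\geq 3,3)\}$, but your double-merge branch addresses only $(\geq 3,3)$ and your thin-chain branch only $(2,2)$; and even in the $(\geq 3,3)$ case, \Cref{lem:31-merge} and \Cref{lem:cycle-useful-for-double-merge} require every node of the relevant segment to be large or a $\cFour$, which fails when $L$ or $C$ is complex. Second, in the tube construction the two parallel edges of a size-$2$ segment need not land on adjacent vertices of the intermediate $\cFour$, so deleting two of its edges can create bridges; your ``reroute via $3$-matching'' instinct is right but must be the primary argument---this is exactly what \Cref{lem:cycle-through-cfour} packages---and the constant-size brute-force fallback does not by itself guarantee $\cost(H')\leq\cost(H)$.
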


\begin{proof}
We assume that we have applied all of the above lemmas exhaustively and hence no condition of the above lemmas is met.
	Let $S$ be the segment containing $L$ and $S'\neq S$ the segment containing $C$, such that there is a $\cFour$-cut node $z$ in $S$ separating $C$ and $L$.
	Since $H$ is canonical and we cannot merge two $\cFour$s into one $\cEight$, $z$ cannot be in a segment with only $2$ $\cFour$'s.
	Since $z$ cannot correspond to a large $\cFour$ cut (we cannot partition $G \setminus V(z)$ into two parts with at least $16$ vertices) as $G$ is structured and $|V(C)| \geq 9$,
	$z$ must be in exactly $2$ segments: $S$ and $S'$.
	Furthermore, $S'$ either consists of $z$ and $C$, or $z$, $C$ and another component $z'$ that is a $\cFour$.
	Let the vertices of $z$ be labeled as $v_1-v_2-v_3-v_4$.
	We apply~\cref{lem:cycle-through-cfour} to find a cycle $F$ in $S$ containing $L$ and $z$ such that $F$
	is incident to $v_1$ and $v_2$.

	If $S'$ consists of $z$ and $C$,
	by~\cref{lem:local3matching}, there must be a 3-matching between $V(z)$ and $V(C)$, which implies that there is a pair $v_i$ and $v_{i+1}$ for some $i \in \{2,3\} $, such that $v_iC,v_{i+1}C \in E(G)$.
	We let $H'\coloneq H\setminus \{v_1v_2,v_iv_{i+1}\}\cup F \cup \{v_iC,v_{i+1}C\}$.
	In this way we save the credits of the components incident to $F$ and $C$.
	Therefore, $\credit(H') \leq \credit(H) - (|F|-1)\cdot 4 \cdot \cre - 2-2 + 2 = \credit(H)-|F|+4\delta(|F|-1)-1 \leq  \credit(H)-|F|$.
	The last inequality holds since $S$ contains at most $4$ components and $|F| \geq 4$.
	We have $|H'| = |H| - 2+|F|+2 = |H|+|F|$, which implies that $\cost(H') \leq \cost(H)$.

	Assume next that $S'$ consists of $z$, $C$, and another component $z'$ that is a $\cFour$.
	Let the vertices of $z'$ be labeled as $v_1'-v_2'-v_3'-v_4'$.
	Since $S'$ is 2-node-connected, there is a cycle $F'$ in $S'$ through $z,z'$ and $C$.
	We show that there is a cycle $F''$ in $S'$ through $z,z'$ and $C$, and incident to neighboring vertices of $z'$, say, $v_1'$ and $v_2'$.
	Suppose $F'$ does not satisfy this condition, say, $F'$ is incident to $v_1'$ and $v_3'$.
	We apply~\cref{lem:local3matching} to find a 3-matching between $V(z')$ and $V(C) \cup V(z)$.
	We can assume that there is an edge in the 3-matching that is incident to $v_2'$.
	If the edge is incident to $C$, we let $F''\coloneq F \setminus \{z'C\} \cup {v_2C}$, otherwise we let $F''\coloneq F \setminus \{z'z\} \cup {v_2'z}$.
	It can be easily checked that $F''$ is a cycle in $S'$ through $z,z'$ and $C$, and incident to neighboring vertices of $z'$, say, $v_1'$ and $v_2'$ (or $v_2'$ and $v_3'$).
	We let $H'\coloneq H\setminus \{v_1v_2,v_1'v_2'\}\cup F'' \cup F$.
	Then we have $|H'| = |H| - 2+|F''|+|F| = |H|+|F|+1$ and $\credit(H') \leq \credit(H) - (|F|-1+1)\cdot 4 \cdot \cre - 2-2 + 2 = \credit(H)-|F|-2+4\delta|F| \leq \credit(H)-|F|-1 $.
	The last inequality holds since $S$ contains at most $4$ components and $|F| \geq 4$.
	Therefore, $\cost(H') \leq \cost(H)$.
\end{proof}

We assume that we have applied the above lemmas exhaustively.
This implies the following:
$H$ contains one huge component $L$, which may be complex, and all other components are $\cFour$'s.
Now we prove \Cref{lem:manyC4_core-configuration_main}.

\begin{proof}[Proof of \Cref{lem:manyC4_core-configuration_main}]

Recall that $L$ can be 2EC or complex, i.e., containing some bridges.
Assume $L$ is complex.
Before we prove the properties of a core-square configuration, we turn the component $L$ into a 2EC component without increasing $\cost(H)$.

	Consider a pendant block $B$ of $L$ that is adjacent to a bridge $e$.
	Since $G$ is 2EC, there must be a bridge-covering path $P$ in $G$ starting from $B$ that covers $e$.
	Since we have exhaustively applied the above lemmas, we know that $L$ is only contained in segments of $\hG_H$ of size at most $4$.
	Therefore, observe that there is a bridge-covering path $P$ of length at most $4$ starting in $B$ and covering $e$.
	Hence, $P$ is a short heavy cycle and according to \Cref{lem:abc-merge}, $H' = H \cup P$ is a canonical $2$-edge cover with $\cost(H) \leq \cost(H')$.

	If after the application of this bridge-covering step one of the above lemmas applies, then we exhaustively apply that lemma.
	Hence, we can cover all bridges of $L$ and turn it into a 2EC component.
	Therefore, we now assume that $L$ is bridgeless and we have exhaustively applied the above lemmas.
	These directly imply the Properties~(i) and~(ii) of the core-square configuration.

	We next show Property~(iv).
	That is, we show that each cut component which is a $\cFour$ is contained in at most $2$ segments.
	Suppose that there is a cut node $z$ which corresponds to a $\cFour$ and is contained in at least $3$ segments, say, $S_1, S_2$ and $S_3$.
	We can assume that $S_1$ and $S_2$ do not contain the huge component $L$.
	Then each of $S_1$ and $S_2$ must contain at least $3$ components, as otherwise we can merge two $\cFour$s into a $\cEight$.
	This implies that $z$ is a large $\cFour$ cut since $(S_1 \cup S_2) \setminus z$ already contains at least $16$ vertices, which contradicts that $G$ is structured.

	For each cut node $z$ corresponding to a $\cFour$,
	let $S_1$ and $S_2$ be the segments containing $z$.
	We show that either of $S_1$ or $S_2$ contains $L$.
	Otherwise, there must be another cut component $z'$ which is a $\cFour$ and separates $\{S_1 \cup S_2\} \setminus\{ z'\}$ from $L$.
	This implies that $z'$ is a large $\cFour$ cut since  $\{S_1 \cup S_2\} \setminus\{ z'\}$  contains at least $4 \cdot 4=16$ vertices and $L$ contains at least $32$ vertices, which contradicts that $G$ is structured.
	Without loss of generality, we can assume that $S_1$ contains $L$.
	Then $S_2$ must contain at most $4$ components, otherwise $z$ is a large $\cFour$ cut, which contradicts that $G$ is structured.
	Hence, we ensure the Property~(iv) of the core-square configuration.

	We next prove Property~(iii), i.e., we show that we can turn $H$ into a canonical $2$-edge cover such that $L$ is only contained in segments of size $2$.

	Assume first that $L$ is contained in at least two segments $S, S'$ of size at least 3 (and at most 4, since we can not apply the above lemmas).
	Then, by \Cref{lem:31-merge}, in $S$ there exists  a single $(k, j)$-merge containing $L$ such that $k-j \geq 2$ and $k \in \{3, 4 \}$.
	Then, by \Cref{lem:cycle-useful-for-double-merge}, there is a double $(k, j)$-merge containing $L$ for some $k \geq 5$ and $k-j \geq 4$.
	Hence, by \Cref{lem:abc-merge} we can turn $H$ into a canonical $2$-edge cover $H'$ with $\cost(H') \leq \cost(H)$.
	If after the application of this step one of the above lemmas applies, then we exhaustively apply that lemma.
	Observe that again the Properties~(i), (ii), and~(iv) are satisfied.

	Now assume there is only one segment $S$ containing $L$ and having at least 3 nodes (and at most 4).
	If one of the other components $C$ in $S$ (which is a $\cFour$), is a cut node, then $C$ is contained in another segment $S'$ containing either 3 or 4 components.
	Then we do the same as in the previous step:
By \Cref{lem:31-merge}, in $S$ there exists  a single $(k, j)$-merge containing $C$ such that $k-j \geq 2$ and $k \in \{3, 4 \}$.
	Furthermore, by \Cref{lem:cycle-useful-for-double-merge}, there is a double $(k, j)$-merge containing $C$ for some $k \geq 5$ and $k-j \geq 4$.
	Hence, by \Cref{lem:abc-merge} we can turn $H$ into a canonical $2$-edge cover $H'$ with $\cost(H') \leq \cost(H)$.

	If after the application of this step one of the above lemmas applies, then we exhaustively apply that lemma.
	Observe that again the Properties (i), (ii), and (iv) are satisfied.
	Hence, in the remaining case we have that there is only one segment $S$ containing $L$ and having at least 3 nodes (and at most 4).
	Furthermore, none of the nodes in the segment is a cut node.
	Observe that there exists a simple cycle $X$ in $S$ containing all nodes of $S$. Set $H' = H \cup X$.
	Then, $|H'| = |H| + |X|$ and
	$$\credit(H') \leq \credit(H) + 2 - 2 - (|X|-1) \cdot (1- 4\delta) \leq \credit(H) - |X| + 1 + (|X|-1) \cdot 4 \delta \leq \credit(H) - |X| + 2    \ ,$$
	since the newly created large 2EC component needs 2 credits, the huge 2EC component $L$ has a credit of 2 and every other component is a $\cFour$.
	Then we have $\cost(H') \leq \cost(H) +2$, as desired.
	Furthermore, observe that after this step Property~(iii) is satisfied, since now $L$ is only contained in segments of size 2.
This finishes the proof.
\end{proof}

\subsubsection{Proof of \Cref{lem:core:many-components}}
\label{sec:nice-path}
To prove \Cref{lem:core:many-components}, we use the definition of a \emph{gluing path}. A similar concept in a weaker form has been used in~\cite{GargGA23improved}.

\definitionMANYgluing*

An example of a gluing path is given in \Cref{fig:gluing-path-example}.
The benefit of a gluing path is the following: Assume we are given a gluing path $P$ of length $k \geq 4$.
If there is some edge $e$ from a node $v$ of $P$ to some node $u$ of $P$ such that their distance on $P$ is at least $4$, then consider the cycle $C$ formed by the subpath of $P$ from $u$ to $v$ and the edge $e$.
Clearly, adding the $k+1$ edges of $C$ to $H$ turns the $k+1$ components of $C$ into one single $2$-edge connected component $C'$. Let $H'$ be the new $2$-edge cover.
Furthermore, due to the definition of a gluing path, each component of $C$ that is a $\cFour$ (except the components corresponding to $u$ and $v$) can be shortcut. That is, we can remove one edge for each such component such that $C'$ remains $2$-edge connected.
Therefore, it can be easily checked that $H'$ is also a canonical $2$-edge cover such that $\cost(H') \leq \cost(H)$.

Of course, such edges may not always exist.
Hence, our goal is the following:
Given some canonical $2$-edge cover and some gluing path $P$ of length $k$, either we want to make progress by turning $H$ into some canonical $2$-edge cover $H'$ such that $H'$ contains fewer components than $H$ and $\cost(H') \leq \cost(H)$ or we can extend the gluing path $P$ to a gluing path $P'$ such that the length of $P$ is greater than the length of $P$.
Since the length of a gluing path can easily be bounded by the number of vertices of $G$, after polynomially many extension we must find the desired $H'$.
This motivates the definition of \emph{progress}.

\begin{definition}[progress]
	\label{def:many_C4_progress}
	Given a $2$-edge cover $H$ with a gluing path $P$, we say that we make \emph{progress} if in polynomial time either
	\begin{itemize}[nosep]
		\item[(i)] we can turn $H$ into a canonical $2$-edge cover $H'$ such that $\cost(H') \leq \cost(H)$ and either $H'$ contains fewer components than $H$ or $H'$ has the same number of components as $H$ but $H'$ has fewer bridges than $H$, or
		\item[(ii)] we can extend the gluing path $P$ to a gluing path $P'$ such that the length of $P'$ is greater than the length of $P$.
	\end{itemize}
\end{definition}

We will then show the following lemma.

\begin{lemma}
	\label{lem:many_C4_progress}
	Given a canonical $2$-edge cover $H$ of some structured graph $G$ that is not in a core-square configuration and a gluing path $P$, then we can make progress.
\end{lemma}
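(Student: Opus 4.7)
The plan is to proceed by casework on how the given gluing path $P = z_1, z_2, \ldots, z_k$ in segment $S$ interacts with the rest of $\hG_H$. At a high level, I would attempt to extend $P$ at one of its endpoints; if every candidate extension is blocked, the obstructions themselves can be converted into a cycle structure to which \Cref{lem:abc-merge} applies, yielding progress of type~(i). The first step is to check whether there is a direct extension at $z_k$ (and symmetrically at $z_1$): an edge from $V(C_{z_k})$ to some $w \in V(S)\setminus V(P)$ whose endpoint in $C_{z_k}$ respects conditions~(ii)--(iii) of \Cref{def:nice-path}. The 3- and 4-matching lemmas applied through their local versions (\Cref{lem:local3matching,lem:local4matching}) guarantee a rich supply of edges leaving the partition $(V(P), V(S)\setminus V(P))$, so the only reasons such an extension can fail are that all candidate edges violate a gluing-path condition at $z_k$, or all such edges land back on $V(P)$.

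In the second case, I would look for a chord $e = z_iz_j$ in $\hG_H$ with $j-i \geq 4$. By definition of a gluing path, the cycle $z_i, z_{i+1}, \ldots, z_j, z_i$ then shortcuts every interior $\cFour$ (by condition~(ii)) and covers a bridge at every interior complex component (by condition~(iii)), hence constitutes a single $(k',j')$-merge with $k' \geq 5$ and $k'-j' \geq 3$, so \Cref{lem:abc-merge} applies. Short chords of distance at most $3$ are not directly useful, but they feed into the branching construction: together with an attempted-but-blocked extension edge from $z_k$, one assembles a secondary gluing-path-like structure $P'$ that shares a long prefix with $P$ and then leaves $P$ at a branch point, as depicted in \Cref{fig:gluing-path-branch-intro}. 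Combining $P$ and the branch yields either two intersecting cycles forming a double $(k',j')$-merge with $k' \geq 5$ and $k'-j' \geq 4$, or a short heavy cycle if the branch encounters a complex component or some $\mathcal C_i$ with $5 \leq i \leq 8$. Either way, \Cref{lem:abc-merge} provides progress of type~(i). The hypothesis that $H$ is not in a core-square configuration is used here to rule out the degenerate scenario where every attempt at branching collapses: negating some clause of \Cref{def:core-square-configuration} guarantees the existence of a non-$\cFour$ component off $P$, a segment of $\hG_H$ forbidden by~(iii) or~(iv), or additional structure inside $S$, each of which supplies the needed second cycle.

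The main obstacle will be the bookkeeping in the branching case when $C_{z_k}$ is a $\cFour$ or a complex component and the available outgoing edges enter $C_{z_k}$ at the vertex forbidden by the gluing-path conditions: we must then simultaneously secure a second edge entering $C_{z_k}$ at the complementary vertex, so that the two cycles formed by the branch jointly satisfy the parameter bounds required by \Cref{lem:abc-merge}. This is where the local matching lemmas must be applied most carefully, since condition~(ii) of \Cref{def:nice-path} pins down which of the four vertices of a $\cFour$ is available, and condition~(iii) does the same for complex components. Since every gluing path has length at most $|V(\hG_H)| \leq |V(G)|$, only polynomially many successive extensions of type~(ii) can occur before some iteration is forced to terminate with progress of type~(i), which together with the polynomial-time check for each extension or cycle certificate yields the required polynomial runtime.
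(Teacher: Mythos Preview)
Your plan has the right overall shape---extend, else find chords or branches feeding into \Cref{lem:abc-merge}---but it misses the specific mechanism and misidentifies the role of the hypothesis.

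The paper does not use the ``not in core-square configuration'' clause at all; what drives the proof is the ambient assumption from \Cref{lem:core:many-components} that the segment $S$ has at least five nodes, one of them huge. Your proposal to invoke the negation of \Cref{def:core-square-configuration} to ``supply the needed second cycle'' does not work: none of items (i)--(iv) there says anything about the internal structure of $S$ or of $P$, so negating them gives you no usable edge inside $S$.

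More importantly, the engine of the paper's argument is a pair of very specific back-edge lemmas: either we already make progress, or there is an edge $e_{13}$ from $v_2^1$ to $\{v_2^3,v_4^3\}$ (\Cref{lem:manyC4_1-3-edge}) and then an edge $e_{14}$ from $\{v_1^1,v_3^1\}$ to $\{v_2^4,v_4^4\}$ (\Cref{lem:manyC4_1-4-edge}). These concrete edges are the fuel for every branching step and also force $z_2$ and $z_3$ to be $\cFour$'s (\Cref{lem:manyC4_typesz1-3:2}). The branching lemmas (\Cref{lem:nicepath_branch-at-3-length-2,lem:nicepath_branch-at-2-length-1,lem:nicepath_branch-at-3-length-1}) then show, case by case, that none of $z_1,z_2,z_3$ can have an edge to a node of $S$ off $P$. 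The endgame is not ``assemble a double merge from the branch'' but a contradiction: once $z_1,z_2,z_3$ are trapped and one checks they also have no edge to any $z_i$ with $i\ge 5$, node $z_4$ would be a cut node separating them from $z_5$, impossible in a $2$-node-connected $S$ (\Cref{lem:nicepath_remaining-case}). One branching case (\Cref{lem:nicepath_branch-at-2-length-1}) even terminates by exhibiting a large $4$-vertex cut, contradicting structuredness---this is precisely where the huge-component assumption is needed. The ``bookkeeping'' you flag as the main obstacle is in fact the entire content of the proof; the local matching lemmas applied to the partition $(V(P),V(S)\setminus V(P))$ do not by themselves close any of these cases.
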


Thus, \Cref{lem:many_C4_progress} implies that eventually we end up with a core-square configuration.
Hence, proving \Cref{lem:many_C4_progress} implies \Cref{lem:core:many-components}.
Therefore, for now we want to prove \Cref{lem:many_C4_progress}.

Throughout this subsection, we assume that we are given some $2$-node connected segment $S$ of $\hG_H$ of $H$ that contains at least $5$ nodes (see prerequisites of \Cref{lem:core:many-components}).
Consider some gluing path $P = z_k - z_{k-1} - ... - z_1$ of length $k-1$ on the nodes $z_1, ..., z_k$ (each corresponding to some component of $H$).
We say that $z_k$ is the start of the gluing path and $z_1$ is the end.
If $z_i$ is a $\cFour$, we label the vertices of $z_i$ by $v_{1}^i, v_{2}^i, v_{3}^i, v_{4}^i$, which also defines the cycle of the $\cFour$.
If $z_i$ is not a $\cFour$, $V(z_i)$ contains at least $4$ vertices and we say that $v_{1}^i, v_{2}^i, v_{3}^i, v_{4}^i$ are some arbitrary distinct vertices of $V(z_i)$.
For $i \leq j$, the sub-path of $P$ from node $i$ to node $j$, i.e., $z_i - z_{i+1} - ... - z_{j-1} - z_j$ is denoted by $P_{ij}$.
For a node $z_i$, we say that the in-vertex is the vertex incident to the edge $z_{i+1} z_i$ and the out-vertex is the vertex incident to the edge $z_i z_{i-1}$.
Without loss of generality we let the in-vertex of node $z_i$ be $v_{1}^i$ and the out-vertex be $v_{2}^i$ (recall that according to the definition of a gluing path, the in-vertex and the out-vertex must be adjacent in $H$ for a $\cFour$; for complex components, the in-vertex and the out-vertex must be distinct if they do not belong to the same block).

\begin{figure}
	\centering
    \includegraphics[width=0.6\textwidth]{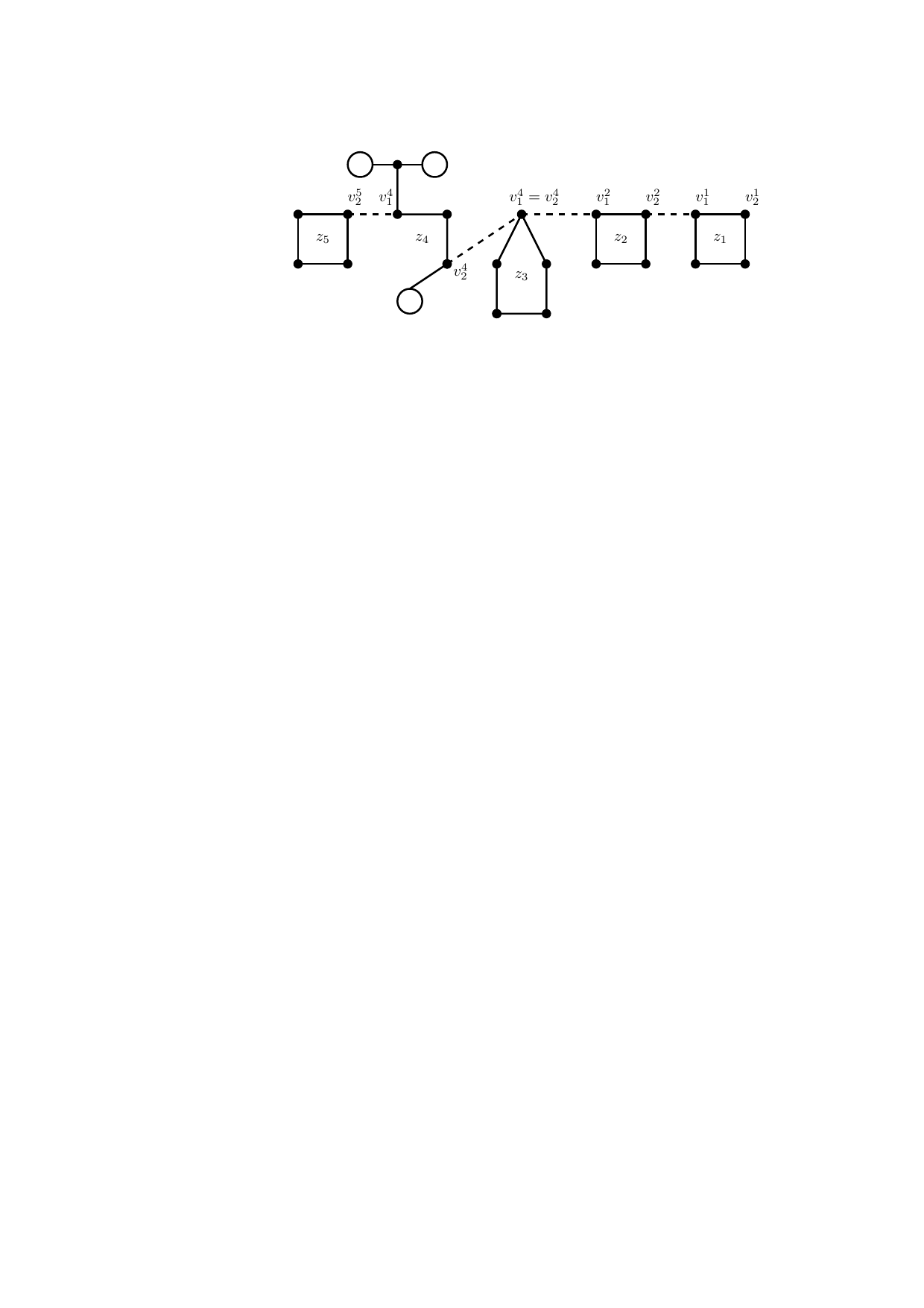}
    \caption{Example of a gluing path of length $5$: $z_5$, $z_2$, and $z_1$ are $\cFour$'s, $z_4$ is complex, and $z_3$ is a $\cFive$.}
    \label{fig:gluing-path-example}
\end{figure}

We first prove that we can compute a gluing path of length at least $3$ in polynomial time.

\begin{lemma}
	\label{lem:nicepath_start1}
	Given a $2$-edge cover $H$ of some structured graph $G$ such that $H$ contains a $2$-node connected segment $S$ of $\hG_H$ of $H$ that contains at least $3$ nodes, we can compute a gluing path of length at least $2$ in polynomial time.
\end{lemma}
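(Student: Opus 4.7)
The plan is to produce a three-node gluing path $z_3 - z_2 - z_1$ centered at an arbitrary node $z_2 \in S$. The key observation is that since $|S| \geq 3$ and $S$ is $2$-vertex-connected, $z_2$ must have at least two \emph{distinct} neighbors $z_1, z_1' \in S$: if $z_2$ had a unique neighbor, that neighbor would be a cut node of $S$. The workhorse will be the Local $3$-Matching Lemma (\Cref{lem:local3matching}) applied with the partition $(\{z_2\}, S \setminus \{z_2\})$, which yields three edges $e_1, e_2, e_3$ of $S$ incident to $z_2$ at three \emph{distinct} vertices $v_1, v_2, v_3 \in V(z_2)$; write $n(e_i) \in S \setminus \{z_2\}$ for the other-end node.

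The argument then splits on the type of $z_2$, checking the gluing-path conditions (ii) and (iii) of \Cref{def:nice-path} at the middle node. If $z_2$ is neither a $\cFour$ nor complex, both conditions are vacuous and the path $z_1 - z_2 - z_1'$ is already valid. If $z_2$ is a $\cFour$, I label its cycle so that $v_1, v_2, v_3$ are three consecutive vertices; then $(v_1, v_2)$ and $(v_2, v_3)$ are both adjacent pairs, and whenever $n(e_1), n(e_2), n(e_3)$ are not all equal, two of the matching edges meet at an adjacent pair while reaching distinct nodes. If $z_2$ is complex, condition (iii) is automatic for any two of $e_1, e_2, e_3$ since $v_1, v_2, v_3$ are distinct, so again it suffices to pick two going to distinct nodes.

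The main obstacle is the degenerate case where $e_1, e_2, e_3$ all land on a single common node $z^*$. Here I invoke the second distinct neighbor $z_1' \neq z^*$: there exists some edge $f$ from $z_2$ to $z_1'$ incident to a vertex $v \in V(z_2)$. In the $\cFour$ case, $v$ has exactly two cyclic neighbors; since $\{v_1, v_2, v_3\}$ excludes only one of the four cycle vertices, at least one cyclic neighbor of $v$ lies in $\{v_1, v_2, v_3\}$, so pairing $f$ with the corresponding $e_i$ gives a path $z_1' - z_2 - z^*$ satisfying (ii). In the complex case, $v$ coincides with at most one of $v_1, v_2, v_3$, so at least two $e_i$'s hit vertices distinct from $v$, and pairing any such $e_i$ with $f$ meets (iii).

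All ingredients---extracting $S$, computing the local $3$-matching via standard matching algorithms, and running the constant-size case check---are polynomial-time, so this produces a gluing path of length $2$ (three nodes) as required. Conceptually the only delicate point is the ``all-matching-edges-to-one-node'' case, which is resolved purely by exploiting $2$-vertex-connectivity to guarantee an additional incidence at $z_2$.
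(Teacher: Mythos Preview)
Your proof is correct and takes essentially the same approach as the paper: both center the path at some $z_2 \in S$, obtain edges at distinct vertices of $z_2$ via a 3-matching argument, and case-split on the type of $z_2$ to verify conditions (ii)/(iii) of \Cref{def:nice-path}. Your organization is slightly more uniform---you invoke the Local 3-Matching Lemma for every component type and only afterward branch on whether $z_2$ is a $\cFour$, complex, or neither, whereas the paper first splits into ``$z_2$ is not a $\cFour$'' (handled by a direct 2-vertex-connectivity argument on $G$) versus ``all nodes are $\cFour$'' (where it then invokes the 3-matching)---but the substance is the same.
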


\begin{proof}
	If $S$ contains a node $z_2$ that is not a $\cFour$, then there is clearly a gluing path of length $2$:
	Since $S$ is $2$-node connected, there must be two edges incident to $z_2$ to two distinct nodes $z_1, z_3$, that are not $z_2$. Say $e_1$ is incident to $z_1$ and $e_2$ is incident to $z_3$.
	Furthermore, there must be two such edges such that they are not incident to the same vertex of $z_2$, as otherwise $G$ is not $2$-vertex connected, a contradiction to $G$ being structured. Hence, we can use such two edges, which form a path $z_1 - z_2 - z_3$ that forms a gluing path.

	Hence, let us assume all nodes in $S$ are a $\cFour$.
	Let $z_2$ be one such component.
	Since $S$ is $2$-node connected, there must be two edges $e_1, e_2$ incident to $z_2$ to two distinct nodes $z_1, z_3$, that are not $z_2$.
	If $e_1, e_2$ are incident to vertices that are adjacent within $z_2$ w.r.t.\ the edges of $H$, then one can easily check that $z_1 - z_2 - z_3$ is a gluing path using the edges $e_1$ and $e_2$.
	Hence, let us assume this is not the case.
	Let $v_1^2 - v_2^2 - v_3^2 - v_4^2 - v_1^2$ be the $4$-cycle when expanding $z_2$.
	Let us assume $e_1, e_2$ are incident to the same vertex, say $v_1^2$. The case that one edge is incident to $v_1^2$ and the other to $v_3^2$ is analogous.
	By the $3$-Matching \Cref{lem:3-matching}, there must be an edge $e'$ incident to either $v_2^2$ or $v_4^2$ that is going to some component that is not $z_2$. Both cases are symmetric, so assume $e'$ is incident to $v_2^2$.
	If $e'$ is incident to some component $z' \neq z_1$, one can easily check that there is a gluing path $z_1 - z_2 - z'$ using the edges $e_1$ and $e'$.
	Otherwise, if $e'$ is incident to $z_1$, one can easily check that there is a gluing path $z_1 - z_2 - z_3$ using the edges $e'$ and $e_2$.
\end{proof}

Hence, from now on we assume that we are always given some gluing path $P = z_3 - z_2 - z_1$ of length at least $2$.

Next, we show the following useful lemma. An illustration of the lemma is given in \Cref{fig:1-3-edge}.

\begin{lemma}
	\label{lem:manyC4_1-3-edge}
	Let $P = z_k - z_{k-1} - ... - z_1$ be a gluing path of length $k-1 \geq 2$. Then either $v_2^1$ is adjacent to one of $\{v_2^3, v_4^3 \}$ or we can make progress.
	Similarly, either $v_1^k$ is adjacent to one of $\{v_1^{k-2}, v_3^{k-2} \}$ or we can make progress.
\end{lemma}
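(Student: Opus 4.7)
The plan is to prove the first claim; the second will follow by reversing $P$ and applying symmetric arguments. I will argue by contrapositive: assume $v_2^1$ is not adjacent in $G$ to any vertex of $\{v_2^3, v_4^3\}$, and exhibit progress per \Cref{def:many_C4_progress}. The easy subcase is when $v_2^1$ has an external edge $v_2^1 w$ whose other endpoint $w$ lies in a node $z_0 \notin \{z_1, \ldots, z_k\}$ of $\hG_H$: then $P' = z_k - \cdots - z_1 - z_0$ is a valid gluing path of length $k$, since the in-vertex $v_1^1$ and the new out-vertex $v_2^1$ are adjacent in the $\cFour$ cycle of $z_1$ by our choice, and the other conditions are inherited from $P$. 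This is progress via \Cref{def:many_C4_progress}(ii).

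Otherwise, every external edge from $v_2^1$ terminates at a node on $P$. If some such edge $v_2^1 w$ has $w \in V(z_3)$, our hypothesis forces $w \in \{v_1^3, v_3^3\}$ (assuming $z_3$ is a $\cFour$; if $z_3$ is large or complex, the case is handled analogously and more easily, using gluing path condition~(iii) for the complex case). I will form the $3$-cycle $X \subseteq E(G) \setminus E(H)$ in $\hG_H$ consisting of the path edges $z_3 z_2$, $z_2 z_1$ and the back-edge $v_2^1 w$, and check that $X$ is a single $(3,0)$-merge in the sense of \Cref{def:single-a-b-merge}: at $z_1$, the endpoints $v_1^1, v_2^1$ are cycle-adjacent by choice; at $z_2$, the endpoints $v_1^2, v_2^2$ are cycle-adjacent by gluing path condition~(ii); at $z_3$, the endpoints $v_2^3, w$ are cycle-adjacent since $w \in \{v_1^3, v_3^3\}$. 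Then \Cref{lem:abc-merge} produces a canonical $2$-edge cover $H'$ with fewer components than $H$ and $\cost(H') \leq \cost(H)$. Analogously, if some external edge $v_2^1 w$ has $w \in V(z_i)$ for some $i \geq 4$, the cycle consisting of the subpath $P_{1i}$ together with the back-edge $v_2^1 w$ is a single $(i, j)$-merge with $i \geq 4$ and $j \leq 1$ (at most $z_i$ fails to be shortcut), hence $i - j \geq 3$, and \Cref{lem:abc-merge} again gives progress.

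The remaining case is that all external edges of $v_2^1$ terminate in $V(z_2)$. Here I will first relabel $v_2^1$ to be the other cycle-neighbor of $v_1^1$ in $z_1$ and rerun the preceding analysis; this succeeds unless both candidate out-vertices of $z_1$ have their external edges confined to $V(z_2)$. In this pathological configuration, $V(z_1) \cup V(z_2)$ has very few external connections. The hard part is the following: apply the Local $3$-Matching \Cref{lem:local3matching} to $\{z_1, z_2\}$ versus $S \setminus \{z_1, z_2\}$ to extract three external edges, and combine them with path edges to construct either a short heavy cycle, a single merge involving $z_2$'s external neighbors, or a double $(k, j)$-merge satisfying the hypotheses of \Cref{lem:abc-merge}. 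The canonicity of $H$ (no two $\cFour$s can be merged into a single $\cEight$) together with the structural exclusion of large $\cFour$ cuts and large $4$-vertex cuts should rule out the remaining bad configurations. This final pathological case is the main obstacle of the proof.
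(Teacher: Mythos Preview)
Your easy cases (edge from $v_2^1$ to a node off $P$, to $z_i$ for $i \geq 4$, or to $\{v_1^3, v_3^3\}$) align with the paper's argument. The divergence is in the ``pathological'' case where the only available edges from $v_2^1$ (and, after your relabel, from $v_4^1$) land in $z_2$. There you propose to attack with the Local $3$-Matching Lemma together with canonicity and the exclusion of large $\cFour$ cuts and large $4$-vertex cuts, leaving the details open. This is both heavier than necessary and has a real gap.

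The key move you are missing is to \emph{swap} the path edge $z_2 z_1$ for $e = v_2^1 u$ and verify that the resulting path $P'$ is still a gluing path of length $k-1$, now with in-vertex $v_2^1$. The verification is a case analysis on the types of $z_1, z_2$: if one of them is a $\cFive$--$\cEight$, or $z_2$ is complex with $u$ and $v_2^2$ in different blocks, the two parallel edges $\{e, z_2 z_1\}$ give a short heavy cycle; if $z_2$ is a $\cFour$, canonicity (the no-$\cEight$-merge rule) forces $u \in \{v_2^2, v_4^2\}$, so the gluing-path condition at $z_2$ survives the swap. Once $P'$ is available, the same analysis applies to the next $3$-matching edge, and iterating lets \emph{every} vertex of $z_1$ serve as a potential out-vertex. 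Hence either some edge from $z_1$ yields the lemma's conclusion or progress, or all edges from $z_1$ in $S$ go to $z_2$ --- contradicting that $S$ is $2$-node-connected on $\geq 3$ nodes. No large-cut or $4$-vertex-cut exclusions are used.

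Your relabeling only accesses the two cycle-neighbors $v_2^1, v_4^1$ of the fixed in-vertex $v_1^1$; it never reaches $v_3^1$. Concretely, if $v_2^1, v_4^1$ connect only to $z_2$ but $v_3^1$ has an edge to $z_4$, your direct cycle $P_{14}$ plus this back-edge is only a single $(4,2)$-merge (since $v_1^1, v_3^1$ are non-adjacent), which fails the hypothesis of \Cref{lem:abc-merge}; the swap makes $v_2^1$ the in-vertex so that $v_3^1$ becomes its neighbor, yielding a $(4,1)$-merge. Your sketched toolbox does not obviously close this gap. Separately, you assume throughout that $z_1, z_2$ are $\cFour$'s; the paper carries a nontrivial sub-analysis when either is large or complex, which should not be waved away.
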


\begin{figure}
	\centering
    \includegraphics[width=0.7\textwidth]{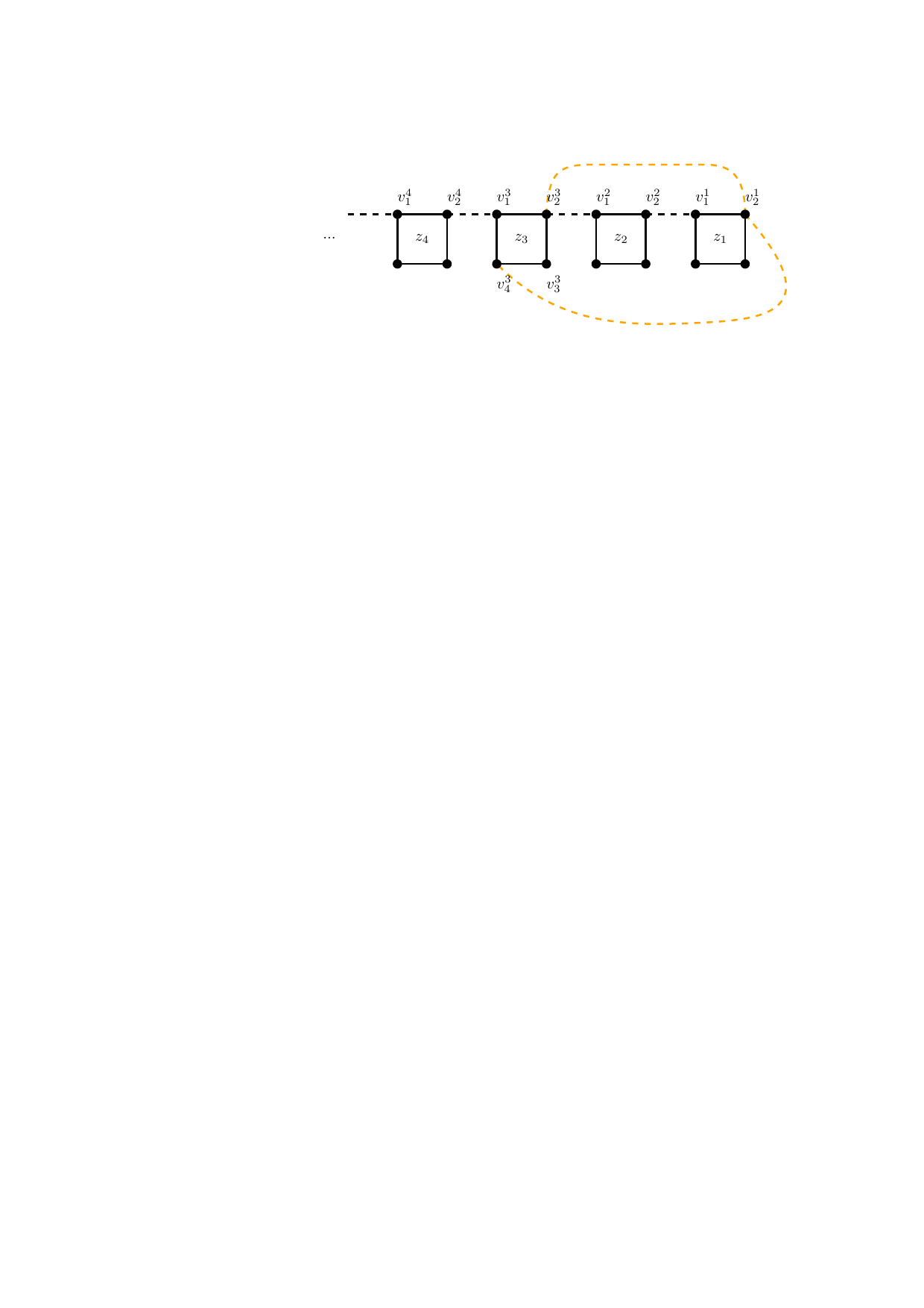}
    \caption{\Cref{lem:manyC4_1-3-edge}: Either at least one of the dashed orange edges exists, or we can make progress.}
    \label{fig:1-3-edge}
\end{figure}

\begin{proof}
	We prove the statement for $v_2^1$. The statement for $v_1^k$ is analogous.
	We can apply the $3$-Matching \Cref{lem:3-matching} to $V(z_1)$ and hence there must be an edge $e$ outgoing to some other node in $S$ from either $v_2^1$ or $v_4^1$.
	Without loss of generality, we assume that $e$ is incident to $v_2^1$, i.e., $e= v_2^1 u$.
	We assume that $u \notin \{v_2^3, v_4^3 \}$ and show that we can make progress.

	If $u$ is part of some component not on $P$, then note that we can add $e$
	to the gluing path $P$ and obtain a longer gluing path $P'$ and hence make progress.
	So now assume that $u \in V(P)$.
	This is independent of what type of component $z_1$ is.

	If $u \in V(z_i)$, $4 \leq i \leq k$, then we can make progress as follows:
	Note that $K = P_{1i} \cup e$ is a cycle in the component graph.
	Let $H' = H \cup K$ and note that it is a canonical $2$-edge cover.
	Further, note that for each node $z_j$, $1 < j < i$, that is a $\cFour$, the edge $v_1^j v_2^j$ can be removed from $H'$ such that $H'$ is still a canonical $2$-edge cover.
	Therefore, $C$ defines a single $(i, 1)$-merge, and hence, by \Cref{lem:abc-merge}, $\cost(H') \leq \cost(H)$.
	Thus, we can assume that $u \in V(z_2) \cup V(z_3)$.

	If $u \in \{ v_3^3, v_1^3\}$, then similar to the previous case we can make progress as follows: $P_{13} \cup e$ defines a single $(3, 0)$-merge, and hence by \Cref{lem:abc-merge}, we can make progress.

	Therefore, let us assume that $u \in V(z_2)$.
	If either $z_1$ or $z_2$ is a $\cFive$, $\cSix$, $\cSeven$, or $\cEight$, then there is a short heavy cycle between $z_1$ and $z_2$ and we can make progress according to \Cref{lem:abc-merge}.
	Similarly, if neither $z_1$ nor $z_2$ is a $\cFour$, i.e., if $z_1, z_2$ are either large or complex, then we can make progress according to \Cref{lem:abc-merge}.

	Hence, in the remaining case we have that at least one of $z_1, z_2$ is a $\cFour$ and at most one of $\{z_1, z_2\}$ is large or complex.
	We now show that either we can make progress, or we can switch the edge $z_2 z_1$ from $P$ with the edge $e$ such that we still have a gluing path $P'$ of length $k-1$.
	In particular, the in-vertex of $z_1$ on $P$ is $v_1^1$ and the in-vertex of $z_1$ on $P'$ is $v_2^1$.
	For the sake of contradiction, assume the statement is not true.
	Then $z_2$ can not be large, as otherwise the above statement is always true.
	If $z_2$ is complex, then $u$ can not be equal to $v_1^2$, the out-vertex of $z_2$, since then we can also find the desired gluing path $P'$.
	Furthermore, the out-vertex of $z_2$ can also not be in the same block as $u$ on $z_2$.
	Hence, the unique path from $v_2^2$ to $u$ contains at least one bridge.
	But then $e$ and the edge from $z_2$ to $z_1$ on $P$ defines a short heavy cycle and hence \Cref{lem:abc-merge} implies that we can make progress.
	Thus, $z_2$ is a $\cFour$.
	We now make a case distinction on whether $z_1$ is complex, large or a $\cFour$.
	If $z_1$ is a $\cFour$, then $u$ can neither be $v_3^2$ nor $v_1^2$:
	Otherwise this is a contradiction to $H$ being a canonical $2$-edge cover, as then $e$ and the edge on $P$ connecting $z_2$ and $z_1$ can be used to merge $z_1$ and $z_2$ into one $\cEight$.
	Therefore, $u \in \{ v_2^2, v_4^2 \}$.
	Now clearly there is such a desired gluing path $P'$ of length $k-1$.
	Hence, $z_1$ is either large or complex.
	There must be an edge $e'$ from $z_1$ to some other node distinct from $z_2$ (and $z_1$), as otherwise $z_1$ is not in a $2$-node connected component of $\hG_H$ containing at least $3$ nodes.
	If $z_1$ is large, then following the same proof as before but switching $e$ with $e'$, we can make progress.
	In particular, if $e'$ is incident to $z_i$ with $i \geq 4$, or $e'$ is incident to $v_3^3$ or $v_1^3$, then we can make progress since this defines a single $(i, 1)$-merge or a single $(3, 0)$-merge and the result follows by \Cref{lem:abc-merge}.
	Hence, $z_1$ is complex.
	If the in-vertex of $z_1$, i.e., $v_1^1$, is not part of some block of $z_1$, then the edge from $z_2$ to $z_1$ from $P$ together with $e$ defines a short heavy cycle, since there is at least one bridge on the unique path from $v_1^1$ to $v_2^1$. Therefore, we can make progress according to \Cref{lem:abc-merge}.
	Hence, assume that the in-vertex of $z_1$, i.e., $v_1^1$, is part of some block of $z_1$.
	But then the statement clearly follows because we can replace the edge $z_2 z_1$ on $P$ with $e$ to obtain a new gluing path $P'$ of length $k-1$ ending on $v_2^1$, as desired.

	Therefore, since we assume that we can not make progress, in the remaining case we assume that we can replace the edge $z_2 z_1$ of the gluing path by the edge $e$ and still have a gluing path of length $k-1$.
	Since there must be a $3$-matching outgoing from $z_1$ by \Cref{lem:3-matching}, there must be an additional edge $f$ incident to $z_1$.
	Since we can replace the gluing path $P$ with the gluing path $P'$, by symmetry, we can w.l.o.g.\ assume that $f$ is incident to $v_4^1$.
	Now for $f$ we must have the same conditions as for $e$ and hence, either $f$ is incident to one of $\{v_2^3, v_4^3\}$, to $z_2$, or we can make progress.
	The only case in which we do not show the statement of the lemma is that $f$ is incident to $z_2$.
	Now observe that this is true for any edge $f$ incident to $z_1$ and some other component that is not $z_1$.
	But then $z_1$ is not part of the $2$-node connected segment $S$, a contradiction.
	This finishes the proof.
\end{proof}

Hence, from now on we can always assume that $v_2^1$ is adjacent to one of $\{v_2^3, v_4^3 \}$ and $v_2^k$ is adjacent to one of $\{v_2^{k-2}, v_4^{k-2} \}$.

The next lemma states some useful facts about what concrete components $z_1$, $z_2$ and $z_3$ can be.

\begin{lemma}
	\label{lem:manyC4_typesz1-3}
	Let $P = z_k - z_{k-1} - ... - z_1$ be a gluing path of length $k-1 \geq 2$. We can make progress if
 one of $z_1, z_2, z_3$ (or one of $z_k, z_{k-1}, z_{k-2}$) is a $\cFive, \cSix, \cSeven$ or a $\cEight$.
\end{lemma}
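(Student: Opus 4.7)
The plan is to build a short heavy cycle of length $3$ through $z_1, z_2, z_3$ and then invoke \Cref{lem:abc-merge}. By symmetry (reversing the gluing path and applying the analogous form of \Cref{lem:manyC4_1-3-edge}), it suffices to treat the case in which one of $z_1, z_2, z_3$ corresponds to a $\mathcal{C}_i$ with $5 \leq i \leq 8$.

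Since $P$ has length $k-1 \geq 2$, I first apply \Cref{lem:manyC4_1-3-edge}: either we can already make progress (and we are done), or there exists an edge $e \in E(G) \setminus E(H)$ from $v_2^1 \in V(z_1)$ to a vertex in $\{v_2^3, v_4^3\} \subseteq V(z_3)$. Assume the latter, and set $X \coloneq E(P_{13}) \cup \{e\}$, where $E(P_{13}) = \{z_3 z_2,\, z_2 z_1\}$ is the two-edge subpath of $P$ connecting $z_3$ to $z_1$. The edge $z_2 z_1$ enters $z_1$ at its in-vertex $v_1^1 \neq v_2^1$, while $e$ enters $z_1$ at $v_2^1$; moreover $e$ is incident to $z_3$ rather than $z_2$. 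Hence the three edges are pairwise distinct, and $X$ is a simple cycle of length $3$ in $\hG_H$ whose node set is exactly $\{z_1, z_2, z_3\}$.

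By hypothesis at least one of $z_1, z_2, z_3$ corresponds to a $\mathcal{C}_i$ with $5 \leq i \leq 8$, so $X$ is incident to such a component; together with $|X| = 3 \leq 6$, this makes $X$ a short heavy cycle in the sense of \Cref{def:short-heavy-cycle}. Invoking \Cref{lem:abc-merge} with this $X$ (for which $F(X) = \emptyset$) then yields a canonical $2$-edge cover $H' = H \cup X$ with strictly fewer components than $H$ and $\cost(H') \leq \cost(H)$, which is progress per \Cref{def:many_C4_progress}(i). I do not anticipate any substantive obstacle: everything reduces to the two lemmas above, and the only nontrivial check is the simple-cycle property, which follows directly from the position of the in-vertex on $z_1$ and the fact that $e$ connects to $z_3$ rather than $z_2$.
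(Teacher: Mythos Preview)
Your proof is correct and follows essentially the same approach as the paper: apply \Cref{lem:manyC4_1-3-edge} to obtain the edge $e$ from $v_2^1$ to $\{v_2^3,v_4^3\}$, form the length-$3$ cycle $P_{13}\cup\{e\}$, observe it is a short heavy cycle since one of $z_1,z_2,z_3$ is a $\mathcal{C}_i$ with $5\le i\le 8$, and conclude via \Cref{lem:abc-merge}. Your write-up is somewhat more explicit than the paper's in checking that the three edges form a simple cycle in $\hG_H$, but the argument is the same.
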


\begin{proof}
	We show the lemma only for $z_2$ as the other statement is analogous.
	By \Cref{lem:manyC4_1-3-edge}, we know that either we can make progress or there is an edge $e$ from $v_1^2$ to one of $\{v_2^3, v_4^3\}$. In the former case we are done. Hence, assume we are in the latter case.
	Say without loss of generality $e$ is incident to $v_2^3$.
	Observe that there is a cycle $K$ on the nodes $z_1 - z_2 - z_3$ using $P_{13}$ and $e$.
	If one of $z_1, z_2, z_3$ is a $\cFive, \cSix, \cSeven$ or a $\cEight$, then this cycle is a short heavy cycle, and hence we can make progress.
	This proves the first statement.
\end{proof}

The next two lemmas show that we can always assume that we are given a gluing path of length at least $4$, i.e., the gluing path contains at least 5 nodes.

\begin{lemma}
	\label{lem:nicepath_start2}
	Given a $2$-edge cover $H$ of some structured graph $G$ such that $H$ contains a $2$-node connected segment $S$ of $\hG_H$ of $H$ that contains at least $4$ nodes, in polynomial time we can compute a gluing path of length at least $3$ or we can make progress.
\end{lemma}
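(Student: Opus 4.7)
The plan is to begin with the gluing path $P = z_3 - z_2 - z_1$ of length $2$ furnished by \Cref{lem:nicepath_start1}, and then either extend $P$ to length $3$ (making progress in the sense of \Cref{def:many_C4_progress}(ii)) or make progress in the sense of \Cref{def:many_C4_progress}(i). Thanks to \Cref{lem:manyC4_typesz1-3}, we may freely assume each of $z_1, z_2, z_3$ is a $\cFour$, large, or complex; otherwise we immediately make progress. Moreover, since $|S| \geq 4$ and $S$ is 2-node connected, there is at least one node $z_0 \in S \setminus V(P)$, and some edge of $G$ leaves $V(P)$ to reach $V(S) \setminus V(P)$ without traversing $z_2$.

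Next I would apply the Local 3-Matching Lemma (\Cref{lem:local3matching}) to the partition $(\{z_1\}, V(S) \setminus V(z_1))$, obtaining a matching $M$ of size $3$ whose endpoints in $V(z_1)$ are three distinct vertices. I would then perform the following case analysis on the landing vertices of the edges of $M$:
\begin{itemize}[nosep]
    \item If some edge of $M$ goes from a \emph{legal out-vertex} of $z_1$ (any vertex if $z_1$ is large or complex; one of $\{v_2^1, v_4^1\}$ if $z_1$ is a $\cFour$, so that gluing condition (ii) is preserved) to a node $z_0 \in S \setminus V(P)$, I extend $P$ to $z_3 - z_2 - z_1 - z_0$ and directly verify conditions (i)--(iii) of a gluing path.
    \item Otherwise every such matching edge from a legal out-vertex lies inside $V(z_2) \cup V(z_3)$; then I would reuse the case analysis of \Cref{lem:manyC4_1-3-edge}: if the landing vertex is in $V(z_i)$ for some $i \geq 4$, at $\{v_1^3, v_3^3\}$, or at certain positions within $V(z_2)$, a single $(3,0)$- or $(k,1)$-merge with $k-j \geq 3$ immediately gives progress by \Cref{lem:abc-merge}.
\end{itemize}
If no extension and no progress is achieved at $z_1$, I apply the same reasoning symmetrically at $z_3$, using the Local 3-Matching Lemma on $(\{z_3\}, V(S) \setminus V(z_3))$.

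The main obstacle is the deadlock case in which $z_1, z_2, z_3$ are all $\cFour$'s and every edge leaving $V(z_1) \cup V(z_3)$ to reach $S \setminus V(P)$ uses the ``wrong'' vertex $v_3^1$ or $v_3^3$ (the unique vertex of the $\cFour$ non-adjacent to the current in-vertex), so direct extension violates gluing condition (ii). Here I would mimic the swap argument at the end of the proof of \Cref{lem:manyC4_1-3-edge}: replace the path edge $z_2 z_1$ (or $z_3 z_2$) with another $z_1$-$z_2$ edge whose landing on $z_1$ is in $\{v_2^1, v_4^1\}$, thereby relabeling the in- and out-vertices of $z_1$ and turning the previously ``wrong'' vertex into a legal out-vertex. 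This swap must simultaneously preserve gluing conditions (ii)--(iii) at $z_2$, for which I would exploit a $3$-matching from $V(z_2)$ together with the adjacency guaranteed by \Cref{lem:manyC4_1-3-edge}. Should the swap itself be blocked, the combination of the $z_1$-edge to $\{v_2^3, v_4^3\}$ with the symmetric $z_3$-edge to $\{v_1^1, v_3^1\}$, augmented with one more outgoing edge into $S \setminus V(P)$, yields a double $(k, j)$-merge with $k \geq 5$ and $k - j \geq 4$; verifying that enough $\cFour$'s on the cycle are shortcut so that the credit inequality of \Cref{lem:abc-merge} is met is the delicate final step.
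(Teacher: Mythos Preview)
Your deadlock resolution has a real gap. In the all-$\cFour$ case you already have, via \Cref{lem:manyC4_1-3-edge} applied at both ends, the edge $e_1$ from $v_2^1$ to some $v_j^3 \in \{v_2^3, v_4^3\}$ and the edge $e_3$ from $v_1^3$ to some $v_i^1 \in \{v_1^1, v_3^1\}$. Your proposed fallback of a double $(k,j)$-merge with $k \geq 5$ cannot be assembled from this data: the three nodes of $P$ together with a single outgoing edge reach at most four components, and even if you chase a longer return path from $z_0$ through $S \setminus V(P)$, those new nodes are not on any gluing path, so you have no control over how many of them are shortcut and $k - j \geq 4$ is left unverified. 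The preceding swap argument is also too sketchy to carry the case.

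The paper dispatches the all-$\cFour$ case in one line with an observation you overlook: $v_2^1$ is adjacent in the $\cFour$ $z_1$ to both $v_1^1$ and $v_3^1$, hence to $v_i^1$; likewise $v_1^3$ is adjacent in $z_3$ to $v_j^3$. Thus $e_1$ and $e_3$ land on adjacent vertices in \emph{both} $\cFour$'s, so removing $v_2^1 v_i^1 \in E(z_1)$ and $v_1^3 v_j^3 \in E(z_3)$ and adding $e_1, e_3$ merges $z_1$ and $z_3$ into a single $\cEight$ --- contradicting the last bullet of \Cref{def:canonicalD2}. Hence the all-$\cFour$ deadlock simply cannot occur. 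When instead some end node, say $z_1$, is large or complex, the cycle $P_{13} \cup \{e_3\}$ is already a single $(3,0)$-merge ($z_2$ and $z_3$ are shortcut by the gluing-path adjacency and the choice of $e_3$, and $z_1$ is not a $\cFour$), so \Cref{lem:abc-merge} gives progress directly. The entire 3-matching analysis, extension attempts, and swaps in your plan are unnecessary.
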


\begin{proof}
	According to \Cref{lem:nicepath_start1} we can find a gluing path $P = z_3 - z_2 - z_1 $ of length $2$.
	Furthermore, by \Cref{lem:manyC4_typesz1-3}, we know that components on $P$ are either large, complex or a $\cFour$.
	Additionally, by \Cref{lem:manyC4_1-3-edge}, we know that $v_2^1$ is adjacent to one of $\{v_2^3, v_4^3 \}$ or we can make progress and $v_1^3$ is adjacent to one of $\{v_1^{1}, v_3^{1} \}$ or we can make progress.
	If we can make progress, we are done. Hence, we assume that those two edges, say $e_1, e_3$, exist. Let $e_1$ be the edge incident to $v_2^1$ and $e_3$ be the edge incident to~$v_1^3$.

	First, assume that both $z_1$ and $z_3$ are $\cFour$.
	But now it is easy to check that $e_1$ and $e_3$ can be used to turn $z_1$ and $z_3$ into a $\cEight$, a contradiction to our starting canonical $2$-edge cover.

	Hence, assume w.l.o.g.\ that $z_1$ is not a $\cFour$. Now it is easy to see that $P_{13} \cup \{ e_3 \}$ is a single $(3, 0)$-merge, and hence, by \Cref{lem:abc-merge}, we can make progress.
\end{proof}

\begin{lemma}
	\label{lem:nicepath_start3}
	Given a $2$-edge cover $H$ of some structured graph $G$ such that $H$ contains a $2$-node connected segment $S$ of $\hG_H$ of $H$ that contains at least $5$ nodes, in polynomial time we can compute a gluing path of length at least $4$ or we can make progress.
\end{lemma}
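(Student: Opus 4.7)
The plan is to extend the path produced by \Cref{lem:nicepath_start2} by one more node, or to make progress along the way. First I invoke \Cref{lem:nicepath_start2}: either it makes progress directly (and we are done), or it returns a gluing path $P = z_4 - z_3 - z_2 - z_1$ of length~$3$. In the latter case, \Cref{lem:manyC4_typesz1-3} lets me assume each of $z_1,z_2,z_3,z_4$ is large, complex, or a $\cFour$, and \Cref{lem:manyC4_1-3-edge} lets me assume that the ``diagonal'' edges $v_2^1 v_j^3$ for some $j \in \{2,4\}$ and $v_1^4 v_{j'}^2$ for some $j' \in \{1,3\}$ are in $E(G)$ (otherwise I make progress). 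Since $|V(S)| \geq 5$, there exists a node $z^{*} \in V(S) \setminus V(P)$.

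Next I apply \Cref{lem:3-matching} to $V(z_1)$ and its complement to obtain outgoing edges at three distinct vertices of $V(z_1)$; at least one of them is incident to $v_2^1$ or $v_4^1$, and by the internal $2 \leftrightarrow 4$ symmetry of $z_1$'s cycle I assume it is an edge $e = v_2^1 u$. I then case-analyze the position of $u$. If $u \in V(z')$ for some $z' \notin V(P)$, then $P' = z_4 - z_3 - z_2 - z_1 - z'$ is a valid gluing path of length~$4$: at the now-internal node $z_1$ the two incident path-edges land at $v_1^1$ and $v_2^1$, which satisfy condition~(ii) of \Cref{def:nice-path} if $z_1$ is a $\cFour$ (adjacency from the labeling) and condition~(iii) if $z_1$ is complex (distinct vertices), so this yields progress.

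The remaining cases are $u \in V(z_i)$ for $i \in \{2,3,4\}$. For $i = 4$, the cycle $P \cup \{e\}$ is a single $(4,j)$-merge with $j \leq 1$ (since each of $z_1,z_2,z_3$ is shortcut whenever it is a $\cFour$, by the gluing-path adjacency), so \Cref{lem:abc-merge} gives progress. For $i = 3$, if $u$ is adjacent to $v_2^3$ in $z_3$'s cycle (automatic when $z_3$ is large or complex) then $P_{13} \cup \{e\}$ is a single $(3,0)$-merge and \Cref{lem:abc-merge} applies; otherwise $u \in \{v_2^3, v_4^3\}$, and I combine $e$ with the diagonal $v_1^4 v_{j'}^2$ and the edges of $P$ to form a double $(k,j)$-merge with $k \geq 5$ and $k - j \geq 4$, again invoking \Cref{lem:abc-merge}. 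For $i = 2$, I adapt the tail of the proof of \Cref{lem:manyC4_1-3-edge}: either $\{z_2 z_1\} \cup \{e\}$ is a short heavy cycle (if $z_1$ or $z_2$ is not a $\cFour$, or $z_1$ is complex with $v_1^1$ outside a block), in which case \Cref{lem:abc-merge} gives progress; or I swap the edge $z_2 z_1$ of $P$ with $e$ to obtain another length-$3$ gluing path whose vertex at $z_1$ has changed, and iterating this swap over all outgoing edges from $z_1$ (both at $v_2^1$ and $v_4^1$) forces every such edge to land inside $V(z_2)$, contradicting $2$-node-connectivity of $S$ when $|V(S)| \geq 5$.

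If no outgoing edge from $z_1$ triggers the extension case, I will symmetrically analyze outgoing edges from $z_4$ using the second diagonal $v_1^4 v_{j'}^2$. The hard part will be the sub-case where $u \in \{v_2^3, v_4^3\}$ on the $z_1$-side combined with its mirror on the $z_4$-side: a single outgoing edge from either end yields only a single $(3,1)$-merge, insufficient for \Cref{lem:abc-merge}, so I must chain the outgoing edges from $z_1$ and $z_4$ together with the two diagonal edges of \Cref{lem:manyC4_1-3-edge} and the $P$-edges to assemble a double merge satisfying the parameters required by \Cref{lem:abc-merge}. The combinatorics here closely parallel the arguments in \Cref{lem:manyC4_1-3-edge} and \Cref{lem:nicepath_start2}, but with one additional node on $P$ to track, and care must be taken to ensure the resulting configuration is indeed a double merge (the cycle and attached path sharing exactly the required intersection pattern).
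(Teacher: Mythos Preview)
Your approach is considerably more complicated than the paper's, and it has a concrete gap. The paper does not try to extend the path at all, nor does it search for outgoing edges from $z_1$ or $z_4$. Instead, once \Cref{lem:manyC4_1-3-edge} supplies both diagonal edges $e_1$ (from $v_2^1$ to $\{v_2^3,v_4^3\}$) and $e_4$ (from $v_1^4$ to $\{v_1^2,v_3^2\}$), the paper simply sets
\[
Z \;=\; \Big( P_{14} \cup E(z_1) \cup E(z_2) \cup E(z_3) \cup E(z_4) \cup \{e_1,e_4\} \Big) \setminus \{v_1^1 v_2^1,\, v_1^2 v_2^2,\, v_1^3 v_2^3,\, v_1^4 v_2^4\}
\]
and observes that $Z$ makes $V(z_1)\cup\cdots\cup V(z_4)$ 2EC (the four shortcut paths, the three path edges, and the two diagonals form a long path whose two back-edges overlap). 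One then checks directly that $\cost(H')\le\cost(H)$: five added edges against four removed edges and the collapse of four component credits into one. This is progress of type~(i), and the lemma is proved in a few lines.

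Your case analysis runs into trouble precisely where the paper's construction shines. In your sub-case $u\in\{v_2^3,v_4^3\}$ you claim to assemble a double $(k,j)$-merge with $k\ge 5$ and $k-j\ge 4$ using only $e$, the other diagonal, and the edges of $P$. But these involve only the four nodes $z_1,\dots,z_4$, so $k=4$, and \Cref{lem:abc-merge} does not cover double $(4,j)$-merges. You would need a fifth component, which you have not located. The same difficulty recurs in your ``hard part'' at the end. The paper's insight is that you should not route this case through \Cref{lem:abc-merge} at all: the two diagonals plus the three path edges already suffice to merge all four components by a direct construction and cost check.
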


\begin{proof}
	According to \Cref{lem:nicepath_start2} we can find a gluing path $P = z_4 - z_3 - z_2 - z_1 $ of length $3$.
	Furthermore, by \Cref{lem:manyC4_typesz1-3}, we know that components on $P$ are either large, complex or a $\cFour$.
	Additionally, by \Cref{lem:manyC4_1-3-edge}, we know that $v_2^1$ is adjacent to one of $\{v_2^3, v_4^3 \}$ or we can make progress and $v_1^4$ is adjacent to one of $\{v_1^{2}, v_3^{2} \}$ or we can make progress.
	If we can make progress, we are done. Hence, we assume that those two edges, say $e_1, e_4$, exist. Let $e_1$ be the edge incident to $v_2^1$ and $e_4$ be the edge incident to $v_1^4$.
Now observe that
$$Z = \Big( P_{14} \cup E(z_1) \cup E(z_2) \cup E(z_3) \cup E(z_4) \cup \{e_1, e_2 \} \Big) \setminus \{v_1^1 v_2^1, v_1^2 v_2^2, v_1^3 v_2^3, v_1^4 v_2^4\} $$ makes $V(z_1) \cup V(z_2) \cup V(z_3) \cup V(z_4)$ $2$EC. Further, $H' = (H \setminus E(z_1) \cup E(z_2) \cup E(z_3 \cup E(z_4)) \cup Z)$ is a canonical $2$-edge cover with fewer components than $H$ and $\cost(H') \leq \cost(H)$. Hence, we can make progress.
\end{proof}

We next consider different types of \emph{branches}, i.e., we consider the case that we are given two gluing paths that overlap significantly in some way.
We show that if we have such branches, then we can make progress. This is summarized as follows.

\begin{figure}[tb]
	\centering
    \includegraphics[width=0.6\textwidth]{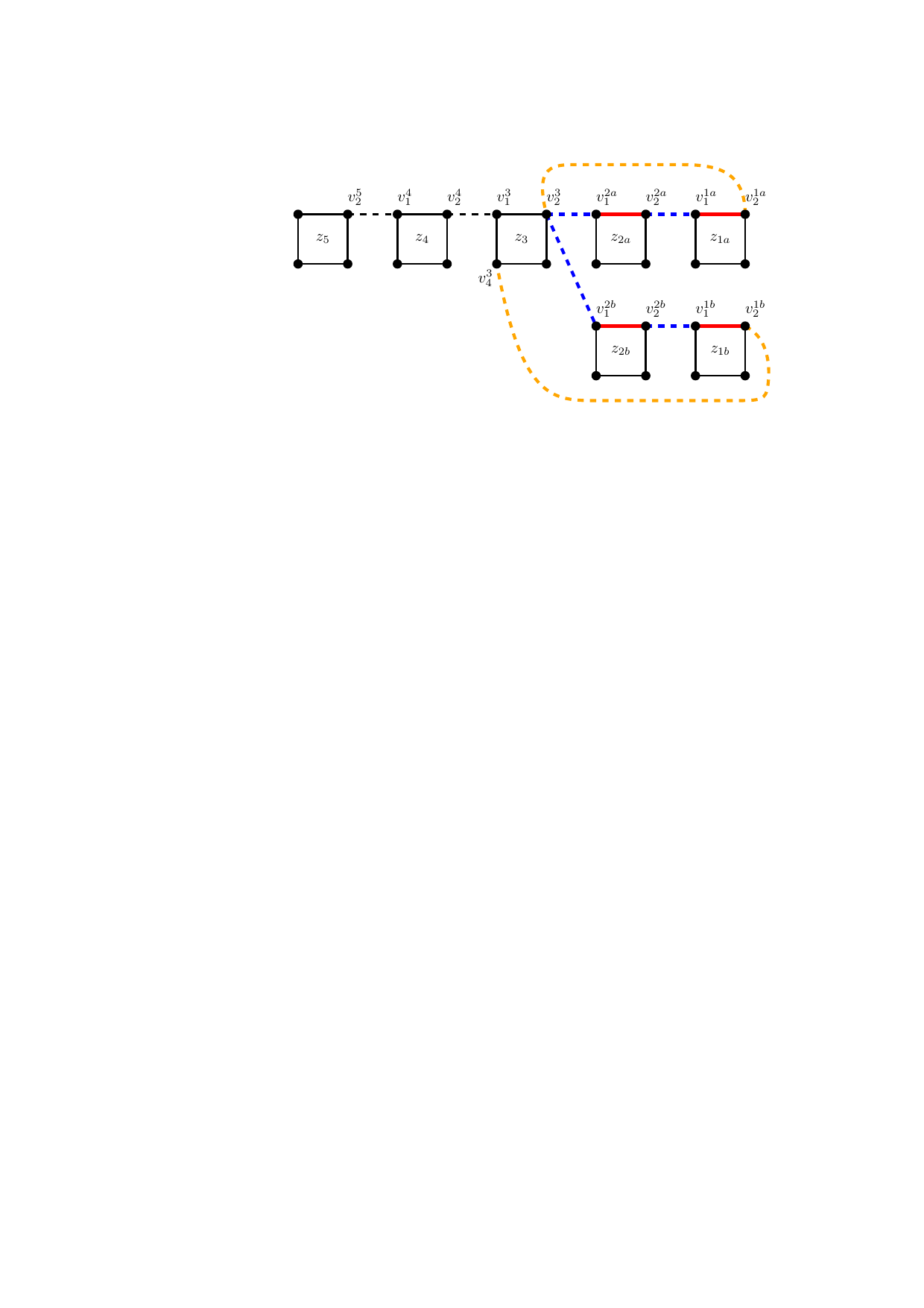}
    \caption{An example for a double $(5,1)$-merge in \Cref{lem:nicepath_branch-at-3-length-2}.}
    \label{fig:gluing-path-branch1}
\end{figure}

\begin{lemma}
	\label{lem:nicepath_branch-at-3-length-2}
	Let $P = z_k - z_{k-1} - ... - z_3 - z_{2a} - z_{1a}$ be a gluing path of length $k-1 \geq 2$ and let $P' = z_k - z_{k-1} - ... - z_3 - z_{2b} - z_{1b}$ be another gluing path of length $k-1 \geq 2$ such that the first $k-2$ nodes are the same in $P$ and $P'$ but the remaining $4$ nodes are all distinct, i.e., $z_{1a} \neq z_{2b} \neq z_{2a}$ and $z_{1a} \neq z_{1b} \neq z_{2a}$.
	Then, we can make progress.
\end{lemma}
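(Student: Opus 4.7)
The plan is to apply \Cref{lem:manyC4_1-3-edge} at the two endpoints $z_{1a}$ and $z_{1b}$, extract two back-edges pointing into $z_3$, and merge all five branch components $z_3, z_{2a}, z_{1a}, z_{2b}, z_{1b}$ simultaneously via a figure-eight of two triangles through $z_3$.

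First, I apply \Cref{lem:manyC4_1-3-edge} to $P$ at $z_{1a}$ and to $P'$ at $z_{1b}$. Unless one of these applications already makes progress, we obtain edges $e_a = v_2^{1a} u_a$ and $e_b = v_2^{1b} u_b$ with $u_a, u_b \in \{v_2^3, v_4^3\}$. Applying \Cref{lem:manyC4_typesz1-3} to each of $P$ and $P'$, we may also assume that none of $z_3, z_{2a}, z_{1a}, z_{2b}, z_{1b}$ is a $\mathcal{C}_i$ for $5 \leq i \leq 8$; each is a $\cFour$, large, or complex.

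Second, I set $X = C_a \cup C_b$ with $C_a = \{z_3 z_{2a},\, z_{2a} z_{1a},\, e_a\}$ and $C_b = \{z_3 z_{2b},\, z_{2b} z_{1b},\, e_b\}$, so $X$ consists of two triangles in $\hG_H$ sharing only the node $z_3$; thus $|X| = 6$ and $5$ distinct components are touched. When they are $\cFour$'s, each of $z_{2a}, z_{1a}, z_{2b}, z_{1b}$ can be shortcut by removing, respectively, $v_1^{2a} v_2^{2a}$, $v_1^{1a} v_2^{1a}$, $v_1^{2b} v_2^{2b}$, $v_1^{1b} v_2^{1b}$. These removals are legal because the two relevant endpoints on each such $\cFour$ are adjacent in its $4$-cycle: for $z_{2a}$ and $z_{2b}$ this follows from the gluing-path rule (the in-/out-vertex pair is adjacent by the labeling convention), and for $z_{1a}$ and $z_{1b}$ it follows from the back-edge property $v_2^{1\cdot} \sim v_1^{1\cdot}$. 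The node $z_3$ cannot be shortcut: all four of its $X$-incident edges land in $\{v_2^3, v_4^3\}$, which are opposite (hence non-adjacent) in the $4$-cycle of $z_3$. Setting $H' = (H \cup X) \setminus F(X)$, the resulting merged component is $z_3$ with two ``ear'' paths attached at $\{v_2^3, v_4^3\}$, so it is $2$-edge-connected and has at least $20$ vertices; thus $H'$ is a canonical $2$-edge cover with strictly fewer components than $H$. If some of the five nodes are large or complex rather than $\cFour$, the same construction applies: non-$\cFour$ nodes are simply not shortcut, and for complex nodes the required incidence condition (an $X$-edge in a block, or two distinct $X$-incident vertices) is supplied by the gluing-path definition.

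Finally, the cost accounting mirrors a double $(5,1)$-merge as in \Cref{lem:abc-merge}: in the all-$\cFour$ case $|H'| - |H| = 6 - 4 = 2$ and $\credit(H') - \credit(H) \leq 2 - 5(1 - 4\delta) = -3 + 20\delta$, so $\cost(H') - \cost(H) \leq -1 + 20\delta < 0$ for $\delta = \nicefrac{1}{28}$; if any of the five nodes is large or complex the extra credit only improves the inequality. This yields progress in the sense of Case~(i) of \Cref{def:many_C4_progress}. The main obstacle will be that our figure-eight does not literally satisfy \Cref{def:double-a-b-merge}, which requires $X_2$ to be a simple path rather than a simple cycle tangent to $X_1$ at one node. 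However, the edge count $|X|$, the number of touched components, and the number of shortcut $\cFour$'s coincide exactly with those of a genuine double $(5,1)$-merge, so the cost computation in the proof of \Cref{lem:abc-merge} transfers verbatim; alternatively one verifies the cost inequality directly as above without invoking \Cref{lem:abc-merge}.
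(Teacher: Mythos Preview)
Your proof is correct and follows essentially the same route as the paper: apply \Cref{lem:manyC4_1-3-edge} at both endpoints to obtain the back-edges $e_a,e_b$ into $\{v_2^3,v_4^3\}$, then merge the five components via the figure-eight $C_a\cup C_b$ with the cost accounting of a double $(5,1)$-merge. You are in fact more careful than the paper on one point: you notice that two triangles sharing a single node do not literally satisfy \Cref{def:double-a-b-merge} (which asks $X_2$ to be a path), whereas the paper simply asserts that $P_{1a3}\cup P'_{1b3}\cup\{e_a,e_b\}$ ``form a double $(5,1)$-merge'' and invokes \Cref{lem:abc-merge}; your direct verification of the cost inequality (or the observation that the proof of \Cref{lem:abc-merge} only uses $|X|=k+1$ and the shortcut count, both of which hold here) closes that small gap.
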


\begin{proof}
	Assume we cannot make progress. Then, applying \Cref{lem:manyC4_1-3-edge} to $P$ and $P'$ on $z_{1a}$ and $z_{1b}$, we know there is an edge $e_a$ from $v_2^{1a}$ to one of $\{v_2^3, v_4^3\}$ and one edge $e_b$ from $v_2^{1b}$ to one of $\{v_2^3, v_4^3\}$.
	Now observe that $P_{1a3}$, $P'_{1b3}$ together with $e_a$ and $e_b$ form a double $(5, 1)$-merge, and hence by \Cref{lem:abc-merge} we can make progress. This situation is illustrated in \Cref{fig:gluing-path-branch1}.
\end{proof}

Before we proceed with the next type of branch, we first show that $z_3$ is a $\cFour$ or we can make progress.

\begin{lemma}
	\label{lem:manyC4_typesz1-3:2}
	Let $P = z_k - z_{k-1} - ... - z_1$ be a gluing path of length $k-1 \geq 2$. We can make progress if
		$z_3$ (or $z_{k-2}$) is not a $\cFour$.
\end{lemma}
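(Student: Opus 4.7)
The plan is to reduce to a case where the cycle $K = P_{13}\cup\{e\}$ is a valid single $(3,0)$-merge, and then invoke \Cref{lem:abc-merge}. First, by \Cref{lem:manyC4_typesz1-3} we may assume $z_3$ is none of \cFive, \cSix, \cSeven, \cEight, so combined with the hypothesis that $z_3$ is not a \cFour, $z_3$ must be either large or complex. Second, applying \Cref{lem:manyC4_1-3-edge} at the endpoint $z_1$ of the gluing path, either we already make progress, or (by symmetry between $v_2^1$ and $v_4^1$) there is a back-edge $e = v_2^1 u$ with $u \in \{v_2^3, v_4^3\}$.

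Now consider $K = P_{13}\cup\{e\}$, a $3$-cycle in $\hG_H$ traversing $z_1, z_2, z_3$. Because $z_1$ and $z_2$ are each either large, complex, or \cFour{}s whose in- and out-vertices are adjacent in $H$ (and therefore get shortcut by $K$), we already have $j=0$ in the merge. It remains only to verify the complex-component condition of \Cref{def:single-a-b-merge} at $z_3$. If $z_3$ is large, the condition is vacuous; if $z_3$ is complex with $u = v_4^3 \neq v_2^3$, then $K$ is incident to two distinct vertices of $z_3$ and the condition is satisfied; and if $z_3$ is complex with $u = v_2^3$ lying in some block of $z_3$, the block-incidence clause suffices. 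In any of these sub-cases $K$ is a valid single $(3,0)$-merge, so \Cref{lem:abc-merge} produces a canonical $2$-edge cover $H'$ with strictly fewer components and $\cost(H') \leq \cost(H)$, i.e., progress in the sense of \Cref{def:many_C4_progress}(i).

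The main obstacle is the residual case where $z_3$ is complex, $u = v_2^3$, and $v_2^3$ is a lonely (non-block) vertex of $z_3$; here $K$ on its own is not a valid single merge because it touches $z_3$ at a single non-block vertex. I plan to resolve this case by exploiting the two $H$-bridges of $z_3$ incident to $v_2^3$: using \Cref{lem:local3matching} or \Cref{lem:local4matching} applied to $V(z_3)$ against the remainder of the current segment, I will argue there must be a second back-edge from $V(z_1)\cup V(z_2)$ into $V(z_3)$ whose endpoint in $z_3$ lies on the far side of one of these bridges. Together with $K$, this produces a short cycle of length at most $6$ in $\hG_H$ that enters $z_3$ at $v_2^3$ and exits across one of the bridges, hence a short heavy cycle covering that bridge; \Cref{lem:abc-merge} then again yields progress. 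This analysis around the lonely out-vertex is the most delicate step and is where I expect most of the technical work to sit.
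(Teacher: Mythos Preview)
Your reduction to the easy cases is correct and matches the paper: once $z_3$ is large, or complex with $e$ landing on $v_4^3$, or complex with $v_2^3$ in a block, the cycle $K=P_{13}\cup\{e\}$ is a valid single $(3,0)$-merge (or short heavy cycle) and \Cref{lem:abc-merge} applies. You also correctly isolate the residual case: $z_3$ complex, $e$ incident to $v_2^3$, and $v_2^3$ a lonely (non-block) vertex.

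The gap is in your plan for that residual case. Applying the local $3$- or $4$-matching lemma to $V(z_3)$ gives edges from $V(z_3)$ to the rest of the segment, but there is no reason any of these edges must land in $V(z_1)\cup V(z_2)$; they could all go to $z_4,\ldots,z_k$ or to nodes off $P$. So you cannot conclude that a second back-edge from $z_1\cup z_2$ into $z_3$ across a bridge exists. Indeed, the paper's proof does \emph{not} find such an edge in general. Instead it argues the other way around: it analyses all possible outgoing edges of $z_1$ and $z_2$ (to $z_i$ with $i\ge 4$, to $z_3$ at a vertex other than $v_2^3$, to off-path nodes), showing each gives progress; then it proves $z_1,z_2$ must both be $\cFour$'s and cannot be cut nodes; and finally, using the branching lemma (\Cref{lem:nicepath_branch-at-3-length-2}) repeatedly on the off-path neighbours of $z_1$, it derives the contradiction that $v_2^3$ is a cut vertex of $G$. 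This is a global structural argument, not a local matching argument, and it is where the real work of the lemma lies.
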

\begin{proof}
	We show the lemma only for $z_3$ as the other statement is analogous.
	By \Cref{lem:manyC4_1-3-edge}, we know that either we can make progress or there is an edge $e$ from $v_1^2$ to one of $\{v_2^3, v_4^3\}$. In the former case we are done. Hence, assume we are in the latter case.
	Observe that there is a cycle $K$ on the nodes $z_1 - z_2 - z_3$ using $P_{13}$ and $e$.
	By \Cref{lem:manyC4_typesz1-3} $z_3$ is either large, complex or a $\cFour$.
	If $z_3$ is large, then $K$ is a single $(3, 0)$-merge, and hence we can make progress according to \Cref{lem:abc-merge}.
	Hence, in the remainder assume that $z_3$ is a complex component.
	If $e$ is not incident to $v_2^3$, then either $K$ is a short heavy cycle (if $K$ covers a bridge of $z_3$) or it is a single $(3, 0)$-merge (if $K$ is incident to a block of $z_3$). Hence, in either case we can make progress.
	Hence, for the remainder assume that $e$ is incident to $v_2^3$ and that this vertex is not contained in a block. Otherwise, as above, $K$ is a single $(3, 0)$-merge, and we are done.
	Furthermore, w.l.o.g.\ at most one vertex of $z_3$, say $v_2^3$, is adjacent to $z_1$.
	Otherwise, if there are two vertices of $z_3$ adjacent to $z_1$, by the previous arguments we know that adding both edges covers at least one bridge of $z_3$ and hence we make progress.

	Assume that one of $z_1$ or $z_2$ (w.l.o.g. $z_1$; the other case is analogous) is incident to some edge $e'$ which is incident to $z_i$ with $i \geq 4$.
	Then $P_{1i}$ together with $e'$ is a cycle, which is either a short heavy cycle (if $i \leq 6$) or a single $(i, j)$-merge for $j  \in \{0, 1\}$ (if $i \geq 7$), and hence in either case we can make progress.
	If $z_1$ or $z_2$ is incident to a vertex of $z_3$ distinct from $v_2^3$, then there is also a short heavy cycle.
	Hence, let us assume this is not the case. This implies that $z_1, z_2$ are only connected to each other, to $v_2^3$ and potentially to some other nodes distinct from $P$.

	We next prove that both $z_1, z_2$ are $\cFour$.
	Note that if $z_1$ or $z_2$ is complex, then $K$ does not cover a bridge of any of these components, since otherwise we have a heavy short cycle.
	Hence, if neither $z_1$ nor $z_2$ is $\cFour$,  it can be easily checked that adding $K$ to $H$ satisfies the credit invariant and we made progress.
	Hence, assume exactly one of $z_1, z_2$ is a $\cFour$. W.l.o.g.\ we can assume $z_2$ is a $\cFour$, since the other case is symmetric.
	If $z_1$ is complex, then there is only one block of $z_1$ which is connected to $S$, the segment of $P$.
	Otherwise, we can make progress since either we can extend the gluing path (if the edge is incident to some node of $S$ not on $P$), we have a short heavy cycle covering a bridge of $z_1$ (if the edge is incident to some node $z_i$ with $i \leq 6$), or there is a single $(i, j)$-merge for $j  \in \{0, 1\}$ (if the edge is incident to $z_i$ on $P$ with $i \geq 7$).
	Hence, since all edges incident to $z_1$ and $S$ are incident to a specific block $B_1$ of $z_1$, in fact $z_1$ acts like a large component locally for $S$.
	Since $z_1$ (and therefore $B_1$) is only incident to $v_2^3$
	 and $z_2$, $|V(z_3)| \geq 11$, and $|V(z_1) \cup V(z_2)| \geq 11$, by applying the local $4$-Matching Lemma to $V(z_1) \cup V(z_2)$ there must be a $3$-Matching from $V(z_2)$ to some nodes of $S$ not on $P$. This is true since $z_2$ is only incident to $z_1$,  $v_2^3$ and some other nodes of $S$ not on $P$.
	Let these outgoing edges from $z_2$ to other nodes of $S$ not on $P$ be $e', e'', e'''$.
	But then $P_{3k}$ together with $e$, the edge $z_1 z_2$ from $P$ and one of the edges of $\{e', e'', e'''\}$ is a longer gluing path since $\{e', e'', e'''\}$ is a $3$-matching incident to the $4$-cycle $z_2$. Hence, we can make progress.
	Thus, from now on, we assume that both $z_1, z_2$ are $\cFour$.

	We next prove that neither $z_2$ nor $z_1$ can be cut nodes, i.e., they are both only adjacent to nodes of $S$, the segment containing $P$.
	Assume the contrary and that $z_2$ is a cut node. The other case with $z_1$ is analogous.
	Hence, $z_2$ is part of some other segment $S'$.
	First note that $S$ contains at least $16$ vertices, as otherwise either $v_2^3$ is a cut vertex, which is a contradiction, or $S$ contains less than $6$ components and hence must contain a short heavy cycle.
	Hence, $S'$ contains at most $15$ vertices, as otherwise $z_2$ would be a large $\cFour$ cut, a contradiction.
	$S'$ only consists of components that are $\cFour$ or large, as otherwise there must be a short heavy cycle in $S'$ and we can make progress.
	Hence, $S'$ contains at most $4$ components, of which one is $z_2$.
	We can view $K$ as a single $(3, 1)$-merge since $z_3$ acts equivalently as a $C_4$ that is not shortcut.
	Hence, we can apply \Cref{lem:cycle-useful-for-double-merge} to $S$, $S'$ and $K$ and obtain that there is a good merge possible and we can make progress.
	Hence, neither $z_2$ nor $z_1$ is a cut node.

	If neither $z_2$ nor $z_1$ has an outgoing edge to a node of $S$ not on $P$, then $v_2^3$ is a cut node, a contradiction.
	Hence, assume without loss of generality that $z_1$ has an outgoing edge to some node $z$ of $S$ not on $P$.
	If one of $\{ v_1^2, v_3^2, v_1^1, v_3^1 \}$ has an outgoing edge to a node $z'$ of $S$ not on $P$, then we can extend the gluing path $P$ by one and we make progress.
	Hence, we assume this is not the case.
	If also $z_2$ has an outgoing edge to some node $z' \neq z$, then we can apply \Cref{lem:nicepath_branch-at-3-length-2} and we can make progress.
	In particular this case happens if $z_2$ has an outgoing edge and $z_1$ has outgoing edges to at least two distinct nodes of $S$ not on $P$, or if $z_2$ has outgoing edges to at least two distinct nodes of $S$ not on $P$.

	Hence, assume first that $z_2$ has an outgoing edge. If \Cref{lem:nicepath_branch-at-3-length-2} can not be applied, this means that $z_1$ and $z_2$ are only incident to $z$.
	If $z$ is not a $\cFour$, we can interchange $z_1$ with $z$ and obtain another gluing path of length $k-1$, which ends at a node that is not a $\cFour$. But then we can do the same as before where we assumed that $z_1$ is not a $\cFour$ and make progress. Hence, we know that $z$ is a $\cFour$.
	Let $v_1 - v_2 - v_3 - v_4 - v_1$ be the cycle of $z$ and let $e_1$ be the edge from $z_1$ to $z$ and let this edge be incident to $v_1$.
	Further, let $e_2$ be the edge from $z_2$ to $z$. Recall that $e_1$ is incident to either $v_1^1$ or $v_3^1$ and $e_1$ is incident to either $v_1^2$ or $v_3^2$.
	If $e_2$ is incident to either $v_2$ or $v_4$, we obtain a longer gluing path and hence make progress.
	Otherwise, $e_2$ is incident to either $v_1$ or $v_3$.
	Since $P_{2k}$ together with $e_2$ is also a gluing path of length $k-1$, we know that we can make progress, or w.l.o.g. $v_2$ is also incident to $v_2^3$, by applying \Cref{lem:manyC4_1-3-edge} and the previous argumentation in this proof for $z_1$.
	Since we can not apply \Cref{lem:nicepath_branch-at-3-length-2}, none of the nodes of $z_1, z_2, z$ is incident to some other component of $S$ not on $P$. But then $v_2^3$ is a cut vertex, a contradiction.

	Hence, $z_2$ does not have an outgoing edge to some node of $S$ not on $P$, which implies that $z_2$ is only incident to $v_2^3$ and $z_1$.
	Let $y_1, ..., y_j$ be the nodes of $S$ to which $z_1$ is adjacent to.
	Note to each $y_i$ there is a gluing path of length $k-1$ using $P_{3k}$, $e$ and the edge from $z_1$ to $y_i$.
	Hence, similar to before we can argue that each $y_i$ has to be a $\cFour$, as otherwise we can make progress.
	But now we can apply the same argumentation that we did for $z_2$ or $z_1$, respectively, to each $y_i$.
	In particular, each $y_i$ is only incident to $v_2^3$ and $z_1$, as otherwise we either have a longer gluing path, a good merge or we can apply \Cref{lem:nicepath_branch-at-3-length-2} and make progress.
	But then this implies that $v_2^3$ is a cut vertex, a contradiction.
	This finishes the proof.
\end{proof}

\begin{lemma}
	\label{lem:nicepath_branch-at-2-length-1}
	Let $P = z_k - z_{k-1} - ... - z_3 - z_{2} - z_{1a}$ be a gluing path of length $k-1 \geq 4$ and let $P' = z_k - z_{k-1} - ... - z_3 - z_{2} - z_{1b}$ be another gluing path of length $k-1 \geq 4$ such that the first $k-1$ nodes are the same in $P$ and $P'$ but the remaining $2$ nodes are distinct, i.e., $z_{1a} \neq z_{1b}$.
	Then, we can make progress.
\end{lemma}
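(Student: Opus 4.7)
The plan is to assume we cannot make progress and derive enough structural constraints to construct either a longer gluing path or a merge that satisfies the conditions of \Cref{lem:abc-merge}, along the same philosophy as \Cref{lem:nicepath_branch-at-3-length-2} but compensating for the fact that the two branches share not only $z_3$ but also $z_2$.

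First, I would apply \Cref{lem:manyC4_1-3-edge} separately to $P$ and $P'$ at the endpoints $z_{1a}$ and $z_{1b}$ to obtain edges $e_a$ from $v_2^{1a}$ to some vertex in $\{v_2^3, v_4^3\}$ and $e_b$ from $v_2^{1b}$ to some vertex in $\{v_2^3, v_4^3\}$. By \Cref{lem:manyC4_typesz1-3:2}, $z_3$ is a $\cFour$; moreover, \Cref{lem:manyC4_typesz1-3} lets us assume $z_2, z_{1a}, z_{1b}$ are each $\cFour$, large, or complex.

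Next, I would examine the natural ``theta'' structure
\[
X := \{z_3 z_2,\ z_2 z_{1a},\ z_2 z_{1b},\ e_a,\ e_b\},
\]
which connects the four components $z_3, z_2, z_{1a}, z_{1b}$ via three internally-disjoint paths between $z_2$ and $z_3$ in $\hG_H$. A careful shortcut analysis, exploiting that the two out-vertices of $z_2$ to $z_{1a}$ and $z_{1b}$ are both adjacent to the in-vertex $v_1^2$, and that $e_a$ and $e_b$ both land on the non-adjacent diagonal pair $\{v_2^3, v_4^3\}$ of $z_3$, shows that $X$ shortcuts $z_{1a}, z_{1b}, z_2$ but not $z_3$. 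Viewed as a double $(4, 1)$-merge (with $X_1$ the cycle $z_3 - z_2 - z_{1a} - z_3$ and $X_2$ the path $z_2 - z_{1b} - z_3$), this barely misses the threshold $k - j \geq 4$ of \Cref{lem:abc-merge}, falling short by exactly one unit of credit.

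To bridge this gap, I would invoke the 3-Matching \Cref{lem:3-matching} at $z_{1a}$ (and symmetrically at $z_{1b}$) to obtain a third outgoing edge $f_a$ from $v_3^{1a}$ or $v_4^{1a}$, and case-split on its other endpoint $y_a$. If $y_a$ lies outside $P \cup \{z_{1b}\}$, we either extend $P$ to a gluing path of length $k$ when the vertex geometry at $z_{1a}$ permits (progress via~(ii) of \Cref{def:many_C4_progress}), or else rotate the endpoint of $P$ by swapping $z_2$ for $z_{1a}$ via $e_a$ to expose a branching configuration handled by \Cref{lem:nicepath_branch-at-3-length-2}. If $y_a = z_i$ for some $i \geq 4$ on $P$, then $P_{1a\,i} \cup \{f_a\}$ is either a short heavy cycle or a single $(i, j)$-merge fitting \Cref{lem:abc-merge}. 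Finally, if $y_a \in \{z_2, z_3, z_{1b}\}$, then adding $f_a$ to $X$ either produces a second shortcut at $z_{1a}$ (because $v_3^{1a}$ is adjacent to $v_2^{1a}$ and $v_4^{1a}$ is adjacent to $v_1^{1a}$, so any third external endpoint of $z_{1a}$ borders an existing one), or it pulls in a fifth distinct component, upgrading $X \cup \{f_a\}$ to a double $(5, j)$-merge with $k - j \geq 4$.

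The main obstacle will be the tight sub-case in which both $f_a$ and $f_b$ are forced to land on the diagonal pair $\{v_2^3, v_4^3\}$ of $z_3$ without creating a new adjacency there, so that neither $f_a$ nor $f_b$ alone enlarges $|F(X)|$ at $z_3$. Here the plan is to exploit the symmetry between the $(1a)$- and $(1b)$-branches: the $3$-matching at $z_1$ combined with the structural properties of the $(\alpha, \eps)$-structured graph $G$ (in particular, absence of a non-isolating $2$-vertex cut through $\{v_1^2, v_2^3\}$, which would otherwise separate the tight theta from the rest of $S$) forces an additional edge incident to $v_3^2, v_4^2$ or to $v_1^3, v_3^3$, and this extra edge contributes the missing shortcut or the missing component. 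A vertex-geometric bookkeeping argument, crucially using that each of the $\cFour$s $z_2, z_3, z_{1a}, z_{1b}$ has only two diagonal pairs and that in-vertex/out-vertex adjacency in a gluing-path $\cFour$ is forced, should close this last case and complete the proof.
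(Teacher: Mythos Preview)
Your proposal identifies the right starting point (the edges $e_a,e_b$ from \Cref{lem:manyC4_1-3-edge} and the ``theta'' on $z_3,z_2,z_{1a},z_{1b}$) and correctly recognises that this structure alone does not meet the thresholds of \Cref{lem:abc-merge}. But the resolution you sketch for the tight sub-case does not work, and this is where the paper's proof diverges substantially.

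The core issue is the separator you invoke. You propose to use absence of a non-isolating $2$-vertex cut through $\{v_1^2,v_2^3\}$, but these two vertices do \emph{not} separate the theta from the rest of $S$: the out-vertices $v_2^2,v_4^2$ of $z_2$ toward $z_{1a},z_{1b}$ and the vertex $v_4^3$ of $z_3$ can all carry edges outside the theta. The paper instead shows, after considerable case analysis, that the \emph{four} vertices $Y=\{v_2^2,v_4^2,v_2^3,v_4^3\}$ form a separator between $V(S_1)\cup\{v_1^2,v_3^2\}$ and the rest of $G$, and then invokes the absence of \emph{large $4$-vertex cuts} (\Cref{def:large_k_vertex_cut}) in structured graphs. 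This $4$-vertex-cut preprocessing is one of the paper's new contributions, introduced precisely because $2$- and $3$-vertex cut arguments are insufficient here. The verification that $Y$ is in fact a large $4$-vertex cut requires building an explicit cheap solution $Z$ on $G[V_1\cup Y]$ and carefully counting against $\opt(G_1')$, with separate bookkeeping depending on whether $S_1$ contains only the two $\cFour$'s $z_{1a},z_{1b}$ or more.

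Your proposal also skips two preliminary steps the paper needs before the separator argument can even be set up: first, proving that $z_2$ itself must be a $\cFour$ (if $z_2$ were large or complex one gets a direct single $(4,1)$-merge or short heavy cycle on $z_3{-}z_{1a}{-}z_2{-}z_{1b}{-}z_3$); second, ruling out that $z_2$ or $z_3$ is a cut node of $\hG_H$ (via \Cref{lem:cycle-useful-for-double-merge}), since otherwise extra segments would interfere with the separator. Without these, and without the $4$-vertex-cut machinery, the ``vertex-geometric bookkeeping'' you allude to cannot close the argument.
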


\begin{proof}
	Let $S$ be the segment containing $P$ and $P'$.
	Throughout the proof we assume we can not make progress and obtain some properties, which we highlight in a bullet point list. We first apply \Cref{lem:manyC4_1-3-edge} to both $z_{1a}$ and $z_{1b}$ to obtain the edges $e_a = v_2^{1b} u_a$ and $e_b = v_2^{1b} u_b$, where $u_a \in \{v_2^3, v_4^3\}$ and $u_b \in \{v_2^3, v_4^3\}$.
	An illustration is given in \Cref{fig:gluing-path-2-2}.

	By \Cref{lem:manyC4_typesz1-3:2} we know that $z_3$ is a $\cFour$, as otherwise we can make progress.
	We next claim that is a $\cFour$.
	Next, assume that $z_2$ is not a $\cFour$.
	By \Cref{lem:manyC4_typesz1-3}, $z_2$ is large or complex. If $z_2$ is a large component, then there is a cycle $z_3 - z_{1a} - z_2 - z_{1b} - z_3$ that is a single $(4, 1)$-merge, a contradiction. Hence, assume that $z_2$ is complex.
	Then either $z_3 - z_2 - z_{1b}$ is a short heavy cycle (if $v_1^2$ or $v_2^2$ do not belong to the same block) or $z_3 - z_{1a} - z_2 - z_{1b} - z_3$ is a single $(4, 1)$-merge (if $v_1^2$ and $v_2^2$ belong to the same block) and hence, in either case, we make progress.
	Furthermore, by \Cref{lem:manyC4_typesz1-3}, $z_{1a}$ and $z_{1b}$ are either a $\cFour$, large or complex. Therefore, we have the first properties:
	\begin{itemize}[nosep]
		\item $z_2$ and $z_3$ are $\cFour$.
		\item $z_{1a}$ and $z_{1b}$ are either a $\cFour$, large or complex.
	\end{itemize}

	\begin{figure}[tb]
		\centering
    \includegraphics[width=0.6\textwidth]{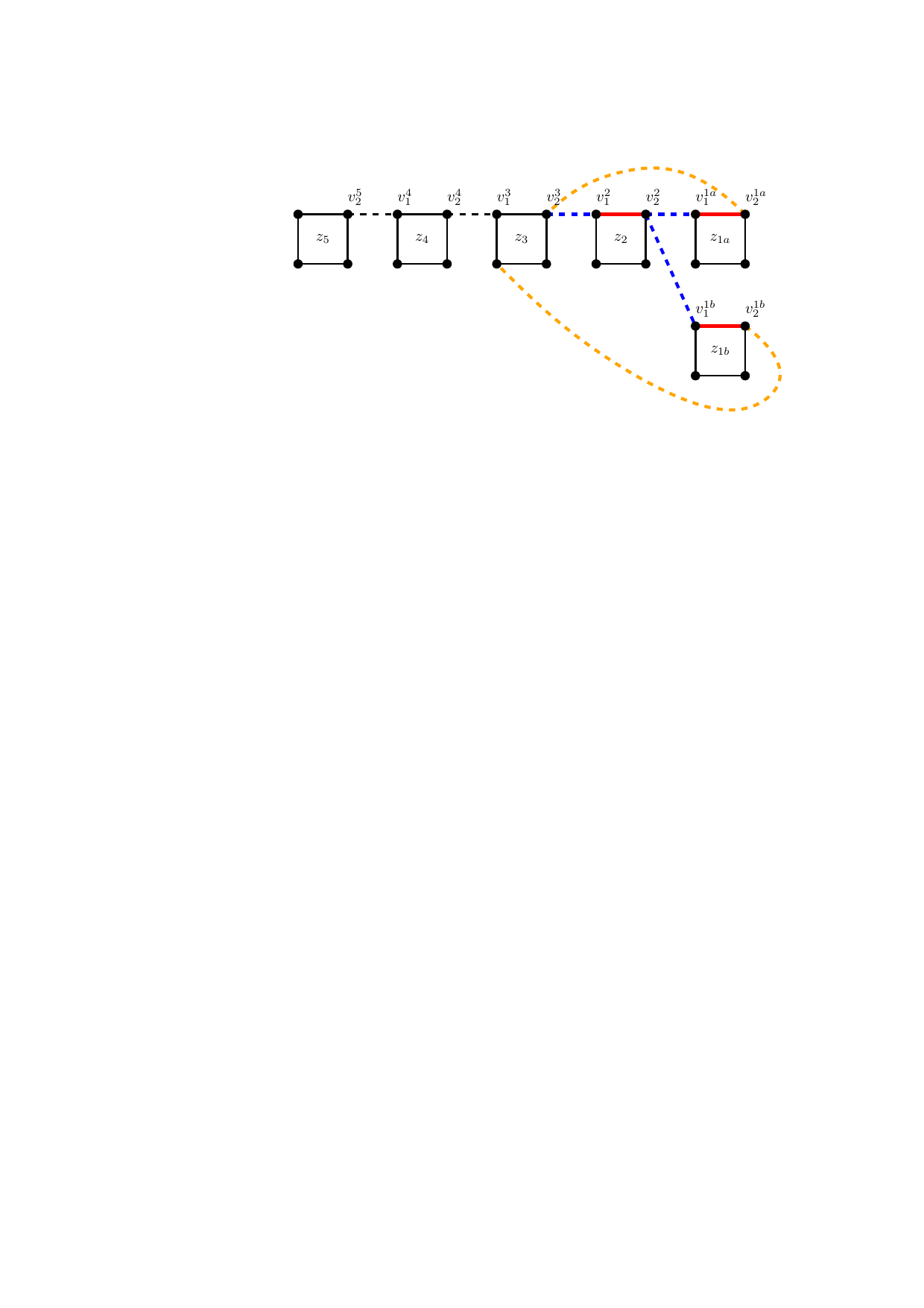}
    \caption{Example for situation in \Cref{lem:nicepath_branch-at-2-length-1}.}
    \label{fig:gluing-path-2-2}
\end{figure}

	Next, we show that neither $z_3$ nor $z_2$ can be a cut node in $\hG_H$. Otherwise, assume w.l.o.g.\ that $z_2$ is a cut node and let $S_2 \neq S$ be another segment of which $z_2$ is part of.
	Note that since $G$ is structured and $z_2$ is a $\cFour$ and a cut node, the number of components in $S_2$ is at most $4$. Otherwise $z_2$ is a large cut $\cFour$ since $S$ contains a huge component by assumption, a contradiction.
	Furthermore, there is no other segment besides $S_2$ containing any of the nodes of $S_2$. Otherwise, either there is a segment containing only two $\cFour$'s (which implies a merge of two $\cFour$'s into one $\cEight$, by the $3$-Matching \Cref{lem:3-matching}), or again $z_2$ is a large cut $\cFour$).
	If $S_2$ contains a complex component, then there exists a cycle of length at most $4$ in $S_2$ that covers at least one bridge of the complex component. Hence, by \Cref{lem:abc-merge}, we can make progress.
	Similarly we can make progress if $S_2$ contains a component that is a $\mathcal{C}_i$ for $5 \leq i \leq 8$.
	Hence, $S_2$ only consists of components that are large or a $\cFour$.
	Note that $P_{13} \cup e_a$ is a single $(3, 1)$-merge containing $z_2$.
	But then by \Cref{lem:cycle-useful-for-double-merge}, we can find a double $(k, j)$-merge containing $z_2$ for some $k \geq 5$ and $k-j \geq 4$.
	Hence, we can make progress according to \Cref{lem:abc-merge}.
	Therefore, we have the following additional property.
	\begin{itemize}[nosep]
		\item Neither $z_3$ nor $z_2$ can be a cut node in $\hG_H$.
	\end{itemize}

	Let $S_1$ be the subset of nodes of $S$ not on $P \setminus \{z_{1a}\}$ that are adjacent to $\{ v_2^2, v_4^2 \}$. 	Note that $z_{1a}, z_{1b} \in S_1$.
	We next show that the set $\{ v_2^2, v_4^2, v_2^3, v_4^3 \}$ separates $S_1$ from $P \setminus S_1$.
	We make a case distinction on where $z_{1a}$ can be incident to. The statement for all other nodes in $S_1$ is analogous.

	First, assume $z_{1a}$ has an outgoing edge $e$ to some node $z \in S$ with $z \notin P \cup S_1$.
	If $e$ is incident to $v_2^{1a}$ or $v_4^{1a}$, we can clearly extend $P$ to a longer gluing path $P''$ using $e$. Otherwise, if $e$ is incident to $v_1^{1a}$ or $v_3^{1a}$, then there are two gluing paths $P_1 = z_k - z_{k-1} - ... - z_3 - z_{1b} - z_{2}$ and $P_2 = z_k - z_{k-1} - ... - z_3 - z_{1a} - z$ of length $k-1$ that satisfy the requirements of \Cref{lem:nicepath_branch-at-3-length-2}, and hence we can make progress.
	Hence, both $z_{1a}$ and $z_{1b}$ are only incident in $S$ to nodes of $P$.

	If $z_{1a}$ is incident to a node $z_i \in P$ with $i \geq 5$, then this edge together with $P_{1ai}$ implies a single $(5, j)$-merge for some $j \leq 2$, and hence by \Cref{lem:abc-merge}, we can make progress.
	Let $C_b$ be the cycle $P'_{1b3} \cup \{ e_b \}$.
	If $z_{1a}$ is incident to $z_4$ via some edge $e$, we consider two cases:
	If $e$ is incident to one of $\{ v_1^{1a}, v_3^{1a} \}$ then define $C_a =  P_{34} \cup \{ e, e_a \}$ and observe that $X = C_a \cup C_b$ defines a double $(5, 1)$-merge. Hence, we can make progress according to \Cref{lem:abc-merge}.
	Otherwise, if $e$ is incident to $\{ v_2^{1a}, v_4^{1a} \}$ then define $C_a =  P_{34} \cup \{ e, P_{12} \}$ and observe that $X = C_a \cup C_b$ defines a double $(5, 1)$-merge. Hence, we can make progress according to \Cref{lem:abc-merge}.

	If $z_{1a}$ is incident to either $v_1^{3}$ or $v_3^{3}$ via some edge $e$, we show that we can make progress:
	If $z_{1a}$ is large then $P_{1a3} \cup e$ is a single $(3, 0)$-merge.
	Assume $z_{1a}$ is complex. $P_{1a3} \cup e_a$ is not a short heavy cycle if the edge of $z_2 z_{1a}$ and the edge $e_a$ are incident to the same block of $z_{1a}$. But then $P_{1a3} \cup e$ is either a short heavy cycle or a single $(3, 0)$-merge.
	Hence, in either case, \Cref{lem:abc-merge} implies that we can make progress.
	Otherwise, $z_{1a}$ is a $\cFour$.
	If $e$ is incident to one of $\{ v_1^{1a}, v_3^{1a} \}$, then either the two edges $e$ and $e_a$ can be used to turn $z_3$ and $z_{1a}$ into one $\cEight$ (if $e$ is incident to $v_1^{1a}$ or $v_3^{1a}$), a contradiction, or there is a single $(3,0)$ merge (if $e$ is incident to $v_2^{1a}$ or $v_4^{1a}$).

	Finally, if $z_{1a}$ is incident to either $v_1^{2}$ or $v_3^{2}$ via some edge $e$, we show that we can make progress and find a longer gluing path $z_k - ... - z_3 - z_{1b} - z_2 - z_{1a}$ using the edges $e_b$ and $e$.

	Hence, we have the following additional property:
	\begin{itemize}[nosep]
		\item No node of $S_1$ has an edge to a node of $S$ not in $P \cup S_1$. Furthermore, the set $\{ v_2^2, v_4^2, v_2^3, v_4^3 \}$ separates $S_1$ from $P \setminus S_1$.
	\end{itemize}

	We next show that $N(v_1^2), N(v_3^2) \subseteq \{ v_2^2, v_4^2, v_2^3, v_4^3 \}$, where $N(v)$ is the set of neighbors in $G$ of some vertex $v \in V(G)$.

	Assume there is an edge $e$ incident to one of $\{ v_3^2, v_1^2 \}$ which is incident to a vertex $u \notin \{ v_2^2, v_4^2, v_2^3, v_4^3 \}$.
	Recall that $z_2$ is not a cut node, hence $u$ is a vertex of $S$.
	If $u$ is part of a node $z$ of $S$ not on $P$ or $P'$, then we have a longer gluing path $ z_k - ... - z_3 - z_{1a} - z_2 - z$ or $ z_k - ... - z_3 - z_{1b} - z_2 - z$. Hence, this is not the case.
	If $u \in \{v_1^{3} , v_3^{3} \}$, then there is a single $(3, 0)$-merge using $P_{1a2}$, $e_a$ and $e$.
	If $u \in V(z_i)$ for some $i \geq 4$, there is a single $(i, 1)$-merge.
	Hence, in any case we can make progress according to \Cref{lem:abc-merge}.

	Finally, assume $e = v_1^2 v_3^2 \in E(G)$.
	W.l.o.g.\ we assume that the edge $e_a^1 = z_2 z_{1a}$ of $P$ is incident to $v_2^2$.
	If the edge $e_b^1 = z_2 z_{1b}$ of $P'$ is incident to $v_2^4$, then there is a single $(4, 1)$-merge using the edges $e_a, e_a^1, e_b^1, e_b$ and we can make progress according to \Cref{lem:abc-merge} (we can replace $z_2$ by $v_2^{2}-v_1^{2}-v_3^{2}-v_4^{2}$).
	Hence, $e_b^1$ has to be incident to $v_2^2$.
	There has to be an edge $e_2^4$ incident to $v_2^4$ and incident to either $z_{1a}$ or $z_{1b}$.
	Otherwise $\{v_2^2, v_2^3, v_4^3\}$ is a large $3$-vertex cut, a contradiction to $G$ being structured.
	W.l.o.g.\ assume that $e_2^4$ is incident to $z_{1b}$.
	If $e_2^4$ is incident to one of $\{ v_2^{1b}, v_4^{1b} \}$ then we can merge the two $\cFour$ $z_2$ and $z_{1b}$ into a single $\cEight$, a contradiction.
	Hence, $e_2^4$ is incident to one of $\{ v_1^{1b}, v_3^{1b} \}$.
	But then there is a single $(4, 1)$-merge using the edges $e_2^4, e_b, e_a, e_a^1$ and using the Hamiltonian Path $v_2^2 - v_3^2 - v_1^2 - v_4^2$ inside $z_2$. Hence, we can make progress according to \Cref{lem:abc-merge}.

	Hence, we have the following additional property:
	\begin{itemize}[nosep]
		\item We have $N(v_1^2), N(v_3^2) \subseteq \{ v_2^2, v_4^2, v_2^3, v_4^3 \}$.
	\end{itemize}

	Let $Y = \{ v_2^2, v_4^2, v_2^3, v_4^3\}$.
	By the above properties, we have that $Y$ is a separator separating $V_1$ from $V_2 \coloneqq V(G) \setminus (V_1 \cup Y)$, where $V_1$ contains all vertices of $V(G) \setminus Y$ that are in some connected component of $G \setminus Y$ containing at least one vertex of $V(S_1)\cup \{ v_1^2, v_3^2 \}$.
	Furthermore, let $V_1' \coloneqq V_1 \setminus  \{ v_1^2, v_3^2 \}$.
	From the condition of this section, we know that $S$ contains at least $5$ nodes and at least one component $L$ that is huge.
	We first assume that this $L$ is contained in $V_2$. We later show how the case that $L$ is contained in $V_1$ is essentially the same.

	Since $Y$ is a $4$-vertex cut and $G$ does not contain large $4$-vertex cuts, we know that $|V_1| \leq 23$, since $|V_2| \geq 32$ as it contains $L$.
	Hence, in $V_1'$ there are at most $5$ components if every component is a $\cFour$ or at most 4 components if one of them is complex or large.
	If $V_1$ contains a complex component, then it is easy to see that there must be a short heavy cycle covering a bridge of that complex component: To see this, observe first that $V_1'$ contains at most $4$ components and $V_1'$ is only connected to components of $V_1'$ and $z_2$ and $z_3$.
	Further, $\{ z_2, z_3 \}$ is a $2$-node cut in $S$ and there is an edge between $z_2$ and $z_3$ in $S$.
	Hence, any simple cycle covering a bridge of the complex component has length at most 6 (and hence is short). Since there is at least one such cycle within  the component graph induced by the components of $V_1 \cup V(z_2) \cup V(z_3)$ covering a bridge, we can make progress by \Cref{lem:abc-merge}.
	Hence, $V_1$ only contains components that are large or $\cFour$.

	We now show that $Y$ is a large $4$-vertex cut (cf.~\Cref{def:large_k_vertex_cut}), a contradiction to $G$ being structured.
	We make a case distinction on whether $V_1'$ contains only $2$ components that are $\cFour$ (in which case $z_{1a}$ and $z_{1b}$ are $\cFour$), or not.
	Note that in the latter case, in which $V_1'$ either contains $3$ components or $2$ components of which at least one is large, we have $|V_1'| \geq 12$.

	\textbf{Case 1:} First consider the latter case in which $|V_1'| \geq 12$.
	Let $\OPT_1'$ be an optimum solution for $G[V_1' \cup Y] | Y$ and let $\OPT_1$ be an optimum solution for $G[V_1 \cup Y] | Y$.
	Note that any optimum solution must pick at least $|\OPT_1'|+4 \geq 17$ edges from within $G[V_1 \cup Y] | Y$ and $|\OPT_1| \geq 17$,
	since $G[V_1]$ has at least three connected components: $V_1'$, $v_1^2$, and $v_3^2$.
	Since $|V_1'| \geq 12$, any feasible solution must pick at least $13$ edges from within $G[V_1' \cup Y]\mid Y$, and two edges incident to $v_1^2$, and $v_3^2$, respectively.
	These are disjoint, and hence this sums up to $|\OPT_1'|+4 \geq 17$ edges, since $|V_1'| \geq 12$.
	Recall that $\{ v_2^2, v_4^2 \} \subseteq N(v_3^2)$ and $\{ v_2^2, v_4^2, v_2^3 \} \subseteq N(v_1^2)$.
	We now let $Z_2 = \{ v_3^2 v_2^2, v_3^2 v_4^2, v_1^2 v_4^2, v_1^2 v_2^3 \}$.
	Note that $Z = OPT_1' \cup Z_2$ contains at most $2$ connected components on the vertex set $V_Z = V_1 \cup Y$, since $\{ v_2^2, v_4^2, v_2^3\}$ is connected via $v_3^2$ and $v_1^2$ and $V_1'$ is connected using the edges of $\OPT_1'$ and also connected to at least one vertex of $Y$.
	Let $C_1$ be the connected component containing $\{ v_2^2, v_4^2, v_2^3\}$ and $C_2$ be the connected component containing $v_4^3$.
	If $C_2 = \{ v_4^3 \}$, there must be an edge from $C_1$ to $C_2$ in $G$ as otherwise $Y \setminus \{ v_4^3 \}$ is a large $3$-vertex cut, a contradiction. If $C_2 \neq \{v_4^3 \}$ there must be an edge from $C_2$ to $C_1$ as otherwise $v_4^3$ is a cut vertex, a contradiction.
	In either case, let $e$ be such an edge.
	Now observe that adding $e$ to $Z$, makes $V_1$ and $Y$ connected.
	Now let $X_2$ be any solution for $(G \setminus V_1) | Y$. Note that in the same way as before (and as in the proof of \Cref{lemma:k-cuts-feasible1}), there is an edge set $Z'$ with $|Z'| \leq 3$ such that $X_2 \cup Z \cup Z' \cup \{e\}$ is 2EC.
	However, since $|Z' \cup \{e\}| \leq 4$, $|Z| \geq 17$ and $|\OPT_1| \geq |Z|+4$, we have $|Z \cup Z' \cup \{e\}| \leq \alpha \OPT_1$, which implies that $Y$ is a large $4$-vertex cut (cf.~second case of \Cref{def:large_k_vertex_cut}), a contradiction.

	\textbf{Case 2:} Now consider the other case, i.e., $z_{1a}$ and $z_{1b}$ are $\cFour$.
	We use the same definitions we used in the previous case.
	The main difference is that we only have $|\OPT_1'| \geq 9$ and hence any optimum solution must pick at least $|\OPT_1'|+4 \geq 13$ edges from within $G[V_1 \cup Y]$.
	We do mostly the same as in the previous case, but need to argue that we can find an edge set $Z$ with $|Z| \leq 13 + 3 = 16$ edges such that $X_2 \cup Z$ is 2EC.
	This then implies that $Y$ is a large $4$-vertex cut (cf.~second case of \Cref{def:large_k_vertex_cut}),, a contradiction, as desired.
	Recall that $e_a^1$ is incident to $v_1^{1a}$ and $z_2$, $e_a$ is incident to $v_2^{1a}$ and $z_3$, $e_b^1$ is incident to $v_1^{1b}$ and $z_2$ and $e_b$ is incident to $v_2^{1b}$ and $z_3$.
	W.l.o.g.\ we assume that $e_a^1$ is incident to $v_2^2$. We make a case distinction on where $e_a, e_b$ and $e_b^1$ are incident to.

	\textbf{Case 2.1:} If $e_a$ and $e_b$ are both incident to $v_2^3$, then set
	$$Z_1 = \Big( E(z_{1a}) \setminus  \{ v_1^{1a} v_2^{1a} \} \Big) \cup \Big( E(z_{1b}) \setminus  \{ v_1^{1b} v_2^{1b} \} \Big) \cup  \{ e_a, e_b, e_a^1, e_b^1, v_3^2 v_2^2, v_3^2 v_4^2, v_1^2 v_4^2, v_1^2 v_2^3 \}$$ and observe that $|Z_1| = 14$ and that all vertices of $(V_1 \cup Y) \setminus v_4^3$ are in a single 2EC component of $Z_1$.
	Similar to before, there must be an edge $e$ incident to $v_4^3$ and $V_1$.
	Since $(V_1 \cup Y) \setminus v_4^3$ is already 2EC, one can easily check that there is an edge set $Z_2$ with $|Z_2| \leq 1$ such that $X_2 \cup Z_1 \cup \{ e \} \cup Z_2$ is 2EC.
	Further, observe that $|Z_1 \cup \{ e \} \cup Z_2 | \leq 16$ as desired.

	\textbf{Case 2.2:} If $e_a$ and $e_b$ are both incident to $v_4^3$, then set
	$$Z_1 = \Big( E(z_{1a}) \setminus  \{ v_1^{1a} v_2^{1a} \} \Big) \cup \Big( E(z_{1b}) \setminus  \{ v_1^{1b} v_2^{1b} \} \Big) \cup  \{ e_a, e_b, e_a^1, e_b^1 \} \cup E(z_2) $$ and observe that $|Z_1| = 14$ and that all vertices of $(V_1 \cup Y) \setminus v_2^3$ are in a single 2EC component of $Z_1$.
	Now following the analogous steps as in Case 2.1, we obtain the result.

	\textbf{Case 2.3:} If neither Case 2.1 nor Case 2.2 applies, set
	$$Z_1 = \Big( E(z_{1a}) \setminus  \{ v_1^{1a} v_2^{1a} \} \Big) \cup\Big( E(z_{1b}) \setminus  \{ v_1^{1b} v_2^{1b} \} \Big) \cup  \{ e_a, e_b, e_a^1, e_b^1, v_3^2 v_2^2, v_3^2 v_4^2, v_1^2 v_4^2, v_1^2 v_2^3 \}$$ and observe that $|Z_1| = 14$ and that all vertices of $Y \setminus \{ v_4^3 \}$ are in a single 2EC component of $Z_1$, and that all vertices of $V_1 \cup Y$ are connected.
	Hence, similar to above, there must be an edge $e$ such that $X_2 \cup Z_1 \cup \{ e \}$ is 2EC.
	Further, observe that $|Z_1 \cup \{ e \}| \leq 15$ as desired.

	It remains to consider the case that $L$ is part of $V_1$. Assume w.l.o.g.\ that $z_{1a}$ is huge.
	Observe that no node of $S_1$ has an edge incident to $z_{1a}$, as if such an edge $e$ exists, we can extend the gluing path $P$ using this edge, unless $L$ is complex and $e_a^1$ and $e$ are incident to the same vertex $v$ of the complex component $L$, and $v$ is not part of any block.
	But then the three edges $P_{23}, e_a^1$ and $e_a$ are a short heavy cycle, since $e_a$ and $e_a^1$ are not incident to the same vertex of $L$ and hence this cycle covers a bridge of~$L$.

	Hence, assume that there is no edge from $L$ to any other node of $S_1$.
	Now observe that $Y$ also separates $L$ from $Y$ and $S' \coloneqq S \setminus (Y \cup \{ L \})$.
	Note that $S'$ contains at least the three nodes $z_{1b}$, $z_4$, and $z_5$, which contain in total at least $12$ vertices and the two vertices $v_1^2$ and $v_3^2$.
	Now we have the same situation as in Case 1 in which $L$ is in $V_2$ and $|V_1| \geq 12$. Hence, we can make progress.
	This finishes the proof of the lemma.
\end{proof}

Next, we show that $z_1$ and $z_2$ have no outgoing edges in $S$, i.e., edges to nodes of $S \setminus P$, where $P$ is the gluing path.

\begin{lemma}
	\label{lem:nicepath_z_1z_2_no-outside}
	Let $P = z_k - z_{k-1} - ... - z_3 - z_{2} - z_{1}$ be a gluing path of length $k-1 \geq 4$. If $z_1$ or $z_2$ has an outgoing edge, i.e., an edge to some node of $S$ not on $P$, then we can make progress.
\end{lemma}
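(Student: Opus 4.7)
The plan is to do a case analysis based on the type of the endpoint component ($z_1$ or $z_2$) and on which of its vertices the outgoing edge is incident to. In most cases, the outgoing edge either lets us directly extend $P$ to a strictly longer gluing path (progress of type~(ii)), or combines with the back-edge from \Cref{lem:manyC4_1-3-edge} and $P_{13}$ to form a cycle of the right merge/shortcut type, so that \Cref{lem:abc-merge} yields a cost-preserving structural simplification (progress of type~(i)). In the remaining cases we exhibit a sibling gluing path of the same length that differs from $P$ only at $z_1$, which lets us invoke \Cref{lem:nicepath_branch-at-2-length-1}. By earlier lemmas applied exhaustively, $z_2$ and $z_3$ are $\cFour$'s, and $z_1$ is a $\cFour$, large, or complex component; these facts will be used throughout.

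I would treat the $z_1$ case first. If $z_1$ is large, the gluing-path definition imposes no local incidence constraint, and appending the edge~$e=uw$ to $z\in S\setminus P$ yields a longer gluing path. If $z_1$ is complex, the same extension works unless the in-vertex $v_1^1$ and~$u$ are both lonely and equal; in that degenerate situation the subpath in $z_1$ between these incidences necessarily traverses a bridge, so $P_{12}\cup\{e\}$ contains a short heavy cycle and \Cref{lem:abc-merge} applies. If $z_1$ is a $\cFour$ with the in-vertex~$v_1^1$, the ``good'' subcase $u\in\{v_2^1,v_4^1\}$ again extends $P$. In the ``bad'' subcase $u\in\{v_1^1,v_3^1\}$ I would use the back-edge $e'=v_2^1v_j^3$ with $j\in\{2,4\}$ guaranteed by \Cref{lem:manyC4_1-3-edge} together with $e$ and a Hamiltonian-path traversal of~$z_1$ to exhibit a \emph{second} gluing path $P^\dagger$ of length $k-1$ that coincides with $P$ on $z_k,\dots,z_3,z_2$ and ends at~$z$ instead of~$z_1$; \Cref{lem:nicepath_branch-at-2-length-1} then yields progress. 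If no such $P^\dagger$ can be realized (because of local $\cFour$-shortcut obstructions), the cycle $P_{13}\cup\{e'\}$ extended by $e$ and the edge $v_1^1u$ or $v_3^1u$ becomes a single $(4,j)$-merge or double $(5,j)$-merge with $j\le 2$, and \Cref{lem:abc-merge} finishes the case.

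The analysis when the outgoing edge issues from $z_2$ is analogous but slightly more direct. Since $z_2$ is a $\cFour$ with in-vertex $v_1^2$, an outgoing edge $e=uw$ with $u\in\{v_2^2,v_4^2\}$ gives immediately a sibling gluing path $P'=z_k-\dots-z_3-z_2-z$ of length $k-1$ branching from $P$ at $z_2$, and \Cref{lem:nicepath_branch-at-2-length-1} yields progress. If $u\in\{v_1^2,v_3^2\}$, then direct substitution violates the local $\cFour$-adjacency requirement, but combining $e$ with the back-edge $e'$ and~$P_{13}$ again produces a single or double merge of the type covered by \Cref{lem:abc-merge}; concretely, $P_{13}\cup\{e'\}\cup\{e\}$ is a double $(4,j)$-merge together with the sub-cycle closing through~$z$, shortcutting $z_2$ along $v_1^2v_2^2$ or $v_3^2v_2^2$.

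The main obstacle is the bookkeeping in the bad subcases~$u\in\{v_1^1,v_3^1\}$ and $u\in\{v_1^2,v_3^2\}$: we must check carefully that the alternative gluing path $P^\dagger$ we build is a legitimate gluing path at every intermediate $\cFour$ (in particular that the chosen in-/out-vertices are $H$-adjacent), and that the merges we construct are incident to each complex component either at a block or at two distinct vertices, as required by \Cref{def:single-a-b-merge} and \Cref{def:double-a-b-merge}. The crucial observation that makes the bookkeeping work is that the prior branching lemmas (\Cref{lem:nicepath_branch-at-3-length-2} and~\Cref{lem:nicepath_branch-at-2-length-1}) have already been applied exhaustively, so the local configurations around $z_1,z_2,z_3$ are severely restricted, and one of the two constructions is always available.
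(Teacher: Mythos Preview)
Your high-level plan (case split on the incident vertex, extend $P$ in the ``good'' cases, invoke \Cref{lem:nicepath_branch-at-2-length-1} in the ``bad'' ones) is the right one, but two of your constructions are incorrect and miss the key trick.

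For $z_1$ with $u\in\{v_1^1,v_3^1\}$: you claim a sibling path $P^\dagger$ that ``coincides with $P$ on $z_k,\dots,z_3,z_2$ and ends at $z$''. Such a path would need an edge from $z_2$ to $z$, which you do not have --- the outgoing edge is from $z_1$. The paper instead \emph{re-routes through $z_1$ using $e_{13}$}: it builds $P_1=z_k-\dots-z_3-z_1-z$ (via $e_{13}$ and the edge $z_1z$) and $P_2=z_k-\dots-z_3-z_1-z_2$ (via $e_{13}$ and the original edge $z_2z_1$). Both are valid gluing paths because $e_{13}$ enters $z_1$ at $v_2^1$, which is adjacent to $v_1^1$ and $v_3^1$; they coincide on $z_k,\dots,z_3,z_1$ and differ only in the last node, so \Cref{lem:nicepath_branch-at-2-length-1} applies directly. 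No fallback merge is needed.

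For $z_2$ with $u\in\{v_1^2,v_3^2\}$: your proposed ``double $(4,j)$-merge'' does not exist. The edge $e$ from $z_2$ to $z$ has only one endpoint on the cycle $P_{13}\cup\{e'\}$, so it is not a valid $X_2$ in \Cref{def:double-a-b-merge}; and even if it were, \Cref{lem:abc-merge} requires $k\ge 5$ for double merges. The paper again uses the re-routing trick: the path $z_k-\dots-z_3-z_1-z_2-z$ (via $e_{13}$, then the original edge $z_1z_2$ entering $z_2$ at $v_2^2$, then $e$ leaving at $v_1^2$ or $v_3^2$) is a valid gluing path of length $k$, so we simply extend. The missing idea in both cases is the same: the back-edge $e_{13}$ lets you swap the order of $z_1$ and $z_2$ on the path, turning an awkward incidence at $z_1$ or $z_2$ into a clean extension or branch.
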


\begin{proof}
Clearly, if there is an edge from $v_2^1$ or $v_4^1$ to some node not on $P$, we can make progress.
By \Cref{lem:manyC4_1-3-edge}, we know that the edge $e_{13}$ has to exist from $v_2^1$ to one of $\{v_2^3, v_4^3\}$.
Hence, if there is an edge from $v_3^1$ or $v_1^1$ to some node $z$ not on $P$, we have two gluing paths $P_1 = z_k - ... - z_3 - z_1 - z$ of length $k-1$ and $P_2 = z_k - ... - z_3 - z_1 - z_2$ of length $k-1$ and hence \Cref{lem:nicepath_branch-at-2-length-1} implies that we can make progress.
Therefore, $z_1$ has no outgoing edge to some node not on $P$.

Similarly, there can not be an edge from $v_1^2$ or $v_3^2$ to some node $z$ not on $P$ since we then have the longer gluing path $z_k - ... - z_3 - z_1 - z_2 - z$.
If there is an edge from $v_2^2$ or $v_4^2$ to some node $z$ not on $P$, then we again have two gluing paths of length $k-1$ that satisfy the conditions of \Cref{lem:nicepath_branch-at-2-length-1} and hence we can make progress.
\end{proof}

\begin{figure}
	\centering
    \includegraphics[width=0.6\textwidth]{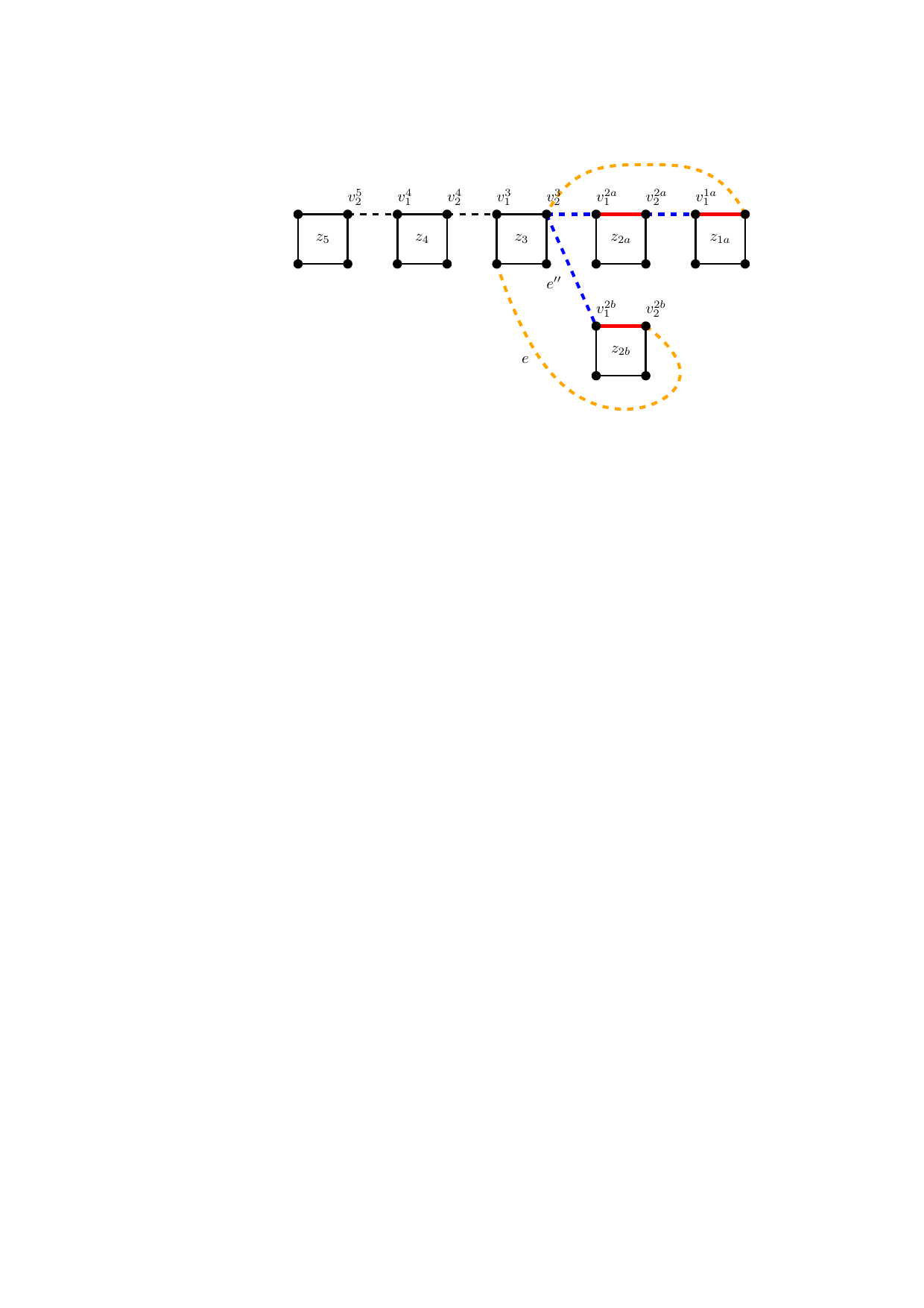}
    \caption{Example for situation in \Cref{lem:nicepath_branch-at-3-length-1}.}
    \label{fig:gluing-path-3-1}
\end{figure}

\begin{lemma}
	\label{lem:nicepath_branch-at-3-length-1}
	Let $P = z_k - z_{k-1} - ... - z_3 - z_{2a} - z_{1a}$ be a gluing path of length $k-1 \geq 4$ and let $P' = z_k - z_{k-1} - ... - z_3 - z_{2b}$ be another gluing path of length $k-2 \geq 3$ such that the first $k-2$ nodes are the same in $P$ and $P'$ but the remaining $3$ nodes are all distinct, i.e., $z_{1a} \neq z_{2b} \neq z_{2a}$.
	Then, we can make progress.
\end{lemma}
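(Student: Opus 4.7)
The plan mirrors the strategy used in \Cref{lem:nicepath_branch-at-3-length-2}: assume no progress is possible, extract a ``back-edge'' from the tip of each of $P$ and $P'$, and combine these into a double merge on $5$ distinct nodes, then conclude via \Cref{lem:abc-merge}.

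Under the no-progress assumption, the preliminary structural lemmas already established give plenty of information. From \Cref{lem:manyC4_typesz1-3}, none of $z_{1a}, z_{2a}, z_{2b}$ is a $\cFive$, $\cSix$, $\cSeven$, or $\cEight$, so each is a $\cFour$, large, or complex. Crucially, applying \Cref{lem:manyC4_typesz1-3:2} to $P$ forces $z_3$ to be a $\cFour$, and applying the same lemma to $P'$---which is itself a valid gluing path of length $k-2 \geq 3$ whose third-to-last node is $z_4$---forces $z_4$ to be a $\cFour$ as well.

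I then apply \Cref{lem:manyC4_1-3-edge} separately to $P$ and to $P'$ to obtain two back-edges: $e_a$ from $v_2^{1a}$ to a vertex in $\{v_2^3, v_4^3\}$, and $e_b$ from $v_2^{2b}$ to a vertex in $\{v_2^4, v_4^4\}$. I define $C_a = \{z_{1a} z_{2a},\, z_{2a} z_3,\, e_a\}$, a $3$-cycle in $\hG_H$ through $z_{1a}, z_{2a}, z_3$, and $C_b = \{z_{2b} z_3,\, z_3 z_4,\, e_b\}$, a $3$-cycle through $z_{2b}, z_3, z_4$. These two cycles share exactly the node $z_3$ and have disjoint edge sets, so $X \coloneq C_a \cup C_b$ is a two-cycle-sharing-one-node structure on the five distinct nodes $z_{1a}, z_{2a}, z_3, z_{2b}, z_4$, exactly analogous to the double $(5,1)$-merge used in \Cref{lem:nicepath_branch-at-3-length-2}.

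A straightforward incidence check confirms that $X$ is a double $(5,1)$-merge in the sense of \Cref{def:double-a-b-merge}: each of $z_{1a}, z_{2a}, z_{2b}, z_3$ (when it is a $\cFour$) is touched by $X$ at two vertices that are adjacent in its $\cFour$-cycle and is therefore shortcut; $z_4$ is touched at $v_2^4$ and possibly $v_4^4$---which are non-adjacent in its $\cFour$-cycle---so $z_4$ is the only potentially unshortcut $\cFour$; and the complex-component side condition is automatically satisfied along any complex component, since $X$ visits each such component at two distinct vertices by the gluing-path property that consecutive in/out vertices either lie in a common block or are distinct. Thus $a = 5$ and $b \leq 1$, and \Cref{lem:abc-merge} produces a canonical $2$-edge cover $H' = (H \cup X) \setminus F(X)$ with fewer components than $H$ and $\cost(H') \leq \cost(H)$. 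The main subtlety is the asymmetry in the lengths of $P$ and $P'$: handling the shorter branch requires recognizing $P'$ as an independent gluing path and re-invoking the tip lemmas at $z_{2b}$ with $z_4$ in the role of ``$z_3$'', after which the construction closes up exactly as in the symmetric case of \Cref{lem:nicepath_branch-at-3-length-2}.
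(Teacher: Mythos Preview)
Your double-merge construction is the right picture---indeed, when the back-edge $e_b$ from $z_{2b}$ to $z_4$ exists, your $C_a \cup C_b$ coincides exactly with the double $(5,1)$-merge the paper builds in its ``$z = z_4$'' sub-case. The problem is in how you obtain $e_b$ (and the $\cFour$-ness of $z_4$). You invoke \Cref{lem:manyC4_1-3-edge} and \Cref{lem:manyC4_typesz1-3:2} on the shorter path $P'$, and then absorb their ``or we can make progress'' clauses into your global no-progress assumption. But ``progress'' in those lemmas is relative to the path they are handed: for $P'$, type-(ii) progress means extending $P'$ to length $k-1$, which is \emph{not} progress for $P$ (the path of length $k-1$ you actually care about). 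So under your assumption ``no progress for $P$'' you are not entitled to conclude that $e_b$ exists or that $z_4$ is a $\cFour$.

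This gap is not fatal, but closing it requires extra case work you did not supply. For \Cref{lem:manyC4_1-3-edge} on $P'$, the only type-(ii) outcome is an edge from $z_{2b}$ to some $z' \notin P'$; if $z' \in \{z_{1a}, z_{2a}\}$ then \Cref{lem:nicepath_z_1z_2_no-outside} applied to $P$ gives progress for $P$, and otherwise $z' \notin P$ so $P$ and $P'' \coloneq P' + z'$ satisfy the hypotheses of \Cref{lem:nicepath_branch-at-3-length-2}. The analogous closure for \Cref{lem:manyC4_typesz1-3:2} on $P'$ is messier, because that proof has several path-extension branches. Relatedly, your claim that the complex-component condition is ``automatically satisfied'' at $z_4$ is not justified: both edges of $X$ at $z_4$ (namely $z_3 z_4$ and $e_b$) can land on the single vertex $v_2^4$, and the gluing-path in/out-vertex property says nothing about $e_b$, which is not a gluing-path edge; you were relying on $z_4$ being a $\cFour$ to dodge this.

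The paper sidesteps all of this by never invoking the tip lemmas on $P'$. Instead it takes $e_{13}$ from \Cref{lem:manyC4_1-3-edge} on $P$ only, uses the $3$-Matching Lemma to get an edge $e$ out of $z_{2b}$, and does a direct case analysis on where $e$ lands (off $P$: \Cref{lem:nicepath_branch-at-3-length-2}; on $z_i$ with $i \geq 4$: double merge; on $z_3$: further sub-cases), using \Cref{lem:nicepath_z_1z_2_no-outside} to rule out $z_{1a}, z_{2a}$. Your route is conceptually cleaner, but as written it leans on a black-box use of the tip lemmas that does not respect the asymmetry you yourself flagged.
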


\begin{proof}
By \Cref{lem:manyC4_1-3-edge}, we know that the edge $e_{13}$ has to exist from $v_2^{1a}$ to one of $\{v_2^3, v_4^3\}$.
Either $v_2^{2b}$ or $v_4^{2b}$ must have an edge $e$ to a node $z \in S$ with $z \neq z_{2b}$, by the 3-Matching \Cref{lem:3-matching}. W.l.o.g.\ assume this edge is incident to $v_2^{2b}$.
If $z \notin P$, we have two gluing paths of length $k-1$ satisfying the conditions of  \Cref{lem:nicepath_branch-at-3-length-2} and we can make progress.
If $z = z_i$ for some $i \geq 4$, we can make progress as follows:
We take $P_{1a3}$ and $e_{13}$ together with $e$ and $P'_{2ai}$ and observe that this forms a double $(i+1, j)$-merge for some $j \leq 1$ and hence by \Cref{lem:abc-merge}, we can make progress.

Hence, $e$ is incident to $z_3$, since neither $z_1$ nor $z_2$ have an outgoing edge by \Cref{lem:nicepath_z_1z_2_no-outside} (or we can make progress).
If $e$ is incident to either $v_3^3$ or $v_1^3$, then there is a double $(4, 0)$-merge using $P_{1a3}$ and $e_{13}$ together with $P'_{2b3}$ and $e$, and hence, by \Cref{lem:abc-merge}, we can make progress.
Hence, $e$ is incident to either $v_2^3$ or $v_4^3$.
An illustration of this case is given in \Cref{fig:gluing-path-3-1}.

Since $z_{2b}$ can not be only connected to $z_3$ in $S$ (as then $z_{2b}$ is not in the segment $S$ we currently consider), either $v_1^{2b}$ or $v_3^{2b}$ must have an edge $e'$ to some node $z'$.
Again, note that $z_{1a} \neq z' \neq z_{2a}$ by \Cref{lem:nicepath_z_1z_2_no-outside}.

Let $e''$ be the edge from $z_3$ to $z_{2b}$ on $P'$. Observe that we can swap $e$ with $e''$ and we have the same conditions as in this lemma statement but $P'$ ends at $v_2^{2b}$ instead of $v_1^{2b}$.
Let this other path be $P''$.
There must be an edge $f$ to some node of $S$ that is not $z_3$, otherwise $z_{2b}$ is not in $S$.
Since we can interchange $P'$ and $P''$, no matter where $f$ is incident to at $z_{2b}$, we can do the exact same case distinction as for $e$ and observe that in any case we can make progress.
This finishes the proof.
\end{proof}

Next, we argue that a certain edge from $z_1$ to $z_4$ on the gluing path has to exist.

\begin{lemma}
	\label{lem:manyC4_1-4-edge}
	Let $P = z_k - z_{k-1} - ... - z_1$ be a gluing path of length $k$. Then either $v_3^1$ or $v_1^1$ is adjacent to one of $\{v_2^4, v_4^4 \}$ or we can make progress.
\end{lemma}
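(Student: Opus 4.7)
The plan is to prove the statement for $z_1$; the statement for $z_k$ will follow by the symmetric argument. The proof closely parallels that of \Cref{lem:manyC4_1-3-edge}, but must accommodate the additional node $z_4$ and the slightly different ``target vertex set'' $\{v_2^4, v_4^4\}$. I will first assume, by exhaustive application of the preceding lemmas, that
\begin{itemize}[nosep]
    \item $z_1, z_2, z_3$ are each a $\cFour$, large, or complex (\Cref{lem:manyC4_typesz1-3}), and in fact $z_3$ is a $\cFour$ (\Cref{lem:manyC4_typesz1-3:2});
    \item there is an edge $e_{13}$ from $v_2^1$ to $\{v_2^3, v_4^3\}$, WLOG $e_{13} = v_2^1 v_2^3$ (\Cref{lem:manyC4_1-3-edge});
    \item $z_1$ and $z_2$ have no outgoing edges to nodes of $S \setminus V(P)$ (\Cref{lem:nicepath_z_1z_2_no-outside});
    \item no pair of paths satisfies the hypothesis of the branch lemmas \Cref{lem:nicepath_branch-at-3-length-2,lem:nicepath_branch-at-2-length-1,lem:nicepath_branch-at-3-length-1}.
\end{itemize}

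Next, I plan to apply the $3$-Matching Lemma \Cref{lem:3-matching} to $V(z_1)$. Combined with the two known outgoing edges (the $P$-edge at $v_1^1$ and $e_{13}$ at $v_2^1$), this forces an additional outgoing edge $f = wu$ with $w \in \{v_3^1, v_4^1\}$, and by \Cref{lem:nicepath_z_1z_2_no-outside} the endpoint $u$ must lie on $P$. When $z_1$ has size at least $11$ (i.e.\ is large or complex), I will strengthen this to a $4$-matching via \Cref{lem:4-matching}, which additionally yields an extra edge at $v_1^1$ beyond the $P$-edge; this is the source of the ``$v_1^1$'' option in the conclusion.

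The core of the argument is a case analysis on the position of $u$:
\begin{itemize}[nosep]
    \item If $u \in V(z_i)$ with $i \geq 5$: the cycle $P_{1i} \cup \{f\}$ is a single $(i, j)$-merge with $j \leq 2$ and $i-j \geq 3$, and \Cref{lem:abc-merge} yields progress.
    \item If $u \in V(z_3)$: mirrors the analogous case of \Cref{lem:manyC4_1-3-edge}, giving a single $(3,0)$- or $(3,1)$-merge or a short heavy cycle.
    \item If $u \in V(z_2)$: follows the delicate endgame of \Cref{lem:manyC4_1-3-edge} (short heavy cycle, single merge, or edge swapping in $P$).
    \item If $u \in V(z_4)$: if $u \in \{v_2^4, v_4^4\}$ and $w \in \{v_1^1, v_3^1\}$, the conclusion holds. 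Otherwise (i.e.\ $u \in \{v_1^4, v_3^4\}$, or $w = v_4^1$ with $u \in \{v_2^4, v_4^4\}$), I will combine $f$ with $e_{13}$ and possibly the $P$-edges $P_{14}$ to produce a cycle of length $\leq 5$ in $\hat G_H$. Careful accounting of which $\cFour$'s along $P_{14}$ are shortcut via \Cref{def:shortcut} should yield either a single $(4, j)$-merge with $j \leq 1$, a double $(5, j)$-merge with $j \leq 2$, or a longer gluing path that contradicts minimality (invoking one of the branch lemmas, \Cref{lem:nicepath_branch-at-3-length-1}).
\end{itemize}

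The main obstacle I anticipate is the subcase $w = v_4^1$, $u \in \{v_2^4, v_4^4\}$: here $f$ does not immediately combine with $e_{13}$ into a shortcuttable cycle (since $v_2^1$ and $v_4^1$ are not adjacent on the $\cFour$ $z_1$), so extracting a valid single or double merge with enough shortcut $\cFour$-components requires either swapping an edge of $P$ (moving the ``entry vertex'' of $z_1$ from $v_1^1$ to $v_4^1$ via $f$, and checking the gluing-path conditions survive) or exhibiting a branching gluing path in the sense of \Cref{lem:nicepath_branch-at-3-length-1}. Verifying that one of these always succeeds---and that the resulting merge satisfies the credit-accounting inequality of \Cref{lem:abc-merge}---is the delicate technical step.
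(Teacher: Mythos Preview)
Your plan has a genuine gap. Applying the $3$-Matching Lemma to $V(z_1)$ only guarantees a third outgoing edge $f = wu$ from $z_1$, but $u$ may land in $V(z_2)$ or $V(z_3)$, and nothing in your case analysis then produces an edge reaching $z_4$. Your fallback ``follows the delicate endgame of \Cref{lem:manyC4_1-3-edge}'' does not help: that endgame establishes an edge to $z_3$, not to $z_4$. The paper sidesteps this entirely by not using the $3$-matching at all. Instead it observes that, since $z_1$ and $z_2$ have no edges leaving $P$ (\Cref{lem:nicepath_z_1z_2_no-outside}), if additionally neither $z_1$ nor $z_2$ had any edge to $z_i$ with $i \geq 4$, then $z_3$ would be a cut node of the segment $S$, contradicting $2$-node-connectivity. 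This guarantees \emph{some} edge from $z_1 \cup z_2$ past $z_3$, after which a short case analysis on that edge finishes the proof.

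Two further points. First, your anticipated obstacle (the case $w = v_4^1$, $u \in \{v_2^4, v_4^4\}$) is not an obstacle: the paper simply takes the cycle $P_{14} \cup \{f\}$, and since $v_1^1$ (the in-vertex of $z_1$) and $v_4^1$ \emph{are} adjacent on the $\cFour$, $z_1$ is shortcut along with $z_2$ and $z_3$, giving a single $(4,1)$-merge directly---no need to involve $e_{13}$ or branching. Second, because the relevant edge may originate at $z_2$ rather than $z_1$, the paper needs one additional idea you do not mention: when $v_2^2$ or $v_4^2$ is adjacent to $\{v_2^4, v_4^4\}$, one rearranges the path to $P' = z_k - \cdots - z_4 - z_3 - z_1 - z_2$ using $e_{13}$, so that $z_2$ becomes the new endpoint and the desired edge is now at the ``new $z_1$''.
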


\begin{proof}
By the previous lemmas, we know that the edge $e_{13}$ from $v_2^1$ to either $v_2^3$ or $v_4^3$ exists.
Further, neither $z_1$ nor $z_2$ has an edge to some node in $S$ but not on $P$.

We first consider $z_1$.
If $z_1$ is incident to an edge $e$ that is incident to $z_i$ on $P$ for $i \geq 5$, then $P_{1i}$ together with $e$ forms a single $(i, j)$-merge for some $j \leq 2$, and hence we can make progress according to \Cref{lem:abc-merge}.
Assume $z_1$ has an edge to $z_4$. Clearly, if $v_3^1$ or $v_1^1$ is adjacent to one of $\{v_2^4, v_4^4 \}$, we are done.
If $v_3^1$ or $v_1^1$ is adjacent to one of $\{v_1^4, v_3^4 \}$ via some edge $e$, then $P_{14}$ together with $e$ forms a single $(4, 1)$-merge and we can make progress according to \Cref{lem:abc-merge}.
If $v_2^1$ or $v_4^1$ is adjacent to $z_4$ via some edge $e$, then $P_{14}$ together with $e$ forms a single $(4, 1)$-merge, and we can make progress according to \Cref{lem:abc-merge}.

Next, consider $z_2$.
If $z_2$ is incident to an edge $e$ that is incident to $z_i$ on $P$ for $i \geq 5$, then $P_{1i}$ together with $e$ and $e_{13}$ forms a double $(i, j)$-merge for some $j \leq 2$, and hence we can make progress according to \Cref{lem:abc-merge}.

Consider the case that $z_2$ has an edge $e$ to $z_4$.
Assume $e$ is incident to $v_2^2$ or $v_2^4$.
If $e$ is incident to one of $\{v_2^4, v_4^4 \}$, then we can rearrange $P$ to obtain the desired edge:
Consider $P' = z_k - ... - z_4 - z_3 - z_1 - z_2$ using the edge $e_{13}$ from $z_3$ to $z_1$.
Observe that $P'$ and $e$ satisfy the lemma statement.
If $e$ is incident to one of $\{v_2^3, v_4^1 \}$, then $P_{24}$ together with $e$ form a single $(3, 0)$-merge, and we can make progress according to \Cref{lem:abc-merge}.
Next, assume $e$ is incident to $v_1^2$ or $v_3^2$.
If $e$ is incident to one of $\{v_1^4, v_3^4 \}$ via some edge $e$, then $P_{34}$, $e_{13}$, $P_{12}$ and $e$ together form a single $(4, 1)$-merge and we can make progress according to \Cref{lem:abc-merge}.

Hence, if neither of the above conditions is met, then $z_3$ is a cut node, a contradiction to $P$ being a gluing path on the $2$-node-connected segment $S$ containing $z_1$ and $z_2$.
\end{proof}

Hence, from now on we assume that $v_3^1$ is adjacent to one of $\{v_2^4, v_4^4 \}$.

\begin{lemma}
	\label{lem:nicepath_no-outside-from-1-2-3}
	Let $P = z_k - z_{k-1} - ... - z_3 - z_{2} - z_{1}$ be a gluing path of length $k-1 \geq 4$.
	Then either we can make progress or in $G$ no vertex of the nodes $z_1, z_2, z_3$ has an edge to a node $z$ not on~$P$.
\end{lemma}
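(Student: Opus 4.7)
The plan is to invoke \Cref{lem:nicepath_z_1z_2_no-outside}, which already rules out outside edges at $z_1$ and $z_2$, and reduce to showing that progress is possible whenever $z_3$ has an outgoing edge $e$ to some node $z \notin P$. By \Cref{lem:manyC4_typesz1-3:2} we may assume $z_3$ is a $\cFour$, with cyclic labels $v_1^3, v_2^3, v_3^3, v_4^3$ such that $v_1^3$ is the in-vertex of $z_3$ (from $z_4$) and $v_2^3$ is the out-vertex (to $z_2$).

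First I would handle the easy case that $e$ is incident to $v_2^3$ or $v_4^3$. Here $P' = z_k - z_{k-1} - \cdots - z_3 - z$ is a valid gluing path of length $k-2 \geq 3$, since its two edges at $z_3$ are incident to $v_1^3$ and one of $\{v_2^3, v_4^3\}$, which are adjacent in the cycle of $z_3$. Because $z \notin \{z_1, z_2\}$, the hypotheses of \Cref{lem:nicepath_branch-at-3-length-1} are met and we make progress immediately.

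The harder case is when $e$ is incident to $v_1^3$ or $v_3^3$, where the swap above fails the $\cFour$ adjacency condition at $z_3$. My plan is instead to \emph{insert} $z$ between $z_4$ and $z_3$ in $P$, forming the longer path $P^* = z_k - \cdots - z_4 - z - z_3 - z_2 - z_1$ of length $k$; at $z_3$ the two edges of $P^*$ hit $v_2^3$ and a vertex in $\{v_1^3, v_3^3\}$, which are adjacent in the cycle. This extension succeeds whenever $z$ has an edge $f$ to $z_4$ compatible with the $\cFour$ condition at $z_4$, which is vacuous if $z_4$ is large or complex and otherwise requires $f$ to be incident to $v_2^4$ or $v_4^4$. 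When no such $f$ exists, I would use that $S$ is 2-node-connected to obtain another neighbor $z'$ of $z$ in $S$: if $z' \notin P$, a second gluing path through $z$ and $z'$ triggers one of the branch lemmas (\Cref{lem:nicepath_branch-at-3-length-1} or \Cref{lem:nicepath_branch-at-3-length-2}); if $z' = z_j$ for some $j \geq 5$, the cycle $z - z_3 - z_4 - \cdots - z_j - z$ has all interior $\cFour$'s $z_4, \ldots, z_{j-1}$ shortcut by $P$, so for $j \geq 6$ it is a single merge meeting the hypotheses of \Cref{lem:abc-merge}, and for $j = 5$ I would combine it with the edge from \Cref{lem:manyC4_1-3-edge} (from $v_2^1$ to $\{v_2^3, v_4^3\}$) to obtain a double merge of the required parameters.

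The main obstacle will be the tightest residual subcase, where $e$ hits $v_1^3$ or $v_3^3$, every edge $f$ from $z$ to $z_4$ is incident to $v_1^4$ or $v_3^4$, and $z$ has no other useful neighbor in $S$. I expect to combine $e$, $f$, the edge from \Cref{lem:manyC4_1-3-edge}, and the edge from \Cref{lem:manyC4_1-4-edge} (from $\{v_1^1, v_3^1\}$ to $\{v_2^4, v_4^4\}$) into a double merge satisfying \Cref{lem:abc-merge}. If even that fails, the constant-size set $V(z) \cup V(z_3) \cup V(z_4)$ should be separable from the huge component in $S$ by at most four vertices of $G$, yielding a large $4$-vertex cut or a large $\mathcal C_k$ cut, contradicting $G$ being structured.
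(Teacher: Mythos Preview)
Your easy case (edge $e$ incident to $v_2^3$ or $v_4^3$) matches the paper. The hard case, however, has a genuine gap. When $e$ is incident to $v_1^3$ or $v_3^3$ and the outside node $z$ has a further neighbor $z' \notin P$, you plan to invoke one of the branch lemmas via ``a second gluing path through $z$ and $z'$.'' But the natural candidate $z_k - \cdots - z_4 - z_3 - z - z'$ is \emph{not} a gluing path: at $z_3$ its two edges are the $P$-edge $z_4z_3$ (at $v_1^3$) and $e$ (at $v_1^3$ or $v_3^3$), which are equal or opposite on the $\cFour$, never adjacent. So neither \Cref{lem:nicepath_branch-at-3-length-1} nor \Cref{lem:nicepath_branch-at-3-length-2} is available here, and your extension $P^*=z_k-\cdots-z_4-z-z_3-z_2-z_1$ similarly depends on an edge $f$ that need not exist (and even when it does, you have not verified the adjacency condition at $z$ itself).

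The paper's key idea, which you only touch on in the ``tightest residual subcase,'' is to use \emph{both} back-edges $e_{13}$ and $e_{14}$ (from \Cref{lem:manyC4_1-3-edge} and \Cref{lem:manyC4_1-4-edge}) as the primary tool, not a last resort. It reroutes the path as $z_k-\cdots-z_4-z_1-z_3-z_b-z'$: at $z_4$ the edges hit $v_1^4$ and $\{v_2^4,v_4^4\}$; at $z_1$ they hit $\{v_1^1,v_3^1\}$ and $v_2^1$; at $z_3$ they hit $\{v_2^3,v_4^3\}$ and $\{v_1^3,v_3^3\}$---all adjacent pairs. This directly yields a longer gluing path whenever $z'\notin P$, with no appeal to branch lemmas. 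For $z'=z_i$ on $P$, the paper always attaches the triangle $P_{13}\cup\{e_{13}\}$ to get a \emph{double} merge; this is what makes $z_3$ shortcut (via $v_1^3$ and $v_2^3$) and, together with the $3$-Matching Lemma applied to $z_b$ to force the edge out of $z_b$ at $v_2^b$ or $v_4^b$, keeps the number of non-shortcut $\cFour$'s at most two. Your single cycle $z-z_3-z_4-\cdots-z_j-z$ can leave $z_3$, $z_j$, and $z$ all non-shortcut, so it need not satisfy the $j\le 2$ bound required by \Cref{lem:abc-merge}.
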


\begin{proof}
Let $e_{13}$ and $e_{14}$ be the edges implied by  \Cref{lem:manyC4_1-3-edge} and \Cref{lem:manyC4_1-4-edge}.
By \Cref{lem:nicepath_z_1z_2_no-outside}, we know that $z_1$ and $z_2$ are only adjacent to nodes of $P$ in $S$.
Furthermore, neither $v_2^3$ nor $v_4^3$ has an edge to a node not on $P$, as then we can make progress according to \Cref{lem:nicepath_branch-at-3-length-1}.
Hence, in the remaining case, assume $v_1^3$ or $v_3^3$ has an edge $e_{3b}$ to a node $z_b$.
Let $v_1^b - v_2^b - v_3^b - v_4^b - v_1^b$ be the $4$-cycle if $z_b$ is a $\cFour$, and otherwise those 4 vertices are simply $4$ distinct vertices of $z_b$.
Assume w.l.o.g.\ that $e_{3b}$ is incident to $v_1^b$.

By the $3$-Matching Lemma, either $v_2^b$ or $v_4^b$ has an edge $e$ to a node $z' \neq z_b$ of $S$.
If $z'$ is not on $P$, then we can extend the gluing path and obtain $P' = z_k - ... - z_4 - z_1 - z_3 - z_b - z$ using the edge $e_{14}$ and $e_{13}$.
If $z' \in \{z_1, z_2 \}$ then $P_{13}$ and $e_{13}$ together with $e_{3b}$ and $e$ form a double $(4, 0)$-merge and hence, by \Cref{lem:abc-merge}, we can make progress.
If $z'$ is a node of $P$ such that $i \geq 5$, then $P_{13}$ and $e_{13}$ together with $e_{3b}$, $e$ and $P_{3i}$ forms a double $(i+1, j)$-merge for some $j \leq 2$. Hence, by \Cref{lem:abc-merge}, we can make progress.
If $z'$ is $z_4$, consider the following two cases:
Either $e$ is incident to $v_2^4$ or $v_4^4$, then we can extend the gluing path to $P' = z_k - ... - z_4 - z_b - z_3 - z_2 - z_1$ using the edge $e$ and $e_{3b}$ incident to $z_b$.
If $e$ is incident to $v_3^4$ or $v_1^4$, then $P_{13}$ and $e_{13}$ together with $P_{34}$, $e_{3b}$ and $e$ forms a double $(5, 1)$-merge. Hence, by \Cref{lem:abc-merge}, we can make progress.

Therefore, $z' = z_3$, i.e., $e$ is incident to $z_3$.
Observe that if $e$ is incident to either $v_2^3$ or $v_4^3$, we can make progress according to \Cref{lem:nicepath_branch-at-3-length-1}.
Hence, $e$ is incident to either $v_1^3$ or $v_3^3$.
Moreover, $z_b$ must have an edge $f$ to some other node of $S$ which is not $z_3$, otherwise $z_b$ is not in $S$.
Now observe that we can interchange the roles of $e$ and $e_{3b}$.
Hence, we can apply the same case distinction for $f$ as for $e$ and observe that in any case we make progress.
This finishes the proof of the lemma.
\end{proof}

Hence, from now on we assume that in $G$ the vertices of nodes $z_1, z_2, z_3$ all cannot have an edge to a node $z$ that is not on $P$.

\begin{lemma}
	\label{lem:nicepath_remaining-case}
	Let $P = z_k - z_{k-1} - ... - z_3 - z_{2} - z_{1}$ be a gluing path of length $k \geq 4$ such that in $G$ the vertices of nodes $z_1, z_2, z_3$ are all only adjacent to vertices of nodes of $P$.
	Then we can make progress.
\end{lemma}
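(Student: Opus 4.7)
The plan is to exploit the extremely restrictive hypothesis---that no vertex of $V(z_1) \cup V(z_2) \cup V(z_3)$ is adjacent to any vertex outside $V(P)$---to show that the current configuration is already impossible once $G$ is structured, so that we can always make progress. First I would collect the inherited structural facts: by the chain of previous lemmas, $z_1, z_2, z_3$ are all $\cFour$'s; $z_1, z_2$ have no edges to $V(z_i)$ for $i \geq 5$; the edge $e_{13}$ from $v_2^1$ to $\{v_2^3, v_4^3\}$ exists (WLOG incident to $v_2^3$); and the edge $e_{14}$ from $\{v_1^1, v_3^1\}$ to $\{v_2^4, v_4^4\}$ exists.

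Next I would rule out the remaining ``far'' edges from $V(z_3)$ by a case analysis. Suppose some vertex $v \in V(z_3)$ has an edge $e$ to $V(z_i)$ with $i \geq 4$. Combining $e$ with the subpath $P_{3i}$ (and, when needed, with $e_{13}$ so as to shortcut $z_3$ via the adjacent pair $v_1^3, v_2^3$) yields a simple cycle in $\hG_H$. The path edges shortcut every interior $\cFour$ $z_j$ with $3 < j < i$ via the adjacent $v_1^j, v_2^j$ of $z_j$, so the construction is either a single $(a, b)$-merge with $a \geq 3$, $b \leq 2$, $a - b \geq 3$, or a double $(a, b)$-merge with $a \geq 5$, $b \leq 3$, $a - b \geq 4$ after attaching $e_{13}$ or $e_{14}$ as a chord. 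By \Cref{lem:abc-merge} we then make progress. An analogous argument rules out extra edges from $V(z_1) \cup V(z_2)$ to $V(z_i)$ for $i \geq 5$, as well as edges from $v_2^3, v_4^3$ to $V(z_i)$ for $i \geq 4$.

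In the remaining configuration, all edges from $V(z_1) \cup V(z_2) \cup V(z_3)$ to the rest of $G$ land in $V(z_4)$, and in particular $v_2^3, v_4^3$ only see $V(z_1) \cup V(z_2) \cup V(z_3)$. I would then consider the cut $\{v_1^3, v_3^3, w\}$ for a suitable $w \in V(z_4)$: if every edge from $V(z_1) \cup V(z_2)$ into $V(z_4)$ lands on a single vertex $w$, this is a 3-vertex cut separating the ten vertices $V(z_1) \cup V(z_2) \cup \{v_2^3, v_4^3\}$ from a huge complement, hence a large 3-vertex cut, contradicting $G$ being structured (the threshold being $8$ per side). If two or more distinct vertices of $V(z_4)$ are reached, I would upgrade the $(4, 2)$-merge $P_{14} \cup \{e_{14}\}$ to a double $(5, 1)$- or $(5, 2)$-merge by attaching a path $X_2$ formed from the extra edge, whose shortcut at $z_4$ now has adjacent attachment points, meeting the bound of \Cref{lem:abc-merge}. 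The main obstacle is this final case split: since the shortcut thresholds of \Cref{lem:abc-merge} ($a - b \geq 3$ for single, $a - b \geq 4$ for double) and the large 3-vertex-cut threshold ($\geq 8$ per side) are both essentially tight, each possible attachment of $e_{13}$, $e_{14}$, and any additional edge from $V(z_1) \cup V(z_2)$ to $V(z_4)$ must be checked individually to yield either an adequate merge or a structural contradiction.
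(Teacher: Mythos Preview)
Your proposal has a concrete gap in the final case split, and it misses the much simpler argument the paper uses.

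\textbf{The gap.} In the ``two or more distinct vertices of $V(z_4)$ are reached'' branch, you claim you can upgrade $P_{14} \cup \{e_{14}\}$ to a double $(5,1)$- or $(5,2)$-merge by attaching a chord to $z_4$. But any such chord stays inside $\{z_1,z_2,z_3,z_4\}$, so the number of components in $X$ is $a=4$, not $5$. \Cref{lem:abc-merge} requires $k\ge 5$ for double merges, so the lemma does not apply. Your single-vertex branch (the large $3$-vertex cut $\{v_1^3,v_3^3,w\}$) is plausible, but this branch does not go through as stated, and you yourself flag it as ``the main obstacle''.

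\textbf{The paper's argument.} The paper only rules out edges from $z_1,z_2,z_3$ to $z_i$ with $i\ge 5$ (not $i\ge 4$), using the merges you describe. For edges from $z_3$ to $z_i$ with $i\ge 5$ it distinguishes whether the endpoint in $z_3$ is $v_2^3/v_4^3$ (yielding a double $(i,j)$-merge via $P_{13}\cup\{e_{13}\}$ and $P_{4i}\cup\{e,e_{14}\}$) or $v_1^3/v_3^3$ (yielding a single $(i-1,j)$-merge via $P_{4i}\cup\{e,e_{13},e_{14}\}$). Once no such edge exists, the hypothesis of the lemma says $z_1,z_2,z_3$ have no neighbors outside $P$ in $\hG_H$, so all their neighbors in $\hG_H$ lie in $\{z_1,z_2,z_3,z_4\}$. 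Then removing $z_4$ from $\hG_H$ disconnects $\{z_1,z_2,z_3\}$ from $z_5$; that is, $z_4$ is a cut node of the segment $S$, contradicting that $S$ is $2$-node-connected and contains all nodes of $P$. No vertex-cut argument in $G$, no analysis of how many vertices of $V(z_4)$ are reached, and no handling of extra $z_3$--$z_4$ edges is needed.
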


\begin{proof}
Let $e_{13}$ and $e_{14}$ be the edges implied by  \Cref{lem:manyC4_1-3-edge} and \Cref{lem:manyC4_1-4-edge}.
If $z_1$ is incident to an edge $e$ incident to some node $z_i$ for $i \geq 5$, we can make progress as $P_{1i}$ together with $e$ forms a single $(i, j)$-merge for some $j \leq 2$.
If $z_2$ is incident to an edge $e$ incident to some node $z_i$ for $i \geq 5$, we can make progress as $P_{1i}$ together with $e$ and $e_{13}$ forms a double $(i, j)$-merge for some $j \leq 1$.
Hence, in both cases we can make progress according to \Cref{lem:abc-merge}.
Thus, $z_1$ and $z_2$ are only adjacent to $z_3$ and $z_4$ in $S$.

Next, assume $z_3$ is incident to an edge $e$ incident to some node $z_i$ for $i \geq 5$.
If $e$ is incident to either $v_2^3$ or $v_4^3$, then $P_{13}$ and $e_{13}$ together with $P_{4i}$, $e$ and $e_{14}$ form a double $(i, j)$-merge for some $j \leq 1$.
If $e$ is incident to either $v_1^3$ or $v_3^3$, then $P_{4i}, e, e_{13}, e_{14}$ forms a single $(i-1, j)$-merge for some $j \leq 1$.
Hence, in both cases we can make progress according to \Cref{lem:abc-merge}.

Now observe that we can make progress if one of $z_1, z_2, z_3$ is connected to some $z_i, i \geq 5$.
Hence, assume that this is not the case.
Since $z_1, z_2, z_3$ have no edge to a node of $S$ not on $P$, this implies that $z_4$ is a cut node.
But then $z_5$ and $z_3, z_2, z_1$ are not in the same segment $S$, a contradiction.
This finishes the proof.
\end{proof}

Finally, using the above lemmas we can easily prove \Cref{lem:many_C4_progress}, which in turn proves \Cref{lem:core:many-components}.

\begin{proof}[Proof of \Cref{lem:many_C4_progress}]
By \Cref{lem:nicepath_start3} we can compute in polynomial time a gluing path $P = z_k - z_{k-1} - ... - z_2 - z_1$ of length $k-1 \geq 4$, unless we can make progress.
\Cref{lem:nicepath_no-outside-from-1-2-3} implies that the nodes $z_1$, $z_2$, and $z_3$ all have no edge to a node of $S$ that is not on $P$, unless we can make progress.
Now \Cref{lem:nicepath_remaining-case} implies that we can make progress, which proves the lemma.
\end{proof}

\newpage

\section*{Appendix}

\appendix

\section{Reduction to Structured Graphs}\label{app:prep}

This section is dedicated to the proof of \Cref{lem:reduction-to-structured}, which we first restate.

\lemmaReduction*

We assume throughout this section that $\alpha \in [\nicefrac{6}{5},\nicefrac{5}{4}]$ and $\varepsilon \in (0, \nicefrac{1}{100}]$.
Let $\ALG$ be a polynomial-time algorithm that, given an $(\alpha,\varepsilon)$-structured graph $G' = (V',E')$, returns an $\alpha$-approximate 2EC spanning subgraph $\ALG(G') \subseteq E'$ of $G'$.
In the following, we define a reduction algorithm which recursively modifies and partitions the input graph $G$ until we obtain $(\alpha,\varepsilon)$-structured graphs, then solves these with $\ALG$, and finally puts the resulting solutions together to an overall solution for $G$.
In particular, our reduction checks for structures that are forbidden in $(\alpha,\varepsilon)$-structured graphs. These are graphs of constant size (w.r.t.\ $\varepsilon$),
parallel edges and self loops,
1-vertex cuts, small $\alpha$-contractible subgraphs (w.r.t.\ $\varepsilon$), irrelevant edges, non-isolating 2-vertex cuts, large 3-vertex cuts, large $\mathcal C_k$ cuts for $k \in \{4,\ldots,8\}$, and large $4$-vertex cuts (cf.\ \Cref{def:structured}). We first restate definitions of some of these structures from~\cite{GargGA23improved}, and then formally define $\mathcal C_k$ cuts and large $4$-vertex cuts.

\begin{definition}[$\alpha$-contractible subgraph~\cite{GargGA23improved}]
	Let $\alpha \geq 1$ be a fixed constant. A 2EC subgraph $C$ of a 2EC graph $G$ is $\alpha$-\emph{contractible} if every 2ECSS of $G$ contains at least $\frac{1}{\alpha}\abs{E(C)}$ edges with both endpoints in $V(C)$.
\end{definition}

\begin{definition}[irrelevant edge~\cite{GargGA23improved}]
	Given a graph $G$ and an edge $e = uv \in E(G)$, we say that $e$ is \emph{irrelevant} if $\{ u, v\}$ is a 2-vertex cut of $G$.
\end{definition}

\begin{definition}[non-isolating 2-vertex cut~\cite{GargGA23improved}]
	Given a graph $G$, and a 2-vertex cut $\{u, v\}$ of $G$, we say that $\{u, v\}$ is \emph{isolating} if $G \setminus \{u, v\}$ has exactly two connected components, one of which is of size 1.
	Otherwise, it is \emph{non-isolating}.
\end{definition}

\begin{definition}[large 3-vertex cuts~\cite{GHL24}]\label{def:large_three_cut}
	Given a graph $G$ and a 3-vertex cut $\{u, v, w\}$ of $G$, we say that $\{u, v, w\}$ is \emph{large} if there is a partition $(V_1, V_2)$ of $V \setminus \{u, v, w\}$ with $E[V_1,V_2] = \emptyset$
	such that $|V_1|,|V_2| \geq \ceil{\frac{2}{\alpha - 1}} - 1$.
	Otherwise, we call it \emph{small}.
\end{definition}

Let $\mathcal C_k$ be a cycle on $k$ vertices $\{v_1,\ldots,v_k\}$ in a graph $G$.
We say that $\{v_1,\ldots,v_k\}$ is a \emph{$\mathcal C_k$ cut} if $G \setminus \{v_1,\ldots,v_k\}$ has strictly more connected components than $G$.

\begin{definition}[large $\mathcal{C}_k$ cuts]\label{def:large_cycle_cut}
	Let $k \in \{4,5,6,7,8\}$.
	Let $G = (V,E)$ be a graph and $\{v_1,\ldots,v_k\}$ be a $\mathcal C_k$ cut of $G$. We call $\{v_1,\ldots,v_k\}$ \emph{large} if there is a partition $(V_1,V_2)$ of $V \setminus \{v_1,\ldots,v_k\}$ with $E[V_1,V_2] = \emptyset$ and $|V_1|,|V_2| \geq \ceil{\frac{k}{\alpha - 1}} - 1$.
	Otherwise, we call it \emph{small}.
\end{definition}

\begin{definition}[large $4$-vertex cuts]\label{def:large_k_vertex_cut}
	Let $G = (V,E)$ be a graph and $\{v_1,v_2,v_3,v_4\}$ be a $4$-vertex cut of $G$.
	We call $\{v_1,v_2,v_3,v_4\}$ \emph{large} if
	\begin{enumerate}
		\item there is a partition $(V_1,V_2)$ of $V \setminus \{v_1,v_2,v_3,v_4\}$ with $E[V_1,V_2] = \emptyset$ and $\ceil{\frac{6}{\alpha - 1}} - 1 \leq |V_1|, |V_2|$, or
		\item there is a partition $(V_1,V_2)$ of $V \setminus \{v_1,v_2,v_3,v_4\}$ with $E[V_1,V_2] = \emptyset$, $G_i \coloneq G[V_i \cup \{v_1,v_2,v_3,v_4\}]$ and $G'_i := G_i | \{v_1,v_2,v_3,v_4\}$ for $i \in \{1,2\}$, such that
			  for any 2ECSS $H_2$ for $G'_2$
		      there exists
		      a set of edges $Z \subseteq E$
		      such that (i) $|Z| \leq \alpha \cdot \opt(G'_1)$ and (ii) $H_2 \cup Z$ is a 2ECSS of $G$.
	\end{enumerate}
	Otherwise, we call it \emph{small}.
\end{definition}

For all structures except large 3-vertex cuts, large $\mathcal C_k$ cuts, and large $k$-vertex cuts, we can use the reduction presented in~\cite{GargGA23improved}. The main contribution of this section is to show how to reduce to graphs without large 3-vertex cuts, large $\mathcal C_k$ cuts and large $k$-vertex cuts, given that all other substructures have already been removed. For large 3-vertex cuts, we mainly follows the approach of \cite{GHL24} but parameterize the required size of the cut by $\alpha$ instead of assuming $\alpha \geq \nicefrac54$ as in \cite{GHL24}.

This section is structured as follows. In \cref{sec:reduce-defs} we first introduce definitions to classify solutions regarding 2-vertex cuts and 3-vertex cuts, and then give the formal description of the reduction algorithm, which is split into four algorithm environments, where \cref{alg:reduce} is the main algorithm, which calls \cref{alg:2-vertex-cuts,alg:3-vertex-cuts,alg:cycle-cuts,alg:k-cuts} as subroutines.
In \cref{sec:reduce-auxiliary}, we give auxiliary results, mainly regarding the removal of 2-vertex cuts, 3-vertex cuts, large $\mathcal C_k$ cuts, and large $4$-vertex cuts. Finally, in \cref{sec:reduce-proofs} we combine these auxiliary results together and give the main proof of \cref{lem:reduction-to-structured}.

\subsection{Definitions and the Algorithm}\label{sec:reduce-defs}

We will heavily use the following two facts on 2EC graphs with contracted components. We will not refer to them explicitly.

\begin{fact}\label{fact:contract-2ec}
	Let $G$ be a 2EC graph and $W \subseteq V(G)$. Then $G | W$ is 2EC.
\end{fact}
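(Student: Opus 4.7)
The plan is to exploit the bijection between the edges of $G$ and the edges of $G|W$. Write $\pi \colon V(G) \to V(G|W)$ for the natural contraction map, which sends every vertex of $W$ to the single contracted vertex $w$ and fixes all other vertices. The key observation is that $\pi$ extends to a map on edges: an edge $uv \in E(G)$ corresponds to the edge $\pi(u)\pi(v) \in E(G|W)$ (possibly a self-loop if $u,v \in W$). Consequently, any walk in $G$ from $u$ to $v$ projects via $\pi$ (edge by edge) to a walk in $G|W$ from $\pi(u)$ to $\pi(v)$ that uses precisely the corresponding edges. This transfer principle is the only nontrivial input needed.

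To prove that $G|W$ is 2EC, I would verify that for every edge $e \in E(G|W)$, the graph $G|W \setminus \{e\}$ is connected. If $e$ is a self-loop at $w$, then removing it does not affect connectivity, so it suffices to show $G|W$ itself is connected; this follows from applying the transfer principle to any $u$-$v$ path in $G$ (which exists since $G$ is connected). Otherwise, let $e' \in E(G)$ be the edge corresponding to $e$; since $G$ is 2EC, $e'$ is not a bridge, so $G \setminus \{e'\}$ is connected. For any two vertices $x,y \in V(G|W)$, choose preimages $x',y' \in V(G)$ under $\pi$, take a path from $x'$ to $y'$ in $G \setminus \{e'\}$, and project it via $\pi$: the resulting walk lies in $G|W \setminus \{e\}$ and connects $x$ to $y$. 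Hence $G|W \setminus \{e\}$ is connected for every edge $e$, which is exactly the statement that $G|W$ is 2EC.

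There is essentially no hard step; the only thing worth being careful about is the treatment of self-loops, which are allowed to arise from contraction (as noted in the preliminaries) but never form bridges, and so are harmless for 2-edge-connectivity. The whole argument fits in a few lines once the transfer principle is stated explicitly.
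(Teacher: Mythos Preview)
Your proof is correct. The paper states this as a \textbf{Fact} without proof, treating it as a standard observation; your argument via the projection map $\pi$ and the edge bijection is exactly the routine verification one would supply if pressed, and there is nothing to compare.
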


\begin{fact}\label{fact:decontract}
	Let $H$ be a 2EC subgraph of a 2EC graph $G$, and $S$ and $S'$ be 2EC spanning subgraphs of $H$ and $G|H$, respectively. Then $S \cup S'$ is a 2EC spanning subgraph of $G$, where we interpret $S'$ as edges of $G$ after decontracting $V(H)$.
\end{fact}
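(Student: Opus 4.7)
The plan is to establish the two defining properties of a 2EC spanning subgraph for $S \cup S'$: it spans $V(G)$, and it remains connected after the removal of any single edge. Spanning is immediate, since $S$ spans $V(H)$ and $S'$ (after decontracting the node representing $V(H)$ back to the vertex set $V(H)$) provides at least one incident edge for every vertex in $V(G) \setminus V(H)$. For the remaining condition, I will use the standard characterization that a connected graph is 2EC iff it has no bridges, and proceed by case analysis on whether the removed edge lies in $S$ or in $S'$.

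First I would verify connectivity of $S \cup S'$: inside $V(H)$ we have connectivity via $S$, and for any $v \in V(G) \setminus V(H)$ the subgraph $S'$ connects $v$ to the contracted node in $G | H$, which after decontraction corresponds to a path in $S'$ from $v$ to some vertex of $V(H)$. Combined with connectivity of $S$, this shows $S \cup S'$ is connected.

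Next I would argue that no edge of $S \cup S'$ is a bridge. If $e \in S$, both endpoints lie in $V(H)$, and since $S$ is 2EC there is an alternate $u,v$-path in $S \setminus \{e\} \subseteq (S \cup S') \setminus \{e\}$. If $e \in S'$, let $e = uv$ in $G$; after contraction $e$ becomes an edge of $S'$ in $G | H$, and since $S'$ is 2EC there is an alternate path $P'$ in $S' \setminus \{e\}$ between the endpoints of $e$ in $G | H$. Decontracting $P'$ gives a walk in $G$ from $u$ to $v$; whenever this walk enters and leaves the contracted node at vertices $x, y \in V(H)$, we can splice in an $x,y$-path from $S$ (which exists because $S$ is connected on $V(H)$). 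This produces a $u,v$-walk in $(S \cup S') \setminus \{e\}$, completing the argument.

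The only subtle point is the decontraction step, so I would be careful to state it cleanly: the bijection between edges of $G | H$ incident to the contracted node and edges of $G$ with exactly one endpoint in $V(H)$ tells us precisely which vertex of $V(H)$ each decontracted edge attaches to, which is what lets us splice through $S$. All the rest is routine, so I do not anticipate any significant obstacles.
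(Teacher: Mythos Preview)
Your argument is correct. The paper does not actually prove this statement: it is stated as a ``Fact'' and explicitly introduced with the remark that such facts ``will not [be] refer[red] to explicitly,'' so the authors treat it as standard background. Your proof supplies exactly the routine verification one would expect---spanning via the union, connectivity by linking through $S$, and bridge-freeness by a case split on whether the deleted edge lies in $S$ or in $S'$, with the decontraction-and-splice step handling the latter. There is nothing to compare against in the paper, and your write-up is sound as stated.
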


\subsubsection{Definitions for Non-Isolating 2-Vertex Cuts}

Let $\{u,v\}$ be a $2$-vertex cut in a graph $G$.
Note that that we can find a partition $(V_1,V_2)$ of $V \setminus \{u,v\}$ such that $V_1 \neq \emptyset \neq V_2$ and there are no edges between $V_1$ and $V_2$. Once we delete $\{u,v\}$ from $G$, the graph breaks into various components such that each is either part of $V_1$ or $V_2$. Similarly, $\OPT(G)$ breaks into various components once we remove $\{u,v\}$ such that each component is either part of $V_1$ or $V_2$. In the following definition, we consider different types of graphs that match $H \coloneq \OPT(G)[V_1 \cup \{u,v\}]$ (or analogously $\OPT(G)[V_2 \cup \{u,v\}]$).

\begin{definition}[2-vertex cut solution types]\label{def:2vc-types}
	Let $H$ be a graph and $u,v \in V(H)$. Let $H'$ be obtained by contracting each 2EC component $C$ of $H$ into a single super-node $C$, and let $C(x)$ be the corresponding super-node of the 2EC component of $H$ that contains $x \in V(H)$. We define the following types of $H'$ with respect to $\{u,v\}$:
	\begin{description}
		\item[Type $\typA$:] $H'$ is composed of a single super-node $H' = C(u) = C(v)$.
		\item[Type $\typB$:] $H'$ is a $C(u) - C(v)$ path of length at least $1$.
		\item[Type $\typC$:] $H'$ consists of two isolated super-nodes $C(u)$ and $C(v)$.
	\end{description}
	Let $\typesTwoVC \coloneq \{\typA,\typB,\typC \}$ be the set of 2-vertex cut optimal solution types.
\end{definition}

\begin{definition}[Order on types]\label{def:2vc-ties}
	$\typA \succ \typB \succ \typC$.
\end{definition}

\subsubsection{Definitions for Large 3-Vertex Cuts}

Similarly to 2-vertex cuts and \Cref{def:2vc-types}, we give in the following definition types that match the subgraph of an optimal solution that is induced by the partition corresponding to a large $3$-vertex cut $\{u,v,w\}$ in $G$. Naturally, here are more patterns that can occur compared to 2-vertex cuts.

\begin{definition}[3-vertex cut solution types]
	Let $H$ be a graph and $u,v,w \in V(H)$. Let $H'$ be obtained by contracting each 2EC component $C$ of $H$ into a single super-node $C$, and let $C(x)$ be the corresponding super-node of the 2EC component of $H$ that contains $x \in V(H)$. We define the following types of $H'$ with respect to $\{u,v,w\}$:
	\begin{description}
		\item[Type $\typA$:] $H'$ is composed of a single super-node $H' = C(u) = C(v) = C(w)$.
		\item[Type $\typBi$:] $H'$ is a $C_1 - C_k$ path of length at least $1$ such that $u,v,w \in C_1 \cup C_k$ and $\abs{\{u,v,w\} \cap C_i} \leq 2$ for $i \in \{1,k\}$. We assume w.l.o.g.\ that $C_1 = C(u) = C(v)$.
		\item[Type $\typBii$:] $H'$ is composed of two isolated super-nodes $C_1$ and $C_2$, where $\abs{\{u,v,w\} \cap C_i} \leq 2$ for $i \in \{1,2\}$. We assume w.l.o.g.\ that $C_1 = C(u) = C(v)$.
		\item[Type $\typCi$:] $H'$ is a tree and $u$, $v$, and $w$ are in distinct super-nodes. %
		\item[Type $\typCii$:] $H'$ is composed of a $C_1-C_k$ path of length at least $1$ and an isolated super-node $C_\ell$ such that $\abs{C_i \cap \{u,v,w\}} = 1$ for $i \in \{1,k,\ell\}$.
		\item[Type $\typCiii$:] $H'$ is composed of the three isolated super-nodes $C(u), C(v)$ and $C(w)$.
	\end{description}
	Let $\typesThreeVC \coloneq \{\typA,\typBi,\typBii,\typCi,\typCii,\typCiii\}$ be the set of 3-vertex cut optimal solution types.
\end{definition}

\begin{definition}[Order on types]\label{def:3vc-ties}
	$\typA \succ \typBi \succ \typBii \succ \typCi \succ \typCii \succ \typCiii$.
\end{definition}

The following proposition lists compatible solution types of both sides of a 3-vertex cut $\{u,v,w\}$. The proof is straightforward and, thus, omitted.
For example, if $\OPT_1$ (using the notation introduced below) is of type $\typCiii$, it is composed of three distinct 2EC components $C(u)$, $C(v)$, and $C(w)$. Therefore, $u$, $v$, and $w$ must be in the same 2EC connected component in $\OPT_2$ as otherwise $\OPT$ cannot be a 2ECSS, and thus, $\OPT_2$ must be of type $\typA$.

\begin{proposition}[feasible type combinations]\label{lemma:3vc-type-combinations}
	Let $\{u,v,w\}$ be a large 3-vertex cut in a graph $G$, and $(V_1,V_2)$ be a partition of $V \setminus \{u,v,w\}$ such that $\ceil{\frac{2}{\alpha - 1}} - 1 \leq \abs{V_1} \leq \abs{V_2}$ and there are no edges between $V_1$ and $V_2$ in $G$.
	Let $G_1 = G[V_1 \cup \{u,v,w\}]$, $G_2 = G[V_2 \cup \{u,v,w\}] \setminus \{ uv, vw, uw \}$ and, for a fixed optimal solution $\OPT(G)$, let $\OPT_i = \OPT(G) \cap E(G_i)$ for $i \in \{1,2\}$. Then, with respect to $\{u,v,w\}$, if
	\begin{enumerate}[label=(\alph*)]
		\item $\OPT_1$ is of type $\typA$, $\OPT_2$ must be of type $t \in \{\typA,\typBi,\typBii,\typCi,\typCii,\typCiii\} = \typesThreeVC$.
		\item $\OPT_1$ is of type $\typBi$, $\OPT_2$ must be of type $t \in \{\typA,\typBi,\typBii,\typCi,\typCii\}$.
		\item $\OPT_1$ is of type $\typBii$, $\OPT_2$ must be of type $t \in \{\typA,\typBi,\typBii\}$.
		\item $\OPT_1$ is of type $\typCi$, $\OPT_2$ must be of type $t \in \{\typA,\typBi,\typCi\}$.
		\item $\OPT_1$ is of type $\typCii$, $\OPT_2$ must be of type $t \in \{\typA,\typBi\}$.
		\item $\OPT_1$ is of type $\typCiii$, $\OPT_2$ must be of type $\typA$.
	\end{enumerate}
\end{proposition}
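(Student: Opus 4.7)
The plan is to prove the claim by case analysis on the type of $\OPT_1 \in \typesThreeVC$, deriving in each case the set of $\OPT_2$-types that admit some configuration making $\OPT = \OPT_1 \cup \OPT_2$ 2-edge-connected.

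The governing tool is a bridge-covering observation. Since $\{u,v,w\}$ is the only overlap between $V_1 \cup \{u,v,w\}$ and $V_2 \cup \{u,v,w\}$, every edge of $\OPT$ lies in either $\OPT_1$ or $\OPT_2$, and an edge $e \in \OPT_i$ that is a bridge of $\OPT_i$, separating its vertex set into sides $A$ and $B$, remains a bridge of $\OPT$ if and only if no pair $x \in A \cap \{u,v,w\}$, $y \in B \cap \{u,v,w\}$ is joined by a path in $\OPT_{3-i}$. Equivalently, $\OPT$ is 2-edge-connected precisely when it is connected and every bridge of each side is covered, in the above sense, by a suitable path through $\{u,v,w\}$ in the other side.

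Using this principle, I would dispatch the six cases in decreasing strength of $\OPT_1$. If $\OPT_1$ is of type $\typA$, then $u,v,w$ already share a 2EC component in $\OPT_1$, so $\OPT_1$ covers any pattern of $\OPT_2$-bridges, permitting every type in $\typesThreeVC$. For $\typBi$, each bridge of $\OPT_1$ separates the pair $\{u,v\}$ from the singleton $\{w\}$, and covering it fails only when $u,v,w$ lie in pairwise disconnected super-nodes of $\OPT_2$, i.e., when $\OPT_2$ is $\typCiii$. Case $\typBii$ additionally requires $\OPT_2$ to connect the two isolated super-nodes of $\OPT_1$, leaving $\{\typA,\typBi,\typBii\}$. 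Case $\typCi$ forces $\OPT_2$ to cover tree-bridges that separate each of $u,v,w$ from the other two, which rules out any $\OPT_2$-type whose super-node graph leaves some vertex of $\{u,v,w\}$ in an isolated component, so only $\{\typA,\typBi,\typCi\}$ survive. Case $\typCii$ combines the previous constraints: the bridges of the path component and the isolated super-node of $\OPT_1$ together force $\OPT_2 \in \{\typA,\typBi\}$. Finally, in $\typCiii$ the three vertices are mutually disconnected in $\OPT_1$, so $\OPT_2$ alone must pairwise 2-edge-connect them, which forces $\OPT_2$ to be of type $\typA$.

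The main obstacle is purely bookkeeping: within each type several sub-groupings of $\{u,v,w\}$ across super-nodes are possible, and for the listed types one must exhibit at least one sub-grouping for which the bridge-covering condition holds, while for the excluded types one must show that no sub-grouping works. Since each type is determined by a finite combinatorial pattern of the super-node graph together with the assignment of $u,v,w$ to super-nodes, this reduces to a constant number of verifications per case and requires no new ideas beyond the bridge-covering principle.
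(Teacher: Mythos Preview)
Your overall strategy—case analysis driven by the bridge-covering principle—is precisely what the paper intends (it omits the proof as ``straightforward'' and only illustrates the $\typCiii$ case). However, your execution of the middle cases has a real gap.

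For $\typCi$ you assert that the $\OPT_1$-tree always contains bridges separating \emph{each} of $u,v,w$ individually from the other two. That is not forced by the definition: type $\typCi$ only requires $C(u),C(v),C(w)$ to be distinct super-nodes of a tree, so the tree may be a path with, say, $C(w)$ internal (e.g.\ $C(u)-\cdots-C(w)-\cdots-C(v)$). In that configuration no bridge of $\OPT_1$ separates $w$ from $\{u,v\}$, and taking $\OPT_2$ of type $\typBii$ (with $w$ in the isolated super-node) or of type $\typCii$ (with $w$ isolated) already yields a 2EC $\OPT$: every tree bridge of $\OPT_1$ has $u$ on one side and $v$ on the other, so the $u$--$v$ connection inside $\OPT_2$ covers it. The same oversight recurs in your $\typBii$ and $\typCii$ arguments: you exclude $\typCi$ for the other side without considering that the vertex isolated on one side may sit \emph{internally} on the other side's $\typCi$ tree, in which case its two incident tree-paths supply the two edge-disjoint connections your argument demands. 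Concretely, $\OPT_1$ of type $\typBii$ with $\{u,v\}$ together and $\OPT_2$ a $\typCi$ path $C(u)-\cdots-C(w)-\cdots-C(v)$ is 2EC, yet your sketch rules it out.

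As written, then, your argument does not establish the exclusions claimed in (c), (d), (e). You should either refine the case analysis to handle internal placements of cut vertices within a $\typCi$ tree, or identify an additional hypothesis—optimality of $\OPT$, or some restriction you have not yet invoked—that rules such placements out.
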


Observe that according to the above definition we have $\abs{\OPT(G)} = \abs{\OPT_1} + \abs{\OPT_2}$.

\subsubsection{The Full Algorithm}

We are now ready to state our main algorithm $\Reduce$ (\Cref{alg:reduce}).
Let $\reduce(G) \coloneq |\Reduce(G)|$ denote the size of the solution computed by $\Reduce$ for a given 2ECSS instance $G$.
The reductions applied in \Cref{reduce:1vc,reduce:loop,reduce:contractible,reduce:irrelevant,reduce:2vc} (here and in the following, we do not distinguish between the line where the actual reduction is computed and the line in which the condition for the reduction is checked) are identical to the reductions given in \cite{GargGA23improved}.
Specifically, \Cref{alg:2-vertex-cuts}, which is called in \cref{reduce:2vc}, reduces  non-isolating 2-vertex cuts and was first given in \cite{GargGA23improved}. We emphasize that we present it slightly different, but without changing its actual behavior.
When \Cref{alg:reduce} reaches \Cref{reduce:3vc} it calls \Cref{alg:3-vertex-cuts} to handle a large 3-vertex cut $\{u,v,w\}$~\cite{GHL24}, when it reaches \Cref{reduce:c4cut} it calls \Cref{alg:cycle-cuts} to handle a large $\mathcal C_k$ cut $\{v_1,\ldots,v_k\} $ for $k \in \{4,5,6,7,8\}$ (that is, it checks for $k=4$, then for $k=5$, and so on), and when it reaches \Cref{reduce:k-cut} it calls \Cref{alg:k-cuts} to handle a large $4$-vertex cut $\{v_1,v_2,v_3,v_4\}$.

Whenever the reduction returns edges $K$ of contracted graphs $G|H$, i.e., $K=\Reduce(G|H)$, we interpret these as the corresponding edges after decontracting $H$, e.g., in \Cref{alg:reduce} in \Cref{reduce:contractible} or in \Cref{alg:3-vertex-cuts} in \Cref{reduce-3vc:both-large:recurse}.

Hence, from now on let us focus on \Cref{alg:3-vertex-cuts}.
In \Cref{app:prep:helper-3vc} we prove that the claimed sets of edges $F$ in \Cref{alg:3-vertex-cuts} exist and are of constant size.
The computations done in \Cref{reduce-3vc:compute-opt-types} (and in \cref{reduce-2vc:compute-opt-types} in \cref{alg:2-vertex-cuts}) can be done as explained in \cite{GargGA23improved}.
That is, we first compute via enumeration minimum-size subgraphs $\overline{\OPT}_1^t$ of $G_1$ of every type $t \in \typesThreeVC$ if such a solution exists, because $G_1$ is of constant size in this case.
If $\overline{\OPT}_1^t$ does not exist, $\OPT_1^t$ also cannot exist.
Otherwise, that is, $\overline{\OPT}_1^t$ exists, we check whether $G_2$ is a graph that admits a solution that is compatible with $\overline{\OPT}_1^t$ according to \Cref{lemma:3vc-type-combinations}, and, if that is the case, we set $\OPT_1^t \coloneq \overline{\OPT}_1^t$.

Finally, note that if the algorithm reaches \Cref{reduce:call-alg} in \Cref{alg:reduce}, then $G$ must be $(\alpha,\varepsilon)$-structured.
To show this, we will verify the following invariant via induction (\Cref{lem:reduction-to-structured-apx}) for all 2ECSS instances~$G$:
\begin{equation}\label{eq:reduce-invariant}
	\reduce(G) \leq \begin{cases}
		\opt(G)                     & \quad \text{if } \abs{V(G)} \leq \frac{8}{\varepsilon} \ , \text{ and} \\
		\alpha \cdot \opt(G) + f(G) & \quad \text{if } \abs{V(G)} > \frac{8}{\varepsilon} \ ,
	\end{cases}
\end{equation}
where $f(G) \coloneq 4\varepsilon \cdot \abs{V(G)} - 16$.

\begin{algorithm}[tb]
	\small
	\DontPrintSemicolon
	\caption{ $\Reduce(G)$: reduction to $(\alpha,\varepsilon)$-structured graphs}\label{alg:reduce}
	\KwIn{2EC graph $G=(V,E)$.}
	\If{$|V(G)| \leq \frac{8}{\varepsilon}$}{
		compute $\OPT(G)$ via enumeration and \Return $\OPT(G)$\label{reduce:bruteforce}
	}
	\If{$G$ has a 1-vertex cut $\{v\}$\label{reduce:1vc}}{
		let $(V_1,V_2)$, be a partition of $V \setminus \{v\}$ such that $V_1 \neq \emptyset \neq V_2$ and there are no edges between $V_1$ and~$V_2$. \;
		\Return $\Reduce(G[V_1 \cup \{v\}]) \cup \Reduce(G[V_2 \cup \{v\}])$. \;
	}
	\If{$G$ contains a self loop or a parallel edge $e$\label{reduce:loop}}{
		\Return $\Reduce(G \setminus \{e\})$. \;
	}
	\If{$G$ contains an $\alpha$-contractible subgraph $H$ with $|V(H)| \leq \frac{8}{\varepsilon}$\label{reduce:contractible}}{
		\Return $H \cup \Reduce(G \mid H)$.
	}
	\If{$G$ contains an irrelevant edge $e$\label{reduce:irrelevant}}{
		\Return $\Reduce(G \setminus \{e\})$. \;
	}
	\If{$G$ contains a non-isolating 2-vertex cut $\{u,v\}$\label{reduce:2vc}}{
		Execute \Cref{alg:2-vertex-cuts} for $\{u,v\}$. \;
	}
	\If{$G$ contains a large 3-vertex cut $\{u,v,w\}$\label{reduce:3vc}}{
		Execute \Cref{alg:3-vertex-cuts} for $\{u,v,w\}$.
	}
	\If{$G$ contains a large $\mathcal C_k$ cut $\{v_1,\ldots,v_k\}$ for $k \in \{4,5,6,7,8\}$ \label{reduce:c4cut}}{
		Execute \Cref{alg:cycle-cuts} for $\{v_1,\ldots,v_k\}$.
	}
	\If{$G$ contains a large $k$-vertex cut $\{v_1,\ldots,v_k\}$ for $k = 4$ \label{reduce:k-cut}}{
		Execute \Cref{alg:k-cuts} for $\{v_1,\ldots,v_k\}$ and $k=4$.
	}
	\Return $\ALG(G)$.\label{reduce:call-alg}
\end{algorithm}

\begin{algorithm}
	\small
	\DontPrintSemicolon
	\caption{Remove a non-isolating 2-vertex cut \cite{GargGA23improved}}\label{alg:2-vertex-cuts}
	\KwIn{A 2EC graph $G=(V,E)$ without cut vertices and without $\alpha$-contractible subgraphs with at most $\frac{8}{\varepsilon}$ vertices. A non-isolating 2-vertex cut $\{u,v\}$ in $G$.}
	Let $(V_1,V_2)$ be a partition of $V \setminus \{u,v\}$ such that $2 \leq \abs{V_1} \leq \abs{V_2}$ and there are no edges between $V_1$ and $V_2$ in $G$. \;
	Let $G_1 \coloneq G[V_1 \cup \{u,v\}]$ and $G_2 \coloneq G[V_2 \cup \{u,v\}] \setminus \{uv\}$.\;
	Let $G'_i \coloneq G_i | \{u,v\}$ for $i \in \{1,2\}$.\;
	\If{$\abs{V_1} > \frac{8}{\varepsilon}$}{
		\label{reduce-2vc:both-large}
		Let $H'_i \coloneq \Reduce(G'_i)$ for $i \in \{1,2\}$.\label{reduce-2vc:both-large:recurse} \;
		Let $F' \subseteq E$ be a minimum-size edge set such that $H' \coloneq H'_1 \cup H'_2 \cup F'$ is 2EC.
		\Return $H'$\;
	}
	\Else{\label{reduce-2vc:one-small}
	Let $\OPT_1^t$ be the minimum-size subgraphs of $G_1$ for every type $t \in \typesTwoVC$ (w.r.t.\ $\{u,v\}$ and $G_1$) that belong to some 2EC spanning subgraph of $G$, if exists. Let $\OPT_1^{\min}$ be the existing subgraph of minimum size among all types in $\typesTwoVC$, where ties are broken according to \Cref{def:2vc-ties} (that is, $t' \in \typesTwoVC$ is preferred over $t \in \typesTwoVC$ if $t' \succ t$), and $t^{\min} \in \typesTwoVC$ the corresponding type. Let $\opt_1^t \coloneq |\OPT_1^t|$ for every $t \in \typesTwoVC$ and $\opt_1^{\min} \coloneq |\OPT_1^{\min}|$. \label{reduce-2vc:compute-opt-types} \;
	\If{$t^{\min} = \typB$}{\label{reduce-2vc:b}
	Let $G_2^{\typB} \coloneq (V(G_2) \cup \{w\},E(G_2) \cup \{uw,vw\})$ and $H_2^{\typB} \coloneq \Reduce(G_2^{\typB}) \setminus \{uw,vw\}$, where $w$ is a dummy vertex and $uw, vw$ are dummy edges.  \;
	$H^{\typB} \coloneq \OPT_1^{\typB} \cup H_2^{\typB}$
	\Return $H^{\typB}$. \;
	}
	\ElseIf{$t^{\min} = \typC$}{\label{reduce-2vc:c}
	Let $G_2^{\typC} \coloneq (V(G_2),E(G_2) \cup \{uv\})$ and $H_2^{\typC} \coloneq \Reduce(G_2^{\typC}) \setminus \{uv\}$.  \;
	Let $F^{\typC} \subseteq E
	$ be a min-size edge set s.t.\ $H^{\typC} \coloneq \OPT_1^{\typC} \cup H_2^{\typC} \cup F^{\typC}$ is 2EC.
	\Return $H^{\typC}$. \;
	}
	}
\end{algorithm}

\begin{algorithm}
	\small
	\DontPrintSemicolon
	\KwIn{A 2EC graph $G=(V,E)$ without cut vertices, non-isolating 2-vertex cuts, or $\alpha$-contractible subgraphs with at most $\frac{8}{\varepsilon}$ vertices. A large 3-vertex cut $\{u,v,w\}$ in $G$.}
	Let $(V_1,V_2)$ be a partition of $V \setminus \{u,v,w\}$ such that $\ceil{\frac{2}{\alpha - 1}} - 1 \leq \abs{V_1} \leq \abs{V_2}$ and there are no edges between $V_1$ and $V_2$ in $G$. \;
	Let $G_1 \coloneq G[V_1 \cup \{u,v,w\}]$ and $G_2 \coloneq G[V_2 \cup \{u,v,w\}] \setminus \{uv, vw, uw \}$.\;
	Let $G'_i \coloneq G_i | \{u,v,w\}$ for $i \in \{1,2\}$.\;
	\If{$\abs{V_1} > \frac{8}{\varepsilon}$}{
		\label{reduce-3vc:both-large}
		Let $H'_i \coloneq \Reduce(G'_i)$ for $i \in \{1,2\}$. \label{reduce-3vc:both-large:recurse} \;
		Let $F' \subseteq E$ be a minimum-size edge set such that $H' \coloneq H'_1 \cup H'_2 \cup F'$ is 2EC.
		\Return $H'$\;
	}
	\Else{
	\label{reduce-3vc:one-small}
	Let $\OPT_1^t$ be the minimum-size subgraphs of $G_1$ for every type $t \in \typesThreeVC$ (w.r.t.\ $\{u,v,w\}$ and $G_1$) that belong to some 2EC spanning subgraph of $G$, if exists. Let $\OPT_1^{\min}$ be the existing subgraph of minimum size among all types in $\typesThreeVC$, where ties are broken according to \Cref{def:3vc-ties} (that is, $t' \in \typesThreeVC$ is preferred over $t \in \typesThreeVC$ if $t' \succ t$), and $t^{\min} \in \typesThreeVC$ the corresponding type. Let $\opt_1^t \coloneq |\OPT_1^t|$ for every $t \in \typesThreeVC$ and $\opt_1^{\min} \coloneq |\OPT_1^{\min}|$. \label{reduce-3vc:compute-opt-types} \;
	\If(\tcp*[h]{also if $t^{\min} = \typBi$}){$\OPT_1^{\typBi}$ exists and $\opt_1^{\typBi} \leq \opt_1^{\min} + 1$}{
	\label{reduce-3vc:b1}
	Let $G_2^{\typBi} \coloneq G_2'$ and $H_2^{\typBi} \coloneq \Reduce(G_2^{\typBi})$.  \;
	Let $F^{\typBi} \subseteq E
	$ be a min-size edge set s.t.\ $H^{\typBi} \coloneq \OPT_1^{\typBi} \cup H_2^{\typBi} \cup F^{\typBi}$ is 2EC.
	\Return $H^{\typBi}$. \;
	}
	\ElseIf{$t^{\min} = \typBii$}{
	\label{reduce-3vc:b2}
	Let $G_2^{\typBii} \coloneq G_2'$ and $H_2^{\typBii} \coloneq \Reduce(G_2^{\typBii})$. \;
	Let $F^{\typBii} \subseteq E
	$ be a min-size edge set s.t.\ $H^{\typBii} \coloneq \OPT_1^{\typBii} \cup H_2^{\typBii} \cup F^{\typBii}$ is 2EC.
	\Return $H^{\typBii}$. \;
	}
	\ElseIf{$t^{\min} = \typCi$}{
	\label{reduce-3vc:c1}
	Let $G_2^{\typCi} \coloneq G_2'$ and $H_2^{\typCi} \coloneq \Reduce(G_2^{\typCi})$. \;
	Let $F^{\typCi} \subseteq E
	$ be a min-size edge set s.t.\ $H^{\typCi} \coloneq \OPT_1^{\typCi} \cup H_2^{\typCi} \cup F^{\typCi}$ is 2EC.
	\Return $H^{\typCi}$. \;
	}
	\ElseIf(\tcp*[h]{from now on $y, z$ are dummy vertices with incident dummy edges}){$t^{\min} = \typCii$}{
	\label{reduce-3vc:c2-general}
	\If{Every $\OPT_1^{\typCii}$ solution contains a $C(u)-C(v)$ path.\label{reduce-3vc:c2-i}}{
	Let $G_2^{\typCii} \coloneq (V(G_2) \cup \{y\}, E(G_2) \cup \{uy,vy,vw\})$. \;
	Let $H_2^{\typCii} \coloneq \Reduce(G_2^{\typCii}) \setminus \{uy,vy,vw\}$. \;
	Let $F^{\typCii} \subseteq E
	$ be a min-size edge set s.t.\ $H^{\typCii} \coloneq \OPT_1^{\typCii} \cup H_2^{\typCii} \cup F^{\typCii}$ is 2EC.
	\Return $H^{\typCii}$. \;
	}
	\ElseIf{Every $\OPT_1^{\typCii}$ solution contains either a $C(u)-C(v)$ or a $C(v)-C(w)$ path.\label{reduce-3vc:c2-ii}}{
	Let $G_2^{\typCii} \coloneq (V(G_2) \cup \{y,z\}, E(G_2) \cup \{uy,vz,zy,wy\})$. \;
	Let $H_2^{\typCii} \coloneq \Reduce(G_2^{\typCii}) \setminus \{uy,vz,zy,wy\}$. \;
	Let $F^{\typCii} \subseteq E
	$ be a min-size edge set and $\OPT_1^{\typCii}$ be s.t.\ $H^{\typCii} \coloneq \OPT_1^{\typCii} \cup H_2^{\typCii} \cup F^{\typCii}$ is 2EC. \label{reduce-3vc:c2-ii-sol-constr}
	\Return $H^{\typCii}$. \;
	}
	\Else(\tcp*[h]{all paths are possible between $C(u)$, $C(v)$, and $C(w)$}){\label{reduce-3vc:c2-iii}
	Let $G_2^{\typCii} \coloneq (V(G_2) \cup \{y\}, E(G_2) \cup \{uy,vy,wy\})$. \;
	Let $H_2^{\typCii} \coloneq \Reduce(G_2^{\typCii}) \setminus \{uy,vy,wy\}$. \;
	Let $F^{\typCii} \subseteq E
	$ be a min-size edge set and $\OPT_1^{\typCii}$ be s.t.\ $H^{\typCii} \coloneq \OPT_1^{\typCii} \cup H_2^{\typCii} \cup F^{\typCii}$ is 2EC. \label{reduce-3vc:c2-iii-sol-constr}
	\Return $H^{\typCii}$. \;
	}
	}
	\ElseIf{$t^{\min} = \typCiii$}{
	\label{reduce-3vc:c3}
	Let $\OPT_1^{\typCiii}$ be a \typCiii solution such that there is an edge $e^\typCiii_{uv}$ in $G_1$ between $C(u)$ and $C(v)$ and there is an edge $e^\typCiii_{vw}$ in $G_1$ between $C(v)$ and $C(w)$. \label{reduce-3vc:c3-select-solution} \;
	Let $G_2^{\typCiii} \coloneq (V(G_2), E(G_2) \cup \{uv,uv,vw,vw\})$ and $H_2^{\typCiii} \coloneq \Reduce(G_2^{\typCiii}) \setminus \{uv,uv,vw,vw\}$. \;
	Let $F^{\typCiii} \subseteq E
	$ be a min-size edge set s.t.\ $H^{\typCiii} \coloneq \OPT_1^{\typCiii} \cup H_2^{\typCiii} \cup F^{\typCiii}$ is 2EC.  \label{reduce-3vc:c3-sol-constr}
	\Return $H^{\typCiii}$. \;
	}
	}
	\caption{Remove a large 3-vertex cut~\cite{GHL24}}\label{alg:3-vertex-cuts}
\end{algorithm}

\begin{algorithm}
	\small
	\DontPrintSemicolon
	\caption{Remove a large $\mathcal C_k$ cut}\label{alg:cycle-cuts}
	\KwIn{A 2EC graph $G=(V,E)$ with at least $\frac{8}{\eps}$ vertices.
		A large $\mathcal C_k$ cut $v_1 - v_2 - \ldots - v_k - v_1$ in $G$.}
	Let $(V_1,V_2)$ be the partition of $V \setminus \{v_1,\ldots,v_k\}$ such that $\ceil{\frac{k}{\alpha - 1}} - 1 \leq \abs{V_1} \leq \abs{V_2}$ and there are no edges between $V_1$ and $V_2$ in $G$. \;
	Let $G_i \coloneq G[V_i \cup \{v_1,\ldots,v_k\}]$ and $G'_i \coloneq G_i | \{v_1,\ldots,v_k\}$ for $i \in \{1,2\}$. \;
	Let $H_i \coloneq \Reduce(G'_i)$ for $i \in \{1,2\}$. \;
	\Return $H' \coloneq H_1 \cup H_2 \cup \{v_1v_2,v_2v_3,\ldots,v_{k-1}v_k,v_kv_1\}$. \;
\end{algorithm}

\begin{algorithm}
	\small
	\DontPrintSemicolon
	\caption{Remove a large $k$-vertex cut}\label{alg:k-cuts}
	\KwIn{A 2EC graph $G=(V,E)$ with at least $\frac{8}{\eps}$ vertices.
		A large $k$-vertex cut $\{v_1,\ldots,v_k\}$ in $G$.}
	\If{there is a partition $(V_1,V_2)$ of $V \setminus \{v_1,\ldots,v_k\}$ with $\abs{V_2} \geq \abs{V_1} \geq \ceil{\frac{2k-2}{\alpha - 1}} - 1$ and there are no edges between $V_1$ and $V_2$ in $G$. \label{alg-k-cuts:case1}}{
		Let $G_i \coloneq G[V_i \cup \{v_1,\ldots,v_k\}]$ and $G'_i \coloneq G_i | \{v_1,\ldots,v_k\}$ for $i \in \{1,2\}$. \;
		Let $H_i \coloneq \Reduce(G'_i)$ for $i \in \{1,2\}$. \;
		Let $F \subseteq E$ be a min-size set s.t.\
		$H' \coloneq H_1 \cup H_2 \cup F$ is a 2ECSS for $G$. \Return $H'$\;
	}
	\Else{\label{alg-k-cuts:case2}
		Let $(V_1,V_2)$ be a partition of $V \setminus \{v_1,\ldots,v_k\}$ with $\abs{V_1} \leq \ceil{\frac{2k-2}{\alpha - 1}} - 2$ and there are no edges between $V_1$ and $V_2$ in $G$ promised by Condition 2 in \Cref{def:large_k_vertex_cut}. \;
		Let $G_i \coloneq G[V_i \cup \{v_1,\ldots,v_k\}]$ and $G'_i \coloneq G_i | \{v_1,\ldots,v_k\}$ for $i \in \{1,2\}$. \;
		Let $H_2 \coloneq \Reduce(G'_2)$ \;
		Let $Z \subseteq E$ with $|Z| \leq \alpha \cdot \opt(G'_1)$ such that $H' \coloneq H_2 \cup Z$ is a 2ECSS for $G$.
		\Return $H'$. \;
	}
\end{algorithm}

\subsection{Auxiliary Lemmas}\label{sec:reduce-auxiliary}

The following two lemmas are standard results, and can be found in, e.g., \cite{HVV19,GargGA23improved}.

\begin{lemma}\label{lemma:cut-vertex-egal}
	Let $G$ be a 2EC graph, $v$ be a cut vertex of $G$, and $(V_1,V_2), V_1 \neq \emptyset \neq V_2$, be a partition of $V \setminus \{v\}$ such that there are no edges between $V_1$ and $V_2$. Then, $\opt(G) = \opt(G[V_1 \cup \{v\}]) + \opt(G[V_2 \cup \{v\}])$.
\end{lemma}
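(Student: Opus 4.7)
The plan is to prove both inequalities in $\opt(G) = \opt(G[V_1 \cup \{v\}]) + \opt(G[V_2 \cup \{v\}])$ separately, exploiting the fact that $v$ is the only vertex of $G$ with neighbors on both sides of the partition.

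For the upper bound $\opt(G) \leq \opt(G[V_1 \cup \{v\}]) + \opt(G[V_2 \cup \{v\}])$, I would take optimal 2ECSS's $H_1$ and $H_2$ of $G[V_1 \cup \{v\}]$ and $G[V_2 \cup \{v\}]$ respectively, and argue that $H \coloneq H_1 \cup H_2$ is a 2EC spanning subgraph of $G$. Spanning is immediate. For 2-edge-connectivity: $H$ is connected because $H_1$ and $H_2$ are each connected and share the vertex $v$; and removing any edge $e \in H_i$ keeps $H_i$ connected (since $H_i$ is 2EC), hence keeps $H$ connected via $v$. This yields $\opt(G) \leq |H_1| + |H_2|$.

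For the lower bound $\opt(G) \geq \opt(G[V_1 \cup \{v\}]) + \opt(G[V_2 \cup \{v\}])$, I would take an optimal 2ECSS $H$ of $G$ and define $H_i \coloneq H \cap E(G[V_i \cup \{v\}])$ for $i \in \{1,2\}$. Since $G$ has no edges between $V_1$ and $V_2$, the two sets partition $E(H)$, so $|H| = |H_1| + |H_2|$. It then suffices to show each $H_i$ is a 2ECSS of $G[V_i \cup \{v\}]$. Each $H_i$ spans $V_i \cup \{v\}$ because every vertex in $V_i$ has its incident $H$-edges inside $G[V_i \cup \{v\}]$, and $v$ inherits at least one $H$-edge from each side (since $V_i \neq \emptyset$). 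Connectedness of $H_i$ follows because any $H$-path from $u \in V_i$ to $v$ must, upon its first visit of $v$, have stayed entirely inside $V_i \cup \{v\}$.

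The main (but still standard) obstacle is verifying bridgelessness of $H_i$. I would argue by contradiction: if $e$ were a bridge in $H_i$, then $H_i \setminus \{e\}$ splits into two parts $A \ni v$ and $B \subseteq V_i$. Because $G$ has no edges between $V_i$ and $V_{3-i}$, every edge of $H$ incident to $B$ lies in $H_i$; thus $H \setminus \{e\}$ also separates $B$ from the rest of $H$, contradicting that $H$ is 2EC. Combining connectedness and bridgelessness gives that $H_i$ is a 2ECSS of $G[V_i \cup \{v\}]$, so $|H_i| \geq \opt(G[V_i \cup \{v\}])$, completing the lower bound.
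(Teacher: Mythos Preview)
Your proof is correct and complete; both directions are argued soundly, including the bridgelessness check for $H_i$. The paper does not actually prove this lemma---it simply states it as a standard result and cites \cite{HVV19,GargGA23improved}---so your argument is the full proof the paper omits.
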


\begin{lemma}\label{lemma:self-loops-parallel-edges}
	Let $G$ be a 2VC multigraph with $|V(G)| \geq 3$. Then, there exists a minimum size 2EC spanning subgraph of $G$ that is simple, that is, it contains no self loops or parallel edges.
\end{lemma}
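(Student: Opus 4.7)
The plan is to start from an arbitrary minimum-size 2EC spanning subgraph $H$ of $G$ and modify it via local exchanges until it becomes simple, all while preserving the number of edges and 2-edge-connectivity. I will handle self-loops and parallel edges separately; the former is trivial, and the latter requires an exchange argument that critically uses both the hypothesis $|V(G)|\geq 3$ and the 2-vertex-connectivity of $G$.

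For self-loops, observe that a self-loop contributes nothing to any edge cut. Hence if $H$ contains a self-loop $e$, the subgraph $H\setminus\{e\}$ is again 2EC and strictly smaller, contradicting minimality. So minimum 2EC spanning subgraphs already contain no self-loops.

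For parallel edges, suppose $H$ has two parallel edges $e_1,e_2$ between distinct vertices $u,v$. Since $H$ is minimum, $H\setminus\{e_2\}$ cannot be 2EC, so it contains a bridge. A short cut-counting argument (every cut of $H$ separating $u$ from $v$ must contain both $e_1$ and $e_2$) forces this bridge to be $e_1$, so $\{e_1,e_2\}$ is a 2-edge cut of $H$, inducing a partition $(U_1,U_2)$ with $u\in U_1$, $v\in U_2$. I will then use the 2-vertex-connectivity of $G$ together with $|V(G)|\geq 3$ to produce another cross-cut edge $f\in E(G)\setminus E(H)$ whose endpoints avoid being both in $\{u,v\}$: if every edge of $G$ across $(U_1,U_2)$ were incident to $u$ (resp.\ $v$), then removing $v$ (resp.\ $u$) from $G$ would disconnect $U_1\setminus\{u\}$ (resp.\ $U_2\setminus\{v\}$) from the rest, contradicting 2VC, where the assumption $|V(G)|\geq 3$ guarantees that at least one side has $\geq 2$ vertices. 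Pick such an $f=xy$ with, say, $x\in U_1\setminus\{u\}$ and $y\in U_2$; since $\partial_H U_1=\{e_1,e_2\}$, such an $f$ is automatically outside $E(H)$.

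Finally, I will define $H':=(H\setminus\{e_2\})\cup\{f\}$, which has the same size as $H$ and one fewer parallel edge between $u$ and $v$. The main technical obstacle is verifying that $H'$ is still 2EC. The plan is to first observe, from $\partial_H U_1=\{e_1,e_2\}$ together with 2EC of $H$, that both $H[U_1]$ and $H[U_2]$ are connected: any walk in $H$ from $u$ into $U_2$ and back must use both $e_1$ and $e_2$ and therefore returns to $u$, so every vertex of $U_1\setminus\{u\}$ is already reachable from $u$ inside $H[U_1]$. Using this, one checks $|\partial_{H'}S|\geq 2$ for every proper cut $S\subsetneq V(G)$; the only delicate case is when $e_2\in\partial_H S$ but $f\notin\partial_{H'}S$, where connectedness of $H[U_1]$ (or $H[U_2]$) supplies at least one additional edge in $\partial_{H[U_i]}S$ forcing $|\partial_H S|\geq 3$ and hence $|\partial_{H'}S|\geq 2$. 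Iterating this exchange eliminates all multi-edges and, combined with the self-loop step, yields the desired simple minimum 2EC spanning subgraph.
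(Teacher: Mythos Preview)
Your proof is correct, and the paper does not actually give its own argument for this lemma---it simply cites it as a standard result from prior work. Your exchange argument (identify $\{e_1,e_2\}$ as a 2-edge cut, use 2VC to find a non-$uv$ cross edge $f$, swap $e_2$ for $f$, and verify 2-edge-connectivity via cut counting) is the folklore proof. One small slip: in the sentence ``if every edge of $G$ across $(U_1,U_2)$ were incident to $u$ \ldots\ then removing $v$ \ldots\ would disconnect $U_1\setminus\{u\}$'', the vertex to remove should be $u$, not $v$; as written the claimed disconnection does not follow, but this is just a swapped label and the intended argument goes through.
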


Next, we replicate a lemma of \cite{GargGA23improved} that
shows that irrelevant edges are irrelevant for obtaining an optimal 2ECSS.

\begin{lemma}[Lemma 2.1 in \cite{GargGA23improved}]
	\label{lemma:irrelevant-edge}
	Let $e = uv$ be an irrelevant edge of a 2VC simple graph $G = (V, E)$. Then there exists a minimum-size 2EC spanning subgraph of $G$ not containing $e$.
\end{lemma}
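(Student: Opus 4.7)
The plan is to prove the lemma in two steps: first, establish that $G\setminus e$ is itself 2-edge-connected; then, given any minimum 2ECSS $H$ of $G$ containing $e$, construct a same-size 2ECSS $H'$ of $G$ that avoids $e$ via a single-edge swap.

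For the structural step, suppose toward a contradiction that $G\setminus e$ has a bridge $f$. Then $\{e,f\}$ is a 2-edge cut of $G$, inducing a bipartition $(A,B)$ with $u\in A$, $v\in B$, and $f=xy$ with $x\in A$, $y\in B$. I case-split on the endpoints of $f$: (i) If $x\ne u$ and $y\ne v$, then $f$ is the unique edge of $G\setminus\{u,v\}$ between $A\setminus\{u\}$ and $B\setminus\{v\}$; since $\{u,v\}$ is a 2-vertex cut, $G\setminus\{u,v\}$ has at least two connected components, and any component not containing both endpoints of $f$ lies entirely in $A\setminus\{u\}$ or $B\setminus\{v\}$ and connects to the rest of $G$ only through either $u$ or $v$ (but not both) --- making that vertex a cut vertex of $G$, contradicting 2VC. (ii) If $x=u$ and $y\ne v$ (symmetrically for $y=v, x\ne u$), then both $e,f$ are incident to $u$, which forces $A=\{u\}$ (else $u$ is already a cut vertex), giving $\deg_G(u)=2$ with neighbors $v$ and $y$; but the 2-vertex-cut hypothesis yields a connected component $C$ of $G\setminus\{u,v\}$ not containing $y$, and in $G\setminus v$ the vertices of $C$ have all outgoing edges removed ($u$'s only non-$v$ neighbor is $y\notin C$), so $C$ is disconnected from $y$ in $G\setminus v$, making $v$ a cut vertex of $G$ --- again contradicting 2VC. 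Hence $G\setminus e$ is 2EC.

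For the swap step, let $H$ be a minimum 2ECSS of $G$ with $e\in H$. Since $H$ is minimum, $H\setminus e$ is not 2EC and hence has at least one bridge. Every bridge $b$ of $H\setminus e$ must lie on the unique $u$-$v$ path $\pi$ in the block-bridge tree of $H\setminus e$: a bridge off $\pi$ would split $H\setminus e$ into two pieces both containing $\{u,v\}$ or both missing one of them, and restoring $e=uv$ would fail to reconnect them, making $b$ a bridge of $H$ --- contradicting 2EC of $H$. Consequently, the 2-edge cuts $\{e,b\}$ of $H$ correspond to linearly nested bipartitions $(A_b,B_b)$ of $V$ with $u\in A_b$, $v\in B_b$. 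Let $(A^*,B^*)$ denote the innermost such bipartition, where $A^*$ is the $u$-side of the bridge of $H\setminus e$ nearest to $u$ on $\pi$ and $B^*$ is defined symmetrically. Applying 2-edge-connectivity of $G\setminus e$ (Step 1) to the cut $(A^*,V\setminus A^*)$: the cut has at least two crossing edges in $G\setminus e$, one of which is the $H$-bridge nearest to $u$, so some $e' \in E(G)\setminus(H\cup\{e\})$ crosses it, and a parallel argument for $(V\setminus B^*,B^*)$ combined with 2-vertex-connectivity of $G$ (two internally vertex-disjoint $u$-$v$ paths via Menger) lets us choose $e'=xy$ with $x\in A^*$ and $y\in B^*$. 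This $e'$ crosses every cut in the nested chain, so $H':=(H\setminus e)\cup\{e'\}$ is 2EC, has the same size as $H$, and omits $e$.

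The main obstacle is ensuring that a \emph{single} edge $e'$ with endpoints in both innermost sets $A^*$ and $B^*$ exists, rather than just one edge crossing the outermost cut. The argument above resolves this by combining 2-edge-connectivity of $G\setminus e$ with Menger's theorem on 2VC of $G$, tracing the second of two internally vertex-disjoint $u$-$v$ paths through the block-bridge tree. In the corner case where no single suitable $e'$ can be found, one performs a two-edge swap while simultaneously dropping an original bridge of $H$ that becomes redundant, which is always possible because every bridge of $H\setminus e$ admits at least one additional $G$-edge across its cut (again by 2EC of $G\setminus e$).
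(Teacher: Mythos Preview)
Your Step~1 is correct, but Step~2 has a real gap at exactly the point you flag as ``the main obstacle.'' From 2-edge-connectivity of $G\setminus e$ you only get, for each bridge $b_i$ of $H\setminus e$, \emph{some} non-$H$ edge across the cut $(A_i,B_i)$; you do not get a single edge crossing all of them simultaneously. Your appeal to Menger does not close this: two internally vertex-disjoint $u$--$v$ paths in $G$ consist of $e$ itself and one other path $P$, and $P$ may well pass through the intermediate blocks of $H\setminus e$ rather than jump directly from $A^*$ to $B^*$. Your fallback is also not sound as written: if you add an edge $e_1$ across the first bridge cut and an edge $e_2$ across the last one, and both land in the \emph{same} middle block, then removing $e$ together with either $b_1$ or $b_k$ turns $e_1$ (resp.\ $e_2$) into a bridge of the new graph, so the two-edge swap fails to produce a 2EC subgraph.

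The paper's argument is shorter and uses the 2-vertex cut directly rather than going through $G\setminus e$ and the block tree. Partition $V\setminus\{u,v\}$ into $(V_1,V_2)$ with no edges between them, set $\OPT'_i=(\OPT\setminus e)[V_i\cup\{u,v\}]$, and observe that at least one side, say $\OPT'_2$, is connected (else $e$ is a bridge of $\OPT$) while $\OPT'_1$ must then split into a $u$-component and a $v$-component (else $\OPT\setminus e$ is already 2EC, contradicting minimality). Because $G$ has no cut vertex, there is an edge $f\neq e$ in $G[V_1\cup\{u,v\}]$ between these two components, and $(\OPT\setminus e)\cup\{f\}$ is 2EC since each of $\OPT'_2$ and $\OPT'_1\cup\{f\}$ independently connects $u$ to $v$. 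It is worth noting that every bridge of $\OPT\setminus e$ must lie in $\OPT'_2$ (otherwise $\OPT'_2$ would connect $u$ and $v$ across that bridge), so $\OPT'_1(u)\subseteq A^*$ and $\OPT'_1(v)\subseteq B^*$: the paper's $f$ \emph{is} an edge of the kind you want, and your single-swap claim is in fact true --- but establishing it requires the partition argument, not Menger.
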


\begin{proof}
	Assume by contradiction that all optimal 2EC spanning subgraphs of $G$ contain $e$, and let $\OPT$ be one such solution. Define $\OPT' \coloneq \OPT \setminus \{e\}$.
	Clearly, $\OPT'$ cannot be 2EC as it would contradict the optimality of $\OPT$. Let $(V_1,V_2)$ be any partition of $V \setminus \{u,v\}$, $V_1 \neq \emptyset \neq V_2$, such that there are no edges between $V_1$ and $V_2$, which must exist since $\{u, v\}$ is a 2-vertex cut.
	Notice that at least one of $\OPT'_1 \coloneq \OPT'[V_1 \cup \{u,v\}]$ and $\OPT'_2 \coloneq \OPT'[V_2 \cup \{u,v\}]$, say $\OPT'_2$, needs to be connected, as otherwise $\OPT$ would not be 2EC.
	We also have that $\OPT'_1$ is disconnected, as otherwise $\OPT'$ would be 2EC.
	More precisely, $\OPT'_1$ consists of exactly two connected components $\OPT'_1(u)$ and $\OPT'_1(v)$ containing $u$ and $v$, respectively. Assume w.l.o.g.\ that $|V(\OPT'_1(u))| \geq 2$.
	Observe that there must exist an edge $f$ between $V(\OPT'_1(u))$ and $V(\OPT'_1(v))$. Otherwise, $u$ would be a 1-vertex cut separating $V(\OPT'_1(u)) \setminus \{u\}$ from $V \setminus (V(\OPT'_1(u)) \cup \{u\})$.
	Thus, $\OPT'' \coloneq \OPT' \cup \{f\}$ is an optimal 2EC spanning subgraph of $G$ not containing $e$, a contradiction.
\end{proof}

The next lemma shows that for both, 3-vertex cuts and 2-vertex cuts, type $\typA$ solutions are more expensive than the cheapest solution.

\begin{lemma}\label{lemma:type-A-high-cost}
	If \Cref{alg:2-vertex-cuts} (\Cref{alg:3-vertex-cuts}) reaches \Cref{reduce-2vc:one-small} (\Cref{reduce-3vc:one-small}) and $\OPT_1^{\typA}$ exists, then $\opt_1^{\typA} \geq \alpha \cdot \opt_1^{\min} + 1$.
\end{lemma}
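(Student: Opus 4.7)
The plan is proof by contradiction via the $\alpha$-contractibility guarantee of the preprocessing.  Assume for contradiction that $\OPT_1^{\typA}$ exists but $\opt_1^{\typA} < \alpha \cdot \opt_1^{\min} + 1$, and set $H \coloneq \OPT_1^{\typA}$, a 2EC subgraph of $G$ with $V(H) = V_1 \cup \{u,v\}$ (or $V_1 \cup \{u,v,w\}$ in the 3-vertex-cut case) and $|E(H)| = \opt_1^{\typA}$.  Since the algorithm has reached \cref{reduce-2vc:one-small} (resp.\ \cref{reduce-3vc:one-small}), the bound $|V_1| \leq \nicefrac{8}{\eps}$ holds, so $|V(H)| \leq \nicefrac{8}{\eps}$ (absorbing a small additive constant into the threshold).

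The central step is to prove that for every 2ECSS $S$ of $G$, $|S \cap E[V(H)]| \geq \opt_1^{\min}$.  Since $\{u,v\}$ (resp.\ $\{u,v,w\}$) is a vertex cut of $G$ with no edges between $V_1$ and $V_2$, we have $E[V(H)] = E(G_1)$; moreover each vertex of $V_1$ has all its $G$-neighbors inside $V(H)$, so $S \cap E(G_1)$ is a 2EC spanning subgraph of $G_1$ that must realize one of the types from \cref{def:2vc-types} (or, in the 3-vertex-cut case, a type compatible with the opposite side per \cref{lemma:3vc-type-combinations}).  By definition of $\opt_1^{\min}$ as the minimum size over all existing types, $|S \cap E(G_1)| \geq \opt_1^{\min}$, giving the claim.

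Combining with the contradiction hypothesis, every 2ECSS $S$ of $G$ satisfies $|S \cap E[V(H)]| \geq \opt_1^{\min} \geq \opt_1^{\typA}/\alpha = |E(H)|/\alpha$.  Hence $H$ is an $\alpha$-contractible subgraph of $G$ with at most $\nicefrac{8}{\eps}$ vertices---contradicting \cref{reduce:contractible}, which would have been invoked earlier by $\Reduce$ to remove any such subgraph before entering the current line.

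The main obstacle is the unit-gap integrality issue: the above argument naturally yields $\opt_1^{\typA} \leq \alpha \cdot \opt_1^{\min}$ as the contradictable statement, whereas the lemma hypothesis provides only $\opt_1^{\typA} < \alpha \cdot \opt_1^{\min} + 1$; since $\opt_1^{\typA}$ is an integer and $\alpha \cdot \opt_1^{\min}$ typically is not, there is a genuine unit-sized gap to bridge.  The plan to close this is to sharpen the bound $|S \cap E[V(H)]| \geq \opt_1^{\min}$ in the tight case using the type-compatibility constraints between $S \cap E(G_1)$ and $S \cap E(G_2)$: whenever $S \cap E(G_1)$ would achieve size exactly $\opt_1^{\min}$, its type (together with the non-isolating/cut properties guaranteed by the preprocessing) forces at least one additional edge inside $V(H)$ relative to the generic lower bound, upgrading the inequality to $|S \cap E[V(H)]| \geq \opt_1^{\min} + 1$ and giving the needed unit separation via the same contractibility contradiction.
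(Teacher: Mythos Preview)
Your first three paragraphs are exactly the paper's argument: assume $\opt_1^{\typA} \leq \alpha \cdot \opt_1^{\min}$, observe that every 2ECSS of $G$ uses at least $\opt_1^{\min} \geq \opt_1^{\typA}/\alpha$ edges inside $V_1 \cup \{u,v\}$ (resp.\ $V_1 \cup \{u,v,w\}$), conclude that $\OPT_1^{\typA}$ is an $\alpha$-contractible 2EC subgraph on at most $\nicefrac{8}{\eps}$ vertices, and contradict the fact that \Cref{reduce:contractible} would have fired earlier. That is precisely the paper's (two-line) proof.

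You are right to flag the unit gap, and in fact you are being more careful than the paper. The contradiction only rules out $\opt_1^{\typA} \leq \alpha \cdot \opt_1^{\min}$, yielding $\opt_1^{\typA} > \alpha \cdot \opt_1^{\min}$; the paper's proof simply ends with ``Thus, it must be that $\opt_1^{\typA} \geq \alpha \cdot \opt_1^{\min} + 1$'', which does not follow when $\alpha \cdot \opt_1^{\min}$ is non-integral. The ``$+1$'' in the stated conclusion appears to be a minor slip in the paper rather than something their proof establishes.

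Your proposed repair, however, cannot work. By the definition used in \Cref{reduce-2vc:compute-opt-types} (resp.\ \Cref{reduce-3vc:compute-opt-types}), $\OPT_1^{\min}$ is a type-$t^{\min}$ subgraph of $G_1$ that \emph{extends} to some 2ECSS of $G$; so there exists a 2ECSS $S$ with $|S \cap E(G_1)| = \opt_1^{\min}$ exactly. Type-compatibility with the $G_2$ side is already baked into the definition of $\opt_1^{\min}$, so no ``one additional edge inside $V(H)$'' can be forced, and the bound cannot be sharpened to $\opt_1^{\min}+1$ along the lines you sketch.
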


\begin{proof}
	If $\OPT_1^{\typA}$ exists and $\opt_1^{\typA} \leq \alpha \cdot \opt_1^{\min}$ then $\OPT_1^{\typA}$ is 2EC and $\alpha$-contractible.
	However, since we assume that we reached \Cref{reduce-2vc:one-small} (\Cref{reduce-3vc:one-small}), we have that $|V_1| \leq \frac{8}{\varepsilon}$. This is a contradiction, because we would have contracted $G_1$ in \Cref{reduce:contractible} of \Cref{alg:reduce} earlier. Thus, it must be that $\opt_1^{\typA} \geq \alpha \cdot \opt_1^{\min} + 1$. %
\end{proof}

\subsubsection{Auxiliary Lemmas for \Cref{alg:2-vertex-cuts}}\label{app:prep:helper-2vc}

\begin{proposition}\label{prop:properties-for-2vc-alg}
	If \Cref{alg:reduce} executes \Cref{alg:2-vertex-cuts}, then $G$ contains no self loops, parallel edges, irrelevant edges, or $\alpha$-contractible subgraphs with at most $\frac{8}{\varepsilon}$ vertices.
\end{proposition}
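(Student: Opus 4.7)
The proof is essentially a direct reading of the control flow of \Cref{alg:reduce}. The plan is to observe that the body of \Cref{alg:reduce} is a linear sequence of conditional blocks, and that each of the conditional blocks preceding \Cref{reduce:2vc} either returns directly or returns after a recursive call. Therefore, the only way execution can reach \Cref{reduce:2vc} on input $G$ is if every earlier condition evaluates to false on that very $G$.

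Given this, I would verify the claim by walking through the relevant conditions in order. First, \Cref{reduce:loop} not firing yields that $G$ contains no self-loop and no parallel edge. Second, \Cref{reduce:contractible} not firing yields that $G$ contains no $\alpha$-contractible subgraph on at most $\nicefrac{8}{\varepsilon}$ vertices. Third, \Cref{reduce:irrelevant} not firing yields that $G$ has no irrelevant edge. These three facts are precisely the conclusion of the proposition. (As a side remark one also obtains, from \Cref{reduce:1vc} not firing together with $|V(G)| > \nicefrac{8}{\varepsilon} \geq 3$, that $G$ is $2$-vertex-connected, which matches the input requirement of \Cref{alg:2-vertex-cuts} as stated in its header.)

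I do not expect any genuine obstacle. The only point that deserves a sentence of care is that ``the current $G$'' refers to the argument of the particular invocation of $\Reduce$ that is about to call \Cref{alg:2-vertex-cuts}, not the original input graph; this is automatic because the earlier conditional blocks never fall through to \Cref{reduce:2vc} without returning. Consequently, the entire proof reduces to citing the definition of each reduction together with the fact that the preceding \textbf{if} blocks are mutually exclusive with reaching \Cref{reduce:2vc}.
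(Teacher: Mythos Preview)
Your proposal is correct and matches the paper's treatment: the paper states this proposition without proof, precisely because it is immediate from the sequential control flow of \Cref{alg:reduce} as you describe. Your walk-through of the preceding conditional blocks (\Cref{reduce:loop}, \Cref{reduce:contractible}, \Cref{reduce:irrelevant}) and the observation that each returns rather than falling through is exactly the intended justification.
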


\begin{lemma}[Lemma A.4 in \cite{GargGA23improved}]
	\label{lemma:2vc-type-combinations}
	Let $G$ be the graph when \Cref{alg:reduce} executes \Cref{alg:2-vertex-cuts}, $\{u,v\}$ be a non-isolating 2-vertex cut of $G$, $(V_1,V_2)$ be a partition of $V \setminus \{u,v\}$ such that $V_1 \neq \emptyset \neq V_2$ and there are no edges between $V_1$ and $V_2$, and $H$ be a 2EC spanning subgraph of $G$. Let $G_i = G[V_i \cup \{u,v\}]$ and $H_i = E(G_i) \cap H$ for $i \in \{1,2\}$. The following statements are true.
	\begin{enumerate}
		\item Both $H_1$ and $H_2$ are of a type in $\typesTwoVC$ with respect to $\{u,v\}$. Furthermore, if one of the two is of type $\typC$, then the other must be of type $\typA$.
		\item If $H_i$, for an $i \in \{1,2\}$, is of type $\typC$, then there exists one edge $f \in E(G_i)$ such that $H_i' \coloneq H_i \cup \{f\}$ is of type $\typB$. As a consequence, there exists a 2EC spanning subgraph $H'$ where $H_i' \coloneq H' \cap E(G_i)$ is of type $\typA$ or $\typB$.
	\end{enumerate}
\end{lemma}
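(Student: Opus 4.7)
The plan is to establish both parts via structural analysis of the block-bridge decomposition of $H_i$ combined with the global $2$-edge-connectivity constraint across the cut $\{u,v\}$.

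For part~1, the key ingredient is that $H_i | \{u,v\}$ is $2$-edge-connected. This follows by first contracting $V_{3-i}$ in $H$ (which preserves $2$EC) to obtain a $2$EC multigraph on $V_i \cup \{u,v,*\}$, then further contracting $*$ together with $u$ and $v$; after discarding self-loops what remains is exactly $H_i$ with $u$ and $v$ identified. Combined with the observation that every vertex of $V_i$ is connected in $H_i$ to at least one of $u,v$ (otherwise $H$ itself would be disconnected), $H_i$ has at most two connected components. A case analysis on the positions of $u,v$ in the block-bridge structure of $H_i$ finishes the first claim: when $C(u) = C(v)$, the ``$2$EC after merging $u,v$'' condition forces the block-bridge tree to collapse to a single super-node ($\typA$); when $C(u) \neq C(v)$ lie in the same connected component of $H_i$, the unique $C(u)$-$C(v)$ path in the tree must span the whole tree ($\typB$); and when $u,v$ are in different components of $H_i$, each component's bridge-tree must itself be a single node ($\typC$). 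The ``furthermore'' clause follows because type $\typC$ on side $i$ disconnects $u$ from $v$ in $H_i$, so every $u$-$v$ path in $H$ must use $H_{3-i}$; any bridge of $H_{3-i}$ lying on such a path would also be a bridge of $H$, contradicting $H$ being $2$EC, so $u$ and $v$ must share a $2$EC block in $H_{3-i}$, and the case analysis then forces $H_{3-i}$ to be of type $\typA$.

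For part~2, I would exhibit an edge $f \in E(G_i) \setminus H_i$ between $V(C(u))$ and $V(C(v))$. If $uv \in E(G)$ then $f = uv$ suffices, so assume otherwise and argue by contradiction: suppose no edge in $G_i$ goes between $V(C(u))$ and $V(C(v))$. If $V(C(u)) \setminus \{u\} \neq \emptyset$, then the vertices in $V(C(u)) \setminus \{u\}$ have $G$-neighbors only inside $V(C(u))$ (no $G_i$-edges to $V(C(v))$ by assumption, and no edges to $V_{3-i}$ by the partition), so $u$ is a $1$-vertex cut, contradicting the guarantee of \Cref{reduce:1vc} in \Cref{alg:reduce}; the symmetric argument handles $V(C(v)) \setminus \{v\} \neq \emptyset$. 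In the remaining boundary case where, say, $V(C(u)) = \{u\}$ and $V(C(v)) = V_i \cup \{v\}$, if $u$ has no $G_i$-edges to $V(C(v))$ then all neighbors of $u$ in $G$ lie in $V_{3-i}$, so $v$ alone separates $V_i$ from $V_{3-i} \cup \{u\}$, again a $1$-vertex cut. With $f$ in hand, $H_i \cup \{f\}$ bridges the two $2$EC components $C(u), C(v)$ and is therefore of type $\typB$; taking $H' \coloneq H \cup \{f\}$ preserves $2$EC, upgrades side $i$ to type $\typB$, and leaves side $3-i$ as type $\typA$ by the ``furthermore'' clause.

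The main obstacle I anticipate is the boundary case in part~2 where one of $V(C(u)), V(C(v))$ collapses to a singleton: the cut-vertex witness must then shift from $u$ to $v$ (or vice versa), so the contradiction has to be derived via the opposite endpoint of the $2$-vertex cut. A minor technical concern is correctly bookkeeping self-loops and parallel edges that arise in the contraction arguments for part~1, but this is routine since neither affects $2$-edge-connectivity.
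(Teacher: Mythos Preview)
Your proposal is correct and follows essentially the same strategy as the paper. For part~1, the paper argues directly that any edge of the block-tree of $H_i$ not lying on the $C(u)$--$C(v)$ path would fail to lie on a cycle of $H$; your contraction framing (that $H_i \mid \{u,v\}$ is $2$EC) is an equivalent way to encode the same constraint. For part~2, the paper's argument is identical to yours, except it first observes that $uv \notin E(G)$ (since irrelevant edges were removed before \Cref{alg:2-vertex-cuts} is called), so your initial case $uv \in E(G)$ is vacuous here; also your ``remaining boundary case'' is already covered by the symmetric argument, since $V(C(u)) = \{u\}$ forces $V(C(v)) \setminus \{v\} = V_i \neq \emptyset$.
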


\begin{proof}
	First note that $uv \notin E(G)$ since there are no irrelevant edges by \Cref{prop:properties-for-2vc-alg}.
	We first prove the first claim. We prove it for $H_1$, the other case being symmetric.
	First notice that if $H_1$ contains a connected component not containing $u$ nor $v$, $H$ would be disconnected. Hence, $H_1$ consists of one connected component or two connected components $H_1(u)$ and $H_1(v)$ containing $u$ and $v$, respectively.

	Suppose first that $H_1$ consists of one connected component. Let us contract the 2EC components of $H_1$, hence obtaining a tree $T$. If $T$ consists of a single vertex, $H_1$ is of type $\typA$. Otherwise, consider the path $P$ (possibly of length $0$)
	between the super nodes resulting from the contraction of the 2EC components $C(u)$ and $C(v)$ containing $u$ and $v$, respectively. Assume by contradiction that $T$ contains an edge $e$ not in $P$. Then $e$ does not belong to any cycle of $H$, contradicting the fact that $H$ is 2EC. Thus, $H_1$ is of type $\typB$ (in particular $P$ has length at least 1 since $H_1$ is not 2EC).

	Assume now that $H_1$ consists of 2 connected components $H_1(u)$ and $H_1(v)$. Let $T_u$ and $T_v$ be the two trees obtained by contracting the 2EC components of $H_1(u)$ and $H_1(v)$, respectively. By the same argument as before, the 2-edge connectivity of $H$ implies that these two trees contain no edge. Hence, $H_1$ is of type $\typC$.

	The second part of the first claim follows easily since if $H_1$ is of type $\typC$ and $H_2$ is not of type $\typA$, then $H$ would either be disconnected or it would contain at least one bridge edge, because $uv \notin E(G)$.

	We now move to the second claim of the lemma. There must exist an edge $f$ in $G_i$ between the two connected components of $H_i$ since otherwise at least one of $u$ or $v$ would be a cut vertex.
	Clearly, $H'_i \coloneq H_i \cup \{f\}$ satisfies the claim. For the second part of the claim, consider any 2EC spanning
	subgraph $H'$ and let $H'_i \coloneq H' \cap E(G_i)$.
	The claim holds if there exists one such $H'$ where $H'_i$ is of type $\typA$ or $\typB$, hence assume that this is not the case. Hence, $H'_i$ is of type $\typC$ by the first part of this lemma. The same argument as above implies the existence of an edge $f$ in $G_i$ such that $H''_i \coloneq H'_i \cup \{f\}$ is of type $\typB$, implying the claim for $H'' \coloneq H' \cup \{f\}$.
\end{proof}

\begin{lemma}\label{lemma:2vc-opt-1-large}
	In \Cref{alg:2-vertex-cuts} it holds that $\opt_1^{\min} \geq 3$.
\end{lemma}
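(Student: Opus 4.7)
The plan is to verify the bound $\opt_1^{\min} \geq 3$ by showing that every solution type $t \in \typesTwoVC$ whose representative $\OPT_1^t$ exists satisfies $\opt_1^t \geq 3$. Two basic facts drive the argument. First, since the algorithm enforces $\abs{V_1} \geq 2$, the induced graph $G_1$ has $\abs{V(G_1)} = \abs{V_1} + 2 \geq 4$. Second, by \Cref{prop:properties-for-2vc-alg}, the graph $G$ processed by \Cref{alg:2-vertex-cuts} contains no self-loops or parallel edges, so $G_1$ is simple; moreover, every vertex $x \in V_1$ has $N_G(x) \subseteq V_1 \cup \{u,v\}$ because there are no edges across $(V_1,V_2)$.

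For type $\typA$, $\OPT_1^{\typA}$ is a 2EC subgraph spanning $\geq 4$ vertices; since $G_1$ is simple, a standard degree-sum argument gives $\opt_1^{\typA} \geq \abs{V(G_1)} \geq 4$. For type $\typB$, I use a direct degree count. Each $x \in V_1$ has $\deg_{\OPT_1^{\typB}}(x) = \deg_H(x) \geq 2$ (all its edges lie in $E(G_1)$), and $u$ and $v$ each have degree $\geq 1$ in $\OPT_1^{\typB}$ because a $C(u)$–$C(v)$ path exists there by definition of type $\typB$. Summing yields $2\opt_1^{\typB} \geq 2\abs{V_1} + 2 \geq 6$, hence $\opt_1^{\typB} \geq 3$.

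For type $\typC$, $\OPT_1^{\typC}$ splits into two vertex-disjoint 2EC components $C(u)$ and $C(v)$, and $V_1$ is partitioned into $V_1(u) = V(C(u)) \setminus \{u\}$ and $V_1(v) = V(C(v)) \setminus \{v\}$. Because $G_1$ is simple, any 2EC subgraph on $k \geq 2$ vertices actually has $k \geq 3$ vertices and at least $k \geq 3$ edges; in particular the configuration $\abs{V(C(u))} = \abs{V(C(v))} = 2$ is impossible. If one component, say $C(v)$, is trivial ($V_1(v) = \emptyset$), then $\abs{V(C(u))} = \abs{V_1} + 1 \geq 3$, so $\opt_1^{\typC} \geq 3$. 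If both are non-trivial, each contributes $\geq 3$ edges, yielding $\opt_1^{\typC} \geq 6$.

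The only subtlety, and therefore the only place I need to be careful, is the type $\typC$ case, where I must rule out the degenerate 2-vertex “2EC component” using the absence of parallel edges in $G$ (via \Cref{prop:properties-for-2vc-alg}). Combining all three types gives $\opt_1^{\min} = \min_t \opt_1^t \geq 3$.
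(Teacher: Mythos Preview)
Your proof is correct but takes a different route from the paper's. The paper gives a uniform one-line argument: for any type $t$, the subgraph $\OPT_1^t$ (being the restriction to $E(G_1)$ of a 2ECSS $H$ of $G$) induces a 2EC spanning subgraph of the contracted graph $G_1' = G_1 \mid \{u,v\}$; since $\abs{V(G_1')} = \abs{V_1}+1 \geq 3$, any 2EC spanning subgraph of $G_1'$ has at least $3$ edges, and hence $\opt_1^t \geq 3$. No case analysis on the type is needed.

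Your approach instead does an explicit degree-sum or component-count argument for each of the three types. This is a bit longer, but it has the advantage of being completely self-contained and of yielding sharper bounds in some cases (e.g., $\opt_1^{\typA} \geq 4$ and $\opt_1^{\typC} \geq 6$ when both components are non-trivial), which the contraction argument does not immediately give. Both approaches ultimately rely on the same underlying fact that every vertex of $V_1$ has full degree inside $G_1$, so the difference is mostly one of packaging.
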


\begin{proof}
	Consider any feasible 2ECSS solution $H_1$ for $G_1$, and let $H'_1 = H_1 \setminus \{u,v\}$ be the corresponding solution for $G_1 | \{u,v\}$.
	Since $G_1 | \{u,v\}$ contains at least $3$ vertices and $H'_1$ is a feasible 2ECSS solution, $H'_1$ must contain at least $3$ edges.
	Thus, $|H_1| \geq |H'_1| \geq 3$.
\end{proof}

We have an immediate corollary from \Cref{lemma:type-A-high-cost} and $\alpha > 1$. %

\begin{corollary}\label{lemma:2vc-type-A-condition}
	If \Cref{alg:2-vertex-cuts} reaches \Cref{reduce-2vc:one-small} and $\OPT_1^{\typA}$ exists, then $\opt_1^{\typA} \geq \opt_1^{\min} + 2$. In particular, $t^{\min} \neq \typA$.
\end{corollary}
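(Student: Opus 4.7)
The statement combines two ingredients already established: \Cref{lemma:type-A-high-cost} which gives the lower bound $\opt_1^{\typA} \geq \alpha \cdot \opt_1^{\min} + 1$, and the standing assumption $\alpha \geq \nicefrac{6}{5}$ from the reduction (together with $\opt_1^{\typA}, \opt_1^{\min} \in \mathbb{Z}$). The plan is to show $\opt_1^{\typA} \geq \opt_1^{\min} + 2$ by combining these bounds, using integrality to patch the gap when $\opt_1^{\min}$ is small.

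First I would apply \Cref{lemma:type-A-high-cost}, which yields the strict inequality $\opt_1^{\typA} \geq \alpha \cdot \opt_1^{\min} + 1$. Rearranging, $\opt_1^{\typA} \geq \opt_1^{\min} + 2$ is equivalent to $(\alpha - 1)\opt_1^{\min} \geq 1$. Using $\alpha \geq \nicefrac{6}{5}$, this holds as a real inequality whenever $\opt_1^{\min} \geq 5$. For the small remaining cases, we invoke \Cref{lemma:2vc-opt-1-large}, which gives $\opt_1^{\min} \geq 3$, and argue by integrality: when $\opt_1^{\min} = 3$, we get $\opt_1^{\typA} \geq \lceil \nicefrac{6}{5} \cdot 3 + 1 \rceil = \lceil \nicefrac{23}{5} \rceil = 5 = \opt_1^{\min} + 2$, and when $\opt_1^{\min} = 4$, we get $\opt_1^{\typA} \geq \lceil \nicefrac{6}{5} \cdot 4 + 1 \rceil = \lceil \nicefrac{29}{5} \rceil = 6 = \opt_1^{\min} + 2$. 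In all cases the claim holds.

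For the second part, since $\opt_1^{\typA} \geq \opt_1^{\min} + 2 > \opt_1^{\min}$, the type $\typA$ solution is strictly larger than the minimum, so it cannot be the minimizer under any tie-breaking rule. Hence $t^{\min} \neq \typA$.

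There is no real obstacle here; the only subtlety is that the naive implication ``$\alpha > 1 \Rightarrow (\alpha-1)\opt_1^{\min} \geq 1$'' fails for $\opt_1^{\min} \in \{3,4\}$ when $\alpha = \nicefrac{6}{5}$, so the proof crucially uses both the concrete value $\alpha \geq \nicefrac{6}{5}$ and the integrality of $\opt_1^{\typA}$ in these boundary cases.
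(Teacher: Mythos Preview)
Your proof is correct, but it takes a longer route than necessary. The paper derives the corollary in one line from \Cref{lemma:type-A-high-cost} and $\alpha > 1$: since $\opt_1^{\typA} \geq \alpha \cdot \opt_1^{\min} + 1 > \opt_1^{\min} + 1$ (the strict inequality uses only $\alpha > 1$ and $\opt_1^{\min} > 0$), and both $\opt_1^{\typA}$ and $\opt_1^{\min}+1$ are integers, we immediately get $\opt_1^{\typA} \geq \opt_1^{\min} + 2$.

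Your detour arises because you aim for the stronger real inequality $(\alpha-1)\opt_1^{\min} \geq 1$, which would give $\alpha \cdot \opt_1^{\min} + 1 \geq \opt_1^{\min} + 2$ directly. That stronger inequality indeed fails for $\opt_1^{\min} \in \{3,4\}$ at $\alpha = \nicefrac{6}{5}$, forcing your case analysis. But it is not needed: the paper only needs the \emph{strict} inequality $\alpha \cdot \opt_1^{\min} + 1 > \opt_1^{\min} + 1$, and then integrality of $\opt_1^{\typA}$ (not of the intermediate quantity $\alpha \cdot \opt_1^{\min} + 1$) closes the gap. So the ``subtlety'' you flag is an artifact of the chosen proof strategy rather than a genuine obstacle; neither the concrete bound $\alpha \geq \nicefrac{6}{5}$ nor \Cref{lemma:2vc-opt-1-large} is required here. (Minor wording note: you call $\opt_1^{\typA} \geq \alpha \cdot \opt_1^{\min} + 1$ a ``strict inequality'', but it is a weak inequality as stated; also, the rearrangement you give is a sufficient condition, not an equivalence.)
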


\begin{lemma}[Claim 1 in \cite{GargGA23improved}]\label{lemma:2vc-only-b-and-c}
	If \Cref{alg:2-vertex-cuts} is executed on instance $G$, there exists an optimal solution $\opt(G)$ such that $\OPT_1$ is of type $\typB$ or $\typC$.
\end{lemma}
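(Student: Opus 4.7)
Suppose, for contradiction, that every optimal 2ECSS of $G$ has the property that $\OPT_1 \coloneq \OPT \cap E(G_1)$ is of type $\typA$. Fix such an optimal $\OPT$ and let $t_2 \in \typesTwoVC$ denote the type of $\OPT_2 \coloneq \OPT \cap E(G_2)$, which exists by Part~1 of \Cref{lemma:2vc-type-combinations}. The goal is to exhibit a modified 2ECSS $\OPT'$ of size at most $|\OPT|$ whose restriction to $G_1$ is of type $\typB$ or $\typC$, contradicting the choice of $\OPT$.

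Two ingredients drive the construction. First, \Cref{lemma:2vc-type-A-condition} gives $\opt_1^{\typA} \geq \opt_1^{\min} + 2$ with $t^{\min} \in \{\typB,\typC\}$, so swapping $\OPT_1$ for $\OPT_1^{\min}$ saves at least two edges on the $G_1$-side. Second, Part~2 of \Cref{lemma:2vc-type-combinations} allows us to upgrade a type $\typC$ restriction to type $\typB$ at the cost of a single additional edge; this will be used whenever the new type on the $G_1$-side is not directly compatible with $t_2$.

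The case analysis then proceeds as follows. If $t_2 = \typA$, the replacement $\OPT' \coloneq \OPT_1^{\min} \cup \OPT_2$ is compatible by \Cref{lemma:2vc-type-combinations} and strictly cheaper. If $t^{\min} = \typB$, set $\OPT' \coloneq \OPT_1^{\typB} \cup \OPT_2'$, where $\OPT_2' = \OPT_2$ if $t_2 \in \{\typA,\typB\}$, and $\OPT_2' = \OPT_2 \cup \{f\}$ (type $\typB$) if $t_2 = \typC$, with $f$ from Part~2 of \Cref{lemma:2vc-type-combinations}; the extra edge on the $G_2$-side is more than offset by the saving of at least two on the $G_1$-side. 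The remaining, and main, case is $t^{\min} = \typC$ together with $t_2 \in \{\typB,\typC\}$: here we need to modify both sides, since $\typC$--$\typB$ and $\typC$--$\typC$ are incompatible.

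The main obstacle is thus to show that in this last case the restriction $\OPT_2$ can be augmented by at most two edges of $G_2$ into a type $\typA$ subgraph of $G_2$; the cost of these extra edges is then absorbed by $\opt_1^{\typA} - \opt_1^{\typC} \geq 2$. The plan for this augmentation is: first invoke Part~2 of \Cref{lemma:2vc-type-combinations} to turn $\OPT_2$ from type $\typC$ into type $\typB$ (only needed when $t_2 = \typC$), and then add one further edge of $G_2$ that closes the resulting $C(u)$--$C(v)$ path in the contracted graph into a 2EC subgraph, yielding type $\typA$. The existence of such a closing edge follows from $2$-vertex-connectivity of $G$ combined with the properties in \Cref{prop:properties-for-2vc-alg} (absence of irrelevant edges and non-isolating 2-vertex cuts); specifically, these rule out the degenerate situations in which no additional edge parallel to the $C(u)$--$C(v)$ path in $G_2$ exists. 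Combining the substitutions on both sides produces an $\OPT'$ with $|\OPT'| \leq |\OPT|$ and $\OPT'_1$ of type $\typC$, contradicting the assumption.
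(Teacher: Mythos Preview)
Your proof has a genuine gap in the ``main case'' ($t^{\min} = \typC$ with $t_2 \in \{\typB,\typC\}$). The step that upgrades $\OPT_2$ from type $\typB$ to type $\typA$ by adding one edge of $G_2$ is unjustified: such an upgrade requires an edge of $G_2$ between the two end super-nodes $C(u)$ and $C(v)$ of the contracted path, and no such edge need exist (for instance, $G_2$ could itself be just a bare path $u\text{--}a_1\text{--}\cdots\text{--}a_m\text{--}v$, which is perfectly compatible with all the hypotheses in force). Your justification appeals to ``absence of non-isolating 2-vertex cuts'' from \Cref{prop:properties-for-2vc-alg}, but that proposition makes no such claim; indeed, the very cut $\{u,v\}$ being processed is a non-isolating 2-vertex cut.

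The gap is easily repaired without the closing-edge idea: even when $t^{\min}=\typC$, Part~2 of \Cref{lemma:2vc-type-combinations} guarantees that $\OPT_1^{\typB}$ exists with $\opt_1^{\typB}\le\opt_1^{\typC}+1$, and combined with \Cref{lemma:2vc-type-A-condition} this yields $\opt_1^{\typA}\ge\opt_1^{\typB}+1$. Hence one can always replace $\OPT_1$ by $\OPT_1^{\typB}$ and proceed exactly as in your $t^{\min}=\typB$ case (adding a single edge to $\OPT_2$ if $t_2=\typC$), with no need to drive $\OPT_2$ all the way to type~$\typA$. For comparison, the paper's own proof does not invoke \Cref{lemma:2vc-type-A-condition} here; it instead splits directly on whether $\opt_1^{\typA}\ge\opt_1^{\typB}+1$, and when this fails it derives $\opt_1^{\typA}\in\{4,5\}$ and rules these values out by an explicit structural analysis showing that $\OPT_1^{\typA}$ would then be a $4$- or $5$-cycle forcing an $\alpha$-contractible subgraph.
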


\begin{proof}
	For the sake of contradiction, consider any optimal solution $\OPT(G)$ and assume that $\OPT_1$ is of type $\typA$.
	\cref{lemma:2vc-type-combinations} implies that every feasible solution must use at least $\opt_1^{\min}$ edges from $G_1$. Moreover, $\opt_1^{\typA} > \alpha \cdot \min\{\opt_1^\typB,\opt_1^\typC\}$ by \Cref{lemma:type-A-high-cost}.
	If $\opt_1^\typA \geq \opt_1^\typB + 1$, then we obtain an alternative optimum solution with the desired property by taking $\OPT_1^\typB \cup \OPT_2$, and, in case $\OPT_2$ is of type $\typC$, by adding one edge $f$ between the connected components of $\OPT_2$ whose existence is guaranteed by \cref{lemma:2vc-type-combinations}. Otherwise, that is, $\opt_1^\typA \leq \opt_1^\typB$, we must have $\opt_1^\typA > \alpha \cdot \opt_1^\typC$. Specifically, $\opt_1^\typA \geq \opt_1^\typC + 1 \geq 4$ by \cref{lemma:2vc-opt-1-large}. \cref{lemma:2vc-type-combinations} also implies $\opt_1^\typB \leq \opt_1^\typC + 1$, which together gives $\opt_1^\typA = \opt_1^\typB = \opt_1^\typC + 1$. Then, $\opt_1^\typA > \alpha \cdot \opt_1^\typC = \alpha(\opt_1^\typA - 1)$ implies $\opt_1^\typA \leq 5$ since $\alpha \geq \frac{6}{5}$. We established that $\opt_1^\typA \in \{4,5\}$.

	If $\opt_1^\typA = 4$, $\OPT_1^\typA$ has to be a $4$-cycle $C =u - a -v -b- u$. Notice that the edge $ab$ cannot exist since otherwise $\opt_1^\typB = 3$ due to the path $u - a - b - v$. Then every feasible solution must include the 4 edges of $C$ to guarantee degree at least 2 on $a$ and $b$. But this makes $C$ an $\alpha$-contractible subgraph on $4 \leq \frac{4}{\varepsilon}$ vertices, which is a contradiction to \cref{prop:properties-for-2vc-alg}.

	If $\opt_1^\typA = 5$, the minimality of $\OPT_1^\typA$ implies that $\OPT_1^\typA$ is a $5$-cycle, say $C = u - a - v - b - c -u$. Observe that the edge $ab$ cannot exist, since otherwise $\{va,ab,bc,cu\}$ would be a type $\typB$ solution in $G_1$ of size $4$, contradicting $\opt_1^\typB = \opt_1^\typB = 5$. A symmetric construction shows that edge $ac$ cannot exist. Hence, every feasible solution restricted to $G_1$ must include the edges $\{au,av\}$ and furthermore at least $3$ more edge incident to $b$ and $c$, so at least $5$ edges altogether. This implies that $C$ is an $\alpha$-constractible subgraph of size $5 \leq \frac{8}{\varepsilon}$ vertices, which is a contradiction to \cref{prop:properties-for-2vc-alg}.
\end{proof}

\begin{lemma}[Part of the proof of Lemma A.6 in \cite{GargGA23improved}]\label{lemma:2vc-both-large-F}
	If the condition in \Cref{reduce-2vc:both-large} in \Cref{alg:2-vertex-cuts} applies and both, $H'_1$ and $H'_2$ are 2EC spanning subgraphs of $G'_1$ and $G'_2$, respectively, then there exist edges $F'$ with $|F'| \leq 2$ such that $H'$ is a 2EC spanning  subgraph of $G$.
\end{lemma}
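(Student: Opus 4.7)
The plan is to work with the \emph{decontracted} subgraph $\tilde H_i \subseteq E(G_i)$, obtained from $H'_i \subseteq E(G'_i)$ by replacing each edge incident to the contracted vertex $w_i$ with the corresponding edge in $G_i$ at $u$ or $v$ according to its original incidence in $G$. Then $\tilde H_i$ spans $V_i \cup \{u,v\}$, every vertex of $V_i$ has degree at least $2$ in $\tilde H_i$, and contracting $\{u,v\}$ in $\tilde H_i$ recovers $H'_i$, which is 2EC. I would also first verify that $G_i$ is 2-vertex-connected (and in particular connected): this follows from the fact that $G$ has no cut vertex, so that both $G - u$ and $G - v$ are connected, which forces $G_i - u$ and $G_i - v$ to be connected as well.

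The heart of the argument is a structural observation on $\tilde H_i$: (i) $\tilde H_i$ has at most two connected components, and if there are two then they are exactly $C_i^u \ni u$ and $C_i^v \ni v$; and (ii) every bridge of $\tilde H_i$ separates $u$ from $v$. Claim (i) holds because any additional component would persist in $H'_i$ after re-contracting $\{u,v\}$, contradicting connectedness of $H'_i$. Claim (ii) holds because if a bridge $e$ of $\tilde H_i$ had both $u$ and $v$ on the same side of the cut it induces, it would remain a bridge after re-contracting $\{u,v\}$, contradicting 2-edge-connectivity of $H'_i$. As an immediate consequence, if $\tilde H_i$ is disconnected, both $C_i^u$ and $C_i^v$ are themselves bridgeless, i.e., 2EC. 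I then define $F'$ by picking, for each $i \in \{1,2\}$ such that $\tilde H_i$ is disconnected, an arbitrary edge $f_i \in E(G_i)$ between $V(C_i^u)$ and $V(C_i^v)$; such an edge exists since $G_i$ is connected and these two sets partition $V(G_i)$, and clearly $|F'| \leq 2$.

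To finish, set $\tilde H_i' \coloneq \tilde H_i \cup (F' \cap E(G_i))$; each $\tilde H_i'$ is connected, and re-contracting $\{u,v\}$ in $\tilde H_i'$ still yields a 2EC graph since we only added edges to $H'_i$. The union $H' = \tilde H_1' \cup \tilde H_2'$ is connected because both sides are connected and share $\{u,v\}$. For the no-bridge property, let $e \in E(H')$, say $e \in \tilde H_1'$; if $e$ is not a bridge of $\tilde H_1'$, we are done, and otherwise the structural observation applied to $\tilde H_1'$ tells us that $e$ separates $u$ from $v$ in $\tilde H_1'$, while the $u$-$v$ path in the connected $\tilde H_2'$ bridges the gap, showing $H' - e$ is still connected.

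The main obstacle is the structural observation itself: since $H'_i$ is an \emph{arbitrary} 2EC spanning subgraph of $G'_i$ (not a restriction of a 2EC spanning subgraph of $G$), the decontracted $\tilde H_i$ can have a fairly complicated block structure---for instance, pendant 2EC blocks not containing $u$ or $v$---and need not belong to one of the clean types \typA, \typB, \typC of \Cref{def:2vc-types}. The clean ``every bridge separates $u$ from $v$'' argument, obtained by pulling back to $H'_i$ via contraction, is what lets us proceed without a full type classification of $\tilde H_i$.
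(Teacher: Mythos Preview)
Your proof is correct and follows essentially the same idea as the paper's. The paper argues by first asserting that each decontracted $\tilde H_i$ is of type $\typA$, $\typB$, or $\typC$ and then doing a short case analysis on the pair of types; your ``structural observation'' (at most two components, every bridge separates $u$ from $v$) is precisely the content of that type classification, and your final bridge-check replaces the case analysis with a single uniform argument. In fact the paper asserts the type classification for $\tilde H_i$ without proof in this lemma, so your pull-back-to-$H'_i$ argument is the more self-contained version of the same step; your worry that $\tilde H_i$ could fall outside types $\typA$/$\typB$/$\typC$ is exactly what your observation rules out.

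One small over-claim: your argument only establishes that $G_i - u$ and $G_i - v$ are connected, which yields connectedness of $G_i$ (once you also note that $u$ must have a neighbor in $V_i$, as otherwise $v$ would be a cut vertex of $G$), but not full $2$-vertex-connectivity---indeed $G_i$ can have cut vertices inside $V_i$. Since you only use connectedness of $G_i$ to guarantee the edge $f_i$, this does not affect the proof.
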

\begin{proof}
	First note that $H'_i$, $i \in \{1,2\}$, must be of type $\typA$, $\typB$, or $\typC$. If one of them is of type $\typA$ or they are both of type $\typB$, then $H'_1 \cup H'_2$ is 2EC.
	If one of them is of type $\typB$, say $H'_1$, and the other of type $\typC$, say $H'_2$, then let $H'_2(u)$ and $H'_2(v)$ be the two 2EC components of $H'_2$ containing $u$ and $v$, respectively. By \cref{lemma:2vc-type-combinations}, there must exist an edge $f \in E(G_2)$ between $H'_2(u)$ and $H'_2(v)$. Hence, $H'_1(u) \cup H'_2(v) \cup \{f\}$ is 2EC.
	The remaining case is that $H'_1$ and $H'_2$ are both of type $\typC$. In this case, $H'_1 \cup H'_2$ consists of precisely two 2EC components $H'(u)$ and $H'(v)$ containing $u$ and $v$, respectively.
	Since $G$ is 2EC, there must exist two edges $f$ and $g$ between $H'(u)$ and $H'(v)$ such that $H'_1 \cup H'_2 \cup \{f,g\}$ is 2EC. Thus, in all cases a desired set $F^\typC$ of size at most 2 exists such that $H'$ is a 2ECSS of $G$.
\end{proof}

\begin{lemma}[Part of the proof of Lemma A.6 in \cite{GargGA23improved}]\label{lemma:2vc-b-F}
	If the condition in \Cref{reduce-2vc:b} in \Cref{alg:2-vertex-cuts} applies and $\Reduce(G_2^{\typB})$ is a 2EC spanning subgraph of $G_2^{\typB}$, then  $H^\typB$ is a 2EC spanning subgraph of~$G$.
\end{lemma}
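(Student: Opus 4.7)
The plan is to exploit the fact that the dummy vertex $w$ was attached to $G_2$ precisely to simulate a $u$--$v$ path provided by $\OPT_1^{\typB}$ on the other side of the cut. The first step is to observe that in $G_2^{\typB}$ the vertex $w$ has degree exactly $2$ (it is incident only to the dummy edges $uw$ and $vw$), so any $2$EC spanning subgraph of $G_2^{\typB}$, and in particular $\Reduce(G_2^{\typB})$, must contain both $uw$ and $vw$. Consequently $H_2^{\typB}$ spans $V_2 \cup \{u,v\}$, and $\Reduce(G_2^{\typB})$ is obtained from $H_2^{\typB}$ by subdividing a new edge $uv$ with the degree-$2$ vertex $w$. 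Since suppressing a degree-$2$ vertex preserves $2$-edge connectivity, we get that $H_2^{\typB} \cup \{uv\}$ is $2$EC on $V_2 \cup \{u,v\}$ (the edge $uv$ can be introduced safely because $uv \notin E(G)$, as $\{u,v\}$ is a $2$-vertex cut and hence any such edge would be irrelevant, contradicting \Cref{prop:properties-for-2vc-alg}).

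Next I would argue the symmetric statement for $G_1$: since $\OPT_1^{\typB}$ has type $\typB$, contracting its $2$EC components yields a $C(u)$--$C(v)$ path of length at least one. Adding the artificial edge $uv$ closes this into a cycle at the block level, so $\OPT_1^{\typB} \cup \{uv\}$ is $2$EC on $V_1 \cup \{u,v\}$.

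At this point the conclusion follows by a standard gluing argument, which I would prove inline: if $G^1$ and $G^2$ are $2$EC graphs with $V(G^1)\cap V(G^2)=\{u,v\}$ and $uv \in E(G^1)\cap E(G^2)$, then $(G^1 \cup G^2)\setminus\{uv\}$ is $2$EC. Indeed, $H^{\typB} = \OPT_1^{\typB}\cup H_2^{\typB}$ is clearly spanning and connected (each piece is connected and they share $u,v$). For $2$-edge connectivity, fix any edge $f \in H^{\typB}$, say $f \in \OPT_1^{\typB}$ (the other case is symmetric). Since $\OPT_1^{\typB}\cup\{uv\}$ is $2$EC, $(\OPT_1^{\typB}\cup\{uv\})\setminus\{f\}$ is connected; any use of the artificial edge $uv$ in a spanning tree of this graph can be replaced by a $u$--$v$ path inside $H_2^{\typB}$, which exists because $H_2^{\typB}\cup\{uv\}$ is $2$EC and hence $H_2^{\typB}$ is connected. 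This shows $H^{\typB}\setminus\{f\}$ is connected, so $H^{\typB}$ is $2$EC.

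The only real obstacle is bookkeeping: making sure that the edge $uv$ used in the gluing argument is not already present in $G$ (so no parallel edges cause the contraction/suppression argument to misbehave) and that $H_2^{\typB}$ is genuinely spanning on $V_2\cup\{u,v\}$. Both issues are handled by \Cref{prop:properties-for-2vc-alg} (which rules out $uv \in E(G)$) and by the degree-$2$ argument on $w$, respectively, so the rest of the proof is essentially the gluing lemma.
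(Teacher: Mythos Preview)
Your proof is correct and follows essentially the same approach as the paper: both observe that the degree-$2$ dummy vertex $w$ forces $uw, vw \in \Reduce(G_2^{\typB})$ and then argue that the resulting $H_2^{\typB}$ glues with $\OPT_1^{\typB}$ to a 2ECSS of $G$. The paper phrases the second step via \Cref{lemma:2vc-type-combinations} (the dummy path $u$--$w$--$v$ is type $\typB$, so $H_2^{\typB}$ must be type $\typA$ or $\typB$), whereas you give a self-contained argument via vertex suppression and an explicit gluing lemma; the underlying content is the same.
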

\begin{proof}
	Note that $\Reduce(G_2^{\typB})$ must contain the two dummy edges $wv$ and $wu$, which induce a type $\typB$ graph. Hence, by \cref{lemma:2vc-type-combinations}, $H^{\typB}_2$ is of type $\typA$ or $\typB$.
	Thus, $H^{\typB} = \OPT_1^\typB \cup H^\typB$ is a 2ECSS of $G$.
\end{proof}

\begin{lemma}[Part of the proof of Lemma A.6 in \cite{GargGA23improved}]\label{lemma:2vc-c-F}
	If the condition in \Cref{reduce-2vc:c} in \Cref{alg:2-vertex-cuts} applies and $\Reduce(G_2^{\typC})$ is a 2EC spanning subgraph of $G_2^{\typC}$, then there exists an edge set $F^{\typC}$ with $|F^{\typC}| \leq 1$ such that $H^\typC$ is a 2EC spanning subgraph of $G$.
\end{lemma}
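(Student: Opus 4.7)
The plan is to distinguish two cases depending on whether $H_2^{\typC} = \Reduce(G_2^{\typC}) \setminus \{uv\}$ is 2-edge-connected or not (recall that $uv \notin E(G)$ since the irrelevant-edge reduction has already been applied, so $uv$ is genuinely a dummy edge).

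In the first case, $H_2^{\typC}$ is 2EC and spans $V_2 \cup \{u,v\}$. Since $\OPT_1^{\typC}$ is of type $\typC$, it consists of two vertex-disjoint 2EC components $C(u) \ni u$ and $C(v) \ni v$ that span $V_1 \cup \{u,v\}$, each sharing exactly one vertex with $H_2^{\typC}$. Because gluing a 2EC subgraph to another 2EC subgraph at a single common vertex preserves 2-edge-connectivity, $\OPT_1^{\typC} \cup H_2^{\typC}$ is already a 2EC spanning subgraph of $G$, so $F^{\typC} = \emptyset$ suffices. (The degenerate sub-case where, say, $C(u) = \{u\}$ is an isolated super-node in $\OPT_1^{\typC}$ is handled identically, since $u$ still has degree at least $2$ inside $H_2^{\typC}$.)

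In the second case, $H_2^{\typC}$ has at least one bridge, which in particular forces $uv \in \Reduce(G_2^{\typC})$. The crucial structural step is to analyze the bridge tree $T$ of $H_2^{\typC}$, i.e., the tree whose nodes are the 2EC components of $H_2^{\typC}$ and whose edges are its bridges. Adding $uv$ back to $H_2^{\typC}$ can only turn into non-bridges those bridges lying on the unique path in $T$ between the block containing $u$ and the block containing $v$; but $\Reduce(G_2^{\typC})$ is 2EC and hence bridge-free, so this path must already contain every edge of $T$. Thus $T$ is a path $B_1 - B_2 - \ldots - B_k$ with $u \in B_1$ and $v \in B_k$. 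I then invoke \Cref{lemma:2vc-type-combinations} to produce an edge $f \in E(G_1)$ with one endpoint in $V(C(u))$ and one in $V(C(v))$, and set $F^{\typC} = \{f\}$. In the union $\OPT_1^{\typC} \cup H_2^{\typC}$, the components $C(u)$ and $B_1$ merge at vertex $u$ into a single 2EC block $C(u) \cup B_1$, and analogously $C(v)$ and $B_k$ merge at $v$ into $C(v) \cup B_k$, so the bridge tree of $\OPT_1^{\typC} \cup H_2^{\typC}$ is the same path with merged endpoint blocks. The edge $f$ joins these two endpoint blocks and therefore covers every remaining bridge in one stroke, which shows that $H^{\typC}$ is 2EC and $|F^{\typC}| = 1$.

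The main subtlety is the structural claim that the bridge tree of $H_2^{\typC}$ must be a path whose endpoint blocks contain $u$ and $v$; this is what makes a single replacement edge sufficient to cover every bridge. Everything else is routine bookkeeping on 2EC subgraphs glued at common vertices, and the connecting edge $f$ inside $G_1$ is supplied ``for free'' by \Cref{lemma:2vc-type-combinations}, removing the need to search for an analogous edge inside $G_2$ (which may not always exist).
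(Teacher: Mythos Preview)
Your proof is correct and follows essentially the same approach as the paper. The paper's version is more terse: it observes that $H_2^{\typC}$ must be of type $\typA$ or $\typB$ (which encapsulates exactly your structural claim that the bridge tree is a $C(u)$--$C(v)$ path), takes the edge $f \in E(G_1)$ from \Cref{lemma:2vc-type-combinations} to upgrade $\OPT_1^{\typC}$ to type $\typB$, and concludes that a type-$\typB$ graph combined with a type-$\typA$/$\typB$ graph is 2EC---so your two cases are merged into one and your bridge-tree analysis is hidden inside the definition of type $\typB$.
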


\begin{proof}
	Note that $H_2^\typC$ must be of type $\typA$ or $\typB$.
	\cref{lemma:2vc-type-combinations} guarantees the existence of an edge $f \in E(G_2)$ such that $\OPT_1^{\typC} \cup \{f\}$ is of type $\typB$ for $G_1$. Thus, $\OPT_1^{\typC} \cup H_2^\typC \cup \{f\}$ is a 2ECSS of $G$.
\end{proof}

Finally, we prove that the reductions of \cref{alg:2-vertex-cuts} preserve the approximation factor.

\begin{lemma}[Part of the proof of Lemma A.7 in \cite{GargGA23improved}]\label{lemma:2vc-approx}
	Let $G$ be a 2ECSS instance. If every recursive call to $\Reduce(G')$ in~\Cref{alg:2-vertex-cuts} on input $G$ satisfies \eqref{eq:reduce-invariant}, then  \eqref{eq:reduce-invariant} holds for input $G$.
\end{lemma}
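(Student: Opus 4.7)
The proof will verify the invariant $\reduce(G) \leq \alpha \opt(G) + f(G)$ for $G$ (noting that $|V(G)| > 8/\eps$ whenever \Cref{alg:2-vertex-cuts} is invoked, since otherwise \Cref{reduce:bruteforce} would have handled $G$ directly) by case analysis on which branch of \Cref{alg:2-vertex-cuts} is executed. Throughout, I will fix an optimum $\OPT(G) = \OPT_1 \cup \OPT_2$ from \cref{lemma:2vc-only-b-and-c} with $\OPT_1$ of type $\typB$ or $\typC$, and use \cref{lemma:2vc-type-combinations} to constrain the type of $\OPT_2$ accordingly.

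For the both-large branch (\cref{reduce-2vc:both-large}), the inductive hypothesis applies to both recursive calls since $|V(G'_i)| = |V_i|+1 > 8/\eps$. The key observation is that contracting $\{u,v\}$ inside any $\OPT_i$ yields a 2EC cover of $G'_i$, so $\opt(G'_i) \leq |\OPT_i|$ and hence $\opt(G'_1)+\opt(G'_2) \leq \opt(G)$. Combined with $|F'| \leq 2$ from \cref{lemma:2vc-both-large-F} and the arithmetic identity $f(G'_1) + f(G'_2) = f(G) - 16$, the bound $|H'| \leq \alpha\opt(G) + f(G) - 14$ follows with room to spare.

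For each of the small-$|V_1|$ branches (\cref{reduce-2vc:b,reduce-2vc:c}), I plan to subdivide further based on whether $|V(G_2^t)| > 8/\eps$. In the small-instance subcase the recursive call returns an exact optimum; combined with $\opt(G_2^t) \leq |\OPT_2| + c$ for a small constant $c \in \{1,2\}$ (accounting for added dummy edges/vertex), this gives $|H^t| \leq \opt(G) + O(1)$, absorbed by $f(G) > 16$ since $\alpha \geq 1$. In the large-instance subcase, the invariant applied to $G_2^t$ introduces a multiplicative-$\alpha$ loss on those $c$ extra edges. I will absorb this overhead using the identity $|\OPT_1| = \alpha|\OPT_1| - (\alpha-1)|\OPT_1|$ together with $\opt_1^t \geq 3$ from \cref{lemma:2vc-opt-1-large} and the slack $f(G_2^t) - f(G) \leq -4\eps$, which for $\alpha \leq 5/4$ yields the desired inequality.

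The main obstacle I anticipate is in the $\typC$ branch when $\OPT_1$ and $\OPT_2$ are both of type $\typB$: here the only available bound is $\opt(G_2^\typC) \leq |\OPT_2|+1$, forcing an extra additive $\alpha$. The resolution exploits the tie-breaking rule $\typB \succ \typC$ from \cref{def:2vc-ties}: whenever $t^{\min} = \typC$, it must strictly hold that $\opt_1^\typC < \opt_1^\typB$ (as otherwise $\typB$ would have been preferred), hence $|\OPT_1| \geq \opt_1^\typB \geq \opt_1^\typC + 1 \geq 4$ provides exactly the additional $(\alpha-1)$-slack needed to close the inequality for $\alpha \leq 5/4$ and $\eps \leq 1/100$. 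The remaining calculation reduces to routine arithmetic.
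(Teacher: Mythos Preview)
Your treatment of the both-large branch and of the $\typB$ branch is fine; the gap is in the $\typC$ branch. You bound $|H_2^\typC|\le|\Reduce(G_2^\typC)|$ and $|F^\typC|\le 1$ \emph{separately}, and then hope that the tie-breaking inequality $\opt_1\ge 4$ absorbs the residual. It does not: with these naive bounds the excess over $\alpha\,\opt(G)+f(G)$ is $\alpha-(\alpha-1)\opt_1$ (up to the negligible $-4\eps|V_1|$ term), so you would need $\opt_1\ge \alpha/(\alpha-1)$, i.e.\ $\opt_1\ge 5$ for $\alpha=5/4$ and $\opt_1\ge 6$ for $\alpha=6/5$. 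Your bound $\opt_1\ge 4$ falls short by exactly one unit. The subcase $(\typC,\typA)$, which you do not single out, fails in the same way: there $\opt(G_2^\typC)\le\opt_2$ but you only have $\opt_1\ge 3$, and you again need $\opt_1\ge 1/(\alpha-1)$.

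The missing observation, which the paper encodes in its ``$-1+1$'' bookkeeping, is the \emph{combined} bound
\[
|H_2^\typC|+|F^\typC|\le |\Reduce(G_2^\typC)|.
\]
If $uv\notin\Reduce(G_2^\typC)$ then $H_2^\typC=\Reduce(G_2^\typC)$ is already $2$EC, hence of type~$\typA$, so $\OPT_1^\typC\cup H_2^\typC$ is a $2$ECSS of $G$ and $|F^\typC|=0$; if $uv\in\Reduce(G_2^\typC)$ then $|H_2^\typC|=|\Reduce(G_2^\typC)|-1$ and $|F^\typC|\le 1$. Either way the sum is at most $|\Reduce(G_2^\typC)|$. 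With this in hand the excess becomes $(\alpha-1)(1-\opt_1)\le 0$ in subcase $(\typB,\{\typA,\typB\})$ (using your tie-breaking $\opt_1^\typC\le\opt_1-1$) and $-(\alpha-1)\opt_1\le 0$ in subcase $(\typC,\typA)$, and the proof closes for all $\alpha\ge 1$ without needing any lower bound on $\opt_1$ beyond $\opt_1\ge 1$.
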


\begin{proof}
	Let $\OPT_i \coloneq \OPT(G) \cap E(G_i)$ and $\opt_i \coloneq \abs{\OPT_i}$ for $i \in \{1,2\}$.
	Note that $\opt(G) = \opt_1 + \opt_2$.
	For the remainder of this proof, all line references (if not specified differently) refer to \Cref{alg:2-vertex-cuts}. We use in every case that $\opt_1 \geq 2$ (cf. \Cref{lemma:2vc-opt-1-large}), and $f(G) \geq 0$ by \Cref{prop:properties-for-2vc-alg}.
	We distinguish in the following which reduction the algorithm uses.

	\begin{description}
		\item[(\Cref{reduce-2vc:both-large}: $|V_1| > \frac{8}{\varepsilon}$)]
			Note that for $i \in \{1,2\}$ we have that $\OPT_i$ is a feasible solution for $G_i'$ and therefore $\opt_i \geq \opt(G_i')$.
			Using the assumption of the lemma, \Cref{lemma:2vc-both-large-F} and $\abs{V(G_1')} + \abs{V(G_2')} = \abs{V(G)}$, we conclude that
			\begin{align*}
				\reduce(G) = \abs{H'_1} + \abs{H'_2} + \abs{F'} & \leq \alpha \cdot (\opt_1 + \opt_2) + 4\varepsilon(\abs{V(G_1')} + \abs{V(G_2')}) -32 + 2          \\
				                                                & \leq \alpha \cdot \opt(G) + 4\varepsilon \cdot (\abs{V(G)}) - 16 = \alpha \cdot \opt(G) + f(G) \ .
			\end{align*}

		\item[(\Cref{reduce-2vc:b}: $t^{\min} = \typB$)]
			Note that $\opt_2(G_2^{\typB}) \leq \opt_2 + 2$, because we can turn $\OPT_2$, which we can assume to be of type $\typA$ or $\typB$ by \Cref{lemma:2vc-only-b-and-c} and \Cref{lemma:2vc-type-combinations}, using both dummy edges $\{uw,vw\}$ into a solution for $G_2^{\typB}$. Thus,
			\begin{align*}
				\reduce(G) = \abs{H^{\typB}} & \leq \opt_1^{\typB} + \abs{H^{\typB}}
				\leq \opt_1 + (\alpha \cdot \opt_2(G_2^{\typB}) + f(G_2^{\typB})) - 2             \\
				                             & \leq \opt_1 + (\alpha \cdot (\opt_2+2) + f(G)) - 2 \\
				                             & \leq \alpha \opt(G) + f(G) +
				(\alpha - 1)(2 - \opt_1) \leq \alpha \opt(G) + f(G) \ .
			\end{align*}

		\item[(\Cref{reduce-2vc:c}: $t^{\min} = \typC$)]
			We distinguish the following cases depending on which type combination of \Cref{lemma:2vc-type-combinations} is present for $(\OPT_1,\OPT_2)$. We exclude the case that $\OPT_1$ is of type $\typA$ due to \Cref{lemma:2vc-only-b-and-c}. Moreover, \cref{lemma:2vc-c-F} gives that $|F^{\typC}| \leq 1$.

			\begin{enumerate}[label=(\alph*)]

				\item $(\typB, \{\typA,\typB\})$.
				      In this case, $\opt_1^{\typB}$ exists but $t^{\min} = \typC$. Thus, the definition of \Cref{alg:2-vertex-cuts} and \Cref{def:2vc-ties} imply that
				      $\opt_1^{\typC} \leq \opt^{\typB}_1 - 1 = \opt_1 - 1$. Moreover, $\opt_2(G_2^{\typC}) \leq \opt_2 + 1$, because we can turn $\OPT_2$ with the dummy edge $\{uv\}$ into a solution for $G_2^{\typC}$. Thus,
				      \begin{align*}
					      \reduce(G) = \abs{H^{\typC}} & \leq \opt_1^{\typC} + \abs{H^{\typC}} + \abs{F^\typC}          \\
					                                   & \leq (\opt_1 - 1) + (\alpha \cdot (\opt_2 + 1) + f(G)) - 1 + 1 \\
					                                   & \leq \alpha \opt(G) + f(G) +
					      (\alpha - 1)(1 - \opt_1) \leq \alpha \opt(G) + f(G) \ .
				      \end{align*}

				\item $(\typC, \typA)$.
				      In this case, $\opt_1 = \opt_1^{\typC}$ and $\OPT_2$ is a feasible solution for $G_2^{\typC}$. Thus,
				      \begin{align*}
					      \reduce(G) = \abs{H^{\typC}} & \leq \opt_1^{\typC} + \abs{H^{\typC}} + \abs{F^\typC}                             \\
					                                   & \leq \opt_1 + (\alpha \cdot \opt_2 + f(G)) - 1 + 1 \leq \alpha \opt(G) + f(G) \ .
				      \end{align*}
			\end{enumerate}
	\end{description}

	This completes the proof of the lemma.
\end{proof}

\subsubsection{Auxiliary Lemmas for \Cref{alg:3-vertex-cuts}}\label{app:prep:helper-3vc}

Most of the analysis of \Cref{alg:3-vertex-cuts} is based on \cite{GHL24}.

\begin{proposition}\label{prop:properties-for-3vc-alg}
	If \Cref{alg:reduce} executes \Cref{alg:3-vertex-cuts}, then $G$ contains no non-isolating 2-vertex cuts, self loops, parallel edges, irrelevant edges, or $\alpha$-contractible subgraphs with at most $\frac{8}{\varepsilon}$ vertices.
\end{proposition}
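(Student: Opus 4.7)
The plan is to prove \cref{prop:properties-for-3vc-alg} by simply tracing through the conditional structure of \cref{alg:reduce}. Recall that \cref{alg:3-vertex-cuts} is invoked only from \cref{reduce:3vc} of \cref{alg:reduce}. For this invocation to occur, the execution must have fallen through all preceding conditional blocks of \cref{alg:reduce} without returning. I will verify that each of the earlier blocks either returns (via a recursive call on a strictly smaller instance) or invokes \cref{alg:2-vertex-cuts} which itself always returns; hence passing through each check without triggering its body immediately implies the absence of the corresponding forbidden structure in $G$.

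Concretely, I would argue line-by-line. Falling through \cref{reduce:loop} without entering the body means $G$ contains neither a self loop nor a parallel edge, because otherwise the body would have fired, recursed on $G \setminus \{e\}$, and returned. Similarly, falling through \cref{reduce:contractible} gives that $G$ has no $\alpha$-contractible subgraph on at most $\frac{8}{\varepsilon}$ vertices, and falling through \cref{reduce:irrelevant} yields that $G$ has no irrelevant edge. Finally, falling through \cref{reduce:2vc} means $G$ has no non-isolating $2$-vertex cut---here one needs the (trivial) observation that \cref{alg:2-vertex-cuts} always terminates with a \textsf{Return} statement in each of its branches, so if its condition had been satisfied the call to \cref{alg:3-vertex-cuts} would never have been reached.

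There is essentially no technical obstacle: the statement is a bookkeeping lemma guaranteeing invariants that the reduction pipeline maintains, and it follows directly from the sequential structure of \cref{alg:reduce}. Its purpose is to justify the assumptions made in subsequent auxiliary lemmas for \cref{alg:3-vertex-cuts} (notably \cref{lemma:type-A-high-cost} and the type-analysis lemmas that build on \cref{lemma:3vc-type-combinations}). The proof can be written in one or two sentences, but to be fully explicit, I would list the four forbidden structures and pair each with the line whose check would have eliminated it.
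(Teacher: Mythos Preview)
Your proposal is correct and matches the paper's treatment: the paper states \cref{prop:properties-for-3vc-alg} without proof, relying on exactly the observation you spell out—that \cref{alg:3-vertex-cuts} is reached only after the checks in \cref{reduce:loop,reduce:contractible,reduce:irrelevant,reduce:2vc} have all failed. Your explicit line-by-line trace is simply the unpacking of what the paper leaves implicit.
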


\begin{lemma}
	Let $G$ be the graph when \Cref{alg:reduce} executes \Cref{alg:3-vertex-cuts}, $\{u,v,w\}$ be a large 3-vertex cut of $G$, $(V_1,V_2)$ be partition of $V \setminus \{u,v\}$ such that $V_1 \neq \emptyset \neq V_2$ and there are no edges between $V_1$ and $V_2$, and $H$ be a 2EC spanning subgraph of $G$. Let $G_i = G[V_i \cup \{u,v,w\}]$ and $H_i = E(G_i) \cap H$ for $i \in \{1,2\}$. Then, $H_1$ and $H_2$ are of a type in $\typesThreeVC$ with respect to $\{u,v,w\}$ and $G_1$ and $G_2$, respectively.
\end{lemma}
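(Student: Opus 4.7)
The plan is to show that after contracting each maximal 2EC component of $H_i$ to a super-node, the resulting graph $H_i'$ has a very restricted structure that must match one of the six types. The key steps are: first, show $H_i'$ is a forest by the maximality of 2EC components (a cycle in $H_i'$ would yield a strictly larger 2EC subgraph of $H_i$); second, derive two structural constraints on this forest from the 2-edge-connectivity of $H$; third, perform a case analysis based on the number of connected components of $H_i'$ and the distribution of $\{u,v,w\}$ across its super-nodes.

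For the second step, I would first argue that every connected component of $H_i'$ must contain at least one super-node carrying a vertex in $\{u,v,w\}$, since otherwise its underlying vertices in $G_i$ would be completely separated in $H$ from $V(G)\setminus V(G_i)$ (all such connections must traverse the cut $\{u,v,w\}$), contradicting the connectedness of $H$. I would then show that every leaf super-node of every tree in $H_i'$ must also carry some element of $\{u,v,w\}$: such a leaf's unique incident tree-edge is a bridge of $H_i$, and it cannot lie on any cycle of $H$ because all external cycles through the leaf must leave $V(G_i)$ via $\{u,v,w\}$, contradicting that $H$ is 2EC. Generalizing the argument to an arbitrary edge $e$ of $H_i'$ shows that both components of $H_i'\setminus e$ must carry at least one element of $\{u,v,w\}$.

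For the third step, I would case-split on how $u,v,w$ are distributed across the super-nodes of $H_i'$: all three in a single super-node (type $\typA$, which forces $H_i'$ to be just that super-node since any additional super-node would violate the leaf constraint); split $2{+}1$ across two marked super-nodes (type $\typBi$ if they lie in one component connected by a path, or type $\typBii$ if they form two components); or in three distinct super-nodes (type $\typCi$, $\typCii$, or $\typCiii$ according to whether $H_i'$ has one, two, or three connected components). In each case the leaf and component constraints pin down the exact shape of every tree in the forest.

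The main obstacle is ruling out branched trees in the intermediate cases: when a tree of $H_i'$ carries only two markings from $\{u,v,w\}$, any internal branching would create a third unmarked leaf, contradicting the leaf constraint, so that tree must be a path; likewise a tree carrying only one marking must be a single super-node. Once these shape reductions are in place, the enumeration directly matches the six types listed in $\typesThreeVC$, completing the classification.
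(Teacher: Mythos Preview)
Your proposal is correct and follows essentially the same approach as the paper: contract 2EC components to get a forest, use the 2-edge-connectivity of $H$ to force every connected component and every leaf to carry some element of $\{u,v,w\}$ (equivalently, every tree edge lies on a path between two marked super-nodes), and then enumerate the resulting shapes. The only organizational difference is that the paper case-splits first on the number of connected components of $H_i$ and then on whether $C_i(u),C_i(v),C_i(w)$ coincide, whereas you split first on the distribution of $\{u,v,w\}$ among super-nodes; both orderings land on the same six types via the same constraints.
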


\begin{proof}
	Let $i \in \{1,2\}$. First, observe that $H_i$ consist of either one, two or three connected components, and each component contains at least one of $\{u,v,w\}$, as otherwise $H$ would be disconnected. Let $H_i'$ be obtained by contracting each 2EC component $C$ of $H_i$ into a single super-node $C$, and let $C_i(x)$ be the corresponding super-node of the 2EC component of $H$ that contains $x \in V(H_i)$.

	First, assume that $H_i$ is a single connected component.
	In this case $H_i'$ is a tree. If $H_i'$ is composed of a single vertex, $H_i$ is of type $\typA$. Otherwise, let $F$ be the set of edges of $H_i'$ that are on any simple path between $C_i(u)$, $C_i(v)$, and $C_i(w)$. If $H_i'$ contains an edge $e \in E(H_i') \setminus F$, then $e$ cannot belong to any cycle of $H$, because it does not belong to a cycle in $H_i$ and is not on a simple path between any two vertices of the cut $\{u,v,w\}$, contradicting that $H$ is 2EC. Thus, $H_i$ is of type $\typCi$ if $C_i(u)$, $C_i(v)$, and $C_i(w)$  are distinct, and of type $\typBi$ otherwise.

	Second, assume that $H_i$ is composed of two connected components.
	Let $H_i'(v)$ and $H_i'(w)$ denote the two trees in $H_i'$ such that w.l.o.g.\ $u$ and $v$ are in $H_i'(v)$ and $w$ is in $H_i'(w)$.
	First note that $H_i'(w)$ must be a single vertex. Otherwise, there is an edge $e$ in $H_i'(w)$ that cannot be part of any cycle in $H$, contradicting that $H$ is 2EC.
	If $H_i'(v)$ is also a single vertex, then $H_i$ is of type $\typBii$.
	Otherwise, let $F$ be the set of edges of $H_i'(v)$ on the simple path between $C_i(u)$ and $C_i(v)$.
	If $H_i'(v)$ contains an edge $e \in E(H_i'(v)) \setminus F$, then $e$ cannot belong to any cycle of $H$, contradicting that $H$ is 2EC. Thus, $H_i$ is of type $\typCii$.

	Finally, assume that $H_i$ is composed of three connected components.
	Let $H_i'(u)$, $H_i'(v)$, and $H_i'(w)$ denote the trees of $H_i'$ that contain $u$, $v$, and $w$, respectively.
	Using the same argument as before, we can show that each tree is composed of a single vertex, and, thus, $H_i$ is of type $\typCiii$.
\end{proof}

\begin{lemma}\label{lemma:3vc-opt-1-large}
	In \Cref{alg:3-vertex-cuts} it holds that $\opt_1^{\min} \geq \ceil{\frac{2}{\alpha-1}} \geq 8$.
\end{lemma}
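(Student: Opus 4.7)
The plan is to establish the stronger inequality $|\OPT_1^t|\geq |V_1|+1$ for every type $t\in \typesThreeVC$ for which $\OPT_1^t$ exists, by combining a handshake count over $V_1$ with the cut condition imposed by 2-edge-connectivity of the global solution. Once this is done, the partition chosen in \Cref{alg:3-vertex-cuts} guarantees $|V_1|\geq \ceil*{\frac{2}{\alpha-1}}-1$, which yields the first inequality, and $\alpha\leq \nicefrac54$ gives $\frac{2}{\alpha-1}\geq 8$, yielding the second.

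First I would fix a type $t$ for which $\OPT_1^t$ exists and, by definition, pick a 2EC spanning subgraph $H$ of $G$ with $\OPT_1^t=H\cap E(G_1)$. The essential observation is that since $\{u,v,w\}$ is a 3-vertex cut separating $V_1$ from $V_2$ and $G$ contains no $V_1$--$V_2$ edges, every edge of $G$ incident to a vertex in $V_1$ lies in $E(G_1)$ and hence in $\OPT_1^t$. Consequently $\deg_{\OPT_1^t}(x)=\deg_H(x)\geq 2$ for every $x\in V_1$. Splitting the edges of $\OPT_1^t$ into those with both endpoints in $V_1$ (count $e_{11}$), those with exactly one endpoint in $V_1$ and one in $\{u,v,w\}$ (count $e_{1c}$), and those with both endpoints in $\{u,v,w\}$ (count $e_{cc}$), handshaking over $V_1$ gives
$$2e_{11}+e_{1c}=\sum_{x\in V_1}\deg_{\OPT_1^t}(x)\geq 2|V_1|.$$

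The second step is to show $e_{1c}\geq 2$. The edge cut of $G$ separating $V_1$ from $V(G)\setminus V_1$ consists exactly of the edges between $V_1$ and $\{u,v,w\}$, because no edge joins $V_1$ and $V_2$; since $H$ is 2EC, at least two edges of this cut must lie in $H$, and they are automatically contained in $\OPT_1^t$. Summing this bound with the handshake inequality and using $|\OPT_1^t|=e_{11}+e_{1c}+e_{cc}\geq e_{11}+e_{1c}$ yields
$$2|\OPT_1^t|\geq 2(e_{11}+e_{1c})=(2e_{11}+e_{1c})+e_{1c}\geq 2|V_1|+2,$$
so $|\OPT_1^t|\geq |V_1|+1\geq \ceil*{\frac{2}{\alpha-1}}$. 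Taking the minimum over existing types gives $\opt_1^{\min}\geq \ceil*{\frac{2}{\alpha-1}}\geq 8$, with the final inequality from $\alpha-1\leq \nicefrac14$. I do not anticipate a genuine obstacle: the only subtlety is to invoke that $\OPT_1^t$ is part of a global 2EC spanning subgraph $H$, which is exactly what activates both the degree bound (through the absence of $V_1$--$V_2$ edges) and the cut bound (through 2-edge-connectivity of $H$).
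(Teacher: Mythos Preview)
Your argument is correct and reaches exactly the same bound $|\OPT_1^t|\geq |V_1|+1$ that the paper uses. The packaging differs slightly: the paper contracts $\{u,v,w\}$ to a single vertex, observes that the edges of $\OPT_1^t$ with at least one endpoint in $V_1$ form a 2ECSS of $G_1|\{u,v,w\}$ (because the global $H$ remains 2EC after contracting $V_2\cup\{u,v,w\}$), and then invokes the standard fact that a 2EC graph on $|V_1|+1$ vertices has at least $|V_1|+1$ edges. You instead unpack this last fact explicitly via the handshake identity on $V_1$ combined with the cut bound $e_{1c}\geq 2$. The two arguments are equivalent at the core; the paper's is a touch more concise, while yours is more self-contained and makes the role of 2-edge-connectivity (degree at least $2$ plus no small cuts) fully explicit.
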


\begin{proof}
	Consider any feasible 2ECSS solution $H_1$ for $G_1$, and let $H'_1 = H_1 \setminus \{u,v,w\}$ be the corresponding solution for $G_1 | \{u,v,w\}$.
	Since $G_1 | \{u,v,w\}$ contains at least $\ceil{\frac{2}{\alpha-1}}$ vertices and $H'_1$ is a feasible 2ECSS solution, $H'_1$ must contain at least $\ceil{\frac{2}{\alpha-1}}$ edges.
	Thus, $|H_1| \geq |H'_1| \geq \ceil{\frac{2}{\alpha-1}}$.
\end{proof}

We have an immediate corollary from the above lemma, \cref{lemma:type-A-high-cost}, and $\alpha \geq \frac{6}{5}$.

\begin{corollary}\label{lemma:3vc-type-A-condition}
	If \Cref{alg:3-vertex-cuts} reaches \Cref{reduce-3vc:one-small} and $\OPT_1^{\typA}$ exists, then $\opt_1^{\typA}  \geq \opt_1^{\min} + 3$. In particular, $t^{\min} \neq \typA$.
\end{corollary}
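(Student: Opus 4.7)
The plan is a straightforward arithmetic combination of the two preceding results. Starting from \Cref{lemma:type-A-high-cost}, we have $\opt_1^{\typA} \geq \alpha \cdot \opt_1^{\min} + 1$, which I would rewrite as
\[
\opt_1^{\typA} - \opt_1^{\min} \;\geq\; (\alpha - 1)\cdot \opt_1^{\min} + 1.
\]

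The next step is to lower-bound the right-hand side by $3$. By \Cref{lemma:3vc-opt-1-large}, the algorithm ensures $\opt_1^{\min} \geq \lceil \tfrac{2}{\alpha - 1}\rceil$, and hence
\[
(\alpha - 1) \cdot \opt_1^{\min} \;\geq\; (\alpha - 1)\cdot \Bigl\lceil \tfrac{2}{\alpha - 1}\Bigr\rceil \;\geq\; (\alpha - 1) \cdot \tfrac{2}{\alpha - 1} \;=\; 2.
\]
Plugging this into the previous inequality yields $\opt_1^{\typA} - \opt_1^{\min} \geq 3$, which is the first claim.

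For the second claim ($t^{\min} \neq \typA$), I would simply observe that by the definition of $\opt_1^{\min}$ and $t^{\min}$ in \Cref{reduce-3vc:compute-opt-types}, setting $t^{\min} = \typA$ would force $\opt_1^{\min} = \opt_1^{\typA}$, directly contradicting the gap of $3$ just established. There is no real obstacle here; the only thing to double-check is that the hypothesis $\alpha \geq \tfrac{6}{5}$ assumed throughout the section (and used to guarantee $\lceil \tfrac{2}{\alpha - 1}\rceil \geq 8$) is consistent with the regime in which \Cref{reduce-3vc:one-small} is reached, which it is by \Cref{prop:properties-for-3vc-alg}.
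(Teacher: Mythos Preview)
Your proposal is correct and follows essentially the same approach as the paper: combine \Cref{lemma:type-A-high-cost} with \Cref{lemma:3vc-opt-1-large}. If anything, your arithmetic is slightly cleaner---the paper plugs in $\alpha \geq \tfrac{6}{5}$ and $\opt_1^{\min}\geq 8$ to get $\opt_1^{\typA}\geq \opt_1^{\min}+\tfrac{13}{5}$ and then invokes integrality, whereas your bound $(\alpha-1)\cdot\lceil \tfrac{2}{\alpha-1}\rceil \geq 2$ lands directly on $3$.
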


\begin{proof}
	\[
		\opt_1^{\typA} \geq \frac{6}{5} \opt_1^{\min} + 1
		\geq \opt_1^{\min} +  \frac{1}{5} 8 + 1 \geq \opt_1^{\min} + \frac{13}{5}
	\]
	Thus, $\opt_1^{\typA} \geq \opt_1^{\min} + 3$.
\end{proof}

\begin{lemma}\label{lemma:3vc-spanning-tree}
	Let $G = (V,E)$ be a (multi-)graph that does not contain cut vertices or non-isolating 2-vertex cuts, and let $\{u,v,w\}$ be a large 3-vertex cut of $G$.
	Let $(V_1,V_2)$ be a partition of $V \setminus \{u,v,w\}$ such that $7 \leq \abs{V_1} \leq \abs{V_2}$ and there are no edges between $V_1$ and $V_2$ in $G$, and let $G_1 \coloneq G[V_i \cup \{u,v,w\}]$ and $G_2 \coloneq G[V_i \cup \{u,v,w\}] \setminus \{uv, vw, uw\}$.
	For $i \in \{1,2\}$, let $H_i \subseteq E(G_i)$, let $H_i(z)$ be the connected component of $H_i$ that contains $z \in V(H_i)$ and let $H_i$ be such that $V(H_i) = V(H_i(u)) \cup V(H_i(v)) \cup V(H_i(w))$, i.e., each vertex $y \in V(H_i)$ is connected to $u$, $v$, or $w$ in $H_i$. Then, the following holds.
	\begin{enumerate}
		\item For $x \in \{u,v,w\}$ and $i \in \{1,2\}$, if $H_i(x) \neq H_i$, then there exists an edge $e \in E(G_i) \setminus H_i$ between the vertices of $H_i(x)$ and the vertices of $H_i \setminus H_i(x)$.
		\item For $i \in \{1,2\}$, if $H_i(u)$, $H_i(v)$ and $H_i(w)$ are all pairwise vertex-disjoint, then there exist edges $F \subseteq E(G_i) \setminus H_i$ such that $|F| \geq 2$ and $F \subseteq \{e_{uv},e_{vw},e_{uw}\}$ where $e_{xy}$ is between the vertices of $H_i(x)$ and the vertices of $H_i(y)$.
	\end{enumerate}
\end{lemma}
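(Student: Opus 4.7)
The plan is to prove Part 1 by contradiction, exploiting the structural assumptions on $G$ guaranteed by \Cref{prop:properties-for-3vc-alg} (no cut vertices, no non-isolating $2$-vertex cuts), and then to derive Part 2 as a direct consequence of three applications of Part 1.

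For Part 1, I would fix $x \in \{u,v,w\}$ with $H_i(x) \neq H_i$, set $S := V(H_i(x))$ and $T := V(H_i) \setminus S$, and assume toward contradiction that $G_i$ has no edge between $S$ and $T$ (since $S$ and $T$ lie in different $H_i$-components, any such edge would automatically belong to $E(G_i) \setminus H_i$). The analysis then splits on $k := |S \cap \{u,v,w\}|$. The case $k = 3$ immediately yields $V(H_i) = S$, contradicting $H_i(x) \neq H_i$. For $k = 1$, the vertices in $S \setminus \{x\} \subseteq V_i$ have all their $G$-neighbours inside $V(G_i)$ (since $G_i$ contains every $G$-edge incident to $V_i$) and, by the no-edge assumption, none lie in $T$; hence $x$ is a cut vertex of $G$ when $|S| \geq 2$, while if $S = \{x\}$ then $x$ has no $G_i$-neighbours at all, so $\{u,v,w\} \setminus \{x\}$ becomes a non-isolating $2$-vertex cut with both sides $V_i$ and $V_{3-i} \cup \{x\}$ of size $\geq 7$. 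For $k = 2$, writing $\{x, y\} = S \cap \{u,v,w\}$ with $z$ the third cut vertex, the same reasoning shows $\{x,y\}$ would be a $2$-vertex cut of $G$: if $|S \setminus \{x,y\}| \geq 2$ it is non-isolating because $V \setminus S \supseteq V_{3-i} \cup \{z\}$ has size $\geq 8$, while if $|S \setminus \{x,y\}| \leq 1$ the no-edge assumption forces $V_i \setminus S$ (of size $\geq 6$) to have no $G$-neighbours in $\{x,y\}$, turning $z$ into a cut vertex. Every sub-case yields a forbidden substructure.

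For Part 2, pairwise disjointness of $H_i(u), H_i(v), H_i(w)$ implies $V(H_i) \setminus V(H_i(x)) = V(H_i(y)) \cup V(H_i(z))$ whenever $\{x,y,z\} = \{u,v,w\}$. Applying Part 1 to each of $u, v, w$ produces, respectively, an edge from $\{e_{uv}, e_{uw}\}$, from $\{e_{uv}, e_{vw}\}$, and from $\{e_{uw}, e_{vw}\}$ lying in $E(G_i) \setminus H_i$; pigeonholing shows that at least two of $\{e_{uv}, e_{vw}, e_{uw}\}$ must actually exist, providing the required set $F$ with $|F| \geq 2$. I expect the main technical obstacle to be the $k = 2$, $|S \setminus \{x,y\}| \leq 1$ sub-case of Part 1: the candidate $2$-vertex cut $\{x,y\}$ is necessarily isolating there, so the contradiction cannot be read off from the forbidden non-isolating $2$-cuts directly, and instead has to be rerouted through a $1$-vertex cut at the third cut vertex $z$, leveraging the largeness bound $|V_i| \geq 7$ coming from the definition of a large $3$-vertex cut.
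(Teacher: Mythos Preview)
Your proposal is correct and follows the same strategy as the paper: prove Part~1 by contradiction, producing either a forbidden cut vertex or a forbidden non-isolating $2$-vertex cut in $G$, and then read off Part~2 from repeated applications of Part~1. Your case split on $k = |S \cap \{u,v,w\}|$ is actually more careful than the paper's two-case argument --- the paper's claim that ``$x$ is a cut vertex'' whenever $|H_i(x)| \geq 2$ tacitly assumes $H_i(x) \cap \{u,v,w\} = \{x\}$, and your $k=2$ sub-case fills in exactly what is missing there; for Part~2 the paper applies Part~1 only twice (once to $u$, then to whichever of $v,w$ is not already linked) rather than three times plus pigeonhole, but both routes are fine.
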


\begin{proof}
	We prove the first claim, and note that the second claim follows from applying the first claim twice.
	Let $i \in \{1,2\}$, $x \in \{u,v,w\}$ and $H_i(x) \neq H_i$.
	If $H_i(x)$ is composed of the single vertex $x$ such that there is no edge between $x$ and $H_i \setminus H_i(x)$, then $\{u,v,w\} \setminus \{x\}$ must be a non-isolating two-vertex cut in $G$, a contradiction. Otherwise, that is, $H_i(x)$ consists of at least two vertices, and there is no edge between $H_i(x)$ and $H_i \setminus H_i(x)$ then $x$ is a cut vertex in $G$ (separating $V(H_i(x)) \setminus \{x\}$ from $V \setminus V(H_i(x))$), a contradiction.
\end{proof}

\begin{lemma}\label{lemma:3vc-make-2ec}
	Let $H \subseteq E$ be such that $H$ is connected and contains a spanning subgraph of $G$, and let $H_i \coloneq H \cap E(G_i)$ for $i \in \{1,2\}$.
	If for all $i \in \{1,2\}$ every edge $e \in H_i$ is part of a 2EC component of $H_i$ or lies on an $x-y$ path in $H_i$ where $x,y \in \{u,v,w\}$ such that $x \neq y$ and there exists an $x-y$ path in $H_{j}$ where $j=1$ if $i=2$ and $j=2$ if $i=1$, then $H$ is a 2EC spanning subgraph of $G$.
\end{lemma}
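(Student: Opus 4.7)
The plan is to show that $H$ is 2-edge-connected by verifying that (i) $V(H) = V(G)$, which is immediate since $H$ contains a spanning subgraph of $G$, (ii) $H$ is connected, which is given, and (iii) $H$ has no bridges. Since a connected bridgeless graph on $V(G)$ is a 2ECSS of $G$, once (iii) is established we are done.

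To establish (iii), I would fix an arbitrary edge $e \in H$ and show that $e$ lies on a cycle in $H$. By symmetry, assume without loss of generality that $e \in H_1$. The hypothesis gives two cases. In the easy case, $e$ belongs to a 2EC subgraph of $H_1 \subseteq H$, so $e$ already lies on a cycle within $H_1$. In the second case, $e$ lies on an $x$--$y$ path $P_1 \subseteq H_1$ with $x, y \in \{u,v,w\}$ and $x \neq y$, and by hypothesis there exists an $x$--$y$ path $P_2 \subseteq H_2$. I then want to argue that the union $P_1 \cup P_2$ contains a cycle through $e$.

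The main obstacle is the subtlety that $P_2$ could pass through the third cut vertex $z \in \{u,v,w\} \setminus \{x,y\}$, and likewise $P_1$ could too, so $P_1$ and $P_2$ need not be internally vertex-disjoint at $z$; hence $P_1 \cup P_2$ is not automatically a simple cycle. I would circumvent this by using the standard characterization that $e = ab$ is not a bridge of $H$ iff $a$ and $b$ remain in the same component of $H \setminus \{e\}$. Write $P_1$ as the concatenation of its subpath $Q_a$ from $x$ to $a$, the edge $e = ab$, and its subpath $Q_b$ from $b$ to $y$. Then $Q_a \cdot P_2 \cdot Q_b$ is a walk from $a$ to $b$ in $H \setminus \{e\}$; this works because $Q_a, Q_b \subseteq H_1$, $P_2 \subseteq H_2$, and $H_1, H_2$ are edge-disjoint (since $E(G_1) \cap E(G_2) = \emptyset$ as $G_2$ has no edges among $\{u,v,w\}$), so $e$ itself is not used. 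Therefore $a$ and $b$ are still connected in $H \setminus \{e\}$, so $e$ is not a bridge.

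Since every edge $e \in H$ lies on a cycle (equivalently, is not a bridge) and $H$ is connected and spanning, $H$ is a 2EC spanning subgraph of $G$, completing the proof.
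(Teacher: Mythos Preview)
Your proof is correct and follows essentially the same approach as the paper: show that $H$ is spanning, connected, and bridgeless by arguing that every edge either sits in a 2EC component of some $H_i$ or lies on an $x$--$y$ path in $H_i$ that closes up into a cycle via the edge-disjoint $x$--$y$ path in $H_j$. The paper's proof is terser (it simply says ``$e$ is on a cycle of $H$ as both paths are edge-disjoint''), whereas you are more explicit about why potential vertex overlap at the third cut vertex does not cause a problem, handling it cleanly via the walk $Q_a^{-1}\cdot P_2\cdot Q_b^{-1}$ in $H\setminus\{e\}$.
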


\begin{proof}
	We show that every edge $e \in H$ is part of a 2EC component of $H$, which proves the lemma because $H$ is connected and spanning.
	If for all $i \in \{1,2\}$ every edge $e \in H_i$ is part of a 2EC component of $H_i$, it is also part of a 2EC component of $H$. Otherwise, for all $i \in \{1,2\}$, if $e$ lies on an $x-y$ path in $H_i$ where $x,y \in \{u,v,w\}$ such that $x \neq y$ and there exists an $x-y$ path in $H_{j}$ where $j=1$ if $i=2$ and $j=2$ if $i=1$, then $e$ is on a cycle of $H$ as both paths are edge-disjoint.
\end{proof}

\begin{lemma}\label{lemma:3vc-both-large-F}
	If the condition in \Cref{reduce-3vc:both-large} in \Cref{alg:3-vertex-cuts} applies and both, $H'_1$ and $H'_2$ are 2EC spanning subgraphs of $G'_1$ and $G'_2$, respectively, then there exist edges $F'$ with $|F'| \leq 4$ such that $H'$ is a 2EC spanning subgraph of $G$.
\end{lemma}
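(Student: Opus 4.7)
The plan is to analyze the structure of the decontracted subgraphs $H_1 \subseteq E(G_1)$ and $H_2 \subseteq E(G_2)$ obtained from $H_1'$ and $H_2'$ by reversing the contraction of $\{u,v,w\}$. Since $H_i'$ is 2EC in $G_i'$, each $H_i$ has a well-structured relationship with $\{u, v, w\}$: every connected component of $H_i$ must contain at least one vertex of $\{u, v, w\}$ (otherwise $H_i'$ would be disconnected), and every bridge of $H_i$ must lie on a path between two distinct vertices of $\{u, v, w\}$ (otherwise the contraction would leave a bridge in $H_i'$). Hence, for each $i$, $H_i$ consists of at most three 2EC blocks, each containing at least one of $\{u,v,w\}$, connected by bridge-paths running between vertices of $\{u,v,w\}$. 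In particular, denoting by $H_i(x)$ the connected component of $H_i$ containing $x \in \{u,v,w\}$, we have $V(H_i) = V(H_i(u)) \cup V(H_i(v)) \cup V(H_i(w))$.

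The construction of $F'$ proceeds as follows. For each $i \in \{1,2\}$, if the components $H_i(u)$, $H_i(v)$, $H_i(w)$ are not all equal, I apply the first statement of \Cref{lemma:3vc-spanning-tree} iteratively (or its second statement in the tripartite case) to obtain at most two edges in $E(G_i) \setminus H_i$ that merge them into a single component. This yields a set $F'$ with $|F'| \leq 2+2 = 4$. Having added these edges, I claim that $H' = H_1 \cup H_2 \cup F'$ is 2EC. Connectivity is immediate since every vertex of $V_i$ has degree at least $2$ in $H_i$ and the added edges together with $\{u,v,w\}$ glue everything together. For 2-edge-connectivity I verify the hypotheses of \Cref{lemma:3vc-make-2ec}: every internal edge of a 2EC component of $H_i$ remains on its original cycle; every bridge $e$ of $H_i$ lies on an $x$-$y$ path in $H_i$ for some distinct $x,y \in \{u,v,w\}$, and a complementary $x$-$y$ path exists in $H_{3-i} \cup F'$ because the merging step ensured $H_{3-i}(x)$ and $H_{3-i}(y)$ lie in the same component; and each edge of $F'$ is itself on a cycle via the symmetric argument using the other side.

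The main obstacle is the structural verification that two edges per side always suffice and that the required edges genuinely exist in $G_i$, which relies on \Cref{lemma:3vc-spanning-tree}. The worst-case scenario, where both $H_1$ and $H_2$ decompose into three pairwise disjoint 2EC components (one each containing $u$, $v$, and $w$), attains the bound $|F'| = 4$; here \Cref{lemma:3vc-spanning-tree}(2) provides two merging edges on each side, and the resulting structure is 2EC because each added bridge on one side is covered by the symmetric bridge-plus-glue-vertex path on the opposite side. The remaining combinations are simpler and can be handled uniformly, since whenever one of the two sides already satisfies $H_i(u) = H_i(v) = H_i(w)$ no edges are needed from that side.
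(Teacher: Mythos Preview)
Your proposal is correct and follows essentially the same approach as the paper: decontract each $H_i'$, use \Cref{lemma:3vc-spanning-tree} to find at most two connecting edges $F_i$ per side (giving $|F'| \leq 4$), and then invoke \Cref{lemma:3vc-make-2ec} to certify 2-edge-connectivity of the union. Your exposition is more detailed than the paper's (which compresses the argument to a few lines), and your phrase ``at most three 2EC blocks'' is slightly imprecise---there can be more blocks along the bridge paths---but the operative fact you use, namely that each $H_i$ has at most three connected components and every edge is either in a 2EC component or on a path between two cut vertices, is exactly what the paper uses.
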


\begin{proof}
	Let $i \in \{1,2\}$, and let $H'_i(u)$, $H'_i(v)$, and $H'_i(w)$ be the connected components of $H'_i$ after decontracting $\{u,v,w\}$ that contain $u$, $v$ and $w$, respectively.
	By \Cref{lemma:3vc-spanning-tree}, there exist edges $F_i \subseteq E(G_i)$ such that $|F_i| \leq 2$ and $H'_i \cup F_i$ is connected in $G_i$.
	Since $H'_i$ is spanning in $G_i$, setting $F \coloneq F_1 \cup F_2$ implies that $|F| \leq 4$ and that $H' = H'_1 \cup H'_2 \cup F$ is spanning and connected in $G$.
	Moreover, observe that for $i \in \{1,2\}$ each edge $e \in H'_i \cup F_i$ is either in a 2EC component of $H'_i \cup F_i$ or lies on an $x-y$ path in $H'_i \cup F_i$, where $x, y \in \{u, v, w\}$, $x \neq y$. Thus, \Cref{lemma:3vc-make-2ec} implies that $H'$ is a 2EC spanning subgraph of $G$.
\end{proof}

\begin{lemma}\label{lemma:3vc-b1-F}
	If the condition in \Cref{reduce-3vc:b1} in \Cref{alg:3-vertex-cuts} applies and $\Reduce(G_2^{\typBi})$ is a 2EC spanning subgraph of $G_2^{\typBi}$, then there exist edges $F^{\typBi}$ with $|F^{\typBi}| \leq 1$ such that $H^\typBi$ is a 2EC spanning subgraph of $G$.
\end{lemma}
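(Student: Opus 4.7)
The plan is to prove that $H^{\typBi} = \OPT_1^{\typBi} \cup H_2^{\typBi} \cup F^{\typBi}$ is a 2EC spanning subgraph by invoking \cref{lemma:3vc-make-2ec}. I view $H_2^{\typBi} = \Reduce(G_2^{\typBi})$ as a subgraph of $G_2$ via the natural correspondence between edges of $G_2'$ and edges of $G_2$, and write $H_2^{\typBi}(x)$ for the connected component of $H_2^{\typBi}$ containing $x \in \{u,v,w\}$.

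The first step is a structural sub-lemma: \emph{every bridge $b$ of $H_2^{\typBi}$ lies on some $x$-$y$ path in $H_2^{\typBi}$ with distinct $x,y \in \{u,v,w\}$}. Since $\Reduce(G_2^{\typBi})$ is 2EC in $G_2'$, the edge $b$ lies on some cycle $C$ of the contracted graph. If $C$ avoided the super-vertex, it would survive the decontraction as a cycle of $H_2^{\typBi}$ and $b$ would not be a bridge, so $C$ must pass through the super-vertex; moreover, its two super-vertex-incident edges must map to distinct elements of $\{u,v,w\}$ after decontraction (else $C$ would again survive as a cycle), which converts $C$ into the required $x$-$y$ path through $b$. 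A consequence is that any connected component of $H_2^{\typBi}$ meeting exactly one of $\{u,v,w\}$ must itself be 2EC.

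Next I split into two cases depending on whether $H_2^{\typBi}(w)$ coincides with $H_2^{\typBi}(u)$ or $H_2^{\typBi}(v)$. In the positive case I set $F^{\typBi} = \emptyset$ and check the hypothesis of \cref{lemma:3vc-make-2ec}: bridges of $\OPT_1^{\typBi}$ all lie on the $C_1$-to-$C_k$ chain and hence on $u$-$w$ (and $v$-$w$) paths, for which $H_2^{\typBi}$ now offers a matching path; bridges of $H_2^{\typBi}$ lie on $x$-$y$ paths by the sub-lemma, and the matching paths in $\OPT_1^{\typBi}$ are supplied either by the 2EC block $C_1$ (for the $u$-$v$ pair, since both $u,v \in V(C_1)$) or by the chain (for the other two pairs). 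In the opposite case, I apply \cref{lemma:3vc-spanning-tree}(1) with $x=w$ to $H_2^{\typBi}$---the hypothesis $V(H_2^{\typBi}) = V(H_2^{\typBi}(u)) \cup V(H_2^{\typBi}(v)) \cup V(H_2^{\typBi}(w))$ holds because $H_2^{\typBi}$ is connected in $G_2'$ and hence every $V_2$-vertex reaches some of $\{u,v,w\}$ after decontraction---to obtain a single edge $e \in E(G_2) \setminus H_2^{\typBi}$ between $V(H_2^{\typBi}(w))$ and the rest of $V(H_2^{\typBi})$, and set $F^{\typBi} = \{e\}$. This edge merges $H_2^{\typBi}(w)$ with $H_2^{\typBi}(u)$ or $H_2^{\typBi}(v)$ and itself serves as the missing $u$-$w$ or $v$-$w$ alternate route via the 2EC substructure of that component, so the verification goes through with $H_2^{\typBi} \cup \{e\}$ in place of $H_2^{\typBi}$.

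The main obstacle I anticipate is the structural sub-lemma, along with its consequence for the seemingly dangerous $\typCiii$-like configuration in which $u$, $v$, and $w$ sit in three distinct components of $H_2^{\typBi}$. A priori one might fear that two additional edges are needed to patch all three components into one, but the sub-lemma guarantees that the component containing the lone $v$ is already 2EC, so the $V_2$-side of any cut confined to that component is already crossed twice by the $C_1$-edges incident to $v$ in $\OPT_1^{\typBi}$. A single edge added from $H_2^{\typBi}(w)$ to one of the other components is therefore enough, and $|F^{\typBi}| \le 1$.
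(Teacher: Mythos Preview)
Your proof is correct and follows essentially the same route as the paper's: both split on whether $w$ is connected to (one of) $u,v$ in the decontracted $H_2^{\typBi}$, set $F^{\typBi}=\emptyset$ in the positive case, and otherwise invoke \Cref{lemma:3vc-spanning-tree} to obtain a single edge out of $H_2^{\typBi}(w)$, finishing via \Cref{lemma:3vc-make-2ec}. Your explicit structural sub-lemma (every bridge of $H_2^{\typBi}$ lies on an $x$--$y$ path with distinct $x,y\in\{u,v,w\}$) and its consequence that a component meeting only one cut vertex is 2EC are things the paper uses implicitly in its terse justification ``every edge in $H^{\typBi}$ is either in the 2EC component containing $u$ and $v$, in a 2EC component containing $w$ or on a $w$--$v$ path''; you have made this reasoning watertight, in particular for the three-component configuration that the paper's wording does not visibly cover.
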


\begin{proof}
	Assume w.l.o.g.\ that $u$ and $v$ belong to the same 2EC component of $\OPT_1^{\typBi}$.
	Let $H_1^{\typBi} \coloneq \OPT_1^{\typBi}$ and $C_i^{\typBi}(x)$ be the 2EC component of $H_i^{\typBi}$ that contains $x$, where $x \in \{u, v, w\}$ and $i \in \{1, 2\}$.
	Since $C_1^{\typBi}(u) = C_1^{\typBi}(v)$, we have that in $\OPT_1^{\typBi} \cup H_2^{\typBi}$, the vertices of $C_1^{\typBi}(u)$, $C_2^{\typBi}(u)$, and $C_2^{\typBi}(v)$ are in the same 2EC component.
	Furthermore, in $\OPT_1^{\typBi}$, there is a connection between $C_1^{\typBi}(u)$ and $C_1^{\typBi}(w)$.
	If $w$ and $v$ are already connected in $H_2^{\typBi}$ then we can set $F^{\typBi} \coloneq \emptyset$ and see that this satisfies the lemma, since $H^{\typBi}$ contains a spanning tree and every edge in $H^{\typBi}$ is either in the 2EC component containing $u$ and $v$, in a 2EC component containing $w$ or on a $w-v$ path in $G_i$, $i \in \{1, 2\}$.
	Otherwise, if $w$ and $v$ are not in the same connected component in $H_2^{\typBi}$, by \Cref{lemma:3vc-spanning-tree}, there must be an edge $e$ from $C_2^{\typBi}(w)$ to either $C_2^{\typBi}(v)$ or $C_2^{\typBi}(u)$ in $E(G_2) \setminus H^{\typBi}$.
	Now it can be easily verified that $F^{\typBi} \coloneq \{ e \}$ satisfies the lemma, since $H^{\typBi}$ contains a spanning tree and every edge in $H^{\typBi}$ is either in the 2EC component containing $u$ and $v$, in a 2EC component containing $w$ or on a $w-v$ path in $G_i$, $i \in \{1, 2\}$.
\end{proof}

\begin{lemma}\label{lemma:3vc-b2-F}
	If the condition in \Cref{reduce-3vc:b2} in \Cref{alg:3-vertex-cuts} applies and $\Reduce(G_2^{\typBii})$ is a 2EC spanning subgraph of $G_2^{\typBii}$, then there exist edges $F^{\typBii}$ with $|F^{\typBii}| \leq 2$ such that $H^\typBii$ is a 2EC spanning subgraph of $G$.
\end{lemma}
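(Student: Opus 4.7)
My plan is to prove $|F^{\typBii}| \le 2$ by first augmenting $\OPT_1^{\typBii}$ with one edge in $G_1$ to obtain a structure analogous to a type $\typBi$ solution on side~1, and then adapting the argument of \cref{lemma:3vc-b1-F} to add at most one further edge in $G_2$. First, I would apply item~(1) of \Cref{lemma:3vc-spanning-tree} to $\OPT_1^{\typBii}$ in $G_1$: since $\OPT_1^{\typBii}$ consists of exactly the two connected components $C_1^{\typBii}(u,v)$ (containing $u,v$) and $C_1^{\typBii}(w)$ (containing $w$) and spans $V_1 \cup \{u,v,w\}$, there is an edge $e_1 \in E(G_1) \setminus \OPT_1^{\typBii}$ between them, which I place in $F^{\typBii}$. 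After this step, $\OPT_1^{\typBii} \cup \{e_1\}$ is connected on $V_1 \cup \{u,v,w\}$ with the block structure of a type~$\typBi$ solution: two 2EC blocks joined by the bridge $e_1$, with $u,v$ in one block and $w$ in the other.

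Next, let $D(x)$ denote the connected component of $H_2^{\typBii}$ in $G_2$ containing $x \in \{u,v,w\}$. Two structural observations drive the analysis: (i)~any bridge of $H_2^{\typBii}$ viewed in $G_2$ must lie on a path between two distinct vertices of $\{u,v,w\}$, because in $G_2' = G_2|\{u,v,w\}$ it is on a cycle through the contracted vertex $z$; and (ii)~the subgraph $C_1^{\typBii}(u,v) \cup D(u) \cup D(v)$ is 2EC, since it is connected, every $u$-$v$ bridge inside $H_2^{\typBii}$ is covered by the 2EC graph $C_1^{\typBii}(u,v)$, and all other edges lie on internal cycles of the respective 2EC components (attached to $C_1^{\typBii}(u,v)$ at $u$ and $v$). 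Symmetrically, $C_1^{\typBii}(w) \cup D(w)$ is 2EC. Thus $H := \OPT_1^{\typBii} \cup H_2^{\typBii}$ splits into at most two 2EC components $A \supseteq \{u,v\}$ and $B \supseteq \{w\}$, which coincide precisely when $D(w) \in \{D(u), D(v)\}$.

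If $A = B$, then the bridge $e_1$ is already covered by a $u$-$w$ or $v$-$w$ path through $H_2^{\typBii}$ (using the component that contains both the $w$-side and one of $u,v$), so $H \cup \{e_1\}$ is 2EC and $F^{\typBii} = \{e_1\}$ suffices. Otherwise $A \ne B$, and I apply \Cref{lemma:3vc-spanning-tree} to $H_2^{\typBii}$ in $G_2$: item~(1) yields an edge $e_2 \in E(G_2) \setminus H_2^{\typBii}$ from $D(w)$ to $D(u) = D(v)$ when $D(u)=D(v) \ne D(w)$; when $D(u), D(v), D(w)$ are pairwise disjoint, item~(2) guarantees two edges among $\{e_{uv}, e_{vw}, e_{uw}\}$, of which at most one is $e_{uv}$, so at least one of the two is $e_{vw}$ or $e_{uw}$ and serves as the required $e_2$. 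Setting $F^{\typBii} = \{e_1, e_2\}$ then provides two edge-disjoint $A$-$B$ connections, so $H \cup \{e_1,e_2\}$ is 2EC. The main obstacle I foresee is exactly this pigeonhole argument in the pairwise-disjoint subcase, which guarantees that the two edges from item~(2) always include at least one crossing the $A$-$B$ cut, so that $|F^{\typBii}| \le 2$ is achievable in every configuration.
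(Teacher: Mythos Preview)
Your proof is correct and follows essentially the same approach as the paper: add one edge $e_1 \in E(G_1)$ via \Cref{lemma:3vc-spanning-tree} to turn $\OPT_1^{\typBii}$ into a type-$\typBi$ structure, and then reuse the argument of \Cref{lemma:3vc-b1-F} to add at most one further edge in $G_2$. The paper's proof is terser (it literally defers to ``the proof of the previous lemma''), whereas you spell out the case analysis on the connected components $D(u), D(v), D(w)$ of $H_2^{\typBii}$ and use item~(2) of \Cref{lemma:3vc-spanning-tree} with the pigeonhole observation to guarantee an $A$-$B$ crossing edge in the pairwise-disjoint subcase; this is a valid and slightly more explicit way to carry out the same step. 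One minor imprecision: your sentence that $H$ ``splits into at most two 2EC components $A$ and $B$, which coincide precisely when $D(w)\in\{D(u),D(v)\}$'' is not literally true in the overlap case (the sets $A$ and $B$ are not equal there, and $A$ need not be 2EC on its own if $D(u)$ carries $u$-$w$ bridges), but your actual verification in that case---that $e_1$ is covered by a $u$-$w$ or $v$-$w$ path through $H_2^{\typBii}$ and all $H_2$-bridges are covered by paths in $\OPT_1^{\typBii}\cup\{e_1\}$---is correct and does not rely on that framing.
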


\begin{proof}
	The proof is similar to the previous one.
	Assume w.l.o.g.\ that $u$ and $v$ belong to the same 2EC component of $\OPT_1^{\typBii}$.
	Let $H_1^{\typBii} \coloneq \OPT_1^{\typBii}$ and $C_i^{\typBii}(x)$ be the 2EC component of $H_i^{\typBii}$ that contains $x$, where $x \in \{u, v, w\}$ and $i \in \{1, 2\}$.
	Since $C_1^{\typBii}(u) = C_1^{\typBii}(v)$, we have that in $\OPT_1^{\typBii} \cup H_2^{\typBii}$, the vertices of $C_1^{\typBii}(u)$, $C_2^{\typBii}(u)$ and $C_2^{\typBii}(v)$ are in the same 2EC component.
	Furthermore, in $\OPT_1^{\typBii}$, there is no connection between $C_1^{\typBi}(u)$ and $C_1^{\typBii}(w)$, since it is of type~$\typBii$.
	Hence, by \Cref{lemma:3vc-spanning-tree}, there must be an edge $e_1 \in E(G_1) \setminus \OPT_1^{\typBii}$.
	Note that $\OPT_1^{\typBii} \cup \{e_1\}$ is now a solution of type $\typBi$ for $G_1$, and hence we can do the same as in the proof of the previous lemma and observe that $\abs{F^{\typBii}} \leq 2$.
\end{proof}

\begin{lemma}\label{lemma:3vc-c1-F}
	If the condition in \Cref{reduce-3vc:c1} in \Cref{alg:3-vertex-cuts} applies and $\Reduce(G_2^{\typCi})$ is a 2EC spanning subgraph of $G_2^{\typCi}$, then there exist edges $F^{\typCi}$ with $|F^{\typCi}| \leq 2$ such that $H^\typCi$ is a 2EC spanning subgraph of $G$.
\end{lemma}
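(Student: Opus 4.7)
The proof will mirror \Cref{lemma:3vc-b1-F} and \Cref{lemma:3vc-b2-F}. Since $H_2^{\typCi} = \Reduce(G_2^{\typCi})$ is by assumption 2EC in $G_2' = G_2|\{u,v,w\}$, when viewed as a subgraph of $G_2$ it falls into one of the types in $\typesThreeVC$ with respect to $\{u,v,w\}$. The plan is to produce, for each possible type, a set $F^{\typCi} \subseteq E(G_2) \setminus H_2^{\typCi}$ with $|F^{\typCi}| \leq 2$ so that $H_2^{\typCi} \cup F^{\typCi}$ contains a $u$-$v$, a $u$-$w$, and a $v$-$w$ path, and then conclude via \Cref{lemma:3vc-make-2ec} that $H^\typCi$ is 2EC: every edge of $H_1^{\typCi} = \OPT_1^{\typCi}$ lies either in a 2EC component of $H_1^{\typCi}$ or on an $x$-$y$ tree path with $x,y \in \{u,v,w\}$ (this is exactly what type $\typCi$ guarantees, since otherwise such an edge could not be on a cycle in a full 2ECSS of $G$), while every edge of $H_2^{\typCi} \cup F^{\typCi}$ lies either in a 2EC component or on one of the three pairwise paths constructed above, each of which is matched by the corresponding $x$-$y$ tree path in $H_1^{\typCi}$.

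To produce $F^{\typCi}$, I would split cases on the type of $H_2^{\typCi}$. If the type is $\typA$, $\typBi$, or $\typCi$, all of $u, v, w$ already lie in the same connected component of $H_2^{\typCi}$, so all three pairwise paths exist and $F^{\typCi} = \emptyset$ suffices. If the type is $\typBii$ or $\typCii$, exactly one of $H_2^{\typCi}(u), H_2^{\typCi}(v), H_2^{\typCi}(w)$ (up to symmetry, the one containing $w$) is a separate connected component from the others; the first claim of \Cref{lemma:3vc-spanning-tree} applied to $H_2^{\typCi}$ in $G_2$ yields a single edge $e \in E(G_2) \setminus H_2^{\typCi}$ joining this component to the rest, and $F^{\typCi} = \{e\}$ then gives all three pairwise paths through the 2EC portions of $H_2^{\typCi}$. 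Finally, if the type is $\typCiii$, the three connected components are pairwise vertex-disjoint; the second claim of \Cref{lemma:3vc-spanning-tree} yields at least two edges among $\{e_{uv}, e_{vw}, e_{uw}\}$ in $E(G_2) \setminus H_2^{\typCi}$, and taking any two of them interconnects all three components and realizes the three required pairwise paths. In every case $|F^{\typCi}| \leq 2$ as required.

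The main obstacle is not a single difficult step but careful bookkeeping in the $\typBii$, $\typCii$, and $\typCiii$ sub-cases to verify that, after augmenting by $F^{\typCi}$, each original edge of $H_2^{\typCi}$ that is a bridge (for instance, a tree edge between 2EC blocks inside a single connected component in type $\typCii$) ends up on at least one of the three constructed pairwise paths, so that the precondition of \Cref{lemma:3vc-make-2ec} is satisfied on both sides simultaneously. Once this bookkeeping is done, 2-edge-connectedness of $H^{\typCi}$ follows immediately from \Cref{lemma:3vc-make-2ec}, completing the proof.
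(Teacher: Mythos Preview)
Your proposal is correct and follows essentially the same approach as the paper. The paper organizes the case analysis by the number of connected components of $H_2^{\typCi}$ (one, two, or three) rather than by the explicit types in $\typesThreeVC$, but these partitions coincide (types $\typA,\typBi,\typCi$ are the connected cases; $\typBii,\typCii$ the two-component cases; $\typCiii$ the three-component case), and in each case the paper invokes \Cref{lemma:3vc-spanning-tree} and \Cref{lemma:3vc-make-2ec} exactly as you do.
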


\begin{proof}
	Let $C_2^{\typCi}(x)$ be the 2EC component of $H_2^{\typCi}$ that contains $x$, for %
	$x \in \{u,v,w\}$.
	Note that $\OPT_1^{\typCi}$ is connected and spanning for $G_1$ and that each edge $e \in \OPT_1^{\typCi}$ is either in some 2EC component or it is on a $y-z$-path in $\OPT_1^{\typCi}$ for $y, z \in \{u,v,w\}$, $y \neq z$.
	We consider three cases.
	If $H_2^{\typCi}$ is connected, then we claim that $F^{\typCi} \coloneq \emptyset$ satisfies the lemma.
	In this case every edge of $H_2^{\typCi}$ is part of a 2EC component of $H_2^{\typCi}$, or it is on a $y-z$-path in $H_2^{\typCi}$ for $y, z \in \{u,v,w\}$, $y \neq z$.
	Hence, \Cref{lemma:3vc-make-2ec} implies that $H_2^{\typCi} \cup \OPT_1^{\typCi}$ is a 2EC spanning subgraph of $G$.

	Next, assume that $H_2^{\typCi}$ is composed of two connected components, say w.l.o.g.\ one containing $u$ and $v$ and the other containing $w$, and observe that each component is 2EC. Then \Cref{lemma:3vc-spanning-tree} guarantees the existence of an edge $e$ in $G_2$ between $C_2^{\typCi}(w)$ and $C_2^{\typCi}(u)$ such that $H_2^{\typCi} \cup \{e\}$ is connected.
	Thus, we can analogously to the previous case show that
	$\OPT_1^{\typCi} \cup H_2^{\typCi} \cup F^{\typCi}$ with $F^{\typCi} \coloneq \{e\}$ is a 2EC spanning subgraph of $G$, since in $ H_2^{\typCi} \cup F^{\typCi}$ each edge is either in some 2EC component or it is on a $y-z$-path in $ H_2^{\typCi} \cup F^{\typCi}$ for $y, z \in \{u,v,w\}$, $y \neq z$ (hence \Cref{lemma:3vc-make-2ec} holds).
	In the remaining case $H_2^{\typCi}$ is composed of three connected components, each containing exactly one vertex of $\{u, v, w\}$.
	Observe that each component is 2EC.
	Hence, \Cref{lemma:3vc-spanning-tree} guarantees the existence of at least two edges $e_1$ and $e_2$ in $G_2$ between two pairs of connected components of $H_2^{\typCi}$ such that $H_2^{\typCi} \cup \{e_1,e_2\}$ is connected.
	Thus, we can analogously to the previous cases show that $\OPT_1^{\typCi} \cup H_2^{\typCi} \cup F^{\typCi}$ with $F^{\typCi} \coloneq \{e_1,e_2\}$ is a 2EC spanning subgraph of $G$, since in $H_2^{\typCi} \cup F^{\typCi}$ each edge is either in some 2EC component or it is on a $y-z$-path in $ H_2^{\typCi} \cup F^{\typCi}$ for $y, z \in \{u,v,w\}$, $y \neq z$ (hence \Cref{lemma:3vc-make-2ec} holds).
\end{proof}

\begin{lemma}\label{lemma:3vc-c2-i-F}
	If the condition in \Cref{reduce-3vc:c2-i} in \Cref{alg:3-vertex-cuts} applies and $\Reduce(G_2^{\typCii})$ is a 2EC spanning subgraph of $G_2^{\typCii}$, then there exist edges $F^{\typCii}$ with $|F^{\typCii}| \leq 1$ such that $H^\typCii$ is a 2EC spanning subgraph of $G$ and $|F^{\typCii}| - |\Reduce(G_2^{\typCii})| + |H_2^{\typCii}| \leq - 2$.
\end{lemma}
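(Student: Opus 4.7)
The core observation is that the dummy vertex $y$ has degree exactly $2$ in $G_2^{\typCii}$, so both $uy$ and $vy$ must appear in every 2EC spanning subgraph; in particular, $R \coloneq \Reduce(G_2^{\typCii})$ contains $\{uy,vy\}$. The dummy edge $vw$ may or may not lie in $R$, so $|R| - |H_2^{\typCii}| \in \{2,3\}$. Thus the target bound $|F^{\typCii}| - |R| + |H_2^{\typCii}| \leq -2$ amounts to $|F^{\typCii}| \leq 0$ when $vw \notin R$ and $|F^{\typCii}| \leq 1$ when $vw \in R$. I would split the proof into these two cases.

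\textbf{Structural backbone.} Suppressing $y$ in $R$ replaces the subpath $u-y-v$ with a single (possibly parallel) edge $uv$, yielding a graph $R^\star$ on $V_2 \cup \{u,v,w\}$; this operation preserves 2-edge-connectivity, so $R^\star$ is 2EC. Using the hypothesis of case (i), every minimum type-$\typCii$ solution $\OPT_1^{\typCii}$ contains a $C(u)$-$C(v)$ path $P$ in $G_1$. Substituting $P$ for the single edge $uv$ in $R^\star$ still preserves 2-edge-connectivity, since edge subdivision preserves 2EC; finally, expanding each of $u,v,w$ into the 2EC components $C(u),C(v),C(w)$ of $\OPT_1^{\typCii}$ also preserves 2EC. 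Putting this together gives a 2EC spanning subgraph of $G$ of the form $\OPT_1^{\typCii} \cup (R \setminus \{uy,vy\})$ whenever $vw \notin R$.

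\textbf{Case A ($vw \notin R$).} Here the above construction produces exactly $H^{\typCii} = \OPT_1^{\typCii} \cup H_2^{\typCii}$, which is therefore 2EC. Setting $F^{\typCii} = \emptyset$ satisfies both $|F^{\typCii}| \leq 1$ and $|F^{\typCii}| - |R| + |H_2^{\typCii}| = -2$.

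\textbf{Case B ($vw \in R$).} Now $\bar H \coloneq \OPT_1^{\typCii} \cup H_2^{\typCii}$ differs from the 2EC graph produced above only by the missing edge $vw$. Removing $vw$ from a 2EC graph can only introduce a \emph{path} of 2EC blocks between (the contractions of) $v$ and $w$ joined by bridges; a single chord from the terminal block containing $V(C(w))$ to the terminal block containing $V(C(v))$ therefore covers all such bridges. The main obstacle is to exhibit such a chord $e \in E(G) \setminus (\OPT_1^{\typCii} \cup H_2^{\typCii})$: I would argue that since $G$ is 2EC and $\{u,v,w\}$ is the only cut separating the $C(w)$-block from the $C(v)$-block in $\bar H$, the cut $(V(B_w), V(G)\setminus V(B_w))$ (where $B_w$ is the block containing $C(w)$) admits at least two $G$-edges, so at least one exists outside of $\bar H$. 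Using the structure imposed by $R$ being a solution to $G_2^{\typCii}$ together with the property in case (i) that all $\OPT_1^{\typCii}$ use a $C(u)$-$C(v)$ path (and the fact that, as in case (ii)/(iii), no $C(v)$-$C(w)$ path is forced), this cross-edge will then cover the entire $v$-$w$ block-tree. Setting $F^{\typCii} = \{e\}$ gives $|F^{\typCii}| = 1$ and $|F^{\typCii}| - |R| + |H_2^{\typCii}| = 1 - 3 = -2$, completing the lemma. The hardest part is the edge-existence argument in Case B, where some care is needed to rule out that the bridge-tree could have length greater than~$1$, forcing the chord to exist as a single $G$-edge rather than requiring multiple ones.
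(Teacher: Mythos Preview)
Your Case~A is correct and matches the paper: when $vw\notin R$, both you and the paper conclude that $\OPT_1^{\typCii}\cup H_2^{\typCii}$ is already 2EC with $F^{\typCii}=\emptyset$. Your suppression/substitution argument is a clean way to see this.

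Case~B is where your approach and the paper's diverge, and your version has a real gap. You correctly observe that $\bar H\cup\{vw\}$ is 2EC, so the block tree of $\bar H$ is a path with $v$ and $w$ in the two end blocks $B_v$ and $B_w$. But your edge-existence argument does not work: knowing that $G$ is 2EC gives you a second $G$-edge leaving $V(B_w)$, yet that edge may land in any intermediate block $B_i$ rather than in $B_v$. If the block path has length $\geq 2$ (which it can --- for instance, whenever the $C(u)$--$C(v)$ path in $\OPT_1^{\typCii}$ has intermediate blocks and $u$ gets absorbed into $B_w$ via $H_2^{\typCii}$), a single such edge does not cover all bridges, and you have not ruled this out. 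Your closing remark that ``some care is needed to rule out that the bridge-tree could have length greater than~1'' is exactly the missing step, and nothing in your sketch (case-(i) hypothesis, structure of $R$) actually forces the block path to be short or forces a $B_v$--$B_w$ edge to exist.

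The paper avoids this difficulty entirely by looking for the extra edge on the \emph{other} side of the cut. Instead of asking for an edge between the end blocks of $\bar H$, it uses \Cref{lemma:3vc-spanning-tree} (which needs only that $G$ has no cut vertex and no non-isolating 2-vertex cut) to produce an edge $e\in E(G_1)$ from the isolated block $C_1^{\typCii}(w)$ to the rest of $\OPT_1^{\typCii}$. This always exists, with no control needed on where it lands. Feasibility is then verified not by a direct block-tree argument but via \Cref{lemma:3vc-make-2ec}, matching bridges in $\OPT_1^{\typCii}\cup\{e\}$ against paths in $H_2^{\typCii}$ and vice versa. So the fix to your Case~B is to seek $e$ inside $G_1$ rather than globally, and to replace the single-chord argument by the two-sided path-matching criterion of \Cref{lemma:3vc-make-2ec}.
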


\begin{proof}
	Fix an optimal \typCii solution $\OPT_1^{\typCii}$ for $G_1$. Note that both dummy edges $uy$ and $vy$ must be part of $\Reduce(G_2^{\typCii})$ as otherwise $\Reduce(G_2^{\typCii})$ cannot be feasible for $G_2^{\typCii}$. By the same reason, $H_2^{\typCii}$ must be connected and spanning for $G_2$.

	We first consider the case where $vw \in \Reduce(G_2^{\typCii})$.
	Let $C_1^{\typCii}(w)$ be the 2EC component of $\OPT_1^{\typCii}$ that contains $w$. %
	Since $C_1^{\typCii}(w)$ is isolated in $\OPT_1^{\typCii}$, \Cref{lemma:3vc-spanning-tree} implies that there exists an edge $e$ in $G_1$ between $C_1^{\typCii}(w)$ and $V(G_1) \setminus C_1^{\typCii}(w)$.
	Then, $\OPT_1^{\typCii} \cup \{e\}$ is connected and spanning for $G_1$, and every edge in $\OPT_1^{\typCii} \cup \{e\}$ is part of a 2EC component of $\OPT_1^{\typCii} \cup \{e\}$ or lies on a path in $\OPT_1^{\typCii} \cup \{e\}$ between any two distinct vertices of $\{u,v,w\}$ between which is a path in $H_2^{\typCii}$.
	Similarly, every edge in $H_2^{\typCii}$ is part of a 2EC component of $H_2^{\typCii}$ or lies on a path in $H_2^{\typCii}$ between any two distinct vertices of $\{u,v,w\}$ between which is a path in $\OPT_1^{\typCii} \cup \{e\}$.
	Thus, $\OPT_1^{\typCii} \cup H_2^{\typCii} \cup F^{\typCii}$ with $F^{\typCii} \coloneq \{e\}$ is a 2EC spanning subgraph of $G$ by \Cref{lemma:3vc-make-2ec}.  Moreover, $|F^{\typCii}| - |\Reduce(G_2^{\typCii})| + |H_2^{\typCii}| = 1 - 3 = -2$.

	If $vw \notin \Reduce(G_2^{\typCii})$, then every edge in $\OPT_1^{\typCii}$ is part of a 2EC component of $\OPT_1^{\typCii}$ or lies on a path in $\OPT_1^{\typCii}$ between $u$ and $v$ between which is also a path in $H_2^{\typCii}$.
	Further, every edge in $H_2^{\typCii}$ is part of a 2EC component of $H_2^{\typCii}$ or lies on a path in $H_2^{\typCii}$ between $u$ and $v$, and there is a path between $u$ and $v$ in $\OPT_1^{\typCii}$.
	Note that there cannot be an edge in $H_2^{\typCii}$ that is on a path between $w$ and $u$ or $v$ but not in a 2EC component, because then $\Reduce(G_2^{\typCii})$ cannot be feasible for $G_2^{\typCii}$ as $vw \notin \Reduce(G_2^{\typCii})$.
	Thus, $\OPT_1^{\typCii} \cup H_2^{\typCii}$ is a 2EC spanning subgraph of $G$ by \Cref{lemma:3vc-make-2ec}.
	We have that $|F^{\typCii}| - |\Reduce(G_2^{\typCii})| + |H_2^{\typCii}| = 0 - 2 = -2$.
\end{proof}

\begin{lemma}\label{lemma:3vc-c2-ii-F}
	If the condition in \Cref{reduce-3vc:c2-ii} in \Cref{alg:3-vertex-cuts} applies and $\Reduce(G_2^{\typCii})$ is a 2EC spanning subgraph of $G_2^{\typCii}$, then there exist edges $F^{\typCii}$ with $|F^{\typCii}| \leq 1$ and $\OPT_1^{\typCii}$ such that $H^\typCii$ is a 2EC spanning subgraph of $G$ and $|F^{\typCii}| - |\Reduce(G_2^{\typCii})| + |H_2^{\typCii}| \leq - 3$.
\end{lemma}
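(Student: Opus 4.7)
The plan is to mirror the case-analysis structure of the neighboring Lemma~\ref{lemma:3vc-c2-i-F}, but now on the richer dummy gadget $\{uy,vz,zy,wy\}$. First I would observe the forced edges: since $z$ has degree $2$ in $G_2^{\typCii}$ with incident edges $\{vz,zy\}$, both of these must belong to $\Reduce(G_2^{\typCii})$, and since $y$ must have degree at least $2$ in $\Reduce(G_2^{\typCii})$, at least one of $\{uy,wy\}$ must also be contained in it. This splits the analysis into three subcases: (A) both $uy$ and $wy$ are in $\Reduce(G_2^{\typCii})$, giving $|\Reduce(G_2^{\typCii})| - |H_2^{\typCii}| = 4$ and budget $|F^{\typCii}| \leq 1$; (B) only $uy$ is in $\Reduce(G_2^{\typCii})$, with savings $3$ and budget $|F^{\typCii}| = 0$; and (C) only $wy$, symmetric to (B).

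Next I would extract the structural properties of $H_2^{\typCii}$ that each subcase forces, using the fact that $\Reduce(G_2^{\typCii})$ is 2EC. In (B), the only cycle through the gadget in $\Reduce(G_2^{\typCii})$ is forced to traverse $u$-$y$-$z$-$v$, so $H_2^{\typCii}$ must contain a $u$-$v$ path; moreover, because the gadget acts as a virtual $u$-$v$ link, every edge of $H_2^{\typCii}$ must be either on a cycle in $H_2^{\typCii}$ itself or on a $u$-$v$ path in $H_2^{\typCii}$. In particular, $H_2^{\typCii}$ is connected and its bridges form a chain along its $u$-$v$ path, with off-chain blocks being 2EC. In (A), the degree-$3$ vertex $y$ forces all three of $\{uy,wy,zy\}$ to lie on cycles, which implies that $H_2^{\typCii}$ contains at least two of the three paths $u$-$v$, $w$-$v$, $u$-$w$. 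Case (C) is symmetric to (B).

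For the combination step I would pair $H_2^{\typCii}$ with a carefully chosen $\OPT_1^{\typCii}$. In (A), pick any $\OPT_1^{\typCii}$ given by case (ii); WLOG by $u$/$w$-symmetry of the gadget it contains a $u$-$v$ path. The matching $u$-$v$ paths in $\OPT_1^{\typCii}$ and $H_2^{\typCii}$ make the $\{u,v\}$-side 2EC, and one additional edge produced by \Cref{lemma:3vc-spanning-tree} between $C_1^{\typCii}(w)$ and the rest of $G_1$ (or by the available $w$-$v$/$u$-$w$ path in $H_2^{\typCii}$) takes the role of $F^{\typCii}$, whose size is at most $1$. In (B) I would pick $\OPT_1^{\typCii}$ with a $u$-$v$ path and argue that $\OPT_1^{\typCii} \cup H_2^{\typCii}$ is 2EC: every bridge on the $u$-$v$ chain of $H_2^{\typCii}$ is covered by the $u$-$v$ path of $\OPT_1^{\typCii}$, every bridge of the $u$-$v$ path of $\OPT_1^{\typCii}$ is covered by the $u$-$v$ structure of $H_2^{\typCii}$, the isolated $C_1^{\typCii}(w)$ attaches to the chain at $w$'s location in $H_2^{\typCii}$, and off-chain blocks are intrinsically 2EC. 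Case (C) is symmetric.

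The main obstacle is justifying the existence of a \emph{compatible} $\OPT_1^{\typCii}$ in subcase (B): the hypothesis of case (ii) only guarantees that every $\OPT_1^{\typCii}$ has either a $u$-$v$ or a $v$-$w$ path, but not that one with a $u$-$v$ path exists whenever $\Reduce(G_2^{\typCii})$ happens to use $uy$. To handle this I would combine the symmetry of the gadget in $u$ and $w$ with an optimality argument: if no $\typCii$-solution in $G_1$ contains a $u$-$v$ path, then the analogous swap $uy \leftrightarrow wy$ in $\Reduce(G_2^{\typCii})$ is either feasible (giving subcase (C), which does admit a matching $\OPT_1^{\typCii}$ of the $v$-$w$ type), or else the swap being infeasible forces the structure needed to construct a $u$-$v$-path $\OPT_1^{\typCii}$ in $G_1$. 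A secondary obstacle is the routine but fiddly cycle-coverage verification that $\OPT_1^{\typCii} \cup H_2^{\typCii} \cup F^{\typCii}$ is 2EC in each subcase; this follows the template of \Cref{lemma:3vc-make-2ec} together with the block-cut-tree structure of $H_2^{\typCii}$ derived above.
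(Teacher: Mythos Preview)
Your proposal follows essentially the same approach as the paper's proof: the same forced-edge observation ($vz,zy$ must lie in $\Reduce(G_2^{\typCii})$ and then at least one of $uy,wy$), the same three-way case split on $D\cap\Reduce(G_2^{\typCii})$, and in each sub-case the same combination of a suitably chosen $\OPT_1^{\typCii}$ with $H_2^{\typCii}$ via \Cref{lemma:3vc-make-2ec}, adding one edge from \Cref{lemma:3vc-spanning-tree} on the $G_1$-side when all four dummy edges are present.

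The compatibility issue you flag as the ``main obstacle'' is not a genuine obstacle, and your proposed swap fix does not work. Whether $(\Reduce(G_2^{\typCii})\setminus\{uy\})\cup\{wy\}$ is 2EC is purely a property of $G_2$ and carries no information whatsoever about which $\typCii$ path-types exist in $G_1$, so the second branch of your dichotomy (``swap infeasible $\Rightarrow$ a $u$-$v$-type $\OPT_1^{\typCii}$ can be constructed'') has no basis. The intended resolution is much simpler: the labels $u,v,w$ on the cut vertices are at the algorithm's disposal, so case~(i) should be read as ``some single pair covers all optimal $\typCii$ solutions''. Reaching case~(ii) therefore means that at least two of the three path-types occur among optimal $\typCii$ solutions in $G_1$; after labeling so that these are $\{u,v\}$ and $\{v,w\}$, both a $u$-$v$-type and a $v$-$w$-type $\OPT_1^{\typCii}$ are guaranteed to exist, and one simply picks the one matching the sub-case of $\Reduce(G_2^{\typCii})\cap D$ --- which is exactly what the paper does without further comment.
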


\begin{proof}
	Let $D = \{uy,vz,zy,wy\}$.
	First observe that $H_2^{\typCii}$ must be connected and spanning for $G_2$.
	Note that $vz$ and $zy$ must be part of $\Reduce(G_2^{\typCii})$ to make $z$ incident to 2 edges, and additionally $vy$ or $wy$ must be part of $\Reduce(G_2^{\typCii})$ to make $y$ incident to at least 2 edges.
	Thus, we can distinguish the following cases.

	If $\Reduce(G_2^{\typCii}) \cap D = \{uy,vz,zy\}$ (or by symmetry  $\Reduce(G_2^{\typCii}) \cap D = \{wy,vz,zy\}$), let $\OPT_1^{\typCii}$ be an optimal $\typCii$ solution for $G_1$ that contains a $C_1^{\typCii}(u)-C_1^{\typCii}(v)$ path, where $C_1^{\typCii}(x)$ denotes the 2EC component of $\OPT_1^{\typCii}$ that contains $x$, for $x \in \{u,v,w\}$.
	Note that every edge of $\OPT_1^{\typCii}$ is part of a 2EC component of $\OPT_1^{\typCii}$ or lies on a path in $\OPT_1^{\typCii}$ between $u$ and $v$ between which is also a path in $H_2^{\typCii}$.
	Further, every edge in $H_2^{\typCii}$ is part of a 2EC component of $H_2^{\typCii}$ or lies on a path in $H_2^{\typCii}$ between $u$ and $v$, and there is a path between $u$ and $v$ in $\OPT_1^{\typCii}$. Note that there cannot be an edge in $H_2^{\typCii}$ that is on a path between $w$ and $u$ or $v$ but not in a 2EC component, because then $\Reduce(G_2^{\typCii})$ cannot be feasible for $G_2^{\typCii}$ as $wy \notin \Reduce(G_2^{\typCii})$.
	Thus, $\OPT_1^{\typCii} \cup H_2^{\typCii}$ is a 2EC spanning subgraph of $G$ by \Cref{lemma:3vc-make-2ec}. We have that $|F^{\typCii}| - |\Reduce(G_2^{\typCii})| + |H_2^{\typCii}| = 0-3 = -3$.

	Otherwise, that is, $\Reduce(G_2^{\typCii}) \cap D = \{vy,vz,zy,wy\}$,
	let $\OPT_1^{\typCii}$ be an optimal \typCii solution for $G_1$ that contains a $C_1^{\typCii}(u)-C_1^{\typCii}(v)$ path of $\OPT_1^{\typCii}$, where $C_1^{\typCii}(x)$ denotes the 2EC component of $\OPT_1^{\typCii}$ that contains $x$, for $x \in \{u,v,w\}$.
	\Cref{lemma:3vc-spanning-tree} implies that there exists an edge $e$ in $G_1$ between $C_1^{\typCii}(w)$ and the connected component containing $V(G_1) \setminus V(C_1^{\typCii}(w))$. Therefore, $\OPT_1^{\typCii} \cup \{e\}$ is connected.
	Thus, every edge in $\OPT_1^{\typCii} \cup \{e\}$ (resp.\ $H_2^{\typCii}$) is part of a 2EC component of $\OPT_1^{\typCii} \cup \{e\}$ ($H_2^{\typCii}$) or lies on a path in $\OPT_1^{\typCii} \cup \{e\}$  ($H_2^{\typCii}$) between any two distinct vertices of $\{u,v,w\}$ between which is a path in $H_2^{\typCii}$ ($\OPT_1^{\typCii} \cup \{e\}$).
	We conclude using \Cref{lemma:3vc-make-2ec} that $H_2^{\typCii} \cup \OPT_1^{\typCii} \cup F^{\typCii}$ with $F^{\typCii} \coloneq \{e\}$ is a 2EC spanning subgraph of $G$. We have that $|F^{\typCii}| - |\Reduce(G_2^{\typCii})| + |H_2^{\typCii}| = 1-4=-3$.
\end{proof}

\begin{lemma}\label{lemma:3vc-c2-iii-F}
	If the condition in \Cref{reduce-3vc:c2-iii} in \Cref{alg:3-vertex-cuts} applies  and $\Reduce(G_2^{\typCii})$ is a 2EC spanning subgraph of $G_2^{\typCii}$, then there exist edges $F^{\typCii}$ with $|F^{\typCii}| \leq 1$ and $\OPT_1^{\typCii}$ such that $H^\typCii$ is a 2EC spanning subgraph of $G$ and $|F^{\typCii}| - |\Reduce(G_2^{\typCii})| + |H_2^{\typCii}| \leq - 2$.
\end{lemma}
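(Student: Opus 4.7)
The plan is to parallel the case analyses in the proofs of \Cref{lemma:3vc-c2-i-F} and \Cref{lemma:3vc-c2-ii-F}, adapted to the fact that now the dummy vertex $y$ is attached to all three cut vertices $u, v, w$. Since $y$ has exactly three neighbors in $G_2^{\typCii}$, namely through $D \coloneq \{uy, vy, wy\}$, and every vertex in a 2ECSS must have degree at least $2$, we must have $\abs{\Reduce(G_2^{\typCii}) \cap D} \in \{2, 3\}$. Also, $H_2^{\typCii}$ must be connected and spanning $G_2$ in either case (otherwise $\Reduce(G_2^{\typCii})$ cannot even be feasible for $G_2^{\typCii}$). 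The two sub-cases correspond exactly to the two values the quantity $\abs{\Reduce(G_2^{\typCii})} - \abs{H_2^{\typCii}}$ can take.

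For $\abs{\Reduce(G_2^{\typCii}) \cap D} = 2$, WLOG $\{uy, vy\}$, we have $\abs{\Reduce(G_2^{\typCii})} - \abs{H_2^{\typCii}} = 2$, so the required bound forces $F^{\typCii} = \emptyset$. The hypothesis of \Cref{reduce-3vc:c2-iii} guarantees that there is an $\OPT_1^{\typCii}$ with a $C(u)$-$C(v)$ path and isolated $C(w)$. Each edge $e \in H_2^{\typCii}$ lies on some cycle in the 2EC graph $\Reduce(G_2^{\typCii})$: if that cycle avoids $y$ then $e$ belongs to a 2EC block of $H_2^{\typCii}$, and otherwise the cycle must leave $y$ via both $uy$ and $vy$, placing $e$ on a $u$-$v$ path of $H_2^{\typCii}$. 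Combined with the $u$-$v$ path in $\OPT_1^{\typCii}$, the premises of \Cref{lemma:3vc-make-2ec} are met and $\OPT_1^{\typCii} \cup H_2^{\typCii}$ is a 2EC spanning subgraph of $G$.

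For $\abs{\Reduce(G_2^{\typCii}) \cap D} = 3$ we have $\abs{\Reduce(G_2^{\typCii})} - \abs{H_2^{\typCii}} = 3$ and are allowed $\abs{F^{\typCii}} \leq 1$. Now a cycle through $y$ uses exactly two of the three dummy edges, so an edge of $H_2^{\typCii}$ may lie on a $u$-$v$, $u$-$w$, or $v$-$w$ path, and we must supply matching paths on the $G_1$ side for every such pair. I would again pick $\OPT_1^{\typCii}$ with, say, the $C(u)$-$C(v)$ path and isolated $C(w)$, then apply \Cref{lemma:3vc-spanning-tree} to $\OPT_1^{\typCii}$ in $G_1$ to obtain an edge $f \in E(G_1)$ between $C_1^{\typCii}(w)$ and $V(G_1) \setminus V(C_1^{\typCii}(w))$, and set $F^{\typCii} = \{f\}$. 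Then $\OPT_1^{\typCii} \cup \{f\}$ carries paths between every pair in $\{u,v,w\}$: the $u$-$v$ path is already there, and concatenating it with the new $w$-to-$\{u,v\}$ connection yields both $w$-$u$ and $w$-$v$ paths. A second application of \Cref{lemma:3vc-make-2ec} (using the same cycle dichotomy for $H_2^{\typCii}$) gives 2-edge-connectivity, and $\abs{F^{\typCii}} - \abs{\Reduce(G_2^{\typCii})} + \abs{H_2^{\typCii}} = 1 - 3 = -2$.

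The main obstacle I anticipate is the verification in Case~2 that every edge of $H_2^{\typCii}$ sitting on a $u$-$w$ or $v$-$w$ path, rather than inside a 2EC block, actually closes into a cycle after adding $\OPT_1^{\typCii} \cup \{f\}$. The crucial point is that adding the single edge $f$ upgrades the block structure of $\OPT_1^{\typCii}$ to one that carries paths between \emph{all three} pairs of cut vertices simultaneously, which is exactly what the hypothesis of \Cref{lemma:3vc-make-2ec} demands to accommodate the three possible path types that cycles through $y$ can induce on $H_2^{\typCii}$. The fact that the lemma's budget is $\abs{F^{\typCii}} \leq 1$ (rather than $0$) is precisely what lets us pay for this upgrade while keeping the accounting $\abs{F^{\typCii}} - \abs{\Reduce(G_2^{\typCii})} + \abs{H_2^{\typCii}} \leq -2$ tight.
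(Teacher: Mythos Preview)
Your proposal is correct and follows essentially the same approach as the paper's proof: the same two-case split on $|D \cap \Reduce(G_2^{\typCii})| \in \{2,3\}$, the same choice of an $\OPT_1^{\typCii}$ with a $C(u)$--$C(v)$ path (which exists precisely because \Cref{reduce-3vc:c2-iii} is the ``all paths possible'' subcase), the same use of \Cref{lemma:3vc-spanning-tree} to supply the single edge $f$ in the second case, and the same appeal to \Cref{lemma:3vc-make-2ec}. Your cycle-through-$y$ phrasing for classifying edges of $H_2^{\typCii}$ is a slightly more explicit version of the paper's argument, but the content is identical.
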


\begin{proof}
	Let $D = \{uy,vy,wy\}$.
	First observe that $H_2^{\typCii}$ must be connected and spanning for $G_2$.
	Note that $|D \cap \Reduce(G_2^{\typCii})| \geq 2$ as $y$ needs to be 2EC in $\Reduce(G_2^{\typCii})$.

	If $|D \cap \Reduce(G_2^{\typCii})| = 2$, let w.l.o.g.\ by symmetry $D \cap \Reduce(G_2^{\typCii}) = \{uy,vy\}$ and $\OPT_1^{\typCii}$ be an optimal $\typCii$ solution for $G_1$ that contains a $C_1^{\typCii}(u)-C_1^{\typCii}(v)$ path, where $C_i^{\typCii}(x)$ denotes the 2EC component of $\OPT_1^{\typCii}$ that contains $x$, for $x \in \{u,v,w\}$.
	Note that every edge of $\OPT_1^{\typCii}$ is part of a 2EC component of $\OPT_1^{\typCii}$ or lies on a path in $\OPT_1^{\typCii}$ between $u$ and $v$ between which is also a path in $H_2^{\typCii}$.
	Further, every edge in $H_2^{\typCii}$ is part of a 2EC component of $H_2^{\typCii}$ or lies on a path in $H_2^{\typCii}$ between $u$ and $v$, and there is a path between $u$ and $v$ in $\OPT_1^{\typCii}$. Note that there cannot be an edge in $H_2^{\typCii}$ that is on a path between $w$ and $u$ or $v$ but not in a 2EC component, because then $\Reduce(G_2^{\typCii})$ cannot be feasible for $G_2^{\typCii}$ as $wy \notin \Reduce(G_2^{\typCii})$.
	Thus, $\OPT_1^{\typCii} \cup H_2^{\typCii}$ is a 2EC spanning subgraph of $G$ by \Cref{lemma:3vc-make-2ec}.
	We have that $|F^{\typCii}| - |\Reduce(G_2^{\typCii})| + |H_2^{\typCii}| = 0-2=-2$.

	Otherwise, that is, $\Reduce(G_2^{\typCii}) \cap D = \{uy,vy,wy\}$,
	let $\OPT_1^{\typCii}$ be an optimal \typCii solution for $G_1$ that contains a $C_1^{\typCii}(u)-C_1^{\typCii}(v)$ path, where $C_i^{\typCii}(x)$ denotes the 2EC component of $\OPT_1^{\typCii}$ that contains $x$, for $x \in \{u,v,w\}$.
	\Cref{lemma:3vc-spanning-tree} implies that there exists an edge $e$ in $G_1$ between $C_1^{\typCii}(w)$ and the connected component containing $V(G_1) \setminus V(C_1^{\typCii}(w))$.
	Therefore, $\OPT_1^{\typCii} \cup \{e\}$ is connected.
	Thus, every edge in $\OPT_1^{\typCii} \cup \{e\}$ (resp.\ $H_2^{\typCii}$) is part of a 2EC component of $\OPT_1^{\typCii} \cup \{e\}$ ($H_2^{\typCii}$) or lies on a path in $\OPT_1^{\typCii} \cup \{e\}$  ($H_2^{\typCii}$) between any two distinct vertices of $\{u,v,w\}$ between which is a path in $H_2^{\typCii}$ ($\OPT_1^{\typCii} \cup \{e\}$).
	We conclude using \Cref{lemma:3vc-make-2ec} that $H_2^{\typCii} \cup \OPT_1^{\typCii} \cup F^{\typCii}$ with $F^{\typCii} \coloneq \{e\}$ is a 2EC spanning subgraph of $G$. We have that $|F^{\typCii}| - |\Reduce(G_2^{\typCii})| + |H_2^{\typCii}| = 1-3=-2$.
\end{proof}

\begin{lemma}\label{lemma:3vc-c3-F}
	If the condition in \Cref{reduce-3vc:c3} in \Cref{alg:3-vertex-cuts} applies and $\Reduce(G_2^{\typCiii})$ is a 2EC spanning subgraph of $G_2^{\typCiii}$, then there exist edges $F^{\typCiii}$ with $|F^{\typCiii}| \leq 4$ such that $H^\typCiii$ is a 2EC spanning subgraph of $G$ and $|F^{\typCiii}| - |\Reduce(G_2^{\typCiii})| + |H_2^{\typCiii}| \leq 0$.
\end{lemma}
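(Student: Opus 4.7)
The plan is to case on the pair $(d_{uv}, d_{vw}) \in \{0,1,2\}^2$ where $d_{uv}$ and $d_{vw}$ count the copies of the dummy edges $uv$ and $vw$ used by $\Reduce(G_2^{\typCiii})$. Since the only dummies are the four copies $\{uv,uv,vw,vw\}$, one has $|\Reduce(G_2^{\typCiii})| - |H_2^{\typCiii}| = d_{uv}+d_{vw}$, so the target inequality rewrites to $|F^{\typCiii}| \leq d_{uv}+d_{vw}$, which also implies $|F^{\typCiii}| \leq 4$. The plan is therefore to spend at most one $F^{\typCiii}$ edge per dummy copy used.

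The key structural input is the observation that $H_2^{\typCiii} \mid \{u,v,w\}$ is 2-edge connected. Indeed, contracting vertices in a 2EC (multi)graph preserves 2-edge-connectivity, so $\Reduce(G_2^{\typCiii}) \mid \{u,v,w\}$ is 2EC, and this quotient agrees with $H_2^{\typCiii} \mid \{u,v,w\}$ up to self-loops (which are irrelevant to 2EC) because every edge of $\Reduce(G_2^{\typCiii})$ with both endpoints in $\{u,v,w\}$ is a dummy. From this I would extract two consequences: (i) each connected component of $H_2^{\typCiii}$ contains at least one vertex of $\{u,v,w\}$, for otherwise it would persist as a separate component after contraction; and (ii) each bridge of $H_2^{\typCiii}$ induces a partition $\{S_1,S_2\}$ of $\{u,v,w\}$ with $S_1, S_2 \neq \emptyset$ on its two sides, for otherwise the bridge would survive as a bridge in the quotient.

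I would then construct $F^{\typCiii}$ pair-wise: include $e^{\typCiii}_{uv}$ if $d_{uv} \geq 1$ and a second $G_1$-edge between $V(C_1(u))$ and $V(C_1(v))$ if $d_{uv} = 2$, and symmetrically for $d_{vw}$. To verify that $H^{\typCiii} = \OPT_1^{\typCiii} \cup H_2^{\typCiii} \cup F^{\typCiii}$ is 2EC, I invoke the partition description: edges inside each component $C_1(x)$ lie on cycles internal to $C_1(x)$; non-bridge edges of $H_2^{\typCiii}$ lie on cycles internal to $H_2^{\typCiii}$; each bridge with partition $\{S_1, S_2\}$ is covered by the cycle running through $H_2^{\typCiii}$ on one side, entering a component $C_1(x)$ with $x \in S_1$, crossing an $F^{\typCiii}$ edge to $C_1(y)$ with $y \in S_2$, and returning; and each $F^{\typCiii}$ edge lies on a cycle completed either by its parallel partner through $V(C_1(u)) \cup V(C_1(v))$ (respectively $V(C_1(v)) \cup V(C_1(w))$) when two are present, or via $H_2^{\typCiii}$ linking the two endpoints of the corresponding cut.

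The main obstacle is the subcase $d_{uv} = 2$ (symmetrically $d_{vw} = 2$), where the argument needs a second edge of $G_1$ between $V(C_1(u))$ and $V(C_1(v))$. I would argue its existence from the 2-edge-connectivity of $G$ applied to the cut $(V(C_1(u)), V \setminus V(C_1(u)))$ together with the fact that $\Reduce(G_2^{\typCiii})$ employing both parallel $uv$ dummies forces $H_2^{\typCiii}$ alone not to provide two edge-disjoint $u$-$v$ connections through $G_2$, so the two required crossings must come from $G_1$. In the degenerate situation where such a second edge fails to exist, the second dummy is redundant for 2-edge-certifying $H^{\typCiii}$ (because $H_2^{\typCiii}$ then already carries the corresponding cycle), and one may drop the extra $F^{\typCiii}$ edge and still stay within the budget. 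The resulting case analysis is mechanical once the partition-based description of bridges is in place.
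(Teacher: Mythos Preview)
Your case split on $(d_{uv},d_{vw})$ and the budget rewriting $|F^{\typCiii}|\le d_{uv}+d_{vw}$ are exactly right, and the cases with $d_{uv},d_{vw}\le 1$ are handled the same way as in the paper. The gap is in the cases $d_{uv}=2$ (symmetrically $d_{vw}=2$), where your plan to spend the second unit of budget on a \emph{second $G_1$-edge} between $V(C_1(u))$ and $V(C_1(v))$ does not go through.

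The existence argument you sketch is incorrect: the fact that $\Reduce(G_2^{\typCiii})$ uses both $uv$ dummies says nothing about what $G_2$ contains (only about what $H_2^{\typCiii}$ contains), and the cut $(V(C_1(u)),V\setminus V(C_1(u)))$ in $G$ is crossed by all edges from $u$ into $V_2$ as well as by any $G_1$-edge from $C_1(u)$ to $C_1(w)$, so 2-edge-connectivity of $G$ does not force two $C_1(u)$--$C_1(v)$ edges in $G_1$. Your fallback (``$H_2^{\typCiii}$ then already carries the corresponding cycle'') is also false: when $d_{uv}=2$ and $d_{vw}=0$, $H_2^{\typCiii}$ may consist of two disjoint 2EC pieces, one containing $u$ and the other containing $v,w$. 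In that situation $\OPT_1^{\typCiii}\cup H_2^{\typCiii}\cup\{e^{\typCiii}_{uv}\}$ has $e^{\typCiii}_{uv}$ as a bridge, so dropping the extra edge does not give a 2ECSS.

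The point you are missing is that $F^{\typCiii}$ is allowed to contain edges of $G_2$, not only of $G_1$. The paper's proof, in each subcase where $H_2^{\typCiii}$ is disconnected, invokes \Cref{lemma:3vc-spanning-tree} to obtain one or two edges of $E(G_2)$ that reconnect the components of $H_2^{\typCiii}$, and places those into $F^{\typCiii}$ together with the $G_1$-edges $e^{\typCiii}_{uv}$ and/or $e^{\typCiii}_{vw}$. This is what keeps the count $|F^{\typCiii}|\le d_{uv}+d_{vw}$ in every subcase. Rewriting your construction to draw the ``second'' edge from $E(G_2)$ via \Cref{lemma:3vc-spanning-tree} (rather than from $G_1$) fixes the argument and brings it in line with the paper.
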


\begin{proof}
	First note that the claimed $\typCiii$ solution in \Cref{reduce-3vc:c3-select-solution} is guaranteed by \Cref{lemma:3vc-spanning-tree} (among renaming $u$, $v$ and $w$).
	Let $C_2^{\typCiii}(x)$ be the 2EC component of $H_2^{\typCiii}$ that contains $x$, for $x \in \{u,v,w\}$.
	We have 4 main cases.

	\textbf{Case 1:}
	$|\Reduce(G_2^{\typCiii}) \cap \{uv,vw\}| = \ell \leq 2$. First observe that $H_2^{\typCiii}$ is connected as otherwise $\Reduce(G_2^{\typCiii})$ cannot be feasible for $G_2^{\typCiii}$. Moreover, we can select $\ell$ edges $F^{\typCiii} \subseteq \{e^\typCiii_{uv},e^\typCiii_{vw}\}$ such that
	$\OPT_1^{\typCiii} \cup F^{\typCiii}$ is connected.
	Further, note that every edge in $\OPT_1^{\typCiii} \cup F^{\typCiii}$ (resp.\ $H_2^{\typCiii}$) is part of a 2EC component of $\OPT_1^{\typCiii} \cup F^{\typCiii}$ ($H_2^{\typCiii}$) or lies on a path in $\OPT_1^{\typCiii} \cup F^{\typCiii}$  ($H_2^{\typCiii}$) between any two distinct vertices of $\{u,v,w\}$ between which is a path in $H_2^{\typCiii}$ ($\OPT_1^{\typCiii} \cup F^{\typCiii}$).
	Thus, \Cref{lemma:3vc-make-2ec} implies that $H_2^{\typCiii} \cup \OPT_1^{\typCiii} \cup F^{\typCiii}$ is a 2EC spanning subgraph of $G$.
	We have that $|F^{\typCiii}| - |\Reduce(G_2^{\typCiii})| + |H_2^{\typCiii}| = \ell - \ell = 0$.

	\textbf{Case 2:}
	$|\Reduce(G_2^{\typCiii}) \cap \{uv,uv\}| = 2$ (the case $|\Reduce(G_2^{\typCiii}) \cap \{vw,vw\}| = 2$ is symmetric).
	We have that $C_2^{\typCiii}(v) = C_2^{\typCiii}(w)$, as otherwise $\Reduce(G_2^{\typCiii})$ cannot be feasible for $G_2^{\typCiii}$.
	If $H_2^{\typCiii}$ is 2EC, then clearly $H_2^{\typCiii} \cup \OPT_2^\typCiii$ is a 2EC spanning subgraph of $G$, and $|F^{\typCiii}| - |\Reduce(G_2^{\typCiii})| + |H_2^{\typCiii}| = 0 - 2 \leq 0$.

	If $C_2^{\typCiii}(u)$ is connected to $C_2^{\typCiii}(v)$ in $H_2^{\typCiii}$, then $H_2^\typCiii$ is connected.
	Further, $\OPT_1^\typCiii \cup F^\typCiii$ with $F^\typCiii \coloneq \{e^\typCiii_{uv}\}$ is connected.
	Every edge in $\OPT_1^{\typCiii} \cup F^{\typCiii}$ is part of a 2EC component of $\OPT_1^{\typCiii} \cup F^{\typCiii}$ or lies on a path in $\OPT_1^{\typCiii} \cup F^{\typCiii}$ between $u$ and $v$, which are also connected in $H_2^{\typCiii}$. Note that there cannot be an edge in $\OPT_1^{\typCiii} \cup F^{\typCiii}$ on a path between $w$ and $u$ or $v$ that is not in a 2EC component by the choice of the $\typCiii$ solution $\OPT_1^{\typCiii}$.
	Moreover, every edge in $H_2^{\typCiii}$ is part of a 2EC component of $H_2^{\typCiii}$ or lies on a path in $H_2^{\typCiii}$ between $u$ and $v$ or $w$, which are also connected in $\OPT_1^{\typCiii} \cup F^{\typCiii}$. Note that there cannot be an edge in $H_2^{\typCiii}$ on a path between $v$ and $w$ that is not in a 2EC component, because $C_2^{\typCiii}(v) = C_2^{\typCiii}(w)$ is 2EC.
	Thus, \Cref{lemma:3vc-make-2ec} implies that $H_2^{\typCiii} \cup \OPT_1^{\typCiii} \cup F^{\typCiii}$ is a 2EC spanning subgraph of $G$, and $|F^{\typCiii}| - |\Reduce(G_2^{\typCiii})| + |H_2^{\typCiii}| = 1 - 2 \leq 0$.

	Otherwise, that is, $C_2^{\typCiii}(u)$ is not connected to $C_2^{\typCiii}(v)$ in $H_2^{\typCiii}$,
	by \Cref{lemma:3vc-spanning-tree} there must exist an edge $e$ in $G_2$ between $C_2^{\typCiii}(u)$ and $C_2^{\typCiii}(v)$ such that $H_2^{\typCiii} \cup \{e\}$ is connected.
	Thus, we can analogously to the previous case show that
	$H_2^{\typCiii} \cup \OPT_1^{\typCiii} \cup F^{\typCiii}$ with $F^{\typCiii} \coloneq \{e, e^\typCiii_{uv}\}$ is a 2EC spanning subgraph of $G$.
	We have that $|F^{\typCiii}| - |\Reduce(G_2^{\typCiii})| + |H_2^{\typCiii}| = 2 - 2 = 0$.

	\textbf{Case 3:}
	$|\Reduce(G_2^{\typCiii}) \cap \{uv,uv,vw\}| = 3$ (the case $|\Reduce(G_2^{\typCiii}) \cap \{uv,vw,vw\}| = 3$ is symmetric).
	Note that in this case $C_2^{\typCiii}(u)$ is connected to $C_2^{\typCiii}(v)$ in $H_2^{\typCiii}$ as otherwise $u$ and $w$ cannot be 2EC in $\Reduce(G_2^{\typCiii})$.
	We have two cases.

	If $C_2^{\typCiii}(v)$ is connected to $C_2^{\typCiii}(w)$ in $H_2^{\typCiii}$, then $H_2^{\typCiii}$ is connected.
	Further, $\OPT_1^{\typCiii} \cup F^{\typCiii}$ with $F^{\typCiii} \coloneq \{e^{\typCiii}_{uv},e^{\typCiii}_{vw}\}$ is connected and spanning for $G_1$.
	Thus, every edge in $\OPT_1^{\typCiii} \cup F^{\typCiii}$ (resp.\ $H_2^{\typCiii}$) is part of a 2EC component of $\OPT_1^{\typCiii} \cup F^{\typCiii}$ ($H_2^{\typCiii}$) or lies on a path in $\OPT_1^{\typCiii} \cup F^{\typCiii}$  ($H_2^{\typCiii}$) between any two distinct vertices of $\{u,v,w\}$ between which is a path in $H_2^{\typCiii}$ ($\OPT_1^{\typCiii} \cup F^{\typCiii}$).
	We conclude via \Cref{lemma:3vc-make-2ec} that $H_2^{\typCiii} \cup \OPT_1^{\typCiii} \cup F^{\typCiii}$ is a 2EC spanning subgraph of $G$, and  $|F^{\typCiii}| - |\Reduce(G_2^{\typCiii})| + |H_2^{\typCiii}| = 2-3\leq 0$.

	Otherwise, that is, $C_2^{\typCiii}(v)$ is not connected to $C_2^{\typCiii}(w)$ in $H_2^{\typCiii}$, \Cref{lemma:3vc-spanning-tree} guarantees the existence of an edge $e$ in $G_2$ between the vertices of $V(C_2^{\typCiii}(w))$ and the vertices of $V(H_2^{\typCiii}) \setminus V(C_2^{\typCiii}(w))$.
	Therefore, $H_2^{\typCiii} \cup \{e\}$ is connected.
	Hence, we can analogously to the previous case show that $H_2^{\typCiii} \cup \OPT_1^{\typCiii} \cup F^{\typCiii}$ with $F^{\typCiii} \coloneq \{e,e^{\typCiii}_{uv},e^{\typCiii}_{vw}\}$ is a 2EC spanning subgraph of $G$. We have that $|F^{\typCiii}| - |\Reduce(G_2^{\typCiii})| + |H_2^{\typCiii}| = 3-3 = 0$.

	\textbf{Case 4:}
	$|\Reduce(G_2^{\typCiii}) \cap \{uv,uv,vw,vw\}| = 4$. We distinguish three cases.

	First, assume that $H_2^{\typCiii}$ is connected. We have that $\OPT_1^{\typCiii} \cup F^{\typCiii}$ with $F^{\typCiii} \coloneq \{e^{\typCiii}_{uv},e^{\typCiii}_{vw}\}$ is connected and spanning for $G_1$.
	Thus, every edge in $\OPT_1^{\typCiii} \cup F^{\typCiii}$ (resp.\ $H_2^{\typCiii}$) is part of a 2EC component of $\OPT_1^{\typCiii} \cup F^{\typCiii}$ ($H_2^{\typCiii}$) or lies on a path in $\OPT_1^{\typCiii} \cup F^{\typCiii}$  ($H_2^{\typCiii}$) between any two distinct vertices of $\{u,v,w\}$ between which is a path in $H_2^{\typCiii}$ ($\OPT_1^{\typCiii} \cup F^{\typCiii}$).
	We conclude via \Cref{lemma:3vc-make-2ec} that $H_2^{\typCiii} \cup \OPT_1^{\typCiii} \cup F^{\typCiii}$ is a 2EC spanning subgraph of $G$, and  $|F^{\typCiii}| - |\Reduce(G_2^{\typCiii})| + |H_2^{\typCiii}| = 2-4\leq 0$.

	If $H_2^{\typCiii}$ is composed of two connected components, assume w.l.o.g.\ that $C_2^{\typCiii}(u)$ is connected to $C_2^{\typCiii}(v)$ in $H_2^{\typCiii}$ and the other connected component in $H_2^{\typCiii}$ is $C_2^{\typCiii}(w)$.
	In this case, \Cref{lemma:3vc-spanning-tree} guarantees that there is an edge $e$ in $E(G_2) \setminus C_2^{\typCiii}(w)$ between the vertices of $C_2^{\typCiii}(w)$ and the vertices of $V(H_2^{\typCiii}) \setminus V(C_2^{\typCiii}(w))$.
	Therefore, $H_2^{\typCiii} \cup \{e\}$ is connected.
	Hence, we can analogously to the previous case show that $H_2^{\typCiii} \cup \OPT_1^{\typCiii} \cup F^{\typCiii}$ with $F^{\typCiii} \coloneq \{e,e^{\typCiii}_{uv},e^{\typCiii}_{vw}\}$ is a 2EC spanning subgraph of $G$, and $|F^{\typCiii}| - |\Reduce(G_2^{\typCiii})| + |H_2^{\typCiii}| = 3-4\leq 0$.

	Otherwise, that is, $C_2^{\typCiii}(u)$, $C_2^{\typCiii}(v)$ and $C_2^{\typCiii}(w)$ are all pairwise disjoint, \Cref{lemma:3vc-spanning-tree} guarantees the existence of two edges $F \subseteq E(G_2)$ such that $H_2^{\typCiii} \cup F$ is connected.
	Hence, we can analogously to the previous case(s) show that $H_2^{\typCiii} \cup \OPT_1^{\typCiii} \cup F^{\typCiii}$
	with $F^{\typCiii} \coloneq F \cup \{e^{\typCiii}_{uv},e^{\typCiii}_{vw}\}$ is a 2EC spanning subgraph of $G$. We have that $|F^{\typCiii}| - |\Reduce(G_2^{\typCiii})| + |H_2^{\typCiii}| = 4-4 = 0$.
\end{proof}

Finally, we prove that the reductions of \cref{alg:3-vertex-cuts} preserve the approximation factor.
While the proof is quite long, it is of the same spirit of the analogue for \cref{alg:2-vertex-cuts} (cf.\ \cref{lemma:2vc-approx}) and easy to verify: We consider different cases depending on the reduction the algorithm uses and, if \Cref{alg:3-vertex-cuts} is executed, we distinguish for some cases further which solution types the actual optimal solution uses for both sides of the cut. For each case, we can easily derive bounds for the different parts of the solution that our algorithm produces, and put them together to obtain the overall bound, which is stated below.

\begin{lemma}\label{lemma:3vc-approx}
	Let $G$ be a 2ECSS instance. If every recursive call to $\Reduce(G')$ in~\Cref{alg:3-vertex-cuts} on input $G$ satisfies \eqref{eq:reduce-invariant}, then  \eqref{eq:reduce-invariant} holds for input $G$.
\end{lemma}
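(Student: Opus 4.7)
The plan is to mirror the proof of \Cref{lemma:2vc-approx}, replacing the 2-vertex cut machinery with the analogous 3-vertex cut lemmas established in \Cref{app:prep:helper-3vc}. Set $\OPT_i \coloneq \OPT(G) \cap E(G_i)$ and $\opt_i \coloneq |\OPT_i|$, so that $\opt(G) = \opt_1 + \opt_2$. From \Cref{prop:properties-for-3vc-alg} the algorithm reaches \Cref{alg:3-vertex-cuts} only when $|V(G)| > 8/\varepsilon$, hence $f(G) \geq 0$. I will use throughout: $\opt_1 \geq \opt_1^{\min} \geq \ceil{2/(\alpha-1)} \geq 8$ from \Cref{lemma:3vc-opt-1-large}, the exclusion $t^{\min} \neq \typA$ and the bound $\opt_1^{\typA} \geq \opt_1^{\min}+3$ from \Cref{lemma:3vc-type-A-condition}, the type-compatibility of \Cref{lemma:3vc-type-combinations}, and the $F$-existence lemmas (\Cref{lemma:3vc-both-large-F,lemma:3vc-b1-F,lemma:3vc-b2-F,lemma:3vc-c1-F,lemma:3vc-c2-i-F,lemma:3vc-c2-ii-F,lemma:3vc-c2-iii-F,lemma:3vc-c3-F}). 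The proof splits into one case per branch of \Cref{alg:3-vertex-cuts}.

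First I would handle the both-large case (\Cref{reduce-3vc:both-large}). Since $|V(G_1')|+|V(G_2')|=|V_1|+|V_2|+2=|V(G)|-1$, we get $f(G_1')+f(G_2')=4\varepsilon|V(G)|-4\varepsilon-32 \leq f(G)-16$. The induction hypothesis yields $|H'_i|\leq \alpha\,\opt(G_i')+f(G_i')$; I will argue $\opt(G_i') \leq \opt_i$ by selecting a minimum 2ECSS $\OPT$ of $G$ whose restriction $\OPT_i$ becomes 2EC in $G_i'$ after contracting $\{u,v,w\}$ (such a choice always exists by adjusting via \Cref{lemma:3vc-type-combinations}, since any two non-$\typCiii$ compatible pair contracts to a 2EC graph on each side, and any constant correction is absorbed below). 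Combined with $|F'|\leq 4$ from \Cref{lemma:3vc-both-large-F}, this gives $\reduce(G) \leq \alpha(\opt_1+\opt_2) + f(G) - 16 + 4 \leq \alpha\,\opt(G)+f(G)$.

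For each cut-type case (\Cref{reduce-3vc:b1,reduce-3vc:b2,reduce-3vc:c1,reduce-3vc:c2-general,reduce-3vc:c3}), I would apply the same four-step template. First, bound the local part: $\opt_1^t \leq \opt_1^{\min}+c_1$ with $c_1\in\{0,1\}$, where $c_1=1$ only in the $\typBi$ branch (which is entered when $\opt_1^{\typBi}\leq \opt_1^{\min}+1$) and $c_1=0$ otherwise by the tie-breaking in \Cref{def:3vc-ties}. Second, invoke induction on $\Reduce(G_2^t)$ and use $|H_2^t|=|\Reduce(G_2^t)|-d$, where $d$ is the number of dummy edges stripped off after the recursion ($d=0$ for $\typBi,\typBii,\typCi$; $d\in\{3,4\}$ for the $\typCii$ subcases; $d=4$ for $\typCiii$). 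Third, bound $\opt(G_2^t)\leq \opt_2+c_2$ by producing an explicit 2ECSS of $G_2^t$ that consists of $\OPT_2$ together with a constant number $c_2=O(1)$ of dummy/replacement edges, where the type of $\OPT_2$ is restricted to be compatible with $\OPT_1^t$ by \Cref{lemma:3vc-type-combinations}. Fourth, bound $|F^t|\leq c_3$ via the relevant $F$-existence lemma. Since $|V(G_2^t)|\leq |V(G)|-(\lceil 2/(\alpha-1)\rceil-1)+O(1)\leq |V(G)|-5$, we also have $f(G_2^t)\leq f(G)$. Combining,
\[
\reduce(G) - \alpha\,\opt(G) - f(G) \leq \opt_1^{\min} + c_1 + \alpha c_2 - d + c_3 - (\alpha-1)\opt_1.
\]
Using $\opt_1 \geq \opt_1^{\min}$ and $(\alpha-1)\opt_1^{\min}\geq (\alpha-1)\cdot\ceil{2/(\alpha-1)}\geq 2$, the slack dominates the $O(1)$ surplus in every case, concluding $\reduce(G)\leq \alpha\,\opt(G)+f(G)$.

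The hard part will be the \typCii subcase analysis (\Cref{reduce-3vc:c2-i,reduce-3vc:c2-ii,reduce-3vc:c2-iii}), where the algorithm introduces tailored dummy vertices and edges depending on which of the paths $C(u){-}C(v)$, $C(v){-}C(w)$, $C(u){-}C(w)$ is forced in every optimal $\typCii$ solution of $G_1$. For each subcase I need to exhibit an explicit 2ECSS of $G_2^{\typCii}$ built from $\OPT_2$ and the dummies to justify $\opt(G_2^{\typCii})\leq \opt_2+c_2$, while simultaneously verifying that the deficit bounds $|F^{\typCii}|-|\Reduce(G_2^{\typCii})|+|H_2^{\typCii}|\leq -2$ (resp.\ $\leq -3$) from \Cref{lemma:3vc-c2-i-F,lemma:3vc-c2-ii-F,lemma:3vc-c2-iii-F} exactly absorb the cost of the dummy edges contributed by the recursive call. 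Once this bookkeeping is checked case by case, the remaining cases ($\typBi$, $\typBii$, $\typCi$, $\typCiii$) are straightforward calculations following the same template.
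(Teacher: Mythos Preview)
Your displayed inequality is arithmetically wrong: starting from $\reduce(G) \leq (\opt_1^{\min}+c_1) + \alpha(\opt_2+c_2) + f(G_2^t) - d + c_3$ and subtracting $\alpha\opt(G)+f(G)$ leaves $-\alpha\opt_1$, not $-(\alpha-1)\opt_1$. With your formula the right-hand side contains $\opt_1^{\min} - (\alpha-1)\opt_1$, which for $\opt_1 = \opt_1^{\min} = 10$ and $\alpha = 6/5$ equals $8$, so the ``slack'' $(\alpha-1)\opt_1^{\min}\geq 2$ cannot possibly dominate a surplus that includes $\opt_1^{\min}\geq 8$.

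Even after correcting the arithmetic, the uniform template breaks for a structural reason: your constant $c_2$ bounding $\opt(G_2^t)\leq \opt_2 + c_2$ depends on the type of the \emph{actual} $\OPT_2$, which is constrained by the type of the \emph{actual} $\OPT_1$ via \Cref{lemma:3vc-type-combinations} --- not by $t^{\min}$. When $\OPT_1$ happens to be of type $\typA$, $\OPT_2$ can be $\typCiii$, and then (e.g.\ in the $\typCii$ subtype~(i) branch) turning $\OPT_2$ into a feasible solution of $G_2^{\typCii}$ costs up to $5$ extra edges. With the corrected inequality this yields a surplus of $c_1 + \alpha c_2 - d + c_3 - (\alpha-1)\opt_1^{\min} \geq 0 + \tfrac{5}{4}\cdot 5 - 2 - 2 = \tfrac{9}{4} > 0$, so the bound does not close. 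The paper's proof resolves this by a case distinction on the type of $\OPT_1$: whenever $c_2$ is forced to be large (because $\OPT_2$ has many components), compatibility forces $\OPT_1$ to be of a type strictly above $t^{\min}$ in the order of \Cref{def:3vc-ties}, and then either \Cref{lemma:3vc-type-A-condition} or the tie-breaking/failure of the $\typBi$ condition gives a compensating gap $\opt_1 - \opt_1^{t^{\min}} \geq 2$ or $3$. This tradeoff is the heart of the argument and cannot be absorbed into a single pair of constants $(c_1,c_2)$.

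A minor point: in the both-large case you overcomplicate the bound $\opt(G_i')\leq \opt_i$. Contracting $\{u,v,w\}$ turns $\OPT_i$ into a 2ECSS of $G_i'$ for \emph{every} optimal $\OPT$ (any cycle of $\OPT$ through an edge of $G_i$ either stays in $G_i$ or becomes a cycle after contraction), so no special selection of $\OPT$ is needed.
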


\begin{proof}
	Let $\OPT_i \coloneq \OPT(G) \cap E(G_i)$ and $\opt_i \coloneq \abs{\OPT_i}$ for $i \in \{1,2\}$.
	Note that $\opt(G) = \opt_1 + \opt_2$.
	For the remainder of this proof, all line references (if not specified differently) refer to \Cref{alg:3-vertex-cuts}. Note that \Cref{lemma:3vc-opt-1-large} gives $\opt_1 \geq \opt_1^{\min} \geq  \frac{2}{\alpha - 1}$. Furthermore, throughout the proof we require $\alpha \geq \frac{6}{5}$. We also use that $f(G) \geq 0$, because $|V(G)| \geq \frac{8}{\eps}$ by \Cref{prop:properties-for-3vc-alg}. We distinguish in the following which reduction the algorithm uses.

	\begin{description}
		\item[(\Cref{reduce-3vc:both-large}: $|V_1| > \frac{8}{\varepsilon}$)]
			Note that for $i \in \{1,2\}$, we have that $\OPT_i$ is a feasible solution for $G_i'$ and therefore $\opt_i \geq \opt(G_i')$.
			Using the induction hypothesis, \Cref{lemma:3vc-both-large-F} and $\abs{V(G_1)} + \abs{V(G_2)} = \abs{V_1} + \abs{V_2} + 2 \leq \abs{V(G)}$, we conclude that
			\begin{align*}
				\reduce(G) & = \abs{H'_1} + \abs{H'_2} + \abs{F'}                                                               \\
				           & \leq \alpha \cdot (\opt_1 + \opt_2) + 4\varepsilon(\abs{V_1} + \abs{V_2} + 2) - 32 + 4             \\
				           & \leq \alpha \cdot \opt(G) + 4\varepsilon \cdot (\abs{V(G)}) - 16 = \alpha \cdot \opt(G) + f(G) \ .
			\end{align*}
		\item[(\Cref{reduce-3vc:b1}: $\OPT_1^{\typBi}$ exists and $\opt_1^{\typBi} \leq \opt_1^{\min} + 1$)]
			Note that $\opt(G_2^{\typBi}) \leq \opt_2$, because $u$, $v$, and $w$ are contracted into a single vertex in $G_2^{\typBi}$.
			Thus, $\opt(G) = \opt_2 + \opt_1 \geq \opt(G_2^{\typBi}) + \opt_1$.
			Moreover, $\opt_1^{\typBi} \leq \opt_1^{\min} + 1 \leq \opt_1 + 1$, and $\abs{F^{\typBi}} \leq 1$ by \Cref{lemma:3vc-b1-F}. Therefore, we conclude
			\begin{align*}
				\reduce(G) = \abs{H^{\typBi}} & \leq \opt_1^{\typBi} + \abs{H_2^{\typBi}} + \abs{F^{\typBi}}
				\leq (\opt_1 + 1) + (\alpha \cdot \opt_2(G_2^{\typBi})  + f(G)) + 1                                                            \\
				                              & \leq \alpha \cdot \opt(G) + f(G) - (\alpha - 1)\opt_1 + 2 \leq \alpha \cdot \opt(G) + f(G) \ ,
			\end{align*}
			where we use $\opt_1 \geq \frac{2}{\alpha - 1}$.

		\item[(\Cref{reduce-3vc:b2,reduce-3vc:c1}: $t^{\min} = \typBii$ or $t^{\min} = \typCi$)]
			Let $\tau \coloneq t^{\min}$.
			Note that $\opt(G_2^{\tau}) \leq \opt_2$, because $u$, $v$, and $w$ are contracted into a single vertex in $G_2^{\tau}$. Thus, $\opt(G) \geq \opt_2 + \opt_1(G_1^{\tau}) \geq \opt(G_2^{\tau}) + \opt_1(G_1^{\tau})$. Moreover, $\abs{F^{\tau}} \leq 2$ by \Cref{lemma:3vc-b2-F,lemma:3vc-c1-F}. Therefore, we conclude
			\begin{align*}
				\reduce(G) = \abs{H^{\tau}} & \leq \opt_1^{\tau} + \abs{H_2^{\tau}} + \abs{F^{\tau}}
				\leq \opt^{\tau}_1 + (\alpha \cdot \opt_2(G_2^{\tau})  + f(G)) + 2                                                                  \\
				                            & \leq \alpha \cdot \opt(G) + f(G) - (\alpha - 1)\opt^{\tau}_1 + 2 \leq \alpha \cdot \opt(G) + f(G) \ ,
			\end{align*}
			where we use $\opt_1 \geq \frac{2}{\alpha - 1}$.

		\item[(\Cref{reduce-3vc:c2-general}: $t^{\min} = \typCii$)]
			In \Cref{alg:3-vertex-cuts} we have the following $\typCii$ subtypes, which use different reductions~$G_2^{\typCii}$ for the recursion (\Cref{reduce-3vc:c2-i,reduce-3vc:c2-ii,reduce-3vc:c2-iii}).
			\begin{description}
				\item[Subtype $\typCiisubba$ (\Cref{reduce-3vc:c2-i}):] \emph{Every $\OPT_1^{\typCii}$ solution contains a $C(u)-C(v)$ path.} In this case
					the definition of \Cref{alg:3-vertex-cuts}, \Cref{lemma:3vc-c2-i-F}, and the induction hypothesis gives
					\begin{align}
						\reduce(G) = |H^{\typCii}| & \leq |\OPT_1^{\typCii}| + \abs{H_2^{\typCii}} + \abs{F^{\typCii}} \notag \leq \opt_1^{\typCii} + \reduce(G_2^{\typCii}) - 2 \\
						                           & \leq \opt_1^{\typCii} + \alpha \cdot \opt(G_2^{\typCii}) + f(G) - 2 \ . \label{eq:3vc-c2-i}
					\end{align}
				\item[Subtype $\typCiisubb$ (\Cref{reduce-3vc:c2-ii}):] \emph{Every $\OPT_1^{\typCii}$ solution contains either a $C(u)-C(v)$ path or a $C(v)-C(w)$ path.} In this case
					the definition of \Cref{alg:3-vertex-cuts}, \Cref{lemma:3vc-c2-ii-F}, and the induction hypothesis gives
					\begin{align}
						\reduce(G) = |H^{\typCii}| & \leq |\OPT_1^{\typCii}| + \abs{H_2^{\typCii}} + \abs{F^{\typCii}} \notag \leq \opt_1^{\typCii} + \reduce(G_2^{\typCii}) - 3 \\
						                           & \leq \opt_1^{\typCii} + \alpha \cdot \opt(G_2^{\typCii}) + f(G) - 3 \ . \label{eq:3vc-c2-ii}
					\end{align}
				\item[Subtype $\typCiisubc$ (\Cref{reduce-3vc:c2-iii}):] \emph{Every $\OPT_1^{\typCii}$ solution contains some path between $C(u)$, $C(v)$, and $C(w)$.} In this case
					the definition of \Cref{alg:3-vertex-cuts}, \Cref{lemma:3vc-c2-iii-F}, and the induction hypothesis gives
					\begin{align}
						\reduce(G) = |H^{\typCii}| & \leq |\OPT_1^{\typCii}| + \abs{H_2^{\typCii}} + \abs{F^{\typCii}} \notag \leq \opt_1^{\typCii} + \reduce(G_2^{\typCii}) - 2 \\
						                           & \leq \opt_1^{\typCii} + \alpha \cdot \opt(G_2^{\typCii}) + f(G) - 2 \ . \label{eq:3vc-c2-iii}
					\end{align}
			\end{description}
			We distinguish the following cases depending on which type combination of \Cref{lemma:3vc-type-combinations} is present for $(\OPT_1,\OPT_2)$.
			\begin{enumerate}[label=(\alph*)]
				\item $(\typA, \{\typA,\typBi,\typBii,\typCi,\typCii,\typCiii\})$. Since $\OPT_1^{\typA}$ exists, \Cref{lemma:3vc-type-A-condition} gives $\opt_1 = \opt_1^{\typA} \geq \opt_1^{\typCii} + 3$.
				      \begin{description}
					      \item[Subtype \typCiisubba:] We have $\opt(G_2^{\typCii}) \leq \opt_2 + 5$ because we can turn $\OPT_2$ with the dummy edges $\{uy,vy,vw\}$ and at most two more edges, which are guaranteed by \Cref{lemma:3vc-spanning-tree} if required, into a solution for $G_2^{\typCii}$. Thus, \eqref{eq:3vc-c2-i} gives
						      \begin{align*}
							      \reduce(G) & \leq (\opt_1 - 3) + (\alpha \cdot (\opt_2 + 5) + f(G) - 2                                         \\
							                 & \leq \alpha \cdot \opt(G) + f(G) + (\alpha - 1)(5 - \opt_1) \leq \alpha \cdot \opt(G) + f(G)  \ .
						      \end{align*}
					      \item[Subtype \typCiisubb:] We have $\opt(G_2^{\typCii}) \leq \opt_2 + 6$ because we can turn $\OPT_2$ with the dummy edges $\{uy,vz,zy,wy\}$ and at most two more edges, which are guaranteed by \Cref{lemma:3vc-spanning-tree} if required, into a solution for $G_2^{\typCii}$. Thus, \eqref{eq:3vc-c2-ii} gives
						      \begin{align*}
							      \reduce(G) & \leq (\opt_1 - 3) + (\alpha \cdot (\opt_2 + 6) + f(G) - 3                                         \\
							                 & \leq \alpha \cdot \opt(G) + f(G) + (\alpha - 1)(6 - \opt_1) \leq \alpha \cdot \opt(G) + f(G)  \ .
						      \end{align*}
					      \item[Subtype \typCiisubc:] We have $\opt(G_2^{\typCii}) \leq \opt_2 + 5$ because we can turn $\OPT_2$ with the dummy edges $\{uy,vy,wy\}$ and at most two more edges, which are guaranteed by \Cref{lemma:3vc-spanning-tree} if required, into a solution for $G_2^{\typCii}$. Thus, \eqref{eq:3vc-c2-iii} gives
						      \begin{align*}
							      \reduce(G) & \leq (\opt_1 - 3) + (\alpha \cdot (\opt_2 + 5) + f(G) - 2                                          \\
							                 & \leq \alpha \cdot \opt(G) + f(G) + (\alpha - 1)(5 - \opt_1) \leq  \alpha \cdot \opt(G) + f(G)  \ .
						      \end{align*}
				      \end{description}

				\item $(\typBi, \{\typA,\typBi,\typBii,\typCi,\typCii\})$.
				      Since $\OPT_1^{\typBi}$ exists but the condition in \Cref{reduce-3vc:b1} did not apply, we conclude that $\opt_1^{\typBi} \geq \opt_1^{\min} + 2 = \opt_1^{\typCii} + 2$.
				      \begin{description}
					      \item[Subtype \typCiisubba:]
						      We have $\opt(G_2^{\typCii}) \leq \opt_2 + 4$ because we can turn $\OPT_2$ with the dummy edges $\{uy,vy, vw\}$ and at most one more edge, which is guaranteed by \Cref{lemma:3vc-spanning-tree} if required, into a solution for $G_2^{\typCii}$.
						      Thus, continuing \eqref{eq:3vc-c2-i} gives
						      \begin{align*}
							      \reduce(G) & \leq \opt_1 - 2 + \alpha \cdot (\opt_2 + 4) + f(G) - 2                                           \\
							                 & \leq \alpha \cdot \opt(G) + f(G) + (\alpha - 1)(4 - \opt_1) \leq \alpha \cdot \opt(G) + f(G) \ .
						      \end{align*}
					      \item[Subtype \typCiisubb:] We have $\opt(G_2^{\typCii}) \leq \opt_2 + 5$ because we can turn $\OPT_2$ with the dummy edges $\{uy,vz,zy,wy\}$ and at most one more edge, which is guaranteed by \Cref{lemma:3vc-spanning-tree} if required, into a solution for $G_2^{\typCii}$.
						      Thus,  \eqref{eq:3vc-c2-ii} gives
						      \begin{align*}
							      \reduce(G) & \leq \opt_1 - 2 + \alpha \cdot (\opt_2 + 5) + f(G) - 3                                           \\
							                 & \leq \alpha \cdot \opt(G) + f(G) + (\alpha - 1)(5 - \opt_1) \leq \alpha \cdot \opt(G) + f(G) \ .
						      \end{align*}
					      \item[Subtype \typCiisubc:] We have $\opt(G_2^{\typCii}) \leq \opt_2 + 4$ because we can turn $\OPT_2$ with the dummy edges $\{uy,vy,wy\}$ and at most one more edge, which is guaranteed by \Cref{lemma:3vc-spanning-tree} if required, into a solution for $G_2^{\typCii}$.
						      Thus, \eqref{eq:3vc-c2-iii} gives
						      \begin{align*}
							      \reduce(G) & \leq \opt_1 - 2 + \alpha \cdot (\opt_2 + 4) + f(G) - 2                                           \\
							                 & \leq \alpha \cdot \opt(G) + f(G) + (\alpha - 1)(4 - \opt_1) \leq \alpha \cdot \opt(G) + f(G) \ .
						      \end{align*}
				      \end{description}

				\item $(\typBii, \{\typA,\typBi,\typBii\})$. In this case $\OPT_1^{\typBii}$ exists but $t^{\min} = \typCii$. Thus, the definition of \Cref{alg:3-vertex-cuts} and \Cref{def:3vc-ties} imply that $\opt_1^{\typCii} \leq \opt^{\typBii}_1 - 1 = \opt_1 - 1$.
				      Furthermore, in this case at least two vertices of $\{u, v, w\}$ are in a 2EC component in $\OPT_2$.
				      \begin{description}
					      \item[Subtype \typCiisubba:] We have $\opt(G_2^{\typCii}) \leq \opt_2 + 3$ because we can turn $\OPT_2$ with two of the dummy edges $\{uy,vy,vw\}$ and at most one more edge, which is guaranteed by \Cref{lemma:3vc-spanning-tree} if required, into a solution for $G_2^{\typCii}$.
						      Thus, \eqref{eq:3vc-c2-i} gives
						      \begin{align*}
							      \reduce(G) & \leq \opt_1 - 1 + \alpha \cdot (\opt_2 + 3) + f(G) - 2                                           \\
							                 & \leq \alpha \cdot \opt(G) + f(G) + (\alpha - 1)(3 - \opt_1) \leq \alpha \cdot \opt(G) + f(G) \ .
						      \end{align*}
					      \item[Subtype \typCiisubb:] We have $\opt(G_2^{\typCii}) \leq \opt_2 + 4$ because we we can turn $\OPT_2$ with the dummy edges $\{vz,zy \}$, one of the dummy edges $\{uy, wy\}$, and at most one more edge, which is guaranteed by \Cref{lemma:3vc-spanning-tree} if required, into a solution for $G_2^{\typCii}$.
						      Thus, \eqref{eq:3vc-c2-ii} gives
						      \begin{align*}
							      \reduce(G) & \leq \opt_1 - 1 + \alpha \cdot (\opt_2 + 4) + f(G) - 3                                           \\
							                 & \leq \alpha \cdot \opt(G) + f(G) + (\alpha - 1)(4 - \opt_1) \leq \alpha \cdot \opt(G) + f(G) \ .
						      \end{align*}
					      \item[Subtype \typCiisubc:] We have $\opt(G_2^{\typCii}) \leq \opt_2 + 3$ because we can turn $\OPT_2$ with two of the dummy edges $\{uy,vy,wy\}$ and at most one more edge, which is guaranteed by \Cref{lemma:3vc-spanning-tree} if required, into a solution for $G_2^{\typCii}$.
						      Thus, \eqref{eq:3vc-c2-iii} gives
						      \begin{align*}
							      \reduce(G) & \leq \opt_1 - 1 + \alpha \cdot (\opt_2 + 3) + f(G) - 2                                           \\
							                 & \leq \alpha \cdot \opt(G) + f(G) + (\alpha - 1)(3 - \opt_1) \leq \alpha \cdot \opt(G) + f(G) \ .
						      \end{align*}
				      \end{description}

				\item $(\typCi, \{\typA,\typBi,\typCi\})$.
				      In this case $\opt_1^{\typCi}$ exists but $t^{\min} = \typCii$. Thus, the definition of \Cref{alg:3-vertex-cuts} and \Cref{def:3vc-ties} imply that $\opt_1^{\typCii} \leq \opt^{\typCi}_1 - 1 = \opt_1 - 1$.
				      \begin{description}
					      \item[Subtype \typCiisubba:] We have $\opt(G_2^{\typCii}) \leq \opt_2 + 3$ because we can turn $\OPT_2$ with the dummy edges $\{uy,vy,vw\}$ into a solution for $G_2^{\typCii}$.
						      Thus, continuing \eqref{eq:3vc-c2-i} gives
						      \begin{align*}
							      \reduce(G) & \leq \opt_1 - 1 + \alpha \cdot (\opt_2 + 3) + f(G) - 2                                           \\
							                 & \leq \alpha \cdot \opt(G) + f(G) + (\alpha - 1)(3 - \opt_1) \leq \alpha \cdot \opt(G) + f(G) \ .
						      \end{align*}
					      \item[Subtype \typCiisubb:] We have $\opt(G_2^{\typCii}) \leq \opt_2 + 4$ because we can turn $\OPT_2$ with the dummy edges $\{uy,vz,zy,wy\}$ into a solution for $G_2^{\typCii}$.
						      Thus, continuing \eqref{eq:3vc-c2-ii} gives
						      \begin{align*}
							      \reduce(G) & \leq \opt_1 - 1 + \alpha \cdot (\opt_2 + 4) + f(G) - 3                                           \\
							                 & \leq \alpha \cdot \opt(G) + f(G) + (\alpha - 1)(4 - \opt_1) \leq \alpha \cdot \opt(G) + f(G) \ .
						      \end{align*}
					      \item[Subtype \typCiisubc:] We have $\opt(G_2^{\typCii}) \leq \opt_2 + 3$ because we can turn $\OPT_2$ with the dummy edges $\{uy,vy,wy\}$ into a solution for $G_2^{\typCii}$.
						      Thus, continuing \eqref{eq:3vc-c2-iii} gives
						      \begin{align*}
							      \reduce(G) & \leq \opt_1 - 1 + \alpha \cdot (\opt_2 + 3) + f(G) - 2                                           \\
							                 & \leq \alpha \cdot \opt(G) + f(G) + (\alpha - 1)(3 - \opt_1) \leq \alpha \cdot \opt(G) + f(G) \ .
						      \end{align*}
				      \end{description}

				\item $(\typCii, \{\typA,\typBi\})$. We have $\opt_1^{\typCii} = \opt_1$.
				      \begin{description}
					      \item[Subtype \typCiisubba:] We have $\opt(G_2^{\typCii}) \leq \opt_2 + 3$ because we can turn $\OPT_2$ with the dummy  edges $\{uy,vy,vw\}$ into a solution for $G_2^{\typCii}$.
						      Thus, continuing \eqref{eq:3vc-c2-i} gives
						      \begin{align*}
							      \reduce(G) & \leq \opt_1 + \alpha \cdot (\opt_2 + 3) + f(G) - 2                   \\
							                 & \leq \alpha \cdot \opt(G) + f(G) + (\alpha - 1)(2 - \opt_1) + \alpha \\
							                 & \leq \alpha \cdot \opt(G) + f(G) \ .
						      \end{align*}
					      \item[Subtype \typCiisubb:] We have $\opt(G_2^{\typCii}) \leq \opt_2 + 4$ because we can turn $\OPT_2$ with the dummy edges $\{uy,vz,zy,wy\}$ into a solution for $G_2^{\typCii}$.
						      Thus, continuing \eqref{eq:3vc-c2-ii} gives
						      \begin{align*}
							      \reduce(G) & \leq \opt_1 + \alpha \cdot (\opt_2 + 4) + f(G) - 3                   \\
							                 & \leq \alpha \cdot \opt(G) + f(G) + (\alpha - 1)(3 - \opt_1) + \alpha \\
							                 & \leq \alpha \cdot \opt(G) + f(G) \ ,
						      \end{align*}
						      where we use $\opt_1 \geq \frac{2}{1-\alpha} \geq \frac{4\alpha - 3}{\alpha - 1}$ for $\alpha \in [\frac{6}{5},\frac{5}{4}]$.
					      \item[Subtype \typCiisubc:] We have $\opt(G_2^{\typCii}) \leq \opt_2 + 3$ because we can turn $\OPT_2$ with the dummy edges $\{uy,vy,wy\}$ into a solution for $G_2^{\typCii}$.
						      Thus, continuing \eqref{eq:3vc-c2-iii} gives
						      \begin{align*}
							      \reduce(G) & \leq \opt_1 + \alpha \cdot (\opt_2 + 3) + f(G) - 2                   \\
							                 & \leq \alpha \cdot \opt(G) + f(G) + (\alpha - 1)(2 - \opt_1) + \alpha \\
							                 & \leq \alpha \cdot \opt(G) + f(G) \ .
						      \end{align*}
				      \end{description}

				\item $(\typCiii, \typA)$. We have $\opt_1^{\typCii} = \opt_1^{\min} \leq \opt_1^{\typCiii} = \opt_1$.
				      \begin{description}
					      \item[Subtype \typCiisubba:] We have $\opt(G_2^{\typCii}) \leq \opt_2 + 2$ because we can turn $\OPT_2$ with the dummy edges $\{uy,vy\}$ into a solution for $G_2^{\typCii}$.
						      Thus, continuing \eqref{eq:3vc-c2-i} gives
						      \begin{align*}
							      \reduce(G) & \leq \opt_1 + \alpha \cdot (\opt_2 + 2) + f(G) - 2          \\
							                 & \leq \alpha \cdot \opt(G) + f(G) + (\alpha - 1)(2 - \opt_1) \\
							                 & \leq \alpha \cdot \opt(G) + f(G) \ .
						      \end{align*}
					      \item[Subtype \typCiisubb:] We have $\opt(G_2^{\typCii}) \leq \opt_2 + 3$ because we can turn $\OPT_2$ with the dummy edges $\{uy,vz,zy\}$ into a solution for $G_2^{\typCii}$.
						      Thus, continuing \eqref{eq:3vc-c2-ii} gives
						      \begin{align*}
							      \reduce(G) & \leq \opt_1 + \alpha \cdot (\opt_2 + 3) + f(G) - 3          \\
							                 & \leq \alpha \cdot \opt(G) + f(G) + (\alpha - 1)(3 - \opt_1) \\
							                 & \leq \alpha \cdot \opt(G) + f(G) \ .
						      \end{align*}
					      \item[Subtype \typCiisubc:] We have $\opt(G_2^{\typCii}) \leq \opt_2 + 2$ because we can turn $\OPT_2$ with the dummy edges $\{uy,vy\}$ into a solution for $G_2^{\typCii}$.
						      Thus, continuing \eqref{eq:3vc-c2-iii} gives
						      \begin{align*}
							      \reduce(G) & \leq \opt_1 + \alpha \cdot (\opt_2 + 2) + f(G) - 2          \\
							                 & \leq \alpha \cdot \opt(G) + f(G) + (\alpha - 1)(2 - \opt_1) \\
							                 & \leq \alpha \cdot \opt(G) + f(G) \ .
						      \end{align*}
				      \end{description}
			\end{enumerate}

		\item[(\Cref{reduce-3vc:c3}: $t^{\min} = \typCiii$)]
			Let $D^{\typCiii} \coloneq \{uv,uv,vw,vw\}$ denote the set of dummy edges used for the construction of $G_2^\typCiii$.
			First observe that the definition of \Cref{alg:3-vertex-cuts}, \Cref{lemma:3vc-c3-F}, and the induction hypothesis gives
			\begin{align}
				\reduce(G) = |H^{\typCiii}| & \leq |\OPT_1^{\typCiii}| + \abs{H_2^{\typCiii}} + \abs{F^{\typCiii}} \leq \opt_1^{\typCiii} + \reduce(G_2^{\typCiii}) \notag \\
				                            & \leq \opt_1^{\typCiii} + \alpha \cdot \opt(G_2^{\typCiii}) + f(G) \ . \label{eq:3vc-c3}
			\end{align}

			We distinguish the following cases depending on which type combination of \Cref{lemma:3vc-type-combinations} is present for $(\OPT_1,\OPT_2)$.
			\begin{enumerate}[label=(\alph*)]
				\item $(\typA, \{\typA,\typBi,\typBii,\typCi,\typCii,\typCiii\})$. Since $\OPT_1^{\typA}$ exists, \Cref{lemma:3vc-type-A-condition} gives $\opt_1 = \opt_1^{\typA} \geq \opt_1^{\typCiii} + 3$. Moreover, $\opt(G_2^{\typCiii}) \leq \opt_2 + 4$ because we can turn $\OPT_2$ with at most four dummy edges of $D^{\typCiii}$ into a solution for $G_2^{\typCiii}$. Thus, \eqref{eq:3vc-c3} gives
				      \begin{align*}
					      \reduce(G) & \leq \opt_1 - 3 + \alpha \cdot (\opt_2 + 4) + f(G)                                                        \\
					                 & \leq \alpha \cdot \opt(G) + f(G) + (\alpha - 1)(3 - \opt_1) + \alpha \leq \alpha \cdot \opt(G) + f(G) \ ,
				      \end{align*}
				      where we use $\opt_1 \geq \frac{2}{1-\alpha} \geq \frac{4\alpha - 3}{\alpha - 1}$ for $\alpha \in [\frac{6}{5},\frac{5}{4}]$.

				\item $(\typBi, \{\typA,\typBi,\typBii,\typCi,\typCii\})$.
				      Since $\OPT_1^{\typBi}$ exists but the condition in \Cref{reduce-3vc:b1} did not apply, we conclude that $\opt_1 = \opt_1^{\typBi} \geq \opt_1^{\min} + 2 = \opt_1^{\typCiii} + 2$.
				      Moreover, $\opt(G_2^{\typCiii}) \leq \opt_2 + 3$ because we can turn $\OPT_2$ with at most three dummy edges of $D^{\typCiii}$ into a solution for $G_2^{\typCiii}$.  Thus, \eqref{eq:3vc-c3} gives
				      \begin{align*}
					      \reduce(G) & \leq \opt_1 - 2 + \alpha \cdot (\opt_2 + 3) + f(G)                                                        \\
					                 & \leq \alpha \cdot \opt(G) + f(G) + (\alpha - 1)(2 - \opt_1) + \alpha \leq \alpha \cdot \opt(G) + f(G) \ .
				      \end{align*}

				\item $(\typBii, \{\typA,\typBi,\typBii\})$.
				      In this case $\OPT_1^{\typBii}$ exists but $t^{\min} = \typCiii$. Thus, the definition of \Cref{alg:3-vertex-cuts} and \Cref{def:3vc-ties} imply that $\opt_1^{\typCiii} \leq \opt^{\typBii}_1 - 1 = \opt_1 - 1$.
				      Moreover, $\opt(G_2^{\typCiii}) \leq \opt_2 + 2$ because we can turn $\OPT_2$ with at most two dummy edges of $D^{\typCiii}$ into a solution for $G_2^{\typCiii}$.  Thus, \eqref{eq:3vc-c3} gives
				      \begin{align*}
					      \reduce(G) & \leq \opt_1 - 1 + \alpha \cdot (\opt_2 + 2) + f(G)                                                        \\
					                 & \leq \alpha \cdot \opt(G) + f(G) + (\alpha - 1)(1 - \opt_1) + \alpha \leq \alpha \cdot \opt(G) + f(G) \ .
				      \end{align*}

				\item $(\typCi, \{\typA,\typBi,\typCi\})$.
				      In this case $\OPT_1^{\typCi}$ exists but $t^{\min} = \typCiii$. Thus, the definition of \Cref{alg:3-vertex-cuts} and \Cref{def:3vc-ties} imply that $\opt_1^{\typCiii} \leq \opt^{\typCi}_1 - 1 = \opt_1 - 1$.
				      Moreover, $\opt(G_2^{\typCiii}) \leq \opt_2 + 2$ because we can turn $\OPT_2$ with at most two dummy edges of $D^{\typCiii}$ into a solution for $G_2^{\typCiii}$. Thus, \eqref{eq:3vc-c3} gives
				      \begin{align*}
					      \reduce(G) & \leq \opt_1 - 1 + \alpha \cdot (\opt_2 + 2) + f(G)                                                        \\
					                 & \leq \alpha \cdot \opt(G) + f(G) + (\alpha - 1)(1 - \opt_1) + \alpha \leq \alpha \cdot \opt(G) + f(G) \ .
				      \end{align*}

				\item $(\typCii, \{\typA,\typBi\})$.
				      In this case $\OPT_1^{\typCii}$ exists but $t^{\min} = \typCiii$. Thus, the definition of \Cref{alg:3-vertex-cuts} and \Cref{def:3vc-ties} imply that $\opt_1^{\typCiii} \leq \opt^{\typCii}_1 - 1 = \opt_1 - 1$.
				      Moreover, $\opt(G_2^{\typCiii}) \leq \opt_2 + 1$ because we can turn $\OPT_2$ with at most one dummy edge of $D^{\typCiii}$ into a solution for $G_2^{\typCiii}$.  Thus, \eqref{eq:3vc-c3} gives
				      \begin{align*}
					      \reduce(G) & \leq \opt_1 - 1 + \alpha \cdot (\opt_2 + 1) + f(G)                                               \\
					                 & \leq \alpha \cdot \opt(G) + f(G) + (\alpha - 1)(1 - \opt_1) \leq \alpha \cdot \opt(G) + f(G) \ .
				      \end{align*}

				\item $(\typCiii, \typA)$. Using $\opt_1^{\typCiii} = \opt_1$ and $\opt(G_2^{\typCiii}) \leq \opt_2$ as $\OPT_2$ is already feasible for $G_2^{\typCiii}$, \eqref{eq:3vc-c3} implies
				      \begin{align*}
					      \reduce(G) \leq \opt_1 + \alpha \cdot \opt_2 + f(G)  \leq  \alpha \cdot \opt(G) + f(G) \ .
				      \end{align*}
			\end{enumerate}
	\end{description}

	This completes the proof of the lemma.
\end{proof}

\subsubsection{Auxiliary Lemmas for \Cref{alg:cycle-cuts}}

\begin{proposition}\label{prop:properties-for-cycle-cut-alg}
	If \Cref{alg:reduce} executes \Cref{alg:cycle-cuts}, then $|V(G)| > \frac{8}{\eps}$.
\end{proposition}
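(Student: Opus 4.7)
The plan is to observe that the claim follows immediately from the control flow of $\Reduce$ (\Cref{alg:reduce}). The very first conditional of $\Reduce(G)$ tests whether $|V(G)| \leq 8/\varepsilon$; if so, it computes $\OPT(G)$ by enumeration at \Cref{reduce:bruteforce} and returns, without inspecting any of the later reductions. In particular, \Cref{alg:cycle-cuts} is invoked only from \Cref{reduce:c4cut}, which lies strictly after this size check, so reaching \Cref{reduce:c4cut} forces $|V(G)| > 8/\varepsilon$.

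To turn this observation into a proof one only has to verify that between the initial size check and \Cref{reduce:c4cut} the graph $G$ (and hence $|V(G)|$) does not change. This is immediate from inspection of \Cref{alg:reduce}: each earlier conditional (1-vertex cut, self-loop or parallel edge, small $\alpha$-contractible subgraph, irrelevant edge, non-isolating 2-vertex cut, large 3-vertex cut) either has a failing test, in which case control falls through without touching $G$, or it triggers a \textbf{return} after recursive calls that are performed on \emph{newly constructed} graphs (e.g.\ $G[V_1 \cup \{v\}]$, $G \setminus \{e\}$, $G \mid H$) rather than on the current $G$. Hence the $G$ passed to \Cref{alg:cycle-cuts} is precisely the $G$ on which the size check at \Cref{reduce:bruteforce} was evaluated, and the proposition follows.

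There is no real obstacle here; as with the analogous \Cref{prop:properties-for-2vc-alg} and \Cref{prop:properties-for-3vc-alg}, the statement is essentially a bookkeeping assertion about which conditions the caller has already ruled out by the time a given subroutine is reached, and the only thing to be careful about is that the preceding lines of $\Reduce$ do not silently mutate $G$ before invoking \Cref{alg:cycle-cuts}.
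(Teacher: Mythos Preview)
Your proposal is correct and matches the paper's approach: the paper treats this proposition as self-evident from the control flow of \Cref{alg:reduce} and does not give a separate proof, just as it does for the analogous \Cref{prop:properties-for-2vc-alg} and \Cref{prop:properties-for-3vc-alg}. Your write-up simply makes explicit the obvious control-flow reasoning.
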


\begin{lemma}\label{lemma:cycle-cuts-opt-large}
	In \Cref{alg:cycle-cuts}, it holds that $\opt(G'_1) \geq \ceil{\frac{k}{\alpha-1}}$.
\end{lemma}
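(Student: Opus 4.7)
The plan is to prove this by a short vertex-counting argument. After contraction, $G'_1 = G_1 | \{v_1,\ldots,v_k\}$ has exactly $|V_1| + 1$ vertices, since the $k$ cut vertices collapse into a single super-vertex while the vertices of $V_1$ are unchanged. By the hypothesis that $\{v_1,\ldots,v_k\}$ is a large $\mathcal{C}_k$ cut and the choice of $(V_1, V_2)$ in \Cref{alg:cycle-cuts}, we have $|V_1| \geq \ceil{\frac{k}{\alpha-1}} - 1$, and therefore $|V(G'_1)| \geq \ceil{\frac{k}{\alpha-1}}$.

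Next, I would observe that $G'_1$ is 2EC: $G_1 = G[V_1 \cup \{v_1,\ldots,v_k\}]$ inherits 2-edge-connectivity from $G$ because $V_1$ is exactly one side of the cut (so the only way to leave $V_1 \cup \{v_1,\ldots,v_k\}$ in $G$ is through $\{v_1,\ldots,v_k\}$), and contracting a vertex set in a 2EC (multi-)graph preserves 2-edge-connectivity (\Cref{fact:contract-2ec}). Hence any 2ECSS of $G'_1$ exists and every vertex of it has degree at least $2$.

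From the degree bound, a standard handshake argument gives $|E(H)| \geq |V(G'_1)|$ for every 2ECSS $H$ of $G'_1$, provided $|V(G'_1)| \geq 3$. The assumption $\alpha \in [\nicefrac{6}{5}, \nicefrac{5}{4}]$ and $k \in \{4,5,6,7,8\}$ gives $\ceil{\frac{k}{\alpha-1}} \geq \ceil{\frac{4}{1/4}} = 16 \geq 3$, so this bound applies. Combining with the vertex-count estimate yields $\opt(G'_1) \geq |V(G'_1)| \geq \ceil{\frac{k}{\alpha-1}}$, as claimed.

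There is no real obstacle here: the lemma is essentially a consequence of the definition of a large $\mathcal{C}_k$ cut together with the elementary fact that a 2EC graph on $n \geq 3$ vertices needs at least $n$ edges. The only care required is to verify that $G'_1$ is indeed 2EC (so that the degree-2 lower bound kicks in) and that $|V(G'_1)| \geq 3$ (so that the handshake argument is tight in the right direction rather than being dominated by parallel-edge effects).
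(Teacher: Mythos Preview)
Your proposal is correct and follows essentially the same approach as the paper: both count vertices of $G'_1$ to obtain $|V(G'_1)| = |V_1|+1 \geq \ceil{\frac{k}{\alpha-1}}$ and then invoke the elementary bound that a 2ECSS on $n$ vertices has at least $n$ edges. Your write-up is simply more explicit than the paper's one-sentence proof in verifying that $G'_1$ is 2EC and that the vertex count is at least $3$.
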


\begin{proof}
	Since $G'_1 = G_1 | \{v_1,\ldots,v_k\}$ contains at least $|V_1| + 1 \geq (\ceil{\frac{k}{\alpha-1}} - 1) + 1$ vertices and $\opt(G'_1)$ is a feasible 2ECSS solution for $G'_1$, $\opt(G'_1)$ must contain at least $\ceil{\frac{k}{\alpha-1}}$ edges.
\end{proof}

\begin{lemma}\label{lemma:cycle-cuts-feasible}
	In \Cref{alg:cycle-cuts}, if both $H_1 = \Reduce(G'_1)$ and $H_2 = \Reduce(G'_2)$ are 2EC spanning subgraphs of $G'_1$ and $G'_2$, respectively, then $H$ is a 2EC spanning subgraph of $G$.
\end{lemma}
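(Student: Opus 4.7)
The plan is to show that $H = H_1 \cup H_2 \cup C$, where $C$ denotes the cut cycle $v_1v_2, v_2v_3, \ldots, v_{k-1}v_k, v_kv_1$, is a 2EC spanning subgraph of $G$ by applying Fact~\ref{fact:decontract} twice (once on each side of the cut). Throughout, we implicitly use the one-to-one correspondence between edges of $G'_i$ and edges of $G_i$ to interpret $H_i \subseteq E(G'_i)$ as a set of edges of $G_i$ after ``decontraction''.

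First I would establish that for each $i \in \{1,2\}$, the graph $\hat{H}_i := H_i \cup C$ is a 2EC spanning subgraph of $G_i$. To do this, view $C$ as a subgraph of $G_i$ (the cut cycle uses only vertices in $\{v_1,\ldots,v_k\} \subseteq V(G_i)$ and edges in $G_i$). Since $k \geq 4$, $C$ is a simple cycle and hence 2EC, so $C$ itself is a 2EC spanning subgraph of $C$. Moreover, $G_i | C = G_i | \{v_1,\ldots,v_k\} = G'_i$ by construction, and by hypothesis $H_i$ is a 2EC spanning subgraph of $G'_i$. Applying Fact~\ref{fact:decontract} with $G \coloneq G_i$, $H \coloneq C$, $S \coloneq C$, and $S' \coloneq H_i$ then yields that $\hat{H}_i = H_i \cup C$ is a 2EC spanning subgraph of $G_i$.

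Having obtained two 2EC spanning subgraphs $\hat{H}_1$ and $\hat{H}_2$ of $G_1$ and $G_2$, respectively, it remains to argue that $H = \hat{H}_1 \cup \hat{H}_2$ is a 2EC spanning subgraph of $G$. For spanning, note that $V(\hat{H}_1) \cup V(\hat{H}_2) = (V_1 \cup \{v_1,\ldots,v_k\}) \cup (V_2 \cup \{v_1,\ldots,v_k\}) = V(G)$, since $(V_1,V_2)$ partitions $V \setminus \{v_1,\ldots,v_k\}$. For 2-edge-connectivity, observe that $\hat{H}_1 \cap \hat{H}_2 \supseteq C$ is a connected subgraph on the shared vertex set $\{v_1,\ldots,v_k\}$, so $H$ is connected; and every edge $e \in H$ belongs to some $\hat{H}_i$, inside which (by 2EC of $\hat{H}_i$) it lies on a cycle, and this cycle is also contained in $H$. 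Hence $H$ is connected and bridgeless, i.e., 2EC.

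The proof is essentially routine given Fact~\ref{fact:decontract}, so there is no substantial obstacle; the only subtlety is bookkeeping the identification of edges of $G'_i$ with edges of $G_i$ under decontraction, and the observation that the cycle $C$ lives simultaneously in $G_1$ and $G_2$ (since its vertex set lies in both $V(G_1)$ and $V(G_2)$), which is precisely what allows $\hat{H}_1$ and $\hat{H}_2$ to be ``glued'' together across the cut.
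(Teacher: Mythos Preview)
Your proof is correct and uses essentially the same ingredients as the paper's, just organized slightly differently. The paper first observes that $H_1 \cup H_2$ is a 2EC spanning subgraph of $G|\{v_1,\ldots,v_k\}$ (both pieces share the contracted vertex, so the union is connected and bridgeless) and then applies Fact~\ref{fact:decontract} once with the cycle $C$ as the inner 2EC piece; you instead apply Fact~\ref{fact:decontract} once per side to obtain $\hat{H}_1$ and $\hat{H}_2$, and then glue these two 2EC subgraphs across the shared cycle $C$ by a direct connected-and-bridgeless argument. Both routes are equally short and rely on the same fact.
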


\begin{proof}
	Note that both $G_1$ and $G_2$ and hence both $G'_1$ and $G'_2$ must be 2EC since $G$ is 2EC.
	By assumption $H_1$ and $H_2$ are 2EC spanning subgraphs of $G'_1$ and $G'_2$, respectively. Thus, their union is a 2EC spanning subgraph of $G|\{v_1,\ldots,v_k\}$. Since $H$ additionally to $H_1 \cup H_2$ contains a 2EC spanning subgraph of $G[\{v_1,\ldots,v_k\}]$, we conclude that $H$ is a 2EC spanning subgraph of $G$.
\end{proof}

\begin{lemma}\label{lemma:cycle-cuts-approx}
	Let $G$ be a 2ECSS instance. If every recursive call to $\Reduce(G')$ in \Cref{alg:cycle-cuts} on input $G$ satisfies \eqref{eq:reduce-invariant}, then  \eqref{eq:reduce-invariant} holds for input $G$.
\end{lemma}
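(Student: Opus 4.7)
By \Cref{prop:properties-for-cycle-cut-alg}, $|V(G)| > \nicefrac{8}{\varepsilon}$, so the target bound is $\reduce(G) \leq \alpha \cdot \opt(G) + f(G)$. Feasibility of the returned $H'$ follows from \Cref{lemma:cycle-cuts-feasible}, so the whole task is to bound its size. The plan is to split $\OPT(G)$ according to the partition $(V_1, V_2)$, lift the induction hypothesis through the contractions, and then verify the final inequality via a short case distinction on which of $G'_1, G'_2$ is ``small'' (at most $\nicefrac{8}{\varepsilon}$ vertices) and which is ``large''.

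First I will write $\OPT(G) = A_1 \cup A_2 \cup B$, where $A_i$ is the set of edges of $\OPT(G)$ with at least one endpoint in $V_i$ and $B \coloneq \OPT(G) \cap E(G[\{v_1,\ldots,v_k\}])$. Contracting $\{v_1,\ldots,v_k\}$ in $\OPT(G)$ and restricting to $E(G'_i)$ yields a 2EC spanning subgraph of $G'_i$, so $\opt(G'_i) \leq |A_i|$ and therefore $\opt(G'_1) + \opt(G'_2) + |B| \leq \opt(G)$. Moreover, \Cref{lemma:cycle-cuts-opt-large} gives $\opt(G'_i) \geq \lceil \nicefrac{k}{\alpha-1} \rceil$, whence $(\alpha-1)\opt(G'_i) \geq k$. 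Since $|V(G'_1)| + |V(G'_2)| = |V(G)| - k + 2$, we also have
\[
f(G'_1) + f(G'_2) - f(G) \;=\; 4\varepsilon(2-k) - 16.
\]

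Next I will apply the induction hypothesis \eqref{eq:reduce-invariant} to both $H_i = \Reduce(G'_i)$, and combine with $\reduce(G) = |H_1| + |H_2| + k$. The remaining estimate splits into three cases. If both $G'_i$ are large, plugging in the large-case bound and using $\opt(G'_1) + \opt(G'_2) \leq \opt(G)$ gives an ``overhead'' of $4\varepsilon(2-k) - 16 + k$, which for $k \leq 8$ and $\varepsilon \leq \nicefrac{1}{100}$ is at most $0$. If exactly one side is small, say $G'_1$, the small-case bound gives an overhead of $(1-\alpha)\opt(G'_1) + (f(G'_2) - f(G)) - \alpha|B| + k$; the term $f(G'_2) - f(G) = 4\varepsilon(1 - |V_1| - k)$ is $\leq 0$, and $(\alpha-1)\opt(G'_1) \geq k$ absorbs the remaining $+k$. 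If both sides are small, we get $\reduce(G) \leq \opt(G) - |B| + k$, and the slack $(\alpha-1)\opt(G) \geq 2k$ (from $\opt(G) \geq \opt(G'_1) + \opt(G'_2) \geq 2\lceil \nicefrac{k}{\alpha-1}\rceil$) together with $f(G) \geq 0$ suffices.

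The only mildly delicate point is the ``both large'' case, where unlike the non-isolating 2-vertex cut and large 3-vertex cut reductions there is no credit from a type analysis to pay for the $k$ added cycle edges; the argument must rely purely on the slack in $f$ and the fact that $k \leq 8$. The other two cases are immediate consequences of the inequality $(\alpha-1)\opt(G'_i) \geq k$ from \Cref{lemma:cycle-cuts-opt-large}, which is precisely why the definition of a large $\mathcal{C}_k$ cut requires each side to contain at least $\lceil \nicefrac{k}{\alpha-1}\rceil - 1$ vertices.
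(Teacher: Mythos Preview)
Your proof is correct and follows essentially the same three-case strategy as the paper. The only cosmetic differences are that you track the extra $|B|$ term (which is never needed), and in the ``both small'' case you absorb the $+k$ via $(\alpha-1)\opt(G) \geq 2k$ whereas the paper simply uses $k \leq 16 \leq f(G)$; both arguments work.
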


\begin{proof}
	Assume $|V_1| \leq |V_2|$. Note that $\opt(G'_1) + \opt(G'_2) \leq \opt(G)$ and $|V(G'_1)| + |V(G'_2)| \leq |V(G)|$.
	By \Cref{prop:properties-for-cycle-cut-alg}, we have $|V(G)| \geq \frac{8}{\eps}$. Thus, we need to show that
	\(
	\reduce(G) \leq \alpha \cdot \opt(G) + f(G) = \alpha \cdot \opt(G) + 4\varepsilon \cdot |V(G)| - 16 \ .
	\)
	We distinguish three cases.
	\begin{description}
		\item[\textbf{($|V_2| \leq \frac{8}{\eps}$ and $|V_1| \leq \frac{8}{\eps}$).}] By \eqref{eq:reduce-invariant}, we have $H_i = \opt(G'_i)$ for all $i \in \{1,2\}$. Then
			\[
				\reduce(G) = |H_1| + |H_2| + k = \opt(G'_1) + \opt(G'_2) + k \leq \alpha \opt(G) + 4\varepsilon \cdot |V(G)| - 16 = \alpha \opt(G) + f(G) \ ,
			\]
			because $\alpha \geq 1$ and $|V(G)| \geq \frac{8}{\eps}$ gives $k \leq 16 \leq 4\varepsilon \cdot |V(G)| - 16$.
		\item[\textbf{($|V_2| > \frac{8}{\eps}$ and $|V_1| \leq \frac{8}{\eps}$).}] By \eqref{eq:reduce-invariant}, we have $H_1 = \opt(G'_1)$ and $|H_2| \leq \alpha \cdot \opt(G'_2) + f(G'_2)$. Then,
			\[
				\reduce(G) = |H_1| + |H_2| + k
				\leq \opt(G'_1) + \alpha \cdot \opt(G'_2) + f(G'_2) + k \leq \alpha \cdot \opt(G) + f(G) \ ,
			\]
			where we use that $f(G'_2) \leq f(G)$ and $\opt(G'_1) \geq \frac{k}{\alpha - 1}$ (\Cref{lemma:cycle-cuts-opt-large}).
		\item[\textbf{($|V_2| > \frac{8}{\eps}$ and $|V_1| > \frac{8}{\eps}$).}] By \eqref{eq:reduce-invariant}, we have
			\begin{align*}
				\reduce(G) & = |H_1| + |H_2| + k                                                                                                                \\
				           & \leq \alpha \cdot \opt(G'_1) + 4\varepsilon \cdot |V(G'_1)| - 16 + \alpha \cdot \opt(G'_2) + 4\varepsilon \cdot |V(G'_2)| - 16 + k \\
				           & \leq \alpha \cdot \opt(G) + 4\varepsilon \cdot |V(G)| - 16 = \alpha \cdot \opt(G) + f(G) \ ,
			\end{align*}
			because $k \leq 8$.
	\end{description}
	This completes the proof of the lemma.
\end{proof}

\subsubsection{Auxiliary Lemmas for \Cref{alg:k-cuts}}\label{app:prep:helper-k-cuts}

\begin{proposition}\label{prop:properties-for-k-cut-alg}
	If \Cref{alg:reduce} executes \Cref{alg:k-cuts} for $k=4$, then $|V(G)| > \frac{8}{\eps}$, and $G$ contains no cut vertices, non-isolating 2-vertex cuts, or large $3$-vertex cuts.
\end{proposition}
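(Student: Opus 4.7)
The plan is to observe that \Cref{prop:properties-for-k-cut-alg} is essentially a direct consequence of the sequential structure of \Cref{alg:reduce} and does not require any substantive argument beyond tracing through the control flow. The key observation is that \Cref{alg:reduce} is organized as a cascade of guarded branches, each of which unconditionally returns (either directly or via a subroutine that returns). Therefore, if execution on input $G$ reaches \Cref{reduce:k-cut} and invokes \Cref{alg:k-cuts} with parameter $k=4$, every preceding conditional on $G$ must have failed.

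First, I would enumerate the conditionals of \Cref{alg:reduce} that precede \Cref{reduce:k-cut} and note what each rules out once it fails: the initial brute-force check rules out $|V(G)| \leq \frac{8}{\eps}$, so $|V(G)| > \frac{8}{\eps}$; the test at \Cref{reduce:1vc} rules out any $1$-vertex cut, so $G$ has no cut vertices; the tests at \Cref{reduce:loop,reduce:contractible,reduce:irrelevant} rule out self-loops/parallel edges, small $\alpha$-contractible subgraphs, and irrelevant edges (not needed for this proposition, but inherited as well); the test at \Cref{reduce:2vc} rules out non-isolating $2$-vertex cuts; and the test at \Cref{reduce:3vc} rules out large $3$-vertex cuts.

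Second, I would briefly confirm that each of these branches, when taken, does return without reaching \Cref{reduce:k-cut}: in \Cref{reduce:bruteforce,reduce:loop,reduce:irrelevant} this is visible directly, and in \Cref{reduce:1vc,reduce:contractible} the branch ends with \texttt{\textbackslash Return}; the calls in \Cref{reduce:2vc,reduce:3vc} dispatch to \Cref{alg:2-vertex-cuts,alg:3-vertex-cuts}, and inspection of these subroutines shows that all their execution paths end in a \texttt{\textbackslash Return} statement, so control never falls through. (The same is true of \Cref{alg:cycle-cuts} called at \Cref{reduce:c4cut}, though this is not needed for the statement of the proposition.) Combining these two steps yields the four claimed properties.

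I do not anticipate any real obstacle: the proposition is a bookkeeping statement about the algorithm's control flow, entirely analogous to \Cref{prop:properties-for-2vc-alg,prop:properties-for-3vc-alg,prop:properties-for-cycle-cut-alg}, and the proof can be written in a few lines simply by referencing the relevant line numbers of \Cref{alg:reduce}.
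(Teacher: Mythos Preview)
Your proposal is correct and matches the paper's treatment: the paper states this proposition (like the analogous \Cref{prop:properties-for-2vc-alg,prop:properties-for-3vc-alg,prop:properties-for-cycle-cut-alg}) without proof, precisely because it is an immediate consequence of the sequential, early-returning control flow of \Cref{alg:reduce} that you have traced out.
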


\begin{lemma}\label{lemma:k-cuts-opt-large}
	If the condition in \Cref{alg-k-cuts:case1} in \Cref{alg:k-cuts} holds, we have $\opt(G'_1) \geq \ceil{\frac{2k-2}{\alpha-1}}$.
\end{lemma}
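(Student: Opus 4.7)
The plan is to observe that the lemma follows from two elementary facts: the contracted graph $G'_1$ is $2$EC (so that $\opt(G'_1)$ is well-defined and lower-bounded by $|V(G'_1)|$), and the condition of \Cref{alg-k-cuts:case1} directly lower-bounds $|V(G'_1)|$.

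First I would argue that $G'_1$ is $2$-edge connected. Since there are no edges between $V_1$ and $V_2$ in $G$, the graph $G'_1$ is (up to removal of self-loops) the same graph one obtains from $G$ by contracting the entire vertex set $V_2 \cup \{v_1,\ldots,v_k\}$ into a single super-node. By \Cref{fact:contract-2ec}, contracting a vertex subset in a $2$EC graph yields a $2$EC graph, and removing self-loops preserves $2$-edge connectivity. Thus $G'_1$ is $2$EC.

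Next I would invoke the standard fact that any $2$EC graph on $n \geq 3$ vertices has at least $n$ edges (since every vertex has degree at least $2$, so the sum of degrees is at least $2n$). Since $G'_1$ is $2$EC and contains $|V_1| + 1$ vertices, this gives $\opt(G'_1) \geq |V(G'_1)| = |V_1| + 1$.

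Finally, the case-1 condition of \Cref{alg:k-cuts} yields $|V_1| \geq \lceil \tfrac{2k-2}{\alpha - 1}\rceil - 1$, so
\[
\opt(G'_1) \;\geq\; |V_1| + 1 \;\geq\; \ceil*{\frac{2k-2}{\alpha-1}},
\]
as claimed. There is no substantive obstacle here; the only minor care needed is verifying that contracting the cut $\{v_1,\ldots,v_k\}$ together with the ``other side'' $V_2$ really produces $G'_1$, which uses precisely the hypothesis $E[V_1,V_2] = \emptyset$ of the partition.
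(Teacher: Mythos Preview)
Your proposal is correct and follows essentially the same approach as the paper's proof: both argue that $G'_1$ has $|V_1|+1 \geq \ceil{\frac{2k-2}{\alpha-1}}$ vertices, so any 2ECSS of $G'_1$ must contain at least that many edges. You are simply a bit more explicit in justifying that $G'_1$ is 2EC via \Cref{fact:contract-2ec}, whereas the paper leaves this implicit.
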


\begin{proof}
	Since $G'_1 = G_1 | \{v_1,\ldots,v_k\}$ contains at least $|V_1| + 1 \geq (\ceil{\frac{2k-2}{\alpha-1}} - 1) + 1$ vertices and $\opt(G'_1)$ is a feasible 2ECSS solution for $G'_1$, $\opt(G'_1)$ must contain at least $\ceil{\frac{2k-2}{\alpha-1}}$ edges.
\end{proof}

\begin{lemma}\label{lemma:k-cuts-feasible1}
    Let $k=4$.
	If the condition in \Cref{alg-k-cuts:case1} in \Cref{alg:k-cuts} holds and if both $H_1 = \Reduce(G'_1)$ and $H_2 = \Reduce(G'_2)$ are 2EC spanning subgraphs of $G'_1$ and $G'_2$, respectively,
	then there exists $F \subseteq E$ with $|F| \leq 2k-2$ such that $H = H_1 \cup H_2 \cup F$ is a 2EC spanning subgraph of~$G$.
\end{lemma}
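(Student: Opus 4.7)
The approach I take is to mimic the proof of Lemma~\ref{lemma:3vc-both-large-F} (the analogous statement for large 3-vertex cuts), now adapted to a 4-vertex cut. The idea is to first connect each $H_i$ (viewed as a subgraph of $G_i$ after decontracting $\{v_1,\ldots,v_k\}$) using a small set of additional edges $F_i \subseteq E(G_i) \setminus H_i$, and then argue that $H = H_1 \cup H_2 \cup F$ with $F = F_1 \cup F_2$ is 2EC via an analog of Lemma~\ref{lemma:3vc-make-2ec}.

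The first step is to establish an analog of the structural Lemma~\ref{lemma:3vc-spanning-tree}: for each $i \in \{1,2\}$ and each $v_j$, if the connected component $H_i(v_j)$ of $v_j$ in the decontracted $H_i$ is a strict subset of $V(H_i)$, then there exists an edge in $E(G_i) \setminus H_i$ between $V(H_i(v_j))$ and $V(H_i) \setminus V(H_i(v_j))$. I will prove this by contradiction: absence of such an edge forces $Y' = \{v_1,\ldots,v_k\} \setminus V(H_i(v_j))$ (nonempty, of size 1, 2, or 3) to act as a cut in $G$ separating $V_i \setminus V(H_i(v_j))$ from $V_{3-i} \cup V(H_i(v_j))$. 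For $|Y'|=1$ this contradicts the absence of cut vertices (Proposition~\ref{prop:properties-for-k-cut-alg}); for $|Y'|=2$ it yields a 2-vertex cut that is non-isolating whenever $|V_i \setminus V(H_i(v_j))| \geq 2$, again contradicting Proposition~\ref{prop:properties-for-k-cut-alg}; and for $|Y'|=3$ it yields a 3-vertex cut whose larger side contains $V_{3-i}$ (of size $\geq \ceil{6/(\alpha-1)} - 1 \geq 23$), which is large whenever the smaller side $V_i \setminus V(H_i(v_j))$ also has size at least $\ceil{2/(\alpha-1)}-1$.

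In the second step, iteratively applying the structural lemma yields, for each $i$, a set $F_i \subseteq E(G_i) \setminus H_i$ of size at most $p_i - 1 \leq 3$ (where $p_i \leq k = 4$ is the number of connected components of the decontracted $H_i$) such that $H_i \cup F_i$ is connected in $G_i$. Setting $F = F_1 \cup F_2$ we get $|F| \leq (p_1-1) + (p_2-1) \leq 6 = 2k-2$, matching the claimed bound. Finally, I will verify that $H := H_1 \cup H_2 \cup F$ is a 2EC spanning subgraph via an analog of Lemma~\ref{lemma:3vc-make-2ec}: every edge of $H_i \cup F_i$ either lies in a 2EC component of $H_i \cup F_i$, or lies on a $v_j$--$v_{j'}$ path for some $j \neq j'$; in the latter case, since $H_{3-i}$ is 2EC in $G'_{3-i}$, any two distinct cut vertices $v_j, v_{j'}$ lie on a cycle through the super-node in $H_{3-i}$, which after decontraction gives an edge-disjoint $v_j$--$v_{j'}$ path in $H_{3-i} \cup F_{3-i}$, closing the edge into a cycle in $H$.

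The main obstacle I anticipate is handling the edge cases in the structural lemma where $V_i \setminus V(H_i(v_j))$ is \emph{too small} to force a contradiction (specifically, a single vertex for the $|Y'| = 2$ case, or fewer than $\ceil{2/(\alpha-1)}-1$ vertices for the $|Y'|=3$ case). In those degenerate situations, the would-be cut is merely isolating or non-large and hence allowed by the structural properties of $G$. To handle them, I plan to either (i) directly extract the needed connecting edge from the degree-$\geq 2$ obligations that $H_i$ being 2EC in $G'_i$ imposes on the small number of $V_i$-vertices outside $V(H_i(v_j))$, or (ii) allow $F$ to additionally contain edges from $G[\{v_1,\ldots,v_4\}]$, which become self-loops in $G'_i$ but are valid edges of $G$ and can directly merge components containing distinct cut vertices. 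A careful charging argument will ensure that these alternative edges are still accounted for within the total budget of $2k-2 = 6$.
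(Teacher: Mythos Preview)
Your approach mirrors the paper's: build $F = F_1 \cup F_2$ with $|F_i| \leq k-1$ by iteratively adding an edge between two components of the decontracted $H_i$, where the edge's existence is forced by the absence of forbidden cuts in $G$, and then check 2-edge-connectivity. The only substantive difference is which side of the cut you exploit. You work with $Y' = \{v_1,\ldots,v_4\}\setminus V(H_i(v_j))$ and separate $V_i \setminus V(H_i(v_j))$ from the rest; the paper instead looks at the component itself and (implicitly) uses $S = \{v_1,\ldots,v_4\}\cap V(H_i(v_j))$ to separate $V(H_i(v_j))\cap V_i$ from the rest of $G$. Since these two separated sets partition $V_i$, whenever your side $V_i\setminus V(H_i(v_j))$ is too small to yield a non-isolating 2-cut or a large 3-cut, the paper's side $V(H_i(v_j))\cap V_i$ is automatically large enough, so your anticipated edge cases dissolve as soon as you switch viewpoints. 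The only residual case is $V(H_i(v_j))\cap V_i=\emptyset$, i.e.\ the component is the singleton $\{v_j\}$; the paper dispatches this directly by observing that then $\{v_1,\ldots,v_4\}\setminus\{v_j\}$ is a large 3-vertex cut separating $V_i$ from $V_{3-i}\cup\{v_j\}$. So plans (i)/(ii) are unnecessary: just argue with $S$ instead of $Y'$.
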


\begin{proof}
	We show the existence of the desired edges $F = F_1 \cup F_2$ inductively.

	We treat $H_1$ as a subgraph of $G_1$ and assume that $V(G_1) = V(H_1)$.
	Start with $F_1 = \emptyset$. Note that $H_1 \cup F_1$ contains at most $k$ connected components, because $H_1$ is 2EC for $G'_1$.
	For each $v_i \in \{v_1,\ldots,v_k\}$, let $H_1(v_i)$ denote the connected component of $H_1 \cup F_1$ that contains $v_i$. Note that $H_1(v_i)$ can be composed of the single vertex $v_i$.

	If $H_1 \cup F_1$ is not connected, let $v_i \in \{v_1,\ldots,v_k\}$ such that $H_1(v_i) \neq (H_1 \cup F_1)$, and suppose that there is no edge between $H_1(v_i)$ and $(H_1 \cup F_1) \setminus H_1(v_i)$ in $G$.
	If $H_1(v_i)$ is composed of the single vertex $v_i$, then $\{v_1,\ldots,v_k\} \setminus \{v_i\}$ is a large $3$-vertex cut in $G$, a contradiction to \Cref{prop:properties-for-k-cut-alg}. Otherwise, that is, $H_1(v_i)$ consists of at least two vertices, then $v_i$ is a cut vertex in $G$ that separates $V(H_1(v_i)) \setminus \{v_i\}$ from $V \setminus V((H_1 \cup F_1) \setminus H_1(v_i))$, a contradiction to \Cref{prop:properties-for-k-cut-alg}.
	Thus, there is an edge $e$ between $H_1(v_i)$ and $(H_1 \cup F_1) \setminus H_1(v_i)$ in $G$.
	We now add $e$ to $F_1$, and observe that this reduces the number of connected component of $H_1 \cup F_1$ by one.

	Thus, after repeating this argument by at most $k-1$ times, we showed the existence of a set of edges $F_1$ with $|F_1| \leq k-1$ such that $H_1 \cup F_1$ is connected and spans $V(G_1)$.

	By symmetry, we can prove the existence of edges $F_2$ with $|F_2| \leq k-1$ such that $H_2 \cup F_2$ is connected and spans $V(G_2)$.
	Finally, we set $F = F_1 \cup F_2$, and have that $H_1 \cup H_2 \cup F$ is 2EC and spans $V(G)$.
\end{proof}

\begin{lemma}\label{lemma:k-cuts-feasible2}
	If the condition in \Cref{alg-k-cuts:case2} in \Cref{alg:k-cuts} for $k=4$ holds and if $H_2 = \Reduce(G'_2)$ is a 2EC spanning subgraph of $G'_2$,
	then $H' = H_2 \cup Z$ is a 2EC spanning subgraph of~$G$.
\end{lemma}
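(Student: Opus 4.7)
The plan is to observe that this lemma follows almost immediately from unpacking \Cref{def:large_k_vertex_cut}, since \Cref{alg-k-cuts:case2} of \Cref{alg:k-cuts} was reached precisely because condition~1 of that definition fails. I would first argue that, because \Cref{alg:reduce} called \Cref{alg:k-cuts} with the large $4$-vertex cut $\{v_1,v_2,v_3,v_4\}$ and the condition in \Cref{alg-k-cuts:case1} (which corresponds to condition~1 of \Cref{def:large_k_vertex_cut}) does not hold, the cut must certify the largeness property via condition~2 of \Cref{def:large_k_vertex_cut}. In particular, the partition $(V_1,V_2)$ picked in \Cref{alg-k-cuts:case2}, with $|V_1| \leq \ceil{\frac{2k-2}{\alpha-1}}-2$ and $E[V_1,V_2] = \emptyset$, is exactly the partition guaranteed by condition~2.

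Next I would invoke condition~2 directly: for any 2ECSS $\tilde H_2$ of $G'_2$, there exists an edge set $Z \subseteq E$ with $|Z| \leq \alpha \cdot \opt(G'_1)$ such that $\tilde H_2 \cup Z$ is a 2ECSS of $G$. Apply this with $\tilde H_2 := H_2 = \Reduce(G'_2)$, which is a 2EC spanning subgraph of $G'_2$ by assumption. This produces a set $Z$ as in \Cref{alg:k-cuts}, and the resulting $H' = H_2 \cup Z$ is a 2EC spanning subgraph of $G$, as required.

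There is essentially no obstacle here: the lemma is a direct restatement of the way condition~2 of \Cref{def:large_k_vertex_cut} is used inside \Cref{alg:k-cuts}, with the only verification being that the hypotheses of the definition line up with the variables set in \Cref{alg-k-cuts:case2}. The only subtlety worth writing down explicitly is that $H_2$ is interpreted as a subgraph of $G$ after decontracting $\{v_1,\ldots,v_k\}$, so that $H_2 \cup Z$ makes sense as an edge set of $G$; this is the same implicit decontraction convention used throughout the reduction (cf.\ the remark following \Cref{alg:k-cuts}).
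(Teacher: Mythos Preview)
Your proposal is correct and follows essentially the same approach as the paper: both arguments simply invoke Condition~2 of \Cref{def:large_k_vertex_cut}, noting that since the cut is large and the partition required by Condition~1 (which is exactly the condition in \Cref{alg-k-cuts:case1}) does not exist, Condition~2 must hold, and then apply it with $H_2$ in the role of the 2ECSS of $G'_2$. The paper's proof is in fact a single sentence; your write-up is more explicit but makes the same observation.
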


\begin{proof}
	Since for the $k$-vertex cut, Condition 2 of \Cref{def:large_k_vertex_cut} applies and $H_2$ is a 2ECSS of $G'_2$, we conclude that $H'$ is a 2ECSS of $G$.
\end{proof}

\begin{lemma}\label{lemma:k-cuts-approx}
	Let $G$ be a 2ECSS instance. If every recursive call to $\Reduce(G')$ in \Cref{alg:k-cuts} for $k=4$ on input $G$ satisfies \eqref{eq:reduce-invariant}, then  \eqref{eq:reduce-invariant} holds for input $G$.
\end{lemma}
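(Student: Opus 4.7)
The plan is to mirror the structure of \Cref{lemma:cycle-cuts-approx}, but branch first on which of the two cases of \Cref{alg:k-cuts} is executed. First I would note the standing facts for this algorithm: by \Cref{prop:properties-for-k-cut-alg} we have $|V(G)| > \nicefrac{8}{\varepsilon}$, so $f(G) \geq 16$, and by the same argument as in the $\mathcal C_k$-cut setting (restrict $\OPT(G)$ to $E(G_i)$ and observe that $2$-edge-connectivity is preserved under contraction), we get the key ``splitting'' inequality $\opt(G_1') + \opt(G_2') \leq \opt(G)$. I would also use the monotonicity $f(G_i') \leq f(G)$ following from $|V(G_i')| \leq |V(G)|$.

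For Case 1 of the algorithm, \Cref{lemma:k-cuts-feasible1} tells us $\reduce(G) \leq |H_1| + |H_2| + |F|$ with $|F| \leq 2k-2 = 6$. I would split into the three standard sub-cases of \Cref{lemma:cycle-cuts-approx}. When both $|V(G_i')| > \nicefrac{8}{\varepsilon}$, the induction gives $\reduce(G) \leq \alpha(\opt(G_1')+\opt(G_2')) + 4\varepsilon(|V(G_1')|+|V(G_2')|) - 32 + 6$, and since $|V(G_1')| + |V(G_2')| = |V(G)| - 2$, the slack $-8\varepsilon - 26 \leq -16$ gives the bound. When both are small, $\opt(G) + 6 \leq \alpha\opt(G) + f(G)$ holds because $f(G) \geq 16 \geq 6$. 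The only subtle subcase is when exactly one side (say $G_1'$) is small: here the induction gives $\reduce(G) \leq \opt(G_1') + \alpha\opt(G_2') + f(G_2') + 6$, which after adding and subtracting $\alpha \opt(G_1')$ becomes $\alpha\opt(G) + f(G) + 6 - (\alpha-1)\opt(G_1')$, and \Cref{lemma:k-cuts-opt-large} gives $\opt(G_1') \geq \lceil\nicefrac{2k-2}{\alpha-1}\rceil = \lceil\nicefrac{6}{\alpha-1}\rceil$, precisely enough to absorb the $+6$.

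For Case 2 of the algorithm, \Cref{lemma:k-cuts-feasible2} (via Condition 2 of \Cref{def:large_k_vertex_cut}) yields $\reduce(G) \leq |H_2| + |Z|$ with $|Z| \leq \alpha \cdot \opt(G_1')$. If $|V(G_2')| > \nicefrac{8}{\varepsilon}$, the induction gives $|H_2| \leq \alpha\opt(G_2') + f(G_2')$, so by the splitting inequality and monotonicity of $f$,
\[
\reduce(G) \leq \alpha(\opt(G_1') + \opt(G_2')) + f(G_2') \leq \alpha\opt(G) + f(G).
\]
If $|V(G_2')| \leq \nicefrac{8}{\varepsilon}$, then $|H_2| = \opt(G_2') \leq \alpha\opt(G_2')$, so $\reduce(G) \leq \alpha(\opt(G_1') + \opt(G_2')) \leq \alpha\opt(G) \leq \alpha\opt(G) + f(G)$ using $f(G) \geq 0$.

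I do not expect any real obstacle: the hardest part is simply keeping track of which $|V(G_i')|$ is above $\nicefrac{8}{\varepsilon}$ in Case 1, and verifying that in every sub-case the constant $2k-2 = 6$ gluing edges can be paid for by either the $f$-slack or by $(\alpha-1)\opt(G_1')$ via \Cref{lemma:k-cuts-opt-large}. The Case 2 analysis is essentially immediate once the definition-provided inequality $|Z| \leq \alpha\opt(G_1')$ is combined with the splitting inequality, which is exactly why Condition 2 of \Cref{def:large_k_vertex_cut} was stated with the factor $\alpha$.
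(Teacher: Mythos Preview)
Your proposal is correct and follows essentially the same approach as the paper's proof: split on which branch of \Cref{alg:k-cuts} is taken, then in Case~1 do the three-way split on which of $G_1',G_2'$ exceeds the $\nicefrac{8}{\varepsilon}$ threshold (using \Cref{lemma:k-cuts-opt-large} to pay for the $2k-2$ gluing edges in the mixed sub-case), and in Case~2 combine $|Z|\leq\alpha\cdot\opt(G_1')$ with the splitting inequality. The only cosmetic differences are that the paper phrases the thresholds in terms of $|V_i|$ rather than $|V(G_i')|$, and bounds $|V(G_1')|+|V(G_2')|\leq |V(G)|$ rather than computing the exact value $|V(G)|-2$ as you do.
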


\begin{proof}
	By \Cref{prop:properties-for-k-cut-alg}, we have $|V(G)| \geq \frac{8}{\eps}$. Thus, we need to show that
	\(
	\reduce(G) \leq \alpha \cdot \opt(G) + f(G) = \alpha \cdot \opt(G) + 4\varepsilon \cdot |V(G)| - 16 \ .
	\)

	We first consider the case where the condition in \Cref{alg-k-cuts:case1} holds, that is,
	there exists a partition $(V_1,V_2)$ of $V \setminus \{v_1,\ldots,v_k\}$ with $E[V_1,V_2] = \emptyset$ and
	$\ceil{\frac{2k-2}{\alpha-1}} - 1 \leq |V_1| \leq |V_2|$.
	Note that $\opt(G'_1) + \opt(G'_2) \leq \opt(G)$ and $|V(G'_1)| + |V(G'_2)| \leq |V(G)|$.
	We distinguish three cases.
	\begin{description}
		\item[\textbf{($|V_2| \leq \frac{8}{\eps}$ and $|V_1| \leq \frac{8}{\eps}$).}]
			By \eqref{eq:reduce-invariant}, we have $H_i = \opt(G'_i)$ for all $i \in \{1,2\}$. Thus
			\begin{align*}
				\reduce(G) & = |H_1| + |H_2| + |F|                                                            \\
				           & \leq \opt(G'_1) + \opt(G'_2) + (2k-2)                                            \\
				           & \leq \alpha \opt(G) + 4\varepsilon \cdot |V(G)| - 16 = \alpha \opt(G) + f(G) \ ,
			\end{align*}
			because $\alpha \geq 1$ and $|V(G)| \geq \frac{8}{\eps}$ gives $2k - 2 \leq 16 \leq 4\varepsilon \cdot |V(G)| - 16$.
		\item[\textbf{($|V_2| > \frac{8}{\eps}$ and $|V_1| \leq \frac{8}{\eps}$).}]
			By \eqref{eq:reduce-invariant}, we have $H_1 = \opt(G'_1)$ and $|H_2| \leq \alpha \cdot \opt(G'_2) + f(G'_2)$. Then,
			\[
				\reduce(G) = |H_1| + |H_2| + |F|
				\leq \opt(G'_1) + \alpha \cdot \opt(G'_2) + f(G'_2) + (2k-2) \leq \alpha \cdot \opt(G) + f(G) \ ,
			\]
			where we use that $f(G'_2) \leq f(G)$ and $\opt(G'_1) \geq \frac{2k-2}{\alpha - 1}$ (\Cref{lemma:k-cuts-opt-large}).
		\item[\textbf{($|V_2| > \frac{8}{\eps}$ and $|V_1| > \frac{8}{\eps}$).}]
			By \eqref{eq:reduce-invariant}, we have
			\begin{align*}
				\reduce(G) & = |H_1| + |H_2| + |F|                                                                                                                   \\
				           & \leq \alpha \cdot \opt(G'_1) + 4\varepsilon \cdot |V(G'_1)| - 16 + \alpha \cdot \opt(G'_2) + 4\varepsilon \cdot |V(G'_2)| - 16 + (2k-2) \\
				           & \leq \alpha \cdot \opt(G) + 4\varepsilon \cdot |V(G)| - 16 = \alpha \cdot \opt(G) + f(G) \ ,
			\end{align*}
			because $2k - 2 \leq 16$.
	\end{description}

	Next, we consider the case where the condition in \Cref{alg-k-cuts:case2} holds.
	Note that $\opt(G'_1) + \opt(G'_2) \leq \opt(G)$ and $|V(G'_1)| + |V(G'_2)| \leq |V(G)|$.
	Since Condition 1 in \Cref{def:large_k_vertex_cut} does not hold, $|V_1| \leq \ceil{\frac{2k-2}{\alpha-1}} - 2 \leq \frac{8}{\eps}$ as $\eps \leq \frac{1}{100}$. Thus, we distinguish two cases.
	\begin{description}
		\item[\textbf{($|V_2| \leq \frac{8}{\eps}$ and $|V_1| \leq \frac{8}{\eps}$).}]
			We have $|Z| \leq \alpha \cdot \opt(G'_1)$, and by \eqref{eq:reduce-invariant}, we have $H_2 = \opt(G'_2)$. Thus,
			\begin{align*}
				\reduce(G) & = |Z| + |H_2| \leq \alpha \opt(G'_1) + \opt(G'_2)                                \\
				           & \leq \alpha \opt(G) + 4\varepsilon \cdot |V(G)| - 16 = \alpha \opt(G) + f(G) \ ,
			\end{align*}
			because $\alpha \geq 1$ and $|V(G)| \geq \frac{8}{\eps}$ gives $0 \leq 4\varepsilon \cdot |V(G)| - 16$.
		\item[\textbf{($|V_2| > \frac{8}{\eps}$ and $|V_1| \leq \frac{8}{\eps}$).}]
			We have $|Z| \leq \alpha \cdot \opt(G'_1)$, and by \eqref{eq:reduce-invariant}, we have  $|H_2| \leq \alpha \cdot \opt(G'_2) + f(G'_2)$. Thus,
			\begin{align*}
				\reduce(G) & = |Z| + |H_2| \leq \alpha \cdot \opt(G'_1) + \alpha \cdot \opt(G'_2) + f(G'_2) \\
				           & \leq \alpha \cdot \opt(G) + f(G) \ ,
			\end{align*}
			where we use that $f(G'_2) \leq f(G)$.
	\end{description}

	This completes the proof of the lemma.
\end{proof}

\subsection{Proof of \Cref{lem:reduction-to-structured}}\label{sec:reduce-proofs}

In this final subsection, we prove \Cref{lem:reduction-to-structured}. The statement is implied by the following three lemmas.
We first show that $\Reduce(G)$ runs in polynomial time. The proof is similar to the corresponding proof in \cite{GargGA23improved}.

\begin{lemma}[running time]\label{lem:reduction-to-structured-runtime}
	For any 2ECSS instance $G$, $\Reduce(G)$ runs in polynomial time in $|V(G)|$ if $\ALG(G)$ runs in polynomial time in $|V(G)|$.
\end{lemma}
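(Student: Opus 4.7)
The plan is to argue in three steps: (i) each test performed by \Cref{alg:reduce} and its subroutines can be carried out in polynomial time, (ii) each recursive call is on a strictly smaller instance (in a suitable sense), and (iii) a standard divide-and-conquer recurrence yields a polynomial overall running time.

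For step (i), I would go through the conditions of \Cref{alg:reduce} in order. Detecting self-loops, parallel edges, and $1$-vertex cuts is trivial. The existence of non-isolating 2-vertex cuts, 3-vertex cuts, 4-vertex cuts, and $\mathcal C_k$ cuts for fixed $k \leq 8$ can be checked via standard max-flow/min-cut subroutines and enumeration over sets of size at most $8$; classifying whether such a cut is large (according to \Cref{def:large_three_cut,def:large_cycle_cut,def:large_k_vertex_cut}) just inspects the sizes of the two sides, except for the second condition of \Cref{def:large_k_vertex_cut}, which can be decided because $|V_1|$ is bounded by a constant depending only on $\varepsilon$, so $\opt(G'_1)$ and any candidate edge set $Z$ can be enumerated in constant time. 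To detect $\alpha$-contractible subgraphs of size at most $8/\varepsilon$, enumerate all vertex subsets of this constant size and, for each, check $\alpha$-contractibility by computing $\OPT$ of the corresponding constant-size instance via brute force. The base case (\Cref{reduce:bruteforce}) is brute-force enumeration on constantly many vertices, hence constant time. The edge sets $F'$, $F^t$, $F^{\typCiii}$, etc.\ that glue the recursive solutions are of constant size and can be found by brute-force enumeration over the $O(1)$ relevant edges identified by \Cref{lemma:2vc-type-combinations,lemma:3vc-spanning-tree,lemma:3vc-both-large-F,lemma:3vc-b1-F,lemma:3vc-b2-F,lemma:3vc-c1-F,lemma:3vc-c2-i-F,lemma:3vc-c2-ii-F,lemma:3vc-c2-iii-F,lemma:3vc-c3-F,lemma:k-cuts-feasible1}. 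Computing the $\OPT_1^t$ subgraphs for all types (\Cref{reduce-2vc:compute-opt-types,reduce-3vc:compute-opt-types}) is done by brute force on an instance of constantly many vertices, as explained in~\cite{GargGA23improved}.

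For step (ii), I would set $n \coloneq |V(G)|$ and verify that each reduction invokes $\Reduce$ only on graphs with strictly fewer vertices. The $1$-vertex cut reduction produces two instances of sizes $|V_1|+1$ and $|V_2|+1$, each strictly smaller than $n$ since $V_1, V_2 \neq \emptyset$, and their sizes sum to $n+1$. The contractible-subgraph, parallel-edge and irrelevant-edge reductions produce a single instance with at least one fewer vertex or edge. The 2-, 3-, $\mathcal C_k$-, and $4$-vertex-cut reductions produce instances whose vertex counts are $|V_i|+c$ for some constant $c$ (accounting for the contracted/cut vertices and the constantly many dummy vertices introduced in \Cref{alg:3-vertex-cuts}); since the smaller side satisfies $|V_1| \geq \lceil\frac{2}{\alpha-1}\rceil-1 \geq 7$ for $\alpha \leq \nicefrac54$ and analogous bounds for the other cuts, both subinstances are of strictly smaller size than $n$, and their sizes sum to at most $n + O(1)$. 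In particular, each edge-removing reduction (parallel, irrelevant) strictly decreases $|E(G)| \leq n^2$, and each vertex-deleting/splitting reduction strictly decreases the vertex count of each child.

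For step (iii), let $T(n)$ be the worst-case running time of $\Reduce$ on an $n$-vertex instance. Step (i) gives $T(n) \leq p(n) + T(n_1) + T(n_2) + T_{\ALG}(n)$ for some polynomial $p$ and instance sizes with $n_1 + n_2 \leq n + O(1)$ and $\max(n_1,n_2) \leq n-1$, where $T_{\ALG}(n)$ is polynomial by assumption; in non-splitting cases only a single recursive call on a strictly smaller instance appears. Unrolling this recurrence, the recursion tree has depth $O(n)$, and a straightforward induction on $n$ shows that the total number of leaves is $O(n)$ and the total work across the tree is $O(n \cdot (p(n) + T_{\ALG}(n)))$, which is polynomial in $n$. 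The main obstacle I anticipate is verifying the ``strictly smaller'' property uniformly across all branches of \Cref{alg:3-vertex-cuts}, where the constructions $G_2^t$ attach up to two dummy vertices; this requires using the lower bound $|V_1| \geq \lceil\frac{2}{\alpha-1}\rceil-1 \geq 7$ (cf.\ \Cref{lemma:3vc-opt-1-large}) to dominate the $O(1)$ additive overhead from dummy vertices, and similarly for \Cref{alg:k-cuts}.
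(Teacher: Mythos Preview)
Your three-step plan mirrors the paper's proof, but the formalization in step~(iii) has a gap. You set up the recurrence $T(n)$ as a function of the vertex count alone and claim the recursion tree has depth $O(n)$; however, the parallel-edge and irrelevant-edge reductions (\Cref{reduce:loop,reduce:irrelevant}) recurse on a graph with the \emph{same} number of vertices and one fewer edge. You note this in step~(ii) (``strictly decreases $|E(G)|$'') but do not carry it into the recurrence: neither $\max(n_1,n_2)\le n-1$ nor ``depth $O(n)$'' holds as stated, and your additive bound $n_1+n_2\le n+O(1)$ does not by itself control the number of leaves. The paper avoids this by taking the size parameter to be $\sigma = n^2 + m^2$: every single-child step strictly decreases $\sigma$, and for every two-child split one has $\sigma',\sigma''<\sigma$ with $\sigma'+\sigma''\le\sigma$, giving $T(\sigma)\le \sigma\cdot\mathrm{poly}(\sigma)$ directly.

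A second, smaller point: your handling of Condition~2 in \Cref{def:large_k_vertex_cut} glosses over the universal quantifier on $H_2$. The set $Z$ lives in $E$, not just in the constant-size side $G_1$, so ``enumerate $Z$ in constant time'' is not enough. The paper observes instead that whether a suitable $Z$ exists depends only on the connectivity configuration that $H_2$ induces on $\{v_1,\dots,v_4\}$, of which there are constantly many; one enumerates those configurations rather than $H_2$ itself.
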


\begin{proof}
	Let $n = |V(G)|$ and $m = |V(G)|$, and $\sigma = n^2 + m^2$ be the size of the problem. We first argue that every non-recursive step of \Cref{alg:reduce,alg:2-vertex-cuts,alg:3-vertex-cuts,alg:cycle-cuts,alg:k-cuts} can be performed in time polynomial in $\sigma$.
	Clearly, all computations performed in \Cref{alg:reduce} can be performed in polynomial time in $\sigma$. In particular, it is possible to check in polynomial-time whether $G$ contains an $\alpha$-contractible subgraph with at most $\frac{8}{\varepsilon}$ vertices: Enumerate over all subsets $W$ of vertices of the desired size; for each such $W$, one can compute in polynomial time a minimum-size 2ECSS $\OPT'$ of $G[W]$, if any such subgraph exists. Furthermore, one can compute in polynomial time a maximum cardinality subset of edges $F$ with endpoints in $W$ such that $G \setminus F$ is 2EC. Then, it is sufficient to compare the size of $\OPT'$, if exists, with $|E(G[W]) \setminus F|$.

	We now consider \Cref{alg:3-vertex-cuts}.
	We already argued that the computation in \Cref{reduce-3vc:compute-opt-types} can be executed in polynomial time in $\sigma$. Moreover, determining the $\typCii$ subcases in \Cref{reduce-3vc:c2-i,reduce-3vc:c2-ii,reduce-3vc:c2-iii} and selecting a desired $\OPT_1^{\typCii}$ solution in \Cref{reduce-3vc:c2-ii-sol-constr,reduce-3vc:c2-iii-sol-constr} and $\OPT_1^{\typCiii}$ solution in \Cref{reduce-3vc:c3-select-solution} can be done in constant time, as $G_1$ is of constant size in this case.
	Finding the corresponding sets of edges $F$ in \Cref{alg:3-vertex-cuts} can also be done in polynomial time in $\sigma$ by enumeration, because \Cref{lemma:3vc-both-large-F,lemma:3vc-b1-F,lemma:3vc-b2-F,lemma:3vc-c1-F,lemma:3vc-c2-i-F,lemma:3vc-c2-ii-F,lemma:3vc-c2-iii-F,lemma:3vc-c3-F} guarantee that there exists a desired set of edges $F$ of constant size.
	The same arguments can also be used to argue that  \cref{alg:2-vertex-cuts} can be implemented in polynomial time.
	Finally, \Cref{alg:cycle-cuts,alg:k-cuts} can clearly also be implemented in polynomial time. In particular, checking whether $G$ contains a large $4$-vertex cut by \Cref{def:large_k_vertex_cut}(2) can be done in polynomial time by enumerating all possible connectivity configurations of $H_2$, similar to enumerating the different solution types of 3-vertex cuts explained above.

	To bound the total running time including recursive steps, let $T(\sigma)$ be the running time of \Cref{alg:reduce} as a function of $\sigma$. In each step, starting with a problem of size $\sigma$, we generate either a single subproblem of size $\sigma' < \sigma$ (to easily see this in \Cref{alg:2-vertex-cuts,alg:3-vertex-cuts,alg:cycle-cuts,alg:k-cuts}, as we assume that the cuts are sufficiently large) or two subproblems of size $\sigma'$ and $\sigma''$ such that $\sigma',\sigma'' < \sigma$ and $\sigma' + \sigma'' \leq \sigma$. Thus, $T(\sigma) \leq \max\{T(\sigma'), T(\sigma') + T(\sigma'') \} + \mathrm{poly}(\sigma)$ with $\sigma',\sigma'' < \sigma$ and $\sigma' + \sigma'' \leq \sigma$. Therefore, $T(\sigma) \leq \sigma \cdot \mathrm{poly}(\sigma)$.
\end{proof}

Next, we show that $\Reduce(G)$ returns a feasible solution to 2ECSS on $G$.

\begin{lemma}[correctness]\label{lem:reduction-to-structured-correctness}
	For any 2ECSS instance $G$, $\Reduce(G)$ returns a 2EC spanning subgraph of~$G$.
\end{lemma}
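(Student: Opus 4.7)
The plan is to prove correctness by strong induction on $\sigma(G) := |V(G)|^2 + |E(G)|^2$, mirroring the progress measure used in the running-time analysis (Lemma~\ref{lem:reduction-to-structured-runtime}). For the base case, when $|V(G)| \leq \nicefrac{8}{\varepsilon}$, the algorithm returns $\OPT(G)$ computed by enumeration, which is by definition a 2ECSS of $G$.

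For the inductive step, I would proceed case by case through \Cref{alg:reduce} in the order the conditions are checked. For the 1-vertex cut reduction, feasibility follows from $G$ being 2EC together with \Cref{lemma:cut-vertex-egal} (the two recursive outputs are each 2EC on their side and share the cut vertex). For the self-loop/parallel-edge and irrelevant-edge reductions, \Cref{lemma:self-loops-parallel-edges} and \Cref{lemma:irrelevant-edge} respectively guarantee that $G\setminus\{e\}$ is still 2EC (or admits a 2ECSS of the same size), so the recursive output remains feasible for $G$. For $\alpha$-contractible subgraphs, $H$ is 2EC by assumption and $G|H$ is 2EC by \Cref{fact:contract-2ec}, so by the induction hypothesis the recursive call returns a 2ECSS of $G|H$; combining via \Cref{fact:decontract} yields a 2ECSS of $G$. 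In each of these steps the recursive instance strictly decreases in $\sigma$, so induction applies.

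For the cut-removal subroutines, all the heavy lifting has already been done. In each branch of \Cref{alg:2-vertex-cuts,alg:3-vertex-cuts,alg:cycle-cuts,alg:k-cuts}, the recursive subinstances $G'_i$, $G_i^{t}$, or $G_2^{\typCii}$ are genuinely smaller 2EC graphs (one has to just confirm 2-edge connectivity of the contracted/augmented subinstances, which is routine using \Cref{fact:contract-2ec} and the dummy edges added in the reductions). Then the induction hypothesis gives that each recursive call returns a 2ECSS of its input, and the auxiliary feasibility lemmas---\Cref{lemma:2vc-both-large-F,lemma:2vc-b-F,lemma:2vc-c-F} for \Cref{alg:2-vertex-cuts}, \Cref{lemma:3vc-both-large-F,lemma:3vc-b1-F,lemma:3vc-b2-F,lemma:3vc-c1-F,lemma:3vc-c2-i-F,lemma:3vc-c2-ii-F,lemma:3vc-c2-iii-F,lemma:3vc-c3-F} for \Cref{alg:3-vertex-cuts}, and \Cref{lemma:cycle-cuts-feasible,lemma:k-cuts-feasible1,lemma:k-cuts-feasible2} for \Cref{alg:cycle-cuts,alg:k-cuts}---establish that the claimed small set $F$ (or $Z$) exists so that the combined output is a 2ECSS of $G$.

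Finally, if \Cref{alg:reduce} reaches \Cref{reduce:call-alg}, none of the conditions above applied; combining with \cref{prop:properties-for-2vc-alg,prop:properties-for-3vc-alg,prop:properties-for-cycle-cut-alg,prop:properties-for-k-cut-alg}, this means $G$ is simple, 2VC, has at least $\nicefrac{8}{\varepsilon}$ vertices, and satisfies all the exclusion conditions of \Cref{def:structured}, so $G$ is $(\alpha,\varepsilon)$-structured and $\ALG(G)$ returns a 2ECSS of $G$ by assumption. The main obstacle is only bookkeeping: ensuring the recursion terminates (each recursive call strictly reduces $\sigma$, which needs a short verification especially in \Cref{alg:3-vertex-cuts} where dummy vertices are added but the constant-size side has been removed, so $\sigma$ still decreases) and that each invoked auxiliary lemma's hypothesis holds in context. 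No new combinatorial argument is required.
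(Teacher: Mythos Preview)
Your proposal is correct and follows essentially the same approach as the paper: an induction over a size measure of $G$, handling each branch of \Cref{alg:reduce} via the corresponding auxiliary feasibility lemmas, with the $\ALG$ call justified by $G$ being $(\alpha,\varepsilon)$-structured. The only cosmetic differences are that the paper inducts on the lexicographic pair $(|V(G)|,|E(G)|)$ rather than your scalar $\sigma$, and it treats the $\ALG$ call as a base case; also, \Cref{lemma:cut-vertex-egal} concerns optimality rather than feasibility, so for the 1-vertex-cut branch you only need the direct observation that both sides are 2EC.
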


\begin{proof}
	The proof is by induction over the tuple $(\abs{V(G)}, \abs{E(G)})$. The base cases are given by \Cref{reduce:bruteforce,reduce:call-alg} of \Cref{alg:reduce}.
	If we enter \Cref{reduce:bruteforce}, the claim trivially holds. If we call $\ALG$ in \Cref{reduce:call-alg}, the claim follows by the correctness of $\ALG$ for $(\alpha,\varepsilon)$-structured instances.

	If the condition in \Cref{reduce:1vc}
	applies, let $G_i = G[V_i \cup \{v\}]$. Note that both $G_1$ and $G_2$, which must be 2EC since $G$ is 2EC, each contain at most $|V(G)| - 1$ vertices, hence by induction hypothesis $\Reduce(G_1)$ and $\Reduce(G_2)$ are 2EC spanning subgraphs of $V_1 \cup \{v\}$ and $V_2 \cup \{v\}$, respectively. Thus, their union is a 2EC spanning subgraph of $G$.
	If the conditions in \Cref{reduce:loop} or \Cref{reduce:irrelevant} apply, $G' \coloneq G \setminus \{e\}$ must be 2EC by \cref{lemma:self-loops-parallel-edges,lemma:irrelevant-edge}, respectively, and $|V(G')| = |V(G)|$ and $|E(G')| = |E(G)| - 1$. Hence, by induction hypothesis $\Reduce(G')$ is a 2EC spanning subgraph of $G'$, and thus also for $G$.
	If the condition in \Cref{reduce:contractible} applies, the claim follows from \Cref{fact:decontract}.

	It remains to consider the cases when \cref{alg:2-vertex-cuts,alg:3-vertex-cuts,alg:cycle-cuts,alg:k-cuts} are executed.
	Observe that in these cases, $G$ is 2VC since the condition of \cref{reduce:1vc} is checked before. Moreover, note that all recursive calls to $\Reduce$ made in \cref{alg:2-vertex-cuts,alg:3-vertex-cuts,alg:cycle-cuts} consider graphs with strictly fewer vertices than $G$, hence the induction hypothesis applies.
	If \cref{alg:reduce} executes \cref{alg:2-vertex-cuts},
	the correctness follows from the induction hypothesis, \cref{lemma:2vc-type-A-condition} and \cref{lemma:2vc-both-large-F,lemma:2vc-b-F,lemma:2vc-c-F}.
	If \cref{alg:reduce} executes \Cref{alg:3-vertex-cuts},
	the correctness follows from the induction hypothesis, \Cref{lemma:2vc-type-A-condition} and \Cref{lemma:3vc-b1-F,lemma:3vc-b2-F,lemma:3vc-c1-F,lemma:3vc-c2-i-F,lemma:3vc-c2-ii-F,lemma:3vc-c2-iii-F,lemma:3vc-c3-F}.
	If \cref{alg:reduce} executes \Cref{alg:cycle-cuts}, the correctness follows from the induction hypothesis and \Cref{lemma:cycle-cuts-feasible}, and if \cref{alg:reduce} executes \Cref{alg:k-cuts}, the correctness follows from the induction hypothesis and \Cref{lemma:k-cuts-feasible1,lemma:k-cuts-feasible2}.
\end{proof}

The following lemma implies $\reduce(G) \leq (\alpha + 4\varepsilon) \opt(G)$ using the trivial bound $\opt(G) \geq |V(G)|$.

\begin{lemma}[approximation guarantee]\label{lem:reduction-to-structured-apx}
	For any 2ECSS instance $G$, \eqref{eq:reduce-invariant} is satisfied. That is,
	\[
		\reduce(G) \leq \begin{cases}
			\opt(G)                                                   & \quad \text{if } \abs{V(G)} \leq \frac{8}{\varepsilon} \ , \text{ and} \\
			\alpha \cdot \opt(G) + 4\varepsilon \cdot \abs{V(G)} - 16 & \quad \text{if } \abs{V(G)} > \frac{8}{\varepsilon} \ .
		\end{cases}
	\]
\end{lemma}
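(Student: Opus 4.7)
The plan is a single strong induction on the pair $(|V(G)|,|E(G)|)$ in lexicographic order. For each branch of $\Reduce(G)$, one checks that (a) every recursive call is made on an instance strictly smaller in this order, so the induction hypothesis applies, and (b) the claimed bound on $\reduce(G)$ then follows from the corresponding auxiliary lemma and elementary arithmetic with $f(G)=4\varepsilon|V(G)|-16$.

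I would first dispose of the two base cases. If $|V(G)|\leq 8/\varepsilon$ the algorithm enters \cref{reduce:bruteforce} of \Cref{alg:reduce} and returns $\OPT(G)$, so $\reduce(G)=\opt(G)$. If on the other hand $\Reduce(G)$ falls through to \cref{reduce:call-alg}, then by inspection $G$ satisfies all conditions of \Cref{def:structured}, hence is $(\alpha,\varepsilon)$-structured; the assumed algorithm $\ALG$ then returns a subgraph of size at most $\alpha\cdot\opt(G)$, and since $|V(G)|>8/\varepsilon$ gives $4\varepsilon|V(G)|-16\geq 16\geq 0$, the inequality $\reduce(G)\leq\alpha\opt(G)+f(G)$ holds.

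For the inductive step I would walk through each reduction branch. The four ``simple'' reductions reduce to a direct calculation invoking the appropriate structural lemma: \cref{reduce:loop} and \cref{reduce:irrelevant} keep $\opt$ unchanged by \Cref{lemma:self-loops-parallel-edges,lemma:irrelevant-edge} while strictly decreasing $|E(G)|$; \cref{reduce:1vc} uses $\opt(G)=\opt(G[V_1\cup\{v\}])+\opt(G[V_2\cup\{v\}])$ from \Cref{lemma:cut-vertex-egal}, combined with the easy inequality $f(G_1)+f(G_2)\leq f(G)$ which holds because $|V(G_1)|+|V(G_2)|=|V(G)|+1$ and $4\varepsilon\leq 16$ (with the mixed cases of one small, one large sub-instance handled by the small one contributing only $\opt$, together with the residual slack $4\varepsilon|V(G)|-16\geq 0$); and \cref{reduce:contractible} is settled by the chain
\[
\reduce(G)\leq |H|+\reduce(G|H)\leq |H|+\alpha\opt(G|H)+f(G|H)\leq \alpha\opt(G)+f(G),
\]
where the second inequality uses the induction hypothesis, the third uses $\opt(G|H)\leq\opt(G)-|H|/\alpha$ (by the definition of $\alpha$-contractibility) together with $f(G|H)\leq f(G)$. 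The sub-case $|V(G|H)|\leq 8/\varepsilon$ is handled analogously by replacing the IH bound for $G|H$ with $\opt(G|H)$ and noting that $|H|+\opt(G|H)\leq\alpha(\opt(G)-\opt(G|H))+\opt(G|H)\leq\alpha\opt(G)$. The four ``cut'' reductions \cref{reduce:2vc,reduce:3vc,reduce:c4cut,reduce:k-cut} are disposed of by directly invoking \Cref{lemma:2vc-approx,lemma:3vc-approx,lemma:cycle-cuts-approx,lemma:k-cuts-approx}, each of which takes exactly the form ``IH on all recursive calls $\Rightarrow$ \eqref{eq:reduce-invariant} on $G$'' and was already proved in \Cref{app:prep:helper-2vc,app:prep:helper-3vc,app:prep:helper-k-cuts}.

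The main obstacle is purely bookkeeping rather than conceptual: the constants $\alpha$, $\varepsilon$, and the additive $-16$ in $f$ must be juggled so that the invariant self-reproduces across every branch, which is why $f$ is stated in the precise form $4\varepsilon|V(G)|-16$ (the $-16$ provides exactly the slack needed to absorb the constant-sized cuts and glues in the cut reductions and the $+1$ shared vertex duplication in \cref{reduce:1vc}). Once this accounting is organized branch-by-branch using the auxiliary lemmas, the induction closes.
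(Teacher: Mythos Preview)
Your proof plan is correct and follows essentially the same approach as the paper's own proof: strong induction on $(|V(G)|,|E(G)|)$, with the base cases \cref{reduce:bruteforce} and \cref{reduce:call-alg}, the simple reductions handled via \Cref{lemma:self-loops-parallel-edges,lemma:irrelevant-edge,lemma:cut-vertex-egal} and the contractibility definition, and the cut reductions delegated to \Cref{lemma:2vc-approx,lemma:3vc-approx,lemma:cycle-cuts-approx,lemma:k-cuts-approx}. The only cosmetic difference is that the paper spells out the three sub-cases of \cref{reduce:1vc} (both small, mixed, both large) explicitly, whereas you fold them into a parenthetical remark; you may want to add one sentence for the ``both small'' sub-case, where both recursive calls return $\opt$ and the conclusion uses $f(G)\geq 0$.
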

\begin{proof}
	The proof is by induction over the tuple $(\abs{V(G)}, \abs{E(G)})$. The base cases are given by \Cref{reduce:bruteforce,reduce:call-alg} of \Cref{alg:reduce}.
	If we enter \Cref{reduce:bruteforce}, the stated bound trivially holds.
	If the condition in \cref{reduce:loop} or \cref{reduce:irrelevant} applies, we consider a graph $G' \coloneq G \setminus \{e\}$ with the same number of vertices as $G$ and the same optimal size of a 2ECSS by \Cref{lemma:self-loops-parallel-edges} and \cref{lemma:irrelevant-edge}, respectively. Thus, the claim follows from the induction hypothesis. If we call $\ALG$ in \Cref{reduce:call-alg}, the claim follows because we assume that $\ALG(G)$ returns an $\alpha$-approximate solution to 2ECSS on $G$ and $G$ is in this case $(\alpha,\varepsilon)$-structured by the correctness of $\Reduce$ (cf.\ \Cref{lem:reduction-to-structured-correctness}).

	If the condition in \cref{reduce:1vc} applies, recall that by \cref{lemma:cut-vertex-egal} we have $\opt(G) = \opt(G_1) + \opt(G_2)$, where $G_i \coloneq G[V_i \cup \{v\}]$ for $i \in \{1,2\}$. We distinguish a few cases depending on the sizes of $V_1$ and $V_2$. Assume w.l.o.g.\ $|V_1| \leq |V_2|$.
	If $|V_2| \leq \frac{8}{\varepsilon}$, then $\reduce(G) = \opt(G_1) + \opt(G_2) = \opt(G) \leq \alpha \cdot \opt(G) + 4\varepsilon |V(G)| - 16$ since $\alpha \geq 1$ and $|V(G)| \geq \frac{8}{\varepsilon}$.
	Otherwise, $|V_2| \geq \frac{8}{\varepsilon}$.
	If $|V_1| \leq \frac{8}{\varepsilon}$, then by the induction hypothesis we have $\reduce(G) \leq \opt(G_1) + \alpha \opt(G_2) + f(G_2) \leq \alpha \opt(G) + f(G)$ using $\alpha \geq 1$ and $f(G_2) \leq f(G)$.
	Otherwise, $|V_2| \geq |V_1| \geq \frac{8}{\varepsilon}$, and the induction hypothesis gives
	$\reduce(G) \leq \alpha \opt(G_1) + \alpha \opt(G_2) + 4\varepsilon (|V(G_1)| + |V(G_2)|) - 16 \leq \alpha \opt(G) + 4\varepsilon |V(G)| + 4\varepsilon - 16 \leq \alpha \opt(G) + f(G)$, because $4\varepsilon \leq 16$.

	If the condition in \cref{reduce:contractible} applies,
	we have $\opt(G) = |\OPT(G) \cap {V(H)}^2| + |\OPT(G) \setminus {V(H)}^2| \geq \frac{1}{\alpha}|H| + \opt(G|H)$ by the definition of a contractible subgraph $H$ and observing that $\OPT(G) \setminus {V(H)}^2$ induces a feasible 2EC spanning subgraph of $G|H$.
	If $|V(G|H)| \leq \frac{8}{\varepsilon}$ one has $\reduce(G) = |H| + \opt(G|H) \leq \alpha \opt(G) \leq \alpha \opt(G) + 4\varepsilon|V(G)| - 16$ since $|V(G)| \geq \frac{8}{\varepsilon}$ whenever \cref{alg:reduce} reaches \cref{reduce:contractible}.
	Otherwise, $|V(G|H)| > \frac{8}{\varepsilon}$, we conclude that $\reduce(G) \leq |H| + \alpha \opt(G|H) + 4\varepsilon |V(G|H)| - 4 \leq \alpha \opt(G) + 4\varepsilon |V(G)| - 16$ since $|V(G)| \geq |V(G|H)|$.

	Finally, if \cref{alg:reduce} executes \cref{alg:2-vertex-cuts,alg:3-vertex-cuts,alg:cycle-cuts,alg:k-cuts}, we have that $|V(G)| > \frac{8}{\varepsilon}$. Thus, in these cases the statement follows from the induction hypothesis and \cref{lemma:2vc-approx}, \cref{lemma:3vc-approx}, \cref{lemma:cycle-cuts-approx}, or \cref{lemma:k-cuts-approx}, respectively.
\end{proof}

\section{Canonical 2-Edge Cover}
\label{sec:canonical}

\lemmaCanonicalMain*

\begin{proof}
	We first compute a triangle-free $2$-edge cover $H'$, which can be done in polynomial time~\cite{kobayashi2023approximation,hartvigsen2024finding}, and show how to modify it to obtain a canonical $2$-edge cover with the same number of edges.
	As in~\cite{BGGHJL25}, we can assume that in $H'$, there is a component which has at least $32$ vertices.
	This can be achieved by guessing a subset of $31$ edges in an optimal 2ECSS, which induce a connected graph.
	We can fix these $31$ edges to be in $H'$ by subdividing each edge into a path of length $2$ so that any $2$-edge cover will include these paths.

	Since $H'$ is a triangle-free $2$-edge cover, each component is either a 2EC component that contains at least $4$ edges, or complex.

	If $H'$ is canonical, we are done. Otherwise, we apply the following algorithm to $H'$:
	While there exist sets of edges $F_A \subseteq E \setminus H'$ and $F_R \subseteq H'$ with $|F_A| \leq |F_R| \leq 2$ such that $H'' \coloneq (H' \setminus F_R) \cup F_A$ is a triangle-free $2$-edge cover and either $|H''| < |H'|$, or $|H''| = |H'|$ and $H''$ contains strictly less connected components than $H'$, or $H''$ contains strictly less bridges than $H'$, or $H''$ contains less cut vertices in 2EC components than $H'$, we update $H' \coloneq H''$.

	Clearly, each iteration of this algorithm can be executed in polynomial time. Moreover, there can only be polynomially many iterations.
	Thus, let $H$ be equal to $H'$ when the algorithm terminates.
	We prove that $H$ is a canonical 2-edge cover of $G$. This proves the lemma because the algorithm does not increase the number of edge compared to a minimum triangle-free $2$-edge cover of $G$. To this end, we consider the following cases.

	\textbf{Case 1:} $H$ contains a 2EC component $C$ such that $4 \leq |E(C)| \leq 8$ and $C$ is not a cycle on $|E(C)|$ vertices. We distinguish three exhaustive cases, and prove in any case that Case~1 cannot occur.

	\textbf{Case 1.1:}
	$C$ has a cut vertex $v$. Observe that $v$ must have degree at least $4$ in $C$.
	Furthermore, each connected component in $C \setminus \{v\}$ contains at least $2$ vertices, as otherwise $C$ contains parallel edges, a contradiction to $G$ being structured.
	Assume that there is a component $K_1$ in $C \setminus \{v\}$ with $2$ vertices and let $u_1$ and $u_2$ be these vertices.
	Note that the three edges $\{v u_1, u_1 u_2, u_2 v\}$ must be in $C$ as otherwise $G$ is not structured since it must contain parallel edges.
	If there exists an edge $u_1 w \in E \setminus H$ (or $u_2 w \in E\setminus H$ by symmetry) for $w \in V \setminus V(C)$, then we can remove $v u_1$ from $H$ and add $u_1 w$ to obtain a triangle-free $2$-edge cover $H''$ with strictly less components and $|H| = |H''|$, a contradiction to the algorithm.
	If there exists an edge $u_1 w \in E \setminus H$ (or $u_2 w \in E$ by symmetry) for $w \in V(C) \setminus \{v, u_1, u_2 \}$, then we can remove $v u_1$ from $H$ and add $u_1 w$ to obtain a triangle-free $2$-edge cover $H''$ with $|H''| = |H|$ in which $v$ is not a cut vertex anymore.
	If neither of these two cases apply, we conclude $\deg_G(u_i) = 2$ for $i=1,2$. Thus, $v - u_1 - u_2 - v$ is a contractible cycle, a contradiction to $G$ being structured.
	Now we assume each connected component in $C \setminus \{v\}$ has at least $3$ vertices (and at least $2$ edges).
	Since the degree of $v$ is at least $4$ in $C$ and $|E(C)| \leq 8$, the only possible case is that $C$ consists of two $4$-cycles intersecting only at one vertex $v$, say, $v-x_1-x_2-x_3-v$ and $v-y_1-y_2-y_3-v$.
	Note $x_1,y_1,x_3,y_3$ are symmetric.
	We focus on $x_1$.
	If there exists some edge $x_1w$ for some $w \in V \setminus V(C)$,
	then we can remove $x_1v$ from $H$ and add $x_1w$ to obtain a triangle-free $2$-edge cover $H''$ with $|H''| = |H|$ with fewer components.
	if $x_1w \in E(G)$ for some $w \in \{y_1,y_2,y_3\}$, then we can remove $x_1v$ from $H$ and add $x_1w$ to obtain a triangle-free $2$-edge cover $H''$ with $|H''| = |H|$ in which $v$ is not a cut vertex anymore.
	Hence either $x_1$ (by symmetry also $x_3$) has degree $2$ in $G$ or $x_1x_3 \in E(G)$.
	In the former case, $x_1-x_2-x_3-v$ is contractible.
	The case for $y_1,y_3$ is symmetric.
	So the only remaining case is that the degrees of $x_1,x_3,y_1,y_3$ in $G$ are exactly $3$ in $G$ and $x_1x_3, y_1y_3 \in E(G)$.
	Now we consider $x_2$.
		There must be some edge $x_2w \in E$ with $w \in V\setminus \{x_1,x_3,v\}$, as otherwise $v$ is a cut vertex of $G$.
		We add $x_2w$ and $x_1x_3$ to $H$, and remove $x_1v,x_2x_3$ to obtain a triangle-free $2$-edge cover $H''$ with $|H''|=|H|$.
		If $w \notin C$, $H''$ has fewer connected components.
		If $w \in C$, $H'$ has the same number of components but fewer cut vertices.

	\textbf{Case 1.2:}
	$C$ contains no 2-vertex cut (and, thus, $C$ is 3-vertex-connected ). This implies that every vertex has degree at least 3 in $C$, as otherwise the neighboring vertices of a vertex of $C$ with degree $2$ form a $2$-vertex cut.
	But then it must hold that $4 \leq |V(C)| \leq 5 $, since $|V(C)| \leq 3$ is not possible with the above degree constraint and if $|V(C)| \geq 6$, then the number of edges in $C$ is at least $9$, a contradiction.
	But if $|V(C)| = 4$ and $C$ is $3$-vertex-connected, $C$ must be a complete graph on $4$ vertices. Therefore, there is an edge $e \in E(C)$ such that $H \setminus \{e\}$ is a triangle-free 2-edge cover, a contradiction to the algorithm.
	Similarly, if $|V(C)|=5$, there must be exactly $8$ edges and $4$ vertices with degree $3$ and $1$ vertex with degree $4$.
	Since $C$ is $3$-vertex-connected, removing the only vertex with degree $4$ from $C$ results in a connected graph with every vertex having degree $2$, which must be a $\cFour$.
	Hence $C$ is $\cFour$ plus one vertex connected to each vertex of the $\cFour$.
	Therefore, there is an edge $e \in E(C)$ such that $H \setminus \{e\}$ is a triangle-free 2-edge cover, a contradiction to the algorithm.

	\textbf{Case 1.3:} $C$ contains a 2-vertex cut $\{ v_1, v_2 \}$.
	Note that each connected component of $C \setminus \{v_1, v_2 \}$ must have an edge to both $v_1$ and $v_2$, as otherwise $C$ contains a cut vertex and we are in Case 1.1.
	Hence, there are at most $4$ connected components in $C \setminus \{v_1, v_2 \}$, as otherwise the number of edges in $C$ is at least 10.

	First, assume that $C \setminus \{v_1, v_2 \}$ contains exactly $4$ connected components.
	Then each of the connected components in  $C \setminus \{v_1, v_2 \}$ must be a single vertex and we label them as $u_1, u_2, u_3, u_4$.
	If there is some edge between $u_iw$ for some $i$ and $w \in V\setminus V(C)$, then we can remove $v_1 u_i$ from $H$ and add $u_i w$ to obtain a triangle-free $2$-edge cover $H''$ with strictly fewer components and $|H| = |H''|$, a contradiction to the algorithm.
	If there is an edge between $u_i$ and $u_j$ in $G$ for some $1 \leq i < j \leq 4$, then we can remove $u_iv_2, u_jv_1$ from $H$ and add $u_iu_j$ to obtain a triangle-free 2-edge cover with fewer edges, which contradicts the algorithm.
	Hence $u_i$ has degree $2$ in $G$ for each $i$.
	This implies that $C$ is contractible since any optimal solution includes the incident edges of $u_i$ and in total at least $8$ edges in $C$.

	Second, assume that $C \setminus \{v_1, v_2 \}$ contains exactly $3$ connected components.
	Observe that each connected component of $C \setminus \{v_1, v_2 \}$
	has at most $3$ vertices.
	Otherwise $|E(C)| > 8$.
	If there is some connected component of $C \setminus \{v_1, v_2 \}$ with $3$ vertices (a path $u_1-u_2-u_3$),
	then the other two connected components of $C \setminus \{v_1, v_2 \}$ must consist of a single vertex, say, $u_4$ and $u_5$.
	In this case,
	both $v_1$ and $v_2$ have some incident edge with the other endpoint in $\{u_1,u_2,u_3\}$.
	Otherwise, either $v_1$ or $v_2$ is a cut vertex in $C$ and we are in Case 1.1.
	Further, since both $u_1$ and $u_3$ have at least $2$ incident edges in $C$, both $u_1$ and $u_3$ have some incident edge with the other endpoint in ${v_1,v_2}$.
	Hence we can assume that $u_1v_1$ and $u_3v_2$ are in $E(C)$ and $E(C) = \{u_1u_2,u_2u_3,u_1v_1,u_3v_2, v_1u_4, v_1u_5, v_2u_4, v_2u_5\}$.
	Similar to the previous arguments in the last paragraph, there is no edge $u_iw$ in $G$ for $i \neq 2$ and $w \in V \setminus V(C)$, no edge between $\{u_1, u_3\}$ and $\{u_4,u_5\}$, and $u_4u_5 \notin E(G)$.
	So the optimal solution has to include at least $2$ incident edges of $u_1,u_3, u_4, u_5$ and hence at least $7$ edges in $G[C]$, which implies $C$ is contractible.
	Now consider the case where each connected component of $C \setminus \{v_1, v_2 \}$ has at most $2$ vertices.
	By the same arguments in the previous case, there is no edge between these connected component of $C \setminus \{v_1, v_2 \}$ in $G$ and there is no edge between these components and $V\setminus V(C)$.
	Hence we can argue that each vertex in $C \ \{v_1, v_2\}$ has degree $2$ in $G$ and hence $C$ is contractible.

	Hence, assume that $C \setminus \{v_1, v_2 \}$ contains exactly 2 connected components $K_1$ and $K_2$ and without loss of generality assume $|E(K_1)| \geq |E(K_2)|$.
	Recall that each connected component of $C \setminus \{v_1, v_2 \}$ must have an edge to both $v_1$ and $v_2$.
	Hence, $\{v_1, v_2 \}$ is incident to at least 4 edges in $C$. Thus, $|E(K_2)| \leq |E(K_1)| \leq 3$.
	If $|E(K_1)| = 3$, $K_1$ is a path on 4 vertices, or a triangle on 3 vertices, or a star on 4 vertices (one center and 3 leaves).
	If $K_1$ is a triangle, then either there is an edge $e$ in $C$ such that $H \setminus \{e\}$ is a triangle-free $2$-edge cover (a contradiction to the algorithm) or one vertex in triangle is a cut vertex of $C$ (we are in Case 1.1).
	If $K_1$ is a star on 4 vertices, then there must be 3 edges from the leaves of $K_1$ to $\{v_1, v_2 \}$, as otherwise $H$ is not a 2-edge cover.
	Then $C$ contains exactly 8 edges.
	Let $u_1$ be the center of the star $K_1$, and let $u_2,u_3,u_4$ be the leaves of the star.
	Let $u_5$ be the only vertex in $K_2$.
	Then by symmetry we can assume that
	$E(C)=\{u_1u_2,u_1u_3,u_1u_4,v_1u_2,v_1u_3,v_2u_4, v_1u_5,v_2u_5\}$.
	By similar arguments in the previous cases, there is no edge in $G$ between $\{u_2,u_3,u_4,u_5\}$ and $V \setminus V(C)$.
	Then we claim that $G[\{u_2,u_3,u_4,u_5\}]$ either has no edge or has only one edge $u_4u_5$, which implies that any optimal solution has to include at least $7$ edges in $G[C]$ and $C$ is contractible.
	To prove the claim, if $u_2u_3 \in G$, then we can remove $v_1u_2, u_1u_3$ from $H$ and add $u_2u_3$
	to obtain a triangle-free $2$-edge cover with fewer edges.
	If $u_2u_5 \in E(G)$,
	then we can remove $u_1u_2, v_1u_5$ from $H$ and add $u_2u_5$
	to obtain a triangle-free $2$-edge cover with fewer edges.
	The case for $u_3u_5$ is symmetric.
	If $u_3u_4 \in E(G)$, then we can remove $v_1u_3, u_1u_3$ from $H$ and add $u_3u_4$
	to obtain a triangle-free $2$-edge cover with fewer edges.
	The case for $u_2u_4$ is symmetric.
	We conclude with the claim and that $K_1$ cannot be a star.
	Finally consider the case where $K_1$ is a path on 4 vertices.
	But then one leaf of $K_1$ must be adjacent to $v_1$ in $C$ and the other leaf of $K_1$ must be adjacent to $v_2$ in $C$. Since $K_2$ is a single vertex in this case and must be incident to both $v_1$ and $v_2$, $C$ contains a \cSeven as a spanning subgraph, a contradiction.
	Hence, we have that $|E(K_2)| \leq |E(K_1)| \leq 2$.
	Since $G$ does not contain parallel edges, $K_1$ must be a simple path on $i$ vertices for some $i \in \{1, 2, 3\}$. The same is true for $K_2$.
	By similar reasoning as before we can observe that the leaves of $K_1$ and $K_2$, respectively, must be incident to $v_1$ and $v_2$ (or, if $K_1$ or $K_2$ is a single vertex, then it must be incident to both $v_1$ and $v_2$) and therefore we observe that $C$ contains a spanning cycle as a subgraph, a contradiction.

	\textbf{Case 2:}
	There is some complex component $C$ of $H$ containing a pendant block $B$ with less than 6 edges.
	Assume there is a pendant block $B$ with exactly 5 edges. The cases in which $B$ has 3 or 4 edges is similar.
	Note that $B$ must contain either 4 or 5 vertices.
	If $B$ contains exactly $4$ vertices %
	then $B$ must contain a simple cycle on 4 vertices with an additional edge $e$ being a chord of this cycle, which is redundant, and, thus, a contradiction to the algorithm.
	Hence, assume that $B$ contains exactly $5$ vertices, i.e., $B$ is a cycle $b_1 - b_2 - b_3 - b_4 - b_5 - b_1$, and assume that $b_1$ is incident to the unique bridge $f$ in $C$.
	If either $b_2$ (or $b_5$ by symmetry) is incident to an edge $e$ with neighbor in $V \setminus V(B)$ we can remove $b_1 b_2$ from $H$ and add $e$ to obtain a triangle-free $2$-edge cover $H''$. If $e$ is incident to some vertex in $V \setminus V(C)$, the number of connected components in $H''$ is strictly less compared to $H$, and otherwise, that is, $e$ is incident to some vertex in $V(C) \setminus V(B)$, or $H''$ has strictly less bridges compared to $H$. This is a contradiction to the algorithm. Thus, in the following we can assume that $b_2$ and $b_5$ are only adjacent to vertices of $V(B)$.

		By \Cref{lem:3-matching},
		$b_3$ and $b_4$ both must be incident to some edge $e_3$ and $e_4$ going to some vertex in $V \setminus V(B)$, respectively.

	If $b_2b_5 \in E(G)$,
	then we can remove $b_2 b_3$ and $b_1 b_5$ from $H$ and add $b_2 b_5$ and $e_3$ to obtain a triangle-free $2$-edge cover $H''$.
	If $e_3$ is incident to some vertex $V \setminus V(C)$, the number of connected components in $H''$ is strictly less compared to $H$, and otherwise, that is, $e_3$ is incident to some vertex $V(C) \setminus V(B)$, $H''$ has strictly less bridges compared to $H$. This is a contradiction to the algorithm.

		If $b_3b_5 \in E(G)$,
		then we can remove $b_1 b_5$ and $b_3 b_4$ from $H$ from $H$ and add $b_3 b_5$ and $e_4$ to obtain a triangle-free $2$-edge cover $H''$.
		If $e_4$ is incident to some vertex $V \setminus V(C)$, the number of connected components in $H''$ is strictly less compared to $H$, and otherwise, that is, $e_4$ is incident to some vertex $V(C) \setminus V(B)$, $H''$ has strictly less bridges compared to $H$.
		By symmetry, we also have $b_2b_4 \notin E(G)$.
		Hence $C$ is a $\cFive$ with $b_2, b_5$ having degree $2$ in $G$.

	\textbf{Case 3:}
	There is some complex component $C$ of $H$ containing a non-pendant block $B$ that contains less than $5$ edges.
	If $B$ has $3$ edges, i.e., $B$ is a triangle.
	In this case, we observe that in any case there must be an edge $e$ in $B$ such that $H \setminus \{e\}$ is a triangle-free $2$-edge cover, a contradiction to the algorithm.
	If $B$ has $4$ edges, i.e., $B$ is a $\cFour$.
	We number the vertices as $b_1,b_2,b_3,b_4$.
	If there are two adjacent vertices $b_i$ and $b_{i+1}$ ($b_1$ and $b_4$ are also adjacent), then $H \setminus \{b_ib_{i+1}\}$ is a triangle-free $2$-edge cover, a contradiction to the algorithm.
	Since $B$ is a non-pendant block, w.l.o.g.\ we assume that $b_1$ and $b_3$ are incident to some bridge in $C$,
	and $b_2$ and $b_4$ are vertices of degree $2$ in $C$.
	If $b_2b_4 \in E$, we can add the edge $b_2b_4$, and remove $b_1b_4$ and $b_2b_3$ to obtain a triangle-free $2$-edge cover with fewer edges, a contradiction to the algorithm.
	If $b_2w \in E$ for some $w \notin V(B)$, then we can add $b_2w$ and remove $b_2b_3$ to obtain a triangle-free $2$-edge cover $H''$ with either fewer bridges (if $w \in V(C)$) or fewer connected components (if $w \in V \setminus V(C)$).
	It holds the same for $b_4$.
	Hence we can assume that $b_2, b_4$ are vertices of degree $2$ in $G$, which implies that $B$ is contractible, a contraction.

		\textbf{Case 4:}
		There are two non-complex components of $H$ that are both $\cFour$s and can be merged into a single $\cEight$ by removing two edges of $H$ and adding two edges.
		We can enumerate all pairs of $\cFour$s in $H$ that are adjacent in $G$ and check whether they can be merged into a $\cEight$.
		Then we obtain $H''$ with fewer components and $|H''| = |H|$.
\end{proof}

\printbibliography

\end{document}